\definecolor{ForestGreen}{rgb}{0.1333,0.5451,0.1333}
\definecolor{DarkRed}{rgb}{0.80,0,0}
\definecolor{Red}{rgb}{1,0,0}
\Crefname{claim}{Claim}{Claims}
\Crefname{condition}{Condition}{Conditions}
\Crefname{assumption}{Assumption}{Assumptions}
\Crefname{scenario}{Scenario}{Scenarios}
\declaretheorem[numberwithin=section,refname={Theorem,Theorems},Refname={Theorem,Theorems}]{theorem}
\declaretheorem[numberlike=theorem]{lemma}
\declaretheorem[numberlike=theorem]{corollary}
\declaretheorem[numberlike=theorem,style=definition]{definition}
\declaretheorem[numberlike=theorem,style=definition]{condition}
\declaretheorem[numberlike=theorem,style=definition]{scenario}
\declaretheorem[numberlike=theorem,style=definition]{assumption}
\declaretheorem[numberlike=theorem]{claim}
\declaretheorem[numberlike=theorem,style=remark]{remark}
\declaretheorem[numberlike=theorem,refname={Fact,Facts},Refname={Fact,Facts},name={Fact}]{fact}
\declaretheorem[numberlike=theorem, refname={Observation,Observations},Refname={Observation,Observations},name={Observation}]{observation}
\newcommand{\vol}{\mathrm{vol}}
\renewcommand{\deg}{\mathrm{deg}}
\newcommand{\polylog}{\mathrm{polylog}}
\newcommand{\poly}{\mathrm{poly}}
\newcommand{\defeq}{\stackrel{\mathrm{{\scriptscriptstyle def}}}{=}}
\newcommand{\caug}{c^{\mathrm{augment}}}
\renewcommand{\cong}{\mathrm{cong}}
\newcommand{\CMG}{\text{CMG}}
\newcommand{\eps}{\varepsilon}
\renewcommand{\P}{\mathcal{P}}
\newcommand{\SCC}{\mathrm{SCC}}
\newcommand{\rev}[1]{\overleftarrow{#1}}
\newcommand{\forward}[1]{\overrightarrow{#1}}
\newcommand{\backward}[1]{\overleftarrow{#1}}
\newcommand{\alg}[2]{\textsc{#1(}#2\textsc{)}}
\newcommand{\Size}{\textsc{Size}}
\newcommand{\Time}{\textsc{Time}}
\newcommand{\expect}{\mathop{\mathbb{E}}}
\newcommand{\Ba}{\boldsymbol{a}}
\newcommand{\Bb}{\boldsymbol{b}}
\newcommand{\Bc}{\boldsymbol{c}}
\newcommand{\Bw}{\boldsymbol{w}}
\newcommand{\Bd}{\boldsymbol{d}}
\newcommand{\Bf}{\boldsymbol{f}}
\newcommand{\Br}{\boldsymbol{r}}
\newcommand{\Bx}{\boldsymbol{x}}
\newcommand{\BB}{\boldsymbol{B}}
\newcommand{\Bsource}{\boldsymbol{\Delta}}
\newcommand{\Bsink}{\boldsymbol{\nabla}}
\newcommand{\Bgamma}{\boldsymbol{\gamma}}
\newcommand{\Bzero}{\boldsymbol{0}}
\newcommand{\Bone}{\boldsymbol{1}}
\newcommand{\Bfout}{\boldsymbol{f}^{\mathrm{out}}}
\newcommand{\Btau}{\boldsymbol{\tau}}
\newcommand{\Bnu}{\boldsymbol{\nu}}
\newcommand{\abs}{\boldsymbol{\mathrm{abs}}}
\newcommand{\ex}{\boldsymbol{\mathrm{ex}}}
\newcommand{\dist}{\mathrm{dist}}
\newcommand{\Bell}{\boldsymbol{\ell}}
\newcommand{\N}{\mathbb{N}}
\newcommand{\R}{\mathbb{R}}
\newcommand{\Z}{\mathbb{Z}}
\newcommand{\Q}{\mathbb{Q}}
\newcommand{\cA}{\mathcal{A}}
\newcommand{\cD}{\mathcal{D}}
\newcommand{\cE}{\mathcal{E}}
\newcommand{\cH}{\mathcal{H}}
\newcommand{\cI}{\mathcal{I}}
\newcommand{\cJ}{\mathcal{J}}
\newcommand{\cK}{\mathcal{K}}
\newcommand{\cM}{\mathcal{M}}
\newcommand{\cP}{\mathcal{P}}
\newcommand{\cR}{\mathcal{R}}
\newcommand{\cS}{\mathcal{S}}
\newcommand{\cT}{\mathcal{T}}
\newcommand{\cU}{\mathcal{U}}
\newcommand{\tO}{\widetilde{O}}
\def\final{0}
\def\cameraready{0}
\newcommand{\para}[1]{\paragraph{#1.}}
\newcommand{\blikstad}[1]{{\color{purple}[\textbf{Joakim}: #1]}}
\newcommand{\tawei}[1]{{\color{red}[\textbf{Ta-Wei}: #1]}}
\newcommand{\aaron}[1]{{\color{blue}[\textbf{Aaron}: #1]}}
\def\thatchaphol#1{\marginpar{$\leftarrow$\fbox{TS}}\footnote{$\Rightarrow$~{\sf\textcolor{purple}{#1 --Thatchaphol}}}}
\newcommand{\blikstad}[1]{}
\newcommand{\tawei}[1]{}
\newcommand{\aaron}[1]{}
\def\thatchaphol#1{}
\newcommand\footnoteref[1]{\protected@xdef\@thefnmark{\ref{#1}}\@footnotemark}
\title{Maximum Flow by Augmenting Paths in $n^{2+o(1)}$ Time}
\author{Aaron Bernstein\thanks{
        New York University,
        \texttt{bernstei@gmail.com}. Work done in part while at Rutgers University. Supported by Sloan Fellowship, Google Research Fellowship,  NSF Grant 1942010, and Charles S. Baylis endowment at NYU.
    } \and Joakim Blikstad\thanks{
        KTH Royal Institute of Technology \& Max Planck Institute for Informatics,
        \texttt{blikstad@kth.se}.
        Supported by the Swedish Research Council (Reg. No. 2019-05622) and the Google PhD Fellowship Program.
    }  \and Thatchaphol Saranurak\thanks{
        University of Michigan,
        \texttt{thsa@umich.edu}.
        Supported by NSF Grant CCF-2238138.
    }  \and Ta-Wei Tu\thanks{
        Stanford University,
        \texttt{taweitu@stanford.edu}.
        Supported by a Stanford School of Engineering Fellowship.
    } }
\begin{document}

\begin{titlepage}
  \maketitle \pagenumbering{gobble}
  \begin{abstract}
We present a combinatorial algorithm for computing exact maximum flows in directed graphs with $n$ vertices and edge capacities from $\{1,\dots,U\}$ in $n^{2+o(1)}\log U$ time, which is almost optimal in dense graphs. Our algorithm is a novel implementation of the classical augmenting-path framework; we list augmenting paths more efficiently using a new variant of the push-relabel algorithm that uses additional edge weights to guide the algorithm, and we derive the edge weights by constructing a directed expander hierarchy. 

Even in unit-capacity graphs, this breaks the long-standing $O(m\cdot\min\{\sqrt{m},n^{2/3}\})$ time bound of the previous combinatorial algorithms by Karzanov (1973) and Even and Tarjan (1975) when the graph has $m=\omega(n^{4/3})$ edges. Notably, our approach does not rely on continuous optimization nor heavy dynamic graph data structures, both of which are crucial in the recent developments that led to the almost-linear time algorithm by Chen et al.~(FOCS 2022). 
Our running time also matches the $n^{2+o(1)}$ time bound of the independent combinatorial algorithm by Chuzhoy and Khanna~(STOC 2024) for computing the maximum bipartite matching, a special case of maximum flow.

\end{abstract}

  \setcounter{tocdepth}{3}
  \newpage
  \phantom{a} %
  
  \vspace{-1.4cm}
  \tableofcontents
  \newpage
\end{titlepage}
\newpage
\pagenumbering{arabic}

\newpage

\section{Introduction}

Fast algorithms for computing maximum flows have played a central role in algorithmic research, motivating various algorithmic paradigms such as graph sparsification, dynamic data structures, and the use of continuous optimization in combinatorial problems. These algorithms also have numerous applications in problems like bipartite matching, minimum cuts, and Gomory-Hu trees~\cite{gomory1961multi,LiP20,LiNPSY21,Cen0NPSQ21,CenH0P23,Abboud0PS23}. Below, we summarize the development of fast maximum flow algorithms following Dantzig's introduction of the problem \cite{dantzig1951application}. The input graph for this problem is a directed graph with $n$ vertices and $m$ edges. For convenience, only in this introduction, we assume that edge capacities range from $\{1,\dots,\text{poly}(n)\}$.

\para{Augmenting Paths}
Ford and Fulkerson \cite{ford1956maximal} first introduced the \emph{augmenting path} framework. In this framework, algorithms repeatedly find an augmenting path (or a collection of them) in the residual graph. They then augment the flow along this path with a value equal to the bottleneck of the augmenting path. Over the next four decades, this simple framework led to the development of several influential techniques, including shortest augmenting paths \cite{EdmondsK72}, blocking flows \cite{karzanov1973finding,dinic1970algorithm,GoldbergR98}, push-relabel \cite{GoldbergT88}, and sparsification \cite{KargerL15}. The best time bound within this framework, $O(m\cdot\min\{m^{1/2},n^{2/3}\})$, was given by Karzanov \cite{karzanov1973finding} and independently by Even and Tarjan \cite{EvenT75} for unit-capacity graphs. Goldberg and Rao \cite{GoldbergR98} later matched this time bound in capacitated graphs up to poly-logarithmic factors.

\para{Continuous Optimization and Dynamic Graph Data Structures}
In the 2000s, Spielman and Teng \cite{SpielmanT04} introduced a completely different framework based on \emph{continuous optimization}, and solved the electrical flow problem in near-linear time.\footnote{Algorithms for \emph{approximating} maximum flow in undirected graphs using gradient descent and multiplicative weight update also follow this framework \cite{ChristianoKMST11,KelnerMP12,Sherman13,KelnerLOS14,Peng16,Sherman17,SidfordT18}.} Then, Daitch and Spielman \cite{DaitchS08}  showed a reduction from maximum flow to electrical flow using an interior point method. This motivated further research on advanced interior point methods \cite{Madry13,LeeS14,Madry16,LiuS20,KathuriaLS20} that minimize the number of iterations of calling electrical flow or related problems. As a result, a $\widetilde{O}(m\sqrt{n})$-time\footnote{In this paper, we use $\widetilde{O}(\cdot)$ to hide poly-logarithmic factors in $n$ and $\widehat{O}(\cdot)$ for subpolynomial factors.} maximum flow algorithm \cite{LeeS14} and, for unit-capacity graphs, a $\widetilde{O}(m^{4/3})$-time algorithm \cite{KathuriaLS20} were developed. 

Since 2020, the focus has shifted from minimizing the number of iterations in interior point methods to instead minimizing the cost per iteration using \emph{dynamic graph data structures}. Building upon dynamic data structures for sparsifiers \cite{DurfeeGGP19,BernsteinBGNSS022,ChenGHPS20}, a series of impressive works \cite{BrandLNPSS0W20,BrandLLSS0W21,GaoLP21,BrandGJLLPS22} finally led to the breakthrough by Chen \emph{et al.}~\cite{ChenKLPGS22,Brand0PKLGSS23} who showed an $m^{1+o(1)}$-time algorithm for maximum flow and its generalization.

\para{Combinatorial Approaches}
Although the recent developments have achieved an almost optimal time bound, these algorithms are not simple in either conceptual or technical sense. Continuous optimization approaches update the flow solutions without a clear combinatorial interpretation, and the required dynamic data structures are still highly involved. With this motivation, Chuzhoy and Khanna recently \cite{ChuzhoyK24SODA} showed a conceptually simpler algorithm for maximum bipartite matching, a special case of maximum flow, that runs in $\widetilde{O}(m^{1/3}n^{5/3})$ time; in very recent independent work, they improved the running time to $n^{2+o(1)}$ \cite{ChuzhoyK24STOC}.\footnote{The algorithm of \cite{ChuzhoyK24STOC} was submitted several months before ours (STOC 2024), but we consider it independent because it was not publicly available when we submitted our paper to FOCS 2024.}

These algorithms update the flow (i.e., the fractional matching) in a more intuitive manner: they repeatedly increase the flow value along paths listed by dynamic shortest-path data structures. Moreover, the flow on each edge is a multiple of $\frac{1}{\Theta(\log n)}$, i.e., it is almost integral. Unfortunately, the algorithms do not extend to exact maximum flow, even in unit-capacity graphs.

So, can one hope for an optimal and combinatorial maximum flow algorithm? We make significant progress in this direction by showing that the classical augmenting path framework can provide an almost-optimal algorithm for dense graphs.

\begin{theorem}
\label{thm:main}
There is an augmenting-path-based randomized algorithm that, given a directed graph with $n$ vertices and edge capacities from $\{1,\dots,U\}$, with high probability computes a maximum $s$-$t$ flow in $n^{2+o(1)}\log U$ time.
\end{theorem}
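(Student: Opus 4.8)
The plan is to implement the classical augmenting-path framework, but to make the path-finding step efficient by equipping the residual graph with auxiliary edge weights that tell a push-relabel routine where to look. The overall structure I would use is: (1) compute a \emph{directed expander hierarchy} of the input graph, which recursively decomposes $G$ into pieces that are expanders after contracting the lower levels; (2) read off from this hierarchy a weight function $\Bw$ on the edges, so that distances under $\Bw$ in the residual graph upper-bound the number of levels a flow path must cross; (3) run a \emph{weighted push-relabel} algorithm that, using $\Bw$ as a guide for the height/label function, finds a flow whose value is within a subpolynomial factor of the maximum, or more precisely routes almost all of the remaining demand, in roughly $n^2 \cdot \widehat{O}(1)$ time; (4) iterate on the residual graph $\widehat{O}(1)$ times to drive the residual value to zero, rebuilding or repairing the hierarchy as needed; and (5) handle capacities in $\{1,\dots,U\}$ by the standard capacity-scaling reduction, which contributes the extra $\log U$ factor.

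The first key step is the \textbf{directed expander hierarchy}. I would want a statement of the form: in $n^{2+o(1)}$ time one can compute a laterally-ordered partition of $V$ (or of the edges) into $O(\log n)$ levels such that each cluster at level $i$, after contracting its level-$(<i)$ sub-structure, is a $\phi$-expander for $\phi = 1/n^{o(1)}$, and such that every directed cycle is ``captured'' at some level. The point of the hierarchy is twofold: expansion guarantees that short paths exist (so the number of augmentations is small), and the hierarchical/ordered structure gives a potential function — the level at which an edge sits — that a labeling algorithm can exploit. Building such a hierarchy for \emph{directed} graphs, and doing so within the $n^{2+o(1)}$ budget, is the technically heaviest ingredient; I expect to invoke or adapt an expander-decomposition primitive and recurse, carefully amortizing the cost across levels.

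The second key step is the \textbf{weighted push-relabel} routine and its analysis. Classical push-relabel maintains a height function $h$ and pushes flow from higher to lower vertices, with the number of relabels bounded by $O(n)$ per vertex because $h$ is bounded by $O(n)$. Here I would instead let the ``length'' of an edge be its weight $\Bw(e)$ derived from the hierarchy, allow $h$ to be bounded by the \emph{weighted} diameter $\kappa = n^{o(1)}$ of the relevant residual structure rather than by $n$, and argue: (a) validity is preserved under pushes and relabels with respect to $\Bw$; (b) because the hierarchy certifies that any $s$-$t$ residual path has small $\Bw$-length, the algorithm terminates after $n^{o(1)}$ rounds of relabeling, giving total work $n^2 \cdot n^{o(1)}$; (c) at termination the excess left at $t$ is the max-flow value up to the approximation slack, and the slack is controlled by $\phi$ and the number of levels. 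The subtle part is coupling the \emph{combinatorial} push-relabel dynamics to the \emph{metric} guarantee coming from expansion — essentially showing that a blocked weighted push-relabel execution exhibits a small-weight cut, which by the expander hierarchy must also be a small-cardinality (hence near-optimal) cut.

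Finally I would \textbf{assemble} the pieces: the outer loop runs $\widehat{O}(1)$ phases, each phase recomputing weights from an expander hierarchy of the current residual graph and running weighted push-relabel to extract a $(1-1/n^{o(1)})$-fraction (or an additive $n^{o(1)}$-fraction in the unit-capacity sense) of the remaining flow, so that after $n^{o(1)}$ phases the residual value is $0$ and we have an exact maximum flow; correctness of exactness at the end follows from integrality of capacities and the max-flow--min-cut theorem. The $\log U$ dependence enters only through capacity scaling, under which each scale is a near-unit-capacity instance handled as above. I expect the main obstacle to be step (1)–(2): constructing a directed expander hierarchy in $n^{2+o(1)}$ time and proving that the induced weights make \emph{every} residual augmenting path short throughout the iterative process (in particular, that the hierarchy degrades gracefully, or can be cheaply maintained, as we augment). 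Everything downstream — the weighted push-relabel analysis and the scaling wrapper — is comparatively routine once the right weight function is in hand.
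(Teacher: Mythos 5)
Your proposal follows the paper's strategy essentially exactly: directed expander hierarchy, a weight function read off from the hierarchy, a weighted push--relabel subroutine, an outer loop on the residual graph, and capacity scaling for the $\log U$ factor; you also correctly identify that the hierarchy construction and its degradation under augmentation are the heavy parts.

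One detail in your step (3) is off, and it is worth flagging because it is precisely the mechanism that makes the running time $n^{2+o(1)}$ rather than something larger. You say $h$ can be ``bounded by the weighted diameter $\kappa = n^{o(1)}$''. This cannot be right: the paper's weight function is $\Bw(u,v)=|\Btau_u - \Btau_v|$ with $\Btau$ a hierarchy-respecting topological order on $[n]$, so individual edge weights range up to $n$ and the weighted $s$--$t$ distance (hence the maximum label $h$) is forced to be $\Theta(n)\cdot n^{o(1)} = n^{1+o(1)}$, not $n^{o(1)}$. With your parameter and the naive ``$h$ rounds $\times$ $m$ work per round'' accounting you would either need the hop-diameter to be $n^{o(1)}$ (in which case the whole weighting machinery is unnecessary and ordinary push--relabel already suffices) or you would get $n^{3+o(1)}$. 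The paper's actual accounting is different: an edge $e$ changes admissibility status and participates in relabel work only $O(h/\Bw(e))$ times, so the total work is $\widetilde{O}\!\left(m + h\sum_{e}1/\Bw(e)\right)$, and the second structural property of the weight function --- $\sum_{e}1/\Bw(e)=\widetilde{O}(n)$, which follows from $\Bw$ being induced by a topological order --- is what yields $h\cdot\widetilde{O}(n) = n^{2+o(1)}$. Your sketch mentions only the ``short flow paths'' property of $\Bw$ and omits this harmonic-sum property, so the runtime claim is not actually supported by the mechanism you describe. Once both properties are stated, the rest of your outline (the $1/6$-type approximation guarantee of weighted push--relabel against the best short flow, the $O(\eta)$-factor loss from hierarchy-rerouting, and $O(\log^2 n)$ residual iterations) matches the paper.
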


Our algorithm strictly follows the augmenting path framework, i.e., it maintains an \emph{integral} flow that is repeatedly increased along augmenting paths in the residual graph. When $m=\omega(n^{4/3})$, \Cref{thm:main} improves upon the long-standing time bound of $O(m\cdot\min\{m^{1/2},n^{2/3}\})$ presented in \cite{karzanov1973finding,EvenT75,GoldbergR98}, in the context of previous augmenting-path-based algorithms. Additionally, our running time matches the $n^{2+o(1)}$-time bound in the independent work by Chuzhoy and Khanna \cite{ChuzhoyK24STOC} who gave combinatorial algorithms for computing the maximum bipartite matching (and consequently unit-vertex-capacitated maximum flow), a special case of maximum flow. 
As far as we know, their techniques are based on dynamic shortest-path data structures and multiplicative weights update and, hence, are different from ours.

\para{A Bird's-Eye View of Our Algorithm}
We now give a basic outline of our algorithm; we provide a more detailed overview in \cref{sec:overview}.
To list augmenting paths, we introduce the \emph{weighted push-relabel} algorithm, a new and simple variant of the well-known push-relabel algorithm \cite{GoldbergT88} guided by an additional edge weight function. Edges with higher weight are relabeled less often, allowing for more efficiency. Given a ``good'' weight function, %
the weighted push-relabel algorithm will list augmenting paths in $\widetilde{O}(n^{2})$ total time and return an $O(1)$-approximate maximum flow. By repeating the algorithm on the residual graph, this immediately gives a maximum flow algorithm.

Our starting observation is that on a DAG with a topological order $\Btau$, the simple function $\Bw(u,v)=|\Btau_u-\Btau_v|$ for each edge $(u,v)$ is in fact a good weight function. %
The question is now to figure out what ``good'' weight function to use on general graphs.

We introduce the \emph{directed expander hierarchy} and show that it induces a natural vertex ordering $\Btau$ such that $\Bw(u,v)=|\Btau_u-\Btau_v|$ is a good weight function. We remark that while there are several successful variants of expander hierarchies in undirected graphs \cite{Racke02,PatrascuT07,RackeST14,GoranciRST21}, we believe ours is the first paper to successfully apply them to directed graphs. Unfortunately, all known approaches for the hierarchy construction are either too slow \cite{PatrascuT07}, assume a maximum flow subroutine itself \cite{RackeST14}, or are specific to undirected graphs \cite{GoranciRST21}.

Therefore, we show a new bottom-up construction based on our weighted push-relabel algorithm. The basic idea is that we repeatedly use weighted push-relabel to construct the next level of the hierarchy, which in turn gives us a better weight function, allowing us to compute yet one more level. To be a bit more concrete, let $X_{i}$ be the candidate edge set for  level-$i$ of the hierarchy and suppose that we have already built a directed expander hierarchy of $G\setminus X_{i}$, which consists of all edges below level $i$. If we can certify that $X_i$ is expanding (i.e. well-connected in some sense), then we can leave $X_i$ as the last level of the hierarchy; on the other hand, if $X_i$ if not expanding, then we need to find a sparse cut with respect to $X_i$ and elevate those cut edges to the next level $X_{i+1}$. As is standard, we solve this problem using the cut-matching game, which requires computing flow between subsets of $X_{i}$. The challenge lies in solving this flow problem efficiently.

The crucial observation is that by setting the weight of edges in $G\setminus X_{i}$ according to its expander hierarchy  (which we already computed) and setting the weight of all edges in $X_{i}$ to be $n$, our weighted push-relabel algorithm will solve this flow problem in $\widetilde{O}(n^{2})$ time. We note that standard push-relabel (without weights) can only solve this problem for the bottom level of the hierarchy, i.e., when $X_{i}$ is the whole edge set.

We emphasize that so far, all of our algorithmic components
\ifnum\cameraready=0
(from \Cref{sec:push-relabel}~to~\ref{sec:sparse-cut})
\else
(from \Cref{sec:push-relabel}~to~\cite[Section 6]{BernsteinBST24})
\fi
are very implementable. The weighted push-relabel algorithm (\Cref{alg:push-relabel}) simply increments vertex labels
and, in capacitated graphs, uses the link-cut tree \cite{SleatorT83} to push flow.
\ifnum\cameraready=0
To find sparse cuts (\Cref{alg:sparse-cut}),
\else
To find sparse cuts (\cite[Section 6]{BernsteinBST24}),
\fi
we also call Dijkstra's algorithm to produce levels and return the sparsest level cut.

The novelty is in the analysis. To show that the directed expander hierarchy gives a good weight function, we prove a new trade-off between length and congestion for rerouting flow on expanders. To show that there exists a sparse level cut, we show a novel angle to the \emph{directed expander pruning} problem studied in \cite{BernsteinGS20, HuaKGW23, SulserP24}. 
In the standard version of pruning, we update a few edges in a directed $\phi$-expander $G$, and the goal is to prune away a small set of vertices $P$ so that $G \setminus P$ is still an expander. We extend pruning to work with \emph{path-reversal} updates, which reverse the direction of \emph{every} edge on a given path; this kind of update is very natural in the residual graph (reversing augmenting paths). We show that, somewhat surprisingly, reversing a whole path has approximately the same impact on pruning as updating a single edge. This allows us to show that the directed expander hierarchy is robust under flow augmentation 
\ifnum\cameraready=0
(\Cref{lemma:low-diameter-expander-new}).
\else
(\cite[Lemma 6.5]{BernsteinBST24}).
\fi
For our purposes, we only ever need an existential version of path-reversal pruning, but the algorithmic version should be plausible and useful.

There is unfortunately one challenge that adds a huge amount of complexity. 
When computing a sparse cut with respect to $X_i$, the sparse cut can ``cut through'' components of the expander hierarchy in the lower levels. This forbids us from building the hierarchy bottom-up in one go; instead, our algorithm needs to regularly move up edges at different levels of the hierarchy.
To modularize the analysis, we employ a data structure point-of-view that models these interactions between levels. We note that our approach is not inherently \emph{dynamic}, in that we are not aiming for fast or sublinear update times. In fact, each operation of the data structure requires $n^{2+o(1)}$ time. The usefulness of this perspective lies instead in showing that only $n^{o(1)}$ sequential updates are needed.
This is by far the most complicated part of our algorithm
\ifnum\cameraready=0
(essentially all of \cref{sec:nested-expander-decomposition}),
\else
(essentially all of \cite[Section 7]{BernsteinBST24}),
\fi
and is also the only reason our algorithm is randomized and requires an inherent $n^{o(1)}$-factor in the running time. We believe this step can be simplified once tools related to directed expanders are as developed as their undirected counterparts \cite{RackeST14,SaranurakW19,GoranciRST21}.

To summarize, in contrast to recent developments, \Cref{thm:main} does not rely on continuous optimization or heavy dynamic data structures. It also paves the way to a very implementable $\widetilde{O}(n^{2})$-time deterministic algorithm once a better construction of directed expander hierarchy is shown.
We note that our paper is quite self-contained: the black boxes we assume only include basic graph algorithms (e.g., topological sort and Dijkstra's algorithm), link-cut trees~\cite{SleatorT83}, and Louis's cut-matching game \cite{Louis10}.
Lastly, we believe and hope that some novel tools we developed in this paper---including the \emph{weighted push-relabel algorithm}, \emph{directed expander hierarchy}, and \emph{expander pruning under path-reversals}---will find future applications.

\para{Future Work}
The natural next step is to show a simple algorithm for constructing a directed expander hierarchy in $\widetilde{O}(n^{2})$. Combined with our new approach, this would yield a much simpler $\widetilde{O}(n^{2})$-time max-flow algorithm, and we believe it would also prove a powerful tool for other directed problems.

A more challenging goal is to achieve an $m^{1+o(1)}$ time bound via simple combinatorial algorithms. One takeaway of our paper is that the main bottleneck seems to be a fast $n^{o(1)}$-approximation for DAGs. On the one hand, it seems quite plausible that our tools would allow an improvement for DAGs to be generalized to all directed graphs; but on the other hand, the current toolkit for DAGs is quite limited. DAGs also seem to capture the hardness of other fundamental problems in directed graphs, such as dynamic shortest paths \cite{BernsteinGS20}, parallel reachability \cite{Fineman18,LiuJS19}, and diameter-reducing shortcuts \cite{KoganP22}.%

\para{Organization}
The rest of the paper is organized as follows.
In \cref{sec:overview}, we give a comprehensive overview of the technical components of our algorithm.
In \cref{sec:prelim} we provide necessary preliminaries.
We develop our weighted push-relabel algorithm in \cref{sec:push-relabel}. In \cref{sec:weight}, we show that a directed expander hierarchy induces a ``good'' weight function, and hence, when combined with our weighted push-relabel algorithm, solves maximum flow.
\ifnum\cameraready=0
In \cref{sec:sparse-cut}, we show how to leverage the weighted push-relabel algorithm to compute sparse cuts, which is a crucial subroutine in how we construct the expander hierarchy in \cref{sec:nested-expander-decomposition}.
In \cref{sec:capacitated,appendix:capacity-scaling,appendix:omitted-proofs} we provide details omitted from the main body of the paper.
\else
We defer the rest of the content to the full version~\cite{BernsteinBST24}.
In \cite[Section 6]{BernsteinBST24}, we show how to leverage the weighted push-relabel algorithm to compute sparse cuts, which is a crucial subroutine in how we construct the expander hierarchy in \cite[Section 7]{BernsteinBST24}.
In \cite[Appendix A-C]{BernsteinBST24} we provide details omitted from the main body of the paper.
\fi

\newcommand{\fopt}{\Bf^{*}}
\newcommand{\fstar}{\Bf^*}
\newcommand{\otil}{\widetilde{O}}
\newcommand{\Otil}{\otil}
\newcommand{\fin}{\Bf^{\mathrm{in}}}
\newcommand{\Ell}{\Bell}
\newcommand{\fsink}{\Bf^{\mathrm{sink}}}
\newcommand{\pset}{\mathcal{P}}
\newcommand{\psetshort}{\mathcal{P}_{\textrm{short}}}
\newcommand{\psetlong}{\mathcal{P}_{\textrm{long}}}
\newcommand{\pfirst}{P_{\textrm{early}}}
\newcommand{\plast}{P_{\textrm{late}}}
\newcommand{\Bellmax}{\Bell_{\textrm{max}}}
\newcommand{\xsupply}{X_{\Bsource}}
\newcommand{\xsink}{X_{\Bsink}}
\newcommand{\fearly}{\Bf_{\textrm{early}}}
\newcommand{\flate}{\Bf_{\textrm{late}}}
\newcommand{\Bdp}{\Bd'}
\newcommand{\Bwp}{\Bw'}

\section{Technical Overview}\label{sec:overview}

In this section we give a high-level overview our maximum flow algorithm.
For simplicity of presentation, we assume during this overview that the input graph is unit-capacitated.
Note that it suffices to design a constant- or even $1/n^{o(1)}$-approximate flow algorithm for directed graphs, as the exact algorithm then follows by repeating the approximate algorithm $n^{o(1)}$ times on the residual graph.
This is in contrast to undirected graphs: although efficient approximations are known here~\cite{Sherman13,KelnerLOS14,Peng16,Sherman17,SidfordT18}, the residual graph of the found approximate flow is no longer undirected, so an approximate flow algorithm cannot be bootstrapped to an exact one.
Although we assume unit capacities, in the analysis we will sometimes refer to a flow $\Bf$ that disobeys these capacities; we define the \emph{congestion} of a flow $\Bf$, denoted $\cong(\Bf)$, to be $\max_{e \in E} \Bf(e)$. %

\subsection{Weighted Push-Relabel Algorithm}
\label{sec:overview:push-relabel}

The starting point of our algorithm is a weighted variant of the classic push-relabel algorithm.

\para{Summary of Classic Push-Relabel}
Let us recall the classic push-relabel algorithm in unit-capacitated graphs. Suppose we have a flow instance with integral source vector $\Bsource$ and sink vector $\Bsink$, and let us assume for simplicity that this flow instance is feasible. The push-relabel algorithm will always maintain a pre-flow,\footnote{A \emph{pre-flow} is an intermediate flow that has not yet sent all units of demands to sink vertices.} where every vertex $v$ might have excess flow $\ex_{\Bf}(v) \defeq  \max\{ \Bsource(v) - \Bfout(v) - \Bsink(v), 0 \}$ where $\Bfout(v)$ denotes the \emph{net} flow going out from $v$; note that initially, all positive excess are on the source vertices.
The algorithm also maintains an integral label $\Bell(v)$ on every $v \in V$, which gradually increases over time; initially $\Bell(v) = 0$ for all $v \in V$.

Informally, the main loop of the push-relabel algorithm repeatedly finds a vertex $v$ with $\ex_{\Bf}(v) > 0$ and attempts to \emph{push} a unit of flow along an edge $(v,w)$ with $\Bell(v) \geq \Bell(w) + 1$; we refer to such edges as \emph{admissible}. Following the standard operation of residual graphs, this edge $(v,w)$ is then removed and replaced with the reverse edge $(w,v)$. If a vertex $v$ has $\ex_{\Bf}(v) > 0$, but there are no admissible edges $(v,w)$, then the algorithm performs operation \textsc{Relabel}($v$), which increases $\Bell(v)$ by $1$. The sequence of push operations effectively traces augmenting paths from the source vertices to the sink vertices. 

\para{Analysis of Classic Push-Relabel}

The analysis rests on the following admissibility invariant: for any edge $(u,v)$ in the residual graph, we have $\Bell(u) \leq \Bell(v) + 1$. This easily follows from the fact that if $\Bell(u) = \Bell(v) + 1$, then $(u,v)$ is an admissible edge, so the algorithm will not relabel $u$ as long as $(u,v)$ remains in the residual graph.

We now sketch the proof that push-relabel successfully finds a flow that routes all the demands. In fact, we show something stronger: at termination, we have $\Bell(v) \leq n$ for all $v \in V$. Say, for contradiction, that the algorithm relabels a vertex $v$ from $\Bell(v) = n$ to $\Bell(v) = n+1$. This implies that $\ex_{\Bf}(v)>0$, so since we assumed the original flow instance is feasible, there must exist some path in the residual graph from $v$ to an unsaturated sink vertex $t$. It is easy to see that $\Bell(t) = 0$; since $t$ is still a sink, it never had an excess, and so was never relabeled. This $(v,t)$-path has at most $n-1$ edges, so by the admissibility invariant above, $\Bell(v) \leq \Bell(t) + n - 1 = n-1$, contradicting the assumption that $\Bell(v) = n$.

For the running time analysis, it is not hard to check that any edge $(u,v)$ can undergo at most one push operation as long as the level $\Ell(u)$ is fixed; similarly, the admissibility status of $(u,v)$ can only change when $u$ or $v$ is relabeled. Since $u$ and $v$ undergo at most $O(n)$ relabel operations, the total running time is $O(mn)$.

\para{Motivating Our Weighted Push-Relabel}
Consider the following simplified scenario: we are told in advance that a certain subset of the edges is \emph{infrequent}, meaning that there exists some approximate maximum flow $\Bf$, where every flow path in $\Bf$ uses at most $k$ infrequent edges (think of $k$ as small).  

We then modify the classic push-relabel algorithm as follows. An infrequent edge $(u,v)$ only counts as admissible if $\Bell(u) \geq \Bell(v) + n/k$. This means that the algorithm might need to perform more relabel operations, and yet we can still show that if any label ever exceeds $10n$, this means that push-relabel has already routed a constant fraction of the demand. To see this, assume, for contradiction, that the algorithm relabels some $v$ from $\Bell(v) = 10n$ to $\Bell(v) = 10n + 1$. As before, there must exist some $(v,t)$-path $P$ in the residual graph to an unsaturated sink $t$ with $\Bell(t) = 0$. Intuitively, the path $P$ contains at most $k$ infrequent edges: this is not technically true, and the full analysis is slightly more involved.
However, this is close to being true, so we make the simplifying assumption here that this path contains at most $9k$ infrequent edges.\footnote{The reason it is not technically true is that even though flow paths in the original graph use at most $k$ infrequent edges, this is no longer true in the residual graph. But letting $\fearly$ be the flow already computed by push-relabel,  note that the residual graph only differs from the original one by edges in $\fearly$, so if paths in the residual graph use significantly more infrequent edges than paths in the original graph, this implies that $\fearly$ is itself using many infrequent edges, and hence has a large value. In the technical exposition, we show that either $\fearly$ sends a constant fraction of the supply (so the algorithm can terminate),  or paths in the residual graph are relatively similar to those in the original graph and hence have few infrequent edges.} 
The natural generalization of the admissibility invariant then implies that
\ifnum\cameraready=0
\begin{align*}
  \Bell(v)
    &\leq \Bell(t) + [\text{\# infrequent edges on } P] \cdot (n/k) + [\text{\# frequent edges on } P] \\
    &\leq 0 + 9k\cdot (n/k) + n-1 = 10n-1,
\end{align*}
\else
\begin{alignat*}{2}
  \Bell(v)
    &\leq \Bell(t) &&+ [\text{\# infrequent edges on } P] \cdot (n/k) \\ &\phantom{\leq \Bell(t)} &&+ [\text{\# frequent edges on } P] \\
    &\leq 0 &&+ 9k\cdot (n/k) + n-1 = 10n-1,
\end{alignat*}
\fi
contradicting the assumption that $\Bell(v) = 10n$. 
For the running time analysis, note that an infrequent edge can only change status every $n/k$ relabels, so the new runtime is $O([\text{\# infrequent edges}] \cdot n/k + [\text{\# frequent edges}] \cdot n)$, which is significantly smaller than $O(mn)$ if most edges are infrequent.

\para{Weighted Push-Relabel}
Imagine a generalization of the above scenario where we have a different frequency promise for every edge. We represent these promises with a weight function $\Bw \in \N^E$. An edge $(u,v)$ is defined as admissible in the push-relabel algorithm if $\Ell(u) \geq \Ell(v) + \Bw(u,v)$. 
Following the logic of the above paragraph, suppose we have a promise that there exists a flow where every flow path has $\Bw$-weight at most $h=n^{1+o(1)}$, then we can guarantee that, when running the algorithm with maximum vertex label of $10h$, the algorithm will find a flow that routes a constant fraction of the demands.
This yields the following theorem:%

\begin{theorem}[Informal version of \cref{thm:push-relabel-main-theorem}]
\label{thm:push-relabel-main-informal}
Given edge weights $\Bw \in \N^E$ and parameter $h$, the weighted push-relabel algorithm return a flow in $\otil(m+ h \cdot \sum_{e \in E} 1/\Bw(e))$ with the following guarantee:
if there exists a flow $\Bf^{*}$ such that every flow path $P$ in $\Bf$ has $\sum_{e \in P} \Bw(e) \le h$, then the returned flow has value $\Omega(|\Bf^{*}|)$.
In particular, if $\Bf^{*}$ is an $\alpha$-approximate maximum flow, then the returned flow is a $O(\alpha)$-approximate maximum flow.
\end{theorem}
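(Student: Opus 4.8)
**Proof proposal for Theorem~\ref{thm:push-relabel-main-informal} (informal version of \cref{thm:push-relabel-main-theorem}).**

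The plan is to run the weighted push-relabel algorithm with maximum label $\Theta(h)$ (say $10h$) and argue correctness and running time separately, closely following the two analyses sketched above but promoting the single-threshold argument to the per-edge weight function $\Bw$. First I would record the \emph{admissibility invariant} in its weighted form: for every edge $(u,v)$ currently in the residual graph, $\Bell(u) \le \Bell(v) + \Bw(u,v)$. As in the classical case this is immediate, since the moment $\Bell(u)$ would exceed $\Bell(v) + \Bw(u,v)$ the edge $(u,v)$ becomes admissible, and the algorithm only relabels $u$ when $u$ has positive excess and \emph{no} admissible out-edge; so $\Bell(u)$ cannot cross that threshold while $(u,v)$ is present. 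The running-time bound is then routine bookkeeping: an edge $e=(u,v)$ can change admissibility status (and hence be pushed along again) only when $\Bell(u)$ or $\Bell(v)$ increases by at least $\Bw(e)$, so with all labels capped at $O(h)$ each edge is involved in at most $O(h/\Bw(e))$ push operations; summing gives $\otil(m + h\sum_{e\in E} 1/\Bw(e))$ after accounting for the per-operation overhead (the $\otil(m)$ covers initialization and, in the capacitated setting, the link-cut tree cost per push).

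The heart of the matter is the value guarantee. Let $\fearly$ be the (pre-)flow produced by the algorithm at termination. I would split into two cases according to whether $\fearly$ already routes a constant fraction of the demand. If it does, we are done. Otherwise, I claim no label ever reaches $10h$. Suppose for contradiction the algorithm relabels some $v$ from $\Bell(v)=10h$ to $10h+1$; then $\ex_{\Bf}(v)>0$ at that moment. Here is the one genuine subtlety: I want to produce a \emph{short-$\Bw$-weight} residual path from $v$ to an unsaturated sink $t$ with $\Bell(t)=0$, and then contradict the admissibility invariant by summing $\Bw$ along it. The difficulty is that the promised low-weight flow $\Bf^*$ lives in the \emph{original} graph, not the residual graph, so I cannot directly assert a low-weight residual augmenting path exists. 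The resolution — the main obstacle, flagged in the footnote above — is a flow-decomposition/counting argument: consider $\Bf^* $ restricted to the current residual graph, i.e. $\Bf^* - \fearly$ viewed appropriately; it still routes (almost) all the as-yet-unrouted demand, so in particular there is a residual path $P$ from $v$ to some unsaturated sink. Its total $\Bw$-weight may exceed $h$ only because it traverses reverse-edges of $\fearly$; but each such reverse-edge witnesses flow $\fearly$ pushed across an edge of weight $\Bw(e)$, so if residual paths were forced to use $\Bw$-weight much more than $h$, then $\fearly$ itself would carry $\Bw$-weighted flow proportional to its value on those very edges, forcing $|\fearly| = \Omega(|\Bf^*|)$ and landing us back in the first case. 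Making this dichotomy quantitative (choosing the constant so that ``$P$ has $\Bw$-weight $\le 9h$'' holds whenever $|\fearly|$ is small) is the one place requiring care; I expect it to parallel the $9k$-vs-$k$ step in the simplified scenario, with $k$ replaced by the weight budget and the counting done against the net flow pushed along heavy edges.

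Granting that, the finish is exactly the weighted admissibility computation: along $P$ from $v$ to $t$ we telescope the invariant $\Bell(x) \le \Bell(y) + \Bw(x,y)$ over consecutive edges to get $\Bell(v) \le \Bell(t) + \sum_{e\in P}\Bw(e) \le 0 + 9h < 10h$, contradicting $\Bell(v)=10h$. Hence all labels stay below $10h$, the algorithm terminates with $\ex_{\Bf}$ supported only on sinks or sources that have been saturated up to the routed amount, and by the case analysis $|\fearly| = \Omega(|\Bf^*|)$. Finally, the ``in particular'' clause is a one-line corollary: if $\Bf^*$ is itself an $\alpha$-approximate maximum flow, then the returned flow has value $\Omega(|\Bf^*|) = \Omega(\mathrm{OPT}/\alpha)$, i.e.\ it is an $O(\alpha)$-approximate maximum flow. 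I would present the correctness argument first (it drives the choice of label cap $10h$), then the running-time count, then the corollary.
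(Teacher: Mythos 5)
Your outline follows the paper's argument closely: the weighted admissibility invariant, the per-edge relabel count giving the running time, and the key observation that the residual graph of the produced flow $\Bf$ differs from $G$ only by edges carrying $\Bf$, whose total weighted flow is $\Bw(\Bf)\le 9h\,|\Bf|$. The averaging you gesture at is exactly the paper's calculation (Lemma 4.2): if $|\Bf|\le\frac{1}{6}|\Bf^*|$ then $\Bf^*-\Bf$ has average $\Bw$-length $\frac{\Bw(\Bf^*)+\Bw(\Bf)}{|\Bf^*|-|\Bf|}<3h$, so some residual augmenting path has weight $<3h$, contradicting the invariant that all source-to-sink residual distances exceed $3h$.

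The one genuine gap is where you write ``consider $\Bf^*-\fearly$ viewed appropriately.'' Without more, $\Bf^*-\Bf$ need not be a legal $(\Bsource_\Bf,\Bsink_\Bf)$-flow in $G_\Bf$: at a vertex $v$ it can happen that $\ex_{\Bf^*}(v)<\ex_{\Bf}(v)$, in which case $\Bf^*-\Bf$ has negative excess at $v$ and the path-decomposition/averaging step breaks. The paper fixes this with a super-source/super-sink augmentation: add $s$ and $t$, an edge $(s,v)$ of capacity $\Bsource(v)$ and weight $0$, and an edge $(v,t)$ of capacity $\Bsink(v)$ and weight $0$; extend both $\Bf$ and $\Bf^*$ to $(s,t)$-flows in $G'$. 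In $G'$ the residual flow $\Bf'=\Bf^*-\Bf$ satisfies $\ex_{\Bf'}(s)=\infty$ and $\ex_{\Bf'}(v)=0$ for $v\neq s$, so it is a genuine feasible flow in $G'_\Bf$ and the averaging argument applies. Your proposal needs this (or an equivalent) step to be a complete proof. A smaller stylistic point: the paper factors the argument cleanly into ``the algorithm's output satisfies properties (i) and (ii)'' plus ``any flow satisfying (i) and (ii) approximates any short $\Bf^*$'' (Lemma 4.2), which avoids your back-and-forth case analysis on whether $\fearly$ is already large; that case analysis is logically fine but awkward, since the algorithm does let labels exceed $9h$ (marking such vertices dead), so the conclusion you want is not ``no label reaches $10h$'' but ``the resulting distance bound forces $|\Bf|$ to be large.''
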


The general idea of using a weight function to limit how often the algorithm touches various edges is inspired by a similar weighted variant of the Even-Shiloach trees \cite{Bernstein17,GutenbergW20a} that has been applied to dynamic shortest paths.
A more detailed comparison of our push-relabel algorithm and the standard version and a discussion of possible future improvements for sparse graphs are given in \Cref{sec:push-relabel-implementation}. 

\para{The Maximum Flow Algorithm}
To this end, we say that a weight function $\Bw$ satisfies the \emph{path-weight requirement} if there exists an $1/n^{o(1)}$-approximate maximum flow $\Bf$ such that every flow path $P$ in $\Bf$ has $\sum_{e \in P} \Bw(e) \le h= n^{1+o(1)}$.
Our main technical contribution is showing how to compute a weight function $\Bw$ that satisfies the path-weight requirement and has $\sum_{e \in E} 1/\Bw(e) = n^{1+o(1)}$.
Given this, by applying \Cref{thm:push-relabel-main-informal}, we immediately obtain a maximum flow algorithm with running time $n^{2+o(1)}$. (Recall that in directed graphs, an approximate flow algorithm immediately implies an exact algorithm.)
In the remainder of this overview we explain how to get this weight function.

\subsection{Examples of Good Weight Functions}

\para{Directed Acyclic Graphs}
Let us consider the simplest directed graph: a directed acyclic graph (DAG).
We know that a DAG admits a topological order $\Btau \in [n]^V$ such that $\Btau_v > \Btau_u$ for each edge $(u, v)$.
This topological order also gives us the desired weight function: if we set $\Bw(u, v) \defeq \Btau_v - \Btau_u$, then not only flow paths on the maximum flow, but any path in the DAG will have weight at most $n$.
Moreover, it is easy to see that $\sum_{e \in E} 1/\Bw(e) = O(n\log(n))$, because the sum of weights incident to a specific vertex $v$ forms a harmonic series and is hence $O(\log(n))$. Plugging this into \cref{thm:push-relabel-main-informal} yields a remarkably simple $\otil(n^2)$-time algorithm for computing a $O(1)$-approximate flow in a DAG using only classical flow techniques (\cref{cor:dag-approx}).

\begin{remark}
The above simple algorithm for DAGs is inherently approximate. Whereas in general graphs there is a standard reduction from exact to approximate max flow, this does not apply to DAGs: the reduction involves recursively calling the approximate flow algorithm on the residual graph, but even if the original graph is a DAG, the residual graph might not be. Thus, on its own, our approximate max flow algorithm on DAGs has no implication for general graphs. By contrast, a reduction of Ramachadran \cite{Ramachandran87} shows that an exact algorithm for DAGs would imply an exact algorithm for general graphs as well.
\end{remark}

\para{General Graphs Given Maximum Flow}
The analysis of the DAG case also shows the \emph{existence} of a good weight function in general graphs: take any integral maximum flow, the support of which after cycle cancellation forms a DAG, and then assign weights as above to this support and assign large weight (e.g. $100n$) to all other edges. Of course, this weight function requires computing a maximum flow and, hence, is not useful for us. We will show another construction of good weight function based on a directed expander hierarchy.

\subsection{Basic Facts About Expanders}
In order to describe the directed expander hierarchy, 
we review some basic properties of expanders.

\begin{definition}[Directed expander]
Consider a directed, unweighted graph $G=(V, E)$. For any set of vertices $S \subseteq V$, we define $\vol(S) \defeq \sum_{v \in S} \deg(v)$, where $\deg(v)$ counts both in- and out-edges incident to $v$. We say that cut $\emptyset \neq S \subsetneq V$ is \emph{$\phi$-sparse} if $\min\{|E(S,\overline{S})|, |E(\overline{S}, S)|\} < \phi \cdot \min\{\vol(S), \vol(\overline{S})\}$, where $\overline{S} \defeq V \setminus S$. We say that a graph $G$ is a \emph{$\phi$-expander} if it contains no $\phi$-sparse cuts.
\end{definition}

One should think of the $\phi$ parameter above as being $1/n^{o(1)}$.
We also modify the above definitions to apply with respect to an edge set $F \subseteq E$, often referred to as \emph{terminal} edges. In particular, define $\deg_F(v)$ to be the number of edges in $F$ incident to $v$ and $\vol_F(S) = \sum_{v \in S} \deg_F(v)$; we say a cut $S$ is \emph{$\phi$-sparse} with respect to $F$ if  $\min\{|E(S,\overline{S})|, |E(\overline{S}, S)|\} < \phi \cdot \min\{\vol_F(S), \vol_F(\overline{S})\}$; we say that $G$ is a \emph{$\phi$-expander} with respect to $F$
if $G$ contains no $\phi$-sparse cuts with respect to $F$.

To handle graphs that are not strongly connected, it is useful to define a notion of expansion that applies separately to every strongly connected component (SCC). Given a set of terminal edges $F$, we say that $F$ is \emph{$\phi$-expanding} in $G$ if every SCC of $G$ is a $\phi$-expander with respect to $F$.\footnote{More precisely, each SCC is a $\phi$-expander with respect to the volume induced by $F$.}

Expanders are nice to work with in the context of flow problems because they admit a low-congestion flow between any sets of sources/sinks. This also generalizes to a terminal set $F$.

\begin{fact}[Proved in \Cref{lemma:expander-routing-respecting}]
\label{fact:expander-routing-overview}
Let $G = (V,E)$ be a $\phi$-expander with respect to a terminal set $F \subseteq E$. Consider any flow-instance with supply/demand $\Bsource,\Bsink$ such that $\|\Bsource\|_1 = \|\Bsink\|_1$ with all supply/demand on terminal edges; formally, this means that for every vertex $v$, $\Bsource(v) \leq \deg_F(v)$ and $\Bsink(v) \leq \deg_F(v)$. Then, there exists a flow $\Bf$ in $G$ that routes all the supply/demand and has the following properties: 
\textbf{1)} $\Bf(e) = O(\log(n)/\phi)$ for every $e \in E$ and \textbf{2)} Every flow path in $\Bf$ uses at most $O(\log(n)/\phi)$ edges in $F$.
\end{fact}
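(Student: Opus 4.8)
The plan is to prove \Cref{fact:expander-routing-overview} via a standard LP-duality / maxflow-mincut argument, using the expansion condition to rule out small cuts. First I would reformulate the statement: it suffices to show that the flow instance with source $\Bsource$ and sink $\Bsink$ is feasible when each edge is given capacity $c = \Theta(\log n / \phi)$ (this yields property \textbf{1}), and then argue separately that the flow paths can be taken to be short in their number of terminal edges (property \textbf{2}). For feasibility, by the max-flow/min-cut theorem it is enough to check that every cut $S$ has enough capacity crossing it: $|E(S,\overline S)| \cdot c \ge \Bsource(S) - \Bsink(S)$ and symmetrically for $\overline S$. Since $\Bsource(v), \Bsink(v) \le \deg_F(v)$, we have $\Bsource(S) - \Bsink(S) \le \vol_F(S)$, and by the same bound on $\overline S$ combined with $\|\Bsource\|_1 = \|\Bsink\|_1$ we also get $\Bsource(S) - \Bsink(S) \le \vol_F(\overline S)$, so $\Bsource(S) - \Bsink(S) \le \min\{\vol_F(S), \vol_F(\overline S)\}$. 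The $\phi$-expander condition with respect to $F$ says $|E(S,\overline S)| \ge \phi \cdot \min\{\vol_F(S), \vol_F(\overline S)\}$ (and likewise for $E(\overline S, S)$), so taking $c \ge 1/\phi$ already makes every cut feasible. A small subtlety: the graph need not be strongly connected, but if $S$ separates any demand from its sink then there must be edges crossing in the relevant direction, and the expansion bound applies within the SCC structure; I would handle this by noting the instance decomposes over SCCs (each routed within its own SCC using \Cref{fact:expander-routing-overview}'s hypothesis applied per-SCC, which is exactly the ``$\phi$-expanding'' notion).

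Next, to upgrade a feasible fractional flow to one that is also short (property \textbf{2}), I would use the standard trick of attaching a length/cost to terminal edges and invoking an approximate-shortest-path or multiplicative-weights rounding of the flow: decompose the feasible flow into paths, and observe that the total $F$-length of all flow paths, weighted by flow value, is $\sum_{e \in F} \Bf(e) \le |F| \cdot O(\log n/\phi)$; meanwhile the total flow value is $\|\Bsource\|_1$. Actually a cleaner route is to directly bound the flow in a capacitated-and-length-bounded instance: set up the flow LP with the additional constraint that each unit of flow uses at most $L = \Theta(\log n / \phi)$ terminal edges, and argue feasibility of this stricter LP by exhibiting a fractional certificate — e.g., route within each SCC using the fact that a $\phi$-expander has (terminal-)diameter $O(\log n / \phi)$ after contracting, or more robustly use the known ``expander embedding'' statement that $F$ embeds into $G$ with congestion $O(\log n/\phi)$ and dilation $O(\log n/\phi)$, then route $\Bsource \to \Bsink$ first along $F$-edges (a star/matching routing which is trivially short since $\Bsource(v),\Bsink(v) \le \deg_F(v)$) and compose with the embedding. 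The composition multiplies congestion and dilation each by $O(\log n/\phi)$, giving the claimed bounds up to constants — so I would state \Cref{fact:expander-routing-overview} with the understanding that the $O(\log n/\phi)$ in both properties may hide a small constant loss from this composition.

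I expect the main obstacle to be property \textbf{2}, the dilation bound in terms of \emph{terminal} edges specifically, rather than all edges. The feasibility/congestion part is essentially immediate from min-cut once the volume bookkeeping is set up. But controlling the number of $F$-edges per path is more delicate: a naive path decomposition of a congestion-$O(\log n/\phi)$ flow can still have long paths. The fix is to use the expander's low-diameter property with respect to the \emph{terminal-edge-contracted} graph (or equivalently, that random walks / the cut-matching certificate give a routing where each demand unit crosses $F$ only $O(\log n/\phi)$ times), and I anticipate the cleanest proof routes $\Bsource \to \Bsink$ in two stages as above: a single ``hop'' along a terminal edge at each endpoint (cost $O(1)$ in $F$), plus a short walk in $G$ connecting terminal-edge endpoints whose $F$-cost is bounded by the contracted diameter $O(\log n/\phi)$. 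Assembling these pieces and verifying the congestion stays $O(\log n/\phi)$ under the composition is the part that needs care; everything else is routine. I would finish by remarking that all constants can be tracked explicitly but are suppressed, consistent with how \Cref{fact:expander-routing-overview} is stated.
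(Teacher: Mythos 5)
Your feasibility half is correct and matches the paper's argument (the footnote in the proof of \Cref{lemma:expander-routing-respecting}): by \Cref{fact:maxflow-mincut}, a cut $S$ needs crossing capacity at least $\Bsource(S)-\Bsink(S)$, and since $\Bsource(S)-\Bsink(S)=\Bsink(\overline S)-\Bsource(\overline S)\le\min\{\vol_F(S),\vol_F(\overline S)\}$, expansion makes capacity $1/\phi$ per edge suffice. That part is fine.

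The gap is in the second half, and it is a real one, not just a loose end. Your two attempted arguments for property 2 both break down. The averaging calculation ``$\sum_{e\in F}\Bf(e)\le|F|\cdot O(\log n/\phi)$'' is dimensionally the wrong thing: you need $\sum_{e\in F}\Bf(e)\le|\Bf|\cdot O(\log n/\phi)$, and the first bound gives the second only if $|F|\le\|\Bsource\|_1$, which is false in general (a tiny demand in a large expander). More seriously, the embedding-composition plan does not control $F$-length either. A cut-matching witness $W$ of degree $\approx\deg_F$ with an embedding $\Pi_{W\to G}$ of congestion $O(\log n/\phi)$ is indeed standard, and routing the demand in $W$ with $O(\log n)$ hops is standard. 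But $\Pi$ only bounds \emph{congestion}; there is no guarantee that an embedding path $\Pi(e_W)$ uses few $F$-edges, and nothing in the cut-matching game by itself gives such a bound. So the composed flow paths can have unbounded $F$-length. (The paper does later construct embeddings with controlled $\Bw$-length, in \Cref{claim:was-routable}, but that argument \emph{uses} an already-built expander hierarchy and \Cref{cor:good-weight-function-integral} --- it cannot be used here without circularity.) The low-$F$-diameter observation is correct as a statement about distances, but turning ``every demand pair is $F$-close'' into ``there is a simultaneous routing with small congestion \emph{and} small per-path $F$-length'' is exactly the content of the lemma, not something that follows by ``routing each demand along a short path.''

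The paper's proof (\Cref{lemma:expander-routing}, then bootstrapped in \Cref{lemma:expander-routing-respecting}) resolves this by coupling the two properties in a single constructive argument: it greedily sends flow along augmenting paths that use at most $\ell=O(\log m/(\eps\phi))$ edges of $F$, and when no such path remains, it runs a ball-growing argument on the $F$-distance level sets $S_{\le i}$ \emph{in the residual graph}, showing that some level cut $S_{\le i}$ has $|E_{G_{\Bf}}(S_{\le i},\overline{S_{\le i}})|<\eps|E_G(S_{\le i},\overline{S_{\le i}})|$, which by max-flow/min-cut forces $|\Bf|\ge(1-\eps)\|\Bsource\|_1$. This simultaneously certifies near-optimality and short $F$-length without ever needing a witness or an embedding, and is where the $\log$ factor actually comes from (repeating at $\eps=1/2$ for $O(\log m)$ rounds in \Cref{lemma:expander-routing-respecting}). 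If you want to salvage your plan, the fix is to replace the embedding-composition step with this augmenting-path-plus-ball-growing argument applied to the residual graph; the low-diameter intuition you invoke is the right moral picture, but the residual-graph formulation is what makes it into a proof.
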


A standard approach to dealing with a general undirected graph $G = (V,E)$ is to decompose it into a hierarchy of expanders~\cite{PatrascuT07,GoranciRST21}; in this paper, we propose an analogous hierarchy for directed graphs. Let us first consider a single expander decomposition of a directed graph $G$, formalized in \cite{BernsteinGS20}. In any directed graph $G$, it is possible to find a set of ``back edges'' $B$ such that every strongly connected component (SCC) of $G \setminus B \defeq (V, E \setminus B)$ is a $\phi$-expander and $|B| = \otil(\phi m)$.\footnote{To see this existentially, imagine the algorithm that repeatedly finds $\phi$-sparse cuts in the graph, adds the (the sparser direction of) cut edges to $B$, and recurses on both sides of the cut. This clearly results in a desired expander decomposition. The size of $B$ can be bounded by a simple charging argument: Every time we find a sparse cut, we can charge the cut edges to the smaller side of the cut. Since a vertex can be in the smaller side of the cut at most $O(\log n)$ times, the bound of $\widetilde{O}(\phi m)$ follows.}
If we imagine a topological sort of the SCCs in $G \setminus B$, then the above partition effectively decomposes $E$ into three edges types: 
\begin{enumerate}
\item Edges inside SCCs of $G \setminus B$, which we denote as $X_1.$
\item Edges $(u,v)$ between different SCCs of $G \setminus B$ that go forward in the topological ordering. We will denote these as $D$, which stands for DAG edges.
\item Edges in $B$, which may go backward in the topological ordering. We denote these as $X_2$.
\end{enumerate}

Put succinctly, $X_1$ is $\phi$-expanding in $G\setminus X_2$.
If $X_2$ happens to be $\phi$-expanding in $G$, then the expander hierarchy is complete; if not, we need to add a level to the hierarchy. We can again perform expander decomposition with respect to $X_2$ to compute an even smaller set of edges $X_3$ such that 
every SCC of $G \setminus X_3$ is a $\phi$-expander with respect to $X_2 \setminus X_3$. Let us assume, for simplicity, that $X_3 \subseteq X_2$; then, to maintain a partition, we replace $X_2$ with $X_2 \setminus X_3$, and we now have a partition of the edge set $E = D \cup X_1 \cup X_2 \cup X_3$. If $X_3$ is $\phi$-expanding in $G$ then the expander hierarchy is complete; otherwise we define a new set $X_4$ in the same manner. We now define the hierarchy more formally (see also \cref{fig:hierarchy example} in \cref{sec:weight}).

\begin{definition}
\label{dfn:overview-expander-hierarchy}
A partition $\cH = (D,X_1, \ldots, X_\eta)$ of the edges is a \emph{$\phi$-expander hierarchy} if $D$ is acyclic and for every $i \in [\eta]$, $X_i$ is $\phi$-expanding in $G\setminus X_{>i}$; that is, all SCCs of $G\setminus X_{>i}$ are $\phi$-expanders with respect to $X_i$, where $X_{>i} = X_{i+1} \cup \cdots \cup X_\eta$. Note that $X_{\eta}$ must be $\phi$-expanding in $G$.
\end{definition}

While several variants of expander hierarchies have been previously used in \emph{undirected} graphs \cite{Racke02,PatrascuT07,GoranciRST21}, we believe ours is the first paper to apply them to directed graphs. 
The \emph{existence} of the directed expander hierarchy below follows from generalizing the construction by \cite{PatrascuT07} in undirected graphs. As we will discuss later, however, our construction is entirely different from previous approaches in undirected graphs.

\begin{fact}
\label{fact:hierarchy-existence-overview}
Given any directed graph $G = (V,E)$ and $\phi \leq 1/\polylog(n)$, there exists an expander hierarchy $\cH = (D,X_1, ..., X_\eta)$ such that
\begin{enumerate}
\item $|X_i| = \otil(m\phi^{i-1})$.
\item The total number of levels is around $\log_{1/\phi}(m) = O(\log(n))$.
\end{enumerate}
Note that the first property implies the second. %
\end{fact}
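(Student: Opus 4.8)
The plan is to prove \Cref{fact:hierarchy-existence-overview} by the natural bottom-up construction sketched just before the statement, turning the informal description into a clean induction on the levels. First I would recall the single-level directed expander decomposition result of \cite{BernsteinGS20}: for any directed $G=(V,E)$ with a terminal edge set $F$ and any parameter $\phi \le 1/\polylog(n)$, there is a set $B \subseteq F$ with $|B| = \otil(\phi |F|)$ such that $F \setminus B$ is $\phi$-expanding in $G \setminus B$ (i.e. every SCC of $G\setminus B$ is a $\phi$-expander with respect to $F\setminus B$). Existentially this follows from the standard ``repeatedly cut the sparsest cut and recurse, charging cut edges to the smaller side'' argument, where each vertex appears on the smaller side at most $O(\log n)$ times; this is exactly the footnoted argument in the excerpt, and I would cite it rather than reprove it.

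The construction is then iterative. Initialize $F_1 = E$ and apply the decomposition to get back edges $B_1$ with $|B_1| = \otil(\phi m)$ such that $F_1 \setminus B_1$ is $\phi$-expanding in $G \setminus B_1$. Set $X_1 = F_1 \setminus B_1$ and recurse with terminal set $F_2 = B_1$: apply the decomposition to $G$ with terminals $F_2$ to obtain $B_2 \subseteq F_2$ with $|B_2| = \otil(\phi |F_2|) = \otil(\phi^2 m)$ and $F_2 \setminus B_2$ $\phi$-expanding in $G \setminus B_2$. In general, having defined $F_i$, let $B_i$ be the back-edge set for $G$ with terminal set $F_i$, put $X_i = F_i \setminus B_i$, and set $F_{i+1} = B_i$. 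Since $|F_{i+1}| = |B_i| \le \otil(\phi |F_i|)$, induction gives $|F_i| = \otil(\phi^{i-1} m)$ and hence $|X_i| = \otil(\phi^{i-1} m)$, establishing property~1. Because $\phi \le 1/\polylog(n)$, after $\eta = O(\log_{1/\phi} m) = O(\log n)$ steps we reach $|F_{\eta+1}| < 1$, i.e. $F_{\eta+1} = \emptyset$, so the process terminates; this gives property~2, and as the statement notes, property~1 already forces this bound. At termination set $X_\eta = F_\eta$ (which is then $\phi$-expanding in $G$ since $B_\eta = \emptyset$), and let $D$ be the edges that never land in any $X_i$ — concretely the DAG edges between SCCs at the bottom level. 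One checks $D$ is acyclic because contracting each SCC of $G \setminus B_1$ yields a DAG and $D$ is a subset of those inter-SCC edges, and that $(D, X_1, \dots, X_\eta)$ is a partition of $E$ since every edge is classified exactly once as it is either kept in some $X_i$ or passed down.

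It remains to verify the defining property of \Cref{dfn:overview-expander-hierarchy}: $X_i$ is $\phi$-expanding in $G \setminus X_{>i}$. The subtle point — and the main obstacle in getting the indexing exactly right — is to reconcile ``$X_i$ is $\phi$-expanding in $G \setminus B_i$'' (what the decomposition gives, with $B_i = F_{i+1}$) with ``$X_i$ is $\phi$-expanding in $G \setminus X_{>i}$'' (what the definition wants). These coincide once one observes that $X_{>i} = X_{i+1} \cup \dots \cup X_\eta$ is exactly the set of edges still ``alive'' after step $i$, namely $F_{i+1} = B_i$: indeed $F_{i+1} = B_i$ and $F_{i+1}$ is partitioned by the later steps into $X_{i+1}, X_{i+2}, \dots, X_\eta$ together with whatever gets pushed into $D$ — but here I must be careful, since if some edges of $F_{i+1}$ end up in $D$ then $X_{>i} \subsetneq F_{i+1}$ and the two graphs $G \setminus X_{>i}$ and $G \setminus B_i$ differ. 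The clean fix is to observe that the DAG edges $D$ are determined at the very first level (they are the inter-SCC edges of $G \setminus B_1$ that go forward), so $D \cap F_2 = \emptyset$ and inductively $D \cap F_{i} = \emptyset$ for all $i \ge 2$; hence for $i \ge 1$ we genuinely have $X_{>i} = F_{i+1} = B_i$, and $\phi$-expansion of $X_i$ in $G \setminus B_i$ is literally the required statement. I would write this reconciliation out carefully as it is the one place where a naive presentation can go wrong. Everything else is bookkeeping, and since the excerpt only asks for the existential version, no algorithmic or running-time claims are needed here.
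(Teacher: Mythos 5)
Your argument rests on a form of the single-level directed expander decomposition that is not actually available: you assert that the back-edge set $B$ can be taken to satisfy $B \subseteq F$, so that removing only terminal edges already makes the remaining terminals expanding. This is exactly the paper's ``unrealistic nestedness assumption'' (\cref{assumption:nested}), which the authors explicitly say is impossible to guarantee in general: a cut $S$ that is $\phi$-sparse with respect to $F$ is sparse relative to $\vol_F$, but the crossing edges $E(S,\overline{S})$ themselves can lie entirely outside $F$, and the paper states that there can be graphs and terminal sets with no sparse cut whose crossing edges are contained in the terminal set at all. The footnoted ``repeatedly find $\phi$-sparse cuts and recurse'' argument you cite is stated only for the base case $F = E$, where nestedness is vacuous, so it does not back your claim for general $F$. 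Consequently the recursion $F_{i+1} \defeq B_i$ does not yield a nested chain $F_1 \supseteq F_2 \supseteq \cdots$; already at level~2 the set $B_2$ may contain edges of $X_1$ or $D$, the sets $X_i = F_i \setminus B_i$ are then not disjoint, and the identity $X_{>i} = F_{i+1}$ that your verification paragraph relies on fails. Your ``clean fix'' ($D \cap F_i = \emptyset$) only rules out $F$-edges being misclassified as DAG edges; it does nothing about the real problem, which is $B_i$ picking up non-$F_i$ edges.

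The paper does not prove \cref{fact:hierarchy-existence-overview} via this bottom-up route: it attributes the existence to a directed generalization of \cite{PatrascuT07} and uses \cref{assumption:nested} only as a pedagogical device in the overview, while \cref{sec:nested-expander-decomposition} (the bulk of the paper) exists precisely to cope with the failure of nestedness. To repair your argument you would either need to show that an appropriately clever \emph{existential} choice of cuts can be made nested after all — which is non-obvious given the paper's remarks that for a fixed terminal set no nested sparse cut need exist — or else carry out a more elaborate bookkeeping in which later cuts are allowed to reclassify edges out of lower-level $X_j$'s and $D$ into higher levels, and then re-establish both the hierarchy invariants of \cref{dfn:overview-expander-hierarchy} and the geometric decay $|X_i| = \otil(m\phi^{i-1})$ under this repeated reshuffling. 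As written, the first step of your proof asserts a decomposition guarantee that is simply false, and everything downstream depends on it.
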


\subsection{Directed Expander Hierarchy Implies Good Weight Function}
The primary technical challenge lies in computing the expander hierarchy of \cref{fact:hierarchy-existence-overview}. But first, in this section, we will show that once we compute such a hierarchy, it implies a good weight function that we can plug into \cref{thm:push-relabel-main-informal}.

\para{Simple Expander}
Let us first consider the very simple case that the entire graph $G$ is a $\phi$-expander (for some $\phi = 1/n^{o(1)}$). In this case, we simply set $\Bw(e) = n$ for all $e \in E$. Note that $\sum_e 1/\Bw(e) = O(n)$; all that remains is to show that $\Bw$ satisfies the path-weight requirement. Let $\fopt$ be the actual maximum flow. The flow $\fopt$ itself may have long flow paths, but we will use the expansion of $G$ to \emph{shortcut} $\fopt$ while only paying a small overhead in congestion. By \Cref{fact:expander-routing-overview}, there exists a flow $\Bf$ such that $\Bf$ routes the same supply/demand as $\fopt$, $\Bf$ has congestion $\otil(1/\phi)$, and every flow path in $\Bf$ contains $\otil(1/\phi)$ edges. Scaling $\Bf$ down by a $\widetilde{O}(\phi)$-factor thus yields a $\widetilde{\Omega}(1/n^{o(1)})$-approximate flow where every flow path $P$ has weight $\Bw(P) = \otil(n/\phi) = n^{1+o(1)}$. Note that the algorithm never explicitly computes $\Bf$; rather, we simply use its existence to argue that $\Bw$ is a good weight function.

\para{DAG of Expanders}
We now consider a slightly more general case, where every SCC of $G$ is a $\phi$-expander, but there can be DAG edges between the SCCs. This corresponds to a one-level expander hierarchy $\cH = (D,X_1)$, where $X_1$ contains the edges inside SCCs of $G$, and $D$ contains the inter-component edges. We say that a topological order $\Btau$ \emph{respects} the SCCs of $G$ if it has the following properties:
\begin{itemize}
\item  For every edge $(u,v) \in D$, we have $\Btau_u < \Btau_v$.
\item For every SCC $C$ of $G$, the set $\Btau(C) \defeq \{\Btau_v: v \in C\}$ is contiguous; in other words, it contains precisely the set of numbers between  $\Btau_{\mathrm{min}}(C) \defeq \min_{v \in C}\Btau_v$ and $\Btau_{\mathrm{max}}(C) \defeq \max_{v \in C}\Btau_v$.
\end{itemize}
It is easy to see that such a respecting $\Btau$ exists. We now define $\Bw(u,v) = |\Btau_v - \Btau_u|$. Not that if $u,v$ are in the same SCC $C$, then $\Bw(u,v) \leq |C|$.
Since the weight function is defined by a topological ordering
 we have that $\sum_e 1/\Bw(e) = O(n\log(n))$. The analysis is exactly the same as for the case when $G$ is a DAG.

We now show that $\Bw$ satisfies the path-weight requirement. Let $\fopt$ be the maximum flow. As before, we start by shortcutting $\fopt$ inside each expander. Formally, for every component $C$ of $X_1$, we apply \cref{fact:expander-routing-overview} to the following flow instance: for every flow path $P$ in $\fopt$, we add one unit of supply to the first vertex in $P \cap C$ and one unit of demand to the last vertex in $P \cap C$. Let $\Bf$ be the flow resulting from shortcutting $\fopt$ inside every SCC $C$. Note that $\Bf$ incurs a congestion of $1/\phi$ and that for every flow path $P$ in $\Bf$, $|P \cap C| = \otil(1/\phi)$.

We now argue that every flow path $P$ in $\Bf$ has $\Bw(P) = \otil(n/\phi)$. First, consider the weight of $X_1 \cap P$, i.e. the intra-component edges. For any component $C$, $P \cap C$ contains $\otil(1/\phi)$ edges, each of weight at most $|C|$, so $\Bw(P \cap C) = \otil(|C|/\phi)$; summing over all components yields weight $\otil(n/\phi)$. For the inter-component edges on $P$, since the topological labels on these edges are monotonically increasing, it is easy to see that their total weight contribution is $O(n)$.

\para{Two-Level Expander Hierarchy}
The next slightly more general case is when the edges of $G$ can be partitioned into a two-level expander hierarchy $(D,X_1,X_2)$: $X_1$ contains edges inside SCCs of $G \setminus X_2$, and each of these SCCs is a $\phi$-expander; $D$ contains edges between SCCs of $G \setminus X_2$; finally, $X_2$ is expanding in $G$. %
This two-level hierarchy is far from the general case because of our assumption that $X_2$ is expanding in $G$; nonetheless, this special case will already contain all of our main ideas for proving that an expander hierarchy implies a good weight function.

The weight function is exactly the same as the previous one: we compute a topological order~$\Btau$ that respects the SCCs of $G \setminus X_2$ and we set $\Bw(u,v) = |\Btau_v - \Btau_u|$. Since $\Bw$ is still based on a topological ordering, we again get $\sum_{e \in E} 1/\Bw(e) = O(n\log(n))$. All that remains is to show that $\Bw$ satisfies the path-weight requirement. To do so, we use the following claim:

\begin{claim}
\label{claim:overview-one-level-hierarchy}
Let $\fopt$ be the optimal maximum flow. There exists a flow $\Bf$ routing the same supply/demand as $\fopt$ does such that:
\begin{enumerate}
    \item\label{item:overview-one-level-hierarchy:congestion} $\Bf$ has congestion $\otil(1/\phi)$. (Actually we get $\cong(f) = 1+\frac{1}{\log(n)}$, but $\otil(1/\phi)$ is good enough.)
    \item Every flow path in $\Bf$ contains $\otil(1/\phi)$ edges from $X_2$.
    \item\label{item:overview-one-level-hierarchy:short} Let $P$ be any flow path in $\Bf$. For every SCC $C$ of $G \setminus X_2$, we have $|P \cap C| = \otil(1/\phi)$. (Recall that the SCCs of $G \setminus X_2$ are precisely the SCCs in which $X_1$ is $\phi$-expanding.)
\end{enumerate}
\end{claim}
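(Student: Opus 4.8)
The plan is to build $\Bf$ from $\fopt$ by a two-stage shortcutting procedure, first using the expansion of $X_2$ in $G$, then using the expansion of each SCC of $G \setminus X_2$. First I would handle the ``top level'' $X_2$: since $X_2$ is $\phi$-expanding in $G$, I apply \cref{fact:expander-routing-overview} with $X_2$ as the terminal set to reroute the traffic of $\fopt$ restricted to the edges touching $X_2$. Concretely, for each flow path $P$ of $\fopt$, look at the maximal segments of $P$ that lie in $G \setminus X_2$ and the crossings of $P$ through $X_2$; set up a routing instance whose supply/demand sits on the endpoints of $X_2$-edges (so the demand is supported on terminal edges, as \cref{fact:expander-routing-overview} requires), and replace the relevant part of $\fopt$ by the flow that \cref{fact:expander-routing-overview} produces. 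This gives an intermediate flow $\Bf'$ that routes the same overall supply/demand as $\fopt$, has congestion $\otil(1/\phi)$, and in which every flow path crosses $X_2$ at most $\otil(1/\phi)$ times — that is items \ref{item:overview-one-level-hierarchy:congestion} and 2, modulo combining the $\otil(1/\phi)$ congestion from rerouting with the unit congestion of the untouched part of $\fopt$.

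Next I would shortcut inside each SCC of $G \setminus X_2$ exactly as in the ``DAG of expanders'' paragraph: for every SCC $C$ (which is a $\phi$-expander with respect to $X_1$, hence a $\phi$-expander outright in the relevant sense), and for every flow path $P$ of $\Bf'$, put one unit of supply at the first vertex of $P \cap C$ and one unit of demand at the last vertex of $P \cap C$, apply \cref{fact:expander-routing-overview} inside $C$, and splice the resulting short path in for the old segment $P \cap C$. Each such replacement segment uses $\otil(1/\phi)$ edges of $C$, giving item \ref{item:overview-one-level-hierarchy:short}, and stays inside $C$ so it does not create new crossings of $X_2$ nor disturb the work done in the first stage. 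Aggregating the congestion over all SCCs (they are vertex-disjoint, so their routings do not interfere) and adding the $\otil(1/\phi)$ from the first stage keeps the total congestion $\otil(1/\phi)$; the parenthetical sharper bound $1 + 1/\log n$ would come from invoking \cref{fact:expander-routing-overview} with the slightly better constants available when $\phi$ is taken small enough, but that refinement is inessential here.

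The main obstacle is the bookkeeping at the interface between the two stages: after the first rerouting I must be careful that the object $\Bf'$ is still a genuine flow decomposable into paths (so that the phrase ``every flow path in $\Bf'$'' makes sense and the per-path crossing/length bounds are well defined), and that the segments I cut out in the second stage are precisely the maximal sub-paths lying in a single SCC of $G \setminus X_2$ — these are well-defined because, $D$ being acyclic and respecting the SCC decomposition, a flow path visits each SCC in one contiguous stretch. A related subtlety is that the supply/demand I feed into \cref{fact:expander-routing-overview} must respect the degree constraint $\Bsource(v), \Bsink(v) \le \deg_F(v)$; this is why the instance is set up with demand on the endpoints of terminal edges and why one charges at most one unit of demand per flow path through a given vertex, using that $\cong(\fopt) = 1$ to keep the per-vertex load bounded. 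Once these structural points are nailed down, the three numbered conclusions follow by simply tracking the congestion and the edge counts through the two splicing steps, with no further genuine difficulty.
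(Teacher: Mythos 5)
Your second stage is exactly the ``na\"ive approach'' that the paper explicitly rules out, and it does not deliver the stated congestion bound. After the first stage you hold an intermediate flow $\Bf'$ with congestion $c_2 = \otil(1/\phi)$, not congestion~$1$. When you then place one unit of supply at the first vertex and one unit of demand at the last vertex of $P \cap C$ for every flow path $P$ of $\Bf'$, a single vertex $v$ can receive supply up to $c_2 \cdot \deg(v)$ (every edge into $v$ may carry $c_2$ units of $\Bf'$), which violates the $\Bsource(v) \le \deg_F(v)$ requirement of \cref{fact:expander-routing-overview}. Applying a scaled version of that fact then yields congestion $\otil(c_2/\phi) = \otil(1/\phi^2)$, not $\otil(1/\phi)$. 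Your remark that one ``charges at most one unit of demand per flow path through a given vertex, using that $\cong(\fopt) = 1$'' is the precise point of confusion: the flow being rerouted in stage two is $\Bf'$, whose congestion is already $\otil(1/\phi)$, so the per-vertex load is $\otil(1/\phi)\cdot\deg(v)$. The congestions of the two stages also compound multiplicatively, not additively as you suggest, and the parenthetical $1+1/\log n$ is not a constant-factor refinement but the product of a different technique.

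The missing idea is the ``all-to-all rerouting with less demand per edge'' device. The paper splits the flow paths into short ones (at most $2k$ edges of $X_1 \cap C$, with $k = \otil(1/\phi)$), which are left alone, and long ones. For a long path $P$ it reroutes from \emph{all} of the first $k$ vertices of $P \cap C$ to \emph{all} of the last $k$, assigning supply $1/k$ to each of them; the per-vertex demand then drops to $c_2/k$, so \cref{fact:expander-routing-overview} applies with a rerouting congestion of $(c_2/k)\cdot\otil(1/\phi)$, and the combined flow has congestion $c_2\big(1 + \otil(1/(k\phi))\big)$. Choosing $k = \polylog(n)/\phi$ gives the claimed $c_2\big(1 + 1/\log n\big)$. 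This multiplicative-$(1+1/\log n)$ control per level, rather than the $1/\phi$-per-level blow-up inherent in your version, is what makes the argument extend to $O(\log n)$-level hierarchies and is the central content of the claim.
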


Before proving this claim, let us see why it implies that $\Bw$ satisfies the path-weight requirement. Scaling $\Bf$ down by $\cong(\Bf) = \otil(1/\phi)$ we get a feasible approximate flow, as desired. Consider any flow path $P$ in $\Bf$. The path $P$ contains at most $\otil(1/\phi)$ edges from $X_2$, each with weight at most $n$, so the total weight contribution of $X_2 \cap P$ is $\otil(n/\phi)$. For edges that belong to an SCC of $G \setminus X_2$, the analysis is exactly the same as for a DAG of expanders: each component $C$ contributes $\otil(|C|/\phi)$ weight to path $P$, for a total of $\otil(n/\phi)$. Finally, consider the DAG edges in $D$. For any subpath of $P$ that is disjoint from $X_2$, all the edges in $D$ are increasing in terms of the $\Btau$ values, so the total weight of $D$-edges in such a subpath is $O(n)$. Every edge in $X_2$ can then go back to the beginning of the topological order, but since there are only $\otil(1/\phi)$ edges in $P \cap X_2$, the total contribution of $P \cap D$ is $\otil(n/\phi)$.
We now sketch a proof of \cref{claim:overview-one-level-hierarchy}.

\begin{proof}[Proof Sketch of \cref{claim:overview-one-level-hierarchy}]

Recall that we are assuming a two-level hierarchy where $X_2$ is expanding in $G$. By \cref{fact:expander-routing-overview}, we can thus reroute $\fopt$ to a new flow $\Bf_2$ such that $\Bf_2$ has congestion $c_2 = \otil(1/\phi)$ and every flow path in $\Bf_2$ contains at most $\otil(1/\phi)$ edges from $X_2$.

\para{A Na\"ive Approach}
We now need to further shortcut $\Bf_2$ so that it satisfies Property \ref{item:overview-one-level-hierarchy:short}. Consider any SCC $C$ of $G \setminus X_2$, and recall that by definition of expander hierarchy, $C$ is a $\phi$-expander with respect to $X_1$. The na\"ive way to shortcut the flow inside $C$ is to repeat the procedure above: reroute flow from the first vertex in $P \cap C$ to the last, for every flow path $P$. There is, however, a subtle but significant issue with this approach. We are rerouting the flow $\Bf_2$ and not the original flow $\fopt$. Whereas $\fopt$ has congestion $1$, the flow $\Bf_2$ already has congestion $c_2 = \otil(1/\phi)$. For this reason, there could be a vertex $v \in C$ such that for every edge $e$ entering $v$ has a flow $\Bf_2(e) = c_2$ on it. As a result, the flow instance that we used to reroute $C$ could have $\Bsource(v) \approx c_2\deg(v) \approx \frac{1}{\phi}\deg(v)$, which exceeds the maximum specified by \cref{fact:expander-routing-overview}.
We can still apply a scaled version of this fact, but the resulting edge congestion will then be $\widetilde{O}(c_2/\phi) = \widetilde{O}(1/\phi^2)$, instead of $\widetilde{O}(1/\phi)$. At first glance this might seem acceptable, since $1/\phi^2 = n^{o(1)}$. But for general graphs, the hierarchy might have as many as $\log_{1/\phi}(n)$ levels (see \cref{fact:hierarchy-existence-overview}), and the na\"ive shortcutting approach above will multiply the congestion by $1/\phi$ per level, leading to an unacceptably high congestion of $\Omega(n)$.

\para{All-to-All Rerouting With Less Demand Per Edge}
To overcome this issue, we need a more careful shortcutting procedure. Consider again the flow $\Bf_2$ with congestion $c_2 = \otil(1/\phi)$. We will show how to reroute $\Bf_2$ so that Property \ref{item:overview-one-level-hierarchy:short} of the claim is satisfied, while the congestion of the flow only increases to $c_2 \cdot (1+1/\log(n))$. Consider any SCC $C$ of $G \setminus X_2$. Let $\psetshort$ contain all flow paths $P$ of $\Bf_2$ for which $|P \cap C| \leq 2k$, where $k$ is a parameter we will later set to $\otil(1/\phi)$.\footnote{Note that $P$ might enter and leave $C$ multiple times, but we can still consider the first (or last) $k$ vertices of $P \cap C$.} Let $\psetlong$ contain all other flow paths. Note that there is no need to reroute the flow paths of $\psetshort$, as they already satisfy Property \ref{item:overview-one-level-hierarchy:short}. 

We define the following flow instance for rerouting $\psetlong$. For every $P \in \psetlong$, let $\pfirst$ contains the first $k$ vertices of $P$ and $\plast$ the last $k$. 
We reroute from all of $\pfirst$ to all of $\plast$, which will allow us to place less supply/demand on every individual vertex. Formally, we add supply $1/k$ to every vertex in $\pfirst$ and demand $1/k$ to every vertex in $\plast$. Since $\cong(\Bf_2) = c_2$, the supply/demand on every vertex is now at most $c_2/k$.
Therefore, applying (the scaled version of) \cref{fact:expander-routing-overview}, we get a flow $\Bf'$ with short flow paths and congestion $(c_2/k) \cdot \otil(1/\phi)$. To reroute $\psetlong$, we must combine flow $\Bf'$ with the flow along $\pfirst$ and $\plast$, as the new flow must use $\pfirst$ and $\plast$ to reach all the sources and sinks on these segments. We now bound the overall congestion of the resulting flow $\Bf$. Any edge $e \in C$ includes at most $\Bf_2(e) \leq c_2$ from the parts of $\Bf_2$ that have not been rerouted, which includes all the flow from $\psetshort$, as well all the flow from the early and late segments of each path in $\psetlong$.
Also, $e$ gets an additional $(c_2/k) \cdot \otil(1/\phi)$ units of flow from the rerouting.
Together, this results in $\Bf(e) \leq c_2(1 + 1/k \cdot \otil(1/\phi))$.
Setting $k$ to a large enough $\polylog(n)/\phi$ yields $\Bf(e) \leq c_2(1+1/\log(n))$. Since there are $O(\log(n))$ levels in the expander hierarchy of \cref{fact:hierarchy-existence-overview}, the congestion at the final level will still be $O(c_2) = \otil(1/\phi)$.
\end{proof}

\para{Generalizing to a Multi-Level Expander Hierarchy} Let us now consider a general graph $G$, which we know admits a multi-level expander hierarchy $\cH$ as in \cref{fact:hierarchy-existence-overview}. We can obtain a good weight function $\Bw$ using the same tools as in the simpler two-level hierarchy above. 

First, let us say a topological order $\Btau$ respects $\cH = (D,X_1, ..., X_{\eta})$ if for every $i$, the $\Btau$ labels are contiguous in every SCC of $G \setminus X_{> i}$, and for every DAG edge $(u,v)\in D$ we have $\Btau_u < \Btau_v$. It is easy to construct such a topological order by going from the top to the bottom of the hierarchy, and computing SCCs in each $G \setminus X_{>i}$. We then define our weight function as $\Bw(u,v) = |\Btau_u - \Btau_v|$. Since $\Bw$ is defined by a topological order, we again have $\sum_{e} 1/\Bw(e) = O(n\log(n))$.

To prove that $\Bw$ satisfies the path-weight requirements we prove that there exists an approximate maximum flow $\Bf$ such that for every $X_i$ and every SCC $C$ of $G \setminus X_{>i}$, the flow $\Bf$ uses $\otil(1/\phi)$ edges in $X_{i}\cap C$. It is easy to show that such a flow $\Bf$ satisfies the path-weight requirement, and we can show that such a flow $\Bf$ exists by using the careful rerouting procedure above starting at the top level, then the second-highest level, and so on.%

\begin{remark} The flow rerouting above is needed for analysis only. The algorithm never computes $\Bf$; instead, its existence is enough to prove that $\Bw$ satisfies the path-weight requirement. All the algorithm does is to find a hierarchy-respecting topological order $\Btau$ and set $\Bw(u,v) = |\Btau_v - \Btau_u|$.
\end{remark}

\subsection{Constructing the Directed Expander Hierarchy}

As far as we know, our paper is the first to define a directed expander hierarchy. We first contrast our hierarchy with existing work in \emph{undirected} graphs.

\para{Previous Work: Undirected Expander Hierarchy}
The definition of our hierarchy can be thought of as a generalization of the undirected hierarchies of \cite{PatrascuT07}. The problem, however, is that \cite{PatrascuT07} relies on a slow polynomial-time algorithm for constructing the hierarchy. It is possible to use techniques from \cite{RackeST14} to efficiently construct the hierarchy in a top-down manner, but this requires solving a max-flow problem at each level, which we cannot afford to do as we are trying to develop our own efficient combinatorial max-flow algorithm. We thus develop an entirely different bottom-up construction.

A more recent paper of~\cite{GoranciRST21} shows a different undirected expander hierarchy that admits a very efficient bottom-up construction. Their construction is based on boundary-linked expanders, which allow for low-congestion routing between the boundary edges. In \emph{undirected} graphs, we observed that we could have naturally defined a good weight function from their hierarchy. 
Unfortunately, a decomposition into boundary-linked expanders does not exist for \emph{directed graphs}. In undirected graphs, an expander decomposition has a small number (i.e., $\widetilde{O}(\phi m)$) of boundary edges, which is why boundary-linkedness is possible. In directed graphs there can be arbitrarily many boundary edges, because even if a cut $E(S,\overline{S})$ is sparse, there may still be $\Omega(m)$ edges in the other direction $E(\overline{S},S)$.

\para{Our Construction}
We now give an overview of our framework for constructing the expander hierarchy $\cH(D,X_1, ..., X_\eta)$ of \cref{fact:hierarchy-existence-overview}. We proceed in a bottom-up fashion. The first step is to compute a set of edges $X_2$ such that $|X_2| = \widetilde{O}(\phi m)$ and all SCCs of $G \setminus X_2$ are $\phi$-expanders; the edges inside these SCCs then become $X_1$. Loosely speaking, we can compute $X_2$ by repeatedly computing a $\phi$-sparse cut and recursing on both sides (more details below). To construct the next level $X_3$ of the hierarchy, we again need to repeatedly find sparse cuts, but this time they need to be sparse with respect to $X_2$. Here, however, we encounter a potential issue: we may find a sparse cut $E(S, \overline{S})$ which is not a subset of $X_2$. As a result, when we move $E(S, \overline{S})$ to $X_3$, we will end up disturbing lower levels of the hierarchy. Unfortunately, there is no way to avoid this issue; in fact, depending on the choice of $X_2$, there might not even \emph{exist} a cut that is sparse with respect to $X_2$ and whose crossing edges are contained in $X_2$. This lack of nestedness poses a huge technical challenge which we discuss later, but let us bypass it for now and make the following unrealistic assumption:%

\begin{assumption}[Unrealistic Nestedness Assumption]
Whenever we compute a cut $(S,\overline{S})$ that is $\phi$-sparse with respect to some $X_i$, we are in the lucky case where $E(S,\overline{S}) \subseteq X_i$.
\label{assumption:nested}
\end{assumption}

Given the assumption above, we can proceed to construct the whole hierarchy in a bottom-up fashion. The challenge now is to do so efficiently. %
As suggested above, finding the next edge set $X_{i+1}$ requires repeatedly finding cuts that are sparse with respect to $X_i$. In fact, the whole construction can effectively be reduced to the following subroutine:

\para{Sparse-Cut Subroutine}
Given a graph $G = (V,E)$, a set of terminal edges $X_i \subseteq E$, a set of sources $\xsupply \subseteq X_i$, and a set of sinks $\xsink \subseteq X_i$ with $|\xsink| = |\xsupply|$, the algorithm must either:
\begin{enumerate}
\item find a flow $\Bf$ of congestion $\otil(1/\phi)$, where all vertices in $\xsupply$ (resp.,~$\xsink$) have one unit of supply (resp.,~demand) and $\Bf$ routes at least $|\xsupply|/2$ units of flow, or
\item find a cut that is $\phi$-sparse with respect to $X_i$.
\end{enumerate}

If we allow the nestedness assumption above, and have an efficient algorithm for the sparse-cut subroutine, then we can apply the standard approach of combining the subroutine with the celebrated cut-matching game framework~\cite{KhandekarRV06,Louis10} to either locate a sparse cut (without being given as input the $(\xsupply, \xsink)$ pair) in the graph or certify that it is an expander.
By recursing on both sides of the sparse cut, we get an expander decomposition algorithm that computes $X_{i+1}$, and we can further ensure that the total number of invocations of the sparse-cut subroutine is $n^{o(1)}$ using ideas developed in~\cite{NanongkaiS17,Wulff-Nilsen17,NanongkaiSW17}.\footnote{Similar ideas were previously applied to \emph{directed} expander decomposition/pruning in \cite{BernsteinGS20,HuaKGW23}. We make particular use of the algorithmic framework established by \cite{HuaKGW23} later in the paper to handle the unrealistic nestedness assumption.}

\para{Sparse-Cut Subroutine: Level One}
We now describe our implementation of the sparse-cut subroutine, which uses entirely new techniques. Let us start on the bottom level, where the set of terminal edges is $X_i = E$. At this level, the subroutine can easily be done in $\widetilde{O}(m/\phi)$ time using existing techniques (see e.g.~\cite{HenzingerRW17,SaranurakW19}), which we quickly review. The algorithm is quite simple: we run regular (non-weighted) push-relabel to send flow from $\xsupply$ to $\xsink$, except that we allow edges to have capacity up to $\polylog(n)/\phi$, and we impose a maximum vertex label of $h = \otil(1/\phi)$; this artificial maximum might prevent push-relabel from finding a maximum flow. Let $\Bf$ be the flow computed by push-relabel. There are two cases to consider. If $\Bf$ sends at least $|\xsupply|/2$ flow, we are done. If not, let $G_{\Bf}$ be the remaining residual graph, and note that since $\Bf$ has small value, there must exist some $s \in \xsupply$ and some $t \in \xsink$ with $\Bell(s) = h = \otil(1/\phi)$ and $\Bell(t) = 0$. 

Rather than working directly with the labels $\Bell$, our algorithm computes a new labelling $\Bd$, where $\Bd(v) \defeq \dist_{G_{\Bf}}(v,t)$.
By the admissibility property of push-relabel, we have $\Bd(s) \geq h$. Now, for any $k$, define $V_{k} \defeq \{v \in V \mid \Bd(v) = k\}$ and $V_{\geq k} \defeq \{v \in V \mid \Bd(v) \geq k\}$. We refer to cuts $(S,\overline{S})$ of the form $S = V_{\geq k}$ as a \emph{level cut}. We will show that one of the level cuts is a sparse cut in $G_{\Bf}$. For the full proof we need to show that one of the level cuts is sparse in the original graph $G$, but the proof is essentially the same: loosely speaking, since we set edge capacities to be $\polylog(n)/\phi$, the flow $\Bf$ saturates at most $O(\phi|\xsupply|/\polylog(n))$ edges, which is so few that they have minimal effect on the sparseness of a level cut.

To see that one of the level cuts in $G_{\Bf}$ is sparse, we use a so-called ball-growing argument. Consider some level cut $S = V_{\geq k}$, and note that since $\Bd$ corresponds to distances, every edge in $G'$ leaving $V_{\geq k}$ goes to $V_{k-1}$. So if cut $S$ is non-sparse, then there are many edges from $V_{\geq k}$ to $V_{k-1}$, and in particular $\vol(V_{\geq k-1}) \geq \vol(V_{\geq k}) \cdot (1 + \phi)$. Thus, there can be at most $\log_{1 + \phi}(m) = \otil(1/\phi)$ non-sparse layers, so as long as we set $h$ large enough, we can ensure that over half the level cuts are sparse.

\para{Sparse-Cut Subroutine: Level Two}
Let us now consider the case where we have already constructed the first level of the hierarchy $\cH = (D, X_1, X_2)$. To construct the next layer, we need to solve the sparse-cut subroutine with respect to terminal edges $X_2$. This simple case will once again contain most of our main ideas for finding a sparse cut with respect to a general $X_i$.

We can no longer directly use a ball-growing argument. In the simple case above, the crux of the argument was that the edges crossing any non-sparse level cut $S = V_{\geq k}$ get added to the volume of $V_{\geq k-1}$, which guarantees that $\vol(V_{\geq k})$ increased multiplicatively as we move from $k = h$ to $k = 0$. 
The problem is that for the second level of the hierarchy, sparseness is defined with respect to $\vol_{X_2}$, but the edges crossing a non-sparse cut $S = V_{\geq k}$ might not belong to $X_2$, so $\vol_{X_2}(V_{\geq k})$ might not change at all across levels. In order to use ball-growing to argue that there exists a sparse level cut, we will need to reassign vertex levels in such a way that there exist many level cuts whose edges come primarily from $X_2$. %

The key idea is to use the weighted push-relabel algorithm, where the weight $\Bw$ will be based on the incomplete hierarchy we have already built. In particular, let $\Btau$ be a topological order that respects the SCCs of $G \setminus X_2$ and let $\Bw(u,v) = |\Btau_{u} - \Btau_{v}|$. We will now run the weighted push-relabel algorithm up to maximum label $h = \otil(n/\phi)$; by \Cref{thm:push-relabel-main-informal} the runtime is $O(m+h\sum_{e \in E}1/\Bw(e)) = O(m+hn\log(n)) = n^{2+o(1)}$. Let $\Bf$ be the flow computed by weighted push-relabel. If $\Bf$ sends at least $|\xsupply|/2$ flow, then we are done. Otherwise, we once again have vertices $s \in \xsupply$ and $t \in \xsink$ with $\Bell(s) = h = \otil(n/\phi)$ and $\Bell(t) = 0$. As before, define $\Bd(v) \defeq \dist^{\Bw}_{G_{\Bf}}(v,t)$, where $\dist^{\Bw}_{G_{\Bf}}$ is the shortest distance according to $\Bw$ in the residual graph. We know that $\Bd(s) \geq h$.

Now, for the sake of intuition, consider the simplistic case where $\Bf$ is empty, so the residual graph $G_{\Bf} = G$. As discussed above, to argue that there exists a level cut 
that is sparse with respect to $X_2$, we need there to be many level cuts whose edges come primarily from $X_2$. This might not be true under the current labelling $\Bd$ because of the presence of DAG edges, so we define a new weight function $\Bwp$, which is the same as $\Bw$ except that it sets the weight of all DAG edges to $0$. We then define labeling $\Bdp(v) \defeq  \dist^{\Bwp}_{G_{\Bf}}(v,t)$. Even though $\Bdp(s) < \Bd(s)$, we argue that $\Bdp(s) \geq \Bd(s)/2 - n = \Omega(h)$, so we still have many levels. This follows from the fact that under the original weight function $\Bw$, the DAG edges in $D$ always increase $\Btau$, so except for the initial increase from $\Btau = 0$ to $\Btau = n$, any further weight-contribution from $D$ must be balanced by edges in $X_1$ and $X_2$ that move backward in the topological ordering; as a result, $D$ can only account for around half the total weight of a path under $\Bw$, so $\Bdp(s) \gtrsim \Bd(s)/2$.

We thus have a distance labeling $\Bdp$ such that $\Bdp(s) = \Omega(h)$ and none of the level cuts contain any DAG edges (because they have weight $0$). We now argue that most of the level cuts also do not contain any edges from $X_1$.
Recall that the SCCs of $X_1$ are $\phi$-expanders. Consider any SCC $C$ of $X_1$; since for any edge $(u,v) \in C$ we have $\Bwp(u,v) = \Bw(u,v) \leq |C|$, the diameter of $C$ under $\Bwp$ is at most $\otil(|C|/\phi)$, so edges inside $C$ are present in at most $\otil(|C|/\phi)$ different level cuts. Therefore, in total there are at most $\otil(n/\phi)$ level cuts containing edges from $X_1$, and if we set $h = \otil(n/\phi)$ large enough then there will be $\Omega(h)$ level cuts that contain exclusively edges from $X_2$. We can now use a standard ball-growing argument to argue that one of these level cuts is sparse with respect to $X_2$.

Recall that we made the simplifying assumption that $G_{\Bf} = G$. In reality, weighted push-relabel might compute some initial flow $\Bf$, so $G_{\Bf} \neq G$. We now argue that we can still find a cut in the residual graph $G_{\Bf}$ that is sparse with respect to $X_2$, which as already discussed, also yields a sparse cut in $G$. We start by again setting $\Bwp$ to have weight $0$ on all edges in $D$ except the residual edges of flow $\Bf$, and we define distance function $\Bdp$ accordingly. The residual edges of flow $\Bf$ have a small contribution,\footnote{The \emph{residual edges} are those edges which we sent flow along, and are then reversed in the residual graph. Our weighted push-relabel algorithm will guarantee that each augmenting path it finds is of $\Bw$-length $O(h)$, so the contribution of these edges to the level cuts is not too much.} so we ignore them for this overview; as a result, we again have that $\Bdp(s) = \Omega(h)$ and level cuts that contain no edges in $D$.%

Dealing with the edges of $X_1$ is trickier. Consider a SCC $C$ of $X_1$. The problem is that if the flow $\Bf$ saturated some edges in $C$, then those edges are reversed in $G_{\Bf}$, so $C$ might no longer be an expander, and hence might have high diameter. The crux of our analysis is to argue that, as the value of flow $\Bf$ is relatively small, it does not impact the average expander $C$ by too much. 

To argue this, a natural idea is to apply the expander pruning argument %
(see e.g.,~\cite{SaranurakW19,BernsteinGS20,HuaKGW23}). 
In particular, if $\Bf$ saturates $\sigma$ \emph{edges} in $C$, then there exist a pruned set $P_C \subseteq C$ such that $P_C$ has small size $\otil(\sigma/\phi)$ and $C \setminus P_C$ is still a $\Omega(\phi)$-expander. Thus, since $C \setminus P_C$ has small diameter, its edges are once again present in only a minority of level cuts, so the remaining level cuts only contain edges from $X_2$ and from the pruned parts $P_C$. As long as the pruned parts are small we can argue that their impact is minimal, and thus most level cuts contain edges primarily from $X_2$. Again, standard ball-growing techniques prove that one of the remaining level cuts is sparse with respect to $X_2$. Unfortunately, the standard expander pruning technique does not give a small enough pruned set $P_C$.

\para{Technical Highlight: Path-Reversal Pruning}
The remaining challenge is in arguing that the pruned set $P_C$ is small. Let $R$ be some flow path of $\Bf$ that goes through $C$ ($R$ for reversed path). Since $\Bf$ is relatively small, the number of such flow paths is also small. The problem is that $|R|$ can contain many edges, so if we apply standard expander pruning by simply deleting all of $R$, the resulting pruned set $P_C$ will be too large. 

To overcome this challenge, we introduce a new technique---\textit{path-reversal pruning}---which we believe might find other applications. Note that $R$ is not actually deleted from the residual graph $G_{\Bf}$; instead, its edges are reversed. Reversing an entire path only changes the size of any directed cut by at most $1$, and so intuitively it should not affect expansion by too much. We are able to show that from the perspective of pruning, reversing an entire path (no matter the length) has approximately the same impact as deleting a single edge. In particular, we prove that if we reverse $\sigma$ different paths $R_1,...,R_{\sigma}$, then there exists a pruned set $P_C$ such that $C \setminus P_C$ is still an expander and the size of $P_C$ is roughly $\sigma /\phi$, rather than $ \sum_{i=1}^{\sigma}|R_i|/\phi$ given by previous pruning guarantees. The technical details end up being quite different from standard pruning.

\begin{remark}
Note that the algorithm itself never performs any pruning. All it does is: compute a flow $\Bf$ using weighted push-relabel, change the weight of the DAG edges to $0$, compute new distance labels $\Bdp$ using Dijkstra's algorithm, and then check all the level cuts until it finds a sparse one.
Pruning is used only in the analysis to argue that one of the level cuts is indeed sparse.
\end{remark}

\para{Edge Capacities} All of the analysis and expander decomposition tools generalize almost seamlessly to capacitated graphs. To make our weighted push relabel algorithm still efficient in capacitated graphs we use dynamic trees \cite{SleatorT83}, similar to what is done for a standard push relabel \cite{GoldbergT88}.

\subsection{Removing the Unrealistic Nestedness Assumption}

Until now, we have assumed \cref{assumption:nested} that when we compute a $\phi$-sparse cut $S$ with respect terminal edge set $X_i$, we always have $E(S,\overline{S}) \subseteq X_i$, i.e., the cut edges consist only of the terminal edges. 
Unfortunately, there are many counterexamples showing this assumption is impossible.
Without \cref{assumption:nested}, the following issue occurs:
once a sparse cut $S$ in $G[U]$ is found, our algorithm needs to further recurse on both sides $S$ and $U \setminus S$, yet if the cut contains non-terminal edges, then we no longer have an expander hierarchy of $G[S]$ and $G[U\setminus S]$ from lower levels that our flow algorithm needs when performing the recursions.

\ifnum\cameraready=0
In \cref{sec:nested-expander-decomposition} we address this problem.
\else
In \cite[Section 7]{BernsteinBST24} we address this problem.
\fi
In particular, instead of fixing the $i$-th level expanding edges $X_i$ once it is computed, we allow edges to be moved between different levels in the hierarchy to ensure nestedness.
Similarly, our algorithm also moves between levels and may attempt to find further sparse cuts following edge movements.
To modularize the analysis, we employ a data structure point-of-view that models these interactions between levels.
We adapt the framework of \cite{HuaKGW23} to maintain a single-level expander decomposition when edges are moving between levels.
However, unlike the analysis of \cite{HuaKGW23}, our approach is not inherently \emph{dynamic} in the sense that we do not exploit any local property of the weighted push-relabel algorithm we developed.
Instead, our focus is on arguing that the total number of updates given to these data structures is small throughout the construction of the hierarchy.
That is, in contrast to achieving a local and sublinear update time as in the dynamic graph algorithm literature and previous maximum flow algorithms, our data structure spends  $n^{2+o(1)}$ time \emph{per update}, which when combined with the analysis that there are only $n^{o(1)}$ updates results in the final running time.
\ifnum\cameraready=0
We defer a more detailed overview of our approach to \cref{subsec:overview}.
\else
We defer a more detailed overview of our approach to \cite[Section 7.1]{BernsteinBST24}.
\fi

We acknowledge that our current construction
\ifnum\cameraready=0
(spanning more than 40 pages in \cref{sec:nested-expander-decomposition})
\else
(spanning more than 40 pages in \cite[Section 7]{BernsteinBST24})
\fi
seems overly involved (unlike the otherwise relatively simple algorithm parts of our paper) and we believe that with future developments of directed expander-related techniques this can be greatly simplified.
We also emphasize that this step of avoiding non-nested cuts is the only reason why our algorithm is randomized\footnote{We also use a randomized cut-matching game from \cite{KhandekarRV06,Louis10}, but that can be easily replaced with a deterministic counterpart~\cite{BernsteinGS20}.} and has an inherent subpolynomial overhead.\footnote{Technically speaking, most current directed expander decomposition algorithms run in almost-linear instead of near-linear time. However, with the recent work of \cite{SulserP24} it seems promising that one can adopt their techniques in combination with our push-relabel algorithm to achieve a $\widetilde{O}(n^2)$ construction, at least if \emph{assuming the unrealistic nestedness assumption}.}%

\section{Preliminaries}\label{sec:prelim}

\para{General Notation}
We use $\N$ to denote the set of \emph{nonnegative} integers.
Let $[k]$ for $k \in \N$ be $\{1, \ldots, k\}$, and in particular $[0] \defeq \emptyset$.
For a collection of sets $\{S_i\}_{\ell \leq i \leq r}$ indexed by integers, let $S_{\leq j} \defeq \bigcup_{\ell \leq i \leq j}S_i$ and $S_{\geq j} \defeq \bigcup_{j \leq i \leq r}S_i$, and define $S_{<j}$ and $S_{>j}$ analogously.
We let $\Ba \leq \Bb$ for vectors $\Ba$ and $\Bb$ act entry-wise.
For a vector $\Bx \in \R^{U}$ we may write $\Bx(S) \defeq \sum_{u \in S}\Bx(u)$ for $S \subseteq U$.
We use $\Bzero$ and $\Bone$ to denote the all-zero and all-one vectors whose dimensions shall be clear from context.

We say an event happens \emph{with high probability} if it does with probability at least $1 - n^{-c}$ for an arbitrarily large (but fixed) constant $c > 0$.

\para{Graphs}
Graphs in this paper are assumed to be directed.
Unless explicitly stated to be \emph{simple}, multi-edges are allowed.
Let $G = (V, E)$ be a graph.
For disjoint subsets $A, B \subseteq V$, let $E_G(A, B) \defeq \{(u, v): u \in A, v \in B\}$.
Let $\delta_G^{+}(v) \defeq E_G(\{v\}, V \setminus \{v\})$ and $\delta_G^{-}(v) \defeq E_G(V \setminus \{v\}, v)$ be the \emph{outward} and \emph{inward} edges incident to $v$.
Let $\delta_G(v) \defeq \delta^{+}(v) \cup \delta^{-}(v)$.
When clear from context, let $\overline{S}$ for $S \subseteq V$ be $V \setminus S$.
For instance, we write $E_{G[U]}(S, \overline{S}) \defeq E_{G[U]}(S, U \setminus S)$ for $U \subseteq V$.
Let $\rev{G} = (V, \rev{E})$ be $G$ where all edges are reversed, i.e., $\rev{E} \defeq \{(v, u): (u, v) \in E\}$. The \emph{edge-vertex incidence matrix} $\BB_G \in \{-1, 0, 1\}^{E \times V}$ of $G$ is given by $\BB_G(e, u) = 1$ and $\BB_G(e, v) = -1$ for each $e = (u, v) \in E$ with all other entries set to zero.

A graph $G$ is \emph{strongly connected} if $E_G(S, \overline{S}) \neq \emptyset$ for every $\emptyset \neq S \subsetneq V$.
A \emph{strongly connected component} of $G$ is a maximal strongly connected subgraph of $G$.
Let $\SCC(G)$ denote the collection of strongly connected components of $G$.
An edge set $F \subseteq E$ is a \emph{separator} of $G$ if no edge in $F$ has both its endpoints in the same strongly connected components of $G \setminus F$.

\para{Capacitated Graphs}
We consider capacitated graphs $(G, \Bc)$ with capacities $\Bc \in \N^E$.
Unless stated otherwise, throughout this paper by standard capacity scaling
\ifnum\cameraready=0
(see \cref{appendix:capacity-scaling})
\else
(see \cite[Appendix B]{BernsteinBST24})
\fi
we assume $\Bc(e) \leq n^2$ for all $e \in E$.
For a subgraph $H \subseteq G$, we may overload notation and $(H, \Bc)$ to denote a capacitated graph with capacities $\Bc$ restricted $H$.
For $F \subseteq E$, let $\deg_{F,\Bc}^{+}(v) \defeq \sum_{e \in \delta^{+}(v) \cap F}\Bc(e)$, $\deg_{F,\Bc}^{-}(v) \defeq \sum_{e \in \delta^{-}(v) \cap F}\Bc(e)$, and $\deg_{F, \Bc}(v) \defeq \deg_{F,\Bc}^{+}(v) + \deg_{F,\Bc}^{-}(v)$ be the sum of capacities of edges in $F$ incident to $v$.
Let $\vol_{F,\Bc}(S) \defeq \sum_{v \in S}\deg_{F, \Bc}(v)$ for $S \subseteq V$.
When $G$ is clear from context, let $\deg_{\Bc}(v) \defeq \deg_{E, \Bc}(v)$ and $\vol_{\Bc}(S) \defeq \vol_{E, \Bc}(S)$.
When the graph is unit-capacitated, i.e., $\Bc = \Bone$,
we drop the subscript $\Bc$ in the above notation which recovers the standard definitions of degree and volume.
For analysis it is oftentimes simpler to work with unit-capacitated graphs.
Let $G^{\Bc}$ be $G$ where each edge $e$ is duplicated $\Bc(e)$ times.
For $F \subseteq E$, let $F^{\Bc} \subseteq E(G^{\Bc})$ be the multi-subset of $E(G^{\Bc})$ that contains precisely the duplicates of edges in $F$.
It is easy to see that the above definitions are equivalent in $(G, \Bc)$ and $G^{\Bc}$.
Let $G^{c}$ for $c \in \N$ be $G^{c\Bone}$ and $F^{c}$ be $F^{c\Bone}$.

\para{Flows}
A \emph{flow instance} $\cI$ is a tuple $\cI = (G, \Bc, \Bsource, \Bsink)$ where $G = (V, E)$ is a graph with edge capacities $\Bc \in \N^E$, $\Bsource \in \R_{\geq 0}^{V}$ is the source vector, and $\Bsink \in \R_{\geq 0}^{V}$ is the sink vector.
Without stated otherwise, we further assume $\|\Bsource\|_1 \leq \|\Bsink\|_1$, i.e., $\cI$ is a \emph{diffusion} instance.
When unspecified, we assume the graph is unit-capacitated, i.e., $\Bc = \Bone$.
Consider a vector $\Bf \in \Q_{\geq 0}^{E}$.\footnote{In general, flows in graphs can take real values on edges. However, our algorithms and analyses will always work with flows of rational values, and in particular restricting $\Bf$ to be in $\Q^E$ allows us to treat a fractional flow as an integral flow in the graph in which edges are duplicated, making our analyses cleaner.}
The \emph{absorption} of $\Bf$ is $\abs_{\Bf} \defeq \min\{-\BB_G^\top \Bf + \Bsource, \Bsink\}$, where the $\min$ operator is defined entry-wise.
The \emph{excess} of $\Bf$ is $\ex_{\Bf} \defeq -\BB_G^\top \Bf + \Bsource - \abs_{\Bf} = \max\{ -\BB_G^\top \Bf + \Bsource - \Bsink, \Bzero\}$.
The \emph{value} of $\Bf$ is $|\Bf| \defeq \abs_{\Bf}(V) = \|\Bsource\|_1 - \ex_{\Bf}(V)$.
Let $\Bfout \defeq \BB_G^\top \Bf$ and so $\Bfout(v)$ is the net flow going out of $v$.
The vector $\Bf$ is a \emph{$(\Bsource, \Bsink)$-flow}, or simply a \emph{flow}, if $\Bzero \leq \ex_{\Bf} \leq \Bsource$.
The \emph{congestion} of $\Bf$ is $\cong(\Bf) \defeq \|\Bf/\Bc\|_{\infty}$.
A flow is \emph{feasible} if $\Bf \leq \Bc$ or equivalently $\cong(\Bf) \leq 1$.
The flow $\Bf$ \emph{routes} $\cI$ (or routes the demand $(\Bsource, \Bsink)$) if $|\Bf| = \|\Bsource\|_1 = \|\Bsink\|_1$, and we say $\cI$ is \emph{routable} with congestion $\kappa$ if $\cong(\Bf) \leq \kappa$ for such a flow (or simply \emph{routable} if $\kappa \le 1$).
Two flows $\Bf_1$ and $\Bf_2$ are \emph{equivalent} if they route the same demand, i.e., $\BB_G^\top \Bf_1 = \BB_G^\top \Bf_2$.

\begin{fact}
  For any flow $\Bf$ and $S \subseteq V$ it holds that $\Bsource(S) = \abs_{\Bf}(S) + \Bfout(S) + \ex_{\Bf}(S)$.
  \label{fact:flow}
\end{fact}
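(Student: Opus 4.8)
For any flow $\Bf$ and $S \subseteq V$, $\Bsource(S) = \abs_{\Bf}(S) + \Bfout(S) + \ex_{\Bf}(S)$.

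The plan is to prove the identity entrywise as a vector equation in $\R^V$ and then restrict attention to the coordinates in $S$. The only ingredient needed is the definition of the excess vector, which was stated in two equivalent forms in the preliminaries; I will use the first form, namely $\ex_{\Bf} = \BB_G^\top \Bf + \Bsource - \abs_{\Bf}$.

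First I would rearrange this definition to isolate $\Bsource$: we have $\Bsource = \abs_{\Bf} + \ex_{\Bf} - \BB_G^\top \Bf$. Since $\Bfout$ is defined to be $-\BB_G^\top \Bf$, this reads $\Bsource = \abs_{\Bf} + \ex_{\Bf} + \Bfout$ as an identity of vectors in $\R^V$. (Note this holds for an arbitrary $\Bf \in \Q_{\ge 0}^E$; the hypothesis that $\Bf$ is a flow is not even needed for this particular identity, though it is of course consistent with it.)

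Next I would evaluate both sides of this vector identity at a vertex $v$ and sum over $v \in S$, using the convention $\Bx(S) = \sum_{v \in S}\Bx(v)$ introduced in the general notation. Linearity of this sum gives $\Bsource(S) = \abs_{\Bf}(S) + \ex_{\Bf}(S) + \Bfout(S)$, which is exactly the claimed equation (up to reordering the three summands). There is no real obstacle here: the statement is a direct consequence of unwinding the definitions of $\abs_{\Bf}$, $\ex_{\Bf}$, and $\Bfout$, so the ``hard part'' is merely making sure the entrywise identity is written down correctly before summing over $S$.
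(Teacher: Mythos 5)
Your proof is correct. The paper states this as a Fact without proof, and your argument is exactly the right one: rearranging the definition $\ex_{\Bf} = \BB_G^\top \Bf + \Bsource - \abs_{\Bf}$, substituting $\Bfout = -\BB_G^\top \Bf$, and summing the resulting vector identity over the coordinates in $S$.
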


Given a flow $\Bf$, the \emph{residual graph} $G_{\Bf}$ contains for each $e = (u, v) \in E$ a \emph{forward} edge $\overrightarrow{e} = (u,v)$ with capacity $\Bc_{\Bf}(\forward{e}) \defeq \Bc(e) - \Bf(e)$ if $\Bc_{\Bf}(\forward{e}) > 0$ and a \emph{backward} edge $\rev{e} = (v, u)$ with capacity $\Bc_{\Bf}(\backward{e}) \defeq \Bf(e)$ if $\Bc_{\Bf}(\backward{e}) > 0$.
For $F \subseteq E$, let $\forward{F} \defeq \{\forward{e}: e \in F\}$ and $\backward{F} \defeq \{\backward{e}: e \in F\}$.
Let $\Bsource_{\Bf} \defeq \ex_{\Bf}$ and $\Bsink_{\Bf} \defeq \Bsink - \abs_{\Bf}$ be the \emph{residual sources} and \emph{residual sinks}.
A source $s$ with $\Bsource(s) > 0$ is \emph{unsaturated} by $\Bf$ is $\Bsource_{\Bf}(s) > 0$; likewise, a sink $t$ with $\Bsink(t) > 0$ is \emph{unsaturated} if $\Bsink_{\Bf}(t) > 0$.
Together this defines the residual flow instance $\cI_{\Bf} = (G_{\Bf}, \Bc_{\Bf}, \Bsource_{\Bf}, \Bsink_{\Bf})$.
An \emph{augmenting path} is a path in $G_{\Bf}$ consisting of edges with positive residual capacities from an unsaturated source to an unsaturated sink.
The following standard fact justifies the use of residual graphs.

\begin{fact}
  For any feasible flow $\Bf$ of $\cI$, it holds that if $\Bf^\prime$ is a maximum flow of $\cI_{\Bf}$ then $\Bf + \Bf^\prime$ is a maximum flow of $\cI$.
  \label{fact:flow-in-residual-graph}
\end{fact}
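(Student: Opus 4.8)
The plan is to isolate two standard sub-facts about the diffusion (vector source/sink) formulation and chain them. Write ``$\Bf + \Bf'$'' for the flow $\boldsymbol{g} \in \Q_{\geq 0}^{E}$ on $G$ given by $\boldsymbol{g}(e) \defeq \Bf(e) + \Bf'(\forward{e}) - \Bf'(\backward{e})$, where an absent residual edge contributes $0$; the residual-capacity bounds $\Bf'(\forward{e}) \leq \Bc(e) - \Bf(e)$ and $\Bf'(\backward{e}) \leq \Bf(e)$ give $\Bzero \leq \boldsymbol{g} \leq \Bc$, so $\boldsymbol{g}$ is already capacity-feasible. The two sub-facts I would establish are: \textbf{(A)} (composition) for any feasible flow $\Bf$ of $\cI$ and any feasible flow $\Bf'$ of $\cI_{\Bf}$, the flow $\Bf + \Bf'$ is a feasible $(\Bsource, \Bsink)$-flow of $\cI$ with $|\Bf + \Bf'| = |\Bf| + |\Bf'|$, and the residual instances compose, $\cI_{\Bf + \Bf'} = (\cI_{\Bf})_{\Bf'}$; and \textbf{(B)} (optimality certificate) a feasible flow $\boldsymbol{g}$ of a diffusion instance $\cJ$ is maximum if and only if $\cJ_{\boldsymbol{g}}$ has no augmenting path.

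To prove \textbf{(A)} I would compute vertex by vertex. Reversing an edge flips the sign of its contribution to the net out-flow, so the net out-flow of $\Bf + \Bf'$ in $G$ equals the net out-flow of $\Bf$ in $G$ plus the net out-flow of $\Bf'$ in $G_{\Bf}$. Feeding this into \cref{fact:flow} for $\Bf$ with demand $(\Bsource, \Bsink)$ and for $\Bf'$ with demand $(\Bsource_{\Bf}, \Bsink_{\Bf}) = (\ex_{\Bf}, \Bsink - \abs_{\Bf})$, one gets $\abs_{\Bf + \Bf'} = \abs_{\Bf} + \abs_{\Bf'}$ and $\ex_{\Bf + \Bf'} = \ex_{\Bf'}$ at every vertex, the point needing care being that whenever $\ex_{\Bf'}(v) > 0$ we have $\abs_{\Bf'}(v) = \Bsink_{\Bf}(v) = \Bsink(v) - \abs_{\Bf}(v)$, so the two absorptions at $v$ sum to exactly $\Bsink(v)$. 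Hence $\Bzero \leq \ex_{\Bf + \Bf'} = \ex_{\Bf'} \leq \Bsource_{\Bf} = \ex_{\Bf} \leq \Bsource$, so $\Bf + \Bf'$ is a feasible flow of $\cI$, and $|\Bf + \Bf'| = \|\Bsource\|_1 - \ex_{\Bf'}(V) = (\|\Bsource\|_1 - \ex_{\Bf}(V)) + (\ex_{\Bf}(V) - \ex_{\Bf'}(V)) = |\Bf| + |\Bf'|$. The identity $\cI_{\Bf + \Bf'} = (\cI_{\Bf})_{\Bf'}$ then follows edge-by-edge for the capacities (residual-of-residual telescopes to forward capacity $\Bc(e) - \boldsymbol{g}(e)$ and backward capacity $\boldsymbol{g}(e)$) and from $\abs_{\Bf + \Bf'} = \abs_{\Bf} + \abs_{\Bf'}$, $\ex_{\Bf + \Bf'} = \ex_{\Bf'}$ for the residual source and sink vectors.

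For \textbf{(B)}, the ``only if'' direction is immediate --- saturating an augmenting path by its bottleneck strictly increases the value. For ``if'', I would run the textbook max-flow--min-cut argument adapted to vector sources/sinks: let $S$ be the set of vertices reachable in $\cJ_{\boldsymbol{g}}$ from the vertices with positive residual source; no augmenting path means $S$ has no vertex with positive residual sink, every edge of $\cJ$ leaving $S$ is saturated by $\boldsymbol{g}$, and no edge entering $S$ carries residual backward capacity; applying \cref{fact:flow} restricted to $S$ (all supply inside $S$ is either absorbed to its sink capacity or has left $S$ across a saturated edge) then shows $|\boldsymbol{g}|$ already equals the capacity of the cut out of $S$ plus the sink capacity inside $S$, which upper-bounds the value of every feasible flow of $\cJ$. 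With \textbf{(A)} and \textbf{(B)} in hand the statement is immediate: if $\Bf'$ is a maximum flow of $\cI_{\Bf}$, then by the easy direction of \textbf{(B)} the instance $(\cI_{\Bf})_{\Bf'}$ has no augmenting path; by \textbf{(A)} this equals $\cI_{\Bf + \Bf'}$, so $\cI_{\Bf + \Bf'}$ has no augmenting path; and since $\Bf + \Bf'$ is a feasible flow of $\cI$ (again by \textbf{(A)}), the ``if'' direction of \textbf{(B)} shows $\Bf + \Bf'$ is a maximum flow of $\cI$.

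The main obstacle is the vector-valued bookkeeping in \textbf{(A)}: correctly handling vertices that are simultaneously partial sources and partial sinks when verifying $\abs_{\Bf + \Bf'} = \abs_{\Bf} + \abs_{\Bf'}$ and $\ex_{\Bf + \Bf'} = \ex_{\Bf'}$, and checking the residual-of-residual telescoping on edges that carry both forward and backward residual capacity. Note that one does \emph{not} need every maximum flow of $\cI$ to decompose as $\Bf + \Bf'$ for some flow $\Bf'$ of $\cI_{\Bf}$ --- only that \emph{some} such flow attains the optimum, which is exactly what chaining \textbf{(A)} and \textbf{(B)} provides; this is why the naive ``cancel an arbitrary maximum flow of $\cI$ against $\Bf$'' argument can fail (the cancellation need not be a valid flow of $\cI_{\Bf}$ once excess constraints are imposed), and it is what makes the detour through the composition identity worthwhile.
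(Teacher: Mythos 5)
The paper lists this statement as a preliminary ``Fact'' and gives no proof, so there is no paper argument to compare against; I assess your proposal on its own merits. Your two-step plan --- a composition identity (A) showing that $\Bf + \Bf'$ is a feasible $(\Bsource,\Bsink)$-flow of $\cI$ with additive value and with $\cI_{\Bf+\Bf'} = (\cI_{\Bf})_{\Bf'}$, followed by the augmenting-path optimality criterion (B) --- is the right way to establish the fact in the diffusion (vector source/sink) model. The vertex-by-vertex computation of $\abs_{\Bf+\Bf'} = \abs_{\Bf} + \abs_{\Bf'}$ and $\ex_{\Bf+\Bf'} = \ex_{\Bf'}$, including the case distinction that $\ex_{\Bf'}(v) > 0$ forces $\abs_{\Bf'}(v) = \Bsink_{\Bf}(v)$ so that the two absorptions at $v$ sum to $\Bsink(v)$, is exactly right, as is the feasibility chain $\Bzero \leq \ex_{\Bf+\Bf'} = \ex_{\Bf'} \leq \Bsource_{\Bf} = \ex_{\Bf} \leq \Bsource$. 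Your closing remark is also a genuine and worthwhile observation: unlike the classical single-source single-sink setting, one cannot subtract $\Bf$ from an arbitrary maximum flow $\boldsymbol{g}$ of $\cI$ to get a flow of $\cI_{\Bf}$, because the constraint $\ex \leq \Bsource_{\Bf} = \ex_{\Bf}$ need not hold pointwise (e.g., two unit sources, one unit sink, $\Bf$ uses one source and $\boldsymbol{g}$ the other); the detour through (A) and (B) avoids this pitfall.

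One precision slip in step (B). Taking $S$ to be the set reachable in $\cJ_{\boldsymbol{g}}$ from the residual sources and supposing there is no augmenting path, applying \cref{fact:flow} to $S$ gives $\Bsource(S) = \Bsink(S) + \Bc(E_G(S,\overline{S})) + \ex_{\boldsymbol{g}}(V)$ (using $\abs_{\boldsymbol{g}}(S) = \Bsink(S)$, $\Bfout(S) = \Bc(E_G(S,\overline{S}))$, $\ex_{\boldsymbol{g}}(S) = \ex_{\boldsymbol{g}}(V)$), and therefore
\[
  |\boldsymbol{g}| = \|\Bsource\|_1 - \ex_{\boldsymbol{g}}(V) = \Bc(E_G(S,\overline{S})) + \Bsink(S) + \Bsource(\overline{S}),
\]
matching \cref{fact:maxflow-mincut}. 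Your summary ``$|\boldsymbol{g}|$ already equals the capacity of the cut out of $S$ plus the sink capacity inside $S$'' drops the $\Bsource(\overline{S})$ term, which does not vanish in general: a vertex $v$ with $\Bsource(v) > 0$ but $\ex_{\boldsymbol{g}}(v) = 0$ need not lie in $S$. Without that term the quantity is not an upper bound on all feasible flows, and the equality does not by itself certify optimality. Restoring the missing term closes the argument; otherwise the proposal is correct.
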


A flow $\Bf$ is \emph{integral} if $\Bf \in \N^E$.
Otherwise, $\Bf$ is \emph{fractional} and \emph{$\frac{1}{z}$-integral} for $z \in \N$ such that $\Bf \in (\frac{1}{z}\cdot \N)^E$.
When $\Bf$ is $\frac{1}{z}$-integral, we often equivalently view it as an integral flow in the unit-capacitated $G^{(z \cdot \Bc)}$ and decompose it into a collection of flow paths through the following standard fact.
Let $\Bf_P$ for $P$ a path in $G$ be the flow that sends one unit of flow along $P$, i.e., $\Bf(e) = 1$ for all $e \in P$.
While the capacitated perspective allows for faster algorithms, the unit-capacitated one is sometimes easier to work with for analysis, as demonstrated by, e.g., the following standard fact.

\begin{fact}
  An integral flow $\Bf$ admits a path decomposition $\cP_{\Bf} \defeq \{P_1, \ldots, P_{|\Bf|}\}$ such that $\Bf^\prime \defeq \Bf_{P_1} + \cdots + \Bf_{P_{\Bf}}$ is equivalent to $\Bf$ and satisfies $\Bf^\prime \leq \Bf$.
  \label{fact:path-decomposition}
\end{fact}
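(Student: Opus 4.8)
The plan is to derive this from the classical flow-decomposition theorem after a standard reduction to a circulation. First I would augment $G$ with a super-source $s^{*}$ and a super-sink $t^{*}$: add, for every $v \in V$, an edge $(s^{*}, v)$ carrying $\Bsource(v) - \ex_{\Bf}(v)$ units and an edge $(v, t^{*})$ carrying $\abs_{\Bf}(v)$ units, together with a single return edge $(t^{*}, s^{*})$ carrying $|\Bf|$ units. Let $\hat\Bf$ denote $\Bf$ together with these new edge values. By \Cref{fact:flow} applied to each singleton $\{v\}$ we have $\Bfout(v) = \Bsource(v) - \abs_{\Bf}(v) - \ex_{\Bf}(v)$, so the net outflow of $\hat\Bf$ at $v$ is $\Bfout(v) - (\Bsource(v) - \ex_{\Bf}(v)) + \abs_{\Bf}(v) = 0$; conservation at $s^{*}$ and $t^{*}$ holds since $\sum_v (\Bsource(v) - \ex_{\Bf}(v)) = \|\Bsource\|_1 - \ex_{\Bf}(V) = |\Bf| = \abs_{\Bf}(V) = \sum_v \abs_{\Bf}(v)$. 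Using $\Bzero \le \ex_{\Bf} \le \Bsource$ and $\abs_{\Bf} \ge \Bzero$, all new edge values are nonnegative, so $\hat\Bf$ is a nonnegative integral circulation in the augmented graph; here we use that $\Bf$, and hence $\ex_{\Bf}$ and $\abs_{\Bf}$, are integral, which holds in all our applications where $\Bsource, \Bsink \in \N^V$ (consistent with $|\Bf|$ being an integer).

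Next I would invoke the standard fact that a nonnegative integral circulation decomposes into unit flows along simple directed cycles. Partition these cycles into those that traverse the return edge $(t^{*}, s^{*})$ and those that do not. Since $(t^{*}, s^{*})$ is the unique in-edge of $s^{*}$, every simple cycle through $s^{*}$ uses it, so the return-edge cycles are exactly those meeting $s^{*}$ (equivalently $t^{*}$), and there are precisely $|\Bf|$ of them because the return edge carries $|\Bf|$ units. Deleting $s^{*}$, $t^{*}$, and the return edge from such a cycle leaves a simple directed path of $G$ from some $v$ to some $w$, which is a trivial single-vertex path exactly when the cycle was $s^{*} \to v \to t^{*} \to s^{*}$; call these paths $P_1, \ldots, P_{|\Bf|}$. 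Every cycle not using $(t^{*}, s^{*})$ avoids $s^{*}$ and $t^{*}$ entirely and is therefore a simple cycle $C_1, \ldots, C_\ell$ of $G$.

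Finally, restricting the cycle decomposition of $\hat\Bf$ to $E(G)$ gives $\Bf = \sum_{i=1}^{|\Bf|} \Bf_{P_i} + \sum_{j=1}^{\ell} \Bf_{C_j}$ with all terms nonnegative (trivial paths contribute $\Bzero$). Taking $\Bf' \defeq \Bf_{P_1} + \cdots + \Bf_{P_{|\Bf|}}$ immediately yields $\Bf' \le \Bf$, and since each $\Bf_{C_j}$ is a circulation it induces no net flow at any vertex, so $\Bf'$ and $\Bf$ route the same demand and are thus equivalent; this is the asserted decomposition with $\cP_{\Bf} = \{P_1, \ldots, P_{|\Bf|}\}$. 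The one mildly delicate point—which is why I route through $s^{*}$ and $t^{*}$ rather than extracting source-to-sink paths directly out of $\Bf$—is forcing the path count to be exactly $|\Bf|$: a direct greedy extraction can undercount because a unit of supply may be absorbed at the very vertex where it originates, and pushing the "used supply" $\Bsource(v) - \ex_{\Bf}(v)$ and the "absorbed demand" $\abs_{\Bf}(v)$ across the auxiliary edges is precisely what both pins down the count and produces the trivial paths accounting for such immediately-absorbed units. I expect this bookkeeping to be the only non-mechanical part of the argument.
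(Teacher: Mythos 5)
The paper states this as a standard fact and gives no proof, so there is no in-paper argument to compare against; what you have written is the usual textbook derivation and it is correct. Your reduction---close $\Bf$ into an integral nonnegative circulation by adding a super-source $s^{*}$, a super-sink $t^{*}$, and a return edge carrying $|\Bf|$ units, decompose the circulation into unit simple cycles, and split the cycles according to whether they traverse the return edge---is sound, and you correctly handle the one genuinely delicate point: units of supply absorbed at their originating vertex produce trivial (single-vertex) paths that must still be counted to reach exactly $|\Bf|$ paths, which is exactly what routing everything through $s^{*}$ and $t^{*}$ enforces. Your caveat that $\Bsource,\Bsink \in \N^V$ is needed so that $\ex_{\Bf}$, $\abs_{\Bf}$, and $|\Bf|$ are integral is also correct and consistent with every place the paper invokes this fact.
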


The following equivalence between maximum flow and minimum cut is standard.

\begin{fact}[Max-flow min-cut theorem]
  \label{fact:maxflow-mincut}
  For a flow instance $\cI = (G, \Bc, \Bsource, \Bsink)$ the maximum flow value is equal to
  \[
    \min_{S \subseteq V}\Bc(E_G(S, \overline{S})) + \Bsource(\overline{S}) + \Bsink(S).
  \]
\end{fact}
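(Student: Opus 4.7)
The plan is to reduce the statement to the classical (single-source, single-sink) max-flow min-cut theorem via a standard super-source/super-sink construction. Let $\cI=(G,\Bc,\Bsource,\Bsink)$ be the given instance. I would build an auxiliary graph $G^* = (V \cup \{s^*,t^*\}, E^*)$ by adding a source edge $(s^*,v)$ of capacity $\Bsource(v)$ for every $v$ with $\Bsource(v)>0$, and a sink edge $(v,t^*)$ of capacity $\Bsink(v)$ for every $v$ with $\Bsink(v)>0$; original edges keep their capacities. The classical max-flow min-cut theorem applied to $(G^*,\Bc^*,s^*,t^*)$ is all we will need from outside.

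The first key step is to show that the maximum value of a feasible flow in $\cI$ equals the maximum value of a feasible $s^*$-$t^*$ flow in $(G^*,\Bc^*)$. For the $(\ge)$ direction, given an $s^*$-$t^*$ flow $\Bg$ of value $k$, the restriction $\Bf = \Bg|_E$ satisfies $\BB_G^\top \Bf(v) = \Bg(s^*,v) - \Bg(v,t^*)$, so choosing $\abs_{\Bf}(v) = \Bg(v,t^*) \le \Bsink(v)$ and $\ex_{\Bf}(v) = \Bsource(v) - \Bg(s^*,v) \ge 0$ certifies feasibility with $|\Bf| = \sum_v \Bg(v,t^*) = k$. For the $(\le)$ direction, I would scale a feasible flow $\Bf$ in $\cI$ to an integral one in the inflated graph $G^{z\Bc}$ (for a suitable $z\in\N$), apply the path decomposition of \cref{fact:path-decomposition} to write it as a sum of paths from sources to sinks, and then prepend $s^*$ and append $t^*$ to each path; the resulting $s^*$-$t^*$ flow in $(G^*)^{z\Bc^*}$ has value $z|\Bf|$ and respects the new edge capacities because the prepended/appended edges are used at most $z\Bsource(v)$ and $z\Bsink(v)$ times respectively.

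The second key step is to biject $s^*$-$t^*$ cuts in $G^*$ with subsets $S \subseteq V$: given such a cut with $s^*$-side $A^* = \{s^*\} \cup S$, its capacity decomposes as $\Bc(E_G(S,\overline{S}))$ (original edges crossing) plus $\sum_{v\in\overline{S}}\Bsource(v) = \Bsource(\overline{S})$ ($s^*$-edges leaving $A^*$) plus $\sum_{v\in S}\Bsink(v) = \Bsink(S)$ ($t^*$-edges leaving $A^*$). This is exactly the expression in the statement, and every $S \subseteq V$ yields a valid $s^*$-$t^*$ cut, so minimizing over $s^*$-$t^*$ cuts in $G^*$ is equivalent to minimizing the claimed quantity over $S \subseteq V$. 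Combining the two steps with the classical max-flow min-cut theorem applied to $(G^*,\Bc^*,s^*,t^*)$ gives the desired equality.

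The main obstacle is purely bookkeeping: the definition of flow in $\cI$ carries the auxiliary quantities $\abs_{\Bf}$ and $\ex_{\Bf}$ rather than enforcing exact conservation, so one must be careful that the translation between a classical $s^*$-$t^*$ flow and a ``flow'' in the sense of the preliminaries preserves value and feasibility in both directions. Everything else is standard, and invoking the classical theorem at the end is justified because $(G^*,\Bc^*)$ has integer capacities (by the assumption $\Bc \in \N^E$ and the construction of $\Bc^*$) with a unique source and sink.
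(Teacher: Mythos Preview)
The paper does not prove this statement; it is recorded in the preliminaries as a standard fact without proof. Your reduction to the classical single-source single-sink max-flow min-cut theorem via a super-source/super-sink construction is correct and is the standard way to derive this multi-source/multi-sink generalization. One small simplification: for the $(\le)$ direction you do not actually need path decomposition---given a feasible $\Bf$ in $\cI$, directly setting $\Bg(s^*,v) = \Bsource(v) - \ex_{\Bf}(v)$, $\Bg(v,t^*) = \abs_{\Bf}(v)$, and $\Bg|_E = \Bf$ already yields a feasible $s^*$-$t^*$ flow of value $|\Bf|$, since the identity $\ex_{\Bf} = \BB_G^\top\Bf + \Bsource - \abs_{\Bf}$ is precisely flow conservation at $v$ in $G^*$.
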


The \emph{maximum $(s, t)$-flow} or simply the \emph{maximum flow} problem is to find a maximum $(\Bsource_{s}, \Bsink_{t})$-flow with $\Bsource_{s} \defeq \infty \cdot \Bone_s$ and $\Bsink_{t} \defeq \infty \cdot \Bone_t$.

\para{Weights and Distances}
Consider some edge weights $\Bw \in \N^E$.
Let $\dist_G^{\Bw}(s, t)$ be the shortest $(s, t)$-distance in $G$ with respect to $\Bw$.
This is also referred to as the \emph{$\Bw$-distance} between $s$ and $t$ in $G$.
Let $\dist_{G}^{\Bw}(S, T)$ for $S, T \subseteq V$ be $\min_{s \in S, t \in T}\dist_G^{\Bw}(s, t)$.
For any flow $\Bf$, we often extend $\Bw$ to assign the same weight $\Bw(e)$ to both $\forward{e}$ and $\backward{e}$ in $G_{\Bf}$ when referring to $\dist_{G_{\Bf}}^{\Bw}(s, t)$.
The \emph{weight} of a flow is $\Bw(\Bf) \defeq \sum_{e \in E}\Bw(e)\Bf(e)$.
The \emph{$\Bw$-length} of a path $P$ is $\sum_{e \in P}\Bw(e)$.
For $F \subseteq E$, the \emph{$F$-distance} and \emph{$F$-length} are defined as the $\Bw_F$-distance and $\Bw_F$-length for $\Bw_F(e) = 1$ for $e \in F$ and $\Bw_F(e)$ for $e \in E \setminus F$.

\para{Expanders}
Consider first a strongly connected capacitated graph $(G, \Bc)$ and vertex weights $\Bnu \in \R_{\geq 0}^{V}$.
A cut $\emptyset \neq S \subsetneq V$ is \emph{$\phi$-sparse} \emph{with respect to $\Bnu$ in $(G, \Bc)$} if $\min\{\Bc(E_G(S,\overline{S})), \Bc(E_G(\overline{S}, S))\} < \phi \cdot \min\{\Bnu(S), \Bnu(\overline{S})\}$.
We say that $\Bnu$ is \emph{$\phi$-expanding} in $(G, \Bc)$ if there is no $\phi$-sparse cut in $G$ with respect to $\Bnu$.
For $G$ that is not necessarily strongly connected, we say that $\Bnu$ is \emph{$\phi$-expanding} in $(G, \Bc)$ if $\Bnu$ restricted to $U \subseteq V$ is $\phi$-expanding in $(G[U], \Bc)$ for every strongly connected component $U$ of $G$.
An edge set $F \subseteq E$ is \emph{$\phi$-expanding} if $\deg_{F, \Bc}$ is $\phi$-expanding in $G$.
We may sometimes overload notation and say that $F$ is $\phi$-expanding in a subgraph $H \subseteq G$ if $\deg_{F,\Bc}$ restricted to $V(H)$ is $\phi$-expanding in $H$.
When the graph is unit-capacitated, i.e., when $\Bc = \Bone$, we may drop the vector $\Bc$ in the notation.
A \emph{$\phi$-(pure)-expander} is a $(G, \Bc)$ in which $E(G)$ is $\phi$-expanding.
For analysis of our algorithm, we often make use of the following equivalence between uncapacitated and capacitated expanders.

\begin{fact}
  \label{fact:equivalence}
  An edge set $F$ is $\phi$-expanding in $(G, \Bc)$ if and only if $F^{\Bc}$ is $\phi$-expanding in $G^{\Bc}$.
\end{fact}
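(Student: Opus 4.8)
The statement asserts that $F$ is $\phi$-expanding in $(G, \Bc)$ if and only if $F^{\Bc}$ is $\phi$-expanding in $G^{\Bc}$, where $G^{\Bc}$ is the graph obtained by replacing each edge $e$ by $\Bc(e)$ parallel unit-capacity copies, and $F^{\Bc}$ is the multiset of copies of edges in $F$. The plan is to unwind both definitions down to the level of a single strongly connected component and show the two sparsity conditions are literally the same inequality under the natural correspondence of vertex subsets.

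First I would reduce to the strongly-connected case. Recall $\Bnu = \deg_{F,\Bc}$ is $\phi$-expanding in $(G,\Bc)$ iff for every SCC $U$ of $G$, the restriction of $\deg_{F,\Bc}$ to $U$ is $\phi$-expanding in $(G[U],\Bc)$; likewise for $G^{\Bc}$. The key observation here is that $G$ and $G^{\Bc}$ have exactly the same vertex set and the same SCCs, since duplicating edges changes neither reachability relation; so it suffices to fix one SCC $U$ and prove the equivalence inside $(G[U],\Bc)$ versus $(G^{\Bc}[U], \Bone)$ — i.e.\ I may assume $G$ is strongly connected. Second, I would verify the two bookkeeping identities that make the correspondence exact: for any cut $\emptyset \neq S \subsetneq V$, we have $\Bc(E_G(S,\overline S)) = |E_{G^{\Bc}}(S,\overline S)|$ (each original edge crossing the cut contributes exactly $\Bc(e)$ unit copies crossing it in the same direction), and $\deg_{F,\Bc}(v) = \deg_{F^{\Bc}}(v)$ for every $v$, hence $\vol_{F,\Bc}(S) = \vol_{F^{\Bc}}(S)$; both are immediate from the definition of $G^{\Bc}$ and $F^{\Bc}$ by grouping the unit copies by their originating edge.

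With these identities in hand, the main step is just to transcribe the definition. A cut $S$ is $\phi$-sparse with respect to $\deg_{F,\Bc}$ in $(G,\Bc)$ iff
\[
  \min\{\Bc(E_G(S,\overline S)),\, \Bc(E_G(\overline S, S))\} \;<\; \phi \cdot \min\{\vol_{F,\Bc}(S),\, \vol_{F,\Bc}(\overline S)\},
\]
and by the two identities this is precisely the statement that $S$ is $\phi$-sparse with respect to $\deg_{F^{\Bc}}$ in $G^{\Bc}$. Since $G$ and $G^{\Bc}$ share the same vertex set, the map $S \mapsto S$ is a bijection between candidate cuts on the two sides preserving $\phi$-sparsity; therefore one side has a $\phi$-sparse cut iff the other does, which gives the equivalence of $\phi$-expansion. (The degenerate case where $\deg_{F,\Bc} \equiv \Bzero$, i.e.\ $\min\{\vol_{F,\Bc}(S),\vol_{F,\Bc}(\overline S)\}=0$ for all $S$, is handled uniformly: the right-hand side is $0$ on both sides, so no cut is $\phi$-sparse on either side.) There is essentially no obstacle here — the only thing to be careful about is making the SCC-reduction rigorous, i.e.\ checking that "strongly connected component of $G$" and "strongly connected component of $G^{\Bc}$" genuinely coincide as vertex subsets, which follows because $e$ and its copies connect the same ordered pair of endpoints.
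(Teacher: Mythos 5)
The paper states this as a ``Fact'' in the Preliminaries (\cref{sec:prelim}) without giving a proof, treating it as a direct unwinding of definitions. Your argument is correct and is exactly the natural proof: you observe that $G$ and $G^{\Bc}$ share the same vertex set and SCCs, verify the two bookkeeping identities $\Bc(E_G(S,\overline{S})) = |E_{G^{\Bc}}(S,\overline{S})|$ and $\deg_{F,\Bc}(v) = \deg_{F^{\Bc}}(v)$, and then note that under these identities the $\phi$-sparsity inequality for a cut $S$ in $(G[U],\Bc)$ with respect to $\deg_{F,\Bc}$ is literally the same inequality as for $S$ in $G^{\Bc}[U]$ with respect to $\deg_{F^{\Bc}}$. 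Nothing more is needed, and your treatment of the degenerate (zero-volume) case is a reasonable extra precaution.
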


\para{Embedding}
An \emph{embedding} $\Pi_{H \to G}$ from $(H, \Bc_H)$ to $(G, \Bc_G)$ where $V(H) \subseteq V(G)$ maps each $e = (u, v) \in E(H)$ to a $(u, v)$-path $\Pi_{H \to G}(e)$ in $G$.
The \emph{congestion} of $\Pi_{H \to G}$ is
\[ \cong(\Pi_{H \to G}) \defeq \max_{e_G \in E(G)}\frac{\sum_{e_H \in E(H): e_G \in \Pi_{H \to G}(e_H)}\Bc_H(e_H)}{\Bc_G(e_G)}. \]

\para{Cut-Matching Game}
The cut-matching game is a framework for constructing expanders from the interaction of two players: the cut player and the matching player.
Suppose we want to construct an expander over vertices $V$ starting from an initially empty graph.
The game proceeds in rounds, and in each round the cut player first computes a bisection $(A, B)$ of $V$, and then the matching player returns a (perfect) matching from $A$ to $B$, which is then added to the graph.
The goal of the cut player is to compute the bisections in such a way that after a small number of rounds, the resulting graph becomes an expander regardless of what perfect matchings the matching player returns.
\cite{Louis10} extended the randomized cut player for undirected graphs and its analysis from \cite{KhandekarRV06} to work in directed graphs.
This can be straightforwardly generalized to the capacitated case.
For vertex weights $\Bnu_A$ and $\Bnu_B$ with $\|\Bnu_A\|_1 \leq \|\Bnu_B\|_1$, a \emph{$(\Bnu_A, \Bnu_B)$-perfect (capacitated) matching} is an $(M, \Bc_M)$ such that $\deg^{+}_{M,\Bc_M}(v) = \Bnu_A(v)$ and $\deg^{-}_{M,\Bc_M}(v) \leq \Bnu_B(v)$ for all $v \in V$.

\begin{restatable}[\cite{KhandekarRV06,Louis10}]{theorem}{CutMatching}
  Given $n$ vertices $V$ and a vector $\Bnu \in \N^V$ with entries bounded by $U$, there is a randomized algorithm that computes in sequence $t_{\CMG} = O(\log^2 (nU))$ vector pairs $(\Bnu_A^{(i)}, \Bnu_B^{(i)})$ with $\Bnu_A^{(i)} + \Bnu_B^{(i)} \leq \Bnu$ and $\|\Bnu_A^{(i)}\|_1 \leq \|\Bnu_B^{(i)}\|_1$ such that if it is given $(\Bnu_A^{(i)}, \Bnu_B^{(i)})$-perfect capacitated matching $(M_i, \Bc_i)$ after it outputs each $(\Bnu_A^{(i)}, \Bnu_B^{(i)})$, then in the end it outputs a $\psi_{\CMG}$-expander $(W, \Bc_W)$ with edges $M_1 \cup \cdots \cup M_{t_{\CMG}}$ such that $\Bnu(v) \leq \deg_{W,\Bc_W}(v) \leq t_{\CMG} \cdot \Bnu(v)$ for all $v \in V$, where $\psi_{\CMG} = \Omega\left(\frac{1}{\log^2 (nU)}\right)$.
  The algorithm runs in $\widetilde{O}(n + |M_1| + \cdots + |M_{t_{\CMG}}|)$ time.
  \label{thm:directed-cut-matching-game}
\end{restatable}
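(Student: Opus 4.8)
This is the Khandekar--Rao--Vazirani cut-matching game \cite{KhandekarRV06}, in the directed variant of Louis \cite{Louis10}, lifted from unit vertex weights to an arbitrary integer weight vector $\Bnu$ with entries at most $U$. I would reduce to the uniform case by conceptually splitting each $v \in V$ into $\Bnu(v)$ copies, obtaining a uniform vertex set $\widehat V$ of size $N = \|\Bnu\|_1 \le nU$; run the uniform directed cut-matching game on $\widehat V$; and translate back. Under this correspondence a $(\Bnu_A^{(i)},\Bnu_B^{(i)})$-perfect capacitated matching $(M_i,\Bc_i)$ --- one with $\deg^{+}_{M_i,\Bc_i}(v) = \Bnu_A^{(i)}(v)$ and $\deg^{-}_{M_i,\Bc_i}(v) \le \Bnu_B^{(i)}(v)$ --- is exactly a (partial) matching on $\widehat V$ that saturates every copy assigned to the $A$-side, and by \cref{fact:equivalence} the capacitated graph $(W,\Bc_W)$ is a $\psi_{\CMG}$-expander with respect to $\Bnu$ iff its split version is a $\psi_{\CMG}$-expander. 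The uniform game produces such an expander over $\widehat V$ in $t_{\CMG} = O(\log^2 N) = O(\log^2(nU))$ rounds with $\psi_{\CMG} = \Omega(1/\log^2 N)$, matching the claim; the degree bounds $\Bnu(v) \le \deg_{W,\Bc_W}(v) \le t_{\CMG}\cdot\Bnu(v)$ fall out of the game's bookkeeping, since $v$ gains degree at most $\Bnu_A^{(i)}(v)+\Bnu_B^{(i)}(v) \le \Bnu(v)$ in round $i$, while every unit of $\Bnu(v)$ is matched at least once across the $t_{\CMG}$ rounds.

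\textbf{The cut player.} For the uniform game I would use the standard KRV cut player: keep a ``flow vector'' for each node of $\widehat V$ (a distribution over $\widehat V$, initialized to the node's indicator); in round $i$ draw a random unit direction, project all vectors onto it, and split $\widehat V$ at the mean of the projections; the matching returned by the matching player then averages matched pairs of vectors, and the total squared deviation of the vectors from their mean --- the KRV potential --- drops by a constant factor in expectation per round, so after $O(\log^2 N)$ rounds the union of matchings is a $\psi_{\CMG}$-expander. The directed case is handled by the symmetrization and potential argument of \cite{Louis10}, which loses an extra logarithmic factor and gives $\psi_{\CMG} = \Omega(1/\log^2 N)$. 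I would simply cite \cite{KhandekarRV06,Louis10} for this part.

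\textbf{Capacitated implementation and running time.} To achieve running time $\widetilde{O}(n + \sum_i |M_i|)$ the split set $\widehat V$ (of size up to $nU$) must never be materialized. Instead I would keep one flow vector per \emph{original} vertex $v$ carrying mass $\Bnu(v)$; the random projection, the \emph{weighted} sort of the $n$ projected values, and the threshold selection run in $\widetilde{O}(n)$ time, and the single vertex whose projected value equals the threshold is split fractionally. This is precisely why the cut player outputs a \emph{pair of vectors} $(\Bnu_A^{(i)},\Bnu_B^{(i)})$ with $\Bnu_A^{(i)}+\Bnu_B^{(i)} \le \Bnu$ rather than a set partition, and why only the one-sided requirement $\deg^{-}_{M_i,\Bc_i}(v) \le \Bnu_B^{(i)}(v)$ is imposed on the matching player (the surplus on the $B$-side, arising from $\|\Bnu_A^{(i)}\|_1 \le \|\Bnu_B^{(i)}\|_1$, is left unmatched). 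Folding a returned $(M_i,\Bc_i)$ into the at most $n$ flow vectors is an aggregation over its edges weighted by $\Bc_i$, costing $\widetilde{O}(n+|M_i|)$; summing over the $t_{\CMG} = \polylog(nU)$ rounds yields total time $\widetilde{O}(n+\sum_i|M_i|)$.

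\textbf{Main obstacle.} The only point requiring genuine care --- and the only content beyond \cite{KhandekarRV06,Louis10} --- is verifying that the fractional split of the threshold vertex and, dually, a matching player that under-saturates the $B$-side do not weaken the potential-reduction argument: one must check that passing to weighted projections and allowing $\Bnu_A^{(i)}+\Bnu_B^{(i)}\le\Bnu$ with $\deg^{-}_{M_i,\Bc_i}\le\Bnu_B^{(i)}$ can only \emph{help} the potential drop, so that the $O(\log^2 N)$-round count and the $\Omega(1/\log^2 N)$ expansion persist. I expect this to be routine bookkeeping on top of the existing analysis, which is why the generalization is ``straightforward''.
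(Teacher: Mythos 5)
The paper does not actually prove this theorem; it cites \cite{KhandekarRV06,Louis10} and asserts in the surrounding text that the capacitated generalization is "straightforward." Your proposal spells out the intended reduction --- conceptually split each $v$ into $\Bnu(v)$ copies, run the uniform directed game of \cite{Louis10} on $\widehat V$, and implement with one weighted flow vector per original vertex so that the round cost is $\widetilde{O}(n+|M_i|)$ rather than $\widetilde{O}(nU)$. That is exactly the route the paper gestures at, and your discussion of why the cut player returns a pair $(\Bnu_A^{(i)},\Bnu_B^{(i)})$ with $\Bnu_A^{(i)}+\Bnu_B^{(i)}\le\Bnu$ (the threshold vertex is split fractionally) and why only $\deg^-_{M_i,\Bc_i}\le\Bnu_B^{(i)}$ is required (the $B$-side may have surplus) are the right explanations for the slightly unusual interface of \cref{thm:directed-cut-matching-game}.

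The one place where your sketch is thinner than it should be is the degree lower bound $\deg_{W,\Bc_W}(v)\ge\Bnu(v)$. You assert that "every unit of $\Bnu(v)$ is matched at least once across the $t_{\CMG}$ rounds," but nothing in the game mechanics forces this: a copy of $v$ that lands on the $B$-side in every round contributes only in-edges, and those can be absent since the matching need not saturate $B$. What does save the claim is that the cut player's partition in each round is determined by a fresh random projection, so for a fixed copy the probability of never being placed on the $A$-side over $t_{\CMG}=\Theta(\log^2(nU))$ rounds is $2^{-\Theta(\log^2(nU))}$, and a union bound over the at most $nU$ copies makes the failure probability negligible; once a copy is on the $A$-side in some round, the perfect-on-$A$ requirement guarantees it a unit of out-degree. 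This deserves a sentence, since it is the only part of the degree guarantee that is probabilistic rather than deterministic, and it is the reason the theorem is stated with-high-probability (as part of the overall randomized algorithm) rather than always. With that clarification your argument is sound and matches what the paper takes as given.
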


\section{Push-Relabel Algorithm}\label{sec:push-relabel}

Suppose that $G = (V,E)$ is a directed graph, which we want to solve the maximum flow problem on.
In this section, we will also assume that we are given a \emph{weight function} $\Bw \in \N^{E}$
on the edges as additional input. This weight function will serve as a ``hint'' and will help us to find a good approximate flow more efficiently.
In an ideal world, we would want the weight function to satisfy the following properties:

\begin{itemize}
\item\label{item:weight-good-short-flow} There is some ``short'' flow $\Bf$ which is a good approximation to the optimal maximum flow. With ``short'', we mean
that the average $\Bw$-length $\frac{\Bw(\Bf)}{|\Bf|}$ is something like $\tO(n)$.
\item\label{item:weight-good-small-sum} The sum $\sum_{e \in E}\frac{1}{\Bw(e)}$ is ``small'', something like $\tO(n)$.
\end{itemize}

The goal of this section is to design a version of the \emph{push-relabel}\footnote{Also sometimes called \emph{preflow-push}, although our version will maintain proper flows and not preflows.} algorithm~\cite{GoldbergT88}, that, when the above properties are fulfilled, will find a constant-approximation to the maximum flow efficiently. Hence, given the following \cref{thm:push-relabel-main-theorem}, solving the maximum flow problem in $n^{2+o(1)}$
time reduces to efficiently finding a ``good'' weight function $\Bw$.

\begin{theorem}[Push-Relabel] \label{thm:push-relabel-main-theorem}
  Suppose we have a maximum flow instance $\cI = (G,\Bc,\Bsource,\Bsink)$ consisting of an $n$-vertex $m$-edge directed graph $G = (V, E)$, edge capacities $\Bc\in \N^E$, and integral source and sink vectors $\Bsource,\Bsink \in \mathbb{\N}^{V}$. Additionally, suppose we have a weight function $\Bw \in \N_{>0}^E$ and height parameter $h \in \N$. Then there is an algorithm---\cref{alg:push-relabel}: \emph{\alg{PushRelabel}{$G, \Bc, \Bsource, \Bsink, \Bw, h$}}---that in $\tO\left(m + n + \sum_{e \in E}\frac{h}{\Bw(e)} \right)$ time finds a feasible integral flow $\Bf$ such that

  \begin{enumerate}[(i)]
    \item\label{item:push-relabel:invariant} the $\Bw$-distance in the residual graph $G_{\Bf}$ between any unsaturated source $s$ $(\Bsource_{\Bf}(s) > 0)$ and any unsaturated sink $t$ $(\Bsink_{\Bf}(t) > 0)$ is at least $\dist^{\Bw}_{G_{\Bf}}(s,t) > 3h$,
    \item\label{item:push-relabel:short-flow} the average $\Bw$-length of the flow is $\frac{\Bw(\Bf)}{|\Bf|} \le 9h$, and
    \item\label{item:push-relabel:approximation} $\Bf$ is a $\frac{1}{6}$-approximation of $\Bf^{*}_{\Bw,h}$---the optimal (not necessarily integral) flow with average $\Bw$-length $\frac{\Bw(\Bf^{\star}_{\Bw,h})}{|\Bf^{\star}_{\Bw,h}|}\le h$.%
  \end{enumerate}
\end{theorem}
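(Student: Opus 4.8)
The plan is to realize \textsc{PushRelabel} as a weight‑guided version of the push‑relabel algorithm~\cite{GoldbergT88} that maintains a \emph{feasible integral flow} $\Bf$ (initially $\Bzero$) together with integer vertex labels $\Bell \in \N^{V}$ (initially $\Bzero$), where every label is capped at a value $h' = \Theta(h)$. Extending $\Bw$ so that both $\forward{e}$ and $\backward{e}$ carry weight $\Bw(e)$, call a residual edge $(u,v)$ of $G_{\Bf}$ \emph{admissible} if $\Bell(u) \ge \Bell(v) + \Bw(u,v)$. The main loop repeatedly picks a vertex $s$ with $\ex_{\Bf}(s) > 0$ and $\Bell(s) < h'$ and runs a DFS along admissible edges out of $s$: if it reaches an unsaturated sink $t$, it augments the bottleneck amount of flow along the discovered admissible $s$--$t$ path; if the search gets stuck at a vertex $u$ with no admissible out‑edge, it calls \textsc{Relabel}$(u)$, raising $\Bell(u)$ to the least value $\le h'$ that makes some residual out‑edge of $u$ admissible (if none does, it sets $\Bell(u)=h'$ and leaves the excess of $u$ in place). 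Since we only push along residual edges, by at most the bottleneck amount, and only along complete source‑to‑sink paths, $\Bf$ stays a feasible integral flow whose excess sits only on unsaturated sources. In capacitated graphs the current admissible path is kept in a link‑cut tree~\cite{SleatorT83} so that each augmentation and relabel costs $\tO(1)$ amortized, as in~\cite{GoldbergT88}.

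The analysis rests on the \emph{admissibility invariant}: at all times every residual edge $(u,v)$ of $G_{\Bf}$ satisfies $\Bell(u) \le \Bell(v) + \Bw(u,v)$. It holds initially since $\Bell=\Bzero$ and $\Bw>\Bzero$; a push along an admissible $(u,v)$ can only create the reverse edge $(v,u)$, for which $\Bell(v) \le \Bell(u) - \Bw(u,v) \le \Bell(u) + \Bw(u,v)$; and \textsc{Relabel}$(u)$ restores the inequality for every residual out‑edge of $u$ by definition and only relaxes it for residual in‑edges. Summing the invariant along a shortest path gives $\dist_{G_{\Bf}}^{\Bw}(u,v) \ge \Bell(u) - \Bell(v)$ for all $u,v$. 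This also bounds the running time: between two pushes along the same residual copy of an edge $e$ the opposite copy must be pushed in between, and admissibility then forces the label of an endpoint of $e$ to rise by at least $2\Bw(e)$; since labels never exceed $h'=\Theta(h)$, each edge is pushed $O(h/\Bw(e))$ times, so the total number of pushes is $O\!\left(m + \sum_{e \in E} h/\Bw(e)\right)$. Every non‑final \textsc{Relabel} at a vertex is immediately followed by a push out of that vertex, so the number of relabels is $O(n)$ plus the number of pushes, and with the link‑cut forest and a current‑arc pointer (reset on relabel) each operation is $\tO(1)$ amortized, yielding the claimed $\tO\!\left(m + n + \sum_{e \in E} h/\Bw(e)\right)$ bound.

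Granting the invariant and termination, I would derive the three conclusions in order. For~(i): on termination no vertex with positive excess has label $< h'$, so every unsaturated source has label exactly $h'$; an unsaturated sink $t$ has $\Bsink_{\Bf}(t) > 0$, hence $\abs_{\Bf}(t) < \Bsink(t)$, which forces $\ex_{\Bf}(t) = 0$ throughout, so $t$ is never relabeled and $\Bell(t) = 0$; thus $\dist_{G_{\Bf}}^{\Bw}(s,t) \ge \Bell(s) - \Bell(t) = h' > 3h$ for the appropriate choice of $h'$. For~(ii): each augmentation sends $\delta \ge 1$ units along an admissible $s$--$t$ path $P$, and telescoping admissibility along $P$ gives $\sum_{e \in P}\Bw(e) \le \Bell(s) - \Bell(t) \le h'$; since pushing along a reverse residual edge \emph{decreases} $\Bw(\Bf)$, the augmentation raises $\Bw(\Bf)$ by at most $\delta h'$ while raising $|\Bf|$ by $\delta$, so summing over all augmentations yields $\Bw(\Bf)/|\Bf| \le h' \le 9h$. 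For~(iii): I claim it follows \emph{formally} from (i) and (ii). Write $\Bf^{*} := \Bf^{*}_{\Bw,h}$ and assume $|\Bf^{*}| > |\Bf|$ (otherwise the bound is trivial). Read in the residual graph $G_{\Bf}$ (a negative coordinate on $e$ meaning flow along $\backward{e}$), the difference $\Bf^{*} - \Bf$ is a feasible flow of $\cI_{\Bf}$ of value exactly $|\Bf^{*}| - |\Bf|$; decompose it (passing to $G^{z\Bc}$ if $\Bf^{*}$ is fractional) into residual flow paths, each from an unsaturated source to an unsaturated sink, plus cycles. By~(i) every such path has $\Bw$‑length $> 3h$, so the $\Bw$‑weight of $\Bf^{*}-\Bf$ as a residual flow is at least $3h\,(|\Bf^{*}| - |\Bf|)$; on the other hand this weight equals $\sum_{e}\Bw(e)\,|\Bf^{*}(e) - \Bf(e)| \le \Bw(\Bf^{*}) + \Bw(\Bf) \le h\,|\Bf^{*}| + 9h\,|\Bf|$, using that $\Bf^{*}_{\Bw,h}$ has average $\Bw$‑length at most $h$ together with~(ii). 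Combining gives $3(|\Bf^{*}| - |\Bf|) \le |\Bf^{*}| + 9|\Bf|$, that is, $|\Bf^{*}| \le 6|\Bf|$, which is~(iii).

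I expect the main obstacle to be the data‑structural part of the running‑time argument: certifying that scanning for admissible arcs and maintaining the link‑cut forest stay within the $\tO\!\left(m+n+\sum_e h/\Bw(e)\right)$ budget even when a high‑degree vertex is relabeled many times, which is the only place a careful amortization (rather than the clean combinatorial invariants above) is needed, and also pinning down the exact constant $h'=\Theta(h)$ so that the three stated bounds $3h$, $9h$, $\tfrac16$ come out simultaneously (including how far a single \textsc{Relabel} may overshoot the cap). On the correctness side the one genuinely clever step is the residual‑flow argument for~(iii); there I would be most careful to confirm that comparing $\Bf$ \emph{directly} with $\Bf^{*}$ (rather than with a length‑shortcutted variant) really does sidestep the ``long residual path'' issue that the overview flags in the approximate regime.
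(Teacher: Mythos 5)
Your overall structure mirrors the paper's: an admissibility invariant (the paper's (inv1)--(inv3), with different constants because the paper uses a lazier admissibility check), part~(i) from the invariant at termination, part~(ii) by telescoping admissibility along admissible augmenting paths, and part~(iii) by comparing $\Bf$ with $\Bf^{*}_{\Bw,h}$ in the residual graph --- which is essentially the paper's Lemma~4.2 rephrased as a direct inequality rather than a contradiction/averaging argument. Two issues remain, one of which you flagged yourself.

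\textbf{Running time (the gap you flagged is fatal as stated).} Your relabel rule --- raise $\Bell(u)$ to the least value making some residual out-edge admissible --- requires computing $\min_{(u,v)}\bigl(\Bell(v)+\Bw(u,v)\bigr)$ over residual out-edges, which costs $\Theta(\deg u)$ per relabel, and the current-arc pointer does not rescue you because it must reset after every relabel of $u$. Since $u$ can be relabeled $\Theta\bigl(\sum_{e\ni u} h/\Bw(e)\bigr)$ times, the total cost at $u$ is $\deg(u)\cdot\sum_{e\ni u}h/\Bw(e)$, over-budget by a factor of $\deg(u)$: for $u$ of degree $d$ with unit weights this is $d^2h$ against a budget of $dh$. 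The paper's fix is a divisibility discretization that you do not have: $\Bell(u)$ is raised by $1$ at a time, an edge $e\ni u$ is inspected for (in)admissibility only when $\Bw(e)\mid\Bell(u)$ (found via a precomputed dictionary), the admissibility threshold is loosened to $\Bell(u)-\Bell(v)\ge 2\Bw(e)$ to absorb the resulting rounding, and the implementation coalesces consecutive $+1$'s by jumping straight to the next multiple of any incident weight. This is what guarantees each edge $e$ is touched $O(h/\Bw(e))$ times total, and it is also the source of the paper's exact constants $9h$, $3h$, $\tfrac16$ (the invariant becomes $\Bw(e) < \Bell(u)-\Bell(v) < 3\Bw(e)$ rather than your exact $\Bell(u)\le\Bell(v)+\Bw(e)$). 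Without a discretization of this kind, the $\tO(m+n+\sum_e h/\Bw(e))$ running-time claim does not follow from what you wrote; it is the one genuinely new idea in the theorem and cannot be deferred as ``a data-structural detail.''

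\textbf{Part (iii): the residual flow is not feasible in $G_{\Bf}$.} You assert that $\Bf^{*}-\Bf$, read in $G_{\Bf}$, ``is a feasible flow of $\cI_{\Bf}$'' whose decomposition consists of paths from unsaturated sources to unsaturated sinks plus cycles. This fails when $\ex_{\Bf^{*}}(v)>\ex_{\Bf}(v)$ for some $v$: such a $v$ has net inflow under $\Bf^{*}-\Bf$ and will appear as a path endpoint even though it is not an unsaturated sink. The paper works around exactly this (it says so in a footnote) by passing to the extended graph $G'$ with a super-source $s$ and super-sink $t$ and zero-weight super-edges; there $\Bf^{*}-\Bf$ \emph{is} a proper $(s,t)$-flow of $G'_{\Bf}$, the $(s,t)$-paths project to unsaturated-source-to-unsaturated-sink paths in $G_{\Bf}$ so (i) applies, and the remaining cycles (including those through $s$ or $t$) merely add nonnegative weight. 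Your final inequality $3(|\Bf^{*}|-|\Bf|)\le|\Bf^{*}|+9|\Bf|$ is then correct, but the decomposition step needs the super-source/sink detour to be valid.
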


\subsection{Push-Relabel Finds an Approximate Short Flow}
Before proving \cref{thm:push-relabel-main-theorem} fully, we show how \labelcref{item:push-relabel:approximation} is implied by \labelcref{item:push-relabel:invariant,item:push-relabel:short-flow}.

\begin{lemma}
  Let $\Bf^\star$ be a (possibly fractional) feasible $(\Bsource,\Bsink)$-flow where $\Bw(\Bf^{\star}) \le |\Bf^{\star}|\cdot h$, and let $\Bf$ be a (possibly fractional) feasible $(\Bsource,\Bsink)$-flow which satisfies \labelcref{item:push-relabel:invariant,item:push-relabel:short-flow} of \cref{thm:push-relabel-main-theorem};
  then $|\Bf| \ge \frac{1}{6}|\Bf^{\star}|$.
  \label{lemma:push-relabel:approximation}
\end{lemma}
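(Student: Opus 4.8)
The plan is to argue by a flow-decomposition/rerouting argument on the residual graph. Let $\Bf$ be the flow satisfying \labelcref{item:push-relabel:invariant,item:push-relabel:short-flow}, and suppose toward a contradiction (or rather toward the bound) that $|\Bf| < \frac{1}{6}|\Bf^\star|$. Since $\Bf^\star$ routes $|\Bf^\star|$ units of flow while $\Bf$ routes fewer, consider $\Bf^\star - \Bf$ viewed inside the residual graph $G_{\Bf}$: more carefully, I would take the difference of the two flows, cancel along cycles, and decompose the remaining part into paths. Because $\Bf^\star$ has value $|\Bf^\star|$ and $\Bf$ has value $|\Bf|$, the decomposition must contain at least $|\Bf^\star| - |\Bf|$ units worth of augmenting paths in $G_{\Bf}$ going from unsaturated sources of $\Bf$ to unsaturated sinks of $\Bf$ (the sources/sinks where $\Bf$ did not absorb as much demand as $\Bf^\star$ did).

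The key point is then to bound the $\Bw$-weight of this residual flow. On one hand, each augmenting path $P$ in $G_{\Bf}$ between an unsaturated source and an unsaturated sink has $\Bw$-length at least $3h$ by invariant \labelcref{item:push-relabel:invariant}, so the total $\Bw$-weight carried by these augmenting paths is at least $3h \cdot (|\Bf^\star| - |\Bf|)$. On the other hand, every edge of $G_{\Bf}$ is either a forward copy of an edge of $G$ or a backward copy, and I would bound the $\Bw$-weight of $\Bf^\star - \Bf$ routed in $G_{\Bf}$ by the $\Bw$-weight of $\Bf^\star$ plus that of $\Bf$ (the backward residual edges of $\Bf$ can only be used up to the amount of flow $\Bf$ placed there, and each unit of such cancellation saves rather than costs weight, so crudely $\Bw$ of the residual rerouting is at most $\Bw(\Bf^\star) + \Bw(\Bf) \le h|\Bf^\star| + 9h|\Bf|$, using the hypothesis on $\Bf^\star$ and property \labelcref{item:push-relabel:short-flow} for $\Bf$). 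Combining the two bounds gives $3h(|\Bf^\star| - |\Bf|) \le h|\Bf^\star| + 9h|\Bf|$, i.e. $3|\Bf^\star| - 3|\Bf| \le |\Bf^\star| + 9|\Bf|$, which rearranges to $2|\Bf^\star| \le 12|\Bf|$, that is $|\Bf| \ge \tfrac16 |\Bf^\star|$.

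The step I expect to be the main obstacle — and the one requiring care rather than the slogan above — is making the residual decomposition and the weight accounting rigorous when flows are fractional and edges can be multi-edges. Concretely, I would pass to the unit-capacity graph $G^{(z\Bc)}$ where both $\Bf$ and $\Bf^\star$ become integral (choosing $z$ to clear denominators), decompose $\Bf^\star$ and $\Bf$ into integral unit flow paths via \cref{fact:path-decomposition}, and then track, for each edge, how much of $\Bf^\star$'s flow along a forward copy cancels against $\Bf$'s flow along the corresponding backward residual copy. The cleanest bookkeeping is: $\Bf^\star$ routed in $G_\Bf$ decomposes into a flow of value $|\Bf^\star| - |\Bf|$ of residual augmenting paths plus residual circulations, each edge of each such object is either an original edge carrying $\le$ its share of $\Bf^\star$ or a reversal of an edge carrying $\le$ its share of $\Bf$; summing $\Bw$ over all of them is therefore at most $\Bw(\Bf^\star) + \Bw(\Bf)$, while circulations only contribute nonnegatively so can be dropped from the lower-bound side. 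One must also double-check the direction of the inequality in \labelcref{item:push-relabel:invariant} ($>3h$ versus $\ge 3h$) and confirm that the unsaturated sources/sinks appearing in the residual decomposition are exactly those to which invariant \labelcref{item:push-relabel:invariant} applies; both are routine once the decomposition is set up. I would also handle the edge case $|\Bf| \ge |\Bf^\star|$ trivially, so that the contradiction argument only needs the regime $|\Bf| < |\Bf^\star|$ where residual augmenting paths genuinely exist.
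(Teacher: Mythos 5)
Your proposal is at its core the same argument as the paper's: consider the difference $\Bf^\star - \Bf$ as a flow in the residual graph, lower-bound its $\Bw$-weight using property \labelcref{item:push-relabel:invariant} and upper-bound it by $\Bw(\Bf^\star)+\Bw(\Bf) \le h|\Bf^\star|+9h|\Bf|$ using property \labelcref{item:push-relabel:short-flow} and the hypothesis on $\Bf^\star$, then solve the resulting inequality. Your arithmetic at the end is also correct. The place where you underestimate the difficulty, however, is the assertion that the path decomposition of $\Bf^\star-\Bf$ in $G_{\Bf}$ ``must contain at least $|\Bf^\star|-|\Bf|$ units worth of augmenting paths \dots from unsaturated sources of $\Bf$ to unsaturated sinks of $\Bf$,'' which you treat as immediate and then label the ``routine'' step as fractionality/multi-edge bookkeeping. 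That assertion is actually the crux: the net divergence of $\Bf^\star-\Bf$ at a vertex $v$ equals $(\ex_{\Bf}(v)-\ex_{\Bf^\star}(v))+(\abs_{\Bf}(v)-\abs_{\Bf^\star}(v))$, so the decomposition can contain paths that originate at sinks over-absorbed by $\Bf$ relative to $\Bf^\star$, or terminate at sources where $\Bf^\star$ left more excess than $\Bf$ did. Such paths do not connect an unsaturated source to an unsaturated sink, so property \labelcref{item:push-relabel:invariant} says nothing about them, and showing that the ``good'' paths still carry $\ge |\Bf^\star|-|\Bf|$ units requires a genuine (if short) bookkeeping argument that your plan omits. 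The fractionality and multi-edge issues you flag are, by contrast, entirely routine.

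The paper sidesteps this exactly by first extending $G$ to $G'$ with a super-source $s$, a super-sink $t$, and zero-weight edges $(s,v)$, $(v,t)$ with capacities $\Bsource(v)$, $\Bsink(v)$, and extending $\Bf$ and $\Bf^\star$ accordingly. In $G'_{\Bf}$ the difference $\Bf' = \Bf^\star-\Bf$ is then a bona fide $(s,t)$-flow (all excess is concentrated at $s$; the footnote in the paper about $\ex_{\Bf^\star}(v) < \ex_{\Bf}(v)$ is precisely the issue above), so every path in any decomposition is a simple $(s,t)$-path whose first and last nonzero-weight vertices are an unsaturated source and an unsaturated sink. With that cleanup the paper does not even need to decompose explicitly: it computes $\Bw(\Bf')/|\Bf'| < 3h$ and uses an averaging argument to extract a single short $(s,t)$-path, contradicting \labelcref{item:push-relabel:invariant}. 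I would recommend either adopting the super-source/super-sink construction outright (in which case your plan collapses to the paper's proof), or else carrying out the flow-conservation accounting carefully to justify the key claim; as currently written it is stated but not established.
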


\begin{proof}
  We extend the graph $G$ to $G'$ by adding a super-source $s$ and super-sink $t$, and adding edges $(s,v)$ and $(v,t)$ with capacities $\Bc(s,v) = \Bsource(v)$ respectively $\Bc(v,t) = \Bsink(v)$ and weights $\Bw(s,v) = \Bw(v,t) = 0$. This lets us now consider the $(s,t)$-flow problem with $\Bsource' = \infty \cdot \Bone_{s}$ and $\Bsink' = \infty \cdot \Bone_{t}$. Similarly, the flow $\Bf$ and $\Bf^{\star}$ can be extended to the graph $G'$, by setting
  $\Bf(s,v) = \Bsource(v)-\ex_{\Bf}(v)$
  and $\Bf(v,t) = \abs_{\Bf}(v)$ and similarly for $\Bf^{\star}$.
  
  Assume for contradiction that $|\Bf| < \frac{1}{6}|\Bf^{\star}|$.
  Consider the flow $\Bf^{\prime}$ in the residual graph $G'_{\Bf}$ where we first send $\Bf$ backward, making the residual graph equal to $G'$, and then send $\Bf^{\star}$ forwards (i.e., $\Bf^{\prime} = \Bf^{\star}-\Bf$).
  We note that $\Bf'$ is a feasible flow in $G'_{\Bf}$, since
  $\ex_{\Bf'}(s) = \infty$ and $\ex_{\Bf'} (v) =0$ for all $v\neq s$, so we  have $\Bzero\le \ex_{\Bf'}\le \Bsource'_{\Bf} = \infty\cdot \Bone_{s}$.\footnote{In the original graph $G$, the flow $\Bf'$ would not necessarily be feasible in $G_{\Bf}$ as $\ex_{\Bf^{\star}}(v)$ could be less than $\ex_{\Bf}(v)$ for some vertex~$v$. This is the reason why we work in $G'$ instead.}
  We know that $|\Bf^{\prime}| = |\Bf^{\star}| - |\Bf| > \frac{5}{6} \Bf^{\star}$.
  We also know that, by definition, $\Bw(\Bf^{\prime}) \le \Bw(\Bf^{\star}) + \Bw(\Bf)$ with $\Bw(\Bf) \leq 9h|\Bf|$, by \labelcref{item:push-relabel:short-flow}. 
  This gives
  \[
    \frac{\Bw(\Bf^{\prime})}{|\Bf^{\prime}|} \leq
    \frac{|\Bf^{\star}| \cdot h + |\Bf^{}| \cdot 9h}{\frac{5}{6}|\Bf^{\star}|}
    <\frac{|\Bf^{\star}| \cdot h + |\Bf^{\star}| \cdot \frac{3}{2}h}{\frac{5}{6}|\Bf^{\star}|}
    = 3h
  \]
  meaning that, by an averaging argument, there must exist an $(s,t)$-path in the residual graph $G'_{\Bf}$ (and thus also a path in $G_{\Bf}$ between an unsaturated source and unsaturated sink) of length less than $3h$. However, this contradicts \labelcref{item:push-relabel:invariant}.
\end{proof}

\begin{remark}
  It is worth noting that the reference flow $\Bf^{\star}$ in \cref{lemma:push-relabel:approximation} needs not be integral.
  Nevertheless, this does not contradict the large integrality gap or the hardness-of-approximation results of the ``bounded-length flow polytope'' (see, e.g., \cite{GuruswamiKRSY03,BaierEHKKPSS10}), since the flow $\Bf$ we find will have length slightly larger than $h$ (i.e., the $9h$ term in \cref{thm:push-relabel-main-theorem}\labelcref{item:push-relabel:short-flow}).
\end{remark}

An immediate corollary of the \cref{lemma:push-relabel:approximation} is that the existence of short, possibly fractional flow implies the existence of short integral flow.

\begin{corollary}
  If there is a (possibly fractional) feasible $(\Bsource, \Bsink)$-flow $\Bf^{\star}$ where $\Bw(\Bf^{\star}) \leq |\Bf^{\star}| \cdot h$, then there is an integral feasible $(\Bsource, \Bsink)$-flow $\Bf$ with $|\Bf| \geq \frac{1}{6}|\Bf^{\star}|$ with $\Bw(\Bf) \leq |\Bf| \cdot O(h)$.
  \label{cor:fractional-implies-integral}
\end{corollary}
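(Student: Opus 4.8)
The plan is to invoke the weighted push-relabel algorithm of \cref{thm:push-relabel-main-theorem} directly on the same instance $\cI = (G, \Bc, \Bsource, \Bsink)$ with the same weight function $\Bw$ and height parameter $h$. Running \alg{PushRelabel}{$G, \Bc, \Bsource, \Bsink, \Bw, h$} produces a feasible integral $(\Bsource,\Bsink)$-flow $\Bf$ satisfying properties \labelcref{item:push-relabel:invariant,item:push-relabel:short-flow,item:push-relabel:approximation}. Property \labelcref{item:push-relabel:short-flow} immediately gives $\Bw(\Bf) \le 9h \cdot |\Bf| = |\Bf| \cdot O(h)$, which is the second bound claimed by the corollary. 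So all that remains is to verify $|\Bf| \ge \tfrac{1}{6}|\Bf^{\star}|$.

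For the value bound, the key (and only) observation is that the hypothesis $\Bw(\Bf^{\star}) \le |\Bf^{\star}| \cdot h$ is exactly the statement that $\Bf^{\star}$ is a feasible $(\Bsource,\Bsink)$-flow whose average $\Bw$-length $\frac{\Bw(\Bf^{\star})}{|\Bf^{\star}|}$ is at most $h$. Hence $\Bf^{\star}$ is a valid competitor in the maximization defining $\Bf^{\star}_{\Bw,h}$, so $|\Bf^{\star}_{\Bw,h}| \ge |\Bf^{\star}|$. Combining this with property \labelcref{item:push-relabel:approximation}, which states $|\Bf| \ge \tfrac{1}{6}|\Bf^{\star}_{\Bw,h}|$, yields $|\Bf| \ge \tfrac{1}{6}|\Bf^{\star}|$ as desired. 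Alternatively — and this is the route suggested by calling it a corollary of \cref{lemma:push-relabel:approximation} — one can skip property \labelcref{item:push-relabel:approximation} entirely: the flow $\Bf$ returned by \cref{thm:push-relabel-main-theorem} is feasible, integral, and satisfies \labelcref{item:push-relabel:invariant,item:push-relabel:short-flow}, while $\Bf^{\star}$ is feasible with $\Bw(\Bf^{\star}) \le |\Bf^{\star}| \cdot h$; these are precisely the hypotheses of \cref{lemma:push-relabel:approximation}, which directly gives $|\Bf| \ge \tfrac{1}{6}|\Bf^{\star}|$.

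There is essentially no obstacle: all the work has already been done in \cref{thm:push-relabel-main-theorem} (or \cref{lemma:push-relabel:approximation}), and the corollary is just the repackaging of its guarantees into a purely existential ``fractional $\Rightarrow$ integral'' statement. In particular, we do not need the running-time bound of \cref{thm:push-relabel-main-theorem} here, only the existence of a flow with the stated properties, which is why no efficiency claim appears in the corollary. The one trivial point to spell out is the equivalence between the hypothesis ``$\Bw(\Bf^{\star}) \le |\Bf^{\star}| \cdot h$'' and ``$\Bf^{\star}$ has average $\Bw$-length at most $h$'', which is what lets us treat $\Bf^{\star}$ as a lower bound on $|\Bf^{\star}_{\Bw,h}|$.
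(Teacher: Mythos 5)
Your proof is correct, but it takes a slightly different route from the paper. The paper's proof does not invoke the \textsc{PushRelabel} algorithm (\cref{thm:push-relabel-main-theorem}) at all; instead it constructs the integral flow $\Bf$ directly and existentially, by repeatedly finding augmenting paths of $\Bw$-weight at most $3h$ in the residual graph and pushing a unit of flow along each until no such path remains. By construction such an $\Bf$ satisfies \cref{thm:push-relabel-main-theorem}\labelcref{item:push-relabel:invariant} (no unsaturated source-to-sink path of weight $\le 3h$ is left) and \labelcref{item:push-relabel:short-flow} (each augmenting path contributes at most $3h$ weight per unit, so $\Bw(\Bf) \le 3h\cdot|\Bf|$), after which \cref{lemma:push-relabel:approximation} gives $|\Bf| \ge \frac{1}{6}|\Bf^\star|$ exactly as in your second variant. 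The advantage of the paper's route is that it bypasses the algorithm entirely: the corollary is purely existential, so constructing the witness $\Bf$ greedily avoids any need to worry about whether the ambient instance satisfies the input hypotheses of \cref{thm:push-relabel-main-theorem} (integral $\Bsource,\Bsink \in \N^V$, $\Bw \in \N_{>0}^E$, $h \in \N$) and also makes no reference to the algorithm's running-time analysis. Your route is fine in the contexts the paper uses the corollary (where those hypotheses do hold), and your observation that $\Bf^\star$ being feasible with average $\Bw$-length $\le h$ lets you apply either \labelcref{item:push-relabel:approximation} or \cref{lemma:push-relabel:approximation} directly is also correct; it is just a heavier hammer than needed.
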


\begin{proof}
  Let $\Bf$ be the integral flow obtained by repeatedly finding augmenting paths $P$ of weight $\Bw(P) \leq 3h$ until no such paths exist.
  The flow $\Bf$ clearly satisfies \cref{thm:push-relabel-main-theorem}\labelcref{item:push-relabel:invariant,item:push-relabel:short-flow}.
  The corollary follows from \cref{lemma:push-relabel:approximation}.
\end{proof}

\subsection{Implementation}
\label{sec:push-relabel-implementation}

We now present the pseudocode in \cref{alg:push-relabel}. Our implementation differs from a textbook push-relabel algorithm in the following ways:

\begin{itemize}
  \item We restrict our algorithm to $9h$ levels; vertices $v$ with level $\Bell(v) > 9h$ are marked as \emph{dead}.
  \item Our algorithm allows for edge-length $\Bw(e)$ for each edge. While a textbook push-relabel algorithm can send flow on \emph{admissible} edges $(u,v)$ where the level $\Bell(u) = \Bell(v)+1$, we instead call an edge \emph{admissible} when $\Bell(u) \approx \Bell(v) + \Bw(e)$. This is useful to obtain the faster running time, since, as we will see, an edge $e$ only changes between being admissible/inadmissible $O\left(\frac{h}{\Bw(e)}\right)$ times.
  \item Our algorithm employs an aggressive relabeling rule: as long as some vertex (which has no unsaturated sink capacity) does not have any \emph{admissible} outgoing edge, we relabel it (even if it does not have any excess flow).
  \item The above point means that whenever we find an augmenting flow path, we can push a unit of flow all the way from a source to a sink directly, and that the length of this flow path is only $O(h)$ (allowing us to argue \labelcref{item:push-relabel:short-flow} and hence also \labelcref{item:push-relabel:approximation}). Another consequence is that the flow $\Bf$ maintained by the algorithm will always be a proper flow, and not a \emph{preflow}, as is usual in push-relabel implementations.
\end{itemize}

\begin{remark}
\label{remark:push-prioritized}
 Even on a directed path of length $n$, our push-relabel algorithm would require $\Omega(n^{2})$ time. This is unlike most variants of push-relabel that usually prioritize pushing instead of relabeling, which would take $O(n)$ time on a path.
 While our relabel-prioritized variant can compute an approximate maximum flow with small average length, which is crucial for us, it also becomes our bottleneck.
This raises the exciting question of whether a weighted version of the push-prioritized push-relabel algorithm can be devised so that, given a good weight function (or something similar), it runs in $m^{1+o(1)}$ time and computes a $(1/n^{o(1)})$-approximate maximum flow. In a graph with unit vertex capacities, a weight function induced by the DAG of the maximum flow will guide a push-prioritized algorithm to run in linear time, but as of now it is unclear how to identify such a ``good'' weight function without first computing the maximum flow.
\end{remark}

Recall that the push relabel algorithm runs in the residual graph $G_{\Bf} = (V, \forward{E}\cup \backward{E})$ which is defined as follows: for each edge $e = (u,v)\in E$, we have a forward edge $\forward{e} = (u,v) \in \forward{E}$ and a backward edge $\backward{e} = (v,u)\in \backward{E}$ with residual capacities $\Bc_{\Bf}(\forward{e}) = \Bc(e)-\Bf(e)$ and $\Bc_{\Bf}(\backward{e}) = \Bf(e)$. We will often use $e$ and $\forward{e}$ interchangeably (e.g., the flow $\Bf$ will be defined on $\forward{E}$), and often when referring to $G_{\Bf}$ as a graph we will ignore all edges with residual capacity $0$ (e.g., when talking about distances in $G_{\Bf}$).

\begin{algorithm}[!ht]
  \caption{\alg{PushRelabel}{$G,\Bc,\Bsource, \Bsink,\Bw,h$}} \label{alg:push-relabel}
  
  \SetEndCharOfAlgoLine{}
  \SetKwInput{KwData}{Input}
  \SetKwInput{KwResult}{Output}
  \SetKwProg{KwProc}{function}{}{}
  \SetKwFunction{Relabel}{Relabel}
   \SetKwFor{Loop}{main loop}{}{}

  Initialize $\Bf$ as the empty flow.\;
  Let $\Bell(v) = 0$ for all $v\in V$. \tcp{levels}
  Mark each edge $e \in \forward{E}\cup\backward{E}$ as \emph{inadmissible} and all vertices as \emph{alive}.\;
  
  \vspace{0.4em}
  
  \KwProc{\Relabel{$v$}}{
      Set $\Bell(v) \gets \Bell(v) + 1$.\;
      \If{$\Bell(v) > 9h$}{mark $v$ as \emph{dead} and \Return.}
      \For{each edge $e\ni v$ where $\Bw(e)$ divides $\Bell(v)$}{
        Let $(x,y) = e$.\;
        \lIf{$\Bell(x) - \Bell(y) \ge 2\Bw(e)$ and $\Bc_{\Bf}(e) > 0$} {
          mark $e$ as \emph{admissible}.
        }
        \lElse{
          mark $e$ as \emph{inadmissible}.
        }
      }
  }
  
  \vspace{0.4em}

  \Loop{}{
      \While{there is an alive vertex $v$ with $\Bsink_{\Bf}(v) = 0$ and without an admissible out-edge} {
          \Relabel{$v$}\;
      }
      \If{there is some alive vertex $s$ with $\Bsource_{\Bf}(s) > 0$} {
          \tcp{$P$ is an "augmenting path"}
          Trace a path $P$ from $s$ to some sink $t$, by arbitrarily following admissible out-edges.\;
          Let $\caug \gets \min\{\Bsource_{\Bf}(s), \Bsink_{\Bf}(t), \min_{e\in P} \Bc_{\Bf}(e)\}$. \;%
          \For(\tcp*[f]{Augment $\Bf$ along $P$}){$e\in P$}{
            \lIf{$e$ is a forward edge} {
            $\Bf(e) \gets \Bf(e) + \caug$.%
            }
            \lElse{
            $\Bf(e')\gets \Bf(e')-\caug$, where $e'$ is the corresponding forward edge to $e$.
            }
            Adjust residual capacities $\Bc_{\Bf}$ of $e$ and the corresponding reverse edge.\;
            \lIf{$\Bc_{\Bf}(e) = 0$}{
            mark $e$ as \emph{inadmissible}.
            }
          }
          \tcp{$\Bsource_{\Bf}(s)$ and $\Bsink_{\Bf}(t)$ goes down by $c^{\mathrm{augment}}$}
      }
      \lElse{ \Return{$\Bf$} }
  }
\end{algorithm}

\subsection{Proof of the Push Relabel Algorithm}

We begin by showing some helpful invariants.%
\begin{lemma} \label{lem:invariants}
  Throughout the run of \cref{alg:push-relabel}, the following invariants hold:
  \begin{enumerate}[label=\textup{(I-\arabic*)},widest=(I-3),itemindent=*]
    \item\label{inv:all-edges}
    $\Bell(u)-\Bell(v) < 3\Bw(e)$, 
    for all $e = (u,v)\in \forward{E}\cup\backward{E}$ with $\Bc_{\Bf}(e)>0$.
    \item\label{inv:admissible-edges}
    $\Bell(u)-\Bell(v) > \Bw(e)$,  for all $e = (u,v)\in \forward{E}\cup\backward{E}$ marked \emph{admissible},
    \item\label{inv:levels} $\Bell(v) \le 9h$ for each \emph{alive} vertex $v$, $\Bell(v)>9h$ for each \emph{dead} vertex $v$,
    and $\Bell(t) = 0$ for all unsaturated sinks $t$ ($\Bsink_{\Bf}(t)>0$).
  \end{enumerate}
\end{lemma}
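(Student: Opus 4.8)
The plan is to prove all three invariants simultaneously by induction on the sequence of operations performed by \cref{alg:push-relabel}, namely the \textsc{Relabel} calls and the flow-augmentation steps. At initialization all levels are $0$, every edge is marked inadmissible, and every vertex is alive, so \ref{inv:all-edges} holds (since $\Bell(u) - \Bell(v) = 0 < 3\Bw(e)$ as $\Bw(e) \ge 1$), \ref{inv:admissible-edges} holds vacuously, and \ref{inv:levels} holds. For the inductive step I would consider each type of operation and check that no invariant is violated.

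First consider \ref{inv:admissible-edges}. An edge $e = (u,v)$ only gets marked admissible inside \textsc{Relabel}, and only when the check $\Bell(u) - \Bell(v) \ge 2\Bw(e)$ passes; since $2\Bw(e) > \Bw(e)$ this immediately gives the strict inequality $\Bell(u) - \Bell(v) > \Bw(e)$ at the moment of marking. I then need to argue the inequality is maintained until $e$ is next re-examined or un-marked. The level $\Bell(u)$ can only increase and $\Bell(v)$ can only increase; the danger is that $\Bell(v)$ increases while $e$ stays admissible. The key point is the divisibility trigger: whenever $\Bell(v)$ is incremented to a multiple of $\Bw(e)$, the edge $e$ (which is incident to $v$) is re-examined and re-marked according to the current levels, so between two consecutive re-examinations $\Bell(v)$ can increase by at most $\Bw(e) - 1$. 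Combined with the fact that at the last re-examination we had $\Bell(u) - \Bell(v) \ge 2\Bw(e)$ (else $e$ would have been marked inadmissible), and $\Bell(u)$ only goes up, we get $\Bell(u) - \Bell(v) > 2\Bw(e) - \Bw(e) = \Bw(e)$ throughout. The same divisibility bookkeeping, applied from the $u$ side, also shows that admissibility is refreshed often enough from $u$; I would spell out that an edge incident to $u$ is re-examined whenever $\Bell(u)$ becomes a multiple of $\Bw(e)$.

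Next, \ref{inv:all-edges}: I claim $\Bell(u) - \Bell(v) < 3\Bw(e)$ for every residual edge $e=(u,v)$ with $\Bc_{\Bf}(e) > 0$. Only a \textsc{Relabel}$(u)$ can threaten this, by raising $\Bell(u)$. But \textsc{Relabel}$(u)$ is invoked only when $u$ is alive, has $\Bsink_{\Bf}(u) = 0$, and has no admissible out-edge; in particular $e$ (an out-edge of $u$ with positive residual capacity) is not admissible, so it was, at its last re-examination, marked inadmissible. Using the divisibility argument again: $e$ was examined the last time $\Bell(u)$ hit a multiple of $\Bw(e)$, at which point $\Bell(u) - \Bell(v) < 2\Bw(e)$ (the un-marking condition), and $\Bell(u)$ has risen by at most $\Bw(e)-1$ since then to the current value just before this relabel, while $\Bell(v)$ has not decreased. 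Hence just before the relabel $\Bell(u) - \Bell(v) \le 2\Bw(e) - 1$, so after incrementing $\Bell(u)$ by one we still have $\Bell(u) - \Bell(v) \le 2\Bw(e) < 3\Bw(e)$. I also need to handle the creation of new residual edges by augmentation: when we push flow along $P$ and saturate a forward edge, its reverse edge $(v,u)$ gains positive capacity; but an augmenting path follows only admissible edges, so $(u,v)$ was admissible, meaning $\Bell(u) - \Bell(v) > \Bw(e)$ by \ref{inv:admissible-edges}, hence $\Bell(v) - \Bell(u) < -\Bw(e) < 3\Bw(e)$, and the new reverse residual edge satisfies \ref{inv:all-edges} trivially. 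Conversely pushing can only decrease $\Bf(e)$ on reverse-direction moves, again only shrinking level differences in the relevant direction.

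Finally \ref{inv:levels}: the clause $\Bell(v) \le 9h$ for alive and $\Bell(v) > 9h$ for dead is immediate from the \textsc{Relabel} code, which marks $v$ dead exactly when its level first exceeds $9h$. For $\Bell(t) = 0$ at every unsaturated sink: initially all levels are $0$; a vertex $t$ is relabeled only when $\Bsink_{\Bf}(t) = 0$ at the time of the relabel, i.e.\ only when it is \emph{not} an unsaturated sink; and once $\Bsink_{\Bf}(t)$ drops to $0$ it never increases again (augmentation only decreases residual sink capacity). So any vertex that is ever relabeled has $\Bsink_{\Bf} = 0$ from that point on, hence an unsaturated sink is never relabeled and keeps level $0$. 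I expect the main obstacle to be getting the divisibility/re-examination argument airtight — precisely quantifying, for a given edge $e=(u,v)$, how stale the last re-examination can be (bounding the drift of $\Bell(u)$ and $\Bell(v)$ since then by $\Bw(e)-1$ each) so that the $2\Bw(e)$ marking threshold and the $3\Bw(e)$ target in \ref{inv:all-edges} leave exactly enough slack; everything else is routine case-checking over the three operation types.
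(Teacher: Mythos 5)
Your approach is correct and matches the paper's: both hinge on the divisibility trigger, which guarantees that between consecutive re-examinations of an edge $e = (u,v)$, each of $\Bell(u)$ and $\Bell(v)$ drifts by at most $\Bw(e)-1$. The paper phrases this by fixing a time and looking backward to the last marking, while you phrase it as an induction over operations; those are equivalent, and your version also explicitly handles the augmentation step (new residual edges and saturation), which the paper treats implicitly through its "last time $e$ was marked" formulation.

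There is however an arithmetic slip in your argument for \ref{inv:all-edges}: after noting that $\Bell(u)$ may have risen by up to $\Bw(e)-1$ since the last re-examination (at which point $\Bell(u)-\Bell(v) < 2\Bw(e)$, i.e.\ $\le 2\Bw(e)-1$), you conclude ``just before the relabel $\Bell(u)-\Bell(v)\le 2\Bw(e)-1$'' and hence $\le 2\Bw(e)$ after the increment. You forgot to add the drift: the correct bound is $\Bell(u)-\Bell(v)\le (2\Bw(e)-1) + (\Bw(e)-1) = 3\Bw(e)-2$ just before the relabel, hence $\le 3\Bw(e)-1 < 3\Bw(e)$ afterward. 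This is exactly why the constant in \ref{inv:all-edges} must be $3\Bw(e)$ and not $2\Bw(e)$; your claimed $2\Bw(e)$ would be too strong. The final conclusion $< 3\Bw(e)$ is nonetheless correct, so this is a bookkeeping error rather than a gap in the reasoning.
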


\begin{proof} 
  It is easy to verify that all invariants hold initially.

  We begin with the invariants~\labelcref{inv:all-edges} and \labelcref{inv:admissible-edges}. Consider some edge $e = (u,v)$, and let $\Bell^{\mathrm{old}}(u), \Bell^{\mathrm{old}}(v)$ be the levels of $u$ and $v$ the last time edge $e$
  was marked as admissible or inadmissible.
  Note that $\Bell(u) \in [\Bell^{\mathrm{old}}(u), \Bell^{\mathrm{old}}(u)+\Bw(e)-1]$ and $\Bell(v) \in [\Bell^{\mathrm{old}}(v), \Bell^{\mathrm{old}}(v)+\Bw(e)-1]$, as if the levels of $u$ (or $v$) had increased by at least $\Bw(e)$, then there must have been a point where $\Bw(e)$ divided $\Bell(u)$ (or $\Bell(v)$).
  \begin{enumerate}
    \item If $e$ was marked as inadmissible and $\Bc_{\Bf}(e)>0$, we know $\Bell^{\mathrm{old}}(u)-\Bell^{\mathrm{old}}(v) < 2\Bw(e)$, and hence that $\Bell(u)-\Bell(v) < 2\Bw(e)+(\Bw(e)-1)$.
    \item If $e$ was marked as admissible, we know $\Bell^{\mathrm{old}}(u)-\Bell^{\mathrm{old}}(v) \ge 2\Bw(e)$, and hence that $\Bell(u)-\Bell(v) \ge 2\Bw(e)-(\Bw(e)-1)$. 
      Additionally, we note that as long as $e$ is admissible, the quantity $\Bell(u)-\Bell(v)$ cannot increase (since it only increases
      when $\Bell(u)$ goes up, which only happens if we relabel $u$, which in turn only happens when there is no admissible outgoing edge of $u$).
      Because at the last point when $e$ was inadmissible we had $\Bell(u)-\Bell(v) < 3\Bw(e)-1$, we know that $\Bell(u)-\Bell(v)< 3\Bw(e)$ now too.
  \end{enumerate}

  Invariant \ref{inv:levels} is easy to see, since any vertex of level $> 9h$ is marked dead from the graph, and the unsaturated sinks (i.e., those $t$ with $\Bsink_{\Bf}(t) > 0$) are never relabeled.
\end{proof}

Because the algorithm relabels all alive vertices (except unsaturated sources) until they have an admissible outgoing edge, we note that when the algorithm tries to trace a path $P$ by arbitrarily following admissible edges, the path $P$ must eventually end in an unsaturates sink. Indeed, at this time, all alive vertices which does not have admissible outgoing edges are exactly the unsaturated sinks. Moreover, traversing an admissible edge, by \labelcref{inv:admissible-edges}, decreases the level $\Bell$, so this process cannot go on forever and $P$ must eventually end in an unsaturated sink.

We will also need a bound on the number of augmentations the algorithm performs:

\begin{lemma} \label{lem:few-augmenting-paths}
Every edge $e$ (or its reverse) is only saturated (i.e., has $\Bc_{\Bf}(e) = \caug$) in at most $O(\frac{h}{\Bw(e)})$ augmenting paths.
Thus, there are at most $O(n+\sum_{e\in E} \frac{h}{\Bw(e)})$ many augmenting paths found by the algorithm.
\end{lemma}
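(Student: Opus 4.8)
The plan is to bound, for each residual edge $e$ (forward or backward), the number of augmenting paths in which $e$ is the bottleneck — i.e.\ in which $\Bc_{\Bf}(e)$ drops to exactly $\caug$ and so $e$ becomes saturated. The key observation is that $e = (u,v)$ can only be \emph{traversed} by an augmenting path while $e$ is marked admissible, and by the argument in \cref{lem:invariants} an edge's admissibility status can only change when $\Bell(u)$ or $\Bell(v)$ is a multiple of $\Bw(e)$ (because \textsc{Relabel} only revisits $e$ at those moments). Since levels are monotonically nondecreasing and capped at $9h$ (invariant \labelcref{inv:levels}), each of $\Bell(u)$ and $\Bell(v)$ passes through at most $O(h/\Bw(e))$ multiples of $\Bw(e)$ over the whole run, so $e$ transitions between admissible and inadmissible at most $O(h/\Bw(e))$ times; in particular it is admissible during at most $O(h/\Bw(e))$ maximal ``admissible windows.''

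First I would show that $e$ can be saturated at most once per admissible window. Once $e$ becomes saturated in some augmenting path, the augmentation step of \cref{alg:push-relabel} immediately marks $e$ as inadmissible (the explicit \textbf{if} $\Bc_{\Bf}(e) = 0$ line); for $e$ to carry flow again — and hence possibly be saturated again — it must first be re-marked admissible, which ends the current window and starts a new one. (The backward edge $\backward{e}$ gaining capacity when $\forward{e}$ is pushed does not reopen $\forward{e}$: that only happens through a \textsc{Relabel} call hitting a multiple of $\Bw(e)$.) So the number of times $e$ is the saturated bottleneck edge is at most the number of admissible windows of $e$, which is $O(h/\Bw(e))$. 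Summing the forward edge $\forward{e}$ and backward edge $\backward{e}$ for each $e \in E$ gives $O(\sum_{e \in E} h/\Bw(e))$ in total.

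Next I would account for all augmenting paths, not just bottleneck events. Each augmenting path $P$ found in the main loop sets $\caug = \min\{\Bsource_{\Bf}(s), \Bsink_{\Bf}(t), \min_{e \in P}\Bc_{\Bf}(e)\}$, so on termination of that iteration either some edge $e \in P$ becomes saturated, or $\Bsource_{\Bf}(s)$ drops to $0$, or $\Bsink_{\Bf}(t)$ drops to $0$. The first case is charged to a bottleneck event, bounded above by $O(\sum_e h/\Bw(e))$. The second case can happen at most once per source vertex $s$ (since $\Bsource_{\Bf}(s)$ never increases and the path is only traced from alive vertices with $\Bsource_{\Bf}(s) > 0$), contributing at most $n$; likewise the third case at most once per sink, contributing at most $n$. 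Hence the number of augmenting paths is $O(n + \sum_{e \in E} h/\Bw(e))$, as claimed.

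The step I expect to require the most care is the per-window saturation bound: one must verify that after $e$ is saturated the \emph{only} mechanism that makes $e$ admissible again is a \textsc{Relabel} of $u$ landing on a multiple of $\Bw(e)$ (so that a genuinely new window begins), rather than, say, a push on $\backward{e}$ silently restoring $\Bc_{\Bf}(\forward{e})$ while $\forward{e}$ is still nominally admissible. This follows because admissibility is only ever (re)assigned inside \textsc{Relabel} and inside the augmentation loop (which only clears it), and because the augmentation step explicitly clears admissibility exactly when residual capacity hits $0$; so $e$ cannot be traversed again until a \textsc{Relabel} re-examines it, which is precisely a new-window event. Everything else is the bookkeeping of the previous paragraphs.
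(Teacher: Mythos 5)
Your proof is correct and follows essentially the same approach as the paper: both bound the number of saturations of $e$ (or $\backward{e}$) by $O(h/\Bw(e))$ using the facts that levels are monotone, bounded by $9h$, and that admissibility of $e$ is only ever re-granted inside \textsc{Relabel} at multiples of $\Bw(e)$; and both argue the total number of augmenting paths by noting each one saturates an edge, a source, or a sink. The only cosmetic difference is the bookkeeping in the first part: you count ``admissible windows'' directly from the loop structure of \textsc{Relabel}, whereas the paper phrases the same fact as a potential argument on $\Bell(u)+\Bell(v)$ (each saturation of $e$ or $\backward{e}$ forces this sum, bounded by $18h$, to grow by $\Theta(\Bw(e))$ before the next).
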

\begin{proof}
Indeed, whenever edge $e = (u,v)$ is fully saturated as part of an augmenting path, it will be marked as inadmissible. At this point we have $\Bell(u)-\Bell(v) > \Bw(e)$ by \labelcref{inv:admissible-edges},
and it (or rather, its reverse $e' = (v,u)$) will only ever be marked as admissible when $\Bell(v)-\Bell(u) \ge 2\Bw(e)$. This means that the sum $\Bell(u)+\Bell(v)$ must have increased (since levels only ever increase) by
$\Theta(\Bw(e))$. Since $\Bell(u)+\Bell(v)\le 18h$, this can only happen $O(\frac{h}{\Bw(e)})$ times.

For the second part of the lemma, we note that for each augmenting path, either an edge is saturated, or a source/sink vertex gets saturated.
The former can happen at most $O(\sum \frac{h}{\Bw(e)})$ times for each edge $e$, and the latter can happen at most once for each vertex.
This bounds the number of augmenting paths.
\end{proof}

We now resume to prove the guarantees listed in \cref{thm:push-relabel-main-theorem} (recall that \labelcref{item:push-relabel:approximation} was already shown in \cref{lemma:push-relabel:approximation}).

\para{Returns Short Flows}
We begin by showing that \cref{alg:push-relabel} returns a flow $\Bf$ such that $\Bw(\Bf) \le 9h \cdot |\Bf|$ (thus proving \labelcref{item:push-relabel:short-flow}).
Indeed, each augmenting path $P = (s=v_0 , v_1, v_2, \ldots, v_k = t)$ our algorithm finds consists of admissible edges. 
This means that $\Bw(P) = \sum_{e\in P} \Bw(e) \le \sum_{i=0}^{k-1} \Bell(v_i)-\Bell(v_{i+1}) = \Bell(s)-\Bell(t) \le 9h$ (by \labelcref{inv:admissible-edges,inv:levels}).
Hence $\Bw(P)\le 9h$.
Now, by summing over all augmenting paths, observe that $\Bw(\Bf) \le \sum_P \Bw(P) \cdot  |\Bf_P|$ because the flow paths may only cancel in the final flow $\Bf$.
Therefore, we have $\Bw(\Bf) \le \sum_P 9h  \cdot |\Bf_P|  =  9h \cdot |\Bf|$.

\para{Source-to-Sink Distance is Large in Residual Graph}
We now argue that the shortest source-to-sink path in the residual graph $G_{\Bf}$ must have $\Bw$-length more than $3h$ (thus proving \labelcref{item:push-relabel:invariant}). Consider any $(s,t)$-path $P = (s=v_0 , v_1, v_2, \ldots, v_k = t)$ in the residual graph, where $s$ is an unsaturated source and $t$ an unsaturated sink.
Then $3\Bw(P) = \sum_{e\in P} 3\Bw(e) > \sum_{i=0}^{k-1} \Bell(v_i)-\Bell(v_{i+1}) = \Bell(s)-\Bell(t) > 9h$. The first inequality is by \labelcref{inv:all-edges}. The last inequality is because every unsaturated source $s$ has level $\Bell(s) > 9h$ at termination (indeed, $s$ must be \emph{dead}, as otherwise the algorithm would not have terminated), and every unsaturated sink $t$ always has level $\Bell(t) = 0$ by \labelcref{inv:levels}. %
Hence $\Bw(P) > 3h$, which is what we wanted.

\para{Running Time}
We now argue that we can implement \cref{alg:push-relabel} in running time bounded by $\tO\left(m + n + \sum_{e\in E} \frac{h}{\Bw(e)}\right)$. 
Here, we only argue that this is the case if the graph is unit-capacitated, i.e., when $\Bc(e) = 1$ for all $e \in E$.
The full argument on the running time bound in capacitated graphs requires keeping track of the admissible edges using dynamic trees (e.g., link-cut trees \cite{SleatorT83}) to speed up parts of the algorithm---the discussion of which we postpone
\ifnum\cameraready=0
to \cref{sec:capacitated-push-relabel}.
\else
to \cite[Appendix A]{BernsteinBST24}.
\fi
We proceed by analyzing the running time of the different parts of the algorithm.
\begin{itemize}
  \item The initialization steps takes $O(n+m)$ time.
  
  \item For each vertex $v$ we can keep track of its admissible out-edges in a linked list (to support addition and removal in constant time).
  Additionally, we can keep track of a list of all vertices $v$ which have no admissible out-edges (so that we can find such a vertex efficiently in constant time).
  
  \item Relabel:
    \begin{itemize}
      \item For each vertex $v$, the relabel operation is run at most $O(h)$ times. This would give an extra factor of $O(nh)$.\footnote{For our purposes this term is actually acceptable.}
      To avoid this, when we perform a relabel operation, we can increase the level of a vertex by more than one. Note that a vertex will only get a new admissible out-edge when some incident edge $e$ has $\Bw(e)$ which divides the new level $\Bell(v)$. Thus, for vertex $v$, we can immediately raise the level to the next multiple of $\Bw(e)$ for any of the adjacent edges.
      In total, vertex $v$ will thus only visit at most $O\left(\sum_{e\in \delta(v)}\frac{h}{\Bw(e)}\right)$ levels.
      In total, over all vertices, we will thus have at most $O\left(\sum_{e\in E}\frac{h}{\Bw(e)}\right)$ relabel operations.
      \item We also argue that the for-loop to mark edges as (in)admissible is efficient. 
        Consider an edge $e = (u,v)$. It will be considered $O\left(\frac{h}{\Bw(e)}\right)$ many times in the for-loop (since at most this many times, $\Bw(e)$ will divide $\Bell(v)$ or $\Bell(u)$).
        In total, this for-loop thus accounts for $O\left(\sum_{e\in E} \frac{h}{\Bw(e)}\right)$ running time. 
        Indeed, we can quickly identify which edges to loop over by storing, for each vertex, a dictionary, where entry $k$ maps to all edges whose weights divide $k$; such a dictionary can be populated as an initialization step.
    \end{itemize}
  
  \item Processing Augmenting Paths:
    \begin{itemize}
      \item By \cref{lem:few-augmenting-paths}, there are only $O\left(n+\sum_{e\in E} \frac{h}{\Bw(e)}\right)$ augmenting paths found. Using dynamic trees (as is a standard speed-up for push-relabel algorithms), we can in fact support each augmentation in $O(\log n)$ time by keeping track of trees where vertex $v$ has an arbitrary admissible out-edge as parent, and using a dynamic tree data structure to support
      ``add'' and ``find-min'' operations on a vertex-to-root path (note: the roots will exactly be the unsaturated sinks).
      \ifnum\cameraready=0
      However, we postpone this discussion to \cref{sec:capacitated-push-relabel}. 
      \else
      However, we postpone this discussion to \cite[Appendix A]{BernsteinBST24}. 
      \fi
      \item Here we instead argue the bound in the case of unit-capacitated graphs: The amount of work we do when we find an augmenting path is proportional to the length of this augmenting path. Thus
      we charge one unit of work to each edge $e_i$ on the path. Since the graph is of unit-capacity, \emph{all} edges $e_i$ on the path will be saturated. Hence, by \cref{lem:few-augmenting-paths}, we know that the edge $e_i$ will only be charged cost $O\left(\frac{h}{\Bw(e)}\right)$ throughout the run of the algorithm, for a total cost of $O\left(\sum_{e\in E} \frac{h}{\Bw(e)}\right)$.
    \end{itemize}
\end{itemize}
The above discussion concludes the proof of \cref{thm:push-relabel-main-theorem}.

\para{Additional Property of Finding Almost Shortest Paths}
The push relabel algorithm works by finding augmenting paths one by one. Say the paths, in order, are $P_1, P_2, \ldots, P_{|\Bf|}$. Let $\Bf_i$ be the flow induced by paths $P_1, \ldots, P_i$, and note that the path $P_{i+1}$ is a path in the residual graph $G_{\Bf_i}$. When bootstrapping our algorithm to find an expander decomposition, we will later need the additional property that the augmenting paths our algorithm finds cannot be ``shortcutted'' significantly. We prove the following lemma.
\begin{lemma}
\label{lem:pr-no-shortcut}
Consider the state of the algorithm just before the $i$-th path $P_i$ is augmented along. Then, for any vertices $s$ and $t$ we have $\Bell(s)-\Bell(t) \le 3 \dist^{\Bw}_{G_{\Bf_{i-1}}}(s,t)$. In particular, this means that any subpath $P'$ of $P_i$, between vertices $s$ and $t$, has weight at most $\Bw(P') \le 3\dist^{\Bw}_{G_{\Bf_{i-1}}}(s,t)$.
\end{lemma}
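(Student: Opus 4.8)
The plan is to derive both inequalities directly from the invariants in \cref{lem:invariants}, which hold at every moment of \cref{alg:push-relabel} and in particular just before $P_i$ is augmented along; this is essentially the same telescoping argument already used to establish \cref{thm:push-relabel-main-theorem}\labelcref{item:push-relabel:invariant,item:push-relabel:short-flow}, now applied to an arbitrary shortest path rather than to an augmenting or reference path.

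First I would prove $\Bell(s)-\Bell(t) \le 3\dist^{\Bw}_{G_{\Bf_{i-1}}}(s,t)$. If $t$ is unreachable from $s$ in $G_{\Bf_{i-1}}$ the right-hand side is $+\infty$ and there is nothing to show, so assume a shortest $\Bw$-path $Q = (s = u_0, u_1, \ldots, u_\ell = t)$ in $G_{\Bf_{i-1}}$; every edge $(u_j,u_{j+1})$ of $Q$ has positive residual capacity. By \labelcref{inv:all-edges}, $\Bell(u_j) - \Bell(u_{j+1}) < 3\Bw(u_j,u_{j+1})$ for each such edge, and summing telescopes:
\[
  \Bell(s) - \Bell(t) \;=\; \sum_{j=0}^{\ell-1} \bigl(\Bell(u_j) - \Bell(u_{j+1})\bigr) \;<\; \sum_{j=0}^{\ell-1} 3\Bw(u_j,u_{j+1}) \;=\; 3\Bw(Q) \;=\; 3\dist^{\Bw}_{G_{\Bf_{i-1}}}(s,t).
\]
For the second statement, let $P'$ be a subpath of $P_i$ from $s$ to $t$. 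Since $P_i$ is traced by following admissible out-edges, every edge of $P'$ is admissible, so by \labelcref{inv:admissible-edges} each edge $e = (x,y)$ on $P'$ satisfies $\Bw(e) < \Bell(x)-\Bell(y)$; summing along $P'$ gives $\Bw(P') < \Bell(s) - \Bell(t)$, and combining with the first inequality yields $\Bw(P') \le 3\dist^{\Bw}_{G_{\Bf_{i-1}}}(s,t)$.

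I do not expect a genuine obstacle here: the only points requiring care are invoking \labelcref{inv:all-edges} and \labelcref{inv:admissible-edges} at exactly the right time (they are maintained continuously by \cref{lem:invariants}, hence hold immediately before the $i$-th augmentation), and handling the degenerate case where $t$ is not reachable from $s$ in $G_{\Bf_{i-1}}$, which makes the bound vacuous.
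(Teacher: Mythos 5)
Your proposal is correct and matches the paper's own proof essentially verbatim: both inequalities are obtained by telescoping \labelcref{inv:all-edges} along a shortest residual path and \labelcref{inv:admissible-edges} along the admissible subpath $P'$, and your handling of the unreachable case is a harmless clarification.
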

\begin{proof}
Consider the shortest $(s,t)$-path $Q = (v_1,v_2, \ldots, v_{|Q|})$ (with $v_1 = s$ and $v_{|Q|}=t$) in the residual graph $G_{\Bf_{i-1}}$. Then we have
\[ 3\Bw(Q) = \sum_{e\in Q}3\Bw(e) \ge \sum_{i=1}^{|Q|-1} (\Bell(v_{i})-\Bell(v_{i+1})) = \Bell(s)-\Bell(t) \]
by \ref{inv:all-edges}.
This proves the first part of the lemma.

The second part is similar, but now using \ref{inv:admissible-edges} and the fact that all edges on the path $P'$ are admissible. Suppose $P' = (u_1, \ldots, u_{|P'|})$ (with $u_1 = s$ and $u_{|P'|} = t$). Then we have
\ifnum\cameraready=0
\[ \Bw(P') = \sum_{e\in P'} \Bw(e) \le \sum_{i=1}^{|P'|-1} (\Bell(u_i)-\Bell(u_{i+1})) = \Bell(s)-\Bell(t) \le 3\dist^{\Bw}_{G_{\Bf_{i-1}}}(s,t).\qedhere{} \]
\else
\begin{align*}
    \Bw(P') &= \sum_{e\in P'} \Bw(e) \le \sum_{i=1}^{|P'|-1} (\Bell(u_i)-\Bell(u_{i+1})) \\&= \Bell(s)-\Bell(t) \le 3\dist^{\Bw}_{G_{\Bf_{i-1}}}(s,t).\qedhere{}
\end{align*}
\fi
\end{proof}

\subsection{Application: Approximate Max-Flow in DAGs}

While it is non-trivial to find a ``good'' weight function for arbitrary graphs\footnote{Finding a ``good'' weight function efficiently is exactly what we do in the remainder of our paper, by the use of expander decomposition.}, the special case of directed acyclic graphs (DAGs) turns out to be easy.
Therefore we can immediately obtain (by using our push-relabel algorithm \cref{alg:push-relabel}) a
relatively simple, combinatorial, linear-time-for-dense-graphs, constant-approximation algorithm for maximum flow in DAGs. 

\begin{corollary}
\label{cor:dag-approx}
One can find a $\Theta(1)$-approximate maximum flow in a DAG in $\tO(n^{2})$ time.
\end{corollary}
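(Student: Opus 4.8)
The plan is to invoke \cref{thm:push-relabel-main-theorem} with the canonical DAG weight function and then bootstrap the resulting constant-approximation into an exact/constant-approximate flow. First I would fix a topological order $\Btau \in [n]^V$ of the DAG (computable in $O(n+m)$ time) so that $\Btau_v > \Btau_u$ for every edge $(u,v)$, and set the weight function $\Bw(u,v) \defeq \Btau_v - \Btau_u \in \N_{>0}$. The two facts I need are: (a) every path in the DAG — in particular every flow path of the optimal flow $\fstar$ — has $\Bw$-length at most $n-1$, since along any path the $\Btau$ values strictly increase from at least $1$ to at most $n$, so the telescoping sum of the weights is $\Btau_{\mathrm{end}} - \Btau_{\mathrm{start}} \le n-1$; and (b) $\sum_{e \in E} \frac{1}{\Bw(e)} = O(n\log n)$, because for a fixed vertex $v$ the edges incident to $v$ have distinct $|\Btau_u - \Btau_v|$ values drawn from $\{1,\dots,n-1\}$ in each direction (the DAG is simple, or we may assume so after removing parallel edges which does not change the flow), so the incident reciprocals sum to at most $2 H_{n-1} = O(\log n)$, and summing over $v$ gives $O(n\log n)$.

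Next I would apply \cref{thm:push-relabel-main-theorem} with this $\Bw$ and height parameter $h = n$. By fact (a), the optimal flow $\fstar$ satisfies $\Bw(\fstar) \le |\fstar| \cdot n = |\fstar| \cdot h$, so $\fstar$ is a valid witness and the returned flow $\Bf$ is a $\tfrac16$-approximation to the maximum flow by property \labelcref{item:push-relabel:approximation}. The running time is $\tO\!\left(m + n + \sum_{e} \frac{h}{\Bw(e)}\right) = \tO\!\left(m + n \cdot O(n\log n)\right) = \tO(n^2)$ by fact (b) and $h = n$, and since $m = O(n^2)$ this is $\tO(n^2)$. This already gives a $\Theta(1)$-approximate maximum flow in $\tO(n^2)$ time, which is the statement of \cref{cor:dag-approx}.

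The only subtlety worth flagging is the bound $\sum_e 1/\Bw(e) = O(n\log n)$ in the presence of multi-edges: if parallel copies of an edge $(u,v)$ are allowed, they all have the same weight $\Btau_v - \Btau_u$, and a priori there could be $\Omega(n^2)$ of them, which would not blow up the sum beyond $\tO(n^2)$ anyway (each has weight $\ge 1$), but more carefully one observes the push-relabel running time term $\sum_e h/\Bw(e)$ is $\le h \cdot m = \tO(n^2)$ regardless whenever $m = O(n^2)$, so the harmonic-series refinement is only needed to make the bound clean rather than essential. I do not expect any genuine obstacle here: the entire content is recognizing that a topological order \emph{is} a good weight function — both short (trivially, since \emph{every} path, not just flow paths, is short) and cheap — so the corollary follows immediately by plugging into the push-relabel theorem. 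The one place to be slightly careful is ensuring $\Bw \in \N_{>0}^E$, i.e., that $\Bw(e) \ge 1$ for every edge, which holds because $\Btau$ is a strict topological order on a DAG, hence $\Btau_v \ge \Btau_u + 1$ for each edge $(u,v)$.
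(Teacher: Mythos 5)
Your proof is correct and follows essentially the same route as the paper: fix a topological order $\Btau$, set $\Bw(u,v) = \Btau_v - \Btau_u$, observe that every path (hence every flow path) has $\Bw$-length $\le n$, and invoke \cref{thm:push-relabel-main-theorem} with $h = n$, using the bound $\sum_e 1/\Bw(e) = O(n\log n)$ (\cref{claim:sum-edge-weights-inv}) to get the $\tO(n^2)$ running time.

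One small error in your aside, however: the claim that ``$\sum_e h/\Bw(e) \le h\cdot m = \tO(n^2)$ regardless whenever $m = O(n^2)$'' is wrong --- with $h = n$ and $m = O(n^2)$ the product $h\cdot m$ is $O(n^3)$, not $\tO(n^2)$. The harmonic-series bound $\sum_e 1/\Bw(e) = O(n\log n)$ is therefore \emph{essential} to obtaining the $\tO(n^2)$ running time, not merely a cosmetic refinement. Your main computation does correctly use this bound, so the proof as a whole stands.
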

\begin{proof}
Let $\Btau \in [n]^V$ be the topological order of the vertices (which can be found in $O(n+m)$ time): that is $\Btau_u < \Btau_v$ for each edge $e = (u,v)$.
Then let $\Bw(e) = |\Btau_v-\Btau_u|$. Indeed, any flow path in the maximum flow $\Bf^{\star}$ will have $\Bw$-length at most $n$.
Thus, we can invoke \cref{alg:push-relabel,thm:push-relabel-main-theorem} with $h = n$, getting a $\frac{1}{6}$-approximation of the maximum flow. %
\end{proof}

\newcommand{\Pshort}{\mathcal{P}_{\text{short}}}
\newcommand{\Plong}{\mathcal{P}_{\text{long}}}
\newcommand{\Approx}{O(\log{n})} %

\section{Solving Maximum Flow}\label{sec:weight}

In this section, we show how to compute maximum flow exactly using the weighted push-relabel algorithm from \cref{sec:push-relabel} given an expander hierarchy defined below.

\begin{definition}
  \label{def:expander-hierarchy}
  Given a capacitated graph $(G, \Bc)$, a partition $\cH = (D, X_1, \ldots, X_\eta)$ of $E(G)$ is a \emph{$\phi$-expander hierarchy} of $(G, \Bc)$ with \emph{height} $\eta(\cH) = \eta$ if
  \begin{enumerate}
    \item $D$ is acyclic,
    \item each $e \in X_i$ is contained in some strongly connected component of $G\setminus X_{>i}$, and
    \item $X_i$ is a $\phi$-expanding
    in $(G \setminus X_{>i}, \Bc)$.\footnote{Recall that this means that $X_i\cap C$ is $\phi$-expanding in $(C,\Bc)$, for each strongly connected component $C$ of $G\setminus X_{>i}$.}
  \end{enumerate}
\end{definition}
When the graph is of unit-capacity (i.e., when $\Bc = \Bone$), we will leave $\Bc$ out from the notation.
We remark that our algorithm for constructing an expander hierarchy actually guarantees that $X_i$ is a \emph{separator} in $G\setminus X_{>i}$.
\ifnum\cameraready=0
See \cref{sec:nested-expander-decomposition} for more details.
\else
See \cite[Section 7]{BernsteinBST24} for more details.
\fi
As we will see, for our purposes, we should think of $\phi = n^{o(1)}$ and $\eta = O(\log n)$.

Consider a $\phi$-expander hierarchy $\cH$ of $(G,\Bc)$.
An edge $e \in D$ is a \emph{DAG edge}.
An edge $e \in X_i$ for some $i$ is an \emph{expanding edge}, and more specifically a \emph{level-$i$} expanding edge.
Let $G_i \defeq G \setminus X_{>i}$, in which each $C \in \SCC(G_i)$ is a \emph{level-$i$ expander}. Note that, by definition, the level-$i$ expanders $\SCC(G_i)$ form a refinement of the level-$(i+1)$ expanders $\SCC(G_{i+1})$. See \Cref{fig:hierarchy example} for illustration.

\ifnum\cameraready=0

\begin{figure}[!bt]
    \centering
    \includegraphics[width=\textwidth]{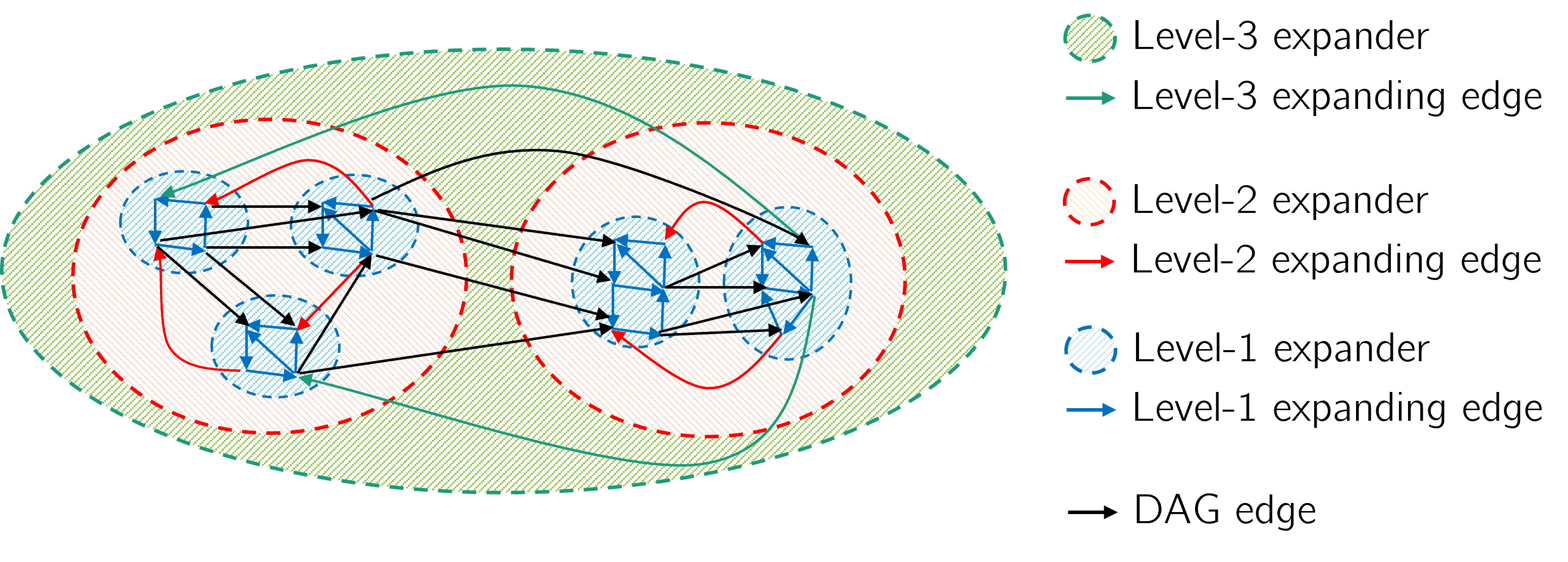}
    \caption{An example of an expander hierarchy with 3 levels}
    \label{fig:hierarchy example}
\end{figure}

\else

\begin{figure*}[!bt]
    \centering
    \includegraphics[width=\textwidth]{figure/hierarchy.png}
    \caption{An example of an expander hierarchy with 3 levels}
    \label{fig:hierarchy example}
\end{figure*}

\fi

Note that \cref{def:expander-hierarchy} differs from the undirected expander hierarchy of \cite{GoranciRST21} in that we are not contracting strongly connected components as we go up in the hierarchy.
Indeed, it is impossible to ensure the boundary-linkedness of all inter-cluster edges as in \cite{GoranciRST21} due to the presence of DAG edges in directed graphs.
We also remark that our $\phi$-expander hierarchy precisely generalizes the expander hierarchy notion from \cite[Definition 2]{PatrascuT07} in undirected graphs to directed graphs. Note that in undirected graphs, we would always have $D=\emptyset$.

\subsection{Weight Function}

As alluded to in previous sections, computing maximum flow boils down to the design of a good weight function which we can run the push relabel algorithm from \cref{sec:push-relabel} on.
Recall that we want our weight function $\Bw$ to satisfy (1) that some approximate maximum flow is short with respect to $\Bw$, and (2) that the sum of inverses $\sum_{e\in E} \frac{1}{\Bw(e)}$ is small (for an efficient running time). We prove (1) in \cref{thm:good-weight-function} and (2) in \cref{claim:sum-edge-weights-inv}.

Let $\cH = (D,X_1,\dots,X_\eta)$ be  a $\phi$-expander hierarchy of $(G,\Bc)$
\ifnum\cameraready=0
(we show how to compute $\cH$ in \cref{sec:nested-expander-decomposition}).
\else
(we show how to compute $\cH$ in \cite[Section 7]{BernsteinBST24}).
\fi
We now define the weight function $\Bw_{\cH} \in \N_{>0}^E$ of $G$ induced by the hierarchy $\cH$. We will show in the remainder of the section that it indeed satisfies the desired properties.
The choice of our weight function is inspired by the topological order\footnote{Recall that a \emph{topological order} of an acyclic $D$ is a permutation $\Btau \in \N^V$ such that $\Btau_u < \Btau_v$ for every $(u, v) \in E(D)$. A topological order always exists and can be computed in $O(m)$ time~\cite{Tarjan72}.} of DAGs, and we consider the following notion of respecting topological order.

\begin{definition}
  A topological order $\Btau \in \N^V$ of $D$ is \emph{$\cH$-respecting} if for each level $i$ and each $C \in \SCC(G_i)$ the set $\Btau(C) \defeq \{\Btau_v: v \in C\}$ is contiguous
  In other words, it contains precisely the set of numbers between
  \[
    \Btau_{\mathrm{min}}(C) \defeq \min\Btau(C)\quad\text{and}\quad\Btau_{\mathrm{max}}(C) \defeq \max\Btau(C).
  \]
\end{definition}

Note that given a hierarchy $\cH$, an $\cH$-respecting $\Btau$ can be easily computed in $O(m\eta)$ time by the following lemma whose proof is deferred to
\ifnum\cameraready=0
\cref{appendix:omitted-proofs}.
\else
\cite[Appendix C]{BernsteinBST24}.
\fi

\begin{restatable}{lemma}{RespectingTopo}
\label{lemma:compute-respecting-topo}
  Given an expander hierarchy $\cH$, in $O(m\eta)$ time we can compute an $\cH$-respecting topological order $\Btau$.
\end{restatable}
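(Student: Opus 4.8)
The plan is to build $\Btau$ by a top-down recursive procedure that interleaves strongly-connected-component decomposition with topological sorting. At the top level, $G_\eta = G$, which may itself have several SCCs; we first topologically sort the condensation of $G_\eta$ (i.e.\ the DAG obtained by contracting each $C \in \SCC(G_\eta)$ to a single node), and this tells us in which order to lay out the SCCs of $G_\eta$ along the number line. Then, recursively, for each $C \in \SCC(G_\eta)$ we process the induced subgraph $G_{\eta-1}[C]$: note that $G_{\eta-1}[C] = (C, E(C) \setminus X_\eta)$, and its SCCs are exactly the members of $\SCC(G_{\eta-1})$ that lie inside $C$ (this uses property~(2) of \cref{def:expander-hierarchy}, namely that every edge of $X_i$ stays inside an SCC of $G_i$, so removing $X_\eta$ only subdivides SCCs and never merges vertices across the top-level partition). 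We again topologically sort the condensation of $G_{\eta-1}[C]$ to fix the left-to-right order of these sub-SCCs within the contiguous block already reserved for $C$, and recurse down to level $1$, where each remaining piece is a single SCC of $G_1$ and gets an arbitrary contiguous block of integers.

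More concretely, I would carry out the following steps. First I would argue the structural fact that for each $i$ and each $C \in \SCC(G_i)$, the SCCs of $G_{i-1}$ contained in $C$ partition $C$; this is immediate from property~(2) of \cref{def:expander-hierarchy} together with the fact that $\SCC(G_{i-1})$ refines $\SCC(G_i)$ (stated in the text right after \cref{def:expander-hierarchy}). Second, I would describe the recursion \textsc{Assign}$(H, [a,b])$ that, given a strongly connected subgraph (or more generally a subgraph all of whose edges at the current level still respect the hierarchy) and a contiguous integer interval $[a,b]$ with $b - a + 1 = |V(H)|$, computes $\SCC(H)$, topologically sorts the condensation, and then partitions $[a,b]$ into consecutive sub-intervals, one per SCC in condensation order, recursing one level down on each. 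Third, I would verify correctness: (a) $\Btau$ is a bijection $V \to [n]$ because at every level the intervals handed to children partition the parent interval and each child receives exactly as many integers as it has vertices; (b) $\Btau$ is a topological order of $D$ — an edge $(u,v) \in D$ is, by definition, a non-$X$ edge, so it crosses between two distinct SCCs of $G_\eta$, or (if both endpoints share an SCC at some levels) it eventually crosses between two distinct SCCs at the lowest level $i$ where $u$ and $v$ are separated, and since $(u,v)$ does not belong to any $X_j$ it is a forward edge in the condensation of $G_{i}[\cdot]$ that we topologically sorted, hence $\Btau_u < \Btau_v$; (c) $\Btau$ is $\cH$-respecting — for every level $i$ and $C \in \SCC(G_i)$, the interval assigned to $C$ by the recursion is by construction contiguous, and all vertices of $C$ receive labels inside that interval. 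Fourth, I would bound the running time: at each of the $\eta$ recursion levels we spend $O(|E|)$ total time across all recursive calls (computing SCCs and topologically sorting a condensation is linear in the size of the relevant subgraph, by \cite{Tarjan72}, and the subgraphs at a fixed level are edge-disjoint), giving $O(m\eta)$ overall.

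The only mildly delicate point — and the step I would spend the most care on — is the claim that when we descend from level $i$ to level $i-1$ inside a fixed $C \in \SCC(G_i)$, the relevant graph $G_{i-1}[C]$ has the property that \emph{every} one of its edges lies within one of its own SCCs \emph{except} for the edges we are about to topologically sort, and that these "cross" edges are precisely $D$-edges together with $X$-edges of level $< i$; equivalently, that $X_{<i} \cap E(C)$ together with the $D$-edges behave exactly like the $(D, X_1, \dots, X_{i-1})$-part of a hierarchy restricted to $C$. This is essentially the observation that the hierarchy restricted to any $C \in \SCC(G_i)$ is again a hierarchy of $G[C]$ of height $i-1$ after dropping $X_{\ge i}$; it follows directly from \cref{def:expander-hierarchy} but should be stated explicitly so that the recursion is well-founded. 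Everything else is routine bookkeeping. I expect no serious obstacle here — the lemma is a standard "refining partition + layered topological sort" construction — which is presumably why the authors deferred it to the appendix.
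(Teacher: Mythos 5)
Your proposal is correct and follows essentially the same top-down recursion as the paper's proof: compute $\SCC(G_\eta)$, topologically order the condensation, assign contiguous blocks of labels to the SCCs in that order, and recurse on each SCC with the hierarchy restricted and truncated by one level. The delicate point you flag (that the restriction of $\cH$ to an SCC of $G_i$ with $X_{\geq i}$ dropped is again a valid hierarchy of height $i-1$) is exactly the observation the paper encodes by passing $\cH_i = (D \cap E[C_i], X_1 \cap E[C_i], \ldots, X_{\eta-1} \cap E[C_i])$ into the recursion, and your $O(m\eta)$ accounting matches theirs.
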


As a result, in the remainder of this paper whenever there is a hierarchy $\cH$ we assume we also have a corresponding $\cH$-respecting topological order $\Btau$ (which might not be unique).
The weight function $\Bw_{\cH}$ induced by $\cH$ (and $\Btau)$ is simply defined as
\begin{equation}
  \Bw_{\cH}(e) \defeq
  |\Btau_v - \Btau_u|.
  \label{eq:weight-function}
\end{equation}

\begin{observation}
  A level-$i$ expanding edge $e = (u, v)$ has $\Bw_{\cH}(e) \leq |C|$, where $C$ is the level-$i$ expander in which $e$ is expanding.
  \label{obs:back-edge-larger-than-exp-edge}
\end{observation}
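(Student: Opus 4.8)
The plan is to unwind the two definitions that feed into the statement: the definition of an expander hierarchy (\cref{def:expander-hierarchy}) and the definition of an $\cH$-respecting topological order. First I would use property~2 of \cref{def:expander-hierarchy}: since $e = (u,v)$ is a level-$i$ expanding edge, i.e.\ $e \in X_i$, the edge $e$ is contained in some strongly connected component $C$ of $G_i = G \setminus X_{>i}$. In particular both endpoints $u$ and $v$ lie in $C$, so both $\Btau_u$ and $\Btau_v$ belong to $\Btau(C)$.

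Next I would invoke the $\cH$-respecting property of $\Btau$ for this particular level $i$ and component $C$: the set $\Btau(C)$ is contiguous, i.e.\ it equals $\{\Btau_{\mathrm{min}}(C), \Btau_{\mathrm{min}}(C)+1, \ldots, \Btau_{\mathrm{max}}(C)\}$. Since a topological order is a permutation, its entries are pairwise distinct, so $\Btau(C)$ is a set of $|C|$ distinct consecutive integers; hence $\Btau_{\mathrm{max}}(C) - \Btau_{\mathrm{min}}(C) = |C| - 1$. Because $\Btau_u,\Btau_v \in [\Btau_{\mathrm{min}}(C),\Btau_{\mathrm{max}}(C)]$, we get
\[
  \Bw_{\cH}(e) = |\Btau_v - \Btau_u| \le \Btau_{\mathrm{max}}(C) - \Btau_{\mathrm{min}}(C) = |C| - 1 \le |C|,
\]
which is exactly the claimed bound (in fact with a slightly stronger constant).

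There is essentially no obstacle here: the observation is an immediate consequence of the two definitions, and the only point requiring a moment's care is noting that ``$\Btau(C)$ contiguous'' together with ``$\Btau$ a permutation (so distinct entries)'' pins down the spread of $\Btau$ over $C$ to be exactly $|C|-1$. The one structural fact being used is that a level-$i$ expanding edge has \emph{both} endpoints inside a single level-$i$ expander, which is precisely guaranteed by property~2 of \cref{def:expander-hierarchy} and would fail without it.
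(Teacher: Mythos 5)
Your proof is correct and is exactly the intended argument; the paper states this as an unproved observation because it follows directly from the definitions in the way you describe (both endpoints of a level-$i$ expanding edge lie in the same level-$i$ expander $C$ by property~2 of \cref{def:expander-hierarchy}, and the $\cH$-respecting property pins $\Btau(C)$ to $|C|$ consecutive integers, giving $|\Btau_v-\Btau_u|\le|C|-1$).
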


Recall that the running time of our push-relabel algorithm depends on the sum of the inverses of the edge weights, which we claim below is small.

\begin{claim}
  For a simple graph $G$ it holds that $\sum_{e \in E}\frac{1}{\Bw_\cH(e)} = O(n\log{n})$.
  \label{claim:sum-edge-weights-inv}
\end{claim}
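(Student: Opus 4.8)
The plan is to bound $\sum_{e \in E} 1/\Bw_\cH(e)$ by grouping edges according to one of their endpoints and showing that the contribution of the edges incident to any fixed vertex is $O(\log n)$. Recall that $\Bw_\cH(e) = |\Btau_v - \Btau_u|$ for $e = (u,v)$, where $\Btau \in \N^V$ is an $\cH$-respecting topological order; since $\Btau$ is a topological order of the (acyclic) $D$ and respects the SCC structure, in particular $\Btau$ is injective on $V$ (distinct vertices get distinct labels), so without loss of generality $\Btau$ is a bijection from $V$ onto $[n]$. Hence every edge $e = (u,v)$ has $\Bw_\cH(e) = |\Btau_v - \Btau_u| \in \{1, 2, \ldots, n-1\}$, a positive integer.

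The key step is the following counting argument. Fix a vertex $w$ with $\Btau_w = j$. I would bound $\sum_{e \in \delta(w)} 1/\Bw_\cH(e)$. Since $G$ is simple, for each other vertex $x$ there is at most one edge $(w,x)$ and at most one edge $(x,w)$ in $E$; thus at most two edges of $\delta(w)$ have $\Bw_\cH$-value equal to any particular $d \in \{1,\ldots,n-1\}$ (namely the potential edges between $w$ and the vertex with label $j+d$, and between $w$ and the vertex with label $j-d$). Therefore
\[
  \sum_{e \in \delta(w)} \frac{1}{\Bw_\cH(e)} \;\le\; 2\sum_{d=1}^{n-1} \frac{1}{d} \;=\; O(\log n).
\]
Summing over all $w \in V$ counts each edge exactly twice (once per endpoint), so $\sum_{e \in E} 1/\Bw_\cH(e) \le \tfrac12 \sum_{w \in V} O(\log n) = O(n \log n)$, as claimed.

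The only point that needs a little care — and the main (minor) obstacle — is justifying that $\Btau$ may be taken to be injective, i.e.\ that no edge has weight $0$, which is also needed for $\Bw_\cH \in \N_{>0}^E$ to even make sense. For a DAG edge $(u,v) \in D$ injectivity on that pair is immediate from $\Btau_u < \Btau_v$. For an expanding edge $e = (u,v) \in X_i$, by definition $u$ and $v$ lie in the same SCC $C$ of $G_i = G \setminus X_{>i}$; an $\cH$-respecting $\Btau$ assigns $C$ a contiguous block of $|C|$ integers, and within that block $\Btau$ must be injective (a topological order, when restricted appropriately, still separates vertices; and \cref{lemma:compute-respecting-topo} produces such a $\Btau$). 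So every edge's two endpoints get distinct labels and $\Bw_\cH(e) \ge 1$. The remainder is the harmonic-sum estimate above, which is routine.
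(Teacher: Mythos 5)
Your proof is correct and uses essentially the same argument as the paper: both bound the sum by pairing the harmonic series with the observation that a simple graph has at most two edges between any pair of vertices (the paper phrases it as a double sum over ordered pairs of $\Btau$-labels, while you group by endpoint, but it is the same $O(n\log n)$ estimate). Your extra paragraph verifying injectivity of $\Btau$ (hence $\Bw_\cH \ge 1$) is a fair point that the paper leaves implicit in the definition of a topological order $\Btau\in[n]^V$.
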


\begin{proof}
    Since the graph is simple, there is at most two edges (one in each direction) between a pair of vertices $\{u,v\}$.
    Hence,
    \ifnum\cameraready=0
    \[
    \sum_{e\in E} \frac{1}{\Bw_{\cH}(e)}
    \le \sum_{\substack{\Btau_u, \Btau_v \in [n]\\\Btau_u\neq \Btau_v}}
    \frac{1}{|\Btau_u-\Btau_v|} = 2\sum_{i=1}^{n}\sum_{j=i+1}^{n}\frac{1}{j-i} = O(n\log n).\qedhere
    \]
    \else
    \begin{align*}
    \sum_{e\in E} \frac{1}{\Bw_{\cH}(e)}
    &\le \sum_{\substack{\Btau_u, \Btau_v \in [n]\\\Btau_u\neq \Btau_v}}
    \frac{1}{|\Btau_u-\Btau_v|} \\
    &= 2\sum_{i=1}^{n}\sum_{j=i+1}^{n}\frac{1}{j-i} = O(n\log n).\qedhere
    \end{align*}
    \fi
\end{proof}

\para{The Maximum Flow Algorithm}
The key structural lemma that our max-flow algorithm relies on is that it is without loss of optimality for us to focus only on flow paths that are relatively short with respect to the weight function $\Bw_\cH$.
More concretely, by restricting our attention to flows of average $\Bw_{\cH}$-length $\widetilde{O}\left(n/\phi\right)$, we only lose a near-constant factor in its value (compared to the optimal maximum flow).

\begin{theorem}
  Given a flow instance $\cI = (G, \Bc, \Bsource, \Bsink)$ and a $\phi$-expander hierarchy $\cH$ of $(G,\Bc)$ of height $\eta$, for any feasible integral $(\Bsource, \Bsink)$-flow $\Bf$ there exists a feasible (not necessarily integral) $(\Bsource, \Bsink)$-flow $\Bf^\prime$ with $|\Bf^\prime| \geq \frac{1}{\eta+1}|\Bf|$ such that $\Bw_{\cH}(\Bf^\prime) \leq |\Bf^\prime| \cdot O\left(n \cdot \frac{\eta^2\log n}{\phi}\right)$.
  \label{thm:good-weight-function}
\end{theorem}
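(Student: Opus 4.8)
The plan is to iteratively shortcut the flow $\Bf$ level by level, starting from the top level $X_\eta$ and descending. Concretely, I would first apply \cref{cor:fractional-implies-integral} (or just work with the given integral $\Bf$) and decompose $\Bf$ into flow paths via \cref{fact:path-decomposition}. I then build a sequence of flows $\Bf = \Bg_{\eta+1} \to \Bg_\eta \to \cdots \to \Bg_1$, where at step $i$ (processing level $i$) I reroute the flow inside every SCC $C$ of $G_i = G \setminus X_{>i}$ so that in the resulting flow $\Bg_i$, every flow path intersects $X_i \cap C$ in at most $\widetilde{O}(1/\phi)$ edges, and intersects $C$ itself (for the purpose of the next levels' bookkeeping) in a controlled way. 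The rerouting at each level is exactly the ``all-to-all rerouting with less demand per edge'' procedure sketched in the overview: for each SCC $C$, split the current flow paths into those already short inside $C$ (leave them alone) and those that are long; for each long path take its first $k$ and last $k$ vertices in $P \cap C$, place $1/k$ supply/demand on each, and invoke the scaled version of \cref{fact:expander-routing-overview} using that $C$ is a $\phi$-expander with respect to $X_i$ — this is where Property 2 of that fact (few terminal, i.e. $X_i$-, edges per path) is crucial. Choosing $k = \Theta(\log n / \phi)$ (or perhaps $k = \Theta(\eta \log n/\phi)$ to absorb the $\eta$ levels), the congestion blows up by only a $(1 + 1/(c\eta))$ factor per level, so after all $\eta$ levels the total congestion is $O(1)$ times the starting congestion. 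Finally scale the flow down by this $O(1)$ congestion factor and by the $(\eta+1)$ factor lost from the approximation, and relabel which single edge-set of $\Bw_{\cH}$-weight one it hits.

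The second ingredient is to convert the ``few edges inside each SCC at each level'' guarantee into the claimed $\Bw_{\cH}$-length bound. For a single flow path $P$ in the final flow $\Bf'$, write $\Bw_{\cH}(P) = \sum_{e \in P \cap D}\Bw_{\cH}(e) + \sum_{i}\sum_{e \in P \cap X_i}\Bw_{\cH}(e)$. For the expanding edges: by \cref{obs:back-edge-larger-than-exp-edge}, a level-$i$ expanding edge inside SCC $C$ has weight at most $|C|$; since $P$ uses at most $\widetilde{O}(1/\phi)$ such edges per SCC $C$, and the SCCs at level $i$ are vertex-disjoint with $\sum_C |C| \le n$, the level-$i$ expanding edges contribute $\widetilde{O}(n/\phi)$, and summing over $\eta$ levels gives $\widetilde{O}(\eta n/\phi)$. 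For the DAG edges: because $\Btau$ is $\cH$-respecting and a topological order of $D$, along any subpath of $P$ that does not use an expanding edge the $\Btau$-values of DAG edges are monotonically increasing, so such a subpath's DAG-edge weight telescopes to at most $n$; the total number of ``resets'' (expanding edges on $P$) is the total number of expanding edges, which is $\widetilde{O}(\eta/\phi)$, so the DAG contribution is $\widetilde{O}(\eta n/\phi)$. Altogether $\Bw_{\cH}(P) = \widetilde{O}(\eta^2 n/\phi)$ matching the $O(n \cdot \eta^2 \log n / \phi)$ target (the extra $\eta$ presumably comes from the per-level $\widetilde{O}(1/\phi)$ bound itself carrying an $\eta$, or from the choice of $k$). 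Averaging over all flow paths weighted by their flow value gives the bound on $\Bw_{\cH}(\Bf')$.

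The main obstacle — and the subtle point the overview flags — is controlling the rerouting when we are already rerouting a flow of congestion $c$ rather than a congestion-$1$ flow, combined with the fact that rerouting at level $i$ must not destroy the guarantees established at higher levels $i+1, \ldots, \eta$. Rerouting inside an SCC $C$ of $G_i$ keeps the flow supported on edges with both endpoints in $C$, hence inside a single SCC of each $G_j$ for $j > i$ (since the level-$i$ SCCs refine the level-$j$ SCCs); so we only need that the rerouting does not increase the number of $X_j$-edges used per path for $j > i$ — but the new flow inside $C$ uses only $X_i$-edges, $X_{<i}$-edges, and $D$-edges internal to $C$, none of which are in $X_j$, so that count can only go down. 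The real delicacy is the supply/demand feasibility for \cref{fact:expander-routing-overview}: after placing $1/k$ per vertex over the first/last $k$ vertices of a long path, the total supply on a vertex $v$ is at most $\cong(\Bg_{i+1})(v)/k \le c/k \le \deg_{X_i}(v)$ only if $k$ is large enough relative to $c$, and $c$ itself grows (slowly) with the level; one must track this dependency carefully and verify the geometric-series argument for the congestion closes with the right constants. I expect the bulk of the real work to be in stating and proving the correct single-level rerouting lemma (analogous to \cref{claim:overview-one-level-hierarchy} but for an arbitrary starting congestion and with the cross-level non-interference spelled out) and then a clean induction over levels; the length accounting in the second paragraph is routine once that lemma is in hand.
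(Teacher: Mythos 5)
Your proposal matches the paper's proof essentially step for step: reroute top-down over levels, at each level split paths into short/long inside each SCC, reroute long paths via the ``all-to-all with $1/k$ demand per vertex'' trick using the expander-routing lemma, control congestion via a $(1+1/\eta)$ multiplicative blow-up per level, and then charge DAG edges with a telescoping potential argument over the $\cH$-respecting topological order. The only imprecision is your remark that the final congestion is ``$O(1)$ times the starting congestion'': the paper's per-level rerouting lemma produces congestion $\lceil\kappa\rceil(1+1/\eta)$ (the ceiling is needed to pass back to an integral flow on a subdivided graph before the next level), and this ceiling forces the accumulation to $\kappa(\eta+1)$ rather than $O(\kappa)$ --- which is precisely where the $\frac{1}{\eta+1}$ value loss comes from, not from a separate ``approximation'' step.
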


We will prove \cref{thm:good-weight-function} in \cref{sec:short-flow-exists}.
It essentially reduces the maximum flow problem to constructing an expander hierarchy.
\ifnum\cameraready=0
In \cref{sec:nested-expander-decomposition} we prove such a theorem as follows.
\else
In \cite[Section 7]{BernsteinBST24} we prove such a theorem as follows.
\fi

\ifnum\cameraready=0
\begin{restatable*}{theorem}{NestedExpanderHierarchyCorollary}
  There is a randomized algorithm that, given an $n$-vertex capacitated simple graph $(G, \Bc)$, with high probability constructs a $1/n^{o(1)}$-expander hierarchy $\cH = (D, X_1, \ldots, X_{\eta})$ of $(G, \Bc)$ with $\eta = O(\log n)$ in $n^{2+o(1)}$ time.
  \label{cor:nested-expander-hierarchy}
\end{restatable*}
\else
\begin{restatable}{theorem}{NestedExpanderHierarchyCorollary}
  There is a randomized algorithm that, given an $n$-vertex capacitated simple graph $(G, \Bc)$, with high probability constructs a $1/n^{o(1)}$-expander hierarchy $\cH = (D, X_1, \ldots, X_{\eta})$ of $(G, \Bc)$ with $\eta = O(\log n)$ in $n^{2+o(1)}$ time.
  \label{cor:nested-expander-hierarchy}
\end{restatable}
\fi

We show how \cref{cor:nested-expander-hierarchy} in combination with \cref{thm:good-weight-function} proves our main theorem. %

\begin{restatable}[Restatement of \Cref{thm:main}]{theorem}{MaxFlow}
  There is a randomized augmenting-path-based algorithm that solves the maximum $(s, t)$-flow problem on an $n$-vertex capacitated simple graph with capacities bounded by $U$ in $n^{2+o(1)}\log U$ time with high probability.
  \label{thm:max-flow}
\end{restatable}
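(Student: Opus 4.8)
The plan is to compose the three components already developed---the weighted push-relabel algorithm of \cref{thm:push-relabel-main-theorem}, the guarantee that an expander hierarchy induces a good weight function (\cref{thm:good-weight-function}), and the $n^{2+o(1)}$-time construction of a directed expander hierarchy (\cref{cor:nested-expander-hierarchy})---into a single augmenting-path loop, and then wrap the whole thing in capacity scaling. First I would dispose of large capacities: by the standard capacity-scaling reduction of \cref{appendix:capacity-scaling} it is enough to solve, in $n^{2+o(1)}$ time, a max-$(s,t)$-flow instance whose capacities satisfy $\Bc \le n^2$; that reduction is what contributes the $\log U$ factor. Within such an instance I also replace the formal source/sink $\Bsource_s = \infty\cdot\Bone_s$, $\Bsink_t = \infty\cdot\Bone_t$ by $F\cdot\Bone_s$ and $F\cdot\Bone_t$ with $F \defeq \sum_{e\in\delta^+(s)}\Bc(e) \le n^3$; since at most $F$ units ever leave $s$, the maximum flow value is unchanged, and now all demands are integral and $\poly(n)$-bounded.

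For one $\poly(n)$-capacitated instance the algorithm maintains a feasible integral flow $\Bf$, initially $\Bzero$, and repeats the following. Let $G_{\Bf}$ be the current residual graph; it has at most two parallel copies of each directed edge, so I merge parallel copies by summing their capacities to get a simple $\poly(n)$-capacitated graph $H$ (flows in $H$ correspond to flows in $G_\Bf$ by greedily splitting flow among the merged copies). I then invoke \cref{cor:nested-expander-hierarchy} on $H$ to obtain, with high probability in $n^{2+o(1)}$ time, a $\phi$-expander hierarchy $\cH$ with $\phi = 1/n^{o(1)}$ and $\eta = O(\log n)$ levels; I compute an $\cH$-respecting topological order via \cref{lemma:compute-respecting-topo}, set $\Bw \defeq \Bw_\cH$, and set $h \defeq \Theta\!\left(n\eta^2\log n/\phi\right) = n^{1+o(1)}$. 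Running \alg{PushRelabel}{$H, \Bc_H, \Bsource_\Bf, \Bsink_\Bf, \Bw, h$} returns an integral feasible flow $g$ in $\tO\!\big(|E(H)| + n + \sum_{e} h/\Bw(e)\big)$ time, which by \cref{claim:sum-edge-weights-inv} is $\tO(n^2 + h\cdot n\log n) = n^{2+o(1)}$; hence one iteration costs $n^{2+o(1)}$. Finally I augment, setting $\Bf \gets \Bf + g$ (translating the augmentation on $H$ back to $G_\Bf$ edge by edge), and recurse on the new residual graph. Note every unit of $g$ is pushed along an augmenting path traced by \cref{alg:push-relabel}, so the composite algorithm is genuinely augmenting-path-based and keeps $\Bf$ integral throughout.

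To see that each iteration makes constant-factor progress, apply \cref{thm:good-weight-function} to $H$ with $\Bf^{\mathrm{opt}}$, an integral maximum flow of $H$, in the role of the feasible integral flow: it yields a feasible (possibly fractional) flow $\Bf'$ with $|\Bf'| \ge \frac{1}{\eta+1}|\Bf^{\mathrm{opt}}| = \frac{1}{\eta+1}\mathrm{OPT}(H)$ and average $\Bw_\cH$-length at most $h$. Hence the optimal short flow $\Bf^{*}_{\Bw,h}$ of \cref{thm:push-relabel-main-theorem}\labelcref{item:push-relabel:approximation} has value $\ge |\Bf'|$, so the returned $g$ satisfies $|g| \ge \frac{1}{6}|\Bf^{*}_{\Bw,h}| \ge \frac{1}{6}|\Bf'| \ge \frac{1}{6(\eta+1)}\mathrm{OPT}(H)$. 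By \cref{fact:flow-in-residual-graph}, applied along the sequence of augmentations, after this iteration the residual maximum-flow value drops to at most $\big(1-\frac{1}{6(\eta+1)}\big)$ times its previous value. Since that value is a nonnegative integer bounded by $\poly(n)$ and $\eta = O(\log n)$, after $O(\eta\log n) = \polylog(n) = n^{o(1)}$ iterations it is $<1$, hence $0$, and $\Bf$ is then an exact maximum flow of the instance. Thus one $\poly(n)$-capacitated instance is solved in $n^{o(1)}\cdot n^{2+o(1)} = n^{2+o(1)}$ time, and the whole algorithm in $n^{2+o(1)}\log U$ time. The ``with high probability'' claim follows by a union bound over the at most $n^{o(1)}\log U$ calls to the randomized construction of \cref{cor:nested-expander-hierarchy} (enlarging the constant in its success probability), which is the only randomness used.

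The real content lives inside the two black boxes \cref{thm:good-weight-function} and \cref{cor:nested-expander-hierarchy}; for this final composition the only delicate points are bookkeeping ones: checking that the residual graph stays simple-after-merging and $\poly(n)$-capacitated so that \cref{cor:nested-expander-hierarchy} and \cref{claim:sum-edge-weights-inv} apply, and verifying that only $n^{o(1)}$ sequential iterations are needed. The latter hinges on the directed-graph phenomenon---emphasized in \cref{sec:overview} and \emph{false} for undirected graphs---that a $\Theta(1/\log n)$-approximate maximum-flow algorithm bootstraps to an exact one by recursing on the residual graph. I expect the capacity-scaling reduction of \cref{appendix:capacity-scaling} and this residual-recursion argument to be routine, so the overall proof is essentially a short composition of the pieces already in hand.
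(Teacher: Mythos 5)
Your proposal is correct and follows essentially the same route as the paper's proof: capacity-scale to $\poly(n)$ capacities, build the hierarchy via \cref{cor:nested-expander-hierarchy}, run weighted push-relabel with $\Bw_{\cH}$ and $h = \Theta(n\eta^2\log n/\phi)$, use \cref{thm:good-weight-function} plus \cref{thm:push-relabel-main-theorem}\labelcref{item:push-relabel:approximation} to certify an $O(\log n)$-approximation, and repeat on the residual graph for $O(\log^2 n)$ rounds. The paper's write-up is terser---it does not explicitly spell out the parallel-edge merging in the residual graph or the finite source/sink replacement that you flag---but those are exactly the routine bookkeeping steps you anticipated, and your handling of them is consistent with what the paper implicitly does.
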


\begin{proof}
  Via standard capacity scaling techniques
  \ifnum\cameraready=0
  (see \cref{appendix:capacity-scaling}),
  \else
  (see \cite[Appendix B]{BernsteinBST24}),
  \fi
  by paying an $O(\log U)$ multiplicative overhead in the overall running time we can assume all capacities are bounded by $n^2$ (as in our definition of capacitated graphs in \cref{sec:prelim}).
  Therefore, the maximum $(s, t)$-flow can have value at most $n^4$.
  Let $\Bsource_s = n^4 \cdot \Bone_s$ and $\Bsink_t = n^4 \cdot \Bone_t$ be the $(s, t)$-flow demand.
  We first invoke \cref{cor:nested-expander-hierarchy} to construct a $\phi$-expander hierarchy $\cH$ of $(G, \Bc)$ for some $\phi = 1/n^{o(1)}$, with height $\eta = O(\log n)$.
  Then, using the weight function $\Bw_{\cH}$ induced by $\cH$, we run the push-relabel algorithm (\cref{thm:push-relabel-main-theorem}) on the flow instance with $\Bw_{\cH}$ and height $h = \Theta\left(\frac{n \eta^2 \log n}{\phi}\right)$, obtaining a flow $\Bf$.
  We then simply recurse on the residual instance $(G_{\Bf}, \Bc_{\Bf}, \Bsource_{s,\Bf}, \Bsink_{t,\Bf})$ until there is no augmenting path.
  By \cref{thm:good-weight-function} and \cref{thm:push-relabel-main-theorem}\labelcref{item:push-relabel:approximation}, the flow $\Bf$ is an $O(\log n)$-approximation to the maximum $(s, t)$-flow, which means that the maximum $(s, t)$-flow value decreases by a factor of $\left(1-\frac{1}{O(\log n)}\right)$ each time.
  As such, after $O(\log^2 n)$ iterations the maximum flow value will drop to zero.
  In each iteration we spend $n^{2+o(1)}$ time constructing $\cH$ by \cref{cor:nested-expander-hierarchy} and $\tO(\frac{n^2}{\phi})$ time (which is also $n^{2+o(1)}$ by our choice of $\phi = n^{-o(1)}$) in the push-relabel algorithm by \cref{thm:push-relabel-main-theorem} and \cref{claim:sum-edge-weights-inv}.
  This proves the theorem.
\end{proof}

\subsection{Existence of Short Flow} \label{sec:short-flow-exists}

To prove 
\cref{thm:good-weight-function}, we will instead show the existence of a flow $\Bf^{\prime}$ with low average path length that routes the \emph{same} demand as $\Bf$ does, at the cost of increasing   congestion by $O(\eta)$ factor.

\begin{restatable}{lemma}{GoodWeightFunctionCongested}
  Given a flow instance $\cI = (G, \Bc, \Bsource, \Bsink)$ and a $\phi$-expander hierarchy $\cH$ of $G$ (of height $\eta$), for any integral $(\Bsource, \Bsink)$-flow $\Bf$ of congestion $\kappa \in \N$ there exists an equivalent flow $\Bf^\prime$ of congestion $(\eta+1)\kappa$ such that $\Bw_{\cH}(\Bf^\prime) \leq |\Bf^\prime| \cdot O\left(n\cdot \frac{\eta^2\log n}{\phi}\right)$.%
  \label{lemma:good-weight-function-congested}
\end{restatable}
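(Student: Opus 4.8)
The plan is to prove the stronger statement that, starting from $\Bf$ (which has congestion $\kappa$), we can build an equivalent flow $\Bf'$ of congestion $(\eta+1)\kappa$ such that for \emph{every} level $j \in [\eta]$ and every $C \in \SCC(G_j)$, every flow path of $\Bf'$ uses at most $k \defeq \tO(\eta/\phi)$ edges inside $C$; once this is established, \cref{obs:back-edge-larger-than-exp-edge} and a short $\Btau$-accounting give the weight bound. I would obtain $\Bf'$ by \emph{shortcutting} $\Bf$ one hierarchy level at a time, top-down, from level $\eta$ down to level $1$, maintaining the invariant: after processing levels $\eta, \eta-1, \dots, i+1$, the current flow $\Bf^{(i+1)}$ is equivalent to $\Bf$, has congestion at most $(\eta-i+1)\kappa$, and for every $j > i$ and every $C \in \SCC(G_j)$ uses at most $k$ edges inside $C$ per flow path. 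We set $\Bf^{(\eta+1)} \defeq \Bf$ and $\Bf' \defeq \Bf^{(1)}$.

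To pass from $\Bf^{(i+1)}$ to $\Bf^{(i)}$, I would process each $C \in \SCC(G_i)$ separately; by \cref{def:expander-hierarchy}, $C$ is a $\phi$-expander with respect to $X_i \cap C$. For a flow path $P$ whose intersection with $C$ is long (more than $2k$ edges inside $C$, over all of $P$'s excursions into $C$), keep the first and last $k$ of those edges and replace the middle portion by a short route. The key point is to reroute, not from one endpoint to another, but \emph{all-to-all from the ``early'' vertices to the ``late'' vertices, placing only $1/k$ units of supply/demand on each}: collecting these over all long paths into a single instance and applying a scaled version of \cref{fact:expander-routing-overview} with terminal set $X_i \cap C$. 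Since $\cong(\Bf^{(i+1)}) \le (\eta-i+1)\kappa \le (\eta+1)\kappa$, no vertex carries more than $(\eta+1)\kappa/k$ of this demand, so the resulting rerouting is feasible with congestion $O\big(\tfrac{(\eta+1)\kappa}{k} \cdot \tfrac{\log n}{\phi}\big)$, which is at most $\kappa$ once $k = \Theta\big(\tfrac{(\eta+1)\log n}{\phi}\big) = \tO(\eta/\phi)$. Splicing this rerouting into the kept early/late segments yields $\Bf^{(i)}$: equivalence is preserved because each spliced piece still carries one unit of flow across $C$ along every old path; the congestion grows by at most an additive $\kappa$ (the reroute flow on top of $\cong(\Bf^{(i+1)})$ coming from the un-rerouted pieces); every replacement path uses $\tO(1/\phi) \le k$ edges inside $C$, so every flow path now uses $\le 2k + \tO(1/\phi) = \tO(\eta/\phi)$ edges inside $C$; and since $C$ contains no edge of $X_{>i}$, the rerouting is confined to $C$ and cannot disturb the edge-counts already achieved at higher levels. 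Iterating down to level $1$ gives $\Bf'$ with congestion $(\eta+1)\kappa$ and the claimed per-SCC edge bound at all levels.

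For the weight bound, I would argue on each flow path $P$ of a path-decomposition of $\Bf'$: write $\Bw_{\cH}(P) = \sum_{(u,v) \in P} |\Btau_v - \Btau_u| = \Delta^{+} + \Delta^{-}$, the sums of the $\Btau$-increments and $\Btau$-decrements along $P$, and note that $\Delta^{+} - \Delta^{-} = \Btau_{\mathrm{end}} - \Btau_{\mathrm{start}} \in [-n, n]$, so $\Bw_{\cH}(P) \le 2\Delta^{-} + n$. Since every DAG edge strictly increases $\Btau$ (as $\Btau$ is $\cH$-respecting), every edge contributing to $\Delta^{-}$ is an expanding edge, say at level $j$ inside some $C \in \SCC(G_j)$, and by \cref{obs:back-edge-larger-than-exp-edge} (equivalently, by $\Btau$-contiguity of $C$) it contributes at most $|C|$. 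As $P$ uses at most $k = \tO(\eta/\phi)$ edges inside each such $C$ and $\sum_{C \in \SCC(G_j)} |C| = n$, the level-$j$ contribution to $\Delta^{-}$ is at most $\tO(\eta/\phi) \cdot n$; summing over the $\eta$ levels gives $\Delta^{-} = \tO(n\eta^2/\phi)$, hence $\Bw_{\cH}(P) = O(n\eta^2\log n/\phi)$. Summing over the decomposition yields $\Bw_{\cH}(\Bf') = \sum_P \Bw_{\cH}(P) \le |\Bf'| \cdot O(n\eta^2\log n/\phi)$.

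The main obstacle I anticipate is the congestion bookkeeping in the inductive step: one must check carefully that (i) shortcutting inside a level-$i$ SCC $C$ is genuinely confined to $C$ and so leaves intact the per-SCC edge-count guarantees already obtained at levels $> i$ — in particular, handling flow paths that enter and leave $C$ several times (treating each maximal $C$-internal sub-path, or all of $P$'s vertices in $C$, uniformly), and placing the $1/k$ supply/demand only on the $X_i$-incident vertices so that \cref{fact:expander-routing-overview} applies — and (ii) that the per-level congestion overhead is truly an additive $O(\kappa)$ rather than the naive multiplicative $\tO(1/\phi)$ blow-up, which is precisely what spreading the demand over $k = \tO(\eta/\phi)$ vertices per path buys. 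Extending from strongly connected $G$ to the per-SCC definition, and from $s$–$t$ flows to general $(\Bsource, \Bsink)$-flows, should be routine on top of this, and \cref{thm:good-weight-function} then follows by scaling $\Bf'$ down by $(\eta+1)$.
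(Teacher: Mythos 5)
Your proposal follows essentially the same route as the paper: process the hierarchy top-down, shortcut long paths inside each level-$i$ expander $C$ by rerouting all-to-all from the first $k$ to the last $k$ $X_i$-incident vertices with $1/k$ demand on each (this is precisely the paper's Lemma~\ref{lemma:induction} with $\xi \approx \eta\log n/\phi$, invoking Lemma~\ref{lemma:expander-routing-respecting}), track that the rerouting at level $i$ is confined to $G_i[C]$ and hence never increases flow on $X_{>i}$, and close out the DAG-edge contribution by the $\Btau$-telescoping $\Delta^{+}-\Delta^{-}\in[-n,n]$, which is exactly Lemma~\ref{lemma:dag-edge-length}.

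There is, however, one substantive overclaim. You promise that after all levels are processed, \emph{every flow path} of $\Bf'$ uses at most $k=\tO(\eta/\phi)$ edges inside each $C\in\SCC(G_j)$, for every level $j$. That per-path invariant is not delivered by the construction. The rerouting flow from Lemma~\ref{lemma:expander-routing-respecting} only carries an \emph{aggregate} guarantee $\sum_{e\in F}\Bf_{\mathrm{route}}(e)\le|\Bf_{\mathrm{route}}|\cdot O(\log m/\phi)$; it is built by augmenting paths that can cancel and recombine, and the all-to-all rerouting does not pair the early segment of one original path with the late segment of the same path. When you decompose the spliced flow into paths, an individual flow path can pick up far more than $\tO(1/\phi)$ edges of $F$. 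The paper sidesteps this by never asserting a per-path bound on the expanding-edge count: Lemma~\ref{lemma:induction} and Corollary~\ref{cor:induction} only bound $\sum_{e\in X_i}\Bf'(e)\Bw_{\cH}(e)$ in aggregate, which is exactly what is needed since $\Bw_{\cH}(\Bf')=\sum_e\Bf'(e)\Bw_{\cH}(e)$; the one genuinely per-path ingredient is Lemma~\ref{lemma:dag-edge-length}, which charges DAG-weight on a single path to non-DAG-weight on that same path. Your proof should be rephrased accordingly: keep the path decomposition to apply the $\Delta^+-\Delta^-$ telescoping, but bound $\sum_P\Delta^-_P$ in aggregate by $\sum_{e\in E\setminus D}\Bf'(e)\Bw_{\cH}(e)$ rather than bounding $\Delta^-_P$ per path. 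Relatedly, the classification of ``long'' paths should be by the count of $X_i$-edges inside $C$ rather than all edges inside $C$ — otherwise you could be forced to reroute a path that has too few $X_i$-incident vertices to spread the $1/k$ demand over, which is needed for the respecting-demand hypothesis of Lemma~\ref{lemma:expander-routing-respecting}; you flag this concern yourself at the end, and the fix is to take endpoints of the first and last $\xi$ $F$-edges as the paper does.
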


\cref{thm:good-weight-function} follows from \cref{lemma:good-weight-function-congested} by simply scaling down $\Bf'$ so that it becomes feasible.

\begin{proof}[Proof of \cref{thm:good-weight-function}]
  By \cref{lemma:good-weight-function-congested} with $\kappa \defeq 1$, there is an equivalent flow $\Bf^{\prime\prime}$ with congestion $(\eta+1)$.
  The flow $\Bf^\prime \defeq \frac{\Bf^{\prime\prime}}{\eta+1}$ is a feasible $(\Bsource, \Bsink)$-flow that satisfies $\Bw_{\cH}(\Bf^\prime) \leq |\Bf^\prime| \cdot O\left(\frac{n\eta^2 \log n}{\phi}\right)$.
\end{proof}

The rest of the section proves \cref{lemma:good-weight-function-congested}.
Before diving into the actual proof, we briefly outline the strategy here for better intuition. We start with the not-necessarily short flow $\Bf$, and then ``short-cut'' some parts of the flow, making the flow shorter at the cost of some congestion.
While our arguments here are somewhat algorithmic, we note that for our maximum flow algorithm we only need the \emph{existence} of a short flow, and that the maximum flow algorithm does not need to know the flow $\Bf$ to start with.

To make each flow path short, our goal is to start from the topmost level down, making sure that in each level $i$ and each level-$i$ expander $C$, the flow path only uses $\widetilde{O}(1/\phi)$ level-$i$ expanding edges in $C$. If this held for all levels and all expanders within those levels, each flow path would have length $\tO(n/\phi)$.
Therefore, for flow paths that use a large number of such expanding edges, we have to reroute and short-cut them, using the property of expanders, to reduce the length.
Rerouting inevitably incurs congestion in the resulting flow, and if done na\"ively the congestion will grow by a multiplicative factor of $1/\phi$ in each level. One key component in our analysis is showing that a more careful way of rerouting actually saves us from this congestion blow-up.
Note that we will first prove most of the statements for integral flows, as they admit decomposition into paths that are nice to work with. The statements are then easily extended to fractional flows by simply scaling the flow and capacities up to make them integral.

In the remainder of the section we consider a fixed $\phi$-expander hierarchy $\cH = (D, X_1, \ldots, X_{\eta})$ of $G$ given to us.
By the equivalence between the uncapacitated and capacitated definitions of $\phi$-expanding (see \cref{fact:equivalence}), \emph{we will assume in our analysis (without loss of generality) that $G$ is a unit-capacitated multi-graph} instead of a capacitated simple graph. 
Recall from our definition of capacitated graphs in \cref{sec:prelim} that the capacities are bounded by $n^2$, and thus after replacing each capacitated edge with multiple parallel edges the graph contains $m \leq n^4$ edges. We need this just so that $\log m = O(\log n)$.

\paragraph{Charging of DAG-edges.}
We begin by showing that to bound the $\Bw_{\cH}$-length of any path in $G$, it suffices to bound the contribution of expanding edges to its $\Bw_{\cH}$-length.
\ifnum\cameraready=0
To be more flexible for usage also in \cref{sec:sparse-cut},
\else
To be more flexible for usage also in \cite[Section 6]{BernsteinBST24},
\fi
we consider a slightly more general setting.
The following lemma shows that the hierarchy allows us to charge the weight of DAG edges to non-DAG edges.

\begin{lemma}
  \label{lemma:dag-edge-length}
  Suppose $G = (V,E)$ is a graph, $D\subseteq E$ is a DAG, 
  $\Btau$ a topological order of $D$,
  and $\Bw$ is a weight function
  such that the weight of an edge $e = (u,v)$ satisfies $\Bw(e) \ge |\Btau_{v}-\Btau_{u}|$, with equality if $e\in D$.
  Then for any path $P$ in $G$ it holds that
  \[
    \Bw(P\cap D) \le n + \Bw(P\setminus D).
  \]
\end{lemma}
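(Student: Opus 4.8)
The plan is to track the quantity $\Btau$ along the path $P$ and use a telescoping/potential argument. Write $P = (v_0, v_1, \ldots, v_\ell)$ and for each edge $e_j = (v_{j-1}, v_j)$ let $\Delta_j \defeq \Btau_{v_j} - \Btau_{v_{j-1}}$. The hypothesis says $\Bw(e_j) \ge |\Delta_j|$ for every $j$, with equality when $e_j \in D$; in particular every DAG edge has $\Delta_j > 0$ and contributes exactly $\Delta_j$ to $\Bw(P \cap D)$. The key observation is that since the $\Btau$-values all lie in $[1,n]$ (or more precisely the range of $\Btau$ has width at most $n$), the total net increase $\sum_{j} \Delta_j = \Btau_{v_\ell} - \Btau_{v_0}$ is at most $n$. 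Hence the total positive variation and total negative variation of $\Btau$ along $P$ are related by
\[
  \sum_{j:\Delta_j > 0} \Delta_j \;\le\; n + \sum_{j:\Delta_j < 0} |\Delta_j|.
\]

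The next step is to bound each side in terms of $\Bw$. Since $D$-edges always have $\Delta_j > 0$, we have $\Bw(P \cap D) = \sum_{e_j \in D} \Delta_j \le \sum_{j : \Delta_j > 0} \Delta_j$. On the other side, every edge with $\Delta_j < 0$ must be a non-DAG edge (because $D$-edges increase $\Btau$), so $\sum_{j : \Delta_j < 0} |\Delta_j| \le \sum_{e_j \notin D} |\Delta_j| \le \sum_{e_j \notin D} \Bw(e_j) = \Bw(P \setminus D)$, using $\Bw(e_j) \ge |\Delta_j|$. Chaining these inequalities with the displayed bound gives
\[
  \Bw(P \cap D) \;\le\; \sum_{j:\Delta_j>0}\Delta_j \;\le\; n + \sum_{j:\Delta_j<0}|\Delta_j| \;\le\; n + \Bw(P \setminus D),
\]
which is exactly the claim.

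One small technical point I would nail down carefully: $P$ is a path, but the lemma is stated for a general graph $G$ that may have multi-edges, and a "path" here might conceivably revisit vertices depending on the paper's conventions — if $P$ is a simple path the telescoping sum $\sum_j \Delta_j$ is literally $\Btau_{v_\ell} - \Btau_{v_0} \le n$; if $P$ is a walk the same identity still holds since it only depends on the endpoints, so the argument is unaffected. The only genuinely load-bearing facts are (i) $D$-edges strictly increase $\Btau$ and contribute their full $\Delta_j$ to $\Bw$, and (ii) the range of $\Btau$ has width $\le n$, so I would make sure the definition of topological order in use guarantees $\Btau \in [n]^V$ or at least $\max \Btau - \min \Btau \le n$; since $\Btau$ is a permutation-like labeling into $\{1,\ldots,n\}$ this is immediate. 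I do not anticipate a real obstacle here — the lemma is essentially a bounded-variation bookkeeping argument — so the "hard part" is merely stating the telescoping cleanly and being explicit that negative $\Delta_j$ can only come from non-DAG edges.
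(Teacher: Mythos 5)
Your proof is correct and takes essentially the same route as the paper's: both track the potential $\Btau_{v_j}$ along the path, observe that the net change is at most $n$, and split the total variation into positive increments (which dominate the $D$-contribution) and negative increments (which only non-$D$ edges can produce and which are dominated by $\Bw(P\setminus D)$). The paper uses $\Phi^{(i)} = \Btau_{v_i}$ in place of your $\Delta_j$, but the argument is identical; your remark about paths versus walks is a harmless extra observation.
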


\begin{proof}
Suppose we walk along $P$ where $P=(v_{1},\dots,v_{k})$. Let $\Phi^{(i)}=\Btau_{v_{i}}$ be the potential that keeps track of how much we proceed in the topological order $\Btau$. The \emph{net} potential increase is $\Phi^{(k)}-\Phi^{(1)}\le n$. Whenever we walk through a DAG-edge $e$, the potential increases by $\Bw(e)$. So the \emph{total} potential increase is $\sum_{i:\Phi^{(i+1)}>\Phi^{(i)}}\Phi^{(i+1)}-\Phi^{(i)}\ge\Bw(P\cap D)$. The \emph{total} potential decrease is $\sum_{i:\Phi^{(i+1)}<\Phi^{(i)}}\Phi^{(i)}-\Phi^{(i+1)}\le\Bw(P\setminus D)$ because only non-DAG-edge $e$ may decrease the potential and it decreases by at most $\Bw(e)$. Therefore, we conclude 
\ifnum\cameraready=0
\begin{align*}
n & \ge\Phi^{(k)}-\Phi^{(1)}\\
 & =\left(\sum_{i:\Phi^{(i+1)}>\Phi^{(i)}}\Phi^{(i+1)}-\Phi^{(i)}\right)-\left(\sum_{i:\Phi^{(i+1)}<\Phi^{(i)}}\Phi^{(i)}-\Phi^{(i+1)}\right)\\
 & \ge\Bw(P\cap D)-\Bw(P\setminus D)
\end{align*}
\else
\begin{align*}
n & \ge\Phi^{(k)}-\Phi^{(1)}\\
 & =\left(\sum_{i:\Phi^{(i+1)}>\Phi^{(i)}}\Phi^{(i+1)}-\Phi^{(i)}\right) \\&-\left(\sum_{i:\Phi^{(i+1)}<\Phi^{(i)}}\Phi^{(i)}-\Phi^{(i+1)}\right)\\
 & \ge\Bw(P\cap D)-\Bw(P\setminus D)
\end{align*}
\fi
The lemma concludes by rearranging.
\end{proof}

\para{Routing Short Flow in Expanders}
\Cref{lemma:dag-edge-length} allow us to focus on bounding the length on the expanding edges. In the following sequence of lemmas, we show how to route a flow within a graph so that it uses only few expanding edges.
In particular, the lemma below shows that if an edge set $F$ is $\phi$-expanding in $G$, then for any routable demand we can almost reroute it in such a way that each flow path uses at most $O(\log m/\phi)$ edges in $F$.
\begin{lemma}
  Consider a routable  flow instance $\cI = (G, \Bsource, \Bsink)$ for a strongly connected $m$-edge $G$ in which $F \subseteq E(G)$ is $\phi$-expanding.
  Given any $\eps > 0$, there is a feasible integral $(\Bsource, \Bsink)$-flow $\Bf$ with value $|\Bf| \geq (1-\eps)\|\Bsource\|_1$ such that $\sum_{e \in F}\Bf(e) \leq |\Bf| \cdot O\left(\frac{\log m}{\eps\phi}\right)$.
  \label{lemma:expander-routing}
\end{lemma}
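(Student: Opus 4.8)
The plan is to prove \cref{lemma:expander-routing} via a greedy ``short augmenting path'' procedure whose correctness rests on a single ball-growing statement. Recall that $G$ is unit-capacitated; I would assume $\eps\in(0,1)$ (for $\eps\ge1$ the flow $\Bf=\Bzero$ already works) and, after scaling, that $\Bsource,\Bsink$ are integral. Fix a parameter $L\defeq\Theta(\log m/(\eps\phi))$, the constant to emerge from the analysis. For a flow $\Bf$, every edge of $F$ has exactly one residual copy in $G_{\Bf}$; let $\widehat F$ be the set of those copies, and call an augmenting path of $\cI_{\Bf}$ \emph{$F$-short} if it uses at most $L$ edges of $\widehat F$. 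The procedure: start from $\Bf=\Bzero$ and, while $\cI_{\Bf}$ admits an $F$-short augmenting path, augment $\Bf$ by one unit along it. This halts after at most $\|\Bsource\|_1$ steps with a feasible integral $(\Bsource,\Bsink)$-flow; each one-unit augmentation shifts $\sum_{e\in F}\Bf(e)$ by at most the number of $\widehat F$-edges on the chosen path (a forward copy by $+1$, a backward copy by $-1$), so on termination $\sum_{e\in F}\Bf(e)\le L\cdot|\Bf|=|\Bf|\cdot O(\log m/(\eps\phi))$. Hence everything reduces to showing that \emph{if $\Bf$ is a feasible integral flow of $\cI$ with $r\defeq\ex_{\Bf}(V)=\|\Bsource\|_1-|\Bf|>\eps\|\Bsource\|_1$, then $\cI_{\Bf}$ still has an $F$-short augmenting path}: this is exactly what forces $|\Bf|\ge(1-\eps)\|\Bsource\|_1$ when the loop stops.

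To prove that claim I would run a ball-growing argument in a residual graph with super-terminals. Attach to $G_{\Bf}$ a super-source $s^{*}$ with edges $(s^{*},v)$ of capacity $\Bsource_{\Bf}(v)$ and a super-sink $t^{*}$ with edges $(v,t^{*})$ of capacity $\Bsink_{\Bf}(v)$, all of $\widehat F$-length $0$; call this $G^{*}$. Letting $\Bf^{\star}$ be any feasible flow routing $\cI$, the difference $\Bf^{\star}-\Bf$ is, exactly as in the proof of \cref{lemma:push-relabel:approximation}, a feasible $(s^{*},t^{*})$-flow of value $r$ in $G^{*}$, so every $(s^{*},t^{*})$-cut of $G^{*}$ has at least $r$ edges. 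Suppose for contradiction that the $\widehat F$-distance from $s^{*}$ to $t^{*}$ in $G^{*}$ exceeds $L$, and let $B_j$ be the ball of $\widehat F$-radius $j$ around $s^{*}$, so $t^{*}\notin B_j$ for all $j\le L$. Since a residual edge of $\widehat F$-length $0$ out of $B_j$ would keep its head inside $B_j$, and no super-edge can leave $B_j$ (a super-sink edge would put $t^{*}$ in $B_j$, while all unsaturated sources already lie in $B_0\subseteq B_j$), every edge of $G^{*}$ leaving $B_j$ is a residual copy of an $F$-edge with head in $B_{j+1}\setminus B_j$; using $\deg_{\widehat F}=\deg_F$ this gives $\vol_F(B_{j+1})-\vol_F(B_j)\ge\#(\text{residual edges leaving }B_j)$. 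Moreover a flow-conservation computation on the cut $(B_j,\overline{B_j})$ — all non-$F$ edges of $G$ out of $B_j$ are saturated by $\Bf$, and all non-$F$ edges of $G$ into $B_j$ carry no $\Bf$-flow — gives $\#(\text{residual edges leaving }B_j)=|E_G(B_j,\overline{B_j})|-\Bfout(B_j)$, where $\Bfout(B_j)$ denotes the net $\Bf$-flow out of $B_j$; and by \cref{fact:flow} applied to $\overline{B_j}$ (whose $\Bf$-excess is $0$, as every unsaturated source lies in $B_j$), $\Bfout(B_j)=\abs_{\Bf}(\overline{B_j})-\Bsource(\overline{B_j})\le|\Bf|\le\|\Bsource\|_1$.

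It remains to combine these bounds. Write $\nu_j\defeq\vol_F(B_j)$ and $W\defeq\vol_F(V)\le2m$. The cut bound yields $\nu_{j+1}-\nu_j\ge r>\eps\|\Bsource\|_1$, and $\phi$-expansion of $F$ in $G$ yields $|E_G(B_j,\overline{B_j})|\ge\phi\min\{\nu_j,W-\nu_j\}$, hence also $\nu_{j+1}-\nu_j\ge\phi\min\{\nu_j,W-\nu_j\}-\|\Bsource\|_1$. It follows that $\nu_j$ rises from its starting value all the way to $W$ within $O(\log m/(\eps\phi))$ steps: whenever $\min\{\nu_j,W-\nu_j\}\ge2\|\Bsource\|_1/\phi$ the second bound dominates and $\min\{\nu_j,W-\nu_j\}$ climbs toward $W/2$ and then descends toward $0$ by multiplicative factors $1\pm\Theta(\phi)$, which is $O(\log m/\phi)$ steps; and each of the two end-regions where $\min\{\nu_j,W-\nu_j\}<2\|\Bsource\|_1/\phi$ is crossed within $O(1/(\eps\phi))$ steps using the additive bound $\nu_{j+1}-\nu_j>\eps\|\Bsource\|_1$. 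Taking $L$ to be this $O(\log m/(\eps\phi))$ bound forces $\nu_L=W$, so $\overline{B_L}$ meets no $F$-edge and therefore \emph{no} edge of $G^{*}$ leaves $B_L$ at all; but $(B_L,\overline{B_L})$ is an $(s^{*},t^{*})$-cut, which must have at least $r>0$ edges — a contradiction. This proves the ball-growing claim, and hence \cref{lemma:expander-routing}.

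I expect the crux to be the flow-conservation identity $\#(\text{residual edges leaving }B_j)=|E_G(B_j,\overline{B_j})|-\Bfout(B_j)$ and, more broadly, arranging for an argument carried out in the residual graph $G_{\Bf}$ (with its reversed edges) to exploit the $\phi$-expansion of $F$, which is a property of $G$ itself. This identity, paired with the crude estimate $\Bfout(B_j)\le\|\Bsource\|_1$, is exactly the bridge: the ``error'' $\|\Bsource\|_1$ is felt only inside the two short end-regions of the ball-growing, so it does not spoil the geometric growth that expansion supplies in the bulk. (A slicker-looking route through an LP for length-bounded flow runs into the well-known intractability of length-bounded cuts, so I would avoid it.)
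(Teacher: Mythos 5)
Your proof is correct. You and the paper share the high-level strategy — greedily augment along $F$-short paths, then ball-grow on the $F$-distance levels in the residual graph — but the ball-growing itself runs on a genuinely different axis. The paper pivots on a dichotomy for each level cut $S_{\le i}$: either the residual cut is at most an $\eps$-fraction of the original cut $E_G(S_{\le i},\overline{S_{\le i}})$, which via a short max-flow/min-cut argument in $G_{\Bf}$ already yields $|\Bf|\ge(1-\eps)\|\Bsource\|_1$, or the reverse holds and, combined with expansion, gives a clean multiplicative step $\vol_F(S_{\le i})\ge(1+\eps\phi)\vol_F(S_{\le i-1})$ on the lighter side; the $\eps$ is thus folded into the growth factor, the ball-growing is purely multiplicative with no additive loss, and $\ell/2 = 2\log m/(\eps\phi)$ steps overflow $m^2$. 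You skip the dichotomy and instead lower-bound the residual out-degree of $B_j$ in two independent ways — by $r=\|\Bsource\|_1-|\Bf|$ via the super-terminal min-cut, and by $|E_G(B_j,\overline{B_j})|-\Bfout(B_j)\ge\phi\min\{\nu_j,W-\nu_j\}-\|\Bsource\|_1$ via expansion plus residual-capacity bookkeeping — which leaves a $-\|\Bsource\|_1$ offset and forces your three-phase analysis (additive $\ge r>\eps\|\Bsource\|_1$ in the two tails where $\min\{\nu_j,W-\nu_j\}<2\|\Bsource\|_1/\phi$, multiplicative in the middle). Both land on $L=O(\log m/(\eps\phi))$; the paper's pivot just makes the counting uniform across all levels, while in yours the approximation ratio falls out directly from the contrapositive rather than from a separately stated claim. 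One small clarification for your write-up: the identity $\#(\text{residual out-edges of }B_j)=|E_G(B_j,\overline{B_j})|-\Bfout(B_j)$ is unconditional bookkeeping on residual capacities and holds for any cut; the observation that non-$F$ out-edges of $B_j$ are saturated and non-$F$ in-edges carry no flow is what you need to show all residual out-edges of $B_j$ are $F$-residuals (hence $\nu_{j+1}-\nu_j\ge\#(\text{residual out-edges})$), not to establish the identity itself.
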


\begin{proof}
  Let $\ell \defeq \frac{4\log{m}}{\eps \phi} = O\left(\frac{\log m}{\eps \phi}\right)$ be the target $F$-length, and let $\Bf$ be an integral flow in $G$ obtained by repeatedly finding augmenting paths in the residual graph consisting of at most $\ell$ edges in $F$ and send one unit of flow along them until such paths become non-existent.
  We get $\sum_{e \in F}\Bf(e) \leq |\Bf| \cdot \frac{4\log{m}}{\eps \phi}$.
  If $|\Bf| = \|\Bsource\|_1$ then we are done.
  Otherwise, there is at least one unsaturated source $s$ ($\Bsource_{\Bf}(s) > 0$) and one unsaturated $t$ ($\Bsink(t) > \abs_{\Bf}(t)$).
  Let $\dist_F(v)$ be the shortest $F$-distance from an unsaturated source $s$ to vertex $v$ in $G_{\Bf}$, where the $F$-distance is the minimum $F$-length over all such paths $P$.
  Let $S_i \defeq \{v \in V(G): \dist_F(v) = i\}$.
  Note that all unsaturated sinks $t$ must have $\dist_F(t) > \ell$.
  It suffices to show that there is an $0 \leq i \leq \ell$ such that
  \begin{equation}
    \left|E_{G_{\Bf}}(S_{\leq i}, \overline{S_{\leq i}})\right| < \eps \cdot \left|E_{G}(S_{\leq i}, \overline{S_{\leq i}})\right|
    \label{eq:saturate-level-cut}
  \end{equation}
  because of the following claim.

  \begin{claim}
    If there is a cut $S_{\leq i}$ satisfying \eqref{eq:saturate-level-cut}, then $|\Bf| \geq (1-\eps)\|\Bsource\|_1$.
  \end{claim}

  \begin{proof}
    The existence of such a cut $S_{\leq i}$ implies $|\Bf| \geq \Bfout(S_{\leq i}) \geq (1-\eps)|E_G(S_{\leq i}, \overline{S_{\leq i}})|$.
    Consider the maximum $(\Bsource_{\Bf}, \Bsink_{\Bf})$-flow $\Bf^\prime$ in $G_{\Bf}$ for which by \cref{fact:flow-in-residual-graph} and that $\cI$ is routable implies that $|\Bf| + |\Bf^\prime| = \|\Bsource\|_1$.
    However, by the max-flow min-cut theorem (\cref{fact:maxflow-mincut}), we have
    \[
      |\Bf^\prime| \leq \Bc_{\Bf}(S_{\leq i}, \overline{S_{\leq i}}) + \Bsource_{\Bf}(\overline{S_{\leq i}}) + \Bsink_{\Bf}(S_{\leq i}) = \Bc_{\Bf}(S_{\leq i}, \overline{S_{\leq i}})
    \]
    by definition of $S_{\leq i}$.
    As such, we have $|\Bf^\prime| \leq \frac{\eps}{1-\eps}|\Bf|$ and therefore $|\Bf| \geq (1-\eps)(|\Bf| + |\Bf^\prime|) = (1-\eps)\|\Bsource\|_1$.
  \end{proof}
  
  Now, assume for contradiction that no $S_{\leq i}$ satisfying \eqref{eq:saturate-level-cut} exists. The following proof is a standard ball-growing argument.
  Note that by definition of $\dist_F(\cdot)$, it holds that $E_{G_{\Bf}}(S_{\leq i}, \overline{S_{\leq i}}) = E_{G_{\Bf}}(S_i, S_{i+1}) \subseteq F$ for every $i$.
  If $\vol_{F}(S_{\leq \ell/2}) \leq \vol_{F}(\overline{S_{\leq \ell/2}})$, then we have
  \ifnum\cameraready=0
  \begin{align*}
    \vol_F(S_{\leq i}) &\geq \vol_{F}(S_{\leq i-1}) + \left|E_{G_{\Bf}}(S_{\leq i}, \overline{S_{\leq i}})\right| \\ &\stackrel{(i)}\geq \vol_{F}(S_{\leq i-1}) + \eps \left|E_{G}(S_{\leq i}, \overline{S_{\leq i}})\right| \stackrel{(ii)}{\geq} (1+\eps\phi) \cdot \vol_{F}(S_{\leq i-1})
  \end{align*}
  \else
  \begin{align*}
    \vol_F(S_{\leq i}) &\geq \vol_{F}(S_{\leq i-1}) + \left|E_{G_{\Bf}}(S_{\leq i}, \overline{S_{\leq i}})\right| \\ &\stackrel{(i)}\geq \vol_{F}(S_{\leq i-1}) + \eps \left|E_{G}(S_{\leq i}, \overline{S_{\leq i}})\right| \\ &\stackrel{(ii)}{\geq} (1+\eps\phi) \cdot \vol_{F}(S_{\leq i-1})
  \end{align*}
  \fi
  for $0 < i \leq \ell/2$,
  where (i) follows from $S_{\leq i}$ not satisfying \eqref{eq:saturate-level-cut} and (ii) follows from $F$ being $\phi$-expanding and thus $S_{\leq i}$ is not a $\phi$-sparse cut with respect to $F$.
  Since $\vol_F(S_0) > 0$, we have
  \[
    \vol_F(S_{\leq \ell/2}) \geq (1+\eps\phi)^{\frac{2\log {m}}{\eps \phi}} \geq m^2
  \]
  as $(1+x)^{1/x} \geq 2$ for $0 < x \leq 1$,
  which is a contradiction.
  Similarly, if instead it is the case that $\vol_{F}(S_{\leq \ell/2}) > \vol_{F}(\overline{S_{\leq \ell/2}})$, then
  \ifnum\cameraready=0
  \begin{align*}
    \vol_F(\overline{S_{\leq i}}) &\geq \vol_{F}(\overline{S_{\leq i+1}}) + \left|E_{G_{\Bf}}(S_{\leq i}, \overline{S_{\leq i}})\right| \\ &\geq \vol_{F}(\overline{S_{\leq i+1}}) + \eps \left|E_{G}(S_{\leq i}, \overline{S_{\leq i}})\right| \geq (1+\eps\phi) \cdot \vol_{F}(\overline{S_{\leq i+1}})
  \end{align*}
  \else
  \begin{align*}
    \vol_F(\overline{S_{\leq i}}) &\geq \vol_{F}(\overline{S_{\leq i+1}}) + \left|E_{G_{\Bf}}(S_{\leq i}, \overline{S_{\leq i}})\right| \\ &\geq \vol_{F}(\overline{S_{\leq i+1}}) + \eps \left|E_{G}(S_{\leq i}, \overline{S_{\leq i}})\right| \\ &\geq (1+\eps\phi) \cdot \vol_{F}(\overline{S_{\leq i+1}})
  \end{align*}
  \fi
  for $\ell/2 \leq i \leq \ell$.
  Since $\vol_F(\overline{S_{\leq \ell}}) > 0$, it follows that
  \[
    \vol_F(\overline{S_{\leq \ell/2}}) \geq (1+\eps\phi)^{\frac{2\log{m}}{\eps \phi}} \geq m^2,
  \]
  which is also a contradiction.
\end{proof}

The lemma above only returns a flow that partially routes a demand.
Next, we show that we can fully route any demand by paying a small congestion factor by repeatedly routing the remaining demand.
For a subset of edges $F \subseteq E(G)$, we call a demand pair $(\Bsource, \Bsink)$ is \emph{$r$-respecting with respect to $F$} for $r \in \N$ if $\Bsource(v), \Bsink(v) \leq r \cdot \deg_F(v)$ for each $v \in V$.

\begin{lemma}
  Given a flow instance $\cI = (G, \Bsource, \Bsink)$ where $\|\Bsource\|_1 = \|\Bsink\|_1$ for a strongly connected $m$-edge $G$ in which $F \subseteq E(G)$ is $\phi$-expanding such that $(\Bsource, \Bsink)$ is $r$-respecting with respect to $F$ for $\phi \geq \frac{1}{\poly(m)}$ and $r \leq \poly(m)$, there is a integral $(\Bsource, \Bsink)$-flow $\Bf$ with congestion $O\left(\frac{r}{\phi}\log m\right)$ and value $\|\Bf\| = \|\Bsource\|_1$ such that $\sum_{e \in F}\Bf(e) \leq |\Bf| \cdot O\left(\frac{\log m}{\phi}\right)$.
  \label{lemma:expander-routing-respecting}
\end{lemma}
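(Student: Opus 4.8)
The plan is to bootstrap \cref{lemma:expander-routing}, which only routes a $(1-\eps)$ fraction of a \emph{routable} demand, into a full routing: first I would pay a crude congestion factor to make the instance routable, and then repeatedly apply \cref{lemma:expander-routing} to the residual demand. Throughout, I will work in the unit-capacitated graph $G^{\kappa_0}$ (each edge duplicated $\kappa_0$ times) so that \cref{lemma:expander-routing} applies verbatim.

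\emph{Step 1 (crude routability).} Set $\kappa_0 \defeq \lceil r/\phi \rceil$. I claim that every demand $(\Bsource,\Bsink)$ with $\|\Bsource\|_1 = \|\Bsink\|_1$ that is $r$-respecting with respect to $F$ is routable with congestion $\kappa_0$, i.e., routable in $(G, \kappa_0\Bone)$. By \cref{fact:maxflow-mincut} it suffices to show $\kappa_0|E_G(S,\overline{S})| \ge \Bsource(S) - \Bsink(S)$ for every $S \subseteq V$. This is trivial unless $\Bsource(S) > \Bsink(S)$, in which case $\emptyset \neq S \subsetneq V$, and since $G$ is strongly connected and $F$ is $\phi$-expanding, $|E_G(S,\overline{S})| \ge \phi\min\{\vol_F(S),\vol_F(\overline{S})\}$. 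On the other hand $\Bsource(S) - \Bsink(S) \le \Bsource(S) \le r\vol_F(S)$, and also $\Bsource(S) - \Bsink(S) = \Bsink(\overline{S}) - \Bsource(\overline{S}) \le r\vol_F(\overline{S})$, so $\Bsource(S) - \Bsink(S) \le r\min\{\vol_F(S),\vol_F(\overline{S})\} \le (r/\phi)|E_G(S,\overline{S})| \le \kappa_0|E_G(S,\overline{S})|$. Equivalently, by \cref{fact:equivalence}, $F^{\kappa_0}$ is $\phi$-expanding in the unit-capacitated graph $G^{\kappa_0}$, which has at most $\kappa_0 m = \poly(m)$ edges (using $r, 1/\phi \le \poly(m)$), and $(\Bsource,\Bsink)$ is routable in $G^{\kappa_0}$ with congestion $1$.

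\emph{Step 2 (iterate on the residual demand).} Starting from $(\Bsource^{(0)},\Bsink^{(0)}) \defeq (\Bsource,\Bsink)$, I would apply \cref{lemma:expander-routing} with $\eps = \tfrac12$ in $G^{\kappa_0}$ to obtain an integral flow $\Bf_i$ with $|\Bf_i| \ge \tfrac12\|\Bsource^{(i-1)}\|_1$ and $\sum_{e \in F^{\kappa_0}}\Bf_i(e) \le |\Bf_i|\cdot O(\tfrac{\log m}{\phi})$, then set $\Bsource^{(i)} \defeq \ex_{\Bf_i}$ and $\Bsink^{(i)} \defeq \Bsink^{(i-1)} - \abs_{\Bf_i}$. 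The new demand is integral, has $\|\Bsource^{(i)}\|_1 = \|\Bsink^{(i)}\|_1 = \|\Bsource^{(i-1)}\|_1 - |\Bf_i| \le \tfrac12\|\Bsource^{(i-1)}\|_1$, and is again $r$-respecting with respect to $F$ since $\ex_{\Bf_i} \le \Bsource^{(i-1)} \le \Bsource$ and $\Bsink^{(i)} \le \Bsink$; hence Step 1 applies again and the recursion continues. Since $\|\Bsource\|_1 \le r\vol_F(V) = 2r|F| \le \poly(m)$ and the (integer) total demand halves each round, after $k = O(\log m)$ rounds it is zero. Mapping each $\Bf_i$ from $G^{\kappa_0}$ back to $G$ (summing over the $\kappa_0$ copies of each edge) and taking $\Bf \defeq \sum_{i=1}^{k}\Bf_i$ gives an integral flow.

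\emph{Step 3 (accounting), and the main obstacle.} A telescoping computation gives $\BB_G^\top\Bf + \Bsource = \sum_i \abs_{\Bf_i} \le \Bsink$, so $\ex_{\Bf} = \Bzero$, $\Bf$ routes $(\Bsource,\Bsink)$, and $|\Bf| = \|\Bsource\|_1 = \sum_i|\Bf_i|$. Each $\Bf_i$ is feasible in $G^{\kappa_0}$, hence puts at most $\kappa_0$ units on any edge of $G$, so $\cong(\Bf) \le k\kappa_0 = O(\tfrac{r}{\phi}\log m)$; and $\sum_{e\in F}\Bf(e) = \sum_i\sum_{e' \in F^{\kappa_0}}\Bf_i(e') \le O(\tfrac{\log m}{\phi})\sum_i|\Bf_i| = O(\tfrac{\log m}{\phi})|\Bf|$, as desired. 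The one point that needs care is the iteration in Step 2: each residual demand must be re-routed by a \emph{fresh} flow in the \emph{original} graph $G^{\kappa_0}$ rather than in a residual graph. This is exactly what keeps the congestion growing only additively over the $O(\log m)$ rounds, and it sidesteps the difficulty that $F$ need not remain $\phi$-expanding once some of its edges get reversed in a residual graph. The crude-routability claim of Step 1 is the only genuinely new ingredient; everything else is bookkeeping.
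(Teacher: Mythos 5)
Your proof is correct and takes essentially the same approach as the paper's: both reduce to $G^{\kappa_0}$ with $\kappa_0 = \lceil r/\phi\rceil$, verify routability there via max-flow min-cut (your cut inequality $\kappa_0|E_G(S,\overline{S})| \ge \Bsource(S)-\Bsink(S)$ is an equivalent rephrasing of the paper's footnote), and then iterate \cref{lemma:expander-routing} with $\eps = 1/2$ on the shrinking residual demand in the \emph{original} subdivided graph for $O(\log m)$ rounds before mapping back. The observation you flag as ``the main obstacle''---routing each residual demand afresh in $G^{\kappa_0}$ rather than in a residual graph---is precisely what the paper does too, so there is no gap.
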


\begin{proof}
  By the max-flow min-cut theorem (\cref{fact:maxflow-mincut}), the $r$-respecting demand $(\Bsource, \Bsink)$ is routable in $G^{(\kappa)}$ for $\kappa \defeq \left\lceil \frac{r}{\phi}\right\rceil$.\footnote{For any cut $S \subseteq V$ in $G^{(\kappa)}$, we have $|E_{G^{(\kappa)}}(S,\overline{S})| \geq r \cdot \min\{\vol_F(S), \vol_F(\overline{S})\} \geq \min\{\Bsource(S), \Bsink(\overline{S})\}$ by definition. By the max-flow min-cut theorem (\cref{fact:maxflow-mincut}), the maximum flow in $G^{(\kappa)}$ has value
  \ifnum\cameraready=0
  \[ \min_S \left|E_{G^{(\kappa)}}(S,\overline{S})\right| \geq \min_S \left\{\min\{\Bsource(S),\Bsink(\overline{S})\} + \Bsource(\overline{S}) + \Bsink(S)\right\} \geq \|\Bsource\|_1 = \|\Bsink\|_1. \]
  \else
  \begin{align*}
  \min_S \left|E_{G^{(\kappa)}}(S,\overline{S})\right| &\geq \min_S \left\{\min\{\Bsource(S),\Bsink(\overline{S})\} + \Bsource(\overline{S}) + \Bsink(S)\right\} \\ &\geq \|\Bsource\|_1 = \|\Bsink\|_1.
  \end{align*}
  \fi
  Therefore, $(\Bsource,\Bsink)$ is routable in $G^{(\kappa)}$.}
  Observe that $F^{(\kappa)}$ is $\phi$-expanding in $G^{(\kappa)}$.
  As such, by \cref{lemma:expander-routing} with $\eps \defeq 1/2$ there is an integral flow $\Bf_1$ in $G^{(\kappa)}$ such that $\|\Bf_1\| \geq \frac{1}{2}\|\Bsource\|_1$ and $\sum_{e \in F^{(\kappa)}}\Bf_1(e) \leq |\Bf_1| \cdot O\left(\frac{\log m}{\phi}\right)$ since $\kappa \leq \poly(m)$.
  The residual demand $(\Bsource_{\Bf_1}, \Bsink_{\Bf_1})$ satisfies $\|\Bsource_{\Bf_1}\|_1 \leq \frac{1}{2}\|\Bsource\|_1$ and is clearly also $r$-respecting with respect to $F$ and thus routable in $G^{(\kappa)}$.
  Applying \cref{lemma:expander-routing} again with $\eps \defeq \frac{1}{2}$ on the $(G^{(\kappa)}, \Bsource_{\Bf_1}, \Bsink_{\Bf_1})$, we get an integral flow $\Bf_2$ in $G^{(\kappa)}$ such that $\|\Bf_2\|_1 \geq \frac{1}{2}\|\Bsource_{\Bf_1}\|$ and $\sum_{e \in F^{(\kappa)}}\Bf_2(e) \leq |\Bf_2| \cdot O\left(\frac{\log m}{\phi}\right)$.
  Because $\|\Bsource\|_1 \leq \poly(m)$, repeating this argument $O(\log m)$ times, we get $O(\log m)$ integral flows $\Bf_1, \ldots, \Bf_{O(\log m)}$ such that the sum of them $\Bf \defeq \Bf_1 + \cdots + \Bf_{O(\log m)}$ has value $|\Bf| = \|\Bsource\|_1$ and $\sum_{e \in F^{(\kappa)}}\Bf(e) \leq |\Bf| \cdot O\left(\frac{\log m}{\phi}\right)$ with congestion $O(\log m)$ in $G^{(\kappa)}$, which can be mapped back to an integral flow in $G$ with congestion $O(\kappa \log m) = O\left(\frac{r}{\phi} \log m\right)$ with the same guarantee.
\end{proof}

\para{Rerouting Long Flow to Short Flow}
If we directly use \cref{lemma:expander-routing-respecting} for rerouting each flow path, we might get a congestion blow-up of $\widetilde{\Theta}(1/\phi)$ which is too expensive.
The crucial idea to avoid this is as follows:
for each long flow path, we reroute the flow starting at the set of first $\widetilde{O}(1/\phi)$ edges of the path to the set of last $\widetilde{O}(1/\phi)$ edges.
This idea leads to cancellation in congestion and allows us to control the congestion blow-up to be at most $(1+1/\eta)$ factor, which is only $O(\eta)$ factor after accumulation over all $\eta$ levels.

We begin by defining what we mean by \emph{rerouting}.
Given an integral flow $\Bf$ decomposable into paths $P_1, \ldots, P_{k}$, we can \emph{reroute} $\Bf$ at $(s_i, t_i)$ for each $1 \leq i \leq k$, where $s_i$ and $t_i$ are vertices on $P_i$ (with $s_i$ occurring before $t_i$) with a flow $\Bf_{\mathrm{route}}$ routing the demand 
\ifnum\cameraready=0
\[ \Bsource(v) \defeq \left|\left\{1 \leq i \leq k: s_i = v\right\}\right|,\qquad \Bsink(v) \defeq \left|\left\{1 \leq i \leq k: t_i = v\right\}\right| \]
\else
\begin{align*}
    &\Bsource(v) \defeq \left|\left\{1 \leq i \leq k: s_i = v\right\}\right|,\\
    &\Bsink(v) \defeq \left|\left\{1 \leq i \leq k: t_i = v\right\}\right|
\end{align*}
\fi
getting the flow $\widetilde{\Bf}$ given by
\(
  \widetilde{\Bf}(e) \defeq \Bf(e) + \Bf_{\mathrm{route}}(e) - \sum_{i = 1}^{k}\Bf_{P_i[s_i, t_i]}(e),
\)
where $\Bf_{P_i[s_i, t_i]}$ is the notation for a flow that sends one unit flow along the path $P_i[s_i, t_i]$. We note that we \emph{do \textbf{not} use multi-commodity flow} when rerouting; that is, $\Bf_{\mathrm{route}}$ does not necessarily consist of $(s_i,t_i)$-paths. Indeed, for our purposes we only need that each $s_i$ is paired up with some $t_j$, i.e., that $\Bf_{\mathrm{route}}$ routes the same demand as the flow $\sum_{i=1}^{k}\Bf_{P_{i}[s_i,t_i]}$.

\begin{observation} \label{obs:rerouting}
  The following facts about such a rerouted flow hold.
  \begin{enumerate}[(1)]
    \item\label{item:rerouting-same-demand}$\widetilde{\Bf}$ routes the same demands as $\Bf$ does, i.e., $\widetilde{\Bf}$ and $\Bf$ are equivalent.
    \item\label{item:rerouting-preserve} If $\Bf_{\mathrm{route}}$ is a flow in a subgraph $H$ of $G$, then $\widetilde{\Bf}(e) \leq \Bf(e)$ for all $e \not\in E(H)$.
    \item\label{item:rerouting-congestion} If $\Bf_1 + \Bf_2$ has congestion $\kappa$ and $\Bf_1$ is rerouted by a flow $\Bf_{\mathrm{route}}$ with congestion $\kappa^\prime$, resulting in $\widetilde{\Bf_1}$, then the flow $\widetilde{\Bf_1} + \Bf_2$ has congestion $\kappa + \kappa^\prime$.
  \end{enumerate}
\end{observation}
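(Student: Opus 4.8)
The plan is to verify all three items directly from the defining identity
\[
  \widetilde{\Bf} \;=\; \Bf + \Bf_{\mathrm{route}} - \sum_{i=1}^{k}\Bf_{P_i[s_i, t_i]},
\]
using only linearity of the map $\Bg \mapsto \BB_G^\top \Bg$ together with elementary edgewise inequalities. Throughout I treat the decomposition as exact, i.e.\ $\Bf = \sum_{i=1}^{k}\Bf_{P_i}$ with the $P_i$ the decomposition paths of \cref{fact:path-decomposition} (this is the flow being rerouted), and I write $\Bf_{\mathrm{del}} \defeq \sum_{i=1}^{k}\Bf_{P_i[s_i, t_i]}$ for the removed sub-flow. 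The single structural fact I will isolate first is the edgewise bound $0 \leq \Bf_{\mathrm{del}}(e) \leq \Bf_{1}(e)$ for every edge $e$ (writing $\Bf = \Bf_1$ in the context of item~\labelcref{item:rerouting-congestion}): each $P_i[s_i, t_i]$ is a subpath of the decomposition path $P_i$, so $\Bf_{\mathrm{del}} = \sum_i \Bf_{P_i[s_i,t_i]} \leq \sum_i \Bf_{P_i} = \Bf$; nonnegativity is immediate since $\Bf_{\mathrm{del}}$ is a sum of unit path-flows.

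For item~\labelcref{item:rerouting-same-demand}, observe that $\Bf_{\mathrm{del}}$ is itself a flow that routes the demand $(\Bsource, \Bsink)$ defined in the rerouting: each subpath $P_i[s_i,t_i]$ carries one unit from $s_i$ to $t_i$, so the supply $\Bsource(v)=|\{i:s_i=v\}|$ is entirely consumed (zero excess) and the sink $\Bsink(v)=|\{i:t_i=v\}|$ entirely absorbed. Since $\Bf_{\mathrm{route}}$ routes the same demand by construction, $\Bf_{\mathrm{route}}$ and $\Bf_{\mathrm{del}}$ are equivalent, i.e.\ $\BB_G^\top \Bf_{\mathrm{route}} = \BB_G^\top \Bf_{\mathrm{del}}$. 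Applying $\BB_G^\top$ to the defining identity, the last two terms cancel and $\BB_G^\top \widetilde{\Bf} = \BB_G^\top \Bf$, so $\widetilde{\Bf}$ and $\Bf$ are equivalent. For item~\labelcref{item:rerouting-preserve}, fix $e \notin E(H)$: since $\Bf_{\mathrm{route}}$ is a flow in $H$ we have $\Bf_{\mathrm{route}}(e)=0$, and $\Bf_{\mathrm{del}}(e)\geq 0$, so the identity gives $\widetilde{\Bf}(e) = \Bf(e) + 0 - \Bf_{\mathrm{del}}(e) \leq \Bf(e)$.

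For item~\labelcref{item:rerouting-congestion}, combining the identity for $\widetilde{\Bf_1}$ with the sandwich $0 \le \Bf_{\mathrm{del}}(e) \le \Bf_1(e)$ yields $\Bf_{\mathrm{route}}(e) \le \widetilde{\Bf_1}(e) \le \Bf_1(e) + \Bf_{\mathrm{route}}(e)$ for every edge $e$. Adding $\Bf_2(e)\ge 0$ and using $\Bf_1(e)+\Bf_2(e) \le \kappa\,\Bc(e)$ (from $\cong(\Bf_1+\Bf_2)=\kappa$) and $\Bf_{\mathrm{route}}(e)\le \kappa^\prime\,\Bc(e)$ (from $\cong(\Bf_{\mathrm{route}})=\kappa^\prime$), we get $0 \le \widetilde{\Bf_1}(e)+\Bf_2(e) \le (\kappa+\kappa^\prime)\,\Bc(e)$, so $\widetilde{\Bf_1}+\Bf_2$ is a nonnegative flow of congestion at most $\kappa+\kappa^\prime$. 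None of this presents a real obstacle — it is pure bookkeeping with the incidence matrix and pointwise bounds; the only two points worth stating explicitly are that $\Bf_{\mathrm{del}}$ genuinely routes $(\Bsource,\Bsink)$ (used in~\labelcref{item:rerouting-same-demand}) and that the rerouted flow remains nonnegative so that ``congestion'' is well-defined (used in~\labelcref{item:rerouting-congestion}), both of which reduce to the bound $\Bf_{\mathrm{del}}\le\Bf_1$ coming from the $P_i$ forming a path decomposition.
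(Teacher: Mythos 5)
Your proof is correct, and it matches the argument the paper leaves implicit: the observation is stated without proof, and the intended justification is exactly this bookkeeping from the defining identity $\widetilde{\Bf} = \Bf + \Bf_{\mathrm{route}} - \sum_i \Bf_{P_i[s_i,t_i]}$, the edgewise bound $\Bzero \le \sum_i \Bf_{P_i[s_i,t_i]} \le \Bf$ coming from the path decomposition, and the fact that $\Bf_{\mathrm{route}}$ routes the same demand as the removed subpath flow (which the paper's remark after the definition makes explicit). Your two isolated points — that the removed sub-flow genuinely routes $(\Bsource,\Bsink)$ and that the rerouted flow stays nonnegative — are precisely the details worth recording, so nothing is missing.
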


Let $c_{\ref{lemma:expander-routing-respecting}} \in \N$ be the constant hidden in the $O(\cdot)$ notation of the congestion guarantee of \cref{lemma:expander-routing-respecting}.
In other words, the flow from \cref{lemma:expander-routing-respecting} has congestion at most $\frac{r}{\phi} \cdot c_{\ref{lemma:expander-routing-respecting}}\log{m}$.
To recall, $\eta$ is the height of the given hierarchy $\cH$. Now we are ready to prove our main rerouting lemma that incurs only a very small congestion blow-up.

\begin{lemma}
  Given a flow $\Bf$ with congestion $\kappa$ in $G$, a target level $i$, and a level-$i$ expander $C$ of $G$, there is an equivalent flow $\Bf^\prime$ in $G$ with congestion $\left\lceil\kappa\right\rceil\left(1 + \frac{1}{\eta}\right)$ such that
  \[
    \sum_{e \in F}\Bf^\prime(e) \leq \left|\Bf^\prime\right| \cdot O\left(\frac{\eta \log n}{\phi}\right),
  \]
  where $F \defeq X_i \cap E(C)$ is the level-$i$ expanding edge set in $C$.
  Additionally, it holds that $\Bf^\prime(e) \leq \Bf(e)$ for all $e \not\in E(C)$.
  \label{lemma:induction}
\end{lemma}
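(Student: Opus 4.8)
The plan is to produce $\Bf^\prime$ from $\Bf$ by short-cutting, inside $C$, exactly those flow paths of $\Bf$ that traverse many level-$i$ expanding edges of $C$, and to do the short-cutting by an \emph{all-to-all} rerouting so that the congestion we pick up is only a $1/\eta$ fraction of $\cong(\Bf)=\kappa$, rather than the naive blow-up by $1/\phi$. By \cref{fact:equivalence} and the scaling convention stated just before the lemma, I may assume $G$ is a unit-capacitated multigraph and $\Bf$ integral, with $\log m=O(\log n)$; fractional flows and capacity blow-ups appearing below are legitimate under the same convention. Fix the parameter $k := \lceil \eta\, c_{\ref{lemma:expander-routing-respecting}}\, \log m/\phi\rceil = O(\eta\log n/\phi)$.

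First I would take a path decomposition $P_1,\dots,P_{|\Bf|}$ of $\Bf$ (\cref{fact:path-decomposition}), so that $\Bf'' := \sum_j \Bf_{P_j} \le \Bf$ is equivalent to $\Bf$, and I only ever reroute $\Bf''$. Writing $F := X_i \cap E(C)$ and letting $\ell_j$ be the number of $F$-edges on $P_j$, call $P_j$ \emph{short} if $\ell_j \le 2k$ and \emph{long} otherwise. Short paths are left alone; they already contribute only $\sum_{j\text{ short}}\ell_j \le 2k|\Bf|$ to $\sum_{e\in F}\Bf''(e)$, which is within the target $O(\eta\log n/\phi)\cdot|\Bf|$.

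Now the heart of the argument: the all-to-all rerouting of the long paths. Passing to a uniform-capacity blow-up of $G$ in which $\Bf''$ is integral and each $P_j$ is replaced by $k$ parallel unit copies $P_j^{(1)},\dots,P_j^{(k)}$, I would, for each long $P_j$, list its $F$-edges in path order as $f_{j,1},\dots,f_{j,\ell_j}$ and set $v_{j,l} := \tail(f_{j,l})$ and $w_{j,l} := \head(f_{j,\ell_j-k+l})$ for $1\le l\le k$; because $\ell_j > 2k$, each $v_{j,l}$ strictly precedes $w_{j,l}$ on $P_j$, both lie in $C$, and $v_{j,1},\dots,v_{j,k}$ (resp.\ $w_{j,1},\dots,w_{j,k}$) are pairwise distinct since a simple path has out-degree at most one at every vertex. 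I reroute copy $P_j^{(l)}$ of each long $P_j$ at the pair $(v_{j,l},w_{j,l})$ in the sense of \cref{obs:rerouting}, using a single rerouting flow $\Bf_{\mathrm{route}}$ supported in $G[C]$ that routes the demand
\[
  \Bsource_{\mathrm{route}}(v) \;=\; \bigl|\{(j,l)\ :\ P_j \text{ long},\ v = v_{j,l}\}\bigr|
\]
and the symmetric $\Bsink_{\mathrm{route}}$. The decisive estimate is that this demand is extremely spread out: for every $v$, each pair $(j,l)$ counted in $\Bsource_{\mathrm{route}}(v)$ names a \emph{distinct} edge $f_{j,l}\in\delta^{+}(v)\cap F$, and each edge of $F$ lies on at most $\kappa$ of the paths $P_j$, so $\Bsource_{\mathrm{route}}(v)\le \kappa\,\deg_F(v)$, i.e.\ only an $O(\kappa/k)$ fraction of the blown-up $F$-degree at $v$. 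Since $G[C]$ is strongly connected and $F$ is $\phi$-expanding in $G[C]$ (\cref{def:expander-hierarchy}; $G[C]\supseteq G_i[C]$ and both carry the same terminal set $F$, and adding edges only helps expansion), I would apply \cref{lemma:expander-routing-respecting} to the $O(1)$-respecting demand obtained by scaling $(\Bsource_{\mathrm{route}},\Bsink_{\mathrm{route}})$ up by a $\Theta(k/\lceil\kappa\rceil)$ factor and then scale the resulting flow back down; this yields $\Bf_{\mathrm{route}}$ routing $(\Bsource_{\mathrm{route}},\Bsink_{\mathrm{route}})$ inside $G[C]$ with $\cong(\Bf_{\mathrm{route}}) = O\!\bigl(\tfrac{\kappa}{k}\cdot\tfrac{\log m}{\phi}\bigr)\le \lceil\kappa\rceil/\eta$ by the choice of $k$, and with $\sum_{e\in F}\Bf_{\mathrm{route}}(e)\le |\Bf_{\mathrm{route}}|\cdot O(\log m/\phi)$.

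Finally I would set $\Bf^\prime$ to be the rerouted flow, that is, $\Bf''$ with $\sum_{j\text{ long},\,l}\Bf_{P_j^{(l)}[v_{j,l},w_{j,l}]}$ removed and $\Bf_{\mathrm{route}}$ added, mapped back to $G$. \cref{obs:rerouting}\ref{item:rerouting-same-demand} gives that $\Bf^\prime$ is equivalent to $\Bf$ (so $|\Bf^\prime|=|\Bf|$); \cref{obs:rerouting}\ref{item:rerouting-preserve} together with $\Bf_{\mathrm{route}}$ being supported in $G[C]$ gives $\Bf^\prime(e)\le\Bf(e)$ for all $e\notin E(C)$; and \cref{obs:rerouting}\ref{item:rerouting-congestion} gives $\cong(\Bf^\prime)\le \cong(\Bf'')+\cong(\Bf_{\mathrm{route}})\le \kappa+\lceil\kappa\rceil/\eta\le \lceil\kappa\rceil\bigl(1+\tfrac1\eta\bigr)$. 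For the $F$-length, $\sum_{e\in F}\Bf^\prime(e)$ is the sum of the short-path part ($\le 2k|\Bf|$), the surviving prefixes and suffixes of the long paths (each of the $k$ unit copies of a long path keeps only $k-1$ of its $F$-edges, so after mapping back this part is $O(k|\Bf|)$), and $\sum_{e\in F}\Bf_{\mathrm{route}}(e)\le |\Bf_{\mathrm{route}}|\cdot O(\log m/\phi)\le |\Bf|\cdot O(\log m/\phi)$ since $|\Bf_{\mathrm{route}}|=\|\Bsource_{\mathrm{route}}\|_1\le |\Bf|$; with $k=\Theta(\eta\log m/\phi)$ this totals $O(k|\Bf|)=|\Bf^\prime|\cdot O(\eta\log n/\phi)$, as required. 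The step I expect to be the main obstacle is the congestion bookkeeping in the third paragraph: establishing the $O(\kappa/k)$-respecting bound (which genuinely needs the dilution of each long path into $k$ copies routed between $k$ distinct endpoints), and reconciling it cleanly with the capacity blow-up and with \cref{lemma:expander-routing-respecting}, which is phrased for integer respecting-parameters and for flows that route their demand in full. Getting this right is exactly what converts the per-level penalty $1/\phi$ into a factor $(1+1/\eta)$, which is what makes the $\eta$-fold accumulation in \cref{lemma:good-weight-function-congested} affordable.
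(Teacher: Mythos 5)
Your proposal follows the paper's proof of this lemma essentially verbatim: split the path decomposition by the number of $F$-edges used, take $k=\Theta(\eta\log m/\phi)$ shifted copies of each long path, reroute the $l$-th copy between the $l$-th and $(\ell_j-k+l)$-th $F$-edges, observe that the rerouting demand is $\kappa$-respecting on $F$ because each source/sink endpoint charges to a distinct $F$-edge of that path and $\Bf$ has congestion $\kappa$, then invoke \cref{lemma:expander-routing-respecting} and \cref{obs:rerouting} for the congestion and $F$-length bookkeeping before dividing through by $k$.

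One arithmetic slip worth fixing: you assert $|\Bf_{\mathrm{route}}|=\|\Bsource_{\mathrm{route}}\|_1\le|\Bf|$, but with the $k$ copies in play $\|\Bsource_{\mathrm{route}}\|_1=k\cdot|\Plong|$, and it is only after the final normalization by $k$ that the routing's contribution to the $F$-length becomes $O(\log m/\phi)\cdot|\Bf_{\mathrm{long}}|$, matching the paper's $|\Bf_{\text{long}}|\cdot O(\log m/\phi)$ term. Relatedly, the passage about ``scaling up the demand by $\Theta(k/\lceil\kappa\rceil)$ to make it $O(1)$-respecting'' is phrased loosely: the cleaner route (the paper's) is to apply \cref{lemma:expander-routing-respecting} directly with $r=\kappa\in\N$ (giving congestion $O(\kappa\log m/\phi)$ in $G[C]$) and then divide the rerouted flow by $\xi=k$, avoiding any integrality concern for the scaled demand. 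Either way the target bounds come out the same, so these are expository rather than mathematical defects, but the version you wrote would need a line checking integrality of the scaled demand before \cref{lemma:expander-routing-respecting} applies.
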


\begin{proof}
  Let us first suppose that $\Bf$ is an integral flow (therefore we may assume $\kappa \in \N$), and let $\xi \defeq \left\lceil\frac{\eta \cdot c_{\ref{lemma:expander-routing-respecting}}\log m}{\phi}\right\rceil \in \N$.
  Let $\P_{\Bf}$ be a decomposition of $\Bf$ into flow paths.
  Let $\Plong \defeq \{P \in \P_{\Bf}: \left|P \cap F\right| \geq 2\xi\}$ be the paths which are long with respect to $F$, i.e., uses at least $2\xi$ edges in $F$ (note that we may assume that the paths $P$ are simple and thus cannot use the same edge in $F$ multiple times).
  For each $P \in \Plong$, let $S_P \defeq \left(s_P^{(1)}, \cdots, s_P^{(\xi)}\right)$ be the endpoints of the first $\xi$ edges from $F$ on $P$. Similarly, let $T_P \defeq \left(t_P^{(1)}, \ldots, t_P^{(\xi)}\right)$ be the endpoints of the last $\xi$ edges from $F$ on $P$.
  Let $\Bf_{\text{long}} \defeq \sum_{P \in \Plong}\Bf_P$ and $\Bf_{\text{short}} \defeq \Bf - \Bf_{\text{long}} = \sum_{P \in \P_{\Bf} \setminus \Plong}\Bf_P$ be the flow corresponding to long and short flow paths, respectively.
  Let $\Bf^{(\xi)}_{\text{long}} \defeq \Bf_{\text{long}} \cdot \xi$ be a flow in $G$ and $\Plong^{(\xi)}$ be the decomposition of $\Bf^{(\xi)}_{\text{long}}$ corresponding to $\Plong$, i.e., $\Plong^{(\xi)} \defeq \bigcup_{P \in \Plong}\left\{P^{(\xi)}_1, \ldots, P^{(\xi)}_\xi\right\}$ where $P^{(\xi)}_i$ is the $i$-th duplicate of the path $P \in \Plong$.

  We now reroute $\Bf_{\text{long}}^{(\xi)}$ at $\left\{\left(s_P^{(i)}, t_P^{(i)}\right)\right\}_{P_i^{(\xi)} \in \Plong^{(\xi)}}$.
  In other words, for the $i$-th copy of $P \in \Plong$, we attempt to reroute it from the $i$-th edge in $S_P$ to the $i$-th edge in $T_P$. This is the main idea which allow us to avoid the congestion blow-up, since, although $\Bf^{(\xi)}_{\mathrm{long}}$ has congestion $\xi \kappa$, the
  demand $(\Bsource, \Bsink)$ corresponding to this rerouting is $\kappa$-respecting on $F$. This is since each start-vertex $s_{P}^{(i)}$ and end-vertex $t_{P}^{(i)}$ in the rerouting can be charged (a single time) to the corresponding edge in the flow path $P$, and $\Bf$ has congestion $\kappa$.
  Therefore by \cref{lemma:expander-routing-respecting}, $\Bf^{(\xi)}_{\mathrm{long}}$ can be routed in $C$ with congestion $\frac{\kappa}{\phi} \cdot c_{\ref{lemma:expander-routing-respecting}}\log{m}$ by a flow $\Bf_{\mathrm{route}}$. %
  Let $\widetilde{\Bf_{\text{long}}^{(\xi)}}$ be the rerouted $\Bf_{\text{long}}^{(\xi)}$.
  It then follows that the flow
  \[ \Bf^\prime \defeq \Bf_{\text{short}} + \frac{\widetilde{\Bf^{(\xi)}_{\text{long}}}}{\xi} \]
  routes the same demand as $\Bf$ does by \cref{obs:rerouting}\labelcref{item:rerouting-same-demand} and has congestion
  \[ \kappa + \frac{\frac{\kappa}{\phi} \cdot c_{\ref{lemma:expander-routing-respecting}}\log{m}}{\xi} = \kappa\left(1 + \frac{1}{\eta}\right) \]
  by \cref{obs:rerouting}\labelcref{item:rerouting-congestion}.
  The total amount of flow on $F$-edges can be bounded by
  \ifnum\cameraready=0
  \begin{align*}
    \sum_{e \in F}\Bf^\prime(e) = \sum_{e \in F}\Bf_{\text{short}}(e) + \frac{\sum_{e \in F}\widetilde{\Bf_{\text{long}}^{(\xi)}}(e)}{\xi} &\leq \left|\Bf_{\text{short}}\right| \cdot 2\xi + \left|\Bf_{\text{long}}\right| \cdot 2\xi + \left|\Bf_{\text{long}}\right| \cdot O\left(\frac{\log{m}}{\phi}\right) \\
    &\leq \left|\Bf^\prime\right| \cdot O\left(\frac{\eta \log n}{\phi}\right),
  \end{align*}
  \else
  \begin{align*}
    \sum_{e \in F}\Bf^\prime(e) &= \sum_{e \in F}\Bf_{\text{short}}(e) + \frac{\sum_{e \in F}\widetilde{\Bf_{\text{long}}^{(\xi)}}(e)}{\xi} \\
    &\leq \left|\Bf_{\text{short}}\right| \cdot 2\xi + \left|\Bf_{\text{long}}\right| \cdot 2\xi + \left|\Bf_{\text{long}}\right| \cdot O\left(\frac{\log{m}}{\phi}\right) \\
    &\leq \left|\Bf^\prime\right| \cdot O\left(\frac{\eta \log n}{\phi}\right),
  \end{align*}
  \fi
  where we use the fact that the rerouting happens at the first $\xi$ and the last $\xi$ edges on flow paths in $\Plong$.
  The property that $\Bf^\prime(e) \leq \Bf(e)$ for all $e \not\in E(C)$ also follows from \cref{obs:rerouting}\labelcref{item:rerouting-preserve}.
  
  If instead the flow $\Bf$ is $\frac{1}{z}$-integral for some $z \in \N$, then we may assume $\kappa \in \frac{1}{z} \cdot \N$ and the demand $(\Bsource, \Bsink)$ it routes to be in $\left(\frac{1}{z} \cdot \N\right)^V$.
  We can then treat $\Bf$ as an integral flow in $G^{(z)}$ routing demand $(z\cdot \Bsource, z\cdot \Bsink)$ with congestion $\left\lceil \kappa \right\rceil$, i.e., for each edge $e$, we put a total of $\Bf(e) \cdot z \leq \kappa z$ units of flow on the duplicates of $e$ in $G^{(z)}$, distributed evenly among the $z$ duplicates so that each of them receives at most $\left\lceil\kappa\right\rceil$ units of flow.
  Applying the same argument as before in $G^{(z)}$ proves the lemma for this case.
  
  Finally, note that the rerouted flow $\Bf^\prime$ is $\frac{1}{z\xi}$-integral so $\Bf^\prime \in \Q_{\geq 0}^E$ (recall that our definition of flow requires rational values).
\end{proof}

\begin{corollary}
  Given a flow $\Bf$ with congestion $\kappa$ in $G$ and a target level $i$, there is a flow $\Bf^\prime$ in $G$ routing the same demand as $\Bf$ does with congestion $\left\lceil\kappa\right\rceil\left(1 + \frac{1}{\eta}\right)$ such that%
  \[
    \sum_{e \in X_i}\Bf^\prime(e)\Bw_{\cH}(e) \leq |\Bf^\prime| \cdot O\left(n \cdot \frac{\eta \log n}{\phi}\right).
  \]
  Additionally, it holds that $\Bf^\prime(e) \leq \Bf(e)$ for all $e \in X_{>i}$.
  \label{cor:induction}
\end{corollary}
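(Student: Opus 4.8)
The plan is to prove \cref{cor:induction} by invoking \cref{lemma:induction} once for every level-$i$ expander, processing them one after another and always feeding the lemma the \emph{current} flow; the output $\Bf^{\prime}$ is the flow obtained after all expanders have been handled. First I would set up the decomposition: by \cref{def:expander-hierarchy} every edge of $X_i$ lies inside some strongly connected component of $G_i = G\setminus X_{>i}$, so $X_i$ is the disjoint union $\bigsqcup_{C\in\SCC(G_i)}(X_i\cap E(C))$. Enumerate the level-$i$ expanders as $C_1,\ldots,C_p$; they are vertex-disjoint, so $\sum_{j=1}^p|C_j|\le n$ and the edge sets $E(C_1),\ldots,E(C_p)$ are pairwise disjoint. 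Set $\Bf^{(0)}\defeq\Bf$ and, for $j=1,\ldots,p$, let $\Bf^{(j)}$ be the flow produced by \cref{lemma:induction} applied to $\Bf^{(j-1)}$, target level $i$, and expander $C_j$; define $\Bf^{\prime}\defeq\Bf^{(p)}$. Since each application keeps the flow equivalent to its input, $\Bf^{\prime}$ is equivalent to $\Bf$ and $|\Bf^{\prime}|=|\Bf|$. Also, the ``additionally'' clause of \cref{lemma:induction} gives $\Bf^{(j)}(e)\le\Bf^{(j-1)}(e)$ for all $e\notin E(C_j)$; chaining this over all $j$ and using $X_{>i}\cap E(C_j)=\emptyset$ yields $\Bf^{\prime}(e)\le\Bf(e)$ for every $e\in X_{>i}$, which is the last conclusion of the corollary.

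Next I would establish the weight bound. Fix $j$. Right after the $j$-th application, \cref{lemma:induction} guarantees $\sum_{e\in X_i\cap E(C_j)}\Bf^{(j)}(e)\le|\Bf^{(j)}|\cdot O(\eta\log n/\phi)$. Every later application (on some $C_{j'}$ with $j'>j$) only modifies edges of $E(C_{j'})$, which is disjoint from $E(C_j)$, so by the ``additionally'' clause the flow on $E(C_j)$ never increases afterwards; since all the $\Bf^{(j)}$ are equivalent and hence share the same value, we get $\sum_{e\in X_i\cap E(C_j)}\Bf^{\prime}(e)\le|\Bf^{\prime}|\cdot O(\eta\log n/\phi)$ for every $j$. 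Now apply \cref{obs:back-edge-larger-than-exp-edge}: each $e\in X_i\cap E(C_j)$ has $\Bw_{\cH}(e)\le|C_j|$. Summing over $j$,
\[
  \sum_{e\in X_i}\Bf^{\prime}(e)\Bw_{\cH}(e)
  =\sum_{j=1}^{p}\sum_{e\in X_i\cap E(C_j)}\Bf^{\prime}(e)\Bw_{\cH}(e)
  \le\sum_{j=1}^{p}|C_j|\sum_{e\in X_i\cap E(C_j)}\Bf^{\prime}(e)
  \le|\Bf^{\prime}|\cdot O\!\left(\frac{\eta\log n}{\phi}\right)\sum_{j=1}^{p}|C_j|,
\]
and $\sum_{j}|C_j|\le n$ gives the claimed $|\Bf^{\prime}|\cdot O(n\eta\log n/\phi)$.

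The step I expect to be the main obstacle is the congestion bound. Taken as a black box, \cref{lemma:induction} multiplies the congestion by $(1+1/\eta)$ at each of the $p$ applications, which would blow it up to $(1+1/\eta)^p$. The point to push through is that the rerouting used inside $C_j$ is confined to $E(C_j)$ and raises the flow there by at most $\lceil\kappa\rceil/\eta$: the rerouting demand inside $C_j$ charges each of its units to an edge of $X_i\cap E(C_j)$, and the number of current flow paths through such an edge is at most $\Bf^{(j-1)}(e)\le\Bf(e)\le\kappa$, precisely because --- by the disjointness above --- the flow on $E(C_j)$ has not been touched before step $j$; so applying \cref{lemma:expander-routing-respecting} (scaled by the $1/\xi$ factor with $\xi=\lceil\eta\, c_{\ref{lemma:expander-routing-respecting}}\log m/\phi\rceil$, exactly as in the proof of \cref{lemma:induction}) adds at most $\lceil\kappa\rceil/\eta$ to edges of $C_j$ and nothing elsewhere. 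Hence the flow on $E(C_j)$ stays $\le\kappa$ up to step $j$ and $\le\lceil\kappa\rceil(1+1/\eta)$ thereafter, while every other edge keeps flow $\le\kappa$, so $\cong(\Bf^{\prime})\le\lceil\kappa\rceil(1+1/\eta)$. Making this airtight requires either slightly strengthening the statement of \cref{lemma:induction} so it reports the per-expander congestion increase, or re-running its (short) argument here with $\Bf^{(j-1)}$ in place of $\Bf$ and with $\kappa$ (rather than $\cong(\Bf^{(j-1)})$) as the respecting parameter; the hypotheses $\phi\ge 1/\poly(m)$ and $\kappa\le\poly(m)$ needed for \cref{lemma:expander-routing-respecting} hold in our regime since $m\le n^4$, $\phi=1/n^{o(1)}$, and $\kappa$ is small. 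The fractional case is reduced to the integral one by passing to $G^{(z)}$, exactly as in \cref{lemma:induction}.
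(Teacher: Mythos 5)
Your proof is correct and follows the paper's approach of applying \cref{lemma:induction} to each level-$i$ expander in turn, with the weight bound coming from $\Bw_{\cH}(e)\le|C_j|$ and $\sum_j|C_j|\le n$; the paper's own proof is a one-liner that elides exactly the congestion-accumulation subtlety you flag. Your resolution --- that the ``additionally'' clause keeps $\Bf^{(j-1)}$ bounded by $\Bf$ on $E(C_j)$ (the $C_j$ being vertex-disjoint SCCs), so the rerouting demand at step $j$ is still $\kappa$-respecting with the \emph{original} $\kappa$ and the per-edge increase stays confined to $E(C_j)$, yielding $\lceil\kappa\rceil(1+1/\eta)$ rather than $\lceil\kappa\rceil(1+1/\eta)^p$ --- is precisely what makes the paper's terse invocation airtight, and your note that this amounts to a mild strengthening or re-run of \cref{lemma:induction} rather than a literal black-box application is a fair reading.
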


\begin{proof}
  The corollary simply follows by applying \cref{lemma:induction} to every level-$i$ expander in an arbitrary order, using the fact that the vertex-sizes of the level-$i$ expanders sum up to $n$ and a level-$i$ expanding edge $e$ has weight $\Bw_{\cH}(e) \le |C|$ for $C$ being the level-$i$ expander $e$ belongs to.
\end{proof}

\cref{lemma:good-weight-function-congested} can now be proved.

\GoodWeightFunctionCongested*

\begin{proof}
  Let $\Bf_{\eta+1} \defeq \Bf$ be the given integral flow with congestion $\kappa_{\eta+1} = \kappa$.
  From $i = \eta$ to $1$, we apply \cref{cor:induction} on $\Bf_{i+1}$ and $\kappa_{i+1}$ on target level $i$, getting a flow $\Bf_{i}$ with congestion $\kappa_i \leq \left\lceil \kappa_{i+1}\right\rceil \left(1 + \frac{1}{\eta}\right)$. By induction, it is easy to see that $\kappa_{i}\leq \kappa \cdot (\eta + 2 - i)$.
  The returned flow $\Bf^{\prime\prime}$ is then set to $\Bf_1$, which has congestion $\kappa(\eta+1)$, with the property that
  \[ \sum_{e \in E \setminus D}\Bf^{\prime\prime}(e)\Bw_{\cH}(e) \leq \eta \cdot \left|\Bf^{\prime\prime}\right| \cdot O\left(n\cdot \frac{\eta \log n}{\phi}\right) \]
  using that each rerouting does not affect (or can only decrease) flows on higher-level edges by \cref{obs:rerouting}\labelcref{item:rerouting-preserve}.
  The lemma then follows from \cref{lemma:dag-edge-length} which asserts that the weights of DAG edges on a path are bounded by the weights of expanding edges on it, up to an additive factor of $n$ which is dominated.
\end{proof}

The above lemma shows that there is a 
\emph{fractional} flow which is also short. While this is good enough to guarantee that our push relabel algorithm returns an approximate flow, we note in the following corollary that, by paying another $\log(n)$-factor in congestion, we can assume the short flow is \emph{integral}.
\ifnum\cameraready=0
This observation will be useful later in \cref{sec:low-diameter-pruning}.
\else
This observation will be useful later in the full version~\cite[Section 6.2]{BernsteinBST24}.
\fi
\begin{corollary}
  Given a flow instance $\cI = (G,\Bsource,\Bsink)$ routable with congestion $\kappa\in \N$ in a graph $G$ equipped with a $\phi$-expander hierarchy $\cH$ of height $\eta$, there is an integral flow $\Bf$ routing $\cI$ with congestion $O(\kappa\eta\log{n})$ such that $\Bw_{\cH}(\Bf) = |\Bf|\cdot O\left(n \cdot \frac{\eta^{2}\log n}{\phi}\right)$.
  \label{cor:good-weight-function-integral}
\end{corollary}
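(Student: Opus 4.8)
The plan is to derive \cref{cor:good-weight-function-integral} from \cref{lemma:good-weight-function-congested} by rounding the short fractional flow it produces to an integral one without increasing its $\Bw_\cH$-weight. Since $\cI$ is routable with congestion $\kappa\in\N$ and $G^{(\kappa)}$ has integral capacities, there is an integral $(\Bsource,\Bsink)$-flow $\Bf_0$ routing $\cI$ with $\cong(\Bf_0)\le\kappa$. Feeding $\Bf_0$ into \cref{lemma:good-weight-function-congested} yields an equivalent flow $\Bf^{\star}$ — routing the same demand, hence still routing $\cI$, so $|\Bf^{\star}|=\|\Bsource\|_1=:V$ — with $\cong(\Bf^{\star})\le(\eta+1)\kappa$ and $\Bw_\cH(\Bf^{\star})\le V\cdot O\!\left(n\cdot\tfrac{\eta^{2}\log n}{\phi}\right)$. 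The only thing missing is that $\Bf^{\star}$ may be fractional.

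To fix this I would round $\Bf^{\star}$ to an integral flow $\Bf$ with $\BB_G^{\top}\Bf=\BB_G^{\top}\Bf^{\star}$ (so $\Bf$ still routes $\cI$), $\Bzero\le\Bf\le(\eta+1)\kappa\cdot\Bc$ (so $\cong(\Bf)\le(\eta+1)\kappa$; note $(\eta+1)\kappa\cdot\Bc$ is integral), and $\Bw_\cH(\Bf)\le\Bw_\cH(\Bf^{\star})$, via the classical cycle-cancelling argument. As long as the current flow is non-integral, the subgraph $E'$ of edges carrying a non-integral value is not a forest: a degree-$1$ vertex of $E'$ would have exactly one incident non-integral edge, contradicting that at every vertex the net flow equals the integral net demand, so the fractional parts of the incident flow values must cancel modulo $1$. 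Pick an (undirected) cycle in $E'$ and push flow around it, in whichever of its two orientations does not increase the $\Bw_\cH$-cost — one of them must, since reversing the orientation negates the cost change — by the largest amount keeping every flow value in its capacity interval; this is possible because every non-integral value differs from an integer by strictly less than $1$. Each such step preserves $\BB_G^{\top}\Bf^{\star}$ and feasibility, does not increase the weight, and strictly decreases $|E'|$, so finitely many steps produce the desired integral $\Bf$. (Equivalently, one may invoke integrality of the min-cost-flow polytope with capacities $(\eta+1)\kappa\cdot\Bc$ and cost vector $\Bw_\cH$.) Combining, $\Bf$ is an integral flow routing $\cI$ with $\cong(\Bf)\le(\eta+1)\kappa=O(\kappa\eta)\le O(\kappa\eta\log n)$ and $\Bw_\cH(\Bf)\le\Bw_\cH(\Bf^{\star})=|\Bf|\cdot O\!\left(n\cdot\tfrac{\eta^{2}\log n}{\phi}\right)$, as required.

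The only step that needs care is the rounding: one must check that pushing along a cycle in the fractional support keeps the flow feasible with respect to the relaxed capacities, which relies precisely on each coordinate's fractional part being below $1$. I expect no real obstacle here — in fact the $\log n$ slack that the statement permits in the congestion is not needed by this argument. It would instead be consumed by the more elementary alternative of iteratively extracting, via \cref{cor:fractional-implies-integral} (greedy weight-bounded augmenting paths), an integral flow routing a constant fraction of the remaining demand in the residual graph and recursing $O(\log n)$ times: there the residual demand need not stay routable with bounded congestion, which forces the per-round weight threshold to grow and thus loses an extra $\log n$ factor in $\Bw_\cH(\Bf)$.
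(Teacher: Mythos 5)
Your proof is correct, and it takes a genuinely different route from the paper's. The paper applies \cref{cor:fractional-implies-integral} to the fractional short flow produced by \cref{lemma:good-weight-function-congested} to peel off an integral short flow routing a constant fraction of the demand, and then repeats on the residual demand $O(\log n)$ times, summing the pieces; the $\log n$ in the congestion bound $O(\kappa\eta\log n)$ is exactly the number of iterations (the weight bound does \emph{not} pay an extra $\log n$, because $\sum_i |\Bf_i| = \|\Bsource\|_1$). You instead round the fractional flow $\Bf^\star$ from \cref{lemma:good-weight-function-congested} in a single step, via cycle-cancelling, or equivalently integrality of the flow polytope with integral capacities $(\eta+1)\kappa\,\Bc$, integral divergences $\BB_G^\top\Bf_0$, and cost $\Bw_\cH$. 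The cycle-cancelling step is sound: because $\BB_G^\top\Bf^\star=\BB_G^\top\Bf_0$ is integral, every vertex incident to a fractionally-valued edge is incident to at least two, so the fractional support contains an undirected cycle; pushing along the orientation that does not increase $\Bw_\cH$-cost, by the minimum distance on the cycle to the nearest feasible integer, strictly shrinks the fractional support while preserving divergence, feasibility, and a non-increasing weight. This yields congestion $(\eta+1)\kappa = O(\kappa\eta)$, strictly better than what is claimed, and it sidesteps a subtlety in the paper's iteration, namely that re-applying \cref{lemma:good-weight-function-congested} to the residual demand implicitly requires that residual demand still be routable in $G$ with congestion $\kappa$, which the paper does not argue. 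One small correction to your closing aside: the iterative alternative would lose its extra $\log n$ in the \emph{congestion}, not in $\Bw_\cH(\Bf)$, because the residual demand shrinks geometrically and so the weight contributions telescope.
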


\begin{proof}
  The existence of a fractional such a flow is given by \cref{lemma:good-weight-function-congested}.
  By \cref{cor:fractional-implies-integral}, we get a short integral flow $\Bf_1$ routing $\frac{1}{6}$ fraction of $\|\Bsource\|_1$.
  On the residual demand $(\Bsource_{\Bf}, \Bsink_{\Bf})$ we may apply the same argument again in $G$ (not in $G_{\Bf_1}$) and get a short integral $\Bf_2$ routing $\frac{1}{6}$ fraction of $\|\Bsource_{\Bf_{1}}\|$
  Repeating this $O(\log{n})$ times until the demand becomes empty, we get $O(\log{n})$ flows $\Bf_1, \ldots, \Bf_{O(\log{n})}$ in $G$, each with congestion $O(\kappa\eta)$ and $\Bw_{\cH}(\Bf_i) = O\left(|\Bf_i| \cdot \frac{n\eta^2\log n}{\phi}\right)$, for which $\Bf \defeq \Bf_1 + \cdots + \Bf_{O(\log{n})}$ routes $\cI$.
  We also have $\Bw_{\cH}\left(\Bf\right) \leq \Bw_{\cH}\left(\Bf_1\right) + \cdots + \Bw_{\cH}\left(\Bf_{O(\log{n})}\right) \leq O\left(|\Bf| \cdot \frac{n\eta^2\log n}{\phi}\right)$, proving the corollary.
\end{proof}

\section{The Sparse-Cut Algorithm}\label{sec:sparse-cut}

A central building block in constructing expander decompositions or even expander hierarchies in general is to either solve a flow problem or find a sparse cut in the graph.
In this section we provide such a subroutine using our push-relabel algorithm. Our algorithm to build the expander hierarchy (in \cref{sec:nested-expander-decomposition}) will heavily rely on the following theorem which we prove here.
\begin{theorem}\label{thm:flow}
  Given a diffusion instance $\cI = (G, \Bc, \Bsource, \Bsink)$ on a strongly connected $n$-vertex graph $G$, %
   a $\phi$-expander hierarchy $\cH$ of $(G \setminus F, \Bc)$
   of height $\eta$, and some $\kappa\in \N$ with $1/\phi, \kappa \le n$, there is an $\widetilde{O}(n^2\cdot \frac{\kappa\eta^4}{\phi^{2}})$ time algorithm \emph{\alg{SparseCut}{$\cI, \kappa, F, \cH$}} that finds a flow $\Bf$ with congestion $\kappa$ and, if $|\Bf| < \|\Bsource\|_1$, a cut $\emptyset \neq S \subsetneq V$ with $\abs_{\Bf}(S) = \Bsink(S)$ and $\ex_{\Bf}(S) = \ex_{\Bf}(V)$ such that
  \begin{equation}
    \Bc(E_G(S, \overline{S})) \leq \frac{O(|\Bf|) +\min\{\vol_{F, \Bc}(S), \vol_{F, \Bc}(\overline{S})\}}{\kappa}.
    \label{eq:flow-guarantee}
  \end{equation}
\end{theorem}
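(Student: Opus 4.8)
The plan is to run the weighted push-relabel algorithm (\cref{thm:push-relabel-main-theorem}) with a carefully chosen weight function and height, and then — if the flow it returns does not route all the supply — to extract the desired sparse cut from the residual distance layers via a ball-growing argument. First I would set up the weight function: take $\Bw \defeq \Bw_{\cH}$ on the edges of $G \setminus F$ as induced by the given hierarchy $\cH$, and assign all edges in $F$ some large weight, on the order of $n$ (so that any path is ``charged'' appropriately when it uses an $F$-edge). I would then call \alg{PushRelabel}{$G^{(\kappa)}, \Bsource^{(\kappa)}, \Bsink^{(\kappa)}, \Bw, h$} on the $\kappa$-fold capacity blow-up (equivalently, allow congestion $\kappa$), with height $h = \widetilde{\Theta}(n \eta^2 / \phi)$ chosen large enough that the path-weight guarantee of \cref{thm:good-weight-function}/\cref{cor:good-weight-function-integral} kicks in for the hierarchy of $G \setminus F$. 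The running time is $\widetilde{O}(m + h \sum_e 1/\Bw(e))$; using \cref{claim:sum-edge-weights-inv} for the $G\setminus F$ part (contributing $\widetilde{O}(n)$ to the inverse-weight sum) and the fact that $|F|$ edges each of weight $\Theta(n)$ contribute only $O(|F|/n) = O(n)$ more, this is $\widetilde{O}(n^2 \kappa \eta^2/\phi)$, within the claimed budget (the extra $\eta^2/\phi$ factors in the theorem statement leave room).

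If the returned flow $\Bf$ has $|\Bf| = \|\Bsource\|_1$ we are done. Otherwise, there is an unsaturated source $s$ and an unsaturated sink $t$ with $\dist^{\Bw}_{G_{\Bf}}(s,t) > 3h$ by \cref{thm:push-relabel-main-theorem}\labelcref{item:push-relabel:invariant}. Following the overview's level-two sparse-cut strategy, I would define a modified weight $\Bwp$ that agrees with $\Bw$ except it zeros out the DAG edges $D$ of $\cH$ (keeping the small contribution of residual edges of $\Bf$ under control via \cref{lem:pr-no-shortcut}), set $\Bdp(v) \defeq \dist^{\Bwp}_{G_{\Bf}}(v,t)$, and argue $\Bdp(s) = \Omega(h)$ using \cref{lemma:dag-edge-length} (DAG edges account for at most half the $\Bw$-weight of any path, up to an additive $n$). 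Then consider the level cuts $S_{\geq k} \defeq \{v : \Bdp(v) \geq k\}$; I take $S$ to be the such cut that is tight in the sense that $\abs_{\Bf}(S) = \Bsink(S)$ and $\ex_{\Bf}(S) = \ex_{\Bf}(V)$ — this holds for any level cut on the ``$s$-side'' because all unsaturated sinks lie at level $0$ and the residual-graph structure means $\Bf$ cannot leave $S$ toward an unsaturated sink. Crucially, level cuts contain no $D$-edges (weight $0$), and for each SCC $C$ of $G \setminus F_{>\text{level}}$, the $\Bwp$-diameter of $C$ is $\widetilde{O}(|C|/\phi)$ by expansion (\cref{fact:expander-routing-overview}, via \cref{obs:back-edge-larger-than-exp-edge}), so edges internal to the hierarchy of $G \setminus F$ appear in only $\widetilde{O}(n/\phi)$ of the $\Omega(h)$ levels. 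Hence, for $h$ large enough, a constant fraction of levels have their crossing edges lying entirely in $F$, and a standard ball-growing argument (as in the proof of \cref{lemma:expander-routing}: if no level cut in this range were $\phi$-sparse with respect to $F$, then $\vol_F(S_{\geq k})$ would grow by a $(1+\phi)$ factor per level, exceeding the total after $\widetilde{O}(1/\phi)$ steps) produces the desired cut. The bound \eqref{eq:flow-guarantee} follows: the $O(|\Bf|)/\kappa$ term absorbs both the few $F$-edges saturated by $\Bf$ (of which there are $O(|\Bf|/\kappa)$ after the capacity blow-up) and the residual-$\Bf$ edges counted in level cuts, while the sparsity of the chosen level cut against $F$ gives the $\min\{\vol_{F,\Bc}(S),\vol_{F,\Bc}(\overline S)\}/\kappa$ term.

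For the expander-internal edges I would need the more delicate \emph{path-reversal pruning} (the technical highlight flagged in the overview): flow $\Bf$ may reverse paths inside an SCC $C$, destroying its expansion, but reversing $\sigma$ paths only forces a pruned set $P_C$ of size $\widetilde{O}(\sigma/\phi)$ rather than $\widetilde{O}(\sum |R_i|/\phi)$, so $C \setminus P_C$ retains small $\Bwp$-diameter and the pruned parts, being small, contribute negligibly to the level-cut count. I expect \textbf{this pruning step and the precise accounting of residual-$\Bf$ edges across level cuts to be the main obstacle} — it is where the argument departs most from standard expander-decomposition machinery, and getting the polynomial dependence on $\eta$, $1/\phi$, $\kappa$ exactly right (to land inside the stated $\widetilde{O}(n^2 \kappa\eta^4/\phi^2)$ and the cut bound \eqref{eq:flow-guarantee}) requires threading the $h = \widetilde{\Theta}(n\eta^2/\phi)$ choice through all three sources of ``bad'' levels (DAG-induced shortening, expander diameters summing to $\widetilde{O}(n/\phi)$, and pruned-set volumes). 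The capacitated case adds only bookkeeping: one works in $G^{(\kappa)}$ or equivalently uses link-cut trees in push-relabel per \cref{sec:push-relabel-implementation}, and \cref{fact:equivalence} lets all expansion statements transfer.
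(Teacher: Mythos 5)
Your overall architecture — run weighted push-relabel with $\Bw_{\cH}$ on $E\setminus F$ and weight $\approx n$ on $F$, zero out the DAG edges, take distance layers in the residual graph, appeal to low-diameter-after-pruning for the expander-internal edges, and ball-grow on the remaining ``good'' levels — is exactly the paper's argument. But there are two linked quantitative errors that would make the proof fail as written.

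First, the ball-growing rate is wrong. You say the volume grows by a factor of $(1+\phi)$ per level, which is the rate you'd get if you were looking for a $\phi$-sparse cut. But the cut you need satisfies $\Bc(E_G(S,\overline S)) \lesssim \min\{\vol_{F,\Bc}(S),\vol_{F,\Bc}(\overline S)\}/\kappa$; after moving to capacities $\Bc^\kappa = \kappa\Bc$ and subtracting the $O(|\Bf|)$ slack, the relevant comparison is $\Bc^\kappa_{\Bf}(\text{level cut}\cap \forward{F})$ versus $\vol_F$, and the ball-growing inequality takes the form $\vol^\kappa_F(U_{\leq j}) \geq (1+\tfrac1\kappa)\vol^\kappa_F(U_{\leq j-1})$. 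The growth factor is $(1+1/\kappa)$, not $(1+\phi)$, so you need $\widetilde\Omega(\kappa)$ good consecutive levels for the contradiction, i.e.\ roughly $h = \widetilde\Omega(n\kappa)$ once you fold in the $n$-blocking of levels (since an $F$-edge has weight $n$, consecutive level cuts within distance $n$ can share $F$-edges, so the ball-growing must proceed in blocks of width $n$). Second, and compounding this, the expander diameter after path-reversal pruning is $\widetilde{O}(|C|\eta^3/\phi^2)$ (not $\widetilde{O}(|C|/\phi)$), so the number of ``bad'' levels is $\widetilde{O}(n\eta^4/\phi^2)$. Together these force $h = \widetilde\Theta(n\kappa\eta^4/\phi^2)$, not your $\widetilde\Theta(n\eta^2/\phi)$. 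With your $h$, either the bad levels would exhaust the total or the ball-growing would not terminate in a contradiction.

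This also explains why there is no ``room'' in the runtime: with $\sum_e 1/\Bw(e) = \widetilde{O}(n)$, push-relabel takes $\widetilde{O}(hn) = \widetilde{O}(n^2\kappa\eta^4/\phi^2)$, matching the theorem exactly. (Your runtime estimate of $\widetilde{O}(n^2\kappa\eta^2/\phi)$ also misattributes the $\kappa$ factor: using capacities $\kappa\Bc$ with link-cut trees does not multiply the push-relabel time by $\kappa$; the $\kappa$ must come from $h$.)
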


\begin{remark}
When $F = \emptyset$ and $\kappa = 1$, \cref{thm:flow} is an $O(1)$-approximate maximum flow algorithm because it either routes all the source, otherwise there is a cut $S$ where 
$\Bc(E_G(S, \overline{S})) = O(|\Bf|)$, which certifies that $\Bf$ is an $O(1)$-approximation. 
One can view this theorem as a generalization of our approximate maximum flow algorithm, as explained in the proof of \cref{sec:weight,thm:max-flow}, where we do not quite have a $\phi$-expander hierarchy of the full graph, but only of $G\setminus F$ for some edge set $F$. The quality of the flow (and cut) we can find here will depend on the edge set $F$ (see the $\vol_{F}$-terms in the theorem statement). As we will see later in \cref{sec:nested-expander-decomposition}, the guarantees here are good enough for the sparse-cut subroutines we need when building the expander hierarchy: in particular, using \cref{thm:flow} we can either certify that $F$ is $\widetilde{\Theta}(\frac{1}{\kappa})$-expanding in $G$ or else find a sparse cut with respect to $F$.
\end{remark}

\paragraph{The Algorithm.}
We first describe the algorithm for \cref{thm:flow} whose pseudocode is given in \cref{alg:sparse-cut}.

\begin{algorithm}[!ht]
  \caption{\alg{SparseCut}{$\cI = (G, \Bc, \Bsource, \Bsink), \kappa, F, \cH$}} \label{alg:sparse-cut}
  
  \SetEndCharOfAlgoLine{}
  \SetKwInput{KwData}{Input}
  \SetKwInput{KwResult}{Output}
  \SetKwProg{KwProc}{function}{}{}
  \SetKwFunction{Relabel}{Relabel}
   \SetKwFor{Loop}{main loop}{}{}

  Let  
  $h \defeq \left\lceil\frac{4\eta^4 \cdot c_{\ref{lemma:low-diameter-expander-new}} \cdot \log^7 n \cdot \kappa}{\phi^2}\cdot n\right\rceil$, $\Bc^{\kappa} \defeq \kappa \cdot \Bc$, and
  $\Bw_G(e) \defeq  \begin{cases}\Bw_{\cH}(e) & \text{for $e \in E \setminus F$ (see \labelcref{eq:weight-function})}\\ n & \text{for $e \in F$}\end{cases}$.\;

  Run \alg{PushRelabel}{$G,\Bc^{\kappa},\Bsource, \Bsink,\Bw_{G},h$} (\cref{thm:push-relabel-main-theorem}) to get a flow $\Bf$.\;
  \lIf{$|\Bf| = \|\Bsource\|_1$}{
      \Return $\Bf$
  }
  \Else{
    Let $\Bw_{\Bf}$ be $\Bw_{G}$ extended to $G_{\Bf}$, except set $\Bw_{\Bf}(\forward{e}) \defeq 0$ for $e \in D_{\cH}$.\;
  
    Let $S_0 = \{s\in V : \Bsource_{\Bf}(s)>0\}$.\;
    Compute $\Bw_{\Bf}$-distance levels $S_i \defeq \left\{v \in V: \dist_{G_{\Bf}}^{\Bw_{\Bf}}(S_0, v) = i\right\}$ in the residual graph $G_{\Bf}$.
    
    \Return $\Bf$ and the cut $(S_{\leq i}, \overline{S_{\leq i}})$ minimizing $\Bc^{\kappa}_{\Bf}(E_{G_{\Bf}}(S_{\leq i}, \overline{S_{\leq i}})) - \min\{\vol_F(S_{\leq i}),\vol_{F}(\overline{S_{\leq i}})\}$.
}
  
\end{algorithm}

The main idea is to run our push-relabel algorithm to try to route the demand $(\Bsource, \Bsink)$. If it fails to find a large enough flow, we will show how to extract a ``sparse cut'' from the residual graph.
In order to run our push-relabel algorithm (\cref{thm:push-relabel-main-theorem,alg:push-relabel}), we need to supply it with a weight function $\Bw$. However, we do not yet have a $\phi$-expander hierarchy of the whole graph $G$, but only of $G\setminus F$.
A natural idea is to extend the weight function $\Bw_{\cH}$ to all of $G$, assigning edges in $F$
a large weight.

Let $\cH$ be the given hierarchy for $G \setminus F$.
Let %
\begin{equation}
  h \defeq \left\lceil\frac{4\eta^4 \cdot c_{\ref{lemma:low-diameter-expander-new}} \cdot \log^7 n \cdot \kappa}{\phi^2}\cdot n\right\rceil = O\left(\frac{n\cdot \eta^4 \log^7 n \cdot \kappa}{\phi^2}\right),
  \label{eq:def-h}
\end{equation}
for a constant $c_{\ref{lemma:low-diameter-expander-new}}$ that will be defined later in \eqref{eq:low-diameter-constant}.
Let $\Bc^{\kappa} \defeq \kappa \cdot \Bc$ as in \cref{alg:sparse-cut}.
We apply \cref{thm:push-relabel-main-theorem} on the flow instance $\cI^{\kappa} \defeq (G, \Bc^{\kappa}, \Bsource, \Bsink)$ to height $h$ and weight function $\Bw_{G}$ where $\Bw_G(e) \defeq \Bw_{\cH}(e)$ for $e \in E \setminus F$ and $\Bw_G(e) \defeq n$ for $e \in F$.
Let $\Bf$ be the flow returned by \cref{thm:push-relabel-main-theorem}.
If $|\Bf| = \|\Bsource\|_1$, we are done. Otherwise
$|\Bf| < \|\Bsource\|_1$, in which case by \cref{thm:push-relabel-main-theorem}\labelcref{item:push-relabel:invariant} we have $\dist_{G_{\Bf}}^{\Bw_G}(s, t) > 3h$ for any $\Bsource_{\Bf}(s) > 0$ and $\Bsink_{\Bf}(t) > 0$. In this case we need to find a sparse cut.

\paragraph{Running Time.}
The weight function $\Bw_{G}$ can be computed in $O(m\eta)$ time by \cref{lemma:compute-respecting-topo}.
The distance layers can be computed with a standard shortest path algorithm (e.g., Dijkstra's algorithm~\cite{Dijkstra59}) in $\tO(m)$ time.
  By \cref{thm:push-relabel-main-theorem,claim:sum-edge-weights-inv}, the running time of the $\alg{PushRelabel}{}$ call is $\widetilde{O}\left(n^{2}\cdot\frac{\kappa \eta^4}{\phi^2}\right)$.

\paragraph{Analysis in a Unit-Capacitated Multi-Graph.}
By the equivalence between the uncapacitated and capacitated definitions of $\phi$-expanding (see \cref{fact:equivalence}), we will assume (without loss of generality), for the remainder of this section, in our analysis that $G$ is a unit-capacitated multi-graph instead of a capacitated simple graph. 
That is, $\Bc = \Bone$ and $\Bc^{\kappa} = \kappa \cdot \Bone$.
Recall that the capacities are bounded by $n^2$ (\cref{sec:prelim}), and thus after replacing each capacitated edge with multiple parallel edges the graph contains $m \leq n^4$ edges.

\paragraph{Finding a Sparse Cut.}
To locate a sparse cut when $G\setminus F$ is the empty graph (that is, when we want to build the first level of expander decomposition) the following strategy is standard (see~e.g.~\cite{HenzingerRW17,SaranurakW19}) and sufficient for us: \emph{Let $S_0 = \{s : \Bsource_{\Bf}(s) > 0\}$ and compute the distance layers $S_{i} = \{v : \dist^{\Bw_{G}}_{G_{\Bf}}(S_0, v) = i\}$. Now at least one of the level cuts $E_{G}(S_{\leq i}, \overline{S_{\leq i}})$ must be sparse.} The proof of this strategy follows from a simple ball-growing argument.

Unfortunately, even when the underlying graph $G\setminus F$ is a DAG, the above strategy fails. The problem is that there might be too many DAG-edges crossing the level cuts. To solve this, we will modify the weight function slightly by setting all forward DAG edges to have weight $0$.
In particular, we let $\Bw_{\Bf}$ be the weight function on $E(G_{\Bf})$, where $\Bw_{\Bf}(e) = \Bw_G(e)$ for all $e$ except for $\Bw_{\Bf}(\forward{e}) = 0$ for $e \in D_{\cH}$.

As we will see in the remainder of this section, if we compute the distance layers with respect to $\Bw_{\Bf}$, at least one of the level cuts must be sparse. This means that the algorithm to find such a sparse cut is quite simple: just compute the distances, and output the sparsest of the level cuts.
While the algorithm itself is simple, showing the \emph{existence} of such a sparse level cut turns out to be nontrivial. There are essentially three types of edges we want to argue are sparse in most level cuts.

\begin{description}
  \item[DAG edges of $\cH$.]
The modification to the weight function makes it so that only DAG edges used in the flow can be in a level-cut, of which there are on average $O(|\Bf|)$ crossing each level cut.
\item[Edges in $F$.] These edges can be handled by a ball-growing argument (see proof of \cref{lemma:exists-sparse-cut}), similar to the case when constructing a single level expander decomposition. This shows that most level cuts $(S_{\leq i}, \overline{S_{\leq i}})$ have at most $\min\{\vol_{F}(S_{\leq i}), \vol_{F}(\overline{S_{\leq i}})\}$ edges from $F$ crossing them.
However, our modification of setting some weights to zero might have reduced the number of layers. So we must argue that we still have enough level cuts left in the graph, or, equivalently, we want the distance from any source to any sink in the residual graph to still be $\Omega(h)$. We argue this in \cref{lemma:many-levels}.
\item[Expanding edges of $\cH$.] These are arguably the trickiest edges to handle and is thus the focus of the majority of our analysis. We want to argue that most level cuts have few expanding edges of $\cH$ in them. In the original graph $G$, each expander in $\cH$ has a low diameter. If we can say that this is also the case in the residual graph $G_{\Bf}$, we can argue that each expander will only span a few level cuts, so most level cuts do not have any expanding edges at all. Using the properties of expanders and how the residual graph is constructed by reversing short augmenting paths, we show something in this direction. We prove in \cref{sec:low-diameter-pruning} a ``low-diameter expander pruning lemma'' which states that a large portion of each expander in $\cH$ remain intact and of low diameter also in the residual graph $G_{\Bf}$, and that there are only a few ``pruned'' edges which cannot contribute too much to the size of all level cuts.
\end{description}

\paragraph{The Modified Weight Function.}
We begin by showing that although the weights of some edges are set to $0$ in $\Bw_{\Bf}$, the distance in the residual graph remains large.
Overloading notation, let
\[ \dist_{G_{\Bf}}^{\Bw}(v) \defeq \min_{\Bsource_{\Bf}(s) > 0} \dist_{G_{\Bf}}^{\Bw}(s, v). \]

\begin{lemma}
  \label{lemma:many-levels}
  If $\dist_{G_{\Bf}}^{\Bw_G}(v) > 3h$, then $\dist_{G_{\Bf}}^{\Bw_{\Bf}}(v) > h$.
\end{lemma}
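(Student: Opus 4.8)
The plan is to reduce the statement to the DAG-charging lemma (\cref{lemma:dag-edge-length}), applied inside the residual graph $G_{\Bf}$ with the set of ``DAG edges'' taken to be the \emph{forward} copies $\forward{D_{\cH}}$ of the DAG edges, and with the reference weight function taken to be $\Bw_G$ rather than $\Bw_{\Bf}$. First I would dispose of the trivial case: if $v$ is not reachable from $S_0 \defeq \{s : \Bsource_{\Bf}(s) > 0\}$ in $G_{\Bf}$, then $\dist_{G_{\Bf}}^{\Bw_{\Bf}}(v) = \infty > h$ and there is nothing to prove. Otherwise, let $P$ be a shortest $(S_0, v)$-path in $G_{\Bf}$ with respect to $\Bw_{\Bf}$, so that $\Bw_{\Bf}(P) = \dist_{G_{\Bf}}^{\Bw_{\Bf}}(v)$; since $P$ is in particular \emph{some} $(S_0,v)$-path, it also satisfies $\Bw_G(P) \ge \dist_{G_{\Bf}}^{\Bw_G}(v) > 3h$.

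Next I would check that \cref{lemma:dag-edge-length} applies in $G_{\Bf}$ with $D' \defeq \forward{D_{\cH}}$, topological order $\Btau$, and weight function $\Bw_G$ (extended to $G_{\Bf}$ in the usual way, so that $\forward{e}$ and $\backward{e}$ both inherit $\Bw_G(e)$). The set $D'$ is acyclic with $\Btau$ as a topological order, because $\Btau_x < \Btau_y$ for every $(x,y) \in D'$. The required inequality $\Bw_G(f) \ge |\Btau_y - \Btau_x|$ for every edge $f = (x,y)$ of $G_{\Bf}$, with equality when $f \in D'$, holds because: for a forward or backward copy of an edge of $G \setminus F$ this is exactly the definition of $\Bw_{\cH}$ in \eqref{eq:weight-function} (and the forward copies of $D_{\cH}$-edges attain equality since they move strictly forward in $\Btau$), while for a copy of an edge of $F$ we have $\Bw_G(f) = n > n - 1 \ge |\Btau_y - \Btau_x|$. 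Thus \cref{lemma:dag-edge-length} yields $\Bw_G(P \cap D') \le n + \Bw_G(P \setminus D')$. Now $\Bw_{\Bf}$ agrees with $\Bw_G$ on every edge outside $D'$ and vanishes on $D'$, so $\Bw_G(P \setminus D') = \Bw_{\Bf}(P \setminus D') = \Bw_{\Bf}(P)$ and $\Bw_G(P) = \Bw_G(P \cap D') + \Bw_{\Bf}(P)$; substituting the inequality gives $\Bw_G(P) \le n + 2\,\Bw_{\Bf}(P)$.

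Finally, combining this with $\Bw_G(P) > 3h$ gives
\[
  \dist_{G_{\Bf}}^{\Bw_{\Bf}}(v) = \Bw_{\Bf}(P) \ge \frac{\Bw_G(P) - n}{2} > \frac{3h - n}{2} \ge h,
\]
where the last inequality uses $h \ge n$, which is immediate from the definition of $h$ in \eqref{eq:def-h} (the coefficient of $n$ there is at least $1$). I expect the only place that needs genuine care is the direction bookkeeping in the residual graph: it is precisely the \emph{forward} copies of $D_{\cH}$ that are monotone in $\Btau$ and that are zeroed out in $\Bw_{\Bf}$, whereas the backward copies of $D_{\cH}$ behave like ordinary non-DAG edges (both in their $\Btau$-displacement and in retaining weight $\Bw_{\cH}(e)$); keeping $D' = \forward{D_{\cH}}$ (and not $\forward{D_{\cH}} \cup \backward{D_{\cH}}$) is exactly what makes the hypotheses of \cref{lemma:dag-edge-length} hold with $\Bw_G$ as the reference weight.
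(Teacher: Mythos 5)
Your proof is correct and follows essentially the same route as the paper's: both take a shortest $\Bw_{\Bf}$-path $P$, apply \cref{lemma:dag-edge-length} in $G_{\Bf}$ with the DAG taken to be $\forward{D_{\cH}}$ and reference weight $\Bw_G$, exploit that $\Bw_{\Bf}$ agrees with $\Bw_G$ off $\forward{D_{\cH}}$ and vanishes on it, and close with $h \ge n$. The only differences are cosmetic: you dispose of the unreachable case explicitly and spell out the verification of the hypotheses of \cref{lemma:dag-edge-length} in a bit more detail than the paper does.
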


\begin{proof}
  Consider a vertex $v$ and let $P$ be the shortest path with respect to $\Bw_{\Bf}$ from an unsaturated source to $v$ in $G_{\Bf}$.
  Thus, the $\Bw_{\Bf}$-weight of $P$ is  $\dist_{G_{\Bf}}^{\Bw_{\Bf}}(v) = \Bw_{G}(P \setminus \forward{D})$ because the weight $\Bw_{\Bf}$ is the same as $\Bw_{G}$ except that the weights of all forward DAG-edges are set to zero.  
  
  We note that $\Bw_{G}$ satisfies the assumption of \cref{lemma:dag-edge-length} (in the graph $G_{\Bf}$, with the DAG $\forward{D}$), i.e., that $\Bw_{G}(e) \ge |\Btau_{u}-\Btau_{v}|$ for any edge $e = (u,v)$ since $\Bw_{G}(e) = n$ for $e\in F$ and otherwise it follows from the definition \labelcref{eq:weight-function} of $\Bw_{\cH}$.
  Hence we have
  \[
  \Bw_{G}(P \cap \forward{D}) \le n + \Bw_{G}(P \setminus \forward{D})
  = n + \dist_{G_{\Bf}}^{\Bw_{\Bf}}(v).
  \]
  Since $\dist^{\Bw_G}_{G_{\Bf}}(v)$ is the shortest distance to $v$ (with respect to $\Bw_{G}$), we have
  \[
    3h < \dist^{\Bw_G}_{G_{\Bf}}(v)
    \leq \Bw_{G}(P)
      = \Bw_{G}(P \cap \forward{D}) + \Bw_{G}(P \setminus \forward{D})
      \leq n + 2\dist_{G_{\Bf}}^{\Bw_{\Bf}}(v).
  \]
  Rearranging, we see that $\dist_{G_{\Bf}}^{\Bw_{\Bf}}(v)> \frac{3h-n}{2}\ge h$, as $h \ge n$.
\end{proof}

\paragraph{Level Cuts.}
Let
\[ S_i \defeq \left\{v \in V: \dist_{G_{\Bf}}^{\Bw_{\Bf}}(v) = i\right\} \] 
be the distance levels in the residual graph with respect to this reduced weight function $\Bw_{\Bf}$.
By \cref{lemma:many-levels}, we know that $S_{\leq h} \neq V$. %
\cref{thm:flow} now directly follows from the below key lemma that establishes the existence of a sparse level cut.
In the remainder of the section
we prove \cref{lemma:exists-sparse-cut}.

\begin{restatable}{lemma}{ExistsSparseCut}
  \label{lemma:exists-sparse-cut}
  There exists a level cut $S_{\leq i}$ with $0 \leq i \leq h$ such that
  \begin{equation}
  \Bc^{\kappa}_{\Bf}(E_{G_{\Bf}}(S_{\leq i}, \overline{S_{\leq i}})) \leq O(|\Bf|) + \min\{\vol_F(S_{\leq i}), \vol_F(\overline{S_{\leq i}})\}.
  \label{eq:sparse-cut-cf}
  \end{equation}
\end{restatable}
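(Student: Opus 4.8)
The plan is to prove \cref{lemma:exists-sparse-cut} by a ball-growing argument over the distance levels $S_i$, after classifying the edges that can possibly cross a level cut $E_{G_{\Bf}}(S_{\leq i},\overline{S_{\leq i}})$ into three types, as foreshadowed in the text: residual forward DAG edges of $\cH$, edges of $F$, and expanding edges of $\cH$. The key structural observation is that since $S_i$ is a $\Bw_{\Bf}$-distance partition, any edge of $G_{\Bf}$ crossing from $S_{\leq i}$ to $\overline{S_{\leq i}}$ must have positive $\Bw_{\Bf}$-weight (a zero-weight edge cannot increase the distance level), hence it is \emph{not} a forward DAG edge $\forward{e}$ with $e \in D_{\cH}$. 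Thus the only DAG-type edges crossing level cuts are \emph{backward} residual DAG edges, i.e.\ edges along which $\Bf$ sent flow; there are at most $\Bw(\Bf)/1 \le 9h|\Bf|$ such edges counted with multiplicity over all levels (using \cref{thm:push-relabel-main-theorem}\labelcref{item:push-relabel:short-flow} together with the fact that each such edge has $\Bw_\cH \ge 1$), actually more carefully the total number of residual-DAG-edge appearances across all $h$ level cuts is $O(h|\Bf|)$, so on average $O(|\Bf|)$ per level; by Markov, all but an $O(1/10)$-fraction of the $h+1$ levels have at most $O(|\Bf|)$ residual DAG edges crossing. I would isolate this counting into a short sub-claim.

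Next I would handle the $F$-edges and the expanding edges together via two "diameter/ball-growing" ingredients. For the expanding edges of $\cH$: invoke the low-diameter expander pruning lemma (\cref{lemma:low-diameter-expander-new}, referenced via $c_{\ref{lemma:low-diameter-expander-new}}$ in \eqref{eq:def-h}) applied to the residual graph $G_{\Bf}$. Since $\Bf$ has value $|\Bf| < \|\Bsource\|_1$ and small $\Bw_\cH$-weight, and since each augmenting path it found is $\Bw_G$-short (\cref{thm:push-relabel-main-theorem}, \cref{lem:pr-no-shortcut}), the reversals only "prune" a small set of edges; the unpruned part of each level-$i$ expander $C$ still has $\Bw_{\Bf}$-diameter $\widetilde{O}(|C|/\phi)$, so its internal edges appear in only $\widetilde{O}(|C|/\phi)$ consecutive level cuts. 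Summing $|C|$ over all expanders at all $\eta$ levels gives $\widetilde{O}(n\eta/\phi)$ total level cuts that can contain an unpruned expanding edge; since $h = \widetilde{\Theta}(n\eta^4\kappa/\phi^2) \gg \widetilde{O}(n\eta/\phi)$, all but a tiny fraction of the $h+1$ levels have \emph{no} unpruned expanding edges crossing them, leaving only pruned edges (whose total count is small, charged to $|\Bf|$) plus $F$-edges plus residual DAG edges. For the $F$-edges: on the set of remaining "good" levels, I run the standard ball-growing argument — if $E_{G_{\Bf}}(S_{\leq i},\overline{S_{\leq i}})$ consisted of more than $\min\{\vol_F(S_{\leq i}),\vol_F(\overline{S_{\leq i}})\}$ edges from $F$ for every remaining level, then $\vol_F(S_{\leq i})$ (or $\vol_F(\overline{S_{\leq i}})$) would grow geometrically across levels, forcing it to exceed $m^2$ within $\widetilde{O}(1/\phi) \ll h$ levels, a contradiction — exactly as in the proof of \cref{lemma:expander-routing}.

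Concretely, the order of steps I would carry out: (1) record that crossing edges of a $\Bw_{\Bf}$-level cut have positive $\Bw_{\Bf}$-weight, hence are not forward $\cH$-DAG edges; (2) a counting sub-claim bounding the total multiplicity of residual (backward) DAG edges over all level cuts by $O(h|\Bf|)$, giving $O(|\Bf|)$ per level on average; (3) invoke the low-diameter expander pruning lemma on $G_{\Bf}$ to conclude that unpruned expanding edges of $\cH$ occupy only $\widetilde{O}(n\eta/\phi)$ of the levels, and that the pruned edges number $\widetilde{O}(|\Bf|\,\mathrm{poly}(\eta,\log n)/\phi)$ in total so they contribute $O(|\Bf|)$ per level after averaging; (4) by a union bound / Markov over (2) and (3), at least, say, half of the $h+1$ levels $0 \le i \le h$ are "good," meaning the crossing edges decompose as $O(|\Bf|)$ non-$F$ edges plus some $F$-edges; (5) run the ball-growing argument on the good levels to find one good level $i$ where additionally the number of $F$-crossing edges is at most $\min\{\vol_F(S_{\leq i}),\vol_F(\overline{S_{\leq i}})\}$; (6) add up: $\Bc^\kappa_{\Bf}(E_{G_{\Bf}}(S_{\leq i},\overline{S_{\leq i}})) \le O(|\Bf|) + \min\{\vol_F(S_{\leq i}),\vol_F(\overline{S_{\leq i}})\}$ (here $\Bc^\kappa_{\Bf}$-capacities are bounded by $\kappa$, which is absorbed since we already scaled $h$ by $\kappa$; in the unit-capacitated multigraph reduction $\Bc^\kappa_{\Bf}$ just counts residual multiplicities). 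The main obstacle will be step (3): pushing the standard \emph{deletion}-based expander pruning to work under \emph{path-reversal} updates in $G_{\Bf}$, and verifying that the pruned set is small enough — size roughly (number of reversed paths)$/\phi$ rather than (total length of reversed paths)$/\phi$. I would rely on \cref{lemma:low-diameter-expander-new} (the path-reversal low-diameter pruning lemma proved in \cref{sec:low-diameter-pruning}) as a black box here, using that each reversed augmenting path has bounded $\Bw_G$-length (\cref{lem:pr-no-shortcut}) to control how a single reversal affects a single expander; the bookkeeping that ties the pruned-edge count to $O(|\Bf|)$ per level, and the choice of constant $c_{\ref{lemma:low-diameter-expander-new}}$ in \eqref{eq:def-h} large enough to beat all the $\mathrm{polylog}$ and $\eta$ factors, is the delicate part.
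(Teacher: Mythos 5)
Your plan follows the paper's proof quite closely: you classify the crossing edges into the right categories, observe that forward DAG edges cannot cross a $\Bw_{\Bf}$-level cut because their modified weight is zero, bound backward residual edges via $\Bw_G(\Bf)\le 9h|\Bf|$, invoke the path-reversal pruning lemma (\cref{lemma:low-diameter-expander-new}) to argue that unpruned expanding edges occupy few levels, average to find levels where all non-$F$ crossing capacity is $O(|\Bf|)$, and finish with ball-growing on $F$-edges. This is precisely the paper's decomposition via the intermediate \cref{lemma:good-level-cuts}.

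There is, however, a genuine gap in your step (5). You want to run ball-growing ``exactly as in the proof of \cref{lemma:expander-routing},'' but that argument works over $F$-distance levels where every $F$-edge has weight $1$, so a crossing $F$-edge lands in the very next level and its tail lies in the outermost shell. Here the levels are $\Bw_{\Bf}$-distance levels with unit increments and $F$-edges have weight $n$, so a forward $F$-edge crossing $S_{\leq i}$ may have its tail anywhere in $S_{i-n+1},\ldots,S_i$: the standard inequality $\vol_F(S_{\leq i+1}) \ge \vol_F(S_{\leq i}) + (\text{crossing $F$-edges})$ fails because those tails are already counted in $\vol_F(S_{\leq i})$. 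The fix the paper uses and you need is to group the $g\ge h/4$ good indices into blocks of width at least $n$ — set $U_j := S_{\leq i^*_{jn}}\setminus S_{\leq i^*_{(j-1)n}}$, which gives \eqref{eq:dist-gap} — so that every $F$-edge crossing $U_{\leq j}$ emanates from the newest block $U_j$ and genuinely adds new volume. Only then does one get $\vol_F^{\kappa}(U_{\leq j}) \ge (1+1/\kappa)\vol_F^{\kappa}(U_{\leq j-1})$, and note the growth rate is $1+1/\kappa$, not $1+\Theta(\phi)$, so your ``contradiction within $\widetilde{O}(1/\phi)$ levels'' estimate is off; you need $h\gtrsim n\kappa\log n$ (supplied by \eqref{eq:def-h}) so that there are $\Omega(\kappa\log n)$ blocks of width $n$.

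One further cleanup: the bound $\Bw_G(\Bf)\le 9h|\Bf|$ you use in step (2) actually absorbs \emph{all} backward residual edges, not just backward DAG edges — any $\backward{e}$ has residual capacity $\Bf(e)$ and spans at most $\Bw_G(e)$ level cuts. That is the decomposition you should use, since \cref{lemma:low-diameter-expander-new} only speaks about \emph{forward} copies of expanding edges and the ball-growing only handles \emph{forward} $F$-edges; your stated classification (``backward DAG'' / ``$F$-edges'' / ``expanding edges'') otherwise leaves backward $F$-edges and backward expanding edges unaccounted for.
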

\begin{proof}[Proof of \cref{thm:flow}]
We take $(S_{\leq i}, \overline{S_{\leq i}})$ as the output cut $(S, \overline{S})$.
By definition, \cref{thm:push-relabel-main-theorem}\labelcref{item:push-relabel:invariant}, and \cref{lemma:many-levels}, we have $S_0 = \{s: \ex_{\Bf}(s) > 0\}$ and $S_{\leq h} \cap \{t: \abs_{\Bf}(t) < \Bsink(t)\} = \emptyset$, and therefore $\ex_{\Bf}(S_{\leq i}) = \ex_{\Bf}(V)$ and $\abs_{\Bf}(S_{\leq i}) = \Bsink(S_{\leq i})$ hold.

What remains is to show that a cut $(S,\overline{S})$ satisfying \labelcref{eq:sparse-cut-cf} (which, by \cref{lemma:exists-sparse-cut} our algorithm will find whenever $|\Bf|<\|\Bsource\|_1$) also satisfies the output requirement \labelcref{eq:flow-guarantee} of \cref{thm:flow}, i.e., 
$\Bc^{\kappa}(E_{G}(S,\overline{S})) \le O(|\Bf|) + \min(\vol_{F}(S),\vol_{F}(\overline{S}))$. Indeed this is the case since
$\Bc_{\Bf}^{\kappa}(E_{G_{\Bf}}(S,\overline{S})) = 
\Bc^{\kappa}(E_{G}(S,\overline{S})) - \Bf^{\mathrm{out}}(S) \ge
\Bc^{\kappa}(E_{G}(S,\overline{S})) - |\Bf|$ by \cref{fact:flow}.
\end{proof}

\subsection{Existence of Sparse Level Cuts}
To prove \cref{lemma:exists-sparse-cut}, we show that each expander, while in the residual graph, has a relatively large portion that still has a low diameter.
Fixing a level $\ell$ in the hierarchy, let $\left\{C^{(1)}_{\ell}, C^{(2)}_{\ell}, \ldots, C^{(k)}_{\ell}\right\}$ be the strongly connected components of $(G\setminus F) \setminus X_{>\ell}$. That is,
$C^{(i)}_\ell$ is a level-$\ell$ expander and denote by $X^{(i)}_\ell = X_\ell\cap E(C^{(i)}_{\ell})$ the set of level-$\ell$ expanding edges in $C^{(i)}_\ell$.

We now argue that except for a small subset of ``pruned'' edges $P_{\ell}^{(i)}$, the edges of the expander remain well-connected and more importantly stay relatively close to each other.

\begin{restatable}{lemma}{LowDiameterExpanderNew}
  There exists a subset $P_{\ell}^{(i)} \subseteq X_{\ell}^{(i)}$ such that
  \begin{enumerate}[(1)]
    \item\label{low-diameter-expander-new:item1} for each pair $e_1, e_2 \in X_{\ell}^{(i)} \setminus P_{\ell}^{(i)}$, we have
    $\dist_{G_{\Bf}}^{\Bw_{\Bf}}(\forward{e_1}, \forward{e_2}) \leq O\left(\frac{\left|C^{(i)}_{\ell}\right|\eta^3\log^7 n}{\phi^{2}}\right)$,
    and
    \item\label{low-diameter-expander-new:item2}  $\left|P_{\ell}^{(i)}\right| \leq O\left(\frac{\eta \log^6 n}{\kappa \phi}\right) \cdot |\Bf|$.
  \end{enumerate}
  \label{lemma:low-diameter-expander-new} 
\end{restatable}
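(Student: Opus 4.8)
The plan is to reduce the statement to an expander-pruning argument on the single expander $C = C_\ell^{(i)}$, where the "updates" are the path-reversals induced by the flow $\Bf$ restricted to $C$. First I would fix the level $\ell$ and the index $i$ and look at the flow paths of $\Bf$ (using \cref{fact:path-decomposition}) that pass through $C$; call the portions of these paths lying inside $C$ the \emph{reversed segments} $R_1,\dots,R_\sigma$. Since $\Bf$ has congestion $\kappa$ and edge set $X_\ell^{(i)}$ is $\phi$-expanding in $C$, each reversed segment that "uses up" an edge of $F$... wait, $C$ contains no $F$-edges since $C$ is an SCC of $(G\setminus F)\setminus X_{>\ell}$, so the relevant count is the total number of augmenting paths of $\Bf$ through $C$. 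By the short-augmenting-path property of weighted push-relabel (\cref{lem:pr-no-shortcut}, \cref{thm:push-relabel-main-theorem}\labelcref{item:push-relabel:short-flow}), each such path has $\Bw_G$-length $O(h)$, and since each edge of $C$ has $\Bw_{\cH}$-weight $\geq 1$, a crude bound is $\sigma \cdot (\text{something}) \le O(|\Bf| \cdot h / (\text{min weight}))$; the more careful bound, which is what we actually need, is that the number of distinct reversed segments charged to $C$ is $O(|\Bf|)$ times a $\polylog(n)/\phi$ factor, obtained by charging each segment to the flow it carries across the boundary of $C$. This is the first place where I expect to spend real care.

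Next I would invoke the \emph{path-reversal pruning} result (the technical highlight advertised in the overview, to be proved in \cref{sec:low-diameter-pruning}): reversing $\sigma$ paths in a $\phi$-expander $C$ with respect to terminal set $X_\ell^{(i)}$ admits a pruned set $P_\ell^{(i)} \subseteq X_\ell^{(i)}$ of size $\widetilde{O}(\sigma/\phi)$ such that $X_\ell^{(i)}\setminus P_\ell^{(i)}$ remains $\Omega(\phi/\polylog(n))$-expanding in the \emph{reversed} graph $C_{\Bf}$ (the subgraph of $G_{\Bf}$ on $V(C)$). Combining the size bound on $\sigma$ from the previous paragraph with this $\widetilde{O}(\sigma/\phi)$ guarantee yields exactly part \labelcref{low-diameter-expander-new:item2}, $|P_\ell^{(i)}| \le O\!\left(\frac{\eta\log^6 n}{\kappa\phi}\right)\cdot |\Bf|$, once all the $\polylog$ factors are collected (the exact exponents are bookkeeping — the $\eta$ comes from the hierarchy's effect on the weight function, the $\kappa$ from the capacity scaling $\Bc^\kappa = \kappa\Bc$, and the $\log^6 n$ from iterated cut-matching / ball-growing).

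For part \labelcref{low-diameter-expander-new:item1}, the idea is that expansion implies small diameter: if $X_\ell^{(i)}\setminus P_\ell^{(i)}$ is $\phi'$-expanding in $C_{\Bf}$ with $\phi' = \Omega(\phi/\polylog(n))$, then a standard ball-growing argument shows that any two vertices incident to surviving terminal edges are within $\Bw_{\Bf}$-distance $O(|C| \cdot \polylog(n)/\phi'^{?})$ of each other inside $C_{\Bf}$ — here I need to be slightly careful because $\Bw_{\Bf}$ assigns weights up to $|C|$ to intra-$C$ edges (by \cref{obs:back-edge-larger-than-exp-edge}), so the weighted diameter picks up an extra $|C|$ factor over the unweighted one, and the relevant volume is $\vol_{X_\ell^{(i)}}$ which is at most $O(|C|^2)$. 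Collecting these gives the claimed bound $O\!\left(\frac{|C_\ell^{(i)}|\,\eta^3\log^7 n}{\phi^2}\right)$; since $\dist_{G_{\Bf}}^{\Bw_{\Bf}}$ only takes a min over more paths than $\dist_{C_{\Bf}}^{\Bw_{\Bf}}$, the bound in $C_{\Bf}$ transfers to $G_{\Bf}$. The main obstacle I anticipate is the path-reversal pruning lemma itself — proving that reversing an entire path costs only $\widetilde{O}(1/\phi)$ pruned volume rather than $\widetilde{O}(|R|/\phi)$ — since this departs substantially from the standard edge-deletion pruning analysis and requires the new argument promised in \cref{sec:low-diameter-pruning}; conditional on that lemma, the rest is assembling known ball-growing and charging estimates with careful polylog accounting.
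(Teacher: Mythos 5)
Your high-level skeleton is right — build a certificate, maintain it under the path-reversals induced by the push-relabel augmenting paths, and convert residual expansion into a small $\Bw_{\Bf}$-diameter — and you correctly identify that the crux is path-reversal pruning (\cref{lemma:pruning-reversal}) in place of edge-deletion pruning. But the paper's proof differs in two structural ways that your plan does not account for, and without them the argument breaks.

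First, the pruning is not run on $C$ itself with terminal set $X_\ell^{(i)}$. The paper instead builds a lean witness graph $H_0 \subseteq C$ by running a cut-matching game against \cref{claim:was-routable}, so that $H_0$ is the union of embedding paths of $\Bw_G$-length $D = O(|C|\eta^2\log n/\phi)$, and equips it with vertex weights $\Bnu_0$ chosen so that $\Bnu_0(v) \geq \sum_{e\in\delta_{H_0}(v)}\Bw_G(e)$ while $\Bnu_0$ remains $\sigma$-expanding in $H_0$ for $\sigma\approx\phi^2/|C|$ up to polylogs (\cref{clm:nu-and-sigma}). It is this $\Bnu$-condition that makes the \emph{weighted} low-diameter lemma \cref{lemma:expander-low-diameter} applicable. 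Your plan of pruning $C$ directly yields $\deg_{X_\ell^{(i)}}$-expansion, which gives an unweighted $X$-diameter bound; the step from there to the $\Bw_{\Bf}$-diameter bound does not work because the non-$X$ edges on a path may carry arbitrary $\Bw_G$-weight, so ``low $X$-diameter'' does not imply ``low $\Bw_G$-diameter.'' Multiplying by $|C|$ as you suggest over-counts and in any case has the wrong structure.

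Second, your $\sigma$ -- the number of ``reversed segments'' of flow paths inside $C$ -- is not bounded by $|\Bf|$: a single augmenting path in $G_{\Bf}$ can enter and leave $C$ many times through other parts of the hierarchy, and each re-entry would be a separate reversal in your scheme, so the pruned set could be far larger than claimed. The paper's fix is to charge exactly \emph{one} path reversal per augmenting path: it truncates each $R_j$ to the span $R'_j$ between its first and last vertex in $H\setminus P$, and handles any excursions of $R'_j$ outside $H\setminus P$ (whether into $V\setminus V(H)$ or into the pruned set $P$) by inserting \emph{fresh} vertex copies into $H$ via \cref{lemma:pruning-reversal}\labelcref{op:add-vertex,op:add-edge}; this is why $H$ is no longer a subgraph of $C$ (or even of $G_{\Bf}^\kappa$ after forking). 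The paper explicitly notes that the pruned low-diameter structure lives in $G_{\Bf}^\kappa$, not in $G_{\Bf}^\kappa[C]$, so your closing remark that it suffices to argue in $C_{\Bf}$ and transfer distances upward cannot be made to work: the localized residual $C_{\Bf}$ may genuinely lack a small pruned low-diameter core. Finally, the size of the $\Bnu$-increase per reversal is controlled by a feedback loop you do not mention: because $H\setminus P$ has small $\Bw_G$-diameter at all times, \cref{lem:pr-no-shortcut} ensures each truncated $R'_j$ has $\Bw_G$-length $O(\log n/\sigma)$, which is exactly what keeps the cumulative $\Bnu$-growth and hence the pruned set small. Conditional on \cref{lemma:pruning-reversal}, the rest is indeed bookkeeping, but these three missing ingredients are the substance of that bookkeeping.
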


With \cref{lemma:low-diameter-expander-new} (whose proof we defer to \cref{sec:low-diameter-pruning}) we can now prove \cref{lemma:exists-sparse-cut}. First we prove the below intermediary lemma.
Recall that $\forward{F}$ is the set of forward edges of $F$ in the residual graph.
Let
\begin{equation}
c_{\ref{lemma:low-diameter-expander-new}} \geq 1\;\text{be (an upper bound on) the constant hidden in the}\;O(\cdot)\;\text{notation in \cref{lemma:low-diameter-expander-new}\labelcref{low-diameter-expander-new:item1}}.
\label{eq:low-diameter-constant}
\end{equation}

\begin{lemma}
  There are $g \geq \frac{h}{4}$ level cuts $S_{\leq i_1}, S_{\leq i_2}, \ldots, S_{\leq i_g}$ with $0 \leq i_1 < i_2 < \cdots < i_g \leq h$ such that
  \[ \sum_{1 \leq j \leq g}\Bc^{\kappa}_{\Bf}\left(E_{G_{\Bf}}(S_{\leq i_j}, \overline{S_{\leq i_j}}) \setminus \forward{F}\right) \leq O\left(|\Bf| \cdot h\right). \]
  \label{lemma:good-level-cuts}
\end{lemma}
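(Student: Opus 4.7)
The plan is to designate a small set of \emph{bad} level cuts that we discard and to show that the remaining $\geq h/4$ good cuts satisfy the target inequality. For each level-$\ell$ expander $C^{(i)}_\ell$ of $(G\setminus F) \setminus X_{>\ell}$, I would apply \cref{lemma:low-diameter-expander-new} to get a pruned set $P^{(i)}_\ell \subseteq X^{(i)}_\ell$, and call a level cut $S_{\leq j}$ \emph{bad} if some residual copy $\forward{e}$ or $\backward{e}$ of a non-pruned expanding edge $e \in X^{(i)}_\ell \setminus P^{(i)}_\ell$ crosses it. By property \labelcref{low-diameter-expander-new:item1} (together with \cref{obs:back-edge-larger-than-exp-edge}, which bounds $\Bw_G$ of any expanding edge in $C^{(i)}_\ell$ by $|C^{(i)}_\ell|$), all endpoints of non-pruned edges in $C^{(i)}_\ell$ lie in a common interval of $\Bw_\Bf$-distances from $S_0$ of length $O(|C^{(i)}_\ell| \eta^3 \log^7 n / \phi^2)$. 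Summing over expanders using $\sum_i |C^{(i)}_\ell| \le n$ at each of $\eta$ levels gives $|B| \le c_{\ref{lemma:low-diameter-expander-new}} n \eta^4 \log^7 n / \phi^2 \le h/(4\kappa) \le h/4$ by the definition of $h$ in \labelcref{eq:def-h}. Taking $\{i_1 < \cdots < i_g\} = \{0,1,\ldots,h\} \setminus B$ then yields $g \ge (h+1) - h/4 \ge h/4$ good cuts.

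Next, I would bound the sum over good cuts by categorizing the residual edges in $E(G_\Bf) \setminus \forward{F}$ that can cross them. Forward DAG edges have $\Bw_\Bf = 0$ and so cross no cut at all. Non-pruned expanding edges (in either residual direction) only cross bad cuts, contributing $0$. Only forward pruned expanding edges and backward residual edges of all types remain. Using that any residual edge from $x$ to $y$ crosses at most $\max(0, \dist(y)-\dist(x)) \le \Bw_\Bf(\cdot)$ level cuts, the total contribution of all backward edges across \emph{all} cuts is at most $\sum_{e \in E} \Bf(e)\, \Bw_G(e) = \Bw_G(\Bf) \le 9h |\Bf|$ by the push-relabel short-flow guarantee \cref{thm:push-relabel-main-theorem}\labelcref{item:push-relabel:short-flow}. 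For forward pruned expanding edges, the contribution is bounded by $\kappa \sum_{\ell, i} \sum_{e \in P^{(i)}_\ell} \Bw_G(e) \le \kappa \sum_{\ell, i} |C^{(i)}_\ell|\, |P^{(i)}_\ell|$; plugging in the pruning size bound \labelcref{low-diameter-expander-new:item2} and $\sum_i |C^{(i)}_\ell| \le n$ across $\eta$ levels yields $O(n \eta^2 \log^6 n / \phi) |\Bf|$, which is $O(h|\Bf|)$ by the definition of $h$. Summing, the total over chosen good cuts is $O(h |\Bf|)$ as required.

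The main obstacle I anticipate is the calibration of $h$ so that both estimates hold under one parameter: the bad-cut count scales like $n \eta^4 \log^7 n / \phi^2$ (driven by \labelcref{low-diameter-expander-new:item1}), while the forward-pruned expanding contribution divided by $|\Bf|$ scales like $n \eta^2 \log^6 n / \phi$ (driven by \labelcref{low-diameter-expander-new:item2}). The choice $h = \lceil 4 \eta^4 c_{\ref{lemma:low-diameter-expander-new}} \log^7 n \cdot \kappa / \phi^2 \cdot n \rceil$ is precisely tuned so that (i) $|B| \le h/(4\kappa) \le h/4$, and (ii) the forward-pruned contribution $O(n \eta^2 \log^6 n / \phi)|\Bf|$ is absorbed into $O(h|\Bf|)$. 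Once both bounds are in place, the rest is routine bookkeeping and the lemma follows.
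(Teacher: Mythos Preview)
Your proposal is correct and essentially mirrors the paper's proof. The paper defines $\cS_{\mathrm{bad}}$ using only \emph{forward} non-pruned expanding edges (rather than both directions as you do), but since $\forward{e}$ and $\backward{e}$ share the same endpoints, the diameter bound from \cref{lemma:low-diameter-expander-new}\labelcref{low-diameter-expander-new:item1} confines both to the same $\Bw_{\Bf}$-distance interval, so the count is unchanged. Your edge categorization and the two final estimates ($\Bw_G(\Bf)\le 9h|\Bf|$ for backward edges; $\kappa\sum_{\ell,i}|C^{(i)}_\ell|\,|P^{(i)}_\ell|=O(h|\Bf|)$ for forward pruned edges) are exactly what the paper does.
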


\begin{proof}
  Let $P_{\mathrm{all}} \defeq \bigcup_{\ell}\bigcup_{i}P_{\ell}^{(i)}$ where the $P_{\ell}^{(i)}$'s are obtained from \cref{lemma:low-diameter-expander-new}.
  Let
  \[ \cS_{\mathrm{bad}} \defeq \left\{0 \leq i \leq h: E_{G_{\Bf}}(S_{\leq i}, \overline{S_{\leq i}}) \cap \forward{\left(\bigcup_{\ell}X_{\ell} \setminus P_{\mathrm{all}}\right)} \neq \emptyset\right\} \]
  be the set of level cuts that contain at least one expanding edge not in $P_{\mathrm{all}}$, and let 
  $\cS_{\mathrm{good}} \defeq \left\{0, 1, \ldots, h\right\} \setminus \cS_{\mathrm{bad}}$.
  By \cref{lemma:low-diameter-expander-new}\labelcref{low-diameter-expander-new:item1}, we know that $|\cS_{\mathrm{bad}}| \leq c_{\ref{lemma:low-diameter-expander-new}} \cdot \frac{n \eta^3 \log^7 n}{\phi^{2}} \cdot \eta \leq \frac{h}{4}$ since there are $\eta$ levels in the hierarchy and our choice of $h$ in \labelcref{eq:def-h}.
  This means that there are still $|\cS_{\mathrm{good}}| = h - |\cS_{\mathrm{bad}}| \geq \frac{h}{4}$ ``good'' level cuts $S_i$.

  By definition of $\dist_{G_{\Bf}}^{\Bw_f}$, an edge $e$ with $\Bc^{\kappa}_{\Bf}(e) > 0$ can be in at most $\Bw_{\Bf}(e)$ level cuts.
  There are only a few types of edges that can contribute to the size of a good ($i\in \cS_{\mathrm{good}}$) level cut $\Bc^{\kappa}_{\Bf}(E_{G_{\Bf}}(S_{\leq i}, \overline{S_{\leq i}}))$:
  \begin{enumerate}[(i)]
  \item \ul{Backward edges $\backward{e}$.} These have residual capacities $\Bc^{\kappa}_{\Bf}(\backward{e}) = \Bf(e)$. The contribution of these (across all good level cuts) can be bounded by $\sum_{\backward{e}\in \backward{E}} \Bc^{\kappa}_{\Bf}(\backward{e}) \Bw_{G}(e) = \sum_{e\in E} \Bf(e) \Bw_{G}(e) = \Bw_{G}(\Bf)$.
  \item \ul{Forward edges $\forward{e}$ from $F$.} These we do not care about in this lemma and will handle later.
  \item \ul{Forward DAG edges $\forward{e}$.} We have set $\Bw(\forward{e}) = 0$, so they cannot cross a level cut.
  \item \ul{Forward edges $\forward{e}$, where $e$ is a level-$\ell$ expanding edge inside some level-$\ell$ strongly connected component $C_{\ell}^{(i)}$.} By the definition of $\cS_{\mathrm{good}}$, we know that $e\in P^{(i)}_{\ell}$, so there are not too many of these edges. Note that $\Bw_{G}(\forward{e}) \le |C_{\ell}^{(i)}|$ and that these have residual capacity $\Bc^{\kappa}_{\Bf}(e) \le \Bc^{\kappa}(\forward{e}) = \kappa$ (recall that for the purpose of the analysis, we assume a unit-capacitated multi-graph, i.e., $\Bc = \Bone$).
  \end{enumerate}
  As a result, we can bound%
  \begin{align*}
    \sum_{i \in \cS_{\mathrm{good}}}\Bc^{\kappa}_{\Bf}\left(E_{G_{\Bf}}(S_{\leq i}, \overline{S_{\leq i}}) \setminus \forward{F}\right) &\leq \Bw_G(\Bf) + \kappa \cdot \left(\sum_{\ell}\sum_{i}\left|P_{\ell}^{(i)}\right| \cdot \left|C_{\ell}^{(i)}\right|\right) \\
    &\leq O\left(|\Bf| \cdot h\right) + \kappa \cdot \eta \cdot O\left(\frac{\eta \log^6 n}{\kappa\phi} \cdot |\Bf| \cdot n\right) \leq O(|\Bf| \cdot h),
  \end{align*}
  where we used \cref{lemma:low-diameter-expander-new}\labelcref{low-diameter-expander-new:item2} and $\Bw_G(\Bf) = O(|\Bf| \cdot h)$ by \cref{thm:push-relabel-main-theorem}\labelcref{item:push-relabel:short-flow}.
  \cref{lemma:good-level-cuts} follows by letting $\left\{i_1, \ldots, i_g\right\} \defeq
  \cS_{\mathrm{good}}$.
\end{proof}

We can now do a similar ball-growing argument as in \cref{lemma:expander-routing} to prove \cref{lemma:exists-sparse-cut}.

\ExistsSparseCut*

\begin{proof}
  Let $S_{\leq i_1}, \ldots, S_{\leq i_g}$ be the level cuts given by \cref{lemma:good-level-cuts}, and let
  \[ Z \defeq \sum_{1 \leq j \leq g}\Bc^{\kappa}_{\Bf}\left(E_{G_{\Bf}}(S_{\leq i_j}, \overline{S_{\leq i_j}}) \setminus \forward{F}\right) \leq O(|\Bf| \cdot h). \]
  By an averaging argument, at least half of the $S_{\leq i_j}$'s satisfy
  \begin{equation}
    \Bc^{\kappa}_{\Bf}\left(E_{G_{\Bf}}(S_{\leq i_j}, \overline{S_{\leq i_j}}) \setminus \forward{F}\right) \leq \frac{2Z}{g} \leq O(|\Bf|).
    \label{eq:other-edges-sparse}
  \end{equation}
  Let $i^{*}_1 < \cdots < i^{*}_{g/2}$ be indices satisfying \eqref{eq:other-edges-sparse}, and let $U_{j} \defeq S_{\leq i^{*}_{j \cdot n}} \setminus S_{\leq i^{*}_{(j-1) \cdot n}}$ for each $1 \leq j \leq \left\lfloor \frac{g}{2n}\right\rfloor$.
  Let $k \defeq \left\lfloor \frac{g}{2n}\right\rfloor \geq \frac{g}{4n}$.
  That is, we first split the distance levels into $g/2$ blocks at $i_1^{*}, \ldots, i_{g/2}^{*}$, and then merge every $n$ consecutive blocks to form the $U_j$'s.
  Observe that since $i^{*}_{j \cdot n} \geq i^{*}_{(j-1)\cdot n} + n$, we must have
  \begin{equation}
    \dist_{G_{\Bf}}^{\Bw_F}(U_j, U_{j+2}) > n
    \label{eq:dist-gap}
  \end{equation}
  for every $j$.
  We will now only consider level cuts that are between some $U_j$ and $U_{j+1}$ and bound the contribution of edges from $\forward{F}$ to them using a ball-growing argument.
  Note that if $\vol_F(U_{\leq 1}) = 0$ or $\vol_F(\overline{U_{\leq k}}) = 0$, then the lemma is vacuously true, and therefore we assume otherwise.
  We show that there exists a $1 \leq j \leq k$ such that
  \begin{equation}
    \Bc^{\kappa}_{\Bf}\left(E_{G_{\Bf}}(U_{\leq j}, \overline{U_{\leq j}}) \cap \forward{F}\right) \leq \min\{\vol_{F}(U_{\leq j}), \vol_F(\overline{U_{\leq j}})\},
    \label{eq:sparse-cut}
  \end{equation}
  which proves the lemma.
  Assume for contradiction that none of the $U_j$ satisfies \eqref{eq:sparse-cut}.
  Because of \eqref{eq:dist-gap} and that the weight of any edge is bounded by $n$
  we know that all edges in $E_{G_{\Bf}}(U_{\leq j}, \overline{U_{\leq j}})$ with positive capacities must be in $E_{G_{\Bf}}(U_j, U_{j+1})$.
  Let $\vol^\kappa_F(S) \defeq \kappa\vol_F(S)$.
  If $\vol_F(U_{\leq k/2}) \leq \vol_F(\overline{U_{\leq k/2}})$ then we have
  \[ \vol_{F}^\kappa(U_{\leq j}) \geq \vol_{F}^\kappa(U_{\leq j-1}) + \Bc^{\kappa}_{\Bf}\left(E_{G_{\Bf}}(U_{\leq j}, \overline{U_{\leq j}}) \cap \forward{F}\right) \]
  for $1 \leq j \leq k/2$ which with the assumption of \eqref{eq:sparse-cut} implies that
  \[
    \vol_F^\kappa(U_{\leq j}) \geq \left(1+\frac{1}{\kappa}\right)\vol_F^\kappa(U_{\leq j-1}) \implies
    \vol_F^\kappa(U_{\leq k/2}) \geq \left(1 + \frac{1}{\kappa}\right)^{k/2-1} > n^6
  \]
  since $k/2 - 1 \geq k/4 \geq \frac{h}{64n}$ (by \cref{lemma:good-level-cuts} we have $g \geq h/4$) and that $h \geq 1000 n\kappa \log n$ by \eqref{eq:def-h}.
  This is a contradiction because the $\vol_F^\kappa(S)$ of any $S$ should always be bounded by $2\kappa m \leq n^6$, where recall that $m \leq n^4$ is the total capacities of the input graph and $\kappa \leq n$ is required by \cref{thm:flow}.
  Similarly, if $\vol_F(U_{\leq k/2}) > \vol_F(\overline{U_{\leq k/2}})$, then we have
  \[ \vol_{F}^\kappa(\overline{U_{\leq j}}) \geq \vol_{F}^\kappa(\overline{U_{\leq j+1}}) + \Bc^{\kappa}_{\Bf}\left(E_{G_{\Bf}}(U_{\leq j}, \overline{U_{\leq j}}) \cap \forward{F}\right) \]
  for $k/2 < j < k$ and thus
  \[
    \vol_F^\kappa(\overline{U_{\leq j}}) \geq \left(1+\frac{1}{\kappa}\right)\vol_F^\kappa(\overline{U_{\leq j+1}}) \implies
    \vol_F^\kappa(U_{\leq k/2+1}) \geq \left(1 + \frac{1}{\kappa}\right)^{k/2-1}. 
  \]
  In both cases we have arrived at a contradiction, proving the lemma.
\end{proof}

\subsection{Robustness of Directed Expander Hierarchy under Flow Augmentation}
\label{sec:low-diameter-pruning}

In this section we prove \cref{lemma:low-diameter-expander-new}.
There are two main ingredients to this (which are independent of each other), each of which we believe might be of independent interest.
\begin{enumerate}[(a)]
\item\label{item:low-dia}
We show a generalization of the classic fact that expanders have low diameters. In particular, in \cref{lemma:expander-low-diameter} we show that for any weight function $\Bw \ge \Bzero$,  if $\Bnu$ satisfies $\Bnu(v) \geq \sum_{e\in \delta_{G}(v)} \Bw(e)$ and is $\sigma$-expanding in $G$, then the graph has $\Bw$-diameter $\tO(1/\sigma)$. This indeed generalizes the unweighted pure-expander setting when $\Bw = \Bone$ and $\Bnu = \deg_{E}$.
\item\label{item:exp-prune}
We show an expander pruning lemma saying that (directed) expanders are robust to path reversals (as well as some other updates, like increasing $\Bnu$). Indeed, a path reversal changes the size of any (directed) cut by at most one, similar to what happens when deleting an edge. This allows us to show \cref{lemma:pruning-reversal}, with similar guarantees as standard expander pruning, but which supports path reversals instead of edge deletions.
\end{enumerate}

In order to prove \cref{lemma:low-diameter-expander-new}, for each level-$i$ expander $C$ in $\cH$, we set up an appropriate $\Bnu \in \R_{\geq 0}^{V}$ and $\sigma \approx 1/n$ such that the fact that $\Bnu$ is $\sigma$-expanding  is a certificate that $C$ is initially of low-diameter $\tO(1/\sigma) = \tO(n)$ with respect to edge weights $\Bw_{G}$, via \labelcref{item:low-dia}.
We then show that throughout the run of the push relabel algorithm, a large part of $C$ remains $\sigma$-expanding (with respect to $\Bnu$). Indeed, every time we find an augmenting path in the push relabel algorithm, the residual graph changes by reversing the augmenting path, so we can apply \labelcref{item:exp-prune}. We have to be slightly careful here and use the additional fact that the augmenting paths found by our push relabel algorithm are short (\cref{lem:pr-no-shortcut}) in order to not blow up the diameter. At the end of the algorithm, \labelcref{item:low-dia} will imply that, except for a small pruned part of $C$, the expanding edges in $C$ remain of low diameter.

\subsubsection{Diameter of Expanders with Weighted Edges}
We begin by showing \labelcref{item:low-dia} in the lemma below, a generalization of the standard fact that expanders have low diameters.
Indeed, when $\Bnu(v) = \deg(v)$ and $\Bw(e) = 1$ it recovers the unweighted case.
Note that we are using $\sigma$ instead of $\phi$ to avoid confusion with the $\phi$ in the $\phi$-expander hierarchy:
One should think of $\sigma$ as being very small so that $1/\sigma$ corresponds to a certain notion of diameter induced by the weight function $\Bw_G$.
In particular, $\sigma$ can be as small as $\widetilde{O}(1/n)$, while the value $\phi$ for the expander hierarchy will be set to $1/n^{o(1)}$.

\begin{lemma}
  \label{lemma:expander-low-diameter}
  Suppose $\Bnu\in \R_{\ge 0}^{V}$ is $\sigma$-expanding in $H = (V, E)$ and edge weights $\Bw \in \N^{E}$ such that for all $v\in V$, $\Bnu(v) \geq \sum_{e \in \delta_H(v)}\Bw(e)$. Then for any $s, t \in V$ such that $\Bnu(s), \Bnu(t) > 0$ we have $\dist_{H}^{\Bw}(s, t) \leq O(\log(\Bnu(V)) /\sigma)$.
\end{lemma}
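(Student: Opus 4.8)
The plan is to prove \cref{lemma:expander-low-diameter} by a ball-growing argument, entirely analogous to the standard proof that a $\phi$-expander has diameter $O(\log n / \phi)$, but carried out with respect to the weighted distance $\dist_H^{\Bw}$ and the general vertex weights $\Bnu$ in place of degrees. Fix $s$ with $\Bnu(s) > 0$, and for each $k \ge 0$ let $B_k \defeq \{v \in V : \dist_H^{\Bw}(s,v) \le k\}$ be the ball of $\Bw$-radius $k$ around $s$. The key quantity to track is $\Bnu(B_k)$. I will show that as long as $B_k \ne V$ and $\Bnu(\overline{B_k})$ is still at least half of $\Bnu(V)$, the measure $\Bnu(B_k)$ grows by a $(1+\Omega(\sigma))$ factor every time we increase the radius by $1$ (or by a small constant number of units, to absorb edge weights $> 1$). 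Since $\Bnu(B_0) \ge \Bnu(s) > 0$ and $\Bnu(B_k) \le \Bnu(V)$, this can happen only $O(\log \Bnu(V)/\sigma)$ times, which bounds the radius at which the ball stops growing; a symmetric argument grown from $t$ then gives the bound on $\dist_H^{\Bw}(s,t)$ after observing that the two half-balls must meet.

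The main step is the growth estimate. Suppose $B_k \ne V$ (so the cut $(B_k, \overline{B_k})$ is nonempty) and, say, $\Bnu(B_k) \le \Bnu(\overline{B_k})$. Since $\Bnu$ is $\sigma$-expanding in $H$ and $B_k$ is a valid cut, we have $|E_H(B_k, \overline{B_k})| \ge \sigma \cdot \min\{\Bnu(B_k), \Bnu(\overline{B_k})\} = \sigma \cdot \Bnu(B_k)$ — here I need $\Bnu(B_k) > 0$, which holds since $\Bnu(s) > 0$. (If instead $\Bnu(s) = 0$ the statement is vacuous for that endpoint; the hypothesis only asserts the bound when $\Bnu(s),\Bnu(t) > 0$.) Now every edge $(u,v) \in E_H(B_k, \overline{B_k})$ has $\dist_H^{\Bw}(s, v) \le \dist_H^{\Bw}(s,u) + \Bw(u,v) \le k + \Bw(u,v)$, so its head $v$ lies in $B_{k + \Bw(u,v)}$. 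The subtlety is that an edge of large weight $w$ only certifies membership in the larger ball $B_{k+w}$, not $B_{k+1}$, so I cannot directly charge all $\sigma \Bnu(B_k)$ crossing edges to the increment from $k$ to $k+1$. To handle this, I use the hypothesis $\Bnu(v) \ge \sum_{e \in \delta_H(v)} \Bw(e)$: the total $\Bw$-weight of all edges crossing out of $B_k$ is at most $\sum_{v \in B_k} \sum_{e \in \delta_H(v)} \Bw(e) \le \sum_{v \in B_k}\Bnu(v) = \Bnu(B_k)$, yet this total weight is at least $|E_H(B_k,\overline{B_k})| \ge \sigma\Bnu(B_k)$; more usefully, I can distribute the crossing edges over the radius-levels they reach: letting $W \defeq \max_e \Bw(e)$, the edges crossing out of $B_k$ all have heads in $B_{k+W}$, and they contribute (by the per-vertex weight bound applied to $\overline{B_k}$ as well) a total $\Bnu$-increase $\Bnu(B_{k+W}) - \Bnu(B_k) \ge$ (a constant fraction of) the $\Bw$-weighted count of these edges. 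Concretely I would argue: either $\Bnu(B_{k+W}) \ge 2\Bnu(B_k)$, or else the number of distinct radius values in $\{k+1, \dots, k+W\}$ at which $\Bnu$ increases by less than a $\sigma/(2W)$-fraction is limited, so averaging gives a radius step of size $\le W$ across which $\Bnu$ grows by a $(1+\sigma/(2W))$ factor. Since $W \le \Bnu(v)$ for the incident vertex, $W$ is polynomially bounded and contributes only a $\log W = O(\log \Bnu(V))$-type overhead; a cleaner route is to note each crossing edge $e=(u,v)$ with $\Bw(e) = w$ lands $v$ in $B_{k+w}$, and $\sum_{e \in E_H(B_k,\overline{B_k})} \Bw(e) \le \Bnu(B_k)$ while $|E_H(B_k,\overline{B_k})| \ge \sigma\Bnu(B_k)$, so on average crossing edges have small weight and a positive fraction of them (weighted) have $\Bw(e) = O(1/\sigma)$, pushing $\Bnu(B_{k+O(1/\sigma)})$ up by a $(1+\Omega(\sigma))$ factor. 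Either way the doubling-count bound is $O(\log(\Bnu(V))/\sigma)$ radius units.

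Finally, I would combine the two balls: grow $B_k$ from $s$ until either it reaches radius $R \defeq O(\log(\Bnu(V))/\sigma)$ or $\Bnu(B_k) > \Bnu(V)/2$; the growth bound forces the latter by radius $R$. Symmetrically grow a reverse ball $B'_k$ from $t$ in $\rev{H}$ (equivalently, the ball of vertices that can reach $t$ within $\Bw$-distance $k$) until $\Bnu(B'_k) > \Bnu(V)/2$, also by radius $R$. Since $\Bnu(B_R) + \Bnu(B'_R) > \Bnu(V) \ge \Bnu(B_R \cup B'_R)$, the two balls intersect, giving a vertex $x$ with $\dist_H^{\Bw}(s,x) \le R$ and $\dist_H^{\Bw}(x,t) \le R$, hence $\dist_H^{\Bw}(s,t) \le 2R = O(\log(\Bnu(V))/\sigma)$. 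The main obstacle I anticipate is making the weighted ball-growing bookkeeping rigorous in the presence of arbitrarily large edge weights: the hypothesis $\Bnu(v) \ge \sum_{e \in \delta_H(v)}\Bw(e)$ is exactly what is needed to control the total weight crossing a ball boundary, and getting the quantitative growth-per-unit-radius right (rather than growth-per-edge) is the delicate part. One must be careful that an edge of weight $w$ is "used up" over $w$ units of radius, so the amortized volume gain per unit radius is still $\Omega(\sigma)$ times the current volume; once that amortization is set up correctly the rest is the routine geometric-series count.
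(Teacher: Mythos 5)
You have the right high-level idea — a ball-growing argument, using the hypothesis $\Bnu(v) \ge \sum_{e\in\delta_H(v)}\Bw(e)$ to control boundary weight, and the observation that the delicate part is amortizing heavy edges over the radius levels they span. But the concrete implementations you sketch do not actually give the claimed $O(\log(\Bnu(V))/\sigma)$ bound, and you paper over this at the end by asserting that ``either way the doubling-count bound is $O(\log(\Bnu(V))/\sigma)$ radius units.'' Take your ``cleaner route'': you observe that the number of crossing edges is $\geq \sigma\Bnu(B_k)$ while their total $\Bw$-weight is $\leq \Bnu(B_k)$, so a constant (weighted) fraction of them have $\Bw(e) = O(1/\sigma)$, giving $\Bnu(B_{k+O(1/\sigma)}) \geq (1+\Omega(\sigma))\Bnu(B_k)$. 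That is a $(1+\Omega(\sigma))$ growth factor per $\Theta(1/\sigma)$ units of radius, so the measure reaches $\Bnu(V)$ only after $O(\log(\Bnu(V))/\sigma) \cdot \Theta(1/\sigma) = O(\log(\Bnu(V))/\sigma^2)$ radius. Your other variant (bounding things by $W = \max_e \Bw(e)$) is worse still, as $W$ can itself be on the order of $\Bnu(V)$. A $1/\sigma^2$ (or $W/\sigma$) bound is not merely a cosmetic loss here: in the paper's application $\sigma$ is taken as small as $\widetilde{\Theta}(\phi^2/|C|)$, and replacing the diameter bound $\widetilde{O}(|C|/\phi^2)$ by $\widetilde{O}(|C|^2/\phi^4)$ would make $\sum_C |C|^2$ appear in the count of bad level cuts, which is not summable to $\widetilde{O}(n)$.

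The idea you are missing — and the one the paper uses — is to run the ball-growing not on $\Bnu$ restricted to balls but on a \emph{spread-out} measure $\Bnu'$ on radius levels $\{0,\dots,D\}$. Concretely, $\Bnu'(i)$ receives $\Bnu(L_i)$ (where $L_i = \{v:\dist_H^\Bw(s,v)=i\}$) plus, for every edge $e=(u,v)$ with $a\defeq\dist(s,u)<b\defeq\dist(s,v)$, the quantity $r_e \defeq \Bw(e)/(b-a+1)$ for each $i\in[a,b]$. Two facts make this work. First, $b-a \le \Bw(e)$ by the triangle inequality, so $r_e \ge \Bw(e)/(\Bw(e)+1) \ge 1/2$; hence each edge crossing a level cut contributes at least $1/2$ to the $\Bnu'$-mass of \emph{both} adjacent levels, not just to the far-away level where its head finally lands. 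Second, the per-vertex hypothesis gives $\Bnu(L_{\le i}) \le \sum_{j\le i}\Bnu'(j) \le 2\Bnu(L_{\le i})$, so expansion of $\Bnu$ transfers (up to a factor $2$) to a lower bound on $\min\{\Bnu'(i),\Bnu'(i+1)\}$. This yields a genuine unit-step geometric growth $\sum_{j\le i+1}\Bnu'(j) \ge (1+\sigma/4)\sum_{j\le i}\Bnu'(j)$, and the $O(\log(\Bnu(V))/\sigma)$ bound follows. In short: your plan is correct in spirit and correctly isolates the difficulty, but the amortization you sketch ``uses up'' each heavy edge only at the single radius where its head arrives, whereas it must be credited at \emph{every} level it crosses (with mass at least $1/2$ per level); without that, the $1/\sigma$ gap does not close.
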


\begin{proof}
  The proof follows a standard ball-growing argument.
  Note that $U \defeq \{v \in V: \Bnu(v) > 0\}$ is strongly connected; otherwise, there will be a sparse cut.
  Let $s, t$ be the vertices with $\Bnu(s),\Bnu(t) > 0$ such that $D \defeq \dist_{H}^{\Bw}(s, t)$ is maximized, and assume for contradiction that $D > 16\left\lceil \frac{\log 4\Bnu(V)}{\sigma} \right\rceil = O(\log(\Bnu(V))/\sigma)$.
  Let $L_i \defeq \{v \in U: \dist_{H}^{\Bw}(s, v) = i\}$.
  
  Let $\Bnu^\prime \in \R_{\geq 0}^{\{0, \ldots, D\}}$ be defined as follows:
  First, we add $\Bnu(L_i)$ to $\Bnu^\prime(i)$.
  Then, for each $e \in E$ such that $e = (u, v)$ with $\dist_{H}^{\Bw}(s, u) < \dist_{H}^{\Bw}(s, v)$, we add $r_e \defeq \frac{\Bw(e)}{\dist_{H}^{\Bw}(s, v) - \dist_{H}^{\Bw}(s, u) + 1}$ 
  to $\Bnu^\prime(i)$  for each $\dist_{H}^{\Bw}(s, u) \leq i \leq \dist_{H}^{\Bw}(s, v)$.
  Observe that $r_e \geq 1/2$ by the fact that $\dist_{H}^{\Bw}(\cdot,\cdot)$ is the shortest-distance function and $\Bw(e) \geq 1$.
  Moreover,
  \[
    \Bnu(L_{\leq i}) \leq \sum_{0 \leq j \leq i}\Bnu^\prime(j) \leq 2\Bnu(L_{\leq i})\quad\text{and}\quad\Bnu(L_{\geq i}) \leq \sum_{D \geq j \geq i}\Bnu^\prime(j) \leq 2\Bnu(L_{\geq i})
  \]
  hold because $\Bnu(v) \geq \sum_{e \in \delta_H(v)}\Bw(e)$ for all $v$.
  By design, for each $0 \leq i < D$ we have
  \[ \min\{\Bnu^\prime(i), \Bnu^\prime(i + 1)\} \geq \sum_{e \in E_H(L_{\leq i}, \overline{L_{\leq i}})}r_e \geq \frac{1}{2}\left|E_H(L_{\leq i}, \overline{L_{\leq i}})\right|. \]
  Also, by the expansion guarantee of $H$, we have
  \[
     |E_H(L_{\leq i}, \overline{L_{\leq i}} )| \geq \sigma \cdot \min\left\{\Bnu(L_{\leq i}), \Bnu(\overline{L_{\leq i}})\right\} \geq \frac{\sigma}{2} \cdot \min\left\{\sum_{j \leq i}\Bnu^\prime(j), \sum_{j \geq i + 1}\Bnu^\prime(j)\right\}.
  \]
  With these we can now do a standard ball-growing argument.
  Let $h \defeq \lfloor D/2\rfloor$.
  If $\sum_{j \leq h}\Bnu^\prime(j) \leq \sum_{j > h}\Bnu^\prime(j)$, then
  \[
    \sum_{j \leq i + 1}\Bnu^\prime(j) \geq \left(1 + \frac{\sigma}{4}\right) \cdot \sum_{j \leq i}\Bnu^\prime(j)
  \]
  holds for all $0 \leq i \leq h$ and therefore
  \[
    \sum_{j \leq h}\Bnu^\prime(j) \geq \left(1+\frac{\sigma}{4}\right)^{h} \cdot \Bnu^\prime(0) \geq \left(1+\frac{\sigma}{4}\right)^{h} \geq 4\Bnu(V),
  \]
  by $\Bnu^\prime(0) \geq \Bnu(s) \geq 1$, which is a contradiction.
  On the other hand, if $\sum_{j \leq h}\Bnu^\prime(j) > \sum_{j > h}\Bnu^\prime(j)$, then similarly
  \[
    \sum_{j \geq i}\Bnu^\prime(j) \geq \left(1 + \frac{\sigma}{4}\right) \cdot \sum_{j \geq i + 1}\Bnu^\prime(j)
  \]
  holds for all $h \leq i \leq D$ and therefore
  \[
    \sum_{j \geq h}\Bnu^\prime(j) \geq \left(1 + \frac{\sigma}{4}\right)^{D-h} \cdot \Bnu^\prime(D) \geq \left(1 + \frac{\sigma}{4}\right)^h \geq 4\Bnu(V),
  \]
  a contradiction as well.
  This proves the lemma.
\end{proof}

\subsubsection{Expander Pruning under Path-Reversals}
Now we show \labelcref{item:exp-prune} in the lemma below. Note that for our purposes, we only need an existential expander pruning lemma, so we do not care about making it algorithmically efficient.
There are a few different types of updates we support, the main ones being reversing a path and adding some volume to $\Bnu$, tailored for our use later in this section. We note that the lemma should seamlessly extend to also support edge deletions (as is the usual goal of expander pruning) with the same guarantees as the path reversals, but we do not need it for our purposes, hence we skip it. Since many flow and cut algorithms work via reversing paths, we believe our expander pruning lemma might be of independent interest.

\begin{fact}
  Let $G = (V, E)$ be a graph and $G^\prime$ be obtained from $G$ by reversing a path in it.
  Then, we have for each $S \subseteq V$ that $\Big||E_G(S, V \setminus S)| - |E_{G^\prime}(S, V \setminus S)|\Big| \leq 1$.

  \label{fact:path-reversal}
\end{fact}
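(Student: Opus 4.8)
The plan is to reduce the statement to a one-line telescoping identity. Write the path being reversed as a vertex sequence $P = (v_0, v_1, \ldots, v_k)$, so that its edges are $e_i = (v_{i-1}, v_i)$ for $1 \le i \le k$; in $G'$ each $e_i$ is replaced by its reverse $(v_i, v_{i-1})$ and every other edge is left unchanged. Fixing $S \subseteq V$, I would first observe that edges not on $P$ contribute identically to $|E_G(S,\overline{S})|$ and to $|E_{G'}(S,\overline{S})|$, so the entire difference is accounted for by the $k$ reversed edges.

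Next I would introduce the indicators $x_i \defeq \mathbbm{1}[v_i \in S] \in \{0,1\}$. The edge $e_i = (v_{i-1}, v_i)$ contributes to $|E_G(S,\overline{S})|$ exactly when $v_{i-1} \in S$ and $v_i \notin S$, i.e.\ it contributes $x_{i-1}(1 - x_i)$; after reversal the edge $(v_i, v_{i-1})$ contributes $x_i(1 - x_{i-1})$ to $|E_{G'}(S,\overline{S})|$. Hence the net contribution of $e_i$ to $|E_G(S,\overline{S})| - |E_{G'}(S,\overline{S})|$ equals $x_{i-1}(1-x_i) - x_i(1-x_{i-1}) = x_{i-1} - x_i$. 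Summing over $i$ collapses to a telescope:
$$ |E_G(S,\overline{S})| - |E_{G'}(S,\overline{S})| \;=\; \sum_{i=1}^{k} (x_{i-1} - x_i) \;=\; x_0 - x_k. $$
Since $x_0, x_k \in \{0,1\}$ the right-hand side lies in $\{-1,0,1\}$, giving $\big| |E_G(S,\overline{S})| - |E_{G'}(S,\overline{S})| \big| \le 1$, which is exactly the claim.

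There is no substantive obstacle here; the only thing to be careful about is the bookkeeping around what ``reversing a path'' means in a multigraph. The per-edge identity and the telescoping never use that $P$ is simple, so the argument goes through verbatim even if $P$ is a walk that repeats vertices — one just needs each listed edge to be reversed exactly once. I would state the proof directly for this general case so that it applies to the augmenting (residual) paths of the push-relabel algorithm, where the edges are distinct by construction anyway.
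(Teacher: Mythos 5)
The paper states this as a \emph{Fact} with no proof supplied; it is treated as self-evident. Your telescoping argument with the indicators $x_i = \mathbbm{1}[v_i \in S]$ is correct, complete, and is the natural way to make the assertion rigorous—the per-edge identity $x_{i-1}(1-x_i) - x_i(1-x_{i-1}) = x_{i-1} - x_i$ collapses to $x_0 - x_k \in \{-1,0,1\}$ exactly as you write, and your remark that the argument only needs the reversed edges to be distinct (not that $P$ be a simple path) is the right observation for its use in the residual graph.
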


\begin{lemma}[Expander Pruning under Path-Reversals]\label{lemma:pruning-reversal}
  Given $\Bnu\in \R_{\ge 0}^{V}$ such that $\Bnu$ is $\sigma$-expanding in $G = (V, E)$, one can (inefficiently) maintain pruned sets $\emptyset = P^{(0)} \subseteq P^{(1)} \subseteq \cdots \subseteq P^{(k)} \subseteq V$ while $G$ undergoes $k$ updates, the $i$-th of which either
  \begin{enumerate}[(1)]
    \item\label{op:add-vertex} adds a vertex $v_i$ to $G$ with volume $\Bnu(v_i) \defeq 0$,
    \item\label{op:add-edge} adds an edge $e_i$ to $G$ whose endpoints are not in $P$,
    \item\label{op:add-vol} adds $\Delta_i \in \N$ to $\Bnu(v_i)$ for some vertex $v_i$, or
    \item\label{op:reverse} reverses a path $R_i$ in $G$ that does not intersect $P$,
  \end{enumerate}
  where $P$ denotes the current pruned set,
  such that
  \begin{itemize}
    \item if the $i$-th update is of type (1) or (2), then\footnote{It is natural that the weight of the prune set is independent from the number of operations (1) and (2), because if $\Bnu$ was $\sigma$-expanding before, then it remains so after such an operation.}
    $P^{(i)} = P^{(i-1)}$, and
    \item
    $\Bnu^{(i)}$ is $\frac{\sigma}{8}$-expanding in
    $G^{(i)} \setminus P^{(i)}$ and $\Bnu^{(i)}(P^{(i)}) \leq O(k_i/\sigma + \sum_{j \leq i}\Delta_j)$, with $G^{(i)}$ and $\Bnu^{(i)}$ denoting the graph and vertex weights after the $i$-th update and $k_i$ denoting the number of path reversals in the first $i$ updates.
  \end{itemize}
\end{lemma}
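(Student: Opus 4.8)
The plan is to prove Lemma~\ref{lemma:pruning-reversal} by reducing path-reversal pruning to a ``batched'' version of standard directed expander pruning where the adversary is allowed to delete a few edges and add some volume at each step. The key structural observation, recorded in Fact~\ref{fact:path-reversal}, is that reversing a path $R_i$ changes $|E_{G}(S,\overline{S})|$ by at most one for \emph{every} cut $S$ simultaneously; so from the point of view of how cuts change, a path reversal behaves exactly like deleting a single (virtual) edge. The natural formalization is to maintain, alongside the real graph, a ``charge'' bookkeeping: we imagine that the $i$-th path reversal, instead of reversing $R_i$, deletes one unit of capacity on the cut it worst-affects. Concretely, I would first state and prove an auxiliary lemma: if $\Bnu$ is $\sigma$-expanding in $G=(V,E)$ and the graph undergoes a sequence of updates each of which is (a) adding an isolated zero-volume vertex, (b) adding an edge avoiding the current prune set, (c) adding $\Delta_i$ to $\Bnu(v_i)$, or (d) \emph{deleting} an edge avoiding the current prune set, then one can maintain $\emptyset = P^{(0)}\subseteq\cdots\subseteq P^{(k)}$ with $\Bnu^{(i)}$ being $\tfrac{\sigma}{8}$-expanding in $G^{(i)}\setminus P^{(i)}$ and $\Bnu^{(i)}(P^{(i)}) \le O(d_i/\sigma + \sum_{j\le i}\Delta_j)$, where $d_i$ is the number of deletions so far. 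This is the directed analogue of the standard pruning lemma (e.g.~\cite{SaranurakW19,BernsteinGS20,HuaKGW23}), and its proof is the usual one: whenever the current pruned graph has a $\tfrac{\sigma}{8}$-sparse cut, peel off the smaller side, charging its volume to the boundary edges lost so far; a standard potential/charging argument (each deleted edge or added $\Delta$ unit of volume gets charged $O(1/\sigma)$ total) bounds $\Bnu(P)$.

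Next I would reduce the path-reversal version to this deletion version. When the $i$-th update is a path reversal of $R_i$ (which, by hypothesis, avoids the current prune set $P$), I apply the deletion lemma to a \emph{single} virtual deletion: I claim that whatever $P$ the deletion lemma would produce after ``deleting one edge'' is also a valid prune set for the graph after reversing $R_i$. The point is monotonicity of sparsity under $\pm 1$ changes: if $\Bnu$ is $\tfrac{\sigma}{8}$-expanding in $(G\setminus\{e\})\setminus P$ for the virtual edge $e$, then since reversing $R_i$ changes every cut value by at most $1$, and deleting an edge also changes at most one cut value by $1$ while the others stay fixed, the reversed graph $G^{(i)}\setminus P$ has, for every cut $S$, $|E_{G^{(i)}\setminus P}(S,\overline S)| \ge |E_{(G\setminus\{e\})\setminus P}(S,\overline S)| - 1 + 1$\,— I need to be careful here and instead argue it directly: run the pruning routine on the \emph{actual} reversed graph $G^{(i)}$, but in the charging argument, charge the $O(1/\sigma)$ volume of the newly pruned set to ``the path reversal $R_i$'' rather than to a specific edge. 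The charging works because the only reason a previously-non-sparse cut can become $\tfrac{\sigma}{8}$-sparse is that its boundary dropped; by Fact~\ref{fact:path-reversal} a single reversal drops each cut's boundary by at most $1$, so the same amortized bound $O(1/\sigma)$ per reversal holds, exactly as if a single edge had been deleted. For the type-(3) volume-increase updates, the added $\Delta_i$ to $\Bnu(v_i)$ can create sparse cuts (the volume of one side went up), and these are handled by the $\sum_j \Delta_j$ term in the bound, again by the standard charging (a unit of added volume can be ``responsible'' for at most $O(1/\sigma)$ extra pruned volume over the whole sequence). Types (1) and (2) genuinely preserve $\sigma$-expansion: adding an isolated vertex with zero volume cannot create a sparse cut (it contributes nothing to either side's volume and has no incident edges), and adding an edge can only increase boundary sizes, hence $P^{(i)}=P^{(i-1)}$ as required.

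The concrete order of steps is therefore: (i) state and prove the batched deletion-version pruning lemma via the peeling + charging argument; (ii) observe Fact~\ref{fact:path-reversal} and argue that in the charging argument a path reversal can be amortized against exactly like a single edge deletion — this is the crux; (iii) handle volume-increase updates by the $\sum_j\Delta_j$ term; (iv) observe that add-vertex and add-edge updates preserve expansion and leave $P$ unchanged; (v) assemble the amortized bound $\Bnu^{(i)}(P^{(i)}) = O(k_i/\sigma + \sum_{j\le i}\Delta_j)$ by summing the per-update charges, and note the final expansion parameter degrades from $\sigma$ to $\sigma/8$ due to slack needed in the peeling step (each peel removes a $\tfrac{\sigma}{8}$-sparse cut, and one shows the leftover graph is $\tfrac{\sigma}{8}$-expanding while the constant is chosen so the geometric-series charging closes). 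I expect step (ii) to be the main obstacle: the subtlety is that reversing a long path simultaneously perturbs \emph{many} cuts (unlike an edge deletion, which perturbs only the cuts separating that edge's endpoints), so I must make sure the charging scheme is robust to the fact that a reversal can ``help'' some cuts become sparse that a single edge deletion never could. The resolution is that sparsity is governed by the \emph{minimum} of the two directed boundaries $\min\{|E(S,\overline S)|,|E(\overline S,S)|\}$, and a reversal increases one directed boundary by as much as it decreases the other across any given cut, so the $\min$ — which is what matters — still changes by at most $1$; combined with the observation that a newly-$\tfrac{\sigma}{8}$-sparse cut must have had its $\min$-boundary drop by at least $\tfrac{\sigma}{8}\cdot(\text{its volume}) \ge \Omega(\sigma)$ times a unit, each reversal can be ``blamed'' for total pruned volume $O(1/\sigma)$, closing the argument. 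The remaining details (formalizing the potential function, verifying the geometric series, and checking boundary cases when $\Bnu$ of a side is zero) are routine.
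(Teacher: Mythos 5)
Your proposal correctly isolates the key structural observation of the proof, namely \cref{fact:path-reversal}: a single path reversal changes each directed cut's boundary capacity by at most one, so from the point of view of sparse-cut creation a reversal is no more damaging than a single edge deletion. This is precisely the crux the paper exploits, and your framing of it (the $\min$ of the two directed boundaries is the governing quantity, and it moves by at most $1$) is accurate. However, I see a genuine gap in how you intend to turn this observation into the bound $\Bnu^{(i)}(P^{(i)}) = O(k_i/\sigma + \sum_{j\le i}\Delta_j)$.

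The gap is in the amortization scheme. You propose a local, per-update charge: ``each reversal can be blamed for total pruned volume $O(1/\sigma)$.'' This does not follow. After an update, the graph $G^{(i)}\setminus P^{(i-1)}$ was $\frac{\sigma}{8}$-expanding, and a reversal decreases each cut's min-boundary by at most $1$; this tells you a newly $\frac{\sigma}{8}$-sparse cut $S$ has $\min$-boundary within $1$ of $\frac{\sigma}{8}\cdot\min\{\Bnu(S),\Bnu(\overline{S})\}$, but that places \emph{no} a priori bound on $\Bnu(S)$ — and once $S$ is peeled off, further large sparse cuts may cascade. The paper does not argue update-by-update at all. It instead maintains a two-part decomposition $P^{(i)} = P^{(i)}_{+}\,\dot\cup\, P^{(i)}_{-}$ (out-pruned and in-pruned sets) together with the invariant that
\[
  \Bnu^{(i)}(P^{(i)}) \;\geq\; \tfrac{8}{\sigma}\Bigl(\bigl|E_{G^{(i)}}(P^{(i)}_{+}, \overline{P^{(i)}})\bigr| + \bigl|E_{G^{(i)}}(\overline{P^{(i)}}, P^{(i)}_{-})\bigr| + \bigl|E_{G^{(i)}}(P^{(i)}_{+}, P^{(i)}_{-})\bigr|\Bigr),
\]
which is preserved because an out-sparse cut is added to $P_{+}$ (charging only its small out-boundary) and an in-sparse cut to $P_{-}$. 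The final bound then comes from a \emph{global} comparison: since $G^{(0)}$ was $\sigma$-expanding, $P^{(i)}_{+}$ and $P^{(i)}_{-}$ are not sparse in $G^{(0)}$, which upper-bounds $\Bnu^{(0)}(P^{(i)})$ by $\frac{2}{\sigma}$ times these three boundary terms evaluated \emph{in $G^{(0)}$}; Fact~\ref{fact:path-reversal} then ports these to $G^{(i)}$ with an additive $O(k_i)$ slack; and the invariant closes a self-referential inequality $\Bnu^{(i)}(P^{(i)}) \le \frac{1}{2}\Bnu^{(i)}(P^{(i)}) + O(k_i/\sigma + \sum\Delta_j)$. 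Your sketch never introduces the $P_{+}/P_{-}$ split — which is required because a pruned out-sparse cut can still have arbitrarily many \emph{incoming} boundary edges, so a single undifferentiated $P$ does not support the charging in a directed graph — and it does not identify the comparison between boundary counts in $G^{(0)}$ and $G^{(i)}$ as the step where Fact~\ref{fact:path-reversal} is actually cashed in. Your appeal to a ``standard'' deletion pruning lemma in step (i) implicitly contains all this machinery, but the proposed reduction from path reversals to edge deletions (even after your correction to ``argue directly'') needs exactly this invariant to close, and as written the amortization does not.
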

\begin{proof}
  We describe how the pruned set $P^{(i)}$ is obtained from $P^{(i-1)}$.
  We maintain for all $i$ the invariant that $P^{(i)}$ can be written as the disjoint union of two sets $P^{(i)}_{+}$ and $P^{(i)}_{-}$ such that\footnote{Recall that with $E_{H}(A,\overline{B})$ and $E_{H}(\overline{B}, A)$, we mean $\overline{B} = H\setminus B$.}
  \begin{equation}
    \Bnu^{(i)}(P^{(i)}) \geq \frac{8}{\sigma}\left(
      \left|E_{G^{(i)}}(P^{(i)}_{+}, \overline{P^{(i)}})\right| +
      \left|E_{G^{(i)}}(\overline{P^{(i)}}, P^{(i)}_{-})\right| +
      \left|E_{G^{(i)}}(P^{(i)}_{+}, P^{(i)}_{-})\right|\right).
    \label{eq:invariant}
  \end{equation}
  Given that $P^{(i-1)}$ satisfies \eqref{eq:invariant} in $G^{(i-1)}$, we observe that if we initialize $P^{(i)} \gets P^{(i-1)}$, then it satisfies \eqref{eq:invariant} in $G^{(i)}$ since we are not adding edges or reversing paths intersecting $P$, and increasing vertex weights also only makes the left-hand side larger.

  After $P^{(i)}$ is initialized, we repeat the following procedure:
  As long as there is a cut $S$ that is $\frac{\sigma}{8}$-sparse with respect to $\Bnu^{(i)}$ in $G^{(i)} \setminus P^{(i)}$, we include it to $P^{(i)}$ by setting $P^{(i)} \gets P^{(i)} \cup S$.
  To see that this does not break the invariant, we assume without loss of generality that $S$ is out-sparse, i.e.,
  \[
    \Bnu^{(i)}(S) \leq \Bnu^{(i)}(\overline{S})\quad\text{and}\quad \left|E_{G^{(i)} \setminus P^{(i)}}(S,\overline{S})\right| < \frac{\sigma}{8} \cdot \Bnu^{(i)}(S),
  \]
  where $\overline{S} \defeq (V(G^{(i)}) \setminus P^{(i)}) \setminus S$.
  Then, we show that we can add $S$ into $P^{(i)}_{+}$ while preserving the invariant.
  To see this, we compute %
  \begin{align*}
    \Bnu^{(i)}(P^{(i)} &\cup S)
    =
    \Bnu^{(i)}(P^{(i)}) + \Bnu^{(i)}(S)\\
    &\geq \frac{8}{\sigma}\left(
      \left|E_{G^{(i)}}(P^{(i)}_{+}, \overline{P^{(i)}})\right| +
      \left|E_{G^{(i)}}(\overline{P^{(i)}}, P^{(i)}_{-})\right| +
      \left|E_{G^{(i)}}(P^{(i)}_{+}, P^{(i)}_{-})\right|
      \right) + 
      \frac{8}{\sigma}\left|E_{G^{(i)} \setminus P^{(i)}}(S, \overline{S})\right| \\
    &\ge \frac{8}{\sigma}\left(
      \left|E_{G^{(i)}}(P^{(i)}_{+} \cup S, \overline{P^{(i)} \cup S})\right| +
      \left|E_{G^{(i)}}(\overline{P^{(i)} \cup S}, P^{(i)}_{-})\right| +
      \left|E_{G^{(i)}}(P^{(i)}_{+} \cup S, P^{(i)}_{-})\right|\right),
  \end{align*}
  which is precisely \eqref{eq:invariant} for the new $P^{(i)}$ and its partition $(P^{(i)}_{+}, P^{(i)}_{-})$.
  The last inequality follows from that the edges counted in both lines are exactly the same, except for $E_{G^{(i)}}(P^{(i)}_{+}, S)$ which is counted in the former but not the latter.
  The case where $S$ is an in-sparse cut can be shown symmetrically (except that it will now be added to $P^{(i)}_{-}$).

  We further show that \eqref{eq:invariant} implies the desired upper bound on $\Bnu^{(i)}(P^{(i)})$.
  Overloading notation, we extend the vertex set of $V \defeq V(G^{(0)})$ to be $V(G^{(i)})$ by adding isolated vertices with weights $0$.
  Note that $\Bnu^{(0)}$ remains $\sigma$-expanding in $G^{(0)}$ after this extension.
  From this point of view, we see that $G^{(i)}$ is obtained from $G^{(0)}$ by adding edges, increasing vertex weights, and reversing paths; no vertex addition is involved now.
  We first assume that $\Bnu^{(0)}(P^{(i)}) \leq 2\Bnu^{(0)}(V)/3$, i.e., $\Bnu^{(0)}(P^{(i)}) \leq 2\min\{\Bnu^{(0)}(P^{(i)}), \Bnu^{(0)}(\overline{P^{(i)}})\}$.
  We know by the expansion guarantee of $G^{(0)}$ that both $P^{(i)}_{+}$ and $P^{(i)}_{-}$ are not sparse in $G^{(0)}$ and therefore
  \begin{align*}
    \Bnu^{(0)}(P^{(i)}) &= \Bnu^{(0)}(P^{(i)}_{+}) + \Bnu^{(0)}(P^{(i)}_{-})
    \leq \frac{2}{\sigma}\left(\left|E_{G^{(0)}}(P^{(i)}_{+}, \overline{P^{(i)}_{+}})\right| + \left|E_{G^{(0)}}(\overline{P^{(i)}_{-}}, P^{(i)}_{-})\right|\right) \\
    &= \frac{2}{\sigma}\left(\left|E_{G^{(0)}}(P^{(i)}_{+}, \overline{P^{(i)}})\right| + \left|E_{G^{(0)}}(\overline{P^{(i)}}, P^{(i)}_{-})\right| + 2\left|E_{G^{(0)}}(P^{(i)}_{+}, P^{(i)}_{-})\right|\right),
  \end{align*}
  where the first inequality was based on
  \[ \min\left\{\Bnu^{(0)}(P^{(i)}_{+}), \Bnu^{(0)}(\overline{P^{(i)}_{+}})\right\} \geq \frac{1}{2}\Bnu^{(0)}(P^{(i)}_{+})\quad\text{and}\quad \min\left\{\Bnu^{(0)}(P^{(i)}_{-}), \Bnu^{(0)}(\overline{P^{(i)}_{-}})\right\} \geq \frac{1}{2}\Bnu^{(0)}(P^{(i)}_{-}) \]
  by our assumption.
  As such, we have
  \begin{align*}
    \Bnu^{(i)}(P^{(i)})
    &\leq \frac{2}{\sigma}\left(\left|E_{G^{(0)}}(P^{(i)}_{+}, \overline{P^{(i)}})\right| + \left|E_{G^{(0)}}(\overline{P^{(i)}}, P^{(i)}_{-})\right| + 2\left|E_{G^{(0)}}(P^{(i)}_{+}, P^{(i)}_{-})\right|\right) + \sum_{j \leq i}\Delta_j \\
    &\stackrel{(i)}{\leq} \frac{2}{\sigma}\left(\left|E_{G^{(i)}}(P^{(i)}_{+}, \overline{P^{(i)}})\right| + \left|E_{G^{(i)}}(\overline{P^{(i)}}, P^{(i)}_{-})\right| + 2\left|E_{G^{(i)}}(P^{(i)}_{+}, P^{(i)}_{-})\right| + 2k_i\right) + \sum_{j \leq i}\Delta_j \\
    &\leq \frac{2}{\sigma}\left(2\left|E_{G^{(i)}}(P^{(i)}_{+}, \overline{P^{(i)}})\right| + 2\left|E_{G^{(i)}}(\overline{P^{(i)}}, P^{(i)}_{-})\right| + 2\left|E_{G^{(i)}}(P^{(i)}_{+}, P^{(i)}_{-})\right|\right) + \frac{4k_i}{\sigma} + \sum_{j \leq i}\Delta_j \\
    &\stackrel{(ii)}{\leq} \frac{\Bnu^{(i)}(P^{(i)})}{2} + \frac{4k_i}{\sigma} + \sum_{j \leq i}\Delta_j,
  \end{align*}
  where (i) follows from
  \[ E_{G^{(r)}}(P^{(i)}_{+}, \overline{P^{(i)}}) \cup E_{G^{(r)}}(P^{(i)}_{+}, P^{(i)}_{-}) = E_{G^{(r)}}(P^{(i)}_{+}, \overline{P^{(i)}_{+}}) \]
  and
  \[ E_{G^{(r)}}(\overline{P^{(i)}}, P^{(i)}_{-}) \cup E_{G^{(r)}}(P^{(i)}_{+}, P^{(i)}_{-}) = E_{G^{(r)}}(\overline{P^{(i)}_{-}}, P^{(i)}_{-}) \]
  for $r \in \{0, i\}$ with \cref{fact:path-reversal},
  and (ii) follows from \eqref{eq:invariant}.
  This implies $\Bnu^{(i)}(P^{(i)}) \leq O(k_i/\sigma + \sum_{j \leq i}\Delta_j)$.
  
  The case when $\Bnu^{(0)}(P^{(i)})> 2\Bnu^{(0)}(V)/3$ can be argued similarly:
  Consider the moment when we added $S$ to $P^{(j)}$ for some $j \leq i$ such that $\Bnu^{(0)}(P^{(j)}) \leq 2\Bnu^{(0)}(V)/3$ but $\Bnu^{(0)}(P^{(j)} \cup S) > 2\Bnu^{(0)}(V)/3$.
  Applying the calculation above we know that $\Bnu^{(j)}(P^{(j)}) \leq O(k_j/\sigma + \sum_{k \leq j}\Delta_j)$ at this moment, before $S$ is included.
  This gives that
  \begin{align*}
    \frac{2}{3}\Bnu^{(0)}(V)
    &< \Bnu^{(0)}(P^{(j)}) + \Bnu^{(0)}(S) \leq \Bnu^{(j)}(P^{(j)}) + \Bnu^{(j)}(S) \\
    &\leq O(k_j/\sigma + \sum_{k \leq j}\Delta_k) + \frac{\Bnu^{(j)}(V)}{2} \leq O(k_j/\sigma + \sum_{k \leq j}\Delta_k) + \frac{\Bnu^{(0)}(V)}{2} + \sum_{k \leq j}\Delta_k,
  \end{align*}
  which implies $\Bnu^{(0)}(V) \leq O(k_j/\sigma + \sum_{k \leq j}\Delta_k) \leq O(k_i/\sigma + \sum_{j \leq i}\Delta_j)$.
  Since $\Bnu^{(i)}(P^{(i)})$ can be trivially upper bounded by $\Bnu^{(i)}(V)$, we get $\Bnu^{(i)}(P^{(i)}) \leq \Bnu^{(0)}(V) + \sum_{j \leq i}\Delta_j \leq O(k_i/\sigma + \sum_{j \leq i}\Delta_j)$, which is the bound we wanted.
  This completes the proof of \cref{lemma:pruning-reversal}.
\end{proof}

\subsubsection{Proof of \cref{lemma:low-diameter-expander-new}}
Now we have the two ingredients \labelcref{item:low-dia,item:exp-prune} in order to prove our \cref{lemma:low-diameter-expander-new}.

\LowDiameterExpanderNew*

Let us focus on a level-$\ell$ expander $C$.
We first describe the high-level strategy of the proof.

\paragraph{Constructing Initial Expander.}
By our analysis in \cref{sec:weight}, we know that all expanding edges in $C$ are only $\tO(|C|/\phi)$ far away from each other, with respect to $\Bw_{G}$-distance. This lets us find an appropriate $\Bnu \in \R_{\geq 0}^{H}$ and $\sigma \approx \frac{\phi^{2}}{|C|}$, in \cref{clm:nu-and-sigma}, such that $\Bnu(v)\ge \sum_{e\in \delta_{H}(v)}\Bw_{G}(e)$ and $\Bnu$ is $\sigma$-expanding in $H$, where $H$ is some subgraph of $C^{\kappa}$.\footnote{Recall that $C^{\kappa}$ is the graph with all edges duplicated $\kappa$ times---indeed, the flow algorithm will work in this graph.} By our generalized ``expanders have low diameter''  argument in \cref{lemma:expander-low-diameter}, $(H, \Bnu, \sigma)$ is now a certificate/witness that the expanding edges in $H$ are of low-diameter $\tO(|C|/\phi^{2})$. We will set $H$ to the graph we would get if we run a cut-matching game on $C$, where in each round we find a \emph{short} matching. The graph $H$ will precisely consist of edges certifying that $C$ is of low diameter, but not include irrelevant parts of $C$ which might be far from all expanding edges in $C$. %

\paragraph{Handling Path-Reversal.}
Next we will consider how the graph develops when we run our push-relabel augmenting paths algorithm. Throughout, we will maintain $(H,\Bnu,\sigma)$ and a small pruned set $P$ as a certificate/witness that most edges from $C$ are still of low-diameter, via our expander pruning \cref{lemma:pruning-reversal}. In particular, $\Bnu$ will be $\Theta(\sigma)$-expanding in $H\setminus P$.

\begin{enumerate}[(a)]
\item \ul{Truncating the Path.}
In particular, consider what happens when we want to reverse an augmenting path $R$. Let $R'$ be the subpath from the first vertex in $H\setminus P$ to the last vertex in $H\setminus P$. Note that when focusing on $H$, we do not care about how the path $R$ looks like outside of the subpath $R'$. Nevertheless, note that it is still possible for $R^\prime$ to go outside of $H \setminus P$ (to $V \setminus H$ or $P$).

\item \ul{Bounding Path Length.}
We first notice that, since our push relabel algorithm almost finds shortest paths (\cref{lem:pr-no-shortcut}), it must be the case that $R'$ is of $\Bw_{G}$-length $\tO(|C|/\phi^{2})$ since $H\setminus P$ is still of low diameter.

\item \ul{Adding New Vertices to $H$.}
We add all vertices on $R'$ which are not already in $H\setminus P$ as ``fresh'' vertices to $H$ (in particular, if $R'$ intersects the pruned set $P$ we add back new copies of these vertices), and add all the edges of $R'$ not already in $H\setminus P$ to $H$ (using operations \labelcref{op:add-vertex,op:add-edge} in \cref{lemma:pruning-reversal}).
Note that $H$ might no longer be a subgraph of $C$ (as $R'$ can move outside of $C$). In fact, this is necessary: the expanding edges of $C$ will at the end of the push relabel algorithm be of low diameter inside of $G^{\kappa}_{\Bf}$, but not necessarily inside the induced subgraph $G^{\kappa}_{\Bf}[C]$.
We remark that technically $H$ may now contain multiple copies of the same vertex in $G$, but only one of these copies will be ``active'' and the others will be in $P$.%

\item \ul{Performing Path-Reversal.}
We then reverse the path $R'$ (using operation \labelcref{op:reverse} in \cref{lemma:pruning-reversal}), which might increase the pruned set $P$ a bit.

\item \ul{Increasing Vertex Weights.}
Additionally we must increase $\Bnu(v)$ for the vertices $v$ incident to $R'$ a bit, to maintain that $\Bnu(v)\ge \sum_{e\in \delta_H(v)} \Bw_{G}(e)$ after we added some edges to $H$ (using operation \labelcref{op:add-vol} in \cref{lemma:pruning-reversal}). This again might increase the pruned set $P$ a bit, and thus we will increase $\Bnu(v)$ proportional to the $\Bw_{G}$-length of $R'$ (which we argued above is not too long) in order to control this blow-up.
\end{enumerate}

At the end, after all augmenting paths, $\Bnu$ is still $\Theta(\sigma)$-expanding in $H\setminus P$, and the pruned set $P$ is small, which lets us conclude \cref{lemma:low-diameter-expander-new}.

\ifnum\cameraready=0

\begin{figure}[!bt]
    \centering

    \begin{subfigure}[b]{0.4\textwidth}
        \includegraphics[width=\textwidth]{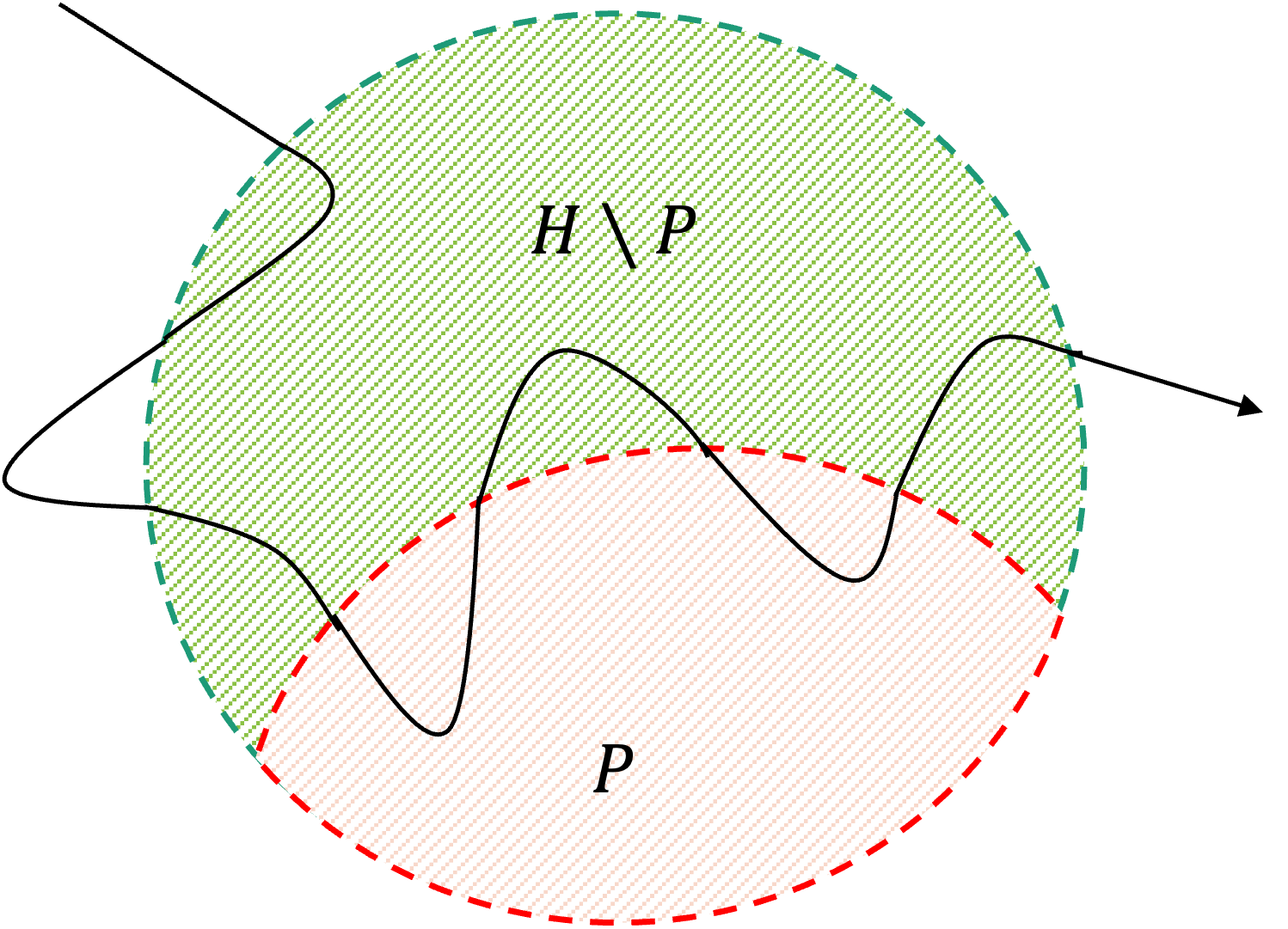}        
    \end{subfigure}
    \begin{subfigure}[b]{0.4\textwidth}
        \includegraphics[width=\textwidth]{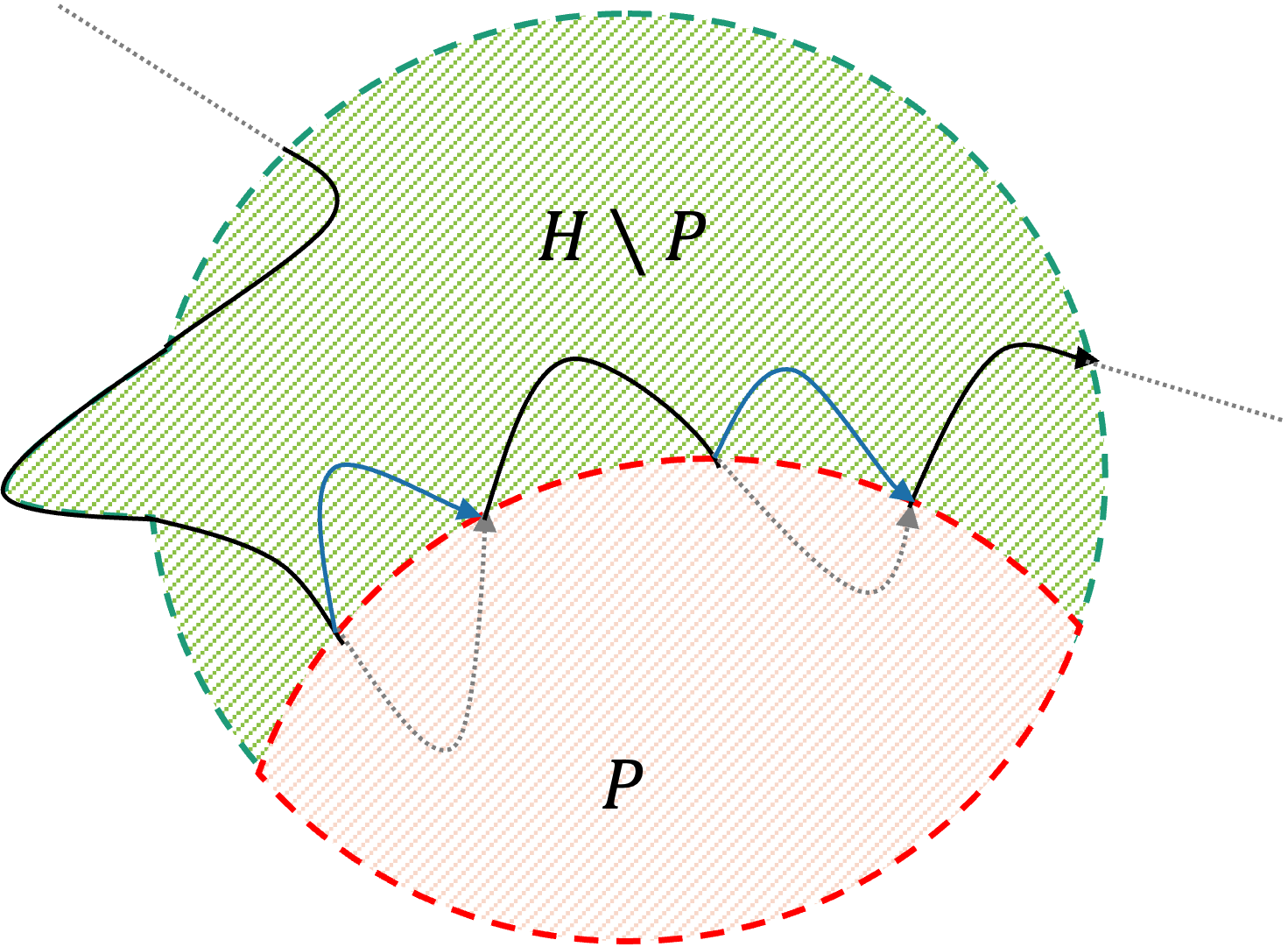}        
    \end{subfigure}
    \caption{Illustration of how a path reversal is handled.
    After truncating the path from the first intersection with $H\setminus P$ to the last, we add vertices on the path not in $H$ into the graph. If the path goes into the pruned set $P$, we also add the corresponding ``fresh'' vertices to the graph and ``reroute'' the path segment inside the un-pruned part which creates the blue segments. The final path that we reverse (via \cref{lemma:pruning-reversal}\labelcref{op:reverse}) at the end consists of the black and blue path segments.}
    \label{fig:reverse}
\end{figure}

\else

\begin{figure*}[!bt]
    \centering

    \begin{subfigure}[b]{0.4\textwidth}
        \includegraphics[width=\textwidth]{figure/pruning-before.png}        
    \end{subfigure}
    \begin{subfigure}[b]{0.4\textwidth}
        \includegraphics[width=\textwidth]{figure/pruning-after.png}        
    \end{subfigure}
    \caption{Illustration of how a path reversal is handled.
    After truncating the path from the first intersection with $H\setminus P$ to the last, we add vertices on the path not in $H$ into the graph. If the path goes into the pruned set $P$, we also add the corresponding ``fresh'' vertices to the graph and ``reroute'' the path segment inside the un-pruned part which creates the blue segments. The final path that we reverse (via \cref{lemma:pruning-reversal}\labelcref{op:reverse}) at the end consists of the black and blue path segments.}
    \label{fig:reverse}
\end{figure*}

\fi

We begin with this useful claim, which will allow us to set up the appropriate vertex volume $\Bnu$.
\begin{claim}
  \label{claim:was-routable}
  For every $1$-respecting demand $(\Bsource, \Bsink)$ on $X_{\ell}^{(i)}$%
  we can route it by an integral flow in $C_{\ell}^{(i)}$ with congestion $O(\eta \log n / \phi)$ such that each flow path has $\Bw_G$-length at most $O\left(\left|C_{\ell}^{(i)}\right|\eta^{2}\log n /\phi\right)$.
\end{claim}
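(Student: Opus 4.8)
The plan is to recognize this claim as the ``$C_\ell^{(i)}$-local'' instance of the short-flow machinery of Section~\ref{sec:weight}: I would restrict the given hierarchy $\cH$ of $G\setminus F$ to the level-$\ell$ expander $C:=C_\ell^{(i)}$, feed the restriction into Lemma~\ref{lemma:good-weight-function-congested} (equivalently Corollary~\ref{cor:good-weight-function-integral}), and then upgrade the resulting \emph{average}-length guarantee into a \emph{per-path} one.

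First I would verify that $\cH':=(D\cap E(C),\,X_1\cap E(C),\,\ldots,\,X_\ell\cap E(C))$ is a $\phi$-expander hierarchy of $C$ of height $\eta'\le\ell\le\eta$. The key point is that, since $\SCC(G_j)$ refines $\SCC(G_\ell)$, any cycle of $G_j=(G\setminus F)\setminus X_{>j}$ through a vertex of $C$ stays inside $V(C)$, so $G_j[V(C)]$ equals $C\setminus(X_{>j}\cap E(C))$ and its strongly connected components are exactly the SCCs of $G_j$ that lie in $C$; conditions (1)--(3) of Definition~\ref{def:expander-hierarchy} for $\cH'$ then follow by restricting those for $\cH$. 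I would also note that $\Btau$ restricted to $V(C)$ is $\cH'$-respecting and that $\Bw_G$ agrees with $\Bw_{\cH'}$ on $E(C)$, since every edge of $C$ lies in $G\setminus F$ and both weight functions are $e=(u,v)\mapsto|\Btau_v-\Btau_u|$.

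Next I would establish routability: since $(\Bsource,\Bsink)$ is $1$-respecting with respect to $X_\ell^{(i)}$ and $X_\ell^{(i)}$ is $\phi$-expanding in $C$, for every cut $S$ of $C$ we have $|E_C(S,\overline{S})|\ge\phi\min\{\vol_{X_\ell^{(i)}}(S),\vol_{X_\ell^{(i)}}(\overline{S})\}\ge\phi\min\{\Bsource(S),\Bsink(\overline{S})\}$, so every cut of $C^{(\kappa_0)}$ with $\kappa_0:=\lceil 1/\phi\rceil$ has at least $\min\{\Bsource(S),\Bsink(\overline{S})\}$ edges, and Fact~\ref{fact:maxflow-mincut} makes $(\Bsource,\Bsink)$ routable in $C$ with congestion $\kappa_0$ (this is exactly the computation in the footnote of the proof of Lemma~\ref{lemma:expander-routing-respecting}). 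Applying Lemma~\ref{lemma:good-weight-function-congested} in $C$ with $\cH'$ to an integral $\kappa_0$-congestion flow routing $(\Bsource,\Bsink)$ then yields an equivalent (fractional) flow $\hat{\Bf}$ of congestion $(\eta'+1)\kappa_0=O(\eta/\phi)$ with $\Bw_G(\hat{\Bf})=\Bw_{\cH'}(\hat{\Bf})\le|\hat{\Bf}|\cdot h''$ where $h''=O\!\left(|C|\,\eta^2\log n/\phi\right)$.

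Finally I would convert this average bound into the claimed per-path bound by a Markov-plus-recursion argument. Decomposing $\hat{\Bf}$ into flow paths and discarding those of $\Bw_G$-length more than $10h''$ throws away at most a $\tfrac1{10}$-fraction of the routed value; after rounding the surviving flow to an integral flow in $C^{(\lceil(\eta'+1)\kappa_0\rceil)}$ (distributing evenly among duplicates, as in the proof of Lemma~\ref{lemma:induction}), we route at least $\tfrac9{10}$ of the demand by an integral flow of congestion $O(\eta/\phi)$ all of whose flow paths have $\Bw_G$-length $O(|C|\eta^2\log n/\phi)$. Crucially, the remaining residual demand is still $1$-respecting with respect to $X_\ell^{(i)}$ (its entrywise value only decreased), so iterating this $O(\log n)$ times routes the full demand, with the per-round congestions summing to $O(\eta\log n/\phi)$ and every flow path still of $\Bw_G$-length $O(|C|\eta^2\log n/\phi)$. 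I expect this last step to be the part that needs the most care --- turning the $\Bw_G(\hat{\Bf})/|\hat{\Bf}|$ average guarantee of Section~\ref{sec:weight} into a worst-case-per-path guarantee without paying more than an $O(\log n)$ factor in congestion --- whereas everything before it is a routine restriction of already-established lemmas.
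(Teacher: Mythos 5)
Your proposal follows the same skeleton as the paper's proof: establish integral routability in $C:=C_\ell^{(i)}$ with congestion $O(1/\phi)$ via expansion and \cref{fact:maxflow-mincut}, restrict $\cH$ to $C$ keeping only levels $\le\ell$, and invoke the short-flow machinery of \cref{sec:weight} (noting $\Bw_G=\Bw_{\cH}$ on $E(C)$ since no $F$-edges are used). Your verification of the restricted hierarchy and of the $\cH'$-respecting topological order is correct and more explicit than the paper's one-line remark.

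The interesting divergence is your final step. The paper simply cites \cref{cor:good-weight-function-integral}, which only bounds $\Bw_{\cH}(\Bf)/|\Bf|$, i.e.\ an \emph{average} per-path length, whereas the claim asserts a bound on \emph{each} flow path --- and the per-path version is genuinely what is used later (in \cref{clm:nu-and-sigma} the contribution of a single witness edge $e_W$ to $\Bnu_0$ must be $\le 4D$, which needs $\Bw_G(P_{e_W})\le D$). You are right to flag this average-to-worst-case upgrade as the step needing the most care, and your Markov-plus-recursion idea (discard paths above $10\times$ the average, losing $\le 1/10$ of the value, then recurse on the still $1$-respecting residual so the per-round congestions sum to $O(\eta\log n/\phi)$) is the right repair. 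The one place your write-up is unnecessarily awkward is that you route through the fractional \cref{lemma:good-weight-function-congested} and then round ``by distributing evenly among duplicates,'' which muddies the integrality bookkeeping; it is cleaner to start, as the paper does, from the already-integral flow of \cref{cor:good-weight-function-integral}, decompose it into (integral) unit flow paths, discard the long ones, and recurse --- then integrality is preserved for free at every round and no appeal to $C^{(z)}$ is needed. With that streamlining, your argument is a complete and correct proof; without it, the rounding step is hand-wavy but not fatally so.
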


\begin{proof}
  Every 1-respecting demand on $X^{(i)}_{\ell}$, by definition of expansion and the max-flow min-cut theorem \cref{fact:maxflow-mincut}, is routable in $C^{(i)}_{\ell}$ with congestion $\frac{1}{\phi}$, and hence also with integral congestion $\lceil\frac{1}{\phi}\rceil\le \frac{2}{\phi} = O(\frac{1}{\phi})$ (since $\phi\le 1$).
  Note that we may restrict the expander hierarchy $\cH$ to $C^{(i)}_{\ell}$ (only keeping expanding edges of level at most $\ell$). 
  The claim then follows from \cref{cor:good-weight-function-integral} (we do not use edges from $F$ for this routing, so we have $\Bw_{G}(e) = \Bw_{\cH}(e)$).%
\end{proof}

\begin{proof}[Proof of \cref{lemma:low-diameter-expander-new}.]
  Let $X \defeq X_{\ell}^{(i)}$ and $C \defeq C_{\ell}^{(i)}$ (that is, $X$ are the expanding edges inside some level-$\ell$ expander $C$).
  Consider running the cut-matching game\footnote{While the cut-matching game in \cref{thm:directed-cut-matching-game} is randomized and only works with high probability, here we may simply assume that the randomness used in the cut-matching game is such that it succeeds (indeed such random bits exists, and in this section we only need \emph{existence} of the following witness $W$ and embedding).} of \cref{thm:directed-cut-matching-game} on $\deg_{X}$ to construct a witness $W$ (embeddable into $C$) in which $\deg_{X}$ is $\psi_{\CMG}$-expanding (with $\frac{1}{\psi_{\CMG}} = O(\log^2 n)$).
  Every time we are given a bipartition $(\Bnu_{A}, \Bnu_{B})$ of $\deg_{X}$, we apply \cref{claim:was-routable} to find a matching $\overrightarrow{M}$ and $\overleftarrow{M}$ between $\vol_{A}$ and $\vol_{B}$ that are routable in $C$ with congestion $O(\eta\log n /\phi)$ such that each edge is embedded into a path of $\Bw_G$-weight $O\left(\left|C\right|\eta^2\log n /\phi\right)$.
  Overall, after $t_{\CMG} = O(\log^2 n)$ rounds, we get a witness $W$ embeddable into $C$ with congestion $\kappa_W \defeq O(\eta \log^3 n / \phi)$ where each edge of $W$ is embedded into a path of $\Bw_G$-length $D \defeq O\left(\left|C\right|\eta^2\log n /\phi\right)$.
  We let $H_{0} \subseteq C$ be the image of the embedding (that is, $H_{0}$ consist of the union (after removing duplicates) of edges on all paths in the embedding of $W$ to $C$, and only vertices incident to those edges). %
  
  We will construct vertex volumes $\Bnu_{0} \in \N^{V(H_{0})}$ as follows.
  For each edge $e_{W} = (u, v) \in E(W)$, let $P_{e_{W}} \subseteq H_{0}$ be the embedding path of $(u, v)$ into $H_{0}$.
  For each $e \in P_{e_W}$, we add $\Bw_G(e)$ to the vertex weights of both of its endpoints.
  We then add $D$ to both $\Bnu_{0}(u)$ and $\Bnu_{0}(v)$. (Now, note that $\Bnu_{0}$ is almost a scaled up version of $\deg_X$: in fact $\Bnu_{0} = \Bnu_{0}^a+\Bnu_{0}^b$ where $D\cdot \deg_X \le \Bnu_{0}^a\le t_{\CMG} D\cdot \deg_X$, and $\|\Bnu_{0}^b\|_1 \le 2 \|\Bnu_{0}^a\|_1$).
  This construction guarantees $\Bnu_0(v)\ge \sum_{e\in \delta_{H_0}(v)} \Bw(e)$.

  Since $\deg_X$ is $\psi_{\CMG}$-expanding in $W$ and embeddable into $H_{0}$ with congestion $\kappa_W$, and as we noted before, $\Bnu_{0} \approx D \cdot \deg_X$, the following claim is reasonable.

  \begin{claim}
  \label{clm:nu-and-sigma}
    If $\sigma \defeq \frac{\psi_{\CMG}}{4D\kappa_W} = \Omega\left(\frac{1}{\kappa_W D\log^2 n}\right)$,
  then $\Bnu_{0}$ is $\sigma$-expanding in $H_{0}$.
  \end{claim}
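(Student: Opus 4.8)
The plan is to verify Claim~\ref{clm:nu-and-sigma} by relating sparse cuts in $H_0$ to sparse cuts in the witness $W$, using the embedding of $W$ into $H_0$ together with the expansion of $\deg_X$ in $W$. Concretely, suppose for contradiction that there is a cut $\emptyset \neq S \subsetneq V(H_0)$ that is $\sigma$-sparse with respect to $\Bnu_0$ in $H_0$; without loss of generality assume $S$ is the out-sparse side, so $|E_{H_0}(S,\overline{S})| < \sigma \cdot \min\{\Bnu_0(S), \Bnu_0(\overline{S})\}$. I would first observe that every $W$-edge $e_W = (u,v)$ whose endpoints are separated by $S$ (say $u \in S$, $v \in \overline{S}$) contributes at least one edge of its embedding path $P_{e_W}$ to $E_{H_0}(S,\overline{S})$, since the path starts in $S$ and ends in $\overline{S}$. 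Hence, if $T = V(W) \cap S$ (identifying witness vertices with their copies in $H_0$), then $|E_W(T,\overline{T})| \le |E_{H_0}(S,\overline{S})|$ — each crossing witness edge is charged to a distinct crossing $H_0$-edge only after accounting for the congestion $\kappa_W$, so more carefully $|E_{H_0}(S,\overline{S})| \ge |E_W(T,\overline{T})| / 1$ is too strong; instead each crossing $H_0$-edge lies on at most $\kappa_W$ embedding paths, so $|E_{H_0}(S,\overline{S})| \ge |E_W(T,\overline{T})|/\kappa_W$.

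Next I would compare volumes. By construction $\Bnu_0 \ge D \cdot \deg_X$ pointwise (the $\Bnu_0^a$ part alone already gives this) and $\Bnu_0 \le 3 t_{\CMG} D \cdot \deg_X$ pointwise (combining the bounds $\Bnu_0^a \le t_{\CMG} D \deg_X$ and $\|\Bnu_0^b\|_1 \le 2\|\Bnu_0^a\|_1$, though for the volume comparison we only need a pointwise upper bound of the form $O(t_{\CMG} D)\cdot \deg_X$; if $\Bnu_0^b$ is not pointwise bounded we instead work with $\Bnu_0 \le \Bnu_0^a + \Bnu_0^b$ and note $\Bnu_0^b(v) \le \sum_{e \in \delta_{H_0}(v)}\Bw_G(e)$, which is also $O(t_{\CMG}D \deg_X(v))$ since each witness edge contributes $D$-ish mass spread over its path). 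Thus $\deg_X(S) \le \Bnu_0(S)/D$ and $\deg_X(\overline{S}) \le \Bnu_0(\overline{S})/D$, and also $\Bnu_0(S) \le O(t_{\CMG}D)\cdot \deg_X(T)$ when $S$ is supported on vertices receiving $\Bnu_0$-mass (vertices of $H_0$ with zero $\deg_X$ still carry $\Bnu_0$-mass from embedding paths passing through them, so I need: $\Bnu_0(S) \le \sum_{e_W \text{ touching } S}(\text{mass of } e_W)$, and the mass of a witness edge incident to $T$ is $O(D)$, while a witness edge merely \emph{passing through} $S$ without endpoints in $T$ also deposits mass — this is the subtle point, handled by noting the total such pass-through mass on $S$ is at most $D \cdot (\text{number of embedding paths entering } S) \le D\kappa_W \cdot |E_{H_0}(S,\overline{S})|$, which is small when $S$ is sparse). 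Putting these together, $|E_W(T,\overline{T})| \le \kappa_W |E_{H_0}(S,\overline{S})| < \kappa_W \sigma \min\{\Bnu_0(S),\Bnu_0(\overline{S})\} \le \kappa_W \sigma \cdot O(t_{\CMG}D) \cdot \min\{\deg_X(T), \deg_X(\overline{T})\}$, and choosing $\sigma = \Theta\left(\frac{\psi_{\CMG}}{\kappa_W D t_{\CMG}}\right)$ makes the right-hand side at most $\psi_{\CMG}\min\{\deg_X(T),\deg_X(\overline{T})\}$, contradicting that $\deg_X$ is $\psi_{\CMG}$-expanding in $W$ — unless $T = \emptyset$ or $T = V(W)$, cases I handle separately below.

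The case $T = \emptyset$ (or symmetrically $\overline{T} = \emptyset$) must be ruled out: if $S$ contains no witness vertex, then $S$ consists entirely of ``interior'' vertices of embedding paths, and I claim such a cut cannot be sparse because every embedding path entering $S$ must also leave it (its endpoints are witness vertices, hence outside $S$), so $|E_{H_0}(S,\overline{S})| \ge (\text{number of embedding-path edges incident to } S)/\kappa_W \cdot (\text{const})$, which already exceeds $\sigma \Bnu_0(S)$ since $\Bnu_0(S)$ is itself built out of $\Bw_G$-weights of precisely those incident edges (each vertex $v \in S$ has $\Bnu_0(v) = \sum_{e \in \delta_{H_0}(v)}\Bw_G(e) \le |\delta_{H_0}(v)| \cdot n$, and a more careful accounting ties $\Bnu_0(S)$ to the edge boundary). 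I would streamline this by instead only ever considering cuts that are ``tight'' — closed under removing pendant interior vertices — or equivalently noting that the minimum sparse cut can be assumed to split the witness vertices nontrivially.

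The main obstacle I anticipate is the bookkeeping around \emph{pass-through mass}: vertices of $H_0$ that are interior to many embedding paths accumulate substantial $\Bnu_0$-weight without contributing to $\deg_X$, so the naive inequality $\Bnu_0(S) = O(D)\deg_X(S \cap V(W))$ fails, and one must argue that this extra mass is itself controlled by the edge boundary of $S$ (since a path depositing mass in $S$ and having both endpoints outside $S$ crosses the boundary at least twice). Getting the constants to line up so that the final $\sigma = \Omega\left(\frac{1}{\kappa_W D \log^2 n}\right)$ comes out as claimed — rather than with extra $t_{\CMG}$ or $\log$ factors — will require absorbing $t_{\CMG} = O(\log^2 n) = O(1/\psi_{\CMG})$ into the stated bound, which is consistent since $\psi_{\CMG} = \Theta(1/\log^2 n)$, so $\frac{\psi_{\CMG}}{\kappa_W D t_{\CMG}} = \Theta\left(\frac{\psi_{\CMG}^2}{\kappa_W D}\right) = \Theta\left(\frac{1}{\kappa_W D \log^4 n}\right)$; I would either tighten the argument or simply record the slightly weaker (but still $\Omega(1/(\kappa_W D \operatorname{polylog} n))$) bound, which suffices for every downstream use.
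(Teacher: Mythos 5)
Your plan captures the right high-level idea---use the embedding to transfer expansion from $W$ down to $H_0$ and control the ``pass-through'' mass by boundary crossings---and the paper's argument is structurally the same. But there are two concrete places where the paper's execution is cleaner and where your sketch would hit friction.

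First, the paper never compares $\Bnu_0(S)$ to $\deg_X$; it compares it to $\vol_{E(W)}(S)$. Since $\Bnu_0^a = D\cdot\deg_W$ by construction, each witness edge $e_W$ contributes at most $4D$ to $\|\Bnu_0\|_1$ and at least one unit to $\vol_{E(W)}(S)$ whenever an endpoint of $e_W$ lies in $S$, so the endpoint-mass of $\Bnu_0(S)$ is bounded by $4D\cdot \vol_{E(W)}(S)$. The expansion hypothesis is $|E_W(S,\overline S)|\ge \psi_{\CMG}\vol_{E(W)}(S)$, which plugs in directly; this is what avoids the extra $t_{\CMG}=\Theta(\log^2 n)$ you lose by routing through $\deg_X$ and the degree-distortion bound $\deg_W \le t_{\CMG}\deg_X$. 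Because the claim states the exact value $\sigma=\psi_{\CMG}/(4D\kappa_W)$, the weaker $\Omega(1/(\kappa_W D\log^4 n))$ bound you end up with does not literally prove the claim as written, even if it is absorbable downstream.

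Second, projecting to $T=V(W)\cap S$ creates an orientation headache that the paper sidesteps. The expander guarantee on $W$ only gives $|E_W(T,\overline T)|\ge\psi_{\CMG}\min\{\deg_X(T),\deg_X(\overline T)\}$, and nothing forces the side with smaller $\Bnu_0$-mass to also have smaller $\deg_X$-mass (pass-through mass concentrates on vertices with small $\deg_X$). So your chain of inequalities only closes when $\deg_X(T)$ happens to be the smaller side, and the $T=\emptyset$ case you flag is a symptom of the same mismatch. The paper's formulation avoids both: it decomposes the multiset of embedding edges crossing $S$ into $E'_1$ (from witness edges that cross $S$) and $E'_2$ (from witness edges that don't cross but whose path dips into $S$), then bounds $\Bnu_0(S)\le 2D|E'_2|+4D\vol_{E(W)}(S)\le 2D|E'_2|+4D|E'_1|/\psi_{\CMG}$, giving $|E_{H_0}(S,\overline S)|\ge(|E'_1|+|E'_2|)/\kappa_W\ge\frac{\psi_{\CMG}}{4D\kappa_W}\Bnu_0(S)$ in one pass, with no case split on $T$ and no reference to $\deg_X$. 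If you restructure your charging argument around $\vol_{E(W)}(S)$ and the $E'_1/E'_2$ split, your plan becomes the paper's proof.
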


  \begin{proof}
  Consider some cut $S\subseteq V(H_{0})$, with $\Bnu_{0}(S)\le \Bnu_{0}(\overline{S})$. We want to argue that $|E_{H_{0}}(S,\overline{S})|\ge \sigma \Bnu_{0}(S)$ and $|E_{H_{0}}(\overline{S},S)|\ge \sigma \Bnu_{0}(S)$. We argue the former, and the latter is symmetric.

  Consider the multiset of edges
  $E' = \bigcup_{e_W\in E(W)} P_{e_W}$
  from the embedding from $W$ to $H_{0}$. Since this embedding has congestion $\kappa_W$, we know that $|E'(S,\overline{S})| 
\le \kappa_W |E_{H_{0}}(S,\overline{S})|$, so it suffices to show that $|E'(S,\overline{S})|\ge \frac{\psi_{\CMG}}{4D}\Bnu_{0}(S)$.
We write $E'(S,\overline{S}) = E'_1 \cup E'_2$, where $E'_1$ consists of those edges $e$ which comes from embedding paths $P_{e_W}$ where $e_W\in E_W(S,\overline{S})$, and $E'_2$ are the remaining ones.

  We now bound $\Bnu_{0}(S)$ as follows.
  Recall that for each edge $e_W = (u,v)\in E(W)$, we added $D$ to $\Bnu_{0}(u)$ and $\Bnu_{0}(v)$, as well as $\Bw_{G}(e)$ to the two endpoints of $e$ for each $e$ on $P_{e_{W}}$. In particular, the total contribution of $e_W$ to all of $\Bnu_{0}$ (and thus to $\Bnu_{0}(S)$) is at most $4D$.
  If either $u$ or $v$ is in $S$, we can charge this cost of $4D$ to the contribution of $e_W$ to $\vol_{E(W)}(S)$.

  The only volume in $\Bnu_{0}(S)$ we have not accounted for now, is exactly the volume coming from edges $e_W = (u,v)\in E(W)$ where $u,v\in \overline{S}$, but for which the path $P_{e_W}$ intersects $S$. Such edges $e_W$ can contribute at most $2D$ to the volume of $\Bnu_{0}(S)$, and they must also contribute at least one edge in $E'_2$.

  The above reasoning shows that 
  $\Bnu_{0}(S)\le 2D \cdot |E'_2| + 4D \cdot \vol_{E(W)}(S)$.
  Since $W$ is a $\psi_{\CMG}$-expander, we have
  $|E'_1| \ge |E_{W}(S,\overline{S})| \ge \psi_{\CMG} \vol_{E(W)}(S)$ (each edge $e_W\in E_{W}(S,\overline{S})$ must clearly have a counterpart in $E'_1$).
  Thus we conclude $\Bnu_{0}(S)\le 2D\cdot |E'_2| + 4D\cdot |E'_1|/\psi_{\CMG}$, and hence that
  $\frac{\psi_{\CMG}}{4D}\Bnu_{0}(S)\le |E'_1|+|E'_2|$, which proves the lemma.
  \end{proof}

  \paragraph{Setup.}
  Now we can initialize $\Bnu\gets \kappa\cdot \Bnu_0$ and $H \gets (H_{0})^{\kappa}$ (that is $H_{0}$, but with each edge duplicated $\kappa$ times).
  By construction, $H$ is a subgraph of $G^{\kappa}$, and from \cref{clm:nu-and-sigma} we know that $\Bnu$ is $\sigma$-expanding in $H$. Also, by construction, $\Bnu(v)\ge \sum_{e\in \delta_{H}(v)} \Bw(e)$ (and hence by \cref{lemma:expander-low-diameter}, of diameter $O(\frac{\log (n\kappa)}{\sigma})$, and recall that $\kappa \le \poly(n)$). We also know that $\Bnu \ge \kappa \cdot D\cdot \deg_X$.

  Now we consider reversing each flow path in $\Bf$ one at a time, following the order the paths are discovered by the push-relabel algorithm.
  For simplicity we regard $\Bf$ as a flow in the subdivided graph $G^{\kappa}$, and hence that each flow path sends exactly a single unit of flow.
  We are going to maintain the subgraph $H$ of $G^{\kappa}_{\Bf}$ and that $\Bnu$ is $\frac{\sigma}{8}$-expanding in $H\setminus P$ for a pruned set $P$ throughout the reversals via \cref{lemma:pruning-reversal}. In fact, we will sometimes need to add back vertices from $P$ into $H$, and when we do so we will add them as fresh/forked new vertices. Therefore technically speaking $H$ will not necessarily be a subgraph of $G^{\kappa}_{\Bf}$, since some vertices might occur more than once in $H$. However, all but one copy of each vertex will be in $P$, so we still always maintain that $H\setminus P$ is subgraph of $G^{\kappa}_{\Bf}$.%

  \paragraph{Low-Diameter Invariant.}
  After each path reversal, we will increase some of the vertex weights via \cref{lemma:pruning-reversal} to ensure
  that $\Bnu(v) \geq \sum_{e \in E(H \setminus P) \cap \delta(v)}\Bw_G(e)$ holds for all $v \in V(H) \setminus P$. Note that by construction of $H$ and $\Bnu$, this holds initially.
  This together with the fact that $\Bnu$ is $\frac{\sigma}{8}$-expanding in $H \setminus P$ shows that the subgraph $G[V(H)\setminus P]$ has $\Bw_G$-diameter $O(\log{n}/\sigma)$ by \cref{lemma:expander-low-diameter}.

  \paragraph{Dealing with Path Reversal.}
  Suppose we have dealt with the first $j - 1$ flow paths already, and now we are preparing to reverse the $j$-th flow path $R_j$ in $G^{\kappa}_{\Bf_{j-1}}$.
  If $R_j$ does not intersect with $V(H)$, then nothing needs to be done.
  Otherwise, we take the first point $s_j$ and the last point $t_j$ on $R_j$ that intersect $V(H)\setminus P$ and replace $R_j$ with the subpath between them.
  We now ensure to add all vertices from $R_j$ to $H$, which are not already in $H\setminus P$ via \cref{lemma:pruning-reversal}. Importantly, when we add an already pruned vertex $v\in P$, we use a fresh instance of this vertex (so that the newly added $v'$ will be in $V(H)$ but not in $P$, see also \cref{fig:reverse}).
  Indeed \cref{lemma:pruning-reversal} only allows reversing paths that do not intersect $P$.
  Since $s_j, t_j \in V(H) \setminus P$, by the low-diameter invariant above we know that
  \[ \dist_{G^{\kappa}_{\Bf_{j-1}}}^{\Bw_G}(s_j, t_j) = O(\log n/\sigma)\]
  and thus $\Bw_G(R_j) = O(\log{n}/\sigma)$ as well by \cref{lem:pr-no-shortcut} (recall $\Bf$ is obtained by running our push-relabel algorithm of \cref{thm:push-relabel-main-theorem} on $G^{\kappa}$, and the push relabel algorithm will find almost shortest paths).
  We can now for each $e \in R_j$ add $e$ to $H$. %
  After doing so, we go back and for each of the new edge $(u, v)$ added to $H$, we increase $\Bnu(u)$ and $\Bnu(v)$ by $\Bw_G(e)$ to ensure that $\Bnu(v) \geq \sum_{e \in E(H \setminus P) \cap \delta(v)}\Bw_G(e)$ holds for all $v \in V(H) \setminus P$.
  The reason why we first add all edges and then do the vertex weight increments is to make sure the pruned set does not grow while adding edges.
  Note that these edge additions are valid as none of them are incident to $P$ due to us using fresh vertices.
  By $\Bw_G(R_j) = O(\log{n}/\sigma)$, we also conclude that the total amount we just added to the vertex weights is $O(\log{n}/\sigma)$.
  
  Finally, we reverse $R_j$ in $H$, via \cref{lemma:pruning-reversal}.
  The pruned set $P$ may grow according to \cref{lemma:pruning-reversal} after each vertex volume increment and path reversal.
  One can verify that all the invariants are maintained. 

  \paragraph{Summary.}
  In total, we have $|\Bf|$ path reversals, and each also increased $\|\Bnu\|_1$ by $O(\log n/\sigma)$.
  At the end, by \cref{lemma:pruning-reversal}, the total volume of the pruned set $P$ is bounded by:
  \begin{align*}
  \Bnu(P) &=
  O\left(|\Bf|\left(\frac{1}{\sigma} + \frac{\log n}{\sigma}\right) \right)
  \\&= O\left(|\Bf|\cdot\kappa_W D\log^{3} n \right)
  \\&= O\left(|\Bf|\cdot D\cdot \frac{\eta \log^{6} n}{\phi}\right)
  \end{align*}
  Since we have $\Bnu \ge \kappa \cdot D \cdot \deg_X$ initially (and this never changes as $\Bnu$ only grows), $\vol_X(P) \le O(\frac{|\Bf|\eta \log^6 n}{\kappa \phi})$, and hence $P$ can be incident to at most $O(\frac{|\Bf|\eta \log^{6} n}{\kappa \phi})$ many edges from $X$.
  The rest of $X$, by the low-diameter invariant, are reachable from each other in $H$ (and hence in $G^{\kappa}_{\Bf}$) by a path of $\Bw_G$-length $O(\log{n}/\sigma) = O\left(\frac{|C|\eta^3 \log^7 n}{\phi^{2}}\right)$.
  This completes the proof of \cref{lemma:low-diameter-expander-new}.
\end{proof}

\section{Building an Expander Hierarchy}\label{sec:nested-expander-decomposition}

\newcommand{\prev}{\mathrm{prev}}

In this section, we show how to construct an expander hierarchy of the input graph that was used earlier in this paper for deriving the weight function needed by our push-relabel algorithm in \cref{sec:weight}.

\NestedExpanderHierarchyCorollary

In particular, we show the following \cref{thm:nested-expander-hierarchy}, from which \cref{cor:nested-expander-hierarchy} immediately follows if we choose, e.g., $\phi = \exp\left(-\frac{\log n}{(\log \log n)^{1/3}}\right)$.

\begin{restatable}{theorem}{NestedExpanderHierarchy}
  Given an $n$-vertex simple capacitated graph $(G, \Bc)$ and a parameter $0 < \phi < 2^{-\omega\left(\frac{\log n}{\sqrt{\log \log n}}\right)}$ sufficiently small, there is a randomized $\frac{n^{2+o(1)}}{\phi^3}$ time algorithm that with high probability constructs a $\phi/n^{o(1)}$-expander hierarchy $\cH = (D, X_1, \ldots, X_{\eta})$ of $(G, \Bc)$ with $\eta = O(\log n)$.
  \label{thm:nested-expander-hierarchy}
\end{restatable}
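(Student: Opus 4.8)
The plan is to build the hierarchy bottom-up, constructing one level $X_{i}$ at a time as a (directed) expander decomposition with respect to the terminal set $X_{i-1}$ that was produced in the previous iteration, using the \alg{SparseCut}{} subroutine of \cref{thm:flow} as the flow/cut primitive. Concretely, level $1$ is obtained by running expander decomposition of the whole graph with terminal set $F=E$: we repeatedly invoke the cut-matching game (\cref{thm:directed-cut-matching-game}) with the matching player implemented via \alg{SparseCut}{} with $\kappa = \widetilde{\Theta}(1/\phi)$ and $F=E$ and $\cH$ being the (empty) hierarchy of $G\setminus E$; whenever \alg{SparseCut}{} fails to route we get a cut that is $\widetilde{\Theta}(\phi)$-sparse with respect to the current terminal volume, which we recurse on, and otherwise the embedding of the witness graph certifies expansion, so the currently-considered piece becomes a level-$1$ expander. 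The edges inside level-$1$ expanders become $X_{1}$, the remaining edges are handed to the next round as the new terminal set. To construct $X_{i+1}$ from $X_{i}$ we do the same thing but now call \alg{SparseCut}{} with $F=X_{i}$ and $\cH$ equal to the \emph{already-built} hierarchy of $G\setminus X_{i}$ (levels $1,\dots,i-1$ plus the DAG part), which is exactly the regime \cref{thm:flow} supports; its running time $\widetilde{O}(n^{2}\kappa\eta^{4}/\phi^{2})$ is $n^{2+o(1)}/\phi^{3}$ with $\kappa,1/\phi = n^{o(1)}$ and $\eta = O(\log n)$. The level count is bounded because $|X_{i}| = \widetilde{O}(\phi\,|X_{i-1}|)$ (the sparse-cut charging argument: cut edges get charged to the smaller side, each vertex is on the smaller side $O(\log n)$ times, giving $\widetilde{O}(\phi m)$ total per level), so after $O(\log_{1/\phi} m) = O(\log n)$ levels $X_{\eta}$ becomes $\phi'$-expanding in $G$ and the hierarchy is complete; this reproduces \cref{fact:hierarchy-existence-overview} algorithmically.

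The main obstacle — and the reason \cref{sec:nested-expander-decomposition} is 40 pages — is that \cref{assumption:nested} is false: a cut $S$ that is $\phi$-sparse with respect to $X_{i}$ need not have $E(S,\overline S)\subseteq X_{i}$, so elevating $E(S,\overline S)$ to level $i+1$ destroys the lower-level hierarchy of $G[S]$ and $G[\overline S]$ that \alg{SparseCut}{} relies on for its weight function. The plan to deal with this is a data-structure-style amortization: maintain, for each level, a single-level expander decomposition that supports edges being \emph{moved between levels}, adapting the dynamic framework of \cite{HuaKGW23} (directed expander pruning / decomposition under updates) so that when a non-nested cut is found, the offending non-terminal edges are promoted and the affected lower-level clusters are re-examined via further calls to \alg{SparseCut}{}. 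The key point is not fast update time — each "update" costs $n^{2+o(1)}$ — but that the \emph{total number of updates} across the whole construction is $n^{o(1)}$. One argues this by a potential function tracking, roughly, the number of edges at each level (or the total "movement budget"): each promotion strictly decreases a carefully chosen potential, and the pruning guarantees of the maintained decomposition ensure only a small number of clusters are disturbed per promotion. This is where randomization enters (via the randomized cut-matching game of \cite{KhandekarRV06,Louis10}, replaceable by \cite{BernsteinGS20}) and where the inherent $n^{o(1)}$ overhead comes from.

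Putting the pieces together: I would first state and prove the single-level building block — "given a hierarchy $\cH$ of $G\setminus F$, one can in $n^{2+o(1)}/\phi^{2}$ time either certify $F$ is $\widetilde{\Theta}(\phi)$-expanding in $G$ or find a cut $\phi'$-sparse with respect to $F$" — which is an almost immediate corollary of \cref{thm:flow} combined with \cref{thm:directed-cut-matching-game} (the witness embedding plus the low-diameter argument of \cref{lemma:expander-low-diameter} gives the expansion certificate; otherwise \cref{thm:flow} hands back the sparse cut). Then I would set up the move-between-levels data structure, state its update/amortization guarantee (total updates $n^{o(1)}$, per-update time $n^{2+o(1)}/\phi^{3}$, correctness: at all times levels $1,\dots,i$ form a valid partial hierarchy and $X_{i}$ is $\phi'$-expanding in $G\setminus X_{>i}$), and prove it by the potential argument sketched above together with the adapted pruning of \cite{HuaKGW23}. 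Finally I would run the levels in order, observe the geometric decay $|X_{i}| = \widetilde{O}(\phi|X_{i-1}|)$ forces $\eta = O(\log n)$, check that the acyclicity of $D$ and condition (2) of \cref{def:expander-hierarchy} (each $e\in X_{i}$ lies in an SCC of $G\setminus X_{>i}$; in fact $X_{i}$ is a separator) are maintained by the construction, and sum the running time over the $O(\log n)$ levels and $n^{o(1)}$ updates to get $n^{2+o(1)}/\phi^{3}$. Choosing $\phi = \exp(-\log n/(\log\log n)^{1/3})$ then yields \cref{cor:nested-expander-hierarchy}.
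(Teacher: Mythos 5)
Your high-level plan matches the paper's: build the hierarchy bottom-up, use \alg{SparseCut}{} (\cref{thm:flow}) inside the cut-matching game as the flow/cut primitive, correctly identify that the obstacle is non-nested cuts destroying lower levels, and adapt the \cite{HuaKGW23} framework to move edges between levels with a ``$n^{2+o(1)}$ per update, $n^{o(1)}$ updates total'' accounting. The paper formalizes the ``move edges between levels'' part via a recursive ``hierarchy maintainer'' abstraction (\cref{def:hierarchy-maintainer}) with an \alg{Init}{}/\alg{Cut}{} interface and a boosting lemma (\cref{lemma:hierarchy-maintainer}) that composes $O(\log n)$ levels, but this is a modularization choice and your description is compatible with it.

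There is a genuine gap, though, and it is precisely the point the paper flags as the crux: you propose to bound the number of updates via a potential that ``strictly decreases'' with each promotion, i.e.\ an amortized argument. The paper explicitly argues that an amortized output-recourse guarantee is \emph{insufficient} here --- because the interaction between adjacent hierarchy levels requires that \emph{each} call to $\cM_{\prev}.\alg{Cut}{D}$ return an $A$ with $\Bc(A) \lesssim \beta\,\Bc(D)$, not merely that the running total be controlled; with only an amortized bound you could pay a large recourse early and then see many small updates with no size decrease, which breaks the geometric decay $|X_{i+1}|\approx\phi|X_i|$ that gives $\eta=O(\log n)$ and $n^{o(1)}$ total work. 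To fix this the paper introduces a \emph{random rebuilding} strategy (sampling a witness level from $\cR_{t,\tau}$ in~\eqref{eq:distribution} after every update, with a grace period before a witness becomes invalid) to obtain \emph{expected worst-case} recourse, and then a substantial analysis (the stability lemmas in~\cref{subsec:stability}, \cref{lemma:bound-from-external-cuts}, and the $\Size_\ell/\Time_\ell$ recurrences in~\cref{subsec:recourse-and-runtime}) to verify the expectations compose across the recursion. Your proposal omits this mechanism entirely. Relatedly, you attribute the algorithm's randomness to the \cite{KhandekarRV06,Louis10} cut-matching game; the paper says the opposite --- that cut player could be derandomized by~\cite{BernsteinGS20} --- and that the \emph{only} essential use of randomness is exactly this random-rebuild-for-expected-recourse step, which your proof sketch does not contain.
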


In fact, our \cref{cor:nested-expander-hierarchy,thm:nested-expander-hierarchy} achieve an additional property that each $X_i$ is a separator of $G \setminus X_{>i}$, where $X$ is a \emph{separator} of $H$ if none of the edges in $F$ has both endpoints in the same strongly connected component of $H$.
This is because we will construct each $X_i$ by repeatedly finding cuts in $G$ and removing edges from one of the directions (i.e., $E_G(S,\overline{S})$ or $E_G(\overline{S}, S)$ for some $S$) which disconnects the two sides of the cuts.

\subsection{Overview and Setup} \label{subsec:overview}

We first give a high-level overview of the algorithm, where for simplicity we assume the graph is unit-capacitated, since both the analysis and the algorithm itself extend seamlessly to the capacitated setting.
Note that the first level of the expander hierarchy is easy to compute via standard expander decomposition techniques.
Recall that $G_i \defeq G \setminus X_{>i}$ for any $i$.
That is, we can get three edge sets $D, X_1, X_2$ such that $D$ is a DAG, $X_1$ is $\phi$-expanding in $G_1$, and $|X_2|$ is small (on the order of $\phi m$).
To construct the second level and onward, one immediate idea is to simply do expander decomposition with respect to the \emph{volume induced by $X_2$}, which is in fact doable by incorporating our sparse-cut algorithm of \cref{thm:flow} into the framework of e.g.,~\cite{NanongkaiSW17,BernsteinGS20,HuaKGW23}.
If the returned edge set $X_3$ happens to be a subset of $X_2$, then we can set $X_2 \gets X_2 \setminus X_3$ and continue to run expander decomposition on $X_3$.
As the number of edges in the terminal set decreases roughly by a factor of $\phi$ each time, after $O(\log_{1/\phi} n)$ iterations we will get the desired expander hierarchy.

The issue is that the edge set $X_3$ we need to cut when doing expander decomposition with respect to the volume induced by $X_2$ may not be a subset of $X_2$ and hence, it ``cuts through'' the strong components of $G_1$. Indeed, it might necessarily be the case that $X_3$ includes edges from $X_1$ or even from $D$;
in general given any $F \subseteq E$ there might not be a separator contained in $F$ that makes $F$ expanding in the remaining graph.
Having $X_3 \subsetneq X_2$ would result in a \emph{non-nested} expander hierarchy which is incompatible with our sparse-cut algorithm once we recurse on both sides of the cut.

To further understand why this breaks the previous layers, notice that as $X_3 \subsetneq X_2$ the graph $G_1 = G \setminus X_{>1}$ in which $X_1$ is expanding changes.
This would potentially decrease the well-connectivity of $G_1$ and make $X_1$ no longer expanding.
To overcome this, we apply a seemingly na\"ive approach:
Whenever we find $X_3$, we immediately remove $X_3 \setminus X_2$ from $G_1$ (note that it suffices to remove $X_3 \setminus X_2$).
We then try to further refine the strongly connected components of $G_1$ into smaller pieces so that $X_1$ is still expanding in this graph.
As a result, some edges that were previously in $X_1$ got removed from $G_1$, and to accommodate them we further add these edges
into $X_2$.
Since the volume induced by $X_2$ increases, it may no longer be expanding in $G_2=G \setminus X_3$, and to fix it we similarly refine $G_2$ by putting more edges into $X_3$, which in turn may result in us moving even more volume to $X_2$, and so on.
While this creates a loop between the two steps that seemingly takes $\Omega(n)$ rounds, we show that with careful analysis and algorithmic implementation, this number can actually be bounded.

\subsubsection{Intuition of Analysis}

\paragraph{Ideal Scenario.}
To see why the number of iterations can be bounded, let us first consider the ideal case that (1) if we remove $D$ edges from $G_1$, then we can find a set $D^\prime$ of $O(D)$ edges in $G_1$ to be further removed so that $X_1$ remains expanding in it, and (2) if we add $A$ new edges to $X_2$, then we can find a set $A^\prime$ of $O(\phi A)$ edges to be removed from $G_2$ (hence added to $X_3$) so that $X_2$ remains expanding in $G_2$.
In this case, we can easily see that the number of edges to be removed from $G_1$ and the number of edges to be added to $X_2$ in fact decrease by an $O(\phi) < 1/2$ factor each round, which means that the number of rounds is bounded by $O(\log n)$.
If we further consider the third level $X_3$, then we can see that there are $O(\log n)$ rounds of interaction between $X_2$ and $X_3$, each of which generates another $O(\log n)$ rounds of interaction between $X_1$ and $X_2$.
Therefore, we can bound the total number of iterations of the algorithm by $O(\log n)^{\eta}$ where $\eta$ is the height of the final hierarchy we construct.
To this end, notice that the number of terminal edges is reduced by roughly a factor of $O(\phi)$ in each level, and thus we can bound $\eta$ by roughly $O(\log_{1/\phi} m)$.
Choosing $\phi < 1/n^{o(1)}$ sufficiently small (for instance, $\phi = 2^{-\sqrt{\log n}}$), this shows that the algorithm will terminate in $O(\log n)^{O(\log_{1/\phi} m)} = n^{o(1)}$ iterations, which is what we are aiming for.

The question thus now becomes: Is this ideal scenario achievable?
Existentially, by standard expander arguments, such edge sets always exist.
Thus, we may hope to generalize and apply existing expander pruning algorithms (e.g.,~\cite{SaranurakW19,BernsteinGS20,HuaKGW23,SulserP24}) to locate them by replacing the flow algorithm used by these frameworks with the sparse-cut subroutine we developed in \cref{sec:sparse-cut}.
Note that it is NP-hard to compute expander decomposition/pruning exactly, but as in most standard approaches we can afford some multiplicative approximation as long as the number of edges still goes down by half each round.

\paragraph{Fixing Hierarchy with Few Cuts.}
However, none of these algorithms locate the entire edge sets in one shot.
Instead, they work by repeatedly finding sparse cuts in the graph and recurse on both sides $U_1$ and $U_2$ of the cut.
But notice that our \cref{thm:flow} when running on subgraph $G[U]$ requires an expander hierarchy of $G[U] \setminus X_i$ (when we are at level $i$ trying to build expander decomposition with respect to $X_i$).
Although at the beginning of the algorithm, we have obtained from the previous layers a hierarchy $\cH_{\prev}$ of $G\setminus X_{i}$, if the first sparse cut we found is not contained in $X_{i}$, then we can no longer extract a valid hierarchy of $G[U] \setminus X_i$ from $\cH_{\prev}$.
In this case, we need to first go down to the pervious layers and fix their hierarchy before coming back to level $i$ and continue locating sparse cuts.
This invalidates our previous ideal analysis.

Fortunately for us, some of these previous algorithms (specifically~\cite{BernsteinGS20,HuaKGW23}) follow the framework established by \cite{NanongkaiS17,Wulff-Nilsen17} which allows one to argue that we can locate all these edges in $n^{\eps}$ calls for some $\eps = o(1)$ to the sparse-cut subroutines in total.\footnote{This is not technically accurate as we do still need to recurse on the smaller side of the cuts, but in this case we get a size reduction and all the recursions are vertex-disjoint.}
As a result, the number of times we need to go back to the previous level is bounded by roughly $n^{\eps}$ (the $O(\log n)$ factor induced by the reduction of edges is overwhelmed by this term).
Choosing $\phi$ to be even smaller (yet still $1/n^{o(1)}$), we can ensure that the total of calls to the sparse-cut subroutines throughout the whole construction is $n^{\eps \log_{1/\phi} n} = n^{o(1)}$.

\paragraph{Amortized vs Expected Worst-Case Recourse.}
Another issue with applying previous approaches is that these algorithms only have \emph{amortized} recourse guarantees.
For example, in the first case where we remove $D$ edges from $G_1$, instead of always returning an edge set $D^\prime$ of $O(D)$ edges, the amortized guarantee only ensures that if we remove $k$ batches $D_1, \ldots, D_k$ edges from $G_1$, then the algorithm returns $D_1^\prime, \ldots, D_k^\prime$ such that $|D_1^\prime| + \cdots + |D_i^\prime| = O(D_1 + \cdots + D_i)$ for every $i \in [k]$.
Unfortunately, amortized guarantees would break the above analysis of having the size of the edge sets reduced by half each iteration, as we might have a single large update early, and then a large number of small updates later with no decrease in size.

To overcome this, we observe that while worst-case output recourse might be algorithmically challenging to achieve in these algorithms, our analysis works if we can obtain a weaker \emph{expected} worst-case recourse.
Indeed, consider again the ideal scenario except that the size of $D^\prime$ and $A^\prime$ is only $O(D)$ and $O(\phi A)$ \emph{in expectation} respectively.
By the law of total expectation, we can argue that the expected number of edges needed to be fixed still decreases by half each iteration.
Even though we can no longer conclude that the number of iterations is bounded by exactly $\log m$, notice that after $100\log m$ rounds the expected number of edges needed to be fixed drops to at most $m^{-99}$.
By Markov's inequality, this still shows that with high probability the interaction between the two levels is bounded by $100\log m$.
In \cref{subsubsec:overview-HKGW} we describe how we modify previous algorithms to achieve the expected worst-case output recourse.

We remark that this is the only place in our analysis that requires randomness and the only reason why our final algorithm is not deterministic---indeed, the randomized cut-matching game of \cite{Louis10} can be easily replaced with a deterministic one~\cite{BernsteinGS20}.

\subsubsection{A Data Structure Point-of-View.}
To formally capture the interaction between the current level and the previous levels, we employ a data structure perspective and define the following.
In the remainder of the section, let $m \leq n^4$ be the total capacities of the edges in $G$.

\begin{definition}
  A \emph{$(k,\alpha,\beta,\phi, T)$-hierarchy maintainer} $\cM$ is a randomized data structure that maintains a subgraph $G_{\cM} \subseteq G$ of a capacitated graph $(G, \Bc_G)$ such that after each of the following operations it provides a $\phi$-expander hierarchy $\cH_{\cM}$ of $(G_{\cM}, \Bc_G)$ with height $\eta(\cH) \le k$.
  \begin{itemize}
    \item \alg{Init}{$G, \Bc_G$}: Given an $n$-vertex simple capacitated graph $(G, \Bc_G)$, the data structure in expected $T(n)$ time computes a separator $X \subseteq E$ of expected capacity $\expect[\Bc_G(X)] \leq \alpha m$ and initializes $G_{\cM} \gets G \setminus X$.
      The output of the subroutine is $X$.
    \item \alg{Cut}{$D$}: Let $\{U_1, \ldots, U_k\}$ be the SCCs of $G_{\cM}$.
      The input is a separator $D$ of $G_{\cM}$ such that for each $U_i$ either $D \cap G_{\cM}[U_i] = \emptyset$ or $D \cap G_{\cM}[U_i] = E_{G_{\cM}[U_i]}(S_i, \overline{S_i})$ for some $S_i \subseteq U_i$.
      The adversary removes $D$ from $G_{\cM}$, i.e., it sets $G_{\cM} \gets G_{\cM} \setminus D$.
      Let $U_D$ be the union of SCCs that intersect with $D$.
      In response, the data structure in expected $T(|U_D|)$ time computes a separator $A \subseteq G_{\cM}[U_D]$ of the new $G_{\cM}$ of expected capacity $\expect[\Bc_G(A)] \leq \beta \Bc_G(D)$ and update $G_{\cM} \gets G_{\cM} \setminus A$.
      The output of the subroutine is $A$.
  \end{itemize}
  \label{def:hierarchy-maintainer}
\end{definition}

Our main technical result in this section is the following lemma which says that we can design a $(k+1,\alpha n^{o(1)} \phi,\cdot,\cdot,\cdot)$-hierarchy maintainer
using a  $(k,\alpha,\cdot,\cdot,\cdot)$-hierarchy maintainer, thus reducing the number of separator edges by a factor $n^{o(1)}\phi \ll \phi^{\Omega(1)}$ for $\phi$ sufficiently small. The blow-up in the other parameters are carefully set to be manageable.

\begin{restatable}{lemma}{Boosting}\label{lemma:hierarchy-maintainer}
  Given a $(k, \alpha_{\prev}, \beta_{\prev}, \phi_{\prev}, T_{\prev})$-hierarchy maintainer $\cM_{\prev}$ for an $n$-vertex simple capacitated graph $(G, \Bc)$, for any $L \in \N$ there exists some $\delta_L \leq (\log n)^{L^{O(L)}}$ such that for any $\phi < O\left(\frac{1}{\delta_L L \beta_{\prev} n^{O(1/L)}}\right)$ sufficiently small we can construct a $(k + 1, \alpha, \beta, \phi^{\prime}, T)$-hierarchy maintainer $\cM$ with
  \begin{equation}
  \begin{split}
    \alpha &\leq \phi \cdot \delta_L n^{O(1/L)} \cdot \alpha_{\prev}, \\
    \beta &\leq \delta_L n^{O(1/L)}, \\
    \phi^{\prime} &\geq \min\left\{\phi_{\prev}, \frac{\phi}{\delta_L}\right\},
    \\
    T(n) &= \delta_L n^{O(1/L)} \cdot \widetilde{O}\left(\frac{n^2}{\phi\phi_{\prev}^2} + T_{\prev}(n)\right).
  \end{split}
  \label{eq:parameters}
  \end{equation}
\end{restatable}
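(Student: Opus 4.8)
This lemma is the technical heart of Section~\ref{sec:nested-expander-decomposition}, so the proof will be a long construction. Here is the plan.

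\textbf{Setup and high-level structure.} The plan is to build the new maintainer $\cM$ by running a single-level expander decomposition "on top of" the old maintainer $\cM_{\prev}$. Concretely, $\cM$ maintains $G_{\cM}$, and at any point it keeps an edge set $X_{\text{top}}$ (the new topmost separator level, which will eventually become $X_{k+1}$) with the invariant that $G_{\cM} \setminus X_{\text{top}}$ is exactly the graph maintained by $\cM_{\prev}$, so that $\cM_{\prev}$ hands us a $\phi_{\prev}$-expander hierarchy of $G_{\cM}\setminus X_{\text{top}}$, and appending $X_{\text{top}}$ on top gives a hierarchy of $G_{\cM}$ of height $k+1$ --- provided $X_{\text{top}}$ is $\phi'$-expanding in $G_{\cM}$ and is a separator. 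For \alg{Init}{}, I would first call $\cM_{\prev}.\alg{Init}{}$ to get $X_{\prev}$ with $\expect[\Bc(X_{\prev})]\le \alpha_{\prev}m$, then run the single-level expander-decomposition routine (below) with terminal set $X_{\prev}$ to carve out $X_{\text{top}}\subseteq X_{\prev}$ making $X_{\prev}\setminus X_{\text{top}}$ $\phi'$-expanding in $G_{\cM}\setminus X_{\text{top}}$; but since cuts found are not nested inside $X_{\prev}$, every cut we take must be pushed down into $\cM_{\prev}$ via \alg{Cut}{} calls, which in turn produce more edges $A$ that must be added to $X_{\text{top}}$, triggering more decomposition work. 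For \alg{Cut}{$D$}: the adversary removes $D$ from $G_{\cM}$; we feed the part $D\setminus X_{\text{top}}$ into $\cM_{\prev}.\alg{Cut}{}$, get back $A_{\prev}$, move $A_{\prev}$ into $X_{\text{top}}$, and re-run the local decomposition to repair expansion of $X_{\text{top}}$ on the affected components $U_D$.

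\textbf{The single-level decomposition subroutine and the loop bound.} The core subroutine is: given the current graph and terminal set $F$ (the candidate for $X_{\text{top}}$), repeatedly use the cut-matching game (Theorem~\ref{thm:directed-cut-matching-game}) together with \alg{SparseCut}{} (Theorem~\ref{thm:flow}) --- which needs precisely a $\phi_{\prev}$-expander hierarchy of $G\setminus F$, supplied by $\cM_{\prev}$ --- to either certify $F$ is $\widetilde\Theta(1/\kappa)$-expanding or find a cut sparse w.r.t.\ $F$; recurse on both sides. Following the Nanongkai--Saranurak / Wulff-Nilsen framework as adapted in \cite{BernsteinGS20,HuaKGW23}, the total number of \alg{SparseCut}{} calls over one full decomposition is $n^{O(1/L)}$ by "balanced or small side" recursion, which is where the parameter $L$ and the $n^{O(1/L)}$ factors in \eqref{eq:parameters} enter; each call costs $\widetilde O(n^2\kappa\eta^4/(\phi\phi_{\prev}^2))$ by Theorem~\ref{thm:flow}, giving the stated $T(n)$ after absorbing $T_{\prev}$. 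The delicate accounting: every cut taken must be "committed" to $\cM_{\prev}$ through \alg{Cut}{}, and $\cM_{\prev}$ returns $A_{\prev}$ with $\expect[\Bc(A_{\prev})]\le\beta_{\prev}\Bc(D_{\prev})$, which feeds back into $F$. I would set up a potential/charging argument showing that the expected total weight moved around the loop forms a geometric series with ratio $O(\delta_L n^{O(1/L)}\beta_{\prev}\phi)<1/2$ (this is exactly what the hypothesis $\phi<O(1/(\delta_L L\beta_{\prev}n^{O(1/L)}))$ buys us), so in expectation the loop between the new level and $\cM_{\prev}$ closes; then invoke Markov + a union bound over the $O(\log n)$-depth recursion to turn "expected few rounds" into "$\mathrm{w.h.p.}$ at most $O(\log n)$ rounds", which is why I must track \emph{expected} worst-case recourse (as flagged in the overview) rather than amortized.

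\textbf{Verifying the output guarantees.} With the loop terminating, $\expect[\Bc(X_{\text{top}})]$ is bounded by the geometric sum of: the initial decomposition of $X_{\prev}$ (which by standard expander-decomposition charging removes an $\widetilde O(\phi)$-fraction, giving $\widetilde O(\phi)\cdot\alpha_{\prev}m$), plus the feedback terms, altogether $\le \phi\cdot\delta_L n^{O(1/L)}\cdot\alpha_{\prev}m = \alpha m$; similarly for \alg{Cut}{}, a removed batch $D$ triggers feedback bounded by $\beta\Bc(D)$ with $\beta\le\delta_L n^{O(1/L)}$. The resulting $X_{\text{top}}$ is $\phi/\delta_L$-expanding in $G_{\cM}$ (the cut-matching game loses a $\psi_{\CMG}=\Omega(1/\log^2 n)$ factor, and \alg{SparseCut}{} with congestion $\kappa$ certifies $\widetilde\Theta(1/\kappa)$-expansion; choosing $\kappa,\delta_L$ appropriately gives $\phi'\ge\min\{\phi_{\prev},\phi/\delta_L\}$), and it is a separator because each cut removes one direction $E(S,\overline S)$ disconnecting the sides. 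Since $\cM_{\prev}$'s hierarchy has height $\le k$ and we stack one level, $\eta(\cH_{\cM})\le k+1$. Finally I would check that all $n^{o(1)}$ bookkeeping is consistent by setting $\delta_L=(\log n)^{L^{O(L)}}$ to absorb all poly-log-to-the-$L$ factors accumulated through $L$ nested recursion phases.

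\textbf{Main obstacle.} The hardest part will be making the interaction loop between the new level and $\cM_{\prev}$ rigorous: proving that each commit to $\cM_{\prev}$ and each re-run of the local decomposition on $U_D$ together contribute, in expectation, a strictly-less-than-half multiple of the weight that triggered them, uniformly across the recursion tree, and that the $n^{O(1/L)}$ bound on \alg{SparseCut}{} calls survives the interleaving with the recursive \alg{Cut}{} calls into $\cM_{\prev}$. This requires carefully designing the order in which cuts are taken vs.\ committed (to preserve nestedness at every intermediate step, since \alg{SparseCut}{} demands a valid lower-level hierarchy) and a two-level potential argument; this is essentially all of the $40$-page Section~\ref{sec:nested-expander-decomposition}, and I would expect the write-up to proceed by first fixing the local decomposition procedure with its recourse/round guarantees as a standalone lemma, then the outer loop, then the probability amplification.
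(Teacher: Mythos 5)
Your high-level architecture matches the paper's: build the new level by running a single-level expander decomposition with terminal set $F$ on top of $\cM_{\prev}$, feed each cut down via $\cM_{\prev}.\alg{Cut}{}$, absorb the returned edges $A$ back into $F$, argue a geometric contraction under the hypothesis on $\phi$, and amplify by Markov plus a union bound. You also correctly flag that amortized recourse would break the geometric-series accounting and that expected worst-case recourse is required, and you correctly place $L$ as the source of the $n^{O(1/L)}$ sparse-cut-call bound from the \cite{NanongkaiSW17,HuaKGW23}-style multi-level witness framework.

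There is one genuine gap. You say you ``would set up a potential/charging argument'' and later refer to ``a two-level potential argument'' to control the interaction loop. But the paper explicitly argues that the potential-based charging of \cite{HuaKGW23} does not transfer here, because it relies on their flow subroutine being \emph{local} (exploring only a $k$-edge neighborhood in $O(k)$ time), and the weighted push-relabel of \cref{thm:flow} is not local. The paper's replacement is a different mechanism that your sketch does not supply: (a) each SCC $U$ carries $L+1$ witnesses $W_{U,0},\dots,W_{U,L}$ with fake-edge budgets and a \emph{grace period} $\bigl[\tfrac{\psi_\ell}{10}\tau_U^{\ell/L},\,\tau_U^{\ell/L}\bigr]$; (b) after every update a \emph{random} rebuild level is sampled from the explicit distribution $\cR_{t,\tau}$ in \eqref{eq:distribution}, which is what yields expected worst-case (rather than amortized) recourse while still guaranteeing w.h.p.\ that no witness exits its grace period; and (c) to make (b) sound, one needs new \emph{stability} lemmas (\cref{lemma:blow-up,lemma:bound-from-external-cuts}, built on \cref{lemma:union-of-sparse-cut}) showing that the fake-edge growth between rebuilds is dominated by external cuts rather than by the many tiny internal cuts. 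Without (a)–(c), there is no rigorous route from ``expected contraction in one round'' to ``the full recursion closes,'' because an adversary can front-load one large update and trickle small ones, which is exactly what defeats an amortized/potential argument. You correctly identify this as the hard part, but the candidate mechanism you name is the one the paper rules out; the recurrences on $\Size_\ell$/$\Time_\ell$ in \cref{subsec:recourse-and-runtime}, conditioned on good/borderline states, are the actual vehicle.

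A smaller point: in the paper, $\alg{Cut}{D}$ also rebuilds the witnesses of the small side $S_U$ from scratch (\cref{line:build-small-pieces-cut}), and the \alg{MaintainExpander}{} recursion has three distinct recursive calls (fixing, downward, rebuilding) whose contributions are bounded separately; your sketch describes only the outer decomposition-then-repair loop, which underestimates where the $\delta_L = (\log n)^{L^{O(L)}}$ blow-up actually accumulates.
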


Note that the value of $\beta_{\prev}$ only affects the value of $\phi$ we can choose but not the new $\beta$ for $\cM$.
We first show that \cref{lemma:hierarchy-maintainer} in fact already implies \cref{thm:nested-expander-hierarchy}, restated below.

\NestedExpanderHierarchy*

\begin{proof}[Proof of \cref{thm:nested-expander-hierarchy}]
  We choose $L = \Theta(\sqrt{\log \log n})$ for which
  \[
    \delta_L \leq (\log n)^{L^{O(L)}} \leq 2^{\log \log n \cdot \Theta(\sqrt{\log \log n})^{\Theta(\sqrt{\log \log n})}} \leq 2^{(\log \log n)^{\Theta(\sqrt{\log \log n})}} = n^{o(1)}.
  \]
  Observe that there is a trivial $(0, 1, 0, 1, O(n^2))$-hierarchy maintainer $\cM_0$ which on \alg{Init}{} simply returns every edge.
  Since $\phi < o\left(2^{-\frac{\log n}{\sqrt{\log \log n}}}\right)$ is sufficiently small, we have $\phi \leq \left(\frac{1}{\delta_L n^{2/L}}\right)^2$ and $\phi < O\left(\frac{1}{\delta_L L \beta n^{O(1/L)}}\right)$ for $\beta = \delta n^{O(1/L)} \leq n^{o(1)}$.
  Therefore, starting from $\cM_0$, for each $k > 0$ we can apply \cref{lemma:hierarchy-maintainer} on $\cM_{k-1}$ to get a $\left(k, \sqrt{\phi}^{k}, n^{o(1)}, \phi/n^{o(1)}, T_{k}\right)$-hierarchy maintainer $\cM_k$, where $T_{k}(n)\defeq\delta_L^{O(k)} \cdot n^{O(k/L)} \cdot \widetilde{O}(n^2/\phi^3)$. %
  As such, for $\eta = 2\log_{1/\phi} (4n^4) < L$, by calling $\cM_{\eta}.\alg{Init}{G}$ we get a $\phi/n^{o(1)}$-expander hierarchy of $G \setminus X$ of height $\eta \leq O(\log n)$ for some edge set $X$ with expected size $\expect[\Bc_G(X)] \leq \Bc_G(E) \cdot (\sqrt{\phi})^{\eta} \leq 1/4$.
  Thus, by Markov's inequality, with probability at least $3/4$ the set $X$ is empty, meaning that the hierarchy we got is indeed a $\phi/n^{o(1)}$-expander hierarchy of $G$.
  The expected running time of the algorithm is
  \[ \log n^{L^O(L)} \cdot n^{O(\log_{1/\phi} n)/L} \cdot \widetilde{O}(n^2/\phi^3) = \widehat{O}(n^2/\phi^3) \]
  time since $O(\log_{1/\phi} n) / L \leq o(\sqrt{\log \log n})/L = o(1)$. %
  Repeating this $O(\log n)$ time we succeed in worst-case time with high probability.
  This proves the theorem.
\end{proof}

We now briefly sketch on how we prove \cref{lemma:hierarchy-maintainer}. To boost the quality of the maintainer $\cM_{\prev}$, let $F$ be the result of running $\cM_{\prev}.\alg{Init}{G}$.
Our algorithm essentially takes the $k$-level hierarchy maintained by $\cM_{\prev}$ and constructs the $(k+1)$-th level of it in order to reduce the number of non-expanding edges by roughly a factor of $\phi$.
Thus, we start with the terminal set $F$ and run expander decomposition in $G$ with respect to $F$.
In other words, the goal is to compute a separator $X$ such that $F$ is $\phi$-expanding in $G \setminus X$.
If we then have an expander hierarchy $\cH = (D, X_1, \ldots, X_k)$ of $G \setminus (F \cup X)$ then we can set $X_{k+1} = F$ and obtain an expander hierarchy of $G \setminus X$.
Needless to say, we will use $\cM_{\prev}$ to maintain such $\cH$, and thus throughout the algorithm we need to ensure $\cH$ is a hierarchy of $G \setminus (X \cup F)$ by properly calling $\cM_{\prev}.\alg{Cut}{}$.

To compute an expander decomposition with respect to $F$, we start with an empty $X$ and let $\cU$ be the SCCs of $G_{\cM} = G \setminus X$ which is initially $\cU = \{V\}$.\footnote{Here we assume the graph is initially strongly connected.}
For each $U \in \cU$, we attempt to locate a sparse cut in $G[U]$ via the sparse-cut algorithm we developed in \cref{sec:sparse-cut}.
Note that as $X$ is a separator of $G$ and $U$ is strongly connected, $G[U]$ is the same as $G_{\cM}[U]$.
If we find a sparse cut $D$, then we include $D$ into $X$ which effectively splits $U$ into (at least) two SCCs on which we recurse our construction.
In addition, to ensure that $\cM_{\prev}$ holds a hierarchy of $G \setminus (F \cup X)$, we need to call $\cM_{\prev}.\alg{Cut}{D}$ to remove $D$ from $G_{\cM_{\prev}}$.
Observe that since $D$ is a cut in $G[U]$ and the SCCs of $G_{\cM_{\prev}}$ form a refinement of $\cU$, the input requirement of $G_{\cM_{\prev}}.\alg{Cut}{D}$ is satisfied.
After $\cM_{\prev}$ further refines its SCCs and outputs an $A \subseteq G_{\cM_{\prev}}[U]$, meaning that now it maintains a hierarchy of $G \setminus (F \cup A \cup X)$, we add $A$ into $F$ to preserve our invariant.
Note that doing all these also ensures that we have an expander hierarchy of $G[U] \setminus F$ at all times, which is required by our sparse-cut algorithm.
Indeed, if we take the SCCs of $G_{\cM_{\prev}}$ contained in $U$ and restrict $\cH$ to these SCCs, then we get an expander hierarchy of $G[U]$.

\begin{fact}
  For a $\phi$-expander hierarchy $\cH = (D, X_1, \ldots, X_{\eta})$ of $(G,\Bc_G)$ and $U \subseteq V$ such that for each $W \in \SCC(G)$ either $W \subseteq U$ or $W \cap U = \emptyset$, the sequence $\cH[U] \defeq (D \cap G[U], X_1 \cap G[U], \ldots, X_{\eta} \cap G[U])$ is a $\phi$-expander hierarchy of $(G[U],\Bc_G)$.
\end{fact}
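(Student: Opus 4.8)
The plan is to verify directly that $\cH[U] = (D', X_1', \ldots, X_\eta')$, where I abbreviate $D' \defeq D \cap E(G[U])$ and $X_i' \defeq X_i \cap E(G[U])$ (reading ``$\cdot \cap G[U]$'' as ``those edges with both endpoints in $U$'', as elsewhere in the paper), satisfies the three conditions of \cref{def:expander-hierarchy} for the graph $(G[U], \Bc_G)$. Condition~(1) is immediate: $D'$ is a subgraph of the acyclic $D$, hence acyclic. For conditions~(2) and~(3), the single structural observation that does all the work is that restricting to $U$ commutes with deleting higher-level edges: writing $X_{>i}' \defeq X_{i+1}' \cup \cdots \cup X_\eta'$, one checks from the definitions of induced subgraph and edge deletion that
\[
  G[U] \setminus X_{>i}' \;=\; (G \setminus X_{>i})[U],
\]
since both graphs have vertex set $U$ and edge set $\{e \in E(G) \setminus X_{>i} : \text{both endpoints of } e \text{ lie in } U\}$ (here I use that $\cH$ is a partition, so $X_i' \cap X_{>i}' = \emptyset$).

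Next I would establish the key lemma on strong components under restriction to $U$. Because $\SCC(G \setminus X_{>i})$ refines $\SCC(G)$ --- a subgraph that is strongly connected in $G \setminus X_{>i}$ is strongly connected in $G$, so each component of $G \setminus X_{>i}$ sits inside a single component of $G$ --- the hypothesis that every $W \in \SCC(G)$ satisfies $V(W) \subseteq U$ or $V(W) \cap U = \emptyset$ propagates to give the same dichotomy for every component of $G \setminus X_{>i}$; that is, $U$ is a union of strong components of $G \setminus X_{>i}$. For such a $U$, a standard fact gives that the strong components of $(G \setminus X_{>i})[U]$ are precisely those $C \in \SCC(G \setminus X_{>i})$ with $V(C) \subseteq U$; moreover, since a maximal strongly connected subgraph is induced on its vertex set, $E(C)$ (the edge set of $C$ as a subgraph) is the same whether $C$ is viewed inside $G \setminus X_{>i}$ or inside $(G\setminus X_{>i})[U]$, and $E(C) \subseteq E(G[U])$. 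Consequently $X_i' \cap E(C) = X_i \cap E(G[U]) \cap E(C) = X_i \cap E(C)$.

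With this in hand, conditions~(2) and~(3) follow by bookkeeping. For~(2): take $e \in X_i' \subseteq X_i$; then $e$ lies in some $C \in \SCC(G \setminus X_{>i})$, and since $e \in E(G[U])$ both endpoints of $e$ are in $U$, forcing $V(C) \cap U \neq \emptyset$ and hence $V(C) \subseteq U$, so $C$ is a strong component of $G[U] \setminus X_{>i}' = (G \setminus X_{>i})[U]$ containing $e$. For~(3): let $C$ be any strong component of $G[U] \setminus X_{>i}'$; by the lemma, $C \in \SCC(G \setminus X_{>i})$ with $V(C) \subseteq U$ and $X_i' \cap E(C) = X_i \cap E(C)$. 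By hypothesis $X_i$ is $\phi$-expanding in $(G \setminus X_{>i}, \Bc_G)$, so $X_i \cap E(C)$ has no $\phi$-sparse cut in $(C, \Bc_G)$; since the capacities $\Bc_G$, the volume $\vol_{\cdot,\Bc_G}$, and the $\phi$-sparsity threshold are literally inherited by the restriction, this is exactly the statement that $X_i'$ is $\phi$-expanding in $(G[U] \setminus X_{>i}', \Bc_G)$. (Empty levels $X_i' = \emptyset$ are harmless, as both conditions hold vacuously.)

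The only non-routine ingredient is the refinement/restriction fact for strong components, which I expect to invoke rather than reprove; the subtle point to be careful about is that the hypothesis is phrased for $\SCC(G)$ while each level $i$ needs the union-of-components property for $\SCC(G \setminus X_{>i})$, which is precisely why the refinement step is needed. Everything else is a matter of confirming that partitions, induced subgraphs, and edge deletions interact as expected and that the sparsity inequalities transfer verbatim.
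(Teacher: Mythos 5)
Your proposal is correct; the paper states this as a Fact without proof, and your direct verification of the three conditions in the definition of a $\phi$-expander hierarchy---via the identity $G[U]\setminus X_{>i}' = (G\setminus X_{>i})[U]$, the observation that $\SCC(G\setminus X_{>i})$ refines $\SCC(G)$ so that $U$ is a union of components at every level, and the fact that the SCCs of the induced subgraph on such a $U$ are exactly the components contained in $U$ (with identical edge sets, hence identical cuts, volumes, and sparsity)---is precisely the routine argument the paper leaves implicit. No gaps.
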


Having the overall picture, it now remains to implement the steps efficiently and achieve the desired expected guarantee.
For this we adapt and generalize the framework of \cite{HuaKGW23} which maintains expander decomposition by repeatedly finding sparse cuts and thus fits our purposes well.\footnote{It is worth mentioning that there is a recent work of \cite{SulserP24} which improves the almost-linear running time of \cite{HuaKGW23} to near-linear one. They sidestepped the multi-level approach of \cite{NanongkaiSW17,BernsteinGS20,HuaKGW23} by a novel \emph{push-pull-relabel} flow algorithm that allowed them to implement the trimming strategy of \cite{SaranurakW19}. We leave adapting their framework or our use case as an interesting open direction.}
We give an overview of their framework below.

\subsubsection[Overview of {[HKPW23]}]{Overview of \texorpdfstring{\cite{HuaKGW23}}{[HKPW23]}}\label{subsubsec:overview-HKGW}

To certify expansion and locate sparse cuts, \cite{HuaKGW23} employed a celebrated approach of embedding a witness into each $G[U]$.
Informally speaking, a witness is an $\widetilde{\Omega}(1)$-expander with approximately the same degree profile as that of $F$ and is embeddable into $G[U]$ with congestion $\widetilde{O}(1/\phi)$.
It may be hard to construct such a witness directly in few calls to the sparse-cut algorithm since we may repeatedly find rather unbalanced sparse cuts which makes the number of iterations $\Omega(n)$.
To overcome this, instead of a single witness for $G[U]$, \cite{HuaKGW23} used a series of witnesses that contain additional \emph{fake} edges that are counted toward the expansion guarantee of the witness yet are not embeddable into $G[U]$.

\paragraph{Witness with Fake Edges.}
While a witness with fake edges does not immediately certify the expansion of the graph, it provides fairly useful information that the graph does not contain a \emph{balanced} sparse cut.
Let us start with a threshold $R$ and attempt to construct a witness with $R$ fake edges.
In each attempt, we either find a balanced sparse cut of size $\widetilde{\Omega}(R)$ which we then recurse on both sides with a decent size reduction or we certify that no such balanced sparse cut exists in the graph.
In the next iteration, we decrease the parameter $R$ to be $R^\prime$ and then repeat the above loop until we certify there is no $\widetilde{\Omega}(R^\prime)$-balanced sparse cut either.
Crucially, even though as $R^\prime$ decreases we might not get a large size reduction when recursing on both sides of the cut anymore, by setting up the expansion parameter properly in each level, we can in fact guarantee that if we found much more than $\widetilde{\Omega}(R/R^\prime)$ sparse cuts in the graph, then there would have been a $\widetilde{\Omega}(R)$-balanced sparse cut in the graph that we start with which contradicts with our $R$-witness constructed.
By setting the $R$ value of a level-$\ell$ witness to be $n^{\ell/L}$ for some $L = \omega(1)$, the algorithm only needs to run the sparse-cut algorithm $n^{1/L} = n^{o(1)}$ time when computing and maintaining expander decomposition.

Note that there are two notions of levels in our algorithm:
One is the level of the expander hierarchy, and we maintain each level of hierarchy with $L$ levels of witnesses.
To avoid confusion, in the remainder of this overview the term level means the level of witnesses except when we deliberately use the term \emph{hierarchy level}.

\paragraph{Reparing Witnesses.}
This idea of using fake edges with decreasing expansion and balance parameters was initiated in \cite{NanongkaiS17,Wulff-Nilsen17} and has been later used either explicitly or implicitly in many other expander decomposition/pruning algorithms~\cite{NanongkaiSW17,BernsteinGS20}.
What \cite{HuaKGW23} differs from previous work is the explicit usage of witnesses and the way they set up and repair them which allows for maintaining expander decomposition under updates by directly finding sparse cuts.
To be more specific, consider a strongly connected component $U$ for which we want to maintain its expansion.
Their algorithm maintains $L$ witnesses $W_0,\ldots,W_{L}$ with $R_0, \ldots, R_L$ fake edges, where $R_\ell$ is supposed to be roughly $|U|^{\ell/L}$.
For each update to the graph, the algorithm first checks for each witness whether there are edges in it that are embedded into an updated part.
If so, these real edges are replaced by fake edges.
This increases the number of fake edges in the witnesses which might break the invariant of $R_{\ell} \approx |U|^{\ell/L}$ (and in particular, if $R_0 \gg 0$, then we have failed to certify the expansion of $G[U]$).
Their algorithm thus attempts to repair such an invalid witness from the higher-level witness $W_{\ell+1}$.
This is done by setting up a flow problem which corresponds to embedding a sufficiently large number of fake edges in $W_{\ell+1}$ into $G[U]$ so that it becomes valid for level $\ell$, and if it fails to do so, the algorithm finds a sparse cut.
A careful charging argument is then used in \cite{HuaKGW23} to bound the total update time throughout a sequence of updates.

\paragraph{Challenges in Adaptation.}
To adapt their framework to our use case, we replace the standard push-relabel/blocking flow algorithm used in \cite{HuaKGW23} with the weighted push-relabel algorithm we developed.
An immediate challenge for this is that our flow algorithm is not \emph{local} in the sense that it cannot be used to only explore a small neighborhood around the sources.
This is in contrast to the classic unweighted push-relabel algorithm which can explore a region with $k$ edges in $O(k)$ time.
While the idea of having witnesses with decreasing number of fake edges in principle, as we have touched upon above, allows one to at least intuitively argue that the number of times we need to call sparse-cut algorithm is small, \cite{HuaKGW23} used a more direct potential-based analysis which heavily relies on their local running time.
Thus, we need to apply a different analysis strategy than theirs and prove additional stability properties of their witnesses (see \cref{subsec:stability}) which then allow us to incorporate the conceptual guarantee of the high-level fake-edges framework into the specific ways \cite{HuaKGW23} maintained their witnesses.

Another modification we made to \cite{HuaKGW23} is the rebuilding strategy.
Previously, \cite{HuaKGW23} attempts to repair a witness whenever it contains too many fake edges.
However, this only gives an amortized output recourse which is insufficient for our analysis when interacting with previous hierarchy levels.
To overcome this we use a fairly standard strategy: Instead of fixing every witness whenever possible, we set a larger grace period for them and consider a witness valid as long as the number of fake edges it contains falls into its corresponding grace period.
We then for each update sample a \emph{random} witness level to be rebuilt.
By appropriately setting the sampling probability, we can ensure that (1) each witness will be rebuilt with high probability before it becomes invalid (this is due to the larger grace period we set) and (2) we achieve a worst-case output recourse \emph{in expectation}.
While this random rebuilding approach is commonly used, due to the interaction with previous hierarchy levels, during a repair of a witness we might need to abort the current repair, re-sample a higher witness level, and start from there instead.
This complication makes the analysis of our expected guarantee fairly cumbersome (see \cref{subsec:recourse-and-runtime}).
We remark that the stability properties we mentioned earlier also play a role in achieving the expected worst-case guarantee.

\paragraph{Organization.}
In the remainder of the section we present our modification to \cite{HuaKGW23} with new constructs and analyses which ultimately lead to a proof of \cref{lemma:hierarchy-maintainer}.
In particular, in \cref{subsec:repair-witness} we apply our flow algorithm of \cref{thm:flow} to construct and repair witnesses.
In \cref{subsec:stability} we establish certain stability properties of the witnesses that are key to our analyses. 
In \cref{subsec:maintain-exp-decomp} we present the main algorithm of maintaining expander decomposition while interacting with the previous layers via $\cM_{\prev}$.
In \cref{subsec:recourse-and-runtime} we prove that the described algorithm has the desired expected guarantee.
Finally, in \cref{subsec:everything} we put everything together and arrive at a proof of \cref{lemma:hierarchy-maintainer} (and hence \cref{cor:nested-expander-hierarchy}).

\subsection{Constructing and Repairing Witnesses} \label{subsec:repair-witness}

Let $(G, \Bc_G)$ be the input capacitated graph for which we want to construct an expander hierarchy.
Throughout the section, we let $m \leq n^4$ denote the sum of capacities of edges in $G$.
We start by giving a formal definition of a witness adapted from \cite[Definition 2.1]{HuaKGW23} and generalized straightforwardly to the capacitated setting.

\begin{definition}[$R$-Witness]
  For a capacitated simple graph $(W, \Bc)$ with $V(W) = V(G)$, a vector $\Br \in \N^V$, and an embedding $\Pi_{W \to G}$ from $W$ to $G$, the tuple $(W, \Bc, \Br, \Pi_{W \to G})$ is an \emph{$(R, \phi, \psi)$-out-witness} of $(G, \Bc_G, F)$ for $F \subseteq V \times V$ \emph{with respect to $\Bgamma \in \N^V$} if
  \begin{enumerate}[(1)]
    \item\label{item:witness:num-fake-edges} $\|\Br\|_1 \leq R$,
    \item\label{item:witness:degree} $\deg_{F,\Bc_G}(v) \leq \deg_{W,\Bc}(v) + \Br(v) \leq \frac{1}{\psi}\deg_{F,\Bc_G}(v)$ holds for all $v \in V$,
    \item\label{item:witness:expansion} for every cut $(S, \overline{S})$ with $\Bgamma(S) \leq \Bgamma(\overline{S})$ we have $\Bc(E_W(S, \overline{S})) + \Br(S) \geq \psi(\vol_{W,\Bc_W}(S) + \Br(S))$, and
    \item\label{item:witness:congestion} $\Pi_{W \to G}$ embeds $(W,\Bc)$ into $(G,\Bc_G)$ with congestion $\frac{1}{\phi\psi}$.
  \end{enumerate}
  If $(W, \Bc, \Br, \Pi_{W \to G})$ is an $(R, \phi, \psi)$-out-witness of $(G, \Bc_G, F)$ and $(\rev{W}, \Bc, \Br, \Pi_{\rev{W} \to G})$ is an $(R, \phi, \psi)$-out-witless of $(\rev{G}, \Bc_G, \rev{F})$, both with respect to $\Bgamma$, then $(W, \Bc, \Br, \Pi_{W \to G})$ is an \emph{$(R, \phi, \psi)$-witness} of $(G, \Bc_G, F)$ with respect to $\Bgamma$.
  \label{def:witness}
\end{definition}

Note that the $\Br$ vector corresponds to the concept of fake edges introduced in \cref{subsubsec:overview-HKGW}.\footnote{That is, each vertex $v$ has $\Br(v)$ incident fake edges. The $\Br$ vector is easier to maintain when the graph is updated while suffices for the purpose of \cite{HuaKGW23}. Indeed, when some vertices $S$ are removed from the graph, it is unclear where in the remaining graph one should add fake edges corresponding to those incident to $S$; instead, with the $\Br$ vector one can simply increase $\Br(v)$ for each remove edge incident to $v \in V \setminus S$.}
It is easy to see that if $R = 0$, then $F$ is $\Omega(\phi\psi^2)$-expanding in $G$, and we prove a more general version of this below which says this is even the case for $R$ sufficiently small \cref{claim:expanding-if-has-witness}.
Oftentimes $\Bc$, $\Br$, and $\Pi_{W \to G}$ will be clear from context, in which case we may simply refer to $W$ as the witness of $(G, \Bc_G, F)$.
In the remainder of the paper we will assume both $R$ and $\psi$ are reasonably bounded by some polynomials in $n$.
More specifically, we assume $R \leq n^{10}$ and $\psi \geq 1/n$.
Indeed, the choice of $\psi$ will be made explicit in \eqref{eq:psi}, and $R$ is going to be upper-bounded by the total volume of the graph which by capacity scaling is at most $n^{10}$.

The $\Br$ vector can be seen as some ``fake'' edges of the witness which do not exist in the real graph $G$.
While having a witness with many fake edges does not certify that $F$ is expanding in $G$, it does still shows that there is no \emph{balanced} sparse cut in $G$ with respect to $F$.
A cut $S$ in $(G, \Bc_G)$ is \emph{$B$-balanced} with respect to $F$ if $\min\{\vol_{F,\Bc_G}(S), \vol_{F,\Bc_G}(\overline{S})\} \geq B$.

\begin{claim}
  If there is a $(R, \phi, \psi)$-witness $(W, \Bc, \Br, \Pi_{W \to G})$ of $(G, \Bc_G, F)$ with respect to any $\Bgamma$, then there is no $\frac{2R}{\psi}$-balanced $\frac{\phi\psi^2}{2}$-sparse cut in $(G, \Bc)$ with respect to $F$.
  \label{claim:no-balanced-sparse-cut}
\end{claim}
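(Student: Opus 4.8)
The plan is to prove the contrapositive: suppose there exists a cut $S$ in $(G, \Bc_G)$ that is $\frac{2R}{\psi}$-balanced and $\frac{\phi\psi^2}{2}$-sparse with respect to $F$, and derive a contradiction with the existence of the witness. By the sparseness, at least one of $\Bc_G(E_G(S, \overline{S}))$ or $\Bc_G(E_G(\overline{S}, S))$ is less than $\frac{\phi\psi^2}{2} \cdot \min\{\vol_{F,\Bc_G}(S), \vol_{F,\Bc_G}(\overline{S})\}$; by symmetry (using the out-witness for $G$ versus the out-witness for $\rev{G}$), assume it is the ``out'' direction, so $\Bc_G(E_G(S, \overline{S})) < \frac{\phi\psi^2}{2}\min\{\vol_{F,\Bc_G}(S), \vol_{F,\Bc_G}(\overline{S})\}$. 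The idea is that the embedding $\Pi_{W \to G}$ with congestion $\frac{1}{\phi\psi}$ must route all the witness edges crossing $S$ through the few edges of $G$ crossing $S$, so the witness cannot have too many edges crossing $S$; combined with the expansion property \labelcref{item:witness:expansion} of $W$ and the degree bound \labelcref{item:witness:degree}, this forces $\Br(S)$ (or $\Br(\overline{S})$) to be large, contradicting $\|\Br\|_1 \le R$.

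Concretely, first I would bound $\Bc(E_W(S, \overline{S}))$. Each edge of $W$ crossing from $S$ to $\overline{S}$ is embedded by $\Pi_{W \to G}$ into a path that must cross $E_G(S, \overline{S})$ at least once, so by the congestion bound, $\Bc(E_W(S, \overline{S})) \le \frac{1}{\phi\psi}\Bc_G(E_G(S, \overline{S})) < \frac{\psi}{2}\min\{\vol_{F,\Bc_G}(S), \vol_{F,\Bc_G}(\overline{S})\}$. Next, I would apply the witness expansion property \labelcref{item:witness:expansion} to whichever side has smaller $\Bgamma$-volume — but here a small subtlety arises: the expansion in \labelcref{item:witness:expansion} is stated with respect to $\Bgamma$, not with respect to $\vol_{F,\Bc_G}$, so I need to be careful about which side is ``small.'' The resolution is that \labelcref{item:witness:expansion} gives $\Bc(E_W(S, \overline{S})) + \Br(S) \ge \psi(\vol_{W,\Bc_W}(S') + \Br(S'))$ for the $\Bgamma$-smaller side $S'$; but actually for this claim I only need the direction statement, and I can instead argue directly. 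Let me instead bound using \labelcref{item:witness:degree}: summing $\deg_{F,\Bc_G}(v) \le \deg_{W,\Bc}(v) + \Br(v)$ over the $F$-smaller side, say WLOG $\vol_{F,\Bc_G}(S) \le \vol_{F,\Bc_G}(\overline{S})$, gives $\vol_{F,\Bc_G}(S) \le \vol_{W,\Bc_W}(S) + \Br(S)$.

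Now combine: the expansion property should be used on the side of $S$ that is $\Bgamma$-small; if $\Bgamma(S) \le \Bgamma(\overline{S})$ then directly $\Bc(E_W(S, \overline{S})) + \Br(S) \ge \psi(\vol_{W,\Bc_W}(S) + \Br(S)) \ge \psi \vol_{F,\Bc_G}(S) \ge \psi \cdot \frac{2R}{\psi} = 2R$ (using balance and the degree bound above). Since $\Bc(E_W(S, \overline{S})) < \frac{\psi}{2}\vol_{F,\Bc_G}(S)$ and $\Br(S) \le R$... hmm, I need to be a little more careful to close the gap cleanly — the point is $2R \le \Bc(E_W(S,\overline{S})) + \Br(S)$ and I want to show $\Bc(E_W(S,\overline{S})) + \Br(S) < 2R$, contradiction. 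We have $\Bc(E_W(S, \overline{S})) < \frac{\psi}{2}\vol_{F,\Bc_G}(S)$, and I'd like $\frac{\psi}{2}\vol_{F,\Bc_G}(S) + \Br(S) < 2R$; since $\Br(S) \le R$ this needs $\frac{\psi}{2}\vol_{F,\Bc_G}(S) < R$, which doesn't follow directly — so in fact the right move is to use the expansion bound to get $\vol_{W,\Bc_W}(S) + \Br(S) \le \frac{1}{\psi}(\Bc(E_W(S,\overline{S})) + \Br(S))$ and bootstrap: $\vol_{F,\Bc_G}(S) \le \vol_{W,\Bc_W}(S) + \Br(S) \le \frac{1}{\psi}\Bc(E_W(S,\overline{S})) + \frac{1}{\psi}\Br(S) < \frac{1}{2}\vol_{F,\Bc_G}(S) + \frac{R}{\psi}$, hence $\vol_{F,\Bc_G}(S) < \frac{2R}{\psi}$, contradicting balance. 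The remaining case is when $\Bgamma(S) > \Bgamma(\overline{S})$, i.e., the $F$-small side is $\Bgamma$-large: then I apply the $\rev{G}$-out-witness, for which the crossing edges are $E_G(\overline{S}, S)$ in $G$ (equivalently out-edges of $\overline{S}$ in $\rev{G}$... need to track orientations), and run the identical argument with roles of the sparse direction swapped; this is why the witness is defined as both an out-witness for $G$ and an out-witness for $\rev{G}$. The main obstacle I anticipate is carefully matching up the four sign/direction choices — sparse ``out'' vs ``in'' direction of the cut, and $\Bgamma$-small vs $\Bgamma$-large side — with the two halves (the $G$-out-witness and the $\rev{G}$-out-witness) of the witness definition, and making sure the embedding-congestion argument uses the correct direction of crossing edges in each case; the arithmetic itself is short once the right case is set up.
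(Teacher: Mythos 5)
Your proof is correct and follows essentially the same route as the paper: bound $\Bc(E_W(S,\overline{S}))$ via the embedding congestion, compare $\vol_{F,\Bc_G}$ to $\vol_{W,\Bc}+\Br$ via the degree property, and apply the expansion property on the $\Bgamma$-small side to force $\vol_{F,\Bc_G}(S)<\frac{2R}{\psi}$. The paper's write-up sidesteps the case-matching you worry about by not normalizing the sparse direction at all: it directly shows that for any cut with $\vol_{F,\Bc_G}(S)\geq \frac{2R}{\psi}$, \emph{both} $\Bc(E_W(S,\overline{S}))$ and $\Bc(E_W(\overline{S},S))$ are at least $\frac{\psi}{2}\vol_{F,\Bc_G}(S)$ (using the $W$-out-witness and the $\rev{W}$-out-witness simultaneously, since both expansion inequalities have the same right-hand side for the $\Bgamma$-small side), and hence via the congestion bound both cut directions in $G$ are large. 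For your $\Bgamma(S)>\Bgamma(\overline{S})$ case, the resolution is exactly as you suspected but cleaner than your sketch suggests: the $\rev{G}$-out-witness expansion property applied to $\overline{S}$ involves $\Bc\bigl(E_{\rev{W}}(\overline{S},S)\bigr)=\Bc\bigl(E_W(S,\overline{S})\bigr)$, which is precisely the quantity you already bounded via the embedding, so no new direction-tracking is required; you then get $\vol_{F,\Bc_G}(\overline{S})<\frac12\vol_{F,\Bc_G}(S)+\frac{R}{\psi}$, and $\vol_{F,\Bc_G}(S)\leq\vol_{F,\Bc_G}(\overline{S})$ closes the contradiction identically.
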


\begin{proof}
  Consider any cut $S$ with $\frac{2R}{\psi} \leq \vol_{F,\Bc_G}(S) \leq \vol_{F,\Bc_G}(V \setminus S)$.
  If $\Bgamma(S) \leq \Bgamma(V \setminus S)$, then we have
  \[ \min\{\Bc(E_W(S, V\setminus S)), \Bc(E_W(V \setminus S))\} + \Br(S) \geq \psi(\vol_{W,\Bc}(S) + \Br(S)) \geq \psi\vol_{F,\Bc_G}(S) \geq 2R \]
  by \cref{def:witness}\labelcref{item:witness:degree,item:witness:expansion}.
  This implies $\min\{\Bc(E_W(S,V \setminus S)), \Bc(E_W(V \setminus S, S))\} \geq \frac{\psi}{2}\vol_{F,\Bc_G}(S)$, and by the fact that $(W, \Bc)$ embeds into $(G, \Bc_G)$ via $\Pi_{W \to G}$ with congestion $\frac{1}{\psi\phi}$ by \cref{def:witness}\labelcref{item:witness:congestion}, we have $\min\{\Bc_G(E_G(S,V \setminus S)),\Bc_G(E_G(V \setminus S, S))\} \geq \frac{\phi\psi^2}{2}\vol_{F,\Bc_G}(S)$.
  A symmetric argument applied to the case where $\Bgamma(S) > \Bgamma(V \setminus S)$ shows that $\min\{\Bc_G(E_G(S,V \setminus S)),\Bc_G(E_G(V \setminus S, S))\} \geq \frac{\phi\psi^2}{2}\vol_{F,\Bc_G}(V \setminus S) \geq\frac{\phi\psi^2}{2}\vol_{F,\Bc_G}(S)$ as well.
  Therefore, such an $S$ can not be a $\frac{\phi\psi^2}{2}$-sparse cut.
\end{proof}

\begin{claim}
  If a (not necessarily strongly connected) subgraph $G[U]$ has volume $\vol_{F,\Bc_G}(G[U]) < 1/\phi$, then $F$ is $\phi$-expanding in $(G[U],\Bc_G)$.
  \label{claim:expanding-in-tiny-component}
\end{claim}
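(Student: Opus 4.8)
The plan is to reduce the statement to a per-SCC claim and then kill each internal cut by a crude counting inequality: a tiny total $F$-volume cannot support any $\phi$-sparse directed cut inside a strongly connected piece.

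First I would unwind the definition. Saying that $F$ is $\phi$-expanding in $(G[U],\Bc_G)$ means exactly that $\deg_{F,\Bc_G}$ restricted to $V(G[U])$ is $\phi$-expanding in $(G[U],\Bc_G)$, which in turn means that for \emph{every} strongly connected component $W$ of $G[U]$, the subgraph $(G[W],\Bc_G)$ has no $\phi$-sparse cut with respect to the weights $\deg_{F,\Bc_G}$ restricted to $W$. So it suffices to fix one such $W$ and rule out $\phi$-sparse cuts in $G[W]$. If $|W|=1$ there is nothing to prove, as there is no cut $\emptyset \neq S \subsetneq W$; so I would assume $|W|\geq 2$, which is precisely where strong connectivity of $G[W]$ becomes available.

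Next, I would take an arbitrary cut $\emptyset \neq S \subsetneq W$. Since $G[W]$ is strongly connected, both $E_{G[W]}(S,\overline{S})$ and $E_{G[W]}(\overline{S},S)$ are nonempty, and since all capacities in $\Bc_G$ are positive integers, this gives $\min\{\Bc_G(E_{G[W]}(S,\overline{S})),\, \Bc_G(E_{G[W]}(\overline{S},S))\} \geq 1$. On the other side, using that $S$ and $\overline{S}$ partition $W$, that $\vol_{F,\Bc_G}$ is additive over disjoint sets, and that $W \subseteq U$, I get
\[
  \min\{\vol_{F,\Bc_G}(S),\, \vol_{F,\Bc_G}(\overline{S})\} \;\leq\; \vol_{F,\Bc_G}(S) + \vol_{F,\Bc_G}(\overline{S}) \;=\; \vol_{F,\Bc_G}(W) \;\leq\; \vol_{F,\Bc_G}(G[U]) \;<\; \tfrac{1}{\phi}.
\]
Hence $\phi \cdot \min\{\vol_{F,\Bc_G}(S), \vol_{F,\Bc_G}(\overline{S})\} < 1 \leq \min\{\Bc_G(E_{G[W]}(S,\overline{S})), \Bc_G(E_{G[W]}(\overline{S},S))\}$, so $S$ is not $\phi$-sparse with respect to $F$. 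Since $S$ was arbitrary, $G[W]$ is a $\phi$-expander with respect to $F$, and ranging over all strongly connected components $W$ of $G[U]$ completes the proof.

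I do not anticipate any real obstacle: the only place calling for care is getting the definitional unwinding right, namely recognizing that "$F$ is $\phi$-expanding in $G[U]$" is a statement about each SCC separately — it is exactly this that lets me assume strong connectivity of $G[W]$ and thereby lower-bound both directed cut sizes by $1$. Everything past that is the single displayed inequality above.
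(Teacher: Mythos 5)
Your proof is correct and follows essentially the same route as the paper's: reduce to a strongly connected component via the definition of $\phi$-expanding, use strong connectivity (and integrality of capacities) to lower-bound both directed cut sizes by $1$, and use the tiny total $F$-volume to upper-bound the smaller side's volume by $1/\phi$. The paper states the same argument more tersely; your version just spells out the per-SCC reduction and the trivial $|W|=1$ case explicitly.
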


\begin{proof}
  It suffices to consider the case when $G[U]$ is strongly connected by the definition of $\phi$-expanding.
  Since $\Bc_G(E_{G[U]}(S,U\setminus S)) \geq 1$ for all $S \subseteq U$ and $\vol_{F,\Bc_G}(S), \vol_{F,\Bc_G}(\overline{S}) < 1/\phi$, we have $F$ is $\phi$-expanding in $(G[U],\Bc_G)$.
\end{proof}

\begin{claim}
  If there is a $(R, \phi, \psi)$-witness for $(G, \Bc_G, F)$ with respect to any $\Bgamma$ where $R < 1/\phi$, then $F$ is $\frac{\phi\psi^2}{2}$-expanding in $(G,\Bc_G)$.
  \label{claim:expanding-if-has-witness}
\end{claim}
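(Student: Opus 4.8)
The plan is to reduce \cref{claim:expanding-if-has-witness} to the two claims immediately preceding it, \cref{claim:no-balanced-sparse-cut} and \cref{claim:expanding-in-tiny-component}, by a straightforward case analysis on the balance of a hypothetical sparse cut. Suppose for contradiction that $F$ is not $\frac{\phi\psi^2}{2}$-expanding in $(G,\Bc_G)$; then by definition there is a strongly connected component of $G$ containing a $\frac{\phi\psi^2}{2}$-sparse cut $S$ with respect to $F$. I want to derive a contradiction with the existence of the $(R,\phi,\psi)$-witness.

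First I would split on how balanced the sparse cut $S$ is, using the threshold $\frac{2R}{\psi}$ that appears in \cref{claim:no-balanced-sparse-cut}. If $\min\{\vol_{F,\Bc_G}(S),\vol_{F,\Bc_G}(\overline{S})\} \geq \frac{2R}{\psi}$, i.e., $S$ is $\frac{2R}{\psi}$-balanced, then \cref{claim:no-balanced-sparse-cut} (applied with the given witness, for any $\Bgamma$) says there is no $\frac{2R}{\psi}$-balanced $\frac{\phi\psi^2}{2}$-sparse cut, a direct contradiction. Otherwise, the smaller side has $F$-volume strictly less than $\frac{2R}{\psi}$; since we are assuming $R < 1/\phi$ and $\psi \le 1$, this gives $\min\{\vol_{F,\Bc_G}(S),\vol_{F,\Bc_G}(\overline{S})\} < \frac{2R}{\psi} \le \frac{2}{\phi\psi}$, and I would like to conclude via \cref{claim:expanding-in-tiny-component} that a sparse cut cannot exist inside such a low-volume piece. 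Here I need to be a little careful: \cref{claim:expanding-in-tiny-component} bounds the total volume of a subgraph, not just one side of a cut, so the cleanest route is to observe that if $S$ is $\frac{\phi\psi^2}{2}$-sparse then $\Bc_G(E_G(S,\overline{S})) < \frac{\phi\psi^2}{2}\vol_{F,\Bc_G}(S)$ (wlog $S$ is the smaller side), but since $G[S]$-type arguments only see the low-volume side, and $\vol_{F,\Bc_G}(S) < \frac{2R}{\psi} \le \frac 2\phi < \frac{2}{\phi}$ forces the cut edge count to be less than $\psi^2 < 1$, hence zero, contradicting that the cut lies within a strongly connected component (so $\Bc_G(E_G(S,\overline{S})) \ge 1$).

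Concretely, the main obstacle is pinning down the exact constant bookkeeping in the unbalanced case so that the arithmetic $\frac{\phi\psi^2}{2}\cdot\vol_{F,\Bc_G}(S) < 1$ actually goes through with the stated hypothesis $R < 1/\phi$; one needs $\vol_{F,\Bc_G}(S) < \frac{2R}{\psi} < \frac{2}{\phi\psi}$ and then $\frac{\phi\psi^2}{2}\cdot\frac{2}{\phi\psi} = \psi \le 1$, which is only $<1$ if $\psi<1$, so I would either invoke $\psi < 1$ (which holds since $\psi \ge 1/n$ and $\psi = O(1/\log^2 n)$ per the later choice of $\psi$) or, more robustly, note that the witness also forces $\min\{\vol_{F,\Bc_G}(S),\vol_{F,\Bc_G}(\overline{S})\} \ge 1$ combinatorially and tighten the inequality. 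An alternative, cleaner framing that avoids the edge case entirely: apply \cref{claim:no-balanced-sparse-cut} to get that every $\frac{\phi\psi^2}{2}$-sparse cut has small side of $F$-volume $< \frac{2R}{\psi} < \frac 2\phi$, and then directly rerun the one-line argument of \cref{claim:expanding-in-tiny-component} on that small side: a strongly connected component guarantees at least one unit of capacity crossing, but $\frac{\phi\psi^2}{2}\vol_{F,\Bc_G}(S) < \phi \cdot \frac 2\phi \cdot \frac{\psi^2}{2}= \psi^2 < 1$, contradiction. I expect the proof to be only a few lines once the case split and the choice of which auxiliary claim to invoke in each case are fixed; the only real care needed is the constant chase in the "small side" case.

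\begin{proof}
  Suppose for contradiction that $F$ is not $\frac{\phi\psi^2}{2}$-expanding in $(G, \Bc_G)$, i.e., there is a strongly connected component of $G$ that contains a $\frac{\phi\psi^2}{2}$-sparse cut $S$ with respect to $F$; without loss of generality assume $\vol_{F,\Bc_G}(S) \leq \vol_{F,\Bc_G}(\overline{S})$, so that $\Bc_G(E_G(S, \overline{S})) < \frac{\phi\psi^2}{2}\vol_{F,\Bc_G}(S)$ or $\Bc_G(E_G(\overline{S}, S)) < \frac{\phi\psi^2}{2}\vol_{F,\Bc_G}(S)$.
  By \cref{claim:no-balanced-sparse-cut}, the existence of the $(R, \phi, \psi)$-witness implies that there is no $\frac{2R}{\psi}$-balanced $\frac{\phi\psi^2}{2}$-sparse cut with respect to $F$, and hence $\vol_{F,\Bc_G}(S) < \frac{2R}{\psi}$.
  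Using $R < 1/\phi$, we get $\vol_{F,\Bc_G}(S) < \frac{2}{\phi\psi}$, and therefore
  \[
    \min\{\Bc_G(E_G(S, \overline{S})), \Bc_G(E_G(\overline{S}, S))\} < \frac{\phi\psi^2}{2}\vol_{F,\Bc_G}(S) < \frac{\phi\psi^2}{2} \cdot \frac{2}{\phi\psi} = \psi \leq 1,
  \]
  where we used $\psi \leq 1/n \cdot n = 1$ (in fact $\psi < 1$, see \eqref{eq:psi}).
  On the other hand, since $S$ lies inside a strongly connected component of $G$, both $E_G(S, \overline{S})$ and $E_G(\overline{S}, S)$ are nonempty, so $\min\{\Bc_G(E_G(S, \overline{S})), \Bc_G(E_G(\overline{S}, S))\} \geq 1$, a contradiction.
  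Hence $F$ is $\frac{\phi\psi^2}{2}$-expanding in $(G, \Bc_G)$.
\end{proof}
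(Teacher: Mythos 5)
Your proof is correct and uses the same two ingredients the paper does---Claim~\ref{claim:no-balanced-sparse-cut} to rule out balanced sparse cuts and the strong-connectivity/low-volume argument of Claim~\ref{claim:expanding-in-tiny-component} for small cuts---but you organize the case analysis more directly. The paper first establishes that at most one SCC can have $F$-volume $\geq \frac{2R}{\psi}$ (otherwise the zero-edge DAG cut separating two such SCCs would be a balanced sparse cut), then handles small SCCs and the one large SCC separately; you instead start from an arbitrary hypothetical sparse cut $(S,\overline S)$ inside some SCC and immediately reduce to the unbalanced case, which makes the constant chase cleaner. One point worth flagging: Claim~\ref{claim:no-balanced-sparse-cut} is stated (and its proof argues) about cuts $(S,V\setminus S)$ of the \emph{whole} graph $G$, whereas your $S$ sits inside an SCC $U$ and the sparsity is with respect to $G[U]$. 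Concluding $\vol_{F,\Bc_G}(S)<\frac{2R}{\psi}$ therefore needs a small lifting step: take $T=S\cup A$ where $A$ is the union of SCCs downstream of $U$, so that $E_G(T,V\setminus T)=E_{G[U]}(S,U\setminus S)$ exactly and both sides of $T$ inherit the $F$-volume lower bounds, and symmetrically $T'=(U\setminus S)\cup A$ for the other direction. The paper glosses over the same point (``the same cut would have been $\ldots$ sparse in $(G,\Bc_G)$ as well if it existed''), so this is a shared elision rather than an error specific to your write-up, but it is the only spot where the argument as written is not self-contained.
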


\begin{proof}
  The existence of a $(R, \phi, \psi)$-witness by \cref{claim:no-balanced-sparse-cut} implies that there is no $\frac{2R}{\psi}$-balanced $\frac{\phi\psi^2}{2}$-sparse cut in $(G,\Bc_G)$ with respect to $F$.
  This suggests that $G$ can only have one strongly connected component with volumes at least $\frac{2R}{\psi} \leq \frac{2}{\phi\psi}$, otherwise there is a cut with no edges that separate two such components which would lead to a contradiction.
  By \cref{claim:expanding-in-tiny-component}, $F$ is $\frac{\phi\psi}{2}$-expanding in those components with small volume.
  On the other hand, consider the only component $U$ in $G$ that has volume at least $\frac{2}{\phi\psi}$.
  Note that \cref{claim:no-balanced-sparse-cut} also suggests that there is no $\frac{2R}{\psi}$-balanced $\frac{\phi\psi^2}{2}$-sparse cut in $(G[U],\Bc_G)$ with respect to $F$.
  Indeed, the same cut would have been $\frac{2R}{\psi}$-balanced $\frac{\phi\psi^2}{2}$-sparse in $(G,\Bc_G)$ as well if it existed.
  However, if a cut in $(G[U],\Bc_G)$ has volume less than $\frac{2}{\phi\psi}$, then it can never be $\frac{\phi\psi^2}{2}$-sparse since $G[U]$ is strongly connected.
  This shows that $F$ is $\frac{\phi\psi^2}{2}$-expanding in $(G[U],\Bc_G)$ as well, hence in $(G,\Bc_G)$.
\end{proof}

\paragraph{Algorithms for Constructing $R$-Witnesses.}
We will have two primitives which tries to construct $R$-witnesses for some set of terminal edges $F$, both which are based on our sparse cut algorithm from \cref{sec:sparse-cut}.
\begin{itemize}
\item \textsc{CutOrEmbed} which either finds a $\widetilde{\Omega}(R)$-balanced sparse cut or an $R$-witness. This is done by running a standard cut-matching game.
\item \textsc{PruneOrRepair} which takes an $R$-witness as input, and either finds a $\widetilde{\Omega}(R^\prime)$-balanced sparse cut or an $R'$-witness for some $R'\le R$. 
This is done by attempting to embed the fake edges in the $R$-witness further into the graph.
\end{itemize}
Let $c_{\ref{thm:flow}} \in \N$ be a universal constant such that the cut \cref{thm:flow} returns satisfies
\[ \Bc(E_G(S,\overline{S})) \leq \frac{c_{\ref{thm:flow}} \cdot |\Bf| + \vol_{F,\Bc}(S)}{\kappa}. \]
Let $z \defeq 20\log n$ be fixed throughout the rest of the section.

\begin{lemma}[{Analogous to \cite[Lemma 3.1]{HuaKGW23}}]
  Given an $n$-vertex strongly connected simple capacitated graph $(G, \Bc_G)$, terminal edge set $F \subseteq E$, vectors $\Br, \Bgamma \in \Z_{\geq 0}^{V}$, an $(R, \phi, \psi)$-witness $(W, \Bc, \Br, \Pi_{W \to G})$ of $(G, \Bc_G, F)$ with respect to $\Bgamma$, a parameter $R^\prime \geq 0$ such that $R^\prime \leq R \leq \frac{\psi}{8z} \vol_{F, \Bc_G}(V)$, and a $\phi^\prime$-expander hierarchy $\cH$ of $(G \setminus F, \Bc_G)$ with height $O(\log n)$, there is an algorithm \alg{PruneOrRepair}{$G, F, W, \Bc, \Br,$ $\Pi_{W \to G}, \phi, \psi, R^\prime, \cH$} that either outputs
  \begin{enumerate}
    \item\label{item:repair:sparse-cut} a set $S \subseteq V$ with $\frac{\psi}{16z} \cdot R^\prime \leq \vol_{F,\Bc_G}(S) + \Br(S) \leq \frac{8z}{\psi} \cdot R$ such that $\Bc_G(E_G(S, \overline{S})) < \frac{\phi\psi^3}{256} \cdot (\vol_{F,\Bc_G}(S) + \Br(S))$ or
    \item\label{item:repair:repair} an $(R^\prime, \phi, \psi^\prime)$-out-witness $(W^\prime, \Bc^\prime, \Br^\prime, \Pi_{W^\prime \to G})$ of $(G, \Bc_G, F)$ with respect to $\Bgamma$, where $\psi^\prime \defeq \frac{\psi^4}{2048 c_{\ref{thm:flow}} z^2} = \Omega\left(\frac{\psi^4}{\log^2 n}\right)$.
  \end{enumerate}
  The algorithm runs in time $\widetilde{O}\left(\frac{n^2}{\phi{\phi^\prime}^2\psi^4}\right)$.\footnote{Note that as in \cite{HuaKGW23}, the algorithm does not need to take $\Bgamma$ as an input.}

  \label{lemma:prune-or-repair}
\end{lemma}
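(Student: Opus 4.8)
\textbf{Proof proposal for \cref{lemma:prune-or-repair}.}

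The plan is to follow the structure of \cite[Lemma 3.1]{HuaKGW23}, replacing their blocking-flow subroutine with our weighted push-relabel--based sparse-cut algorithm \alg{SparseCut}{} from \cref{thm:flow}. The key point that makes the substitution possible is that we are \emph{given} a $\phi'$-expander hierarchy $\cH$ of $(G \setminus F, \Bc_G)$, which is precisely the object \cref{thm:flow} needs as input in order to solve the relevant flow problem in $\widetilde{O}(n^2 \cdot \mathrm{poly}(\eta, 1/\phi, 1/\phi'))$ time; this is where the $n^2 / (\phi {\phi'}^2 \psi^4)$ running-time bound will come from (with the $\kappa = \Theta(1/(\phi\psi^2))$-type congestion bound inserted, and $\eta = O(\log n)$ hidden in $\widetilde{O}(\cdot)$).

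First I would set up the flow instance. The input witness $W$ has $\|\Br\|_1 \le R$ fake edges; the goal of a repair to level $R'$ is to embed enough of these fake edges into $G$ so that only $R'$ of them remain. Concretely, I would create a diffusion instance on $G$ where the supply/demand is placed on the endpoints of (a scaled copy of) the fake edges $\Br$, with total supply $\approx \|\Br\|_1 - R'$, and a sink capacity derived from the slack in \cref{def:witness}\labelcref{item:witness:degree} (i.e.\ how much more volume each vertex can absorb before $\deg_{W,\Bc}(v) + \Br(v)$ exceeds $\frac1\psi \deg_{F,\Bc_G}(v)$). Running \alg{SparseCut}{$\cdot, \kappa, F, \cH$} with $\kappa = \Theta(z/(\phi\psi^2))$ (matching \cite{HuaKGW23} up to the $\psi$-powers), one of two things happens:
\begin{itemize}
\item \alg{SparseCut}{} routes the whole demand. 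Then the routing paths give an embedding of the successfully-removed fake edges into $G$ with congestion $\widetilde O(1/(\phi\psi^2))$; adding these as real edges to $W$ and decreasing $\Br$ accordingly yields a graph $W'$ with $\|\Br'\|_1 \le R'$. Combining the old embedding $\Pi_{W \to G}$ (congestion $\frac{1}{\phi\psi}$) with the new one gives total congestion $\frac{1}{\phi\psi'}$ for the stated $\psi' = \Theta(\psi^4/\log^2 n)$; the expansion property \labelcref{item:witness:expansion} is inherited because we only added edges and the $\Bgamma$-balance condition is unchanged; and \labelcref{item:witness:degree} holds by the sink-capacity constraint we imposed. This is output \labelcref{item:repair:repair}. (Since \cref{def:witness} asks for a \emph{witness}, i.e.\ both an out-witness and an in-witness, I would run the whole procedure also on $\rev G$ with $\rev F$; the lemma only promises an \emph{out}-witness, so it suffices to do the forward direction, matching the statement.)
\item \alg{SparseCut}{} fails to route, returning a flow $\Bf$ with small value and a cut $S$ with $\abs_{\Bf}(S) = \Bsink(S)$, $\ex_{\Bf}(S) = \ex_{\Bf}(V)$, and $\Bc_G(E_G(S,\overline S)) \le (c_{\ref{thm:flow}}|\Bf| + \min\{\vol_{F,\Bc_G}(S), \vol_{F,\Bc_G}(\overline S)\})/\kappa$. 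The standard argument (as in \cite{HuaKGW23}) shows $S$ is then a balanced sparse cut: the volume-plus-fake-edge lower bound $\frac{\psi}{16z} R' \le \vol_{F,\Bc_G}(S) + \Br(S)$ comes from the fact that the unrouted supply of size $\gtrsim \|\Br\|_1 - R' - |\Bf| \ge \Omega(R')$ all sits inside $S$ (by $\ex_{\Bf}(S) = \ex_{\Bf}(V)$) and each such unit of excess is charged to an incident fake edge or $F$-edge of a vertex in $S$; the upper bound $\vol_{F,\Bc_G}(S) + \Br(S) \le \frac{8z}{\psi} R$ comes from the witness degree constraint \labelcref{item:witness:degree} bounding total volume that can be "trapped"; and the sparsity $\Bc_G(E_G(S,\overline S)) < \frac{\phi\psi^3}{256}(\vol_{F,\Bc_G}(S) + \Br(S))$ follows by plugging $\kappa = \Theta(z/(\phi\psi^2))$ and $|\Bf| \le O(\vol_{F,\Bc_G}(S) + \Br(S)) \cdot \psi / z$ into the \cref{thm:flow} cut guarantee. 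This is output \labelcref{item:repair:sparse-cut}.
\end{itemize}

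The main obstacle I expect is the bookkeeping around \emph{balance} and the precise constants: one must ensure simultaneously (a) enough supply is placed so that a genuine sparse cut, if found, carries $\Omega(R')$ worth of volume (lower balance bound), (b) not so much that a routed instance over-saturates a vertex and violates \labelcref{item:witness:degree}, and (c) the congestion parameter $\kappa$ fed to \alg{SparseCut}{} is large enough that the cut returned is $\frac{\phi\psi^3}{256}$-sparse but small enough that the $\widetilde{O}(n^2 \kappa \eta^4/(\phi\phi'^2))$ runtime collapses to $\widetilde O(n^2/(\phi\phi'^2\psi^4))$. All three are matched in \cite{HuaKGW23} by a careful but routine choice of scaling; the only genuinely new ingredient is verifying that the flow $\Bf$ and cut $S$ delivered by \cref{thm:flow} plug into their charging argument, which they do because the cut guarantee \labelcref{eq:flow-guarantee} has exactly the shape $\Bc_G(E_G(S,\overline S)) \le (O(|\Bf|) + \vol_{F,\Bc_G}(S))/\kappa$ that their blocking-flow subroutine also produced. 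A secondary subtlety is that \cref{thm:flow} requires the ambient graph to be strongly connected and requires a hierarchy of $G \setminus F$ of height $O(\log n)$ — both are given by hypothesis — and that $1/\phi, \kappa \le n$, which holds for our parameter regime $\psi \ge 1/n$, $\phi$ not too small.
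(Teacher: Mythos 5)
Your high-level plan is the same as the paper's: set up a diffusion instance driven by the fake-edge vector $\Br$, run \alg{SparseCut}{} from \cref{thm:flow} (which is why the hypothesis supplies an expander hierarchy of $G\setminus F$), and branch on whether it routes or returns a cut. Two concrete things in your sketch would not go through as stated, however.

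First, a single invocation of \alg{SparseCut}{} with total supply $\approx\|\Br\|_1-R'$ cannot finish the job: \cref{thm:flow} only guarantees that the returned flow either routes \emph{all} the supply or fails, and when bootstrapped to a ``route at least half or find a cut'' primitive you are left with $\Omega(\|\Br\|_1-R')$ unrouted supply, so $\|\Br'\|_1$ does not drop to $R'$. The paper instead over-supplies by setting $\Bsource=z\cdot\frac{8}{\psi}\Br$ with $\Bsink=\deg_{F,\Bc_G}+\Br$ and iterates \alg{SparseCut}{} up to $z=O(\log n)$ times on the shrinking residual source, stopping once the residual source is $\le R'$; the over-supply factor $\frac{8z}{\psi}$ is also what makes the cut-case lower bound $\Br(S)\ge\frac{\psi}{8z}\ex_{\Bf}(S)\ge\frac{\psi}{16z}R'$ come out. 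Without the over-supply and the $z$-round loop you get neither the repair guarantee nor the stated cut balance.

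Second, the claim that \cref{def:witness}\labelcref{item:witness:expansion} ``is inherited because we only added edges'' is not correct. Adding flow-path edges \emph{and} decreasing $\Br'$ can in principle hurt the expansion bound for cuts $S$ that were previously held up mostly by $\Br(S)$, and the expansion parameter genuinely degrades from $\psi$ to $\psi'=\Theta(\psi^4/\log^2 n)$ not only because of the extra embedding congestion but also because of this. The paper handles it by a three-way case split on $\Bc(E_W(S,\overline{S}))$ vs.\ $\Br(S)$ and on whether $\Br'(S)>\frac12\Br(S)$; in the hard case it uses the fact that at most $z\cdot\Bsink(S)$ flow can be absorbed inside $S$, so the over-supplied flow leaving $S$ must create $\Omega(z\Br(S)/\psi)$ units of new $W'$-edges crossing $(S,\overline S)$. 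Your $\kappa=\Theta(z/(\phi\psi^2))$ is also short by $\psi^2$ of what the sparsity computation needs (the paper takes $\kappa=\Theta(z/(\phi\psi^4))$, which is what both the $\frac{\phi\psi^3}{256}$ sparsity and the $\widetilde O(n^2/(\phi{\phi'}^2\psi^4))$ running time come from), though you flagged the constants as bookkeeping. The rest of your sketch — using the supplied $\cH$ so \alg{SparseCut}{} runs in the stated time, reading off the upper balance bound from $\abs_{\Bf}(S)=\Bsink(S)$, and observing that only an out-witness is needed here — matches the paper.
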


\begin{proof}
  We follow essentially the same proof strategy as \cite[Lemma 3.1]{HuaKGW23} but with parameters tailored to our needs.
  We set up a diffusion instance $\cI = (G, \Bc_G, \Bsource, \Bsink)$ with $\Bsource \defeq z \cdot \frac{8}{\psi} \cdot \Br$ and $\Bsink \defeq \deg_{F, \Bc_G} + \Br$.
  Note that $\cI$ is a diffusion instance as $R \leq \frac{\psi}{8z}\vol_{F,\Bc}(V)$.
  Also note that $z \geq \log \|\Bsource\|_1$.
  We are going to compute a flow $\Bf^{*}$ routing $(\Bsource, \Bsink)$ in $G$ with congestion $\kappa z$ for $\kappa \defeq \frac{1024 \cdot c_{\ref{thm:flow}} \cdot z}{\phi\psi^4}$ by invoking \cref{thm:flow} $z$ times.
  Let $\Bf^{*} \defeq \Bzero$.
  While $\|\Bsource\|_1 > R^\prime$, we run \alg{SparseCut}{$\cI, \kappa, F, \cH$} in \cref{thm:flow} to find a flow $\Bf$.
  If $\Bf$ routes half of the demand, i.e., $|\Bf| \geq \frac{1}{2}\|\Bsource\|_1$, then we update $\Bf^{*} \gets \Bf^{*} + \Bf$ and set $\Bsource \gets \Bsource_{\Bf}$ to be the residual supply while keeping the sink intact and then repeat.
  Otherwise, we have $\ex_{\Bf}(V) > \frac{1}{2}\|\Bsource\|_1 > \frac{1}{2}R^\prime$ and $|\Bf| < \frac{1}{2}\|\Bsource\|_1$.
  In this case we will find a sparse cut and output it in Case \ref{item:repair:sparse-cut}.
  Let $S$ be the cut outputted by \cref{thm:flow} on this invocation.
  We have $\vol_{F,\Bc_G}(S) + \Br(S) \geq \Br(S) \geq \frac{\psi}{8z} \ex_{\Bf}(S) \geq \frac{\psi}{16z} R^\prime$ and $\vol_{F,\Bc_G}(S) + \Br(S) = \Bsink(S) = \abs_{\Bf}(S) \leq \frac{8z}{\psi}R$.
  Also, \cref{thm:flow} asserts that
  \begin{align*}
    \Bc_G(E_G(S, \overline{S})) &\leq \frac{c_{\ref{thm:flow}} \cdot |\Bf| + \vol_{F,\Bc_G}(S)}{\kappa} < \frac{c_{\ref{thm:flow}} \cdot \frac{4z}{\psi}\Br(S) + \vol_{F,\Bc_G}(S)}{\kappa} \leq \frac{\phi\psi^3}{256} \cdot (\vol_{F,\Bc_G}(S) + \Br(S))
  \end{align*}
  where we used that $\vol_{F,\Bc_G}(S) + \Br(S) \geq \frac{\psi}{8z}\Bsource(S)$ and $|\Bf| < \frac{1}{2}\|\Delta\|_1 \leq \frac{4z}{\psi}\Br(S)$.
  This shows that $S$ indeed satisfies the output requirement of \cref{lemma:prune-or-repair}.

  If none of the calls to \cref{thm:flow} routes less than half of the demand, we end up with a flow $\Bf^{*}$ with $\ex_{\Bf^{*}}(V) \leq R^\prime$ and congestion $\kappa z$.
  In this case we can construct a new witness $(W^\prime, \Bc^\prime, \Br^\prime, \Pi_{W^\prime \to G})$ in Case \labelcref{item:repair:repair} as follows.
  Initialize $W^\prime$ as $W$, (and $\Pi_{W^\prime \to G}$ as $\Pi_{W \to G}$ consequently), $\Bc^\prime$ as $\Bc$, and $\Br^\prime$ as $\frac{8}{\psi} \Br$.
  By a standard flow decomposition argument, we can in $\widetilde{O}(n^2)$ time decompose $\Bf^{*}$ into at most $n^2$ flow paths $P_i$ that sends $c_i$ units of flow from $(u_i, v_i)$.
  For each such flow path $P_i$, we add an edge $(u_i, v_i)$ with capacity $\Bc^\prime(u_i, v_i) = c_i$ to $W^\prime$, merging parallel edges if exists.
  We then decrease $\Br^\prime(u_i)$ by $c_i$.
  We argue that $(W^\prime, \Bc^\prime, \Br^\prime, \Pi_{W^\prime \to G})$ forms an $(R^\prime, \phi, \psi^\prime)$-out-witness of $(G, \Bc_G, F)$ with respect to $\Bgamma$.

  \paragraph{Properties \labelcref{item:witness:num-fake-edges,item:witness:congestion}.}
  The fact that $\|\Br^\prime\|_1 \leq R^\prime$ is by definition.
  The new embedding $\Pi_{W^\prime \to G}$ has congestion at most $\frac{1}{\phi\psi} + z \cdot \frac{1024 \cdot c_{\ref{thm:flow}} \cdot z}{\phi\psi^4} \leq \frac{1}{\phi\psi^\prime}$.

  \paragraph{Property \labelcref{item:witness:degree}.}
  Observe that $\deg_{W^\prime,\Bc^\prime}(v) + \Br^\prime(v)$ is initialized to $\deg_{W,\Bc}(v) + \frac{8z}{\psi} \cdot \Br(v) \in \Big[\deg_{F,\Bc_G}(v),$ $ \frac{8z}{\psi^2} \cdot \deg_{F,\Bc_G}(v)\Big]$.
  Also note that each $v$ absorbs at most $z \cdot \Bsink(v) = z(\deg_{F,\Bc_G}(v) + \Br(v))$ units of demand.
  For each flow path $P_i$ from $u_i$ to $v_i$ with $c_i$ units of flow, $\deg_{W^\prime, \Bc^\prime}(u_i) + \Br^\prime(u_i)$ stays the same while $\deg_{W^\prime,\Bc^\prime}(v) + \Br^\prime(v)$ increases by $c_i$.
  As a result, we have
  \begin{equation}
    \deg_{W^\prime,\Bc^\prime}(v) + \Br^\prime(v) \leq \frac{8z}{\psi^2} \cdot \deg_{F,\Bc_G}(v) + z (\deg_{F,\Bc_G}(v) + \Br(v)) \stackrel{(*)}{\leq} \frac{10z}{\psi^2} \cdot \deg_{F,\Bc_G}(v) \leq \frac{1}{\psi^\prime} \deg_{F,\Bc_G}(v).
    \label{eq:degree}
  \end{equation}
  Note that as in \cite{HuaKGW23} we will use the stronger bound of $(*)$ later.

  \paragraph{Property \labelcref{item:witness:expansion}.}
  Consider a cut $S$ where $\Bgamma(S) \leq \Bgamma(\overline{S})$ for which we have $\Bc(E_{W}(S, \overline{S})) + \Br(S) \geq \psi(\vol_{W,\Bc}(S) + \Br(S))$.
  \begin{itemize}
    \item If $\Bc(E_W(S, \overline{S})) \geq \Br(S)$: We have $\vol_{W^\prime,\Bc^\prime}(S) + \Br^\prime(S) \leq \frac{10z}{\psi^2}\vol_{F,\Bc_G}(S) \leq \frac{10z}{\psi^2}(\vol_{W,\Bc}(S) + \Br(S))$ by \eqref{eq:degree}.
      This implies
      \begin{align*}
        \Bc^\prime(E_{W^\prime}(S, \overline{S})) + \Br^\prime(S)
        &\geq \Bc(E_W(S, \overline{S})) \geq \frac{1}{2}(\Bc(E_W(S, \overline{S})) + \Br(S)) \\
        &\geq \frac{\psi}{2}(\vol_{W,\Bc}(S) + \Br(S)) \geq \frac{\psi^3}{20z}(\vol_{W^\prime,\Bc^\prime}(S) + \Br^\prime(S)).
      \end{align*}
    \item If $\Bc(E_W(S, \overline{S})) < \Br(S)$ and $\Br^\prime(S) > \frac{1}{2}\Br(S)$:
      We have
      \begin{align*}
        \Bc^\prime(E_{W^\prime}(S, \overline{S})) + \Br^\prime(S)
        &\geq \frac{1}{2}(\Bc(E_W(S, \overline{S})) + \Br(S)) \geq \frac{\psi}{2}(\vol_{W,\Bc}(S) + \Br(S)) \\
        &\geq \frac{\psi^3}{20z}(\vol_{W^\prime,\Bc^\prime}(S) + \Br^\prime(S)).
      \end{align*}
    \item If $\Bc(E_W(S, \overline{S})) < \Br(S)$ and $\Br^\prime(S) \leq \frac{1}{2}\Br(S)$:
    The flow paths with tails in $S$ send precisely $\frac{8z}{\psi}\Br(S) - \Br^\prime(S) \geq \frac{4z}{\psi}\Br(S)$ units of flow, within which at most $z \cdot \Bsink(S) = z(\vol_{F,\Bc_G}(S) + \Br(S))$ units are absorbed in $S$
    Since each path $P_i$ with $c_i$ units of flow from $S$ to $\overline{S}$ turn into an edge of capacity $c_i$ in $E_{W^\prime}(S, \overline{S})$, we have $\Bc^\prime(E_{W^\prime}(S, \overline{S})) \geq \frac{4z}{\psi}\Br(S) - z(\vol_{F,\Bc_G}(S) + \Br(S))$.
    We can bound $\vol_{F,\Bc_G}(S) \leq \vol_{W,\Bc}(S) + \Br(S) \leq \frac{1}{\psi}(\Bc(E_W(S, V \setminus S)) + \Br(S)) \leq \frac{2}{\psi}\Br(S)$, which then gives $\Bc^\prime(E_{W^\prime}(S, V \setminus S)) \geq \frac{2z}{\psi}\Br(S) \geq z \cdot \vol_{F,\Bc_G}(S) \geq \frac{\psi^2}{10z^2}(\vol_{W^\prime,\Bc^\prime}(S) + \Br^\prime(S))$.
  \end{itemize}

  As $\psi^\prime \leq \min\left\{\frac{\psi^3}{20z}, \frac{\psi^2}{10z^2}\right\}$, Property \labelcref{item:witness:expansion} is preserved.
  Since the running time of the algorithm is dominated by $O(\log n)$ calls to \cref{thm:flow} which runs in $\widetilde{O}\left(\frac{n^2}{\phi{\phi^\prime}^2\psi^4}\right)$, the proof is completed.
\end{proof}

By running the cut-matching game of \cref{thm:directed-cut-matching-game} with our sparse-cut algorithm of \cref{thm:flow}, we can also obtain the following lemma which constructs an initial witness.
As the proof is fairly standard, we defer it to \cref{appendix:omitted-proofs}.

\begin{restatable}{lemma}{TopLevelWitness}
  Given an $n$-vertex strongly connected graph $G = (V,E)$, terminal edge set $F \subseteq E$, parameters $\phi, R$, and a $\phi^\prime$-expander hierarchy $\cH$ of $G \setminus F$ with height $O(\log n)$, there is an algorithm \emph{\textsc{CutOrEmbed($G, \Bc_G, F, \phi, R^\prime$)}} that either output
  \begin{enumerate}
    \item\label{item:case:sparse-cut} a set $S \subseteq V$ such that $\min\{\Bc_G(E_G(S, \overline{S})), \Bc_G(E_G(\overline{S}, S))\} < \phi \cdot \vol_{F,\Bc_G}(S)$ and $\frac{1}{4t_{\CMG}}R \leq \vol_{F,\Bc_G}(S) \leq \frac{1}{2}\vol_{F,\Bc_G}(V)$ or
    \item a $\Bgamma \in \N^V$ and an $(R, \phi, \widetilde{\psi})$-witness $(W, \Bc, \Br, \Pi_{W \to G})$ of $(G, \Bc_G, F)$ where $\widetilde{\psi} = \Omega\left(\frac{1}{\log^3 n}\right)$ with respect to $\Bgamma$.
  \end{enumerate}
  The algorithm runs in time $\widetilde{O}\left(\frac{n^2}{\phi{\phi^\prime}^2}\right)$
  \label{lemma:cut-matching}
\end{restatable}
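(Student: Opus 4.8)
\textbf{Proof proposal for \cref{lemma:cut-matching} (CutOrEmbed).}

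The plan is to follow the standard cut-matching-game paradigm (\cref{thm:directed-cut-matching-game}), using our sparse-cut subroutine \cref{thm:flow} as the matching player. First I would set up the cut-matching game on the degree vector $\Bnu \defeq \deg_{F,\Bc_G}$ restricted to its support, with target number of rounds $t_{\CMG} = O(\log^2(nU))$ and expansion guarantee $\psi_{\CMG} = \Omega(1/\log^2 n)$. In round $i$, the cut player hands us a pair $(\Bnu_A^{(i)}, \Bnu_B^{(i)})$ with $\Bnu_A^{(i)} + \Bnu_B^{(i)} \le \Bnu$ and $\|\Bnu_A^{(i)}\|_1 \le \|\Bnu_B^{(i)}\|_1$; I would then form the diffusion instance $\cI_i = (G, \Bc_G, \Bsource_i, \Bsink_i)$ with $\Bsource_i \defeq \Bnu_A^{(i)}$ and $\Bsink_i \defeq \lceil 1/\phi\rceil \cdot \Bnu_B^{(i)}$ (or an appropriately scaled version), and call \alg{SparseCut}{$\cI_i, \kappa, F, \cH$} with $\kappa = \Theta(1/\phi)$. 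The hierarchy $\cH$ of $G \setminus F$ supplied as input is exactly what \cref{thm:flow} needs, and since $\vol_{F,\Bc_G}$ is the relevant volume, the $\vol_F$ terms in \eqref{eq:flow-guarantee} are the right ones. Either the returned flow routes all of $\Bnu_A^{(i)}$ — in which case it decomposes into an $(\Bnu_A^{(i)}, \Bnu_B^{(i)})$-perfect capacitated matching $(M_i, \Bc_i)$ (via \cref{fact:path-decomposition}, grouping the flow paths by their endpoints) with congestion $\kappa$ into $G$ — and we feed $M_i$ back to the cut player; or it fails, and \cref{thm:flow} hands us a cut $S$ with $\Bc(E_G(S,\overline{S})) \le (O(|\Bf|) + \min\{\vol_{F,\Bc}(S), \vol_{F,\Bc}(\overline{S})\})/\kappa$, which we output in Case~\labelcref{item:case:sparse-cut}.

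The key steps, in order, are: (1) verify the matching-player reduction — that a successful \alg{SparseCut}{} call yields a valid perfect capacitated matching embeddable into $G$ with congestion $\widetilde{O}(1/\phi)$, so that the union of all matchings is embeddable with congestion $\frac{1}{\phi\widetilde\psi}$ for $\widetilde\psi = \widetilde\Omega(1)$ as required by \cref{def:witness}\labelcref{item:witness:congestion}; (2) if every round succeeds, invoke \cref{thm:directed-cut-matching-game} to conclude that $W \defeq M_1 \cup \cdots \cup M_{t_{\CMG}}$ is a $\psi_{\CMG}$-expander with $\Bnu(v) \le \deg_{W,\Bc_W}(v) \le t_{\CMG}\Bnu(v)$, which directly gives Properties \labelcref{item:witness:degree,item:witness:expansion} of the $(R,\phi,\widetilde\psi)$-witness with $\Br = \Bzero$ and $\Bgamma$ taken to be the potential/embedding weight vector from the cut-matching analysis (this is where $\Bgamma$ enters — it is the vector with respect to which $W$ is certified expanding, not $\deg_W$ itself); (3) in the failure case, translate the cut guarantee of \cref{thm:flow} into the required form $\min\{\Bc_G(E_G(S,\overline S)), \Bc_G(E_G(\overline S, S))\} < \phi \cdot \vol_{F,\Bc_G}(S)$ and the balance bound $\frac{1}{4t_{\CMG}}R \le \vol_{F,\Bc_G}(S) \le \frac{1}{2}\vol_{F,\Bc_G}(V)$ — the lower bound coming from the fact that $\Bsource_i$ carries $\widetilde\Omega(R/t_{\CMG})$ volume so the unrouted excess $\ex_{\Bf}(S)$ is large, and the upper bound from $S$ being the smaller side; (4) bound the running time by $t_{\CMG}$ calls to \cref{thm:flow}, each costing $\widetilde{O}(n^2 \kappa \eta^4/\phi^2) = \widetilde{O}(n^2/(\phi\phi'^2))$ for $\kappa = \Theta(1/\phi)$ and $\eta = O(\log n)$, plus the $\widetilde{O}(n)$-time cut player, giving total $\widetilde{O}(n^2/(\phi\phi'^2))$.

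The main obstacle I expect is step (3): massaging the output of \cref{thm:flow} — which gives an additive-plus-volume-over-$\kappa$ guarantee that mixes $|\Bf|$ with $\vol_F(S)$ — into a clean multiplicative $\phi$-sparsity statement with the right balance. One has to argue that when \alg{SparseCut}{} fails, the excess is a constant fraction of $\|\Bsource_i\|_1$, so $\vol_{F,\Bc_G}(S) \ge \abs_{\Bf}(S) = \Bsink_i(S) \gtrsim \|\Bsource_i\|_1/\kappa$, and simultaneously $|\Bf| \le \|\Bsource_i\|_1 \le \kappa \cdot \vol_{F,\Bc_G}(S)$, so the $O(|\Bf|)/\kappa$ term is absorbed into $O(\vol_{F,\Bc_G}(S))$; getting the constants to land at exactly $\phi$ (rather than $O(\phi)$) may require choosing $\kappa$ a constant factor larger and is the kind of bookkeeping that makes this "fairly standard" proof slightly fiddly. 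A secondary subtlety is the directed setting: we need both $E_G(S,\overline S)$ and $E_G(\overline S, S)$ controlled, so the cut-matching game and the flow instance must be run in a way that produces an out-sparse-or-in-sparse cut, which is handled by running \cref{thm:flow} and using that its output cut $S$ satisfies the one-directional bound, together with the two-sided witness definition (\cref{def:witness}) requiring the symmetric argument on $\rev G$. Since the excerpt explicitly says "the proof is fairly standard" and defers it to the appendix, I expect no conceptual difficulty beyond careful constant-tracking.
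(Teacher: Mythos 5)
Your overall plan — run Louis's directed cut-matching game with \alg{SparseCut}{} from \cref{thm:flow} as the matching player, with $\kappa = \Theta(1/\phi)$ — matches the paper's. But there is a genuine gap that makes your step (3) unsound: you set $\Br = \Bzero$ and do not use fake edges at all, whereas the $R$ in the lemma statement and the $R$-witness machinery (\cref{def:witness}) exist precisely to absorb the residual unrouted demand. Without them, the balance lower bound $\vol_{F,\Bc_G}(S) \geq \frac{1}{4t_{\CMG}}R$ in Case~\labelcref{item:case:sparse-cut} cannot be guaranteed.

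Concretely: a single call to \alg{SparseCut}{$\cI_i,\kappa,F,\cH$} either routes all of $\Bsource_i$ or returns a cut with $\vol_{F,\Bc_G}(S) \geq \ex_{\Bf}(S) = \ex_{\Bf}(V)$. But $\ex_{\Bf}(V)$ can be as small as a single unit — the flow may route almost everything and fail at the last bit. Your claim that ``the excess is a constant fraction of $\|\Bsource_i\|_1$'' has no justification from \cref{thm:flow} alone, and $\|\Bnu_A^{(i)}\|_1$ is not related to $R$ anyway (it is on the order of $\vol_{F,\Bc_G}(V)/2$, not $R/t_{\CMG}$). The paper's fix is to iterate within a round: repeatedly call \alg{SparseCut}{} on the residual demand, halving the source each time; stop iterating once the remaining source drops below $R/(2t_{\CMG})$, at which point the leftover is absorbed into fake edges contributing to $\Br$ (so $\|\Br\|_1 \leq R/2 \leq R$ across all $t_{\CMG}$ rounds). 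With this stopping threshold, whenever a cut is actually returned the current $\|\Bsource\|_1$ is still $\geq R/(2t_{\CMG})$ and the flow routed less than half of it, so $\ex_{\Bf}(V) > \frac{1}{2}\|\Bsource\|_1 \geq \frac{R}{4t_{\CMG}}$, which is exactly the needed lower bound. The vector $\Bgamma$ is then $\vol_{W,\Bc_W}(\cdot) + \Br(\cdot)$, not a separate potential vector from the game's internal analysis. Absent the threshold-plus-fake-edges mechanism, your algorithm as written can output an arbitrarily unbalanced cut and violates the lemma's guarantee; with it, the rest of your reduction to the directed cut-matching game and the running-time accounting go through.
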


\subsection{Stability of Witnesses} \label{subsec:stability}

Following the framework of \cite{HuaKGW23}, our algorithm maintains for each strongly connected component $U$ of the current graph and each $\ell \in \{0, \ldots, L\}$ a witness $(W_{U,\ell}, \Bc_{U,\ell}, \Br_{U,\ell}, \Pi_{W_{U,\ell} \to G[U]})$.
To maintain these witnesses, we will repeatedly find sparse cuts in the graph and remove them until we have certified such cuts do not exist.
After each cut is found, we may also update the volume with which the witness needs to certify in response to $\cM_{\prev}.\alg{Cut}{}$.
We handle these updates through the subroutine \alg{UpdateWitness}{$U, S, A$} implemented in \cref{alg:remove-edges}.
The subroutine removes the cut $S$ from $U$ and then increases the terminal set from $F$ to $F \cup A$, while making sure that all the $W_{U,\ell}$'s remain valid witnesses.

Each cut $S$ that we call \alg{UpdateWitness}{} on will either correspond directly to a call to the \alg{Cut}{} function in \cref{def:hierarchy-maintainer}, in which case we refer to $S$ as an \emph{external} cut, or correspond to the sparse cut found internally when maintaining witnesses (specifically through \cref{lemma:prune-or-repair,lemma:cut-matching}), in which case we refer to $S$ as an \emph{internal} cut.
To handle the updates, the algorithm simply projects each witness $W_{U,\ell}$ onto $U \setminus S$ and removes edges in it that are no longer embedded into the new $G[U\setminus S]$.
It also makes necessary increases to the value of $\Br_{U,\ell}(v)$ when new edges are added to $F$ to ensure the validity of the witnesses.
Note that we essentially ignore $S$ and are not projecting $W_{U,\ell}$ onto it.
This is because once $\alg{UpdateWitness}{U,S,A}$ is called, as we shall see in \cref{alg:maintain-exp-decomp}, our algorithm will immediately reconstruct all the witnesses of $S$ entirely from scratch.

We take a rather modularized approach and guarantee that all witnesses, once constructed or repaired, will only be updated using \alg{UpdateWitness}{}.
Therefore, before giving the full details on how the subroutine is used in \cref{subsec:maintain-exp-decomp} and how the parameters are set, we first establish several stability properties that are key to our analysis later.

\begin{algorithm}
  \caption{Implementation of \alg{UpdateWitness}{}}
  \label{alg:remove-edges}
  
  \SetEndCharOfAlgoLine{}
  \SetKwInput{KwData}{Input}
  \SetKwInput{KwResult}{Output}
  \SetKwProg{KwProc}{function}{}{}
  \SetKwInOut{State}{global}
  \SetKwFunction{UpdateWitness}{UpdateWitness}
  
  \vspace{0.4em}

  \State{the terminal edge set $F$}

  \vspace{0.4em}

  \KwProc{\UpdateWitness{$U, S \subseteq U, A \subseteq G[U]$}} {
    \tcp{remove $S$ from $U$ and add $A$ into the terminal set $F$}
    \For{$\ell \in \{0, \ldots, L\}$} {
      \For{$e \in E_{G[U]}(S,\overline{S}) \cup E_{G[U]}(S,\overline{S})$ and $e^\prime \in \Pi^{-1}_{W_{U,\ell} \to G[U]}(e)$} {
        Let $e^\prime = (u, v)$. Increase $\Br_{U,\ell}(u)$ and $\Br_{U,\ell}(v)$ by $\Bc_{U,\ell}(e^\prime)$ and remove $e^\prime$ from $W_{U,\ell}$.\;
      }
      \For{$e = (u, v) \in A \setminus F$} {
        Increase $\Br_{U,\ell}(u)$ and $\Br_{U,\ell}(v)$ by $\Bc_G(e)$.\;
      }
    }
    Replace $U$ in $\cU$ with $U \setminus S$ and $S$.\;
    Let $W_{U \setminus S,\ell}$ be $W_{U,\ell}[U \setminus S]$, $\Br_{U \setminus S,\ell}$ be $\Br_{U \setminus S,\ell}$ restricted to $U \setminus S$, and $\Bgamma_{U \setminus S}$ be $\Bgamma_U$ restricted to $U \setminus S$.\;
    Update $U \gets U \setminus S$ and $F \gets F \cup A$.\;
  }

\end{algorithm}

First note that each $W_{U,\ell}$ remains a valid witness, albeit with an increase in $\|\Br_{U,\ell}\|_1$.

\begin{claim}
  After each call to \UpdateWitness{}, the tuple $(W_{U,\ell}, \Bc_{U,\ell}, \Br_{U,\ell}, \Pi_{W_{U,\ell} \to G[U]})$ remains a valid $(\infty, \phi, \psi_{\ell})$-witness of $(G[U], \Bc_G, F)$ with respect to $\Bgamma_U$.
  \label{claim:reman-valid-witness}
\end{claim}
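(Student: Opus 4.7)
The plan is to verify each of the four requirements of \cref{def:witness} after the update, with the core work concentrated on the expansion property~\labelcref{item:witness:expansion}. I will take advantage of an auxiliary ``move-instead-of-delete'' intermediate witness that lives on the full vertex set $U$; the actual post-update witness on $U \setminus S$ will be the restriction of this intermediate object.

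First I will fix an arbitrary level $\ell$ and describe an intermediate tuple $(\widetilde W, \widetilde\Bc, \widetilde\Br, \widetilde\Pi)$ on the full vertex set $U$ obtained from $(W_{U,\ell}, \Bc_{U,\ell}, \Br_{U,\ell}, \Pi_{W_{U,\ell} \to G[U]})$ by: (i) for every edge $e^\prime = (u,v)$ of $W_{U,\ell}$ whose embedding uses an edge of $E_{G[U]}(S,\overline{S}) \cup E_{G[U]}(\overline{S},S)$, deleting $e^\prime$ from $\widetilde W$ and adding $\Bc_{U,\ell}(e^\prime)$ to both $\widetilde\Br(u)$ and $\widetilde\Br(v)$; and (ii) for every $e = (u,v) \in A \setminus F$, adding $\Bc_G(e)$ to both $\widetilde\Br(u)$ and $\widetilde\Br(v)$. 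A straightforward check shows that for any $T \subseteq U$, the combined mass $\vol_{\widetilde W, \widetilde\Bc}(T) + \widetilde\Br(T)$ equals $\vol_{W_{U,\ell}, \Bc_{U,\ell}}(T) + \Br_{U,\ell}(T) + 2\,\deg_{A \setminus F, \Bc_G}(T)$ (the ``move'' in (i) preserves the sum at each endpoint, and the additions in (ii) contribute the last term). Dually, the cut quantity $\widetilde\Bc(E_{\widetilde W}(T, U \setminus T)) + \widetilde\Br(T)$ can only \emph{grow} under the move, since every edge relocated to $\widetilde\Br$ contributes at least as much to the LHS as before, with a strict increase when both endpoints lie in $T$. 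Consequently, the expansion inequality carried by $W_{U,\ell}$ still holds at $\widetilde W$ on the full vertex set $U$.

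Next I will make the key structural observation: after step (i), any edge of $\widetilde W$ with one endpoint in $U \setminus S$ and the other in $S$ would have an embedding path that necessarily crosses $E_{G[U]}(S,\overline{S}) \cup E_{G[U]}(\overline{S},S)$, hence would have been removed. Therefore $\widetilde W$ has no edges between $U \setminus S$ and $S$, which means the restriction $W_{U\setminus S, \ell} = \widetilde W[U \setminus S]$ satisfies $\deg_{W_{U\setminus S,\ell}}(v) = \deg_{\widetilde W}(v)$ for all $v \in U \setminus S$. Combined with the fact that $\Br_{U \setminus S, \ell}$ and $\Bgamma_{U \setminus S}$ are just the restrictions of $\widetilde\Br$ and $\Bgamma_U$, it follows that for any cut $S' \subseteq U \setminus S$, both the cut-mass $\Bc(E_{W_{U\setminus S,\ell}}(S',(U\setminus S)\setminus S')) + \Br_{U\setminus S,\ell}(S')$ and the volume-plus-$\Br$ mass coincide exactly with the corresponding quantities for $\widetilde W$ at the cut $(S', U \setminus S')$ in $U$.

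Now to verify the expansion property for the restricted witness at a cut $S' \subseteq U \setminus S$ with $\Bgamma_{U \setminus S}(S') \leq \Bgamma_{U \setminus S}((U \setminus S) \setminus S')$: since $\Bgamma_U(S') = \Bgamma_{U \setminus S}(S')$ and $\Bgamma_U(U \setminus S') \geq \Bgamma_{U\setminus S}((U\setminus S) \setminus S')$, the $\Bgamma$-hypothesis propagates up to $U$, so the expansion of $\widetilde W$ at $(S', U\setminus S')$ applies and transfers to $W_{U\setminus S, \ell}$ by the coincidence above. Running the symmetric argument on $\rev{W_{U,\ell}}$ yields the in-expansion, completing property~\labelcref{item:witness:expansion}. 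Properties~\labelcref{item:witness:num-fake-edges} and~\labelcref{item:witness:degree} are then easy: the former is vacuous for $R = \infty$; for the latter, note that both $\deg_W(v) + \Br(v)$ and $\deg_{F, \Bc_G}(v)$ increase by exactly $\deg_{A\setminus F, \Bc_G}(v)$ at every $v \in U \setminus S$ (the edge-moves of step (i) preserve $\deg_W + \Br$), so both inequalities pass through, using $1/\psi_\ell \geq 1$ for the upper bound. Finally, property~\labelcref{item:witness:congestion} is immediate: we only delete embedded edges and restrict to a subgraph, so the embedding congestion can only decrease. The main obstacle in writing this up cleanly will be the bookkeeping in step~(i), specifically confirming that no edge of $\widetilde W$ can straddle the new boundary; everything else is a routine accounting.
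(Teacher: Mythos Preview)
Your proposal is correct and follows essentially the same approach as the paper's proof. Both arguments rest on the same three observations: (a) converting a witness edge $e'=(u,v)$ into fake-edge mass at both endpoints preserves $\deg_W(v)+\Br(v)$ while never decreasing $\Bc(E_W(T,\overline T))+\Br(T)$; (b) adding terminal edges from $A\setminus F$ increases both sides of the degree sandwich by the same amount; and (c) a cut $T$ in the new $U\setminus S$ lifts to the cut $(T,(U\setminus T)\cup S)$ in the old $U$, where the $\Bgamma$-ordering is preserved and the old expansion inequality applies. The paper carries out this comparison directly between the old and new witnesses; you instead insert an intermediate object $\widetilde W$ on the full vertex set $U$ and observe that the restriction to $U\setminus S$ is lossless because no surviving $\widetilde W$-edge can straddle $(S,U\setminus S)$ (its embedding would cross the cut and hence be removed in step~(i)). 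This extra scaffolding is a clean way to make the ``no-straddling'' point explicit, but the underlying argument is identical to the paper's.
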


\begin{proof}
  Observe that for each edge $e^\prime = (u, v)$ removed from $W_{U,\ell}$, there is a corresponding increase in $\Br_{U,\ell}(u)$ and $\Br_{U,\ell}(v)$ by $\Bc_{U,\ell}(e^\prime)$.
  Likewise, each newly added terminal edge $(u, v)$ has a corresponding increase in $\Br_{U,\ell}(u)$ and $\Br_{U,\ell}(u)$ by $\Bc_G(u, v)$.
  Let $U^\prime$ and $F^\prime$ be the set $U$ and $F$ before the update.
  Let $W^\prime_{U^\prime,\ell}$, $\Bc^\prime_{U^\prime,\ell}$, and $\Br^\prime_{U^\prime,\ell}$ be the old witness.
  We thus have for each $u \in U$ that
  \begin{equation}
    \left(\deg_{W_{U,\ell}, \Bc_{U,\ell}}(v) + \Br_{U,\ell}(v)\right) - \left(\deg_{W^\prime_{U^\prime,\ell},\Bc_{U,\ell}}(v) + \Br^\prime_{U^\prime,\ell}(v)\right) = (\deg_{F,\Bc_G}(v) - \deg_{F^\prime,\Bc_G}(v)).
    \label{eq:degree-change}
  \end{equation}
  Thus, Property~\labelcref{item:witness:degree} is preserved.
  Property \labelcref{item:witness:congestion} is trivially preserved as well since the congestion of $\Pi_{W_{U,\ell} \to G[U]}$ can only decrease.
  For Property~\labelcref{item:witness:expansion}, consider a cut $(T, U \setminus T)$ in the new $U$ with $\Bgamma_U(T) \leq \Bgamma_U(U \setminus T)$.
  As $\Bgamma_{U^\prime}(T) \leq \Bgamma_{U^\prime}((U \setminus T) \cup S)$ clearly holds, we have
  \[
    \Bc_{U,\ell}\left(E_{W^\prime_{U^\prime,\ell}}(T, (U \setminus T) \cup S)\right) + \Br^\prime_{U^\prime, \ell}(T) \geq \psi(\vol_{W^\prime_{U^\prime,\ell},\Bc_{U,\ell}}(T) + \Br^\prime_{U^\prime, \ell}(T))
  \]
  by the properties of the old witness.
  Using the above arguments we can derive
  \begin{align*}
    \Br_{U,\ell}(T) &- \Br^\prime_{U^\prime,\ell}(T) \\
    &\geq  \left(\Bc_{U,\ell}\left(E_{W^\prime_{U^\prime,\ell}}(T, (U \setminus T) \cup S)\right) - \Bc_{U,\ell}\left(E_{W_{U,\ell}}(T, U \setminus T)\right)\right) + (\vol_{F,\Bc_G}(T) - \vol_{F^\prime,\Bc_G}(T))
  \end{align*}
  which implies
  \begin{align*}
    \Bc_{U,\ell}\left(E_{W_{U,\ell}}(T, U \setminus T)\right) + \Br_{U, \ell}(T) &\geq \psi(\vol_{W^\prime_{U^\prime,\ell},\Bc_{U,\ell}}(T) + \Br^\prime_{U^\prime, \ell}(T)) + (\vol_{F,\Bc_G}(T) - \vol_{F^\prime,\Bc_G}(T)) \\ &\geq \psi(\vol_{W_{U,\ell},\Bc_{U,\ell}}(T) + \Br_{U,\ell}(T))
  \end{align*}
  when combined with \eqref{eq:degree-change}.
  This proves that $(W_{U,\ell}, \Bc_{U,\ell}, \Br_{U,\ell}, \Pi_{W_{U,\ell} \to G[U]})$ is an $(\infty, \phi, \psi)$-out-witness of $(G[U], \Bc_G, F)$.
  The proof on the reversed graph follows analogously.
\end{proof}

We further argue that the increase in $\|\Br_{U,\ell}\|_1$ is relatively stable with respect to internal cuts and is mostly dominated by external cuts.
In particular, we consider the following scenario.

\begin{scenario}
Suppose at some moment $(W_{U,\ell}, \Bc_{U,\ell}, \Br_{U,\ell}, \Pi_{W_{U,\ell} \to G[U]})$ is an $(R, \phi, \psi_{\ell})$-witness of $(G[U], \Bc_G, F)$.
Let $U_0$ be the set $U$ and $F_0$ be the set $F$ at this moment.
There is then a sequence of $r$ calls to \alg{UpdateWitness}{$U, S_i, A_i$} for $S_i \subseteq U_{i-1}$ (where $U_i \defeq U_{i-1} \setminus S_i$) and $A_i \subseteq G[U_{i-1}]$ such that the cut $S_i$ has $\delta_i \defeq \min\{\Bc_G(E_{G[U_{i-1}]}(S_i, \overline{S_i})), \Bc_G(E_{G[U_{i-1}]}(\overline{S_i}, S_i))\}$ boundary capacities and there are at most $\Delta_i = 2\Bc_G(A_i)$ units of volume added after the $i$-th update.
Let $F_i \defeq F_{i-1} \cup A_i$ be the terminal edge set after the $i$-th update.
\label{scenarion:stability}
\end{scenario}

\begin{remark}
Note that \cref{scenarion:stability} models the case when these are the only changes made to $W_{U,\ell}$.
Later in \cref{subsec:maintain-exp-decomp} we will perform periodic reconstruction of $W_{U,\ell}$ which is not characterized by \cref{scenarion:stability}, and we will ensure that the stability properties established in this section are used only when \cref{scenarion:stability} applies.
\end{remark}

We first bound how much $\|\Br_{U,\ell}\|_1$ can grow in terms of the $\delta_i$'s assuming there is no terminal addition at all, i.e., $A_i = \emptyset$ for all $i \in [r]$.
We make the following observation regarding the boundary edges of a union of cuts.

\begin{observation}
  Suppose there is a sequence of cuts $S_1, \ldots, S_k$ in a graph $G = (V, E)$ with $S_i \subseteq V_{i-1}$ where $V_0 \defeq V$ and $V_i \defeq V_{i-1} \setminus S_i$ and consider $S \defeq \bigcup_{j \in \cJ}S_j$ for some $\cJ \subseteq [k]$.
  Then, we have
  \[
    E_G(S, \overline{S}) \subseteq \bigcup_{i \in \cJ}E_{G[V_{i-1}]}(S_i, \overline{S_i}) \cup \bigcup_{i \not\in \cJ}E_{G[V_{i-1}]}(\overline{S_i}, S_i).
  \]
  \label{obs:boundary-of-union-of-cuts}
\end{observation}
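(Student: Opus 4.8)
\textbf{Proof proposal for Observation \ref{obs:boundary-of-union-of-cuts}.}

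The plan is to show that every edge crossing the cut $(S, \overline S)$ in $G$ must have been counted as a boundary edge at the step where one of its endpoints was separated off. Concretely, fix an edge $e = (u, v) \in E_G(S, \overline S)$, so $u \in S$ and $v \in \overline S = V \setminus S$. Since $S = \bigcup_{j \in \cJ} S_j$, the endpoint $u$ lies in exactly one piece $S_i$ with $i \in \cJ$ (the pieces $S_1, \dots, S_k$ are pairwise disjoint because each $S_j \subseteq V_{j-1}$ and $V_j = V_{j-1} \setminus S_j$). I claim $e \in E_{G[V_{i-1}]}(S_i, \overline{S_i})$, where $\overline{S_i} = V_{i-1} \setminus S_i = V_i$. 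For this I need: (a) both endpoints lie in $V_{i-1}$, and (b) $v \in V_i$.

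For (a): $u \in S_i \subseteq V_{i-1}$ is immediate. For $v$, note $v \notin S$, so $v$ is not in any $S_j$ with $j \in \cJ$; but I also need $v \notin S_j$ for $j < i$ (whether or not $j \in \cJ$) so that $v$ survives in $V_{i-1}$. Here is where I would be careful: it is possible that $v \in S_j$ for some $j \notin \cJ$ with $j < i$, in which case $v \notin V_{i-1}$ and the edge $e$ is not present in $G[V_{i-1}]$ at all. So the clean statement must instead route such an edge to the step $j$ where $v$ (rather than $u$) gets separated. Let me redo the case analysis: between $u$'s separation step $i$ and $v$'s separation step $j'$ (if $v$ is ever separated — i.e. $v \in S_{j'}$ for some $j'$), consider $\min(i, j')$. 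If $i < j'$ (or $v$ is never separated), then at step $i$ both $u, v \in V_{i-1}$, and $v \in V_{i-1} \setminus S_i = \overline{S_i}$, so $e \in E_{G[V_{i-1}]}(S_i, \overline{S_i})$ with $i \in \cJ$. If $j' < i$, then $j' \notin \cJ$ (since $v \notin S$), both $u, v \in V_{j'-1}$, and $u \in V_{j'-1} \setminus S_{j'} = \overline{S_{j'}}$, so $e \in E_{G[V_{j'-1}]}(\overline{S_{j'}}, S_{j'})$ with $j' \notin \cJ$. In either case $e$ lands in the claimed union.

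The main (minor) obstacle is precisely this bookkeeping about which endpoint is removed first and ensuring the edge is still present in the relevant induced subgraph at that step; once the case split above is made, each case is a one-line verification. Putting the two cases together yields
\[
  E_G(S, \overline{S}) \subseteq \bigcup_{i \in \cJ}E_{G[V_{i-1}]}(S_i, \overline{S_i}) \cup \bigcup_{i \not\in \cJ}E_{G[V_{i-1}]}(\overline{S_i}, S_i),
\]
as desired. I would phrase the writeup as: take $e=(u,v)$ with $u \in S$, $v\notin S$; let $i$ be the unique index with $u\in S_i$ and $j$ the index (if any) with $v \in S_j$; argue $i\in\cJ$ and $j\notin\cJ$; then split on whether $i<j$ (edge charged to $S_i$'s outgoing boundary, $i \in \cJ$) or $j<i$ (edge charged to $S_j$'s incoming boundary, $j \notin \cJ$), noting that in each subcase both endpoints survive into $G[V_{\min(i,j)-1}]$.
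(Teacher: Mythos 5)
Your proof is correct and is exactly the natural argument; the paper states this as an Observation without proof, and the case split you perform (charging each crossing edge $(u,v)$ to the earlier of the two steps at which $u$ or $v$ is removed, so that both endpoints still survive in $V_{\min(i,j)-1}$) is the intended reasoning. The only point worth tightening in the writeup is the first part, where you initially claim $e \in E_{G[V_{i-1}]}(S_i, \overline{S_i})$ before noticing the problem; in a final version you should just open with the two indices $i$ and $j$ and split immediately, and also note explicitly that $v$ may lie in no $S_{j'}$ at all (i.e.\ $v \in V_k$), which you fold correctly into the $i < j'$ case.
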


\begin{lemma}
  In \cref{scenarion:stability}, suppose $A_i = \emptyset$ for all $i \in [r]$, then after the $r$ updates we have $\|\Br_{U_r,\ell}\|_1 \leq \frac{3R}{\psi_{\ell}} + \frac{4}{\psi_{\ell}^2\phi} \sum_{j \in [r]}\delta_j$.
  \label{lemma:blow-up}
\end{lemma}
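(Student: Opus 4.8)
The goal is to bound the total fake-edge mass $\|\Br_{U_r,\ell}\|_1$ after a sequence of $r$ internal (no-terminal-addition) calls to \alg{UpdateWitness}{}, in terms of the initial bound $R$ on $\|\Br_{U_0,\ell}\|_1$ and the boundary capacities $\delta_i$ of the cuts removed. The first step is to understand precisely how $\|\Br_{U,\ell}\|_1$ grows in a single call to \alg{UpdateWitness}{$U, S_i, \emptyset$}: by inspection of \cref{alg:remove-edges}, since $A_i = \emptyset$, the only increases to $\Br_{U,\ell}$ come from removing, for each boundary edge $e \in E_{G[U_{i-1}]}(S_i, \overline{S_i}) \cup E_{G[U_{i-1}]}(\overline{S_i}, S_i)$, all witness edges $e'$ embedded across $e$, and charging $2\Bc_{U,\ell}(e')$ to $\Br_{U,\ell}$. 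Since the witness embeds into $G[U_{i-1}]$ with congestion at most $\frac{1}{\phi\psi_\ell}$, the total witness capacity crossing any graph edge $e$ is at most $\frac{\Bc_G(e)}{\phi\psi_\ell}$, so the total $\Br$-increase from update $i$ is at most $\frac{2}{\phi\psi_\ell}\bigl(\Bc_G(E_{G[U_{i-1}]}(S_i,\overline{S_i})) + \Bc_G(E_{G[U_{i-1}]}(\overline{S_i},S_i))\bigr)$. Naively this is $\frac{4}{\phi\psi_\ell}$ times the \emph{larger} direction of the cut, which can be as large as $\vol(U_{i-1})$ — far too big. So the naive per-step accounting fails, and that is exactly the main obstacle.

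\textbf{The key idea to overcome this.} Rather than summing per-step increases, I would track the quantity $\Phi_i \defeq \|\Br_{U_i,\ell}\|_1$ and argue an amortized bound via the witness's own expansion property (\cref{def:witness}\labelcref{item:witness:expansion}), which survives each update by \cref{claim:reman-valid-witness}. Concretely: when we cut $S_i$ out of $U_{i-1}$, the $\Br$-mass we remove along with $S_i$ (i.e. $\Br_{U_{i-1},\ell}(S_i)$, which is discarded since $W_{U_i,\ell} = W_{U_{i-1},\ell}[U_{i-1}\setminus S_i]$) offsets much of the newly created fake mass. The witness expansion guarantees that whichever side of the cut has the smaller $\Bgamma$-value has $\Br$-mass plus witness-boundary at least $\psi_\ell$ times its witness-volume-plus-$\Br$; combined with the degree bound \labelcref{item:witness:degree}, this lets one show that the smaller-volume side carries a proportional share of $\Br$-mass. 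Summing the witness edges destroyed only along the \emph{smaller} boundary direction — which is legitimate because each destroyed edge $e'$ has a well-defined orientation and we may always charge to $\delta_i = \min\{\cdot,\cdot\}$ provided we simultaneously delete the $\Br$-mass sitting on $S_i$ — gives a telescoping inequality of roughly the form $\Phi_i \le \Phi_{i-1} + \frac{c}{\phi\psi_\ell^2}\delta_i$, possibly after absorbing a geometric factor. Iterating from $\Phi_0 \le R$ (actually one needs $\|\Br_{U_0,\ell}\|_1 \le R$ but Property \labelcref{item:witness:degree} plus $\psi_\ell$ gives the extra $\frac{1}{\psi_\ell}$ slack, matching the $\frac{3R}{\psi_\ell}$ term) yields the claimed $\|\Br_{U_r,\ell}\|_1 \le \frac{3R}{\psi_\ell} + \frac{4}{\psi_\ell^2\phi}\sum_{j\in[r]}\delta_j$.

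\textbf{Execution order.} First I would restate what \alg{UpdateWitness}{$U,S_i,\emptyset$} does to $\Br$ and $W$, isolating the per-step increase as $\le \frac{2}{\phi\psi_\ell}\cdot(\text{total witness capacity across the two boundary directions of }S_i)$. Second, I would invoke \cref{obs:boundary-of-union-of-cuts} to relate the boundary of any \emph{union} of the cut pieces to the one-sided boundaries $\delta_i$: this is the combinatorial engine that converts the per-step larger-direction quantities into a sum of $\delta_i$'s. The point is that an edge $e'$ of the original witness $W_{U_0,\ell}$ gets destroyed across some $S_i$ exactly when its embedding path first crosses the cut, and over the whole sequence each original witness edge is destroyed at most once; the graph edge carrying its embedding path at the moment of destruction lies in $E_{G[U_{i-1}]}(S_i,\overline{S_i})$ or $E_{G[U_{i-1}]}(\overline{S_i},S_i)$, and by charging to whichever is the smaller side (collecting the larger-side-crossing edges into a later union and re-applying \cref{obs:boundary-of-union-of-cuts}) the total is bounded by $\sum_i \delta_i$ times the congestion. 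Third, I would carefully bound the base term: since $(W_{U_0,\ell},\ldots)$ is an $(R,\phi,\psi_\ell)$-witness, $\|\Br_{U_0,\ell}\|_1\le R$, and the residual witness capacity that can ever be converted to $\Br$ is at most $\vol_{W_{U_0,\ell},\Bc}(V)\le \frac{1}{\psi_\ell}\vol_{F_0,\Bc_G}(V)$... but this last bound is too weak, so instead I would track only the \emph{incremental} conversion and note that the $\frac{3R}{\psi_\ell}$ slack absorbs the initial $\Br$ plus the first-order conversion term coming from the degree ratio $\frac{1}{\psi_\ell}$ in \labelcref{item:witness:degree}. The most delicate point, and where I expect to spend the most care, is making the "charge to the smaller side" argument rigorous: one must ensure that re-routing the charge of larger-direction crossings into a telescoped union of complements does not double-count, and that the congestion factor $\frac{1}{\phi\psi_\ell}$ composes correctly with the two factors of $\psi_\ell$ that appear (one from congestion, one from the degree ratio) to land exactly on $\frac{4}{\psi_\ell^2\phi}$. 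I would cross-check the constants against the analogous \cite[Lemma 3.x]{HuaKGW23} statement to confirm the $3$ and $4$ are not artifacts of a looser estimate.
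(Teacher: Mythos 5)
Your final approach --- the global union argument via \cref{obs:boundary-of-union-of-cuts}, charging destroyed witness edges to the cheap-direction boundaries $\sum_j \delta_j$ through the embedding congestion $\frac{1}{\phi\psi_\ell}$ --- is indeed the engine of the paper's proof, and you are right that naive per-step accounting fails. But there is a genuine gap at the last step. Writing $S_{\mathrm{out}}$ (resp.\ $S_{\mathrm{in}}$) for the union of out-sparse (resp.\ in-sparse) cuts, \cref{obs:boundary-of-union-of-cuts} bounds only $\Bc_G(E_{G[U_0]}(S_{\mathrm{out}},\overline{S_{\mathrm{out}}}))$ and $\Bc_G(E_{G[U_0]}(\overline{S_{\mathrm{in}}},S_{\mathrm{in}}))$ by $\sum_j\delta_j$. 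A destroyed witness edge whose own orientation goes \emph{against} this cheap direction --- $e\in E_{W_{U_0,\ell}}(\overline{S_{\mathrm{out}}},S_{\mathrm{out}})$ or $e\in E_{W_{U_0,\ell}}(S_{\mathrm{in}},\overline{S_{\mathrm{in}}})$ --- need not have an embedding path through any $\delta_j$-edge, so the congestion charge is unavailable for it; and ``re-applying \cref{obs:boundary-of-union-of-cuts}'' cannot conjure the reverse-direction graph boundary, because the observation is inherently one-sided. The paper handles exactly these witness edges with a separate step (\cref{claim:two-directions}): $\Bc(E_W(\overline{S},S)) \le \frac{1}{\psi_\ell}\bigl(\Bc(E_W(S,\overline{S}))+R\bigr)$, proved from the $\Bgamma$-dependent expansion of the out-witness or the reverse witness depending on whether $\Bgamma(S)\le\Bgamma(\overline{S})$. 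This is what produces the $\frac{2R}{\psi_\ell}$ term in the intermediate bound, and hence the $\frac{3R}{\psi_\ell}$ in the statement. Your attribution of that term to the degree ratio in \cref{def:witness}\labelcref{item:witness:degree}, or to offsetting by discarded $\Br_{U_{i-1},\ell}(S_i)$-mass, is not the right mechanism and would not close the case.

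Separately, the amortized per-step bound $\Phi_i\le\Phi_{i-1}+\frac{c}{\phi\psi_\ell^2}\delta_i$ that you present as the key idea simply does not hold and should be dropped rather than ``absorbed into a geometric factor.'' You already note that the witness capacity destroyed in a single step can scale with the \emph{larger} direction of $S_i$; the discarded $\Br_{U_{i-1},\ell}(S_i)$ does not offset it, since the expansion property (\cref{def:witness}\labelcref{item:witness:expansion}) controls the out-boundary of the $\Bgamma$-smaller side relative to its witness volume, not the expensive-direction graph cut capacity. Only the global accounting over the full sequence, with the three-way case analysis on where destroyed witness edges sit relative to $S_{\mathrm{out}}$ and $S_{\mathrm{in}}$, recovers the lemma's linear dependence on $\sum_j\delta_j$.
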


\begin{proof}
  Let us call a cut $S_i$ \emph{out-sparse} if $\Bc_G(E_{G[U_{i-1}]}(S_i,\overline{S_i})) \leq \Bc_G(E_{G[U_{i-1}]}(\overline{S_i},S_i))$ and \emph{in-sparse} otherwise.
  Let
  \[ S_{\mathrm{out}} \defeq \bigcup_{\text{$S_j$ is out-sparse}}S_j\quad\text{and}\quad S_{\mathrm{in}} \defeq \bigcup_{\text{$S_j$ is in-sparse}}S_j, \]
  be cuts in $G[U_0]$, for which by \cref{obs:boundary-of-union-of-cuts} we have
  \begin{equation}
    \Bc_G\left(E_{G[U_0]}(S_{\mathrm{out}}, \overline{S_{\mathrm{out}}})\right) \leq \sum_{j \in [r]}\delta_j\quad\text{and}\quad\Bc_G\left(E_{G[U_0]}(\overline{S_{\mathrm{in}}}, S_{\mathrm{in}})\right) \leq \sum_{j \in [r]}\delta_j.
    \label{eq:in-out}
  \end{equation}
  Let $D \defeq E(G[U_0]) \setminus E(G[U_r])$ be the set of edges that are deleted from $G[U]$ after the $r$ updates.
  Note that the each increase in $\Br_{U_r,\ell}(v)$ from the initial $\Br_{U_0,\ell}(v)$ for $v \in U_r$ corresponds to an edge $e$ incident to $v$ in $W_{U_0,\ell}$ that embeds into a deleted edge, i.e., $\Pi_{W_{U_0,\ell} \to G[U_0]}(e) \cap D \neq \emptyset$.
  Let $D^{-1} \subseteq E(W_{U_0,\ell})$ be the set of such edges.
  We can analyze the size of $|D^{-1}|$ by considering an $e \in D^{-1}$.
  \begin{itemize}
    \item If both endpoints of $e$ are in $U_k$, then $\Pi_{W_{U_0,\ell}}(e)$ must use one edge in $E_{G[U_0]}(S_{\mathrm{out}}, \overline{S_{\mathrm{out}}}) \cup E_{G[U_0]}(\overline{S_{\mathrm{in}}}, S_{\mathrm{in}})$ since it must enters and leaves one of $S_{\mathrm{out}}$ and $S_{\mathrm{in}}$.
    \item If $e \in E_{W_{U_0,\ell}}(S_{\mathrm{out}}, \overline{S_{\mathrm{out}}})$, then $\Pi_{W_{U_0,\ell} \to G[U_0]}(e) \cap E_{G[U_0]}(S_{\mathrm{out}}, \overline{S_{\mathrm{out}}}) \neq \emptyset$; similarly, if $e \in E_{W_{U_0,\ell}}(\overline{S_{\mathrm{in}}}, S_{\mathrm{in}})$, then $\Pi_{W_{U_0,\ell} \to G[U_0]}(e) \cap E_{G[U_0]}(\overline{S_{\mathrm{in}}}, S_{\mathrm{in}}) \neq \emptyset$.
    \item Otherwise, we have $e \in E_{W_{U_0,\ell}}(\overline{S_{\mathrm{out}}}, S_{\mathrm{out}})$ or $e \in E_{W_{U_0,\ell}}(S_{\mathrm{in}}, \overline{S_{\mathrm{in}}})$.
  \end{itemize}
  This gives us the bound of
  \begin{align*}
    \Bc_{U,\ell}(D^{-1}) \leq \frac{1}{\psi_{\ell}\phi} \cdot \big(\Bc_G\left(E_{G[U_0]}(S_{\mathrm{out}}, \overline{S_{\mathrm{out}}})\right) &+ \Bc_G\left(E_{G[U_0]}(\overline{S_{\mathrm{in}}}, S_{\mathrm{in}})\right)\big) \\
    &+ \Bc_{U,\ell}\left(E_{W_{U_0,\ell}}(\overline{S_{\mathrm{out}}}, S_{\mathrm{out}})\right) + \Bc_{U,\ell}\left(E_{W_{U_0,\ell}}(S_{\mathrm{in}}, \overline{S_{\mathrm{in}}})\right)
  \end{align*}
  by the congestion of $\Pi_{W_{U_0,\ell} \to G[U_0]}$ from Property \labelcref{item:witness:congestion}.
  To bound the right-hand side, we prove the following claim.
  \begin{claim}
    For any $(R, \phi, \psi)$-witness $(W, \Bc, \Br, \Pi_{W\to G})$ we have $\Bc(E_W(\overline{S}, S)) \leq \frac{1}{\psi}(\Bc(E_W(S, \overline{S})) + R)$ for all $S \subseteq V$.
    \label{claim:two-directions}
  \end{claim}
  
  \begin{proof}
    Note that $\Bc(E_W(\overline{S}, S)) \leq \min\{\vol_{W,\Bc}(S), \vol_{W,\Bc}(\overline{S})\}$ and thus it suffices to bound the latter.
    If $\Bgamma(S) \leq \Bgamma(\overline{S})$, then since $W$ is an out-witness we have $\vol_{W,\Bc}(S) \leq \vol_{W,\Bc}(S) + \Br(S) \leq \frac{1}{\psi}(\Bc(E_W(S, \overline{S})) + \Br(S)) \leq \frac{1}{\psi}(\Bc(E_W(S, \overline{S})) + R)$.
    Similarly, if $\Bgamma(S) > \Bgamma(\overline{S})$, then since $\rev{W}$ is an out-witness we have $\vol_{W,\Bc}(\overline{S}) \leq \vol_{W,\Bc}(\overline{S}) + \Br(\overline{S}) \leq \frac{1}{\psi}(\Bc(E_{\rev{W}}(\overline{S}, S)) + \Br(\overline{S})) \leq \frac{1}{\psi}(\Bc(E_W(S, \overline{S})) + R)$.
  \end{proof}

  Following \cref{claim:two-directions}, we have
  \begin{align*}
    \Bc(D^{-1}) &\leq \frac{2}{\psi_{\ell}\phi} \cdot \sum_{j \in [r]}\delta_j + \frac{1}{\psi_{\ell}}\left(\Bc_{U_0,\ell}\left(E_{W_{U_0,\ell}}(S_{\mathrm{out}}, \overline{S_{\mathrm{out}}})\right) + R\right) + \frac{1}{\psi_{\ell}}\left(\Bc_{U_0,\ell}\left(E_{W_{U_0,\ell}}(\overline{S_{\mathrm{in}}}, S_{\mathrm{in}})\right) + R\right) \\
    &\leq \frac{4}{\psi_{\ell}^2\phi}\sum_{j \in [r]}\delta_j + \frac{2R}{\psi_{\ell}}.
  \end{align*}
  The lemma follows by adding the initial value of $\|\Br_{U_0,\ell}\|_1 \leq R$ to the above quantity.
\end{proof}

Now, we consider the more general case where $A_i$ may be non-empty.
Moreover, we would like to derive a bound in terms only of external cuts.
Let $\delta_{\mathrm{ext}}$ be the sum of $\delta_i$'s for which $S_i$ is an external cut.
Let $\Delta \defeq \Delta_1 + \cdots + \Delta_r$.
We derive a bound when the following conditions are met.

\begin{condition}
  The following holds.
  \begin{enumerate}[(i)]
    \item\label{item:sparse} Each internal cut $S_i$ satisfies $\delta_i \leq \frac{\phi\psi_{\ell}^2}{128}\vol_{F_{i-1},\Bc_G}(S_i)$.
    \item\label{item:small-side} Each cut $S_i$ satisfies either $\vol_{F_{i-1},\Bc_G}(S_i) \leq \frac{1}{4}\vol_{F_{0},\Bc_G}(U_0)$ or $\vol_{F_{i-1},\Bc_G}(S_i) \leq \frac{1}{2}\vol_{F_{i-1},\Bc_G}(U_{i-1})$ when it is found. %
    \item\label{item:parameters} $\psi_{\ell} \leq \frac{1}{16}$, $R, \Delta \leq \frac{\psi_{\ell}}{64}\vol_{F_{0},\Bc_G}(U_0)$, and $\delta_{\mathrm{ext}} \leq \frac{\phi\psi_{\ell}^2}{800}\vol_{F_0,\Bc_G}(U_0)$.
  \end{enumerate}
  \label{cond:blow-up}
\end{condition}

Observe that in either case of \labelcref{item:small-side}, we have by \labelcref{item:parameters} that
\begin{equation}
  \vol_{F_{r},\Bc_G}(S_i) \leq \vol_{F_{i-1},\Bc_G}(S_i) + \Delta \leq \max\left\{\frac{1}{4}\vol_{F_0,\Bc_G}(U_0), \frac{1}{2}\vol_{F_{i-1},\Bc_G}(U_{i-1})\right\} + \Delta \leq \frac{5}{8}\vol_{F_{r},\Bc_G}(U_0).
  \label{eq:real-cond}
\end{equation}
We will later show that \cref{cond:blow-up} indeed holds (with high probability) throughout our algorithm for expander decomposition maintenance.
For now we assume this is the case and prove the following \cref{lemma:bound-from-external-cuts} using \cref{lemma:union-of-sparse-cut} whose proof is deferred to \cref{appendix:omitted-proofs}.

\begin{restatable}{lemma}{UnionOfSparseCuts}
  Given a graph $G = (V, E)$ and a sequence of cuts $S_1, \ldots, S_k$ where $S_i \subseteq V_{i-1}$ with $V_i \defeq V_{i-1} \setminus S_i$ and $V_0 \defeq V$ satisfies
  \begin{equation}
    \sum_{i \in [k]}\min\left\{\Bc_G(E_{G[V_{i-1}]}(S_i,\overline{S_i})), \Bc_G(E_{G[V_{i-1}]}(\overline{S_i},S_i))\right\} < \phi \cdot \sum_{i \in [k]}\vol_{F,\Bc_G}(S_i)
    \label{eq:union-sparse}
  \end{equation}
  and
  \[ \sum_{i \in [k]}\vol_{F,\Bc_G}(S_i) \leq \alpha \cdot \vol_{F,\Bc_G}(V), \]
  there is a $\left(\min\left\{\frac{\alpha}{2}, 1-\alpha\right\}\vol_{F,\Bc_G}(V)\right)$-balanced $\left(2\phi \min\left\{1, \frac{\alpha}{1-\alpha}\right\}\right)$-sparse cut in $(G,\Bc)$ with respect to $F$.
  \label{lemma:union-of-sparse-cut}
\end{restatable}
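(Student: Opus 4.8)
\textbf{Plan for proving \cref{lemma:union-of-sparse-cut}.}

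The plan is to prove the contrapositive in spirit: from the sequence of cuts $S_1,\dots,S_k$ satisfying the hypotheses, I want to \emph{construct} a single cut of $G$ that is simultaneously balanced and sparse with respect to $F$. The natural candidate is a union of a sub-collection $\bigcup_{j\in\cJ}S_j$ for a carefully chosen index set $\cJ\subseteq[k]$. The key structural tool is \cref{obs:boundary-of-union-of-cuts}: for any $\cJ$, the boundary $E_G(S,\overline S)$ with $S=\bigcup_{j\in\cJ}S_j$ is contained in the ``out-boundaries'' of the cuts indexed by $\cJ$ together with the ``in-boundaries'' of the cuts indexed by $[k]\setminus\cJ$, and symmetrically for $E_G(\overline S,S)$. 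So if I split each $S_i$ into being \emph{out-sparse} (its forward boundary $\Bc_G(E_{G[V_{i-1}]}(S_i,\overline{S_i}))$ is the smaller of the two directions) or \emph{in-sparse} otherwise, and set $\cJ=\{i: S_i \text{ out-sparse}\}$, then $E_G(S,\overline S)$ (for $S=\bigcup_{i\in\cJ}S_i$) picks up only the small directions, so $\Bc_G(E_G(S,\overline S))\le\sum_i\min\{\cdots\}<\phi\sum_i\vol_{F,\Bc_G}(S_i)$. The opposite direction $E_G(\overline S,S)$ must be handled by the complementary cut $\overline S=V\setminus S$, which contains $\bigcup_{i\notin\cJ}S_i$; I would apply the same reasoning to get $\Bc_G(E_G(\overline S,S))$ bounded by the small directions of the in-sparse cuts.

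\textbf{Main steps, in order.} First, partition $[k]$ into out-sparse and in-sparse indices and let $S_{\mathrm{out}}=\bigcup_{\text{out-sparse}}S_i$, $S_{\mathrm{in}}=\bigcup_{\text{in-sparse}}S_i$; note these are disjoint cuts of $G$ and by \cref{obs:boundary-of-union-of-cuts} each has one direction of its boundary bounded by the corresponding partial sum of the $\min\{\cdot,\cdot\}$ terms, so together these two directions sum to at most $\phi\sum_i\vol_{F,\Bc_G}(S_i)\le\phi\alpha\vol_{F,\Bc_G}(V)$. Second, by an averaging/pigeonhole argument, at least one of $S_{\mathrm{out}}$ or $S_{\mathrm{in}}$ carries at least half the total $F$-volume $\sum_i\vol_{F,\Bc_G}(S_i)$; WLOG say $\vol_{F,\Bc_G}(S_{\mathrm{out}})\ge\tfrac12\sum_i\vol_{F,\Bc_G}(S_i)$. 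Third, I need the chosen cut to be \emph{balanced}: $\min\{\vol_{F,\Bc_G}(S_{\mathrm{out}}),\vol_{F,\Bc_G}(\overline{S_{\mathrm{out}}})\}$ should be $\ge\min\{\alpha/2,1-\alpha\}\vol_{F,\Bc_G}(V)$. The lower bound on $\vol_{F,\Bc_G}(S_{\mathrm{out}})$ will follow once I know $\sum_i\vol_{F,\Bc_G}(S_i)$ is not too small — but this is not given directly; I may need to observe that if $\sum_i\vol_{F,\Bc_G}(S_i)$ is tiny then \eqref{eq:union-sparse} forces the cuts to essentially not exist, or more carefully, handle the regime split: either $\sum_i\vol_{F,\Bc_G}(S_i)\ge\alpha\vol_{F,\Bc_G}(V)/\text{const}$ giving balance on the $S$-side, or its complement has large $F$-volume giving balance there. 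Fourth, combine the volume bound with the boundary bound: $\Bc_G(E_G(S_{\mathrm{out}},\overline{S_{\mathrm{out}}}))<\phi\sum_i\vol_{F,\Bc_G}(S_i)\le 2\phi\,\vol_{F,\Bc_G}(S_{\mathrm{out}})$, and similarly in the reverse direction via the $S_{\mathrm{in}}$ cut and the complement; the factor $\min\{1,\alpha/(1-\alpha)\}$ in the conclusion comes from the case where the balanced side is actually $\overline{S_{\mathrm{out}}}$ rather than $S_{\mathrm{out}}$, so the boundary must be compared against $\vol_{F,\Bc_G}(\overline{S_{\mathrm{out}}})\ge(1-\alpha)\vol_{F,\Bc_G}(V)$ instead, scaling the sparsity bound.

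\textbf{Expected main obstacle.} The routine part is the boundary accounting via \cref{obs:boundary-of-union-of-cuts}. The delicate part is getting \emph{both} directions of sparsity for the \emph{same} cut $S$: $S_{\mathrm{out}}$ has a good forward boundary but its backward boundary $E_G(\overline{S_{\mathrm{out}}},S_{\mathrm{out}})$ is not obviously controlled — it could contain large ``in-boundary'' contributions from the out-sparse cuts themselves. The resolution I anticipate is that being an out-sparse cut only controls one direction, so I should not expect $S_{\mathrm{out}}$ itself to be two-sided sparse; instead I need sparsity of a cut in the ``$\phi$-sparse with respect to $F$'' sense, which requires $\min$ of the two directions to be small — and for that I can take whichever of $S_{\mathrm{out}}$, $S_{\mathrm{in}}$ is balanced and observe that one of its two boundary directions is small by construction, which is exactly what the definition of $\phi$-sparse demands (it's a $\min$, not a simultaneous bound). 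So the obstacle largely dissolves once one reads the definition carefully; the remaining care is purely in tracking the constant factors ($\alpha/2$ vs $1-\alpha$, and $2\phi$ vs $2\phi\cdot\alpha/(1-\alpha)$) through the case analysis on whether $\alpha\le 1/2$ or $\alpha>1/2$ and whether the balanced side is $S$ or $\overline S$.
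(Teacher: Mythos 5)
Your proposal follows essentially the same route as the paper's proof: partition the cuts by which direction is the smaller boundary, form $S_{\mathrm{out}}$ and $S_{\mathrm{in}}$, bound one boundary direction of the larger-volume piece via \cref{obs:boundary-of-union-of-cuts}, and close by the definition of $\phi$-sparse (which, as you correctly realized, only requires the \emph{minimum} of the two boundary directions to be small, so one-sided control suffices). The one point worth flagging: you raised a valid concern about the balance step --- the hypothesis only gives $\sum_i\vol_{F,\Bc_G}(S_i)\le\alpha\vol_{F,\Bc_G}(V)$, yet the conclusion's $\frac{\alpha}{2}\vol_{F,\Bc_G}(V)$ lower bound on balance requires $\vol_{F,\Bc_G}(S_{\mathrm{out}})\ge\frac{1}{2}\sum_i\vol_{F,\Bc_G}(S_i)\ge\frac{\alpha}{2}\vol_{F,\Bc_G}(V)$, which needs the matching $\ge$. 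The paper's proof silently uses this; the cleanest reading is that $\alpha$ denotes the \emph{exact} ratio $\sum_i\vol_{F,\Bc_G}(S_i)/\vol_{F,\Bc_G}(V)$, and that is how the lemma is applied downstream (in the proof of \cref{lemma:bound-from-external-cuts}, where both a lower and upper bound on that ratio are available). With that reading, your case split disappears and the argument is exactly the paper's.
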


Essentially, the above lemma says that if one can successively carve out many ``small'' sparse cuts $S_i$, then the original graph must have contained a ``large'' sparse cut. Now, in the following lemma, we establish that most of the change in $\Br$ and $\vol_F$ comes from the external cuts.

\begin{lemma}
  In \cref{scenarion:stability}, if \cref{cond:blow-up} holds, then we have $\|\Br_{W_{U_r,\ell}}\|_1 \leq \Gamma$ and $\vol_{F_r,\Bc_G}(U_r) \geq \vol_{F_r,\Bc_G}(U_0) - \Gamma$ for $\Gamma \defeq \frac{4(R+\Delta)}{\psi_{\ell}} + \frac{8}{\psi_{\ell}^2\phi}\delta_{\mathrm{ext}}$.
  \label{lemma:bound-from-external-cuts}
\end{lemma}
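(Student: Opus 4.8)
\textbf{Proof proposal for \cref{lemma:bound-from-external-cuts}.}

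The plan is to separate the $r$ updates into those arising from external cuts and those arising from internal cuts, and to show that the contribution of the internal cuts to both $\|\Br_{W_{U_r,\ell}}\|_1$ and the drop in $\vol_{F,\Bc_G}(U)$ is negligible compared to the external ones. First I would observe that $\vol_{F_r,\Bc_G}(U_r) = \vol_{F_r,\Bc_G}(U_0) - \sum_{i \in [r]}\vol_{F_r,\Bc_G}(S_i)$, so it suffices to bound $\sum_{i}\vol_{F_r,\Bc_G}(S_i)$; by \eqref{eq:real-cond} each cut $S_i$ has small side, and the same will follow for $\|\Br\|_1$ since the analysis of \cref{lemma:blow-up} bounds the blow-up in $\Br$ by $O(R/\psi_\ell) + O(1/(\psi_\ell^2\phi))\sum_i \delta_i$ plus the direct contribution $O(\Delta/\psi_\ell)$ from terminal additions $A_i$ (each $e \in A_i \setminus F$ increments $\Br$ at two endpoints by $\Bc_G(e)$, totalling $2\Bc_G(A_i) = \Delta_i$). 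So the crux reduces to bounding $\sum_{i}\delta_i$ and $\sum_i \vol_{F_r,\Bc_G}(S_i)$ in terms of only $\delta_{\mathrm{ext}}$, $R$, and $\Delta$.

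The key step is to contradict the existence of the $(R,\phi,\psi_\ell)$-witness $W_{U_0,\ell}$ via \cref{claim:no-balanced-sparse-cut} (which rules out a $\frac{2R}{\psi_\ell}$-balanced $\frac{\phi\psi_\ell^2}{2}$-sparse cut with respect to $F$ in $G[U_0]$). Suppose for contradiction that $\sum_{i \in [r]}\vol_{F_r,\Bc_G}(S_i)$ is too large, say exceeds some threshold $\Gamma' = \Theta(\Gamma)$. I would restrict attention to the internal cuts only: by \cref{cond:blow-up}\labelcref{item:sparse}, each internal $S_i$ has $\delta_i \le \frac{\phi\psi_\ell^2}{128}\vol_{F_{i-1},\Bc_G}(S_i)$, so the sequence of internal cuts satisfies the hypothesis \eqref{eq:union-sparse} of \cref{lemma:union-of-sparse-cut} with $\phi \mapsto \frac{\phi\psi_\ell^2}{128}$ (after dropping the external cuts from the sequence and collapsing the indices, which is valid because removing external cuts from $G[U_0]$ only removes edges, so each internal $S_i$ remains a cut in the appropriate $G[V_{i-1}]$ and its boundary capacities can only decrease). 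One subtlety: the terminal set grows from $F_0$ to $F_r$ across the sequence; I would apply \cref{lemma:union-of-sparse-cut} with the \emph{final} terminal set $F_r$ (so $\vol_{F_r,\Bc_G}$ throughout), using that $\vol_{F_{i-1},\Bc_G}(S_i) \le \vol_{F_r,\Bc_G}(S_i)$ and absorbing the discrepancy into the constant, and noting by \eqref{eq:real-cond} that the total internal volume is at most a constant fraction $\alpha \le \tfrac{5}{8}$ of $\vol_{F_r,\Bc_G}(U_0)$. \cref{lemma:union-of-sparse-cut} then produces an $\Omega(\vol_{F_r,\Bc_G}(U_0))$-balanced $O(\phi\psi_\ell^2)$-sparse cut with respect to $F_r$ in $G[U_0]$. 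Shrinking this cut back to $F_0$ (it loses at most $\delta_{\mathrm{ext}} + \Delta$ boundary and at most $\Delta$ volume, which by \cref{cond:blow-up}\labelcref{item:parameters} are small enough that balance and sparseness are preserved up to constants) gives a $\frac{2R}{\psi_\ell}$-balanced $\frac{\phi\psi_\ell^2}{2}$-sparse cut with respect to $F_0$ in $G[U_0]$, contradicting \cref{claim:no-balanced-sparse-cut} applied to $W_{U_0,\ell}$. Hence $\sum_i \vol_{F_r,\Bc_G}(S_i)$ over internal cuts is $O(R/\psi_\ell + \Delta + \delta_{\mathrm{ext}}/(\phi\psi_\ell^2))$, and summing with the external contribution $\sum_{S_i\ \text{ext}}\vol_{F_r,\Bc_G}(S_i) = O(\delta_{\mathrm{ext}}/(\phi\psi_\ell^2) + \Delta)$ (again because external cuts carved from a $\frac{2R}{\psi_\ell}$-witness-free graph cannot have too large total volume, or alternatively since each external cut is itself a \alg{Cut}{} input whose boundary feeds into $\delta_{\mathrm{ext}}$ and whose small side is controlled) yields the claimed bound $\Gamma = \frac{4(R+\Delta)}{\psi_\ell} + \frac{8}{\psi_\ell^2\phi}\delta_{\mathrm{ext}}$ after tracking constants carefully.

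Finally I would plug this back: $\vol_{F_r,\Bc_G}(U_r) \ge \vol_{F_r,\Bc_G}(U_0) - \Gamma$ follows directly, and $\|\Br_{W_{U_r,\ell}}\|_1 \le \Gamma$ follows from the refinement of \cref{lemma:blow-up} — the $\Br$-increments come from (a) witness edges whose embedding paths cross deleted edges, which the \cref{claim:two-directions} argument bounds by $O(1/(\psi_\ell^2\phi))\cdot(\text{total boundary})$, and (b) direct terminal additions totalling $\Delta$ — where now the "total boundary" is bounded using \cref{obs:boundary-of-union-of-cuts} on internal versus external cuts and the just-derived volume bounds. The main obstacle I anticipate is the bookkeeping around the shifting terminal set $F_0 \subseteq F_1 \subseteq \cdots \subseteq F_r$: \cref{lemma:union-of-sparse-cut} is stated for a fixed $F$, \cref{claim:no-balanced-sparse-cut} is about $F_0$, but the cuts $S_i$ are sparse with respect to the intermediate $F_{i-1}$, so one must be careful that the slack in \cref{cond:blow-up}\labelcref{item:parameters} (the $\frac{\psi_\ell}{64}$ and $\frac{\phi\psi_\ell^2}{800}$ factors) is exactly what is needed to absorb the $\Delta$-sized and $\delta_{\mathrm{ext}}$-sized discrepancies when translating between terminal sets, and that the balance threshold survives. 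Getting all the constants to line up so the final contradiction with \cref{claim:no-balanced-sparse-cut} is clean is the delicate part; the structural steps (union-of-sparse-cuts, the witness obstruction, the $\Br$-blow-up accounting) are each already essentially in hand from \cref{lemma:blow-up,lemma:union-of-sparse-cut,claim:no-balanced-sparse-cut,claim:two-directions}.
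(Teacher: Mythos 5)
Your proposal has the right ingredients (\cref{claim:no-balanced-sparse-cut}, \cref{lemma:union-of-sparse-cut}, \cref{lemma:blow-up}) and the right high-level strategy, but it departs from the paper's proof in two places, one of which is a genuine gap.

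\textbf{A clean trick you miss.} The paper removes the $F_0 \subseteq \cdots \subseteq F_r$ bookkeeping you flag as the main obstacle with a ``$0$-th update'': imagine an extra call \alg{UpdateWitness}{$U, \emptyset, A_1\cup\cdots\cup A_r$} made first. This turns $W_{U_0,\ell}$ into an $(R+\Delta,\phi,\psi_\ell)$-witness of $(G[U_0],\Bc_G,F_r)$ and lets all subsequent reasoning happen with the single fixed terminal set $F_r$. There is then no need for your ``shrink back to $F_0$'' step and none of the delicate translation-between-terminal-sets you anticipate.

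\textbf{The gap.} You apply \cref{lemma:union-of-sparse-cut} to the \emph{internal cuts only} (``dropping the external cuts from the sequence and collapsing the indices''), and are then forced to separately bound $\sum_{\text{ext}}\vol_{F_r,\Bc_G}(S_i)$ in terms of $\delta_{\mathrm{ext}}$. That bound does not hold: external cuts are adversarial inputs to \alg{Cut}{}, are not required to be sparse, and can remove arbitrary volume with tiny boundary. Your parenthetical justifications (``cannot have too large total volume'' / ``whose small side is controlled'') do not have support in \cref{cond:blow-up}. Separately, the ``collapse the indices'' step is backwards: if $S_{j_k}$ is measured as a cut in $G[U_{j_k-1}]$ and you skip the external cuts that intervened, the boundary of $S_{j_k}$ in the larger graph (with those external edges still present) can only \emph{increase}, not decrease; and the boundary of the union $\bigcup_{\text{int}}S_j$ inside $G[U_0]$ is not controlled by $\sum_{\text{int}}\delta_j$ alone, because by \cref{obs:boundary-of-union-of-cuts} the external cuts contribute too.

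\textbf{What the paper does instead.} It applies \cref{lemma:union-of-sparse-cut} to the \emph{full} sequence $S_1,\ldots,S_r$, using the inequality $\sum_j\delta_j \leq \frac{\phi\psi_\ell^2}{128}\vol_{F_r}(S_{\mathrm{int}})+\delta_{\mathrm{ext}}$ together with $\delta_{\mathrm{ext}} \leq \frac{\phi\psi_\ell^2}{800}\vol_{F_0}(U_0)$ from \cref{cond:blow-up}\labelcref{item:parameters} to verify the hypothesis directly. A first application of this argument shows $B_r := \sum_j\vol_{F_r}(S_j)\leq \frac{1}{8}\vol_{F_r}(U_0)$ (here \eqref{eq:real-cond} guarantees the balance threshold for the contradiction with \cref{claim:no-balanced-sparse-cut}). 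A case split then finishes: if $B_r\leq\frac{4(R+\Delta)}{\psi_\ell}$, plug directly into \cref{lemma:blow-up}; otherwise a \emph{second} application of \cref{lemma:union-of-sparse-cut} (with $\alpha\leq 1/4$) shows that $\delta_{\mathrm{ext}}\geq\frac{31\phi\psi_\ell^2}{128}\vol_{F_r}(S'_{\mathrm{int}})$, hence $\sum_j\delta_j\leq 2\delta_{\mathrm{ext}}$, and \cref{lemma:blow-up} again gives both bounds. The external volume is never bounded separately; only the total $B_r$ is, and the case analysis is what lets $\delta_{\mathrm{ext}}$ take over when $B_r$ is large.
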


\begin{proof}
  Observe that we may imagine there is $0$-th update with an empty cut $S_0$ with $A_0 \defeq A_1 \cup \cdots \cup A_r$, and after running \alg{UpdateWitness}{$U, S_0, A_0$} we have $\|\Br_{U,\ell}\|_1 \leq R + \Delta$ and $W_{U,\ell}$ being a witness of $(G[U], \Bc_G, F_r)$ at which point we start considering \cref{scenarion:stability} with $A_i = \emptyset$ for all $i \in [r]$.
  Note that \cref{cond:blow-up}\labelcref{item:sparse} still holds in this case as well as \eqref{eq:real-cond}.
  Also note that $\vol_{F_{r},\Bc_G}(U_0) \leq 2\vol_{F_0,\Bc_G}(U_0)$ by the bound on $\Delta$ and thus \cref{cond:blow-up}\labelcref{item:parameters} implies $R, \Delta \leq \frac{\psi_{\ell}}{32}\vol_{F_{r},\Bc_G}(U_0)$ and $\delta_{\mathrm{ext}} \leq \frac{\phi\psi_{\ell}^2}{400}\vol_{F_{r},\Bc_G}(U_0)$.
  
  Let $B_i \defeq \vol_{F_{r},\Bc_G}(U_0) - \vol_{F_{r},\Bc_G}(U_i) = \sum_{j \in [i]}\vol_{F_r,\Bc_G}(S_j)$ be the total volume of the first $i$ cuts.
  We first show that $B_r \leq \frac{1}{8}\vol_{F_{r},\Bc_G}(U_0)$ must hold under the input assumption.
  Otherwise, let $i$ be such that $B_i \leq \frac{1}{8}\vol_{F_{r},\Bc_G}(U_0)$ and $B_{i+1} > \frac{1}{8}\vol_{F_{r},\Bc_G}(U_0)$.
  By $\vol_{F_{r},\Bc_G}(S_{i+1}) \leq \frac{5}{8}\vol_{F_{r},\Bc_G}(U_0)$ with \eqref{eq:real-cond} we know $B_{i+1} \leq \frac{3}{4}\vol_{F_{r},\Bc_G}(U_0)$.
  Let $S_{\mathrm{int}}$ and $S_{\mathrm{ext}}$ be the union of internal and external cuts among the first $i+1$ cuts.
  We have
  \[
    \sum_{j \in [i+1]}\delta_j \leq \frac{\phi\psi_{\ell}^2}{128}\vol_{F_{r},\Bc_G}(S_{\mathrm{int}}) + \delta_{\mathrm{ext}} \leq \frac{\phi\psi_{\ell}^2}{12}B_{i+1}
  \]
  by \cref{cond:blow-up}\labelcref{item:sparse} and that (1) $\vol_{F_{r},\Bc_G}(S_{\mathrm{int}}) \leq B_{i+1}$, (2) $B_{i+1} \geq \frac{1}{8}\vol_{F_{r},\Bc_G}(U_0)$, and (3) $\delta_{\mathrm{ext}} \leq \frac{\phi\psi_{\ell}^2}{400}\vol_{F_{r},\Bc_G}(U_0)$.
  Because $B_{i+1} \leq \frac{3}{4}\vol_{F_{r},\Bc_G}(U_0)$, \cref{lemma:union-of-sparse-cut} with $\alpha \leq 3/4$ implies there is a $\left(\frac{1}{8}\vol_{F_{r},\Bc_G}(U_0)\right)$-balanced $\frac{\phi\psi_{\ell}^2}{2}$-sparse cut in $(G[U_0], \Bc_G)$ with respect to $F_{r}$.
  This is a contradiction to \cref{claim:no-balanced-sparse-cut} with the fact that $\frac{2(R+\Delta)}{\psi_{\ell}} \leq \frac{1}{8}\vol_{F_{r},\Bc_G}(U_0)$.

  As a result, we may assume $B_r \leq \frac{1}{8}\vol_{F_{r},\Bc_G}(U_0)$.
  If $B_r \leq \frac{4(R+T)}{\psi_{\ell}}$, then we have $\sum_{j \in [r]}\delta_j \leq \frac{4(R+T)}{\psi_{\ell}} \cdot \frac{\phi\psi_{\ell}^2}{128} + \delta_{\mathrm{ext}}$ and the lemma follows by applying \cref{lemma:blow-up}.
  Otherwise, letting $S_{\mathrm{int}}^\prime$ be the union of all internal cuts, if $\delta_{\mathrm{ext}} \leq \frac{31\phi\psi_{\ell}^2}{128}\vol_{F_r,\Bc_G}(S_{\mathrm{int}}^\prime)$ then we must have
  \[
    \sum_{j \in [r]}\delta_j \leq \frac{\phi\psi_{\ell}^2}{128}\vol_{F_{r},\Bc_G}(S_{\mathrm{int}}^\prime) + \delta_{\mathrm{ext}} \leq \frac{\phi\psi_{\ell}^2}{4}\vol_{F_r,\Bc_G}(S_{\mathrm{int}}^\prime) \leq \frac{\phi\psi_{\ell}^2}{4}B_r,
  \]
  which again by \cref{lemma:union-of-sparse-cut} with $\alpha \leq 1/4$ implies the existence of a $\frac{B_r}{2}$-balanced $\frac{\phi\psi_{\ell}^2}{2}$-sparse cut which contradicts \cref{claim:no-balanced-sparse-cut}.
  To this end, we have shown that $\delta_{\mathrm{ext}} \geq \frac{31\phi\psi_{\ell}^2}{128}\vol_{F_r,\Bc_G}(S_{\mathrm{int}}^\prime)$ and therefore $\sum_{j \in [r]}\delta_j \leq 2\delta_{\mathrm{ext}}$.
  We can now apply \cref{lemma:blow-up} to conclude bound on $\|\Br_{W_r,\ell}\|_1$.

  As for the bound on $\vol_{F_{r},\Bc_G}(U_r)$, by the discussion above if $B_r > \frac{4(R+T)}{\psi_{\ell}}$ then $\sum_{j \in [r]}\delta_j \leq 2\delta_{\mathrm{ext}}$.
  Since $B_r \leq \frac{1}{4}\vol_{F_{r},\Bc_G}(U_0)$ this means that we must have $2\delta_{\mathrm{ext}} \geq \frac{\phi\psi_{\ell}^2}{4}B_r$, otherwise the same argument above implies the existence of $\frac{B_r}{2}$-balanced $\frac{\phi\psi_{\ell}^2}{2}$-sparse cut that contradicts \cref{claim:no-balanced-sparse-cut}.
  This thus leaves us with $B_r \leq \frac{8}{\phi\psi_{\ell}^2}\delta_{\mathrm{ext}}$ which the lemma statement asserts.

\end{proof}

\subsection{Maintaining Expander Decomposition} \label{subsec:maintain-exp-decomp}

We now present the algorithm for maintaining expander decomposition while interacting with $\cM_{\prev}$ which in turn gives the algorithm for converting from $k$-level hierarchy to a $(k+1)$-level one with significantly fewer cut edges.
The overall structure of our algorithms is similar to \cite[Algorithm 3]{HuaKGW23}, and we use the stability properties established earlier to derive an expected worst-case recourse guarantee.

\paragraph{Setup.}
Given the input $(k, \alpha_{\prev}, \beta_{\prev}, \phi_{\prev}, T_{\prev})$-hierarchy maintainer $\cM_{\prev}$, to maintain the $(k+1)$-th level (and thereby getting a better maintainer $\cM$), we will maintain the graph $G_{\cM}$ in which $F$ is $\phi$-expanding and its collection of strongly connected components
$\cU$.
We start with $F$ being the edge set not handled by $\cM_{\prev}$.
For each $U \in \cU$ we will maintain an estimate $\tau_U$ of $\vol_{F,\Bc_G}(U)$, a vector $\Bgamma_U \in \N^U$, and $L+1$ witnesses $(W_{U,\ell}, \Br_{U,\ell}, \Pi_{W_{U,\ell} \to G[U]})$ for $\ell \in \{0, \ldots, L\}$, where $(W_{U,\ell}, \Br_{U,\ell}, \Pi_{W_{U,\ell} \to G[U]})$ is an $(\infty, \phi, \psi_{\ell})$-witness of $(G[U], \Bc_G, F)$ with respect to $\Bgamma_U$, with parameters
\begin{equation}
  \psi_{L} \defeq \widetilde{\psi}\quad\text{and}\quad\psi_{\ell} \defeq \frac{1}{2} \cdot \left(\frac{\psi_{\ell+1}^2}{2048 \cdot c_{\ref{thm:flow}} \cdot z^{3L}}\right)^L
  \label{eq:psi}
\end{equation}
that satisfy
\begin{equation}
  \frac{1}{\psi_{0}} \leq {\log n}^{L^{O(L)}}\quad\text{and}\quad\frac{\psi_{\ell+1}^2}{\psi_{\ell}^{1/L}} \geq \Omega(\log^{3L} n).
  \label{eq:compare-psi}
\end{equation}
The algorithm will ensure that, with high probability, each $\|\Br_{W,\ell}\|_1$ falls between $\frac{\psi_{\ell}}{10}\tau_U^{\ell/L}$ and $\tau_U^{\ell/L}$.
This can be enforced deterministically by rebuilding a witness whenever $\|\Br_{W,\ell}\|_1$ grows too large.
However, that leaves us with only an amortized guarantee which as we have argued in \cref{subsec:overview} does not suffice for our purposes.
To achieve a stronger expected worst-case recourse, we define the following distribution $\cR_{t,\tau}$ on $\{0, \ldots, L\}$ from which we will sample a random level to rebuild after every update:
\begin{equation}
  \Pr_{x \sim \cR_{t,\tau}}[x \geq \ell] \defeq \min\left\{1, \frac{t}{\psi_0^2 \cdot \tau^{\ell/L}} \cdot c_{\mathrm{rb}}\ln n\right\}
  \label{eq:distribution}
\end{equation}
where $c_{\mathrm{rb}}$ is a fixed constant that controls the exponents in the with-high-probability statements that we will establish later.

\cref{alg:cut-add-terminal} is the implementation of the \textsc{Init()} and \textsc{Cut()} subroutines which given input parameters $L \in \N$ and $\phi \in (0, 1)$ converts $\cM_{\prev}$ into a $(k+1, \alpha, \beta, \phi^\prime, T)$-hierarchy maintainer $\cM$ for parameters $\alpha,\beta,\phi^\prime,T$ that we will establish in the end of the section.
These subroutines rely on the internal subroutine \textsc{MaintainExpander($U,\ell$)} whose implementation is given in \cref{alg:maintain-exp-decomp}.

\begin{algorithm}[ht!]
  \caption{Implementation of $(k + 1, \alpha, \beta, \phi^\prime, T)$-hierarchy maintainer}
  \label{alg:cut-add-terminal}
  
  \SetEndCharOfAlgoLine{}
  \SetKwInput{KwData}{Input}
  \SetKwInput{KwResult}{Output}
  \SetKwProg{KwProc}{function}{}{}
  \SetKwInOut{State}{global}
  \SetKwFunction{MaintainExpander}{MaintainExpander}
  \SetKwFunction{Cut}{Cut}
  \SetKwFunction{PostProcess}{PostProcess}
  \SetKwFunction{Init}{Init}
  
  \vspace{0.4em}

  \State{parameters $L \in \N$ and $\phi \in (0, 1)$}
  \State{the graph $G_{\cM}$ maintained by $\cM$}
  \State{the collection $\cU$ of SCCs of $G_{\cM}$}
  \State{the terminal edge set $F$}

  \vspace{0.4em}

  \KwProc{\Init{$G, \Bc_G$}} {
    Initialize $F \gets \cM_{\prev}.\alg{Init}{G,\Bc_G}$.\;
    Initialize $G_{\cM} \gets G$ and $\cU \gets \{V\}$.\;
    Let $X \gets \MaintainExpander{V, L}$.\;
    Run \PostProcess{$V$} and then \textbf{return} $X$.\;
  }

  \KwProc{\Cut{$D$}} {
    $G_{\cM} \gets G_{\cM} \setminus D$.\;
    $X \gets \emptyset$ and $Q \gets \emptyset$.\;
    Let $\cU_D \gets \{U \in \cU: D \cap G[U] \neq \emptyset\}$ and $U_D \defeq \bigcup_{U \in \cU_D}U$\;
    \For{$U \in \cU_D$} {
     \tcp{see input requirement of Cut()}
      Let $D_U \defeq D \cap G[U]$ and $S_U \subseteq U$ be such that $\vol_{F,\Bc_G}(S_U) \leq \vol_{F,\Bc_G}(U \setminus S_U)$ and either $E_{G[U]}(S_U,\overline{S_U})$ or $E_{G[U]}(\overline{S_U}, S_U)$ equals $D_U$.\;
      \tcp{UpdateWitness() remove $S_U$ from $U$ and add $A_U$ to $F$}
      Run $A_U \gets \cM_{\prev}.\alg{Cut}{D_U}$ and \UpdateWitness{$U, S_U, A_U$}. \label{line:remove-edges-external}\;
      Sample $k \sim \cR_{\Bc_G(D_U)/\phi + 2\Bc_G(A_U),\tau_U}$ and $X \gets X \cup \MaintainExpander{U, k}$ \label{line:sample-cut}.\;
      Add $S_U$ to $Q$.\;
    }
    \lFor{$S \in Q$} {
      $X \gets X \cup \MaintainExpander{S, L}$. \label{line:build-small-pieces-cut}
    }
    Run \PostProcess{$U_D$} and then \textbf{return} $X$.\;
    \textbf{return} $X$.\;
  }

  \KwProc{\PostProcess{$Y$}} {
    \tcp{ensure $\cU$ is exactly the SCCs of $G_{\cM}$}
    \For{$U \in \cU$ such that $U \subseteq Y$} {
      Let $U_1, \ldots, U_k$ be the strongly connected components of $G[U]$.\;
      Replace $U$ in $\cU$ by $U_1, \ldots U_k$.\;
    }
  }
\end{algorithm}

\begin{algorithm}[htp!]
  \caption{Maintaining expander decomposition}
  \label{alg:maintain-exp-decomp}

  \SetEndCharOfAlgoLine{}
  \SetKwInput{KwData}{Input}
  \SetKwInput{KwResult}{Output}
  \SetKwProg{KwProc}{function}{}{}
  \SetKwInOut{State}{global}
  \SetKwFunction{MaintainExpander}{MaintainExpander}
  \SetKwFunction{CutOrEmbed}{CutOrEmbed}
  \SetKwFunction{PruneOrRepair}{PruneOrRepair}
  \SetKwFunction{UpdateWitness}{UpdateWitness}
  \SetKwFor{Loop}{loop}{}{}

  \vspace{0.4em}

  \State{a hierarchy $\cH_{\prev}$ of $G_{\cM} \setminus F$ maintained by $\cM_{\prev}$}

  \vspace{0.4em}

  \KwProc{\MaintainExpander{$U, \ell$}} {
    \lIf{$\vol_{F,\Bc_G}(U) < 1/\phi$} {
      \textbf{return} $\emptyset$.\label{line:too-little-volume}
    }
    $X \gets \emptyset$ and $Q \gets \emptyset$.\;
    \Loop{} { \label{line:while-loop}
      \If{$\ell = L$} {
        $\tau_U \gets \frac{\psi_0^2}{64z} \vol_{F,\Bc_G}(U)$.\label{line:reset}\;
        Run procedure \CutOrEmbed{$G[U], \Bc_G, F, \phi, \frac{\psi_L}{10}\tau_U, \cH_{\prev}[U]$}. \label{line:sparse-cut-L}\label{line:call-cut-matching}\;
      }
      \Else {
        Run procedures \PruneOrRepair{$G[U], \Bc_G, F, \Br_{U,\ell+1}, W_{U,\ell+1}, \Pi_{W_{U,\ell+1}\to G[U]}, \phi, \psi_{\ell+1}, \frac{\psi_{\ell}}{20} \cdot \tau_U^{\ell/L}, \cH_{\prev}[U]$} and \PruneOrRepair{$\rev{G}[U], \Bc_G, F, \Br_{U,\ell+1}, \rev{W}_{U,\ell+1}, \Pi_{\rev{W}_{U,\ell+1}\to \rev{G}[U]}, \phi, \psi_{\ell+1}, \frac{\psi_{\ell}}{20} \cdot \tau_U^{\ell/L}, \cH_{\prev}[U]$}.\label{line:sparse-cut-l}\label{line:call-prune}\;
      }
      \If{a cut $S$ is returned} {
        Let $D$ be edge sets among $E_{G[U]}(S,\overline{S})$ and $E_{G[U]}(\overline{S}, S)$ with smaller total capacities. \label{line:D}\;
        update $X \gets X \cup D$ and $G_{\cM} \gets G_{\cM} \setminus D$\;
        Let $A \gets \cM_{\prev}.\Cut{D}$ and run \UpdateWitness{$U, S, A$}.\label{line:remove-edges}\label{line:call-cut}\tcp*{assert $A \subseteq U$}
        Add $S$ to $Q$.\;
        \If{a sample $k \sim \cR_{2\Bc_G(A),\tau_U}$ satisfies $k > \ell$} {\label{line:sample}
          Update $X \gets X \cup \alg{MaintainExpander}{U,k}$.\label{line:early-return}\;
          \textbf{break}. \label{line:break-l}\;
        }
      }
      \Else { \label{line:else}
        \If{$\ell = L$} {
          Set $(W_{U,L}, \Bc_{U,L},\Br_{U,\ell}, \Pi_{W_{U,\ell} \to G[U]})$ to be the witness returned by \CutOrEmbed{}.\;
          $\Bgamma_U \gets \Bgamma$.\;
        }
        \Else {
          Let $(W_1, \Bc_1, \Br_1, \Pi_{W_1 \to G[U]})$ and $(W_2, \Bc_2, \Br_2, \Pi_{W_2 \to \rev{G}[U]})$ be the witnesses returned by \PruneOrRepair{}.\;
          Set $W_{U,\ell} \gets W_1 \cup \rev{W_2}$, $\Bc_{U,\ell} \gets \Bc_1 + \Bc_2$, and $\Br_{U,\ell} \gets \Br_1 + \Br_2$.\label{line:rebuilt} \tcp*{the witness $W_{U,\ell}$ is rebuilt}
        }
        \textbf{break}. \label{line:break}\;
      }
    }
    \lIf{$\ell > 0$} { \label{line:call-below}
      $X \gets X \cup \alg{MaintainExpander}{U, \ell-1}$.\label{line:recurse}
    }
    \lFor{$S \in Q$} {
      $X \gets X \cup \alg{MaintainExpander}{S, L}$.\label{line:build-small-pieces}
    }
    \textbf{return} $X$.\;
  }

\end{algorithm}

We say that $W_{U,\ell}$ is \emph{rebuilt} if \textsc{MaintainExpander($U,\ell$)} enters Line~\ref{line:rebuilt}.
Recall that one of our goals is to ensure that $\|\Br_{U,\ell}\|_1$ falls between $\frac{\psi_{\ell}}{10}\tau_U^{\ell/L}$ and $\tau_U^{\ell/L}$, and we show that this is indeed the case right after $W_{U,\ell}$ is rebuilt.

\begin{lemma}
  If $W_{U,\ell}$ is rebuilt at Line~\ref{line:rebuilt}, then the new $(W_{U,\ell}, \Bc_{U,\ell}, \Br_{U,\ell}, \Pi_{W_{U,\ell} \to G[U]})$ is a $\left(\frac{\psi_{\ell}}{10} \tau_U^{\ell/L}, \phi, \psi_{\ell}\right)$-witness of $(G[U], \Bc_G, F)$.
  \label{lemma:repair}
\end{lemma}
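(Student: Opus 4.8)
The plan is to verify directly that the rebuilt tuple satisfies the four conditions of \cref{def:witness} with the claimed parameters, using \cref{lemma:prune-or-repair} as the main engine. Recall that at Line~\ref{line:rebuilt} we invoked \alg{PruneOrRepair}{} on $G[U]$ and on $\rev{G}[U]$ with target fake-edge budget $R^\prime = \frac{\psi_{\ell}}{20} \cdot \tau_U^{\ell/L}$ and input witness of quality $\psi_{\ell+1}$, and we are in the ``else'' branch, meaning both calls returned witnesses rather than sparse cuts. So \alg{PruneOrRepair}{} hands us an $(R^\prime, \phi, \psi^\prime)$-out-witness of $(G[U], \Bc_G, F)$ and an $(R^\prime, \phi, \psi^\prime)$-out-witness of $(\rev{G}[U], \Bc_G, \rev{F})$, both with respect to $\Bgamma_U$, where $\psi^\prime = \frac{\psi_{\ell+1}^4}{2048 c_{\ref{thm:flow}} z^2}$. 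First I would check that this $\psi^\prime$ is at least the target $\psi_{\ell}$: by the recursive definition \eqref{eq:psi}, $\psi_{\ell} = \frac{1}{2}\left(\frac{\psi_{\ell+1}^2}{2048 c_{\ref{thm:flow}} z^{3L}}\right)^L \le \frac{\psi_{\ell+1}^4}{2048 c_{\ref{thm:flow}} z^2} = \psi^\prime$ (using $L \ge 1$, $z \ge 2$, and that the inner fraction is $\le 1$), so the expansion guarantee only improves when we relabel the parameter from $\psi^\prime$ down to $\psi_{\ell}$; Properties \labelcref{item:witness:expansion,item:witness:congestion} are monotone in $\psi$ in the right direction, and Property~\labelcref{item:witness:degree} likewise holds with the smaller $\psi_{\ell}$. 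Combining the forward out-witness $W_1$ and the reversed out-witness $\rev{W_2}$ via $W_{U,\ell} = W_1 \cup \rev{W_2}$, $\Bc_{U,\ell} = \Bc_1 + \Bc_2$, $\Br_{U,\ell} = \Br_1 + \Br_2$ then yields a genuine two-sided $(\cdot,\phi,\psi_{\ell})$-witness, since each side of \cref{def:witness} is witnessed by the corresponding component (one must check that taking the union of edge sets and summing $\Bc$ and $\Br$ preserves each of the four properties — this is routine, the only slightly delicate point being that $\deg_{W_{U,\ell},\Bc_{U,\ell}}(v) = \deg_{W_1,\Bc_1}(v) + \deg_{W_2,\Bc_2}(v)$ and $\Br_{U,\ell}(v) = \Br_1(v)+\Br_2(v)$, so Property~\labelcref{item:witness:degree} for the combined object follows by adding the two inequalities and then weakening, exactly as in the analogous step of \cite{HuaKGW23}).

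The substantive part is establishing the \emph{lower} bound $\|\Br_{U,\ell}\|_1 \ge \frac{\psi_{\ell}}{10}\tau_U^{\ell/L}$, i.e.\ the $R$ in the claimed $(R,\phi,\psi_\ell)$-witness, and the matching upper bound (which is just $\|\Br_{U,\ell}\|_1 \le 2R^\prime = \frac{\psi_\ell}{10}\tau_U^{\ell/L}$, since $\|\Br_1\|_1, \|\Br_2\|_1 \le R^\prime$ by \cref{def:witness}\labelcref{item:witness:num-fake-edges} — note this already gives the stated $\frac{\psi_\ell}{10}\tau_U^{\ell/L}$ upper bound on $\|\Br_{U,\ell}\|_1$, which is exactly the $R$ value in the lemma statement). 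So actually the lemma as stated only asserts that $W_{U,\ell}$ is a $\left(\frac{\psi_\ell}{10}\tau_U^{\ell/L},\phi,\psi_\ell\right)$-witness, i.e.\ an \emph{upper} bound on the number of fake edges, and I would make sure not to over-claim: the proof just needs $\|\Br_{U,\ell}\|_1 \le 2R^\prime = 2\cdot\frac{\psi_\ell}{20}\tau_U^{\ell/L} = \frac{\psi_\ell}{10}\tau_U^{\ell/L}$. To legitimately apply \cref{lemma:prune-or-repair} I also need to confirm its input hypotheses hold at the call site: that $R^\prime \le R \le \frac{\psi_{\ell+1}}{8z}\vol_{F,\Bc_G}(U)$ where $R = \|\Br_{U,\ell+1}\|_1$ is the size of the input witness's fake-edge vector, and that $\cH_{\prev}[U]$ is a $\phi_{\prev}$-expander hierarchy of $(G[U]\setminus F, \Bc_G)$ of height $O(\log n)$. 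The hierarchy-input requirement is exactly what the setup in \cref{subsec:maintain-exp-decomp} and the \textsc{Fact} about $\cH[U]$ guarantee (since $\cM_{\prev}$ maintains a hierarchy of $G_{\cM}\setminus F$ and $U$ is a union of its SCCs). The inequality $R^\prime \le R$ and $R \le \frac{\psi_{\ell+1}}{8z}\vol_{F,\Bc_G}(U)$ is where the careful choice of the $\tau_U^{\ell/L}$ thresholds and of $z = 20\log n$ and the separation \eqref{eq:compare-psi} between consecutive $\psi_\ell$ values enters — one needs that, inductively (from the invariant maintained on level $\ell+1$), $\|\Br_{U,\ell+1}\|_1 \le \tau_U^{(\ell+1)/L}$, and then that $\frac{\psi_\ell}{20}\tau_U^{\ell/L} \le \tau_U^{(\ell+1)/L} \le \frac{\psi_{\ell+1}}{8z}\vol_{F,\Bc_G}(U)$, the last step using $\tau_U \le \frac{\psi_0^2}{64z}\vol_{F,\Bc_G}(U)$ from Line~\ref{line:reset} together with \eqref{eq:compare-psi}.

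I would structure the write-up as: (1) state that we are in the ``else'' branch so \alg{PruneOrRepair}{} returned witnesses on both $G[U]$ and $\rev{G}[U]$; (2) verify its preconditions, invoking the invariant $\|\Br_{U,\ell+1}\|_1 \le \tau_U^{(\ell+1)/L}$ on the next level and the numeric relations among $\psi_\ell$, $\tau_U$, $z$, and $\vol_{F,\Bc_G}(U)$; (3) record that \alg{PruneOrRepair}{} thus produces $(R^\prime,\phi,\psi^\prime)$-out-witnesses with $R^\prime = \frac{\psi_\ell}{20}\tau_U^{\ell/L}$ and $\psi^\prime = \Omega(\psi_{\ell+1}^4/\log^2 n) \ge \psi_\ell$ by \eqref{eq:psi}; (4) combine them into the two-sided object $W_{U,\ell}$ and check the four properties of \cref{def:witness} survive the union/sum and the weakening $\psi^\prime \rightsquigarrow \psi_\ell$; (5) conclude $\|\Br_{U,\ell}\|_1 \le 2R^\prime = \frac{\psi_\ell}{10}\tau_U^{\ell/L}$, giving the claimed $\left(\frac{\psi_\ell}{10}\tau_U^{\ell/L}, \phi, \psi_\ell\right)$-witness. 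The main obstacle I anticipate is bookkeeping step (2): making sure the chain of inequalities $R^\prime \le \|\Br_{U,\ell+1}\|_1 \le \frac{\psi_{\ell+1}}{8z}\vol_{F,\Bc_G}(U)$ actually follows from the invariants the algorithm maintains on level $\ell+1$ and the definitions \eqref{eq:psi}--\eqref{eq:compare-psi}, since this quietly relies on properties (the level-$(\ell+1)$ invariant $\|\Br_{U,\ell+1}\|_1 \le \tau_U^{(\ell+1)/L}$) that are themselves only established later in the section — so I would phrase it as ``assuming the invariant on level $\ell+1$'' and flag that the full induction closing this is completed in the subsequent recourse/runtime analysis.
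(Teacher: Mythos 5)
Your proposal correctly identifies the overall shape of the argument (invoke \cref{lemma:prune-or-repair}, combine the two out-witnesses, and bound $\|\Br_{U,\ell}\|_1 \le 2R'$), and you correctly catch yourself that only the \emph{upper} bound on the fake-edge mass is needed. However, there is a genuine gap in the step you dismiss as ``routine'': the claim that since $\psi_{\ell} \le \psi'$, the combined object $W_{U,\ell} = W_1 \cup \rev{W_2}$ automatically inherits Property~\labelcref{item:witness:expansion} with parameter $\psi_{\ell}$. This is false. Each of $W_1$ and $W_2$ individually satisfies the expansion inequality with parameter $\psi'$, but those inequalities bound the out-cut of $W_1$ (resp.\ the in-cut of $W_2$) against the volume of $W_1$ alone (resp.\ $W_2$ alone). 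After the union, the right-hand side of the expansion inequality becomes $\vol_{W_{U,\ell},\Bc_{U,\ell}}(S)+\Br_{U,\ell}(S) = \bigl(\vol_{W_1,\Bc_1}(S)+\Br_1(S)\bigr) + \bigl(\vol_{W_2,\Bc_2}(S)+\Br_2(S)\bigr)$, and by Property~\labelcref{item:witness:degree} the second summand can exceed the first by a factor as large as $1/\psi'$. Hence the combined object is only an out-witness with expansion parameter roughly $\tfrac{1}{2}{\psi'}^2$, not $\psi'$; the edge cut does not grow but the volume does. This quadratic loss is exactly why the recursion \eqref{eq:psi} is engineered to give $\psi_{\ell} \le \tfrac{1}{2}{\psi'}^2$, a strictly stronger inequality than the $\psi_{\ell} \le \psi'$ you verify. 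Your monotonicity argument only replaces the parameter of a \emph{fixed} object by a smaller one; it cannot absorb a degradation of the object itself.

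Concretely, the missing step is: deduce from Property~\labelcref{item:witness:degree} that $\psi'(\deg_{W_2,\Bc_2}(v)+\Br_2(v)) \le \deg_{W_1,\Bc_1}(v)+\Br_1(v) \le \frac{1}{\psi'}(\deg_{W_2,\Bc_2}(v)+\Br_2(v))$, so $\vol_{W_1,\Bc_1}(S)+\Br_1(S) \ge \frac{\psi'}{2}\bigl(\vol_{W_{U,\ell},\Bc_{U,\ell}}(S)+\Br_{U,\ell}(S)\bigr)$, and then chain $\Bc_{U,\ell}(E_{W_{U,\ell}}(S,\overline S)) + \Br_{U,\ell}(S) \ge \Bc_1(E_{W_1}(S,\overline S)) + \Br_1(S) \ge \psi'(\vol_{W_1,\Bc_1}(S)+\Br_1(S)) \ge \frac{{\psi'}^2}{2}\bigl(\vol_{W_{U,\ell},\Bc_{U,\ell}}(S)+\Br_{U,\ell}(S)\bigr)$ when $\Bgamma_U(S) \le \Bgamma_U(\overline S)$, with the symmetric argument using $W_2$ in the other case. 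A smaller version of the same oversight also affects Properties~\labelcref{item:witness:degree} and \labelcref{item:witness:congestion}: both the degree bound and the embedding congestion double upon taking the union, so one needs $\psi_{\ell} \le \psi'/2$, again strictly stronger than $\psi_{\ell} \le \psi'$; this too is covered by $\psi_{\ell} \le \tfrac{1}{2}{\psi'}^2$. Finally, your step (2) on verifying the preconditions of \alg{PruneOrRepair}{} (the chain $R' \le \|\Br_{U,\ell+1}\|_1 \le \frac{\psi_{\ell+1}}{8z}\vol_{F,\Bc_G}(U)$) is not actually part of this lemma in the paper --- the lemma is an unconditional statement about what the rebuilt object is whenever Line~\ref{line:rebuilt} is reached, and the precondition-checking lives in the surrounding induction --- so including it is harmless but overreaches the statement.
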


\begin{proof}
  We have $\|\Br_{U,\ell}\|_1 = \|\Br_1\| + \|\Br_2\| \leq \frac{\psi_{\ell}}{10} \cdot \tau_U^{\ell/L}$.
  Let $\psi_{\ell}^\prime \defeq \frac{\psi_{\ell+1}^4}{2048 \cdot c_{\ref{thm:flow}} \cdot z^2}$ as in \cref{lemma:prune-or-repair}.
  We have $\psi_{\ell} \leq \frac{1}{2}{\psi_{\ell}^{\prime}}^2$ by \eqref{eq:psi}.
  We then have $\deg_{W_{U,\ell},\Bc_{U,\ell}}(v) + \Br_{U,\ell}(v) \leq \frac{2}{\psi_{\ell}^\prime}\deg_{F,\Bc_G}(v) \leq \frac{1}{\psi_{\ell}}\deg_{F,\Bc_G}(v)$.
  Also, the congestion of $\Pi_{W_{U,\ell} \to G[U]}$ is bounded by $2 \cdot \frac{1}{\phi\psi_{\ell}^\prime} \leq \frac{1}{\phi\psi_{\ell}}$.
  It thus remains to verify Property \labelcref{item:witness:expansion}.
  Consider a cut where $\Bgamma_U(S) \leq \Bgamma_U(\overline{S})$.
  We have $\Bc_1(E_{W_1}(S, \overline{S})) + \Br_1(S) \geq \psi_{\ell}^\prime(\vol_{W_1,\Bc_1}(S) + \Br_1(S))$.
  Note that due to Property \labelcref{item:witness:degree}, it holds that $\psi_{\ell}^\prime(\deg_{W_2,\Bc_2}(v) + \Br_2(v)) \leq \deg_{W_1,\Bc_1}(v) + \Br_1(v) \leq \frac{1}{\psi_{\ell}^\prime}(\deg_{W_2,\Bc_2}(v) + \Br_2(v))$.
  Consequently, we have
  \[ \Bc_{U,\ell}(E_{W_{U,\ell}}(S,\overline{S})) \geq \Bc_1(E_{W_1}(S,\overline{S})) \geq \psi^\prime_{\ell}(\vol_{W_1,\Bc_1}(S) + \Br_1(S)) \geq \frac{{\psi_{\ell}^\prime}^2}{2}(\vol_{W_{U,\ell},\Bc_{U,\ell}}(S) + \Br_{U,\ell}(S)). \]
  Similarly, if $\Bgamma_U(S) > \Bgamma_U(\overline{S})$, then 
  \begin{align*}
    \Bc_{U,\ell}(E_{W_{U,\ell}}(S,\overline{S})) &\geq \Bc_2(E_{\rev{W_2}}(S,\overline{S})) = \Bc_2(E_{W_2}(\overline{S}, S)) \\
    &\geq \psi^\prime_{\ell}(\vol_{W_1,\Bc_1}(S) + \Br_1(S)) \geq \frac{{\psi_{\ell}^\prime}^2}{2}(\vol_{W_{U,\ell},\Bc_{U,\ell}}(S) + \Br_{U,\ell}(S)).
  \end{align*}
  This concludes the proof.
\end{proof}

As we have a bound on how large $\Br_{U,\ell}(v)$ can be, the bound of \cref{lemma:prune-or-repair} with respect to $\vol_F(S) + \Br_{U,\ell}(S)$ can be used to bound the actual volume.

\begin{claim}
  Each cut $S$ found at Line~\ref{line:sparse-cut-l} satisfies $\min\{\Bc_G(E_{G[U]}(S,\overline{S})), \Bc_G(E_{G[U]}(\overline{S},S))\} \leq \frac{\phi\psi_{\ell+1}^2}{128}\vol_{F,\Bc_G}(S)$ and $\vol_{F,\Bc_G}(S) \geq \frac{\psi_{\ell+1}^2\psi_{\ell}}{640z} \tau_{U}^{\ell/L}$.
  \label{claim:is-sparse-cut}
\end{claim}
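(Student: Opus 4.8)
The plan is to obtain both bounds by a direct substitution into the output guarantee of \cref{lemma:prune-or-repair}, using the validity of the witness $W_{U,\ell+1}$ to replace the hybrid quantity $\vol_{F,\Bc_G}(S) + \Br_{U,\ell+1}(S)$ by $\vol_{F,\Bc_G}(S)$ up to a $\Theta(1/\psi_{\ell+1})$ factor.

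First I would observe that a cut $S$ reported at Line~\ref{line:sparse-cut-l} is, by construction, returned through Case~\ref{item:repair:sparse-cut} of one of the two \textsc{PruneOrRepair} calls on that line, both of which are invoked with witness parameter $\psi_{\ell+1}$, with the witness $W_{U,\ell+1}$ (resp.\ its reversal), and with target $R^\prime = \frac{\psi_{\ell}}{20}\tau_U^{\ell/L}$. Hence \cref{lemma:prune-or-repair} gives
\[
  \min\bigl\{\Bc_G(E_{G[U]}(S,\overline S)),\,\Bc_G(E_{G[U]}(\overline S, S))\bigr\} < \frac{\phi\psi_{\ell+1}^3}{256}\bigl(\vol_{F,\Bc_G}(S) + \Br_{U,\ell+1}(S)\bigr)
\]
and
\[
  \vol_{F,\Bc_G}(S) + \Br_{U,\ell+1}(S) \ge \frac{\psi_{\ell+1}}{16z}\cdot R^\prime = \frac{\psi_{\ell+1}\psi_{\ell}}{320z}\,\tau_U^{\ell/L},
\]
where the $\min$ appears because for the call on $\rev{G}[U]$ the bounded quantity is $\Bc_G(E_{\rev{G}[U]}(S,\overline S)) = \Bc_G(E_{G[U]}(\overline S, S))$, and because $\vol_{F,\Bc_G}$ (and $\Br_{U,\ell+1}$) do not depend on edge orientation.

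Next I would feed in the invariant maintained throughout the algorithm that $(W_{U,\ell+1}, \Bc_{U,\ell+1}, \Br_{U,\ell+1}, \Pi_{W_{U,\ell+1}\to G[U]})$ is an $(\infty, \phi, \psi_{\ell+1})$-witness of $(G[U], \Bc_G, F)$ — true immediately after each rebuild by \cref{lemma:repair} and preserved under every intervening \textsc{UpdateWitness} call by \cref{claim:reman-valid-witness}. Summing Property~\labelcref{item:witness:degree} of this witness over $v \in S$ yields $\Br_{U,\ell+1}(S) \le \frac{1}{\psi_{\ell+1}}\vol_{F,\Bc_G}(S)$, so $\vol_{F,\Bc_G}(S) \le \vol_{F,\Bc_G}(S) + \Br_{U,\ell+1}(S) \le \frac{2}{\psi_{\ell+1}}\vol_{F,\Bc_G}(S)$ (using $\psi_{\ell+1}\le 1$). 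Substituting the right inequality into the sparsity bound gives $\min\{\cdots\} < \frac{\phi\psi_{\ell+1}^3}{256}\cdot\frac{2}{\psi_{\ell+1}}\vol_{F,\Bc_G}(S) = \frac{\phi\psi_{\ell+1}^2}{128}\vol_{F,\Bc_G}(S)$; substituting the left inequality into the volume bound gives $\vol_{F,\Bc_G}(S) \ge \frac{\psi_{\ell+1}}{2}\bigl(\vol_{F,\Bc_G}(S) + \Br_{U,\ell+1}(S)\bigr) \ge \frac{\psi_{\ell+1}^2\psi_{\ell}}{640z}\,\tau_U^{\ell/L}$, which is exactly the claim. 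Since everything is a direct substitution into \cref{lemma:prune-or-repair}, there is no real obstacle; the only point to watch is that it is the witness at level $\ell+1$ with parameter $\psi_{\ell+1}$ (not $\psi_\ell$) that is passed to \textsc{PruneOrRepair}, and that this witness still obeys Property~\labelcref{item:witness:degree} at call time, which is precisely what \cref{claim:reman-valid-witness} secures.
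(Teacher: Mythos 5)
Your proposal is correct and takes essentially the same approach as the paper: invoke \cref{lemma:prune-or-repair} with the witness $W_{U,\ell+1}$, $\psi = \psi_{\ell+1}$, and $R' = \frac{\psi_\ell}{20}\tau_U^{\ell/L}$, then use Property~\labelcref{item:witness:degree} of the witness (preserved through \cref{claim:reman-valid-witness}) to bound $\Br_{U,\ell+1}(S) \le \frac{1}{\psi_{\ell+1}}\vol_{F,\Bc_G}(S)$, hence $\vol_{F,\Bc_G}(S) + \Br_{U,\ell+1}(S) \le \frac{2}{\psi_{\ell+1}}\vol_{F,\Bc_G}(S)$, and substitute into both bounds. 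The arithmetic and the constants match the paper exactly.
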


\begin{proof}
  The cut $S$ found by \cref{lemma:prune-or-repair} satisfies
  \[ \min\{\Bc_G(E_{G[U]}(S,\overline{S})), \Bc_G(E_{G[U]}(\overline{S},S))\} \leq \frac{\phi\psi_{\ell+1}^3}{256}(\vol_{F,\Bc_G}(S) + \Br_{U,\ell+1}(S)) \]
  and $\vol_{F,\Bc_G}(S) + \Br_{U,\ell+1}(S) \geq \frac{\psi_{\ell+1}}{16z} \cdot \frac{\psi_{\ell}}{20} \tau_U^{\ell/L}$.
  By \cref{def:witness}\labelcref{item:witness:degree}, we have $\Br_{U,\ell+1}(S) \leq \frac{1}{\psi_{\ell+1}}\vol_{F,\Bc_G}(S)$, and therefore $\vol_{F,\Bc_G}(S) + \Br_{U,\ell+1}(S) \leq \frac{2}{\psi_{\ell+1}}\vol_{F,\Bc_G}(S)$.
  The claim follows.
\end{proof}

Recall that a cut passed to \alg{UpdateWitness}{} is \emph{internal} if it comes from Line~\ref{line:remove-edges} in \cref{alg:maintain-exp-decomp} and \emph{external} if it comes from Line~\ref{line:remove-edges-external} in \cref{alg:cut-add-terminal}.
We further call such an internal cut \emph{level-$(\ell+1)$} as it is found based on $W_{U,\ell+1}$ when running \alg{MaintainExpander}{$X,\ell$}.

\begin{definition}
A witness $W_{U,\ell}$ is \emph{valid} if since the last time it was rebuilt, we only call \alg{RemoveCut}{$U, S, A$} on it with either external or level-$\ell^\prime$ internal cuts with $\ell^\prime \leq \ell$; otherwise, $W_{U,\ell}$ is \emph{invalid}.
\end{definition}

A witness $W_{U,\ell}$ is \emph{being rebuilt} from the moment the algorithm enters \alg{MaintainExpander}{$X, k$} for some $k \geq \ell$ until either it is actually rebuilt or the call to \alg{MaintainExpander}{$X, k$} returns.
By definition, if $W_{U,\ell}$ is currently being rebuilt, then so are all $W_{U,\ell^\prime}$ with $\ell^\prime < \ell$.
Observe that a witness is always valid unless either it is currently being rebuilt or has volume $\vol_{F,\Bc_G}(U) < 1/\phi$ (due to the early return on Line~\ref{line:too-little-volume}).\footnote{Note that this would have been vacuously true if our algorithm does not have the early break on Line~\ref{line:break-l} in \cref{alg:maintain-exp-decomp} (like in \cite[Algorithm 3]{HuaKGW23}). Still, if we break early on Line~\ref{line:break-l}, then the call to \alg{MaintainExpander}{$X, k$} on Line~\ref{line:early-return} will be in charge of rebuilding $W_{U,\ell}$.}
Let us call such a $U$ with $\vol_{F,\Bc_G}(U) < 1/\phi$ \emph{negligible}.

\begin{observation}
  Each witness $W_{U,\ell}$ for a non-negligible $U$ remains valid unless it is currently being rebuilt, in which case it must actually be rebuilt before the call to the corresponding \emph{\alg{MaintainExpander}{$U,k$}} returns.
  Moreover, if the algorithm is currently running \emph{\alg{MaintainExpander}{$U, \ell$}}, then all $W_{U,k}$ where $k > \ell$ are valid.
  \label{obs:will-be-rebuilt}
\end{observation}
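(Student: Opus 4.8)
\textbf{Proof proposal for \cref{obs:will-be-rebuilt}.}

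The plan is to unpack the bookkeeping in \cref{alg:maintain-exp-decomp} and \cref{alg:cut-add-terminal} and track which cuts can be passed to \alg{UpdateWitness}{} on a given witness between two consecutive rebuilds. I would start from the definition of ``valid'': $W_{U,\ell}$ stays valid as long as the only calls \alg{UpdateWitness}{$U, S, A$} touching it (since its last rebuild) use external cuts or level-$\ell'$ internal cuts with $\ell' \le \ell$. So the whole statement reduces to showing that the \emph{only} way a level-$\ell'$ internal cut with $\ell' > \ell$ gets produced while $W_{U,\ell}$ is un-rebuilt is if $W_{U,\ell}$ is currently being rebuilt, and that in that case it does get rebuilt before the governing \alg{MaintainExpander}{$U,k$} returns.

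First I would establish the ``being rebuilt'' structural claim: a level-$\ell'$ internal cut (with $\ell' = \ell_{\text{call}}+1$, using the naming in the paragraph after \cref{claim:is-sparse-cut}) is created only inside a call \alg{MaintainExpander}{$U, \ell_{\text{call}}$}, and by the recursion on Line~\ref{line:recurse} and the early break / re-entry on Lines~\ref{line:break-l}--\ref{line:early-return}, whenever \alg{MaintainExpander}{$U, \ell_{\text{call}}$} is being executed there is some ancestor call \alg{MaintainExpander}{$U, k$} with $k \ge \ell_{\text{call}} \ge \ell'-1$ on the stack; hence every $W_{U,j}$ with $j \le k$, and in particular $W_{U,\ell}$ for any $\ell$ with $\ell < \ell'$, is ``being rebuilt'' by definition. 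Combined with the contrapositive: if $W_{U,\ell}$ is \emph{not} currently being rebuilt, then no level-$\ell'$ internal cut with $\ell' > \ell$ can be generated, so only external cuts and level-$(\le \ell)$ internal cuts reach it, so it is valid. That is the first sentence of the observation. For the second sentence I would argue that once the algorithm enters \alg{MaintainExpander}{$U, k$} it must exit the \textbf{loop} on Line~\ref{line:while-loop} either via Line~\ref{line:rebuilt} (when $k < L$) / the $\ell=L$ branch that sets $W_{U,L}$ (when $k=L$), which is exactly a rebuild of $W_{U,k}$, or via Line~\ref{line:break-l}, which immediately re-enters \alg{MaintainExpander}{$U, k'$} with $k' > k$ that is in charge instead; either way, descending the recursion on Line~\ref{line:recurse}, every $W_{U,j}$ for $j \le k$ is rebuilt before the top-level call returns, unless $U$ becomes negligible (Line~\ref{line:too-little-volume}) in the meantime — which is the stated exception. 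The ``moreover'' part then follows because when we are inside \alg{MaintainExpander}{$U,\ell$}, by \cref{obs:will-be-rebuilt}'s own first half the witnesses $W_{U,k}$ with $k>\ell$ are not currently being rebuilt (the only active rebuild on the $U$-stack at or above the current frame is at level $\ge$ that of the current frame, and the current frame is at level $\ell$, so higher witnesses are untouched), hence valid.

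The main obstacle I anticipate is the careful induction on the call stack: the early-break mechanism on Lines~\ref{line:break-l}--\ref{line:early-return} means that a single logical ``rebuild of level $k$'' can be handed off upward several times before it is actually serviced, and negligibility (Line~\ref{line:too-little-volume}) can short-circuit a pending rebuild. I would handle this by defining, for each point in the execution, the set of ``pending'' $(U,\ell)$ pairs as exactly those for which some \alg{MaintainExpander}{$U,k\ge \ell$} frame is on the stack, proving by induction on the recursion that this set is precisely the set of witnesses currently ``being rebuilt'', and that a pair leaves the pending set only via an actual rebuild or via $U$ turning negligible. Everything else (the fact that \alg{UpdateWitness}{} on Line~\ref{line:remove-edges} tags the cut with the level $\ell_{\text{call}}+1$, and that \alg{Cut}{} in \cref{alg:cut-add-terminal} only ever issues external cuts) is immediate from inspecting the pseudocode, so no real calculation is needed — it is purely an invariant-maintenance argument over the two algorithms.
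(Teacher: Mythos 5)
Your proof is correct, and since the paper states this result as an unproven Observation (with only a footnote gesturing at the early-break subtlety on Line~\ref{line:break-l}), what you are doing is essentially spelling out the implicit argument the paper relies on. The three pillars you identify — (i) a level-$(\ell'+1)$ internal cut is only generated from inside an active \alg{MaintainExpander}{$U, \ell'$} frame, at which point every $W_{U,j}$ with $j \le \ell'$ is by definition being rebuilt, (ii) every \alg{MaintainExpander}{$U,k$} call either rebuilds $W_{U,k}$ at Line~\ref{line:rebuilt} or delegates the rebuild (of everything at level $\le k'$) to a fresh \alg{MaintainExpander}{$U,k'$} with $k' > \ell$ via Line~\ref{line:early-return}, after which Line~\ref{line:recurse} continues the downward descent, and (iii) the negligibility check on Line~\ref{line:too-little-volume} is the one escape hatch, covered by the ``non-negligible'' hypothesis and Observation~\ref{obs:tiny-pieces-remain-tiny} — are exactly the right decomposition.

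A couple of small imprecisions worth tightening, none of which affect correctness. First, your phrase ``Line~\ref{line:break-l}, which immediately re-enters \alg{MaintainExpander}{$U,k'$}'' has the ordering backwards: Line~\ref{line:early-return} makes the call and Line~\ref{line:break-l} then breaks out of the while-loop, so the delegated rebuild has already completed when the break executes (and the subsequent call at Line~\ref{line:recurse} is a redundant-but-harmless re-descent, not the mechanism that rebuilds level $\ell$ itself). Second, the ``moreover'' part needs a little more than ``higher witnesses are untouched on the $U$-stack'': what you actually use is that for every $j > \ell$, $W_{U,j}$ was actually rebuilt at the moment \alg{MaintainExpander}{$U,j$} exited its loop (ending its being-rebuilt interval), and that every cut generated afterward while descending to level $\ell$ is of level at most $j$, so by the definition of validity none of them invalidate $W_{U,j}$; your induction on the stack suffices for this but should be stated as such. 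Neither point changes the conclusion.
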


\begin{observation}
  If $U$ becomes negligible at some point, then it remains negligible afterward.
  \label{obs:tiny-pieces-remain-tiny}
\end{observation}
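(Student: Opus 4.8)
The plan is to prove the stronger statement that $\vol_{F,\Bc_G}(U)$ is non-increasing once it has dropped below $1/\phi$, where ``$U$'' tracks a fixed piece of $\cU$ together with whatever pieces it is later split into. Since a piece is negligible precisely when $\vol_{F,\Bc_G}(U) < 1/\phi$, this is exactly the claim.

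First I would dispense with the harmless changes. The vertex set of a tracked piece only shrinks: the only operations modifying $\cU$ are \alg{UpdateWitness}{} (which replaces $U$ by $U \setminus S$ and $S$) and \textsc{PostProcess} (which replaces a piece by its strongly connected components), and in both cases the successor of any vertex lies in the piece that contained it; as $G$ and $\Bc_G$ are fixed, shrinking the vertex set can only decrease $\vol_{F,\Bc_G}$ for a fixed $F$. Hence the only way the value can go up is through a growth of $F$, which occurs only inside \alg{UpdateWitness}{$U', S', A'$} (\cref{alg:remove-edges}), where the set $A'$ unioned into $F$ satisfies $A' \subseteq G[U']$ by construction; consequently this changes $\vol_{F,\Bc_G}$ only of $U'$ (and its successors), not of any other piece of $\cU$.

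It therefore remains to argue that once $U$ is negligible it is never the piece on which \alg{UpdateWitness}{} is called and is never split by an internal cut. Internally this is immediate: \alg{MaintainExpander}{$U$} returns at Line~\ref{line:too-little-volume} before reaching \textsc{CutOrEmbed}, \textsc{PruneOrRepair}, or \alg{UpdateWitness}{}; and if the volume of a tracked piece first falls below $1/\phi$ in the middle of a \alg{MaintainExpander}{} loop, then $G[U]$ is $\phi$-expanding with respect to $F$ by \cref{claim:expanding-in-tiny-component}, so the \textsc{CutOrEmbed}/\textsc{PruneOrRepair} calls return a witness rather than a cut, no terminal edge is added to $U$, and the subsequent recursive calls down the levels again return immediately. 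The delicate case — and what I expect to be the main obstacle — is an external \alg{Cut}{$D$} (Line~\ref{line:remove-edges-external}), which processes exactly the pieces in $\cU_D = \{U' : D \cap G[U'] \neq \emptyset\}$. I would show that no negligible piece lies in $\cU_D$: the external $D$ is (the smaller-capacity direction of) a sparse cut returned by the sparse-cut subroutine of \cref{thm:flow} one level up, inside a non-negligible component and with respect to that level's terminal set, and \cref{lemma:prune-or-repair}/\cref{lemma:cut-matching} guarantee the small side of such a cut carries at least a fixed positive fraction of that terminal volume; combining this with the integrality of $\Bc_G$ and with the identity $G_{\cM_{\mathrm{nxt}}} \setminus F_{\mathrm{nxt}} = G_{\cM}$ between consecutive maintainers (which I would use to show that each piece's terminal volume does not increase going up the hierarchy), the boundary capacity of $D$ on any piece of $F$-volume below $1/\phi$ is forced to be strictly less than $1$, hence empty. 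With this in hand, no external cut touches a negligible piece, $F$ never gains an edge incident to one, $\vol_{F,\Bc_G}$ of the tracked piece is non-increasing, and the observation follows; the induction bottoms out at the trivial maintainer $\cM_0$, whose \alg{Cut}{} returns $\emptyset$ unconditionally.
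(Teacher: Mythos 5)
There is a genuine gap at the step you yourself identify as the crux: the claim that no external \alg{Cut}{$D$} ever touches a negligible piece. The argument you sketch for it does not go through. An external cut handed to this maintainer is a sparse cut found \emph{one level up}, and its sparseness (via \cref{lemma:prune-or-repair}, \cref{lemma:cut-matching}, \cref{claim:is-sparse-cut}) is measured against the \emph{higher level's} terminal volume of its own (possibly enormous) side $S'$ — not against the $F$-volume of whatever lower-level pieces it happens to pass through. So $\Bc_G(D)$, and a fortiori $\Bc_G(D\cap G[U])$ for a negligible piece $U$, can be far larger than $1$; the "fixed positive fraction of that terminal volume" lower bound on the small side gives no upper bound on how much of $D$ lands inside $U$. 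The auxiliary claim that "each piece's terminal volume does not increase going up the hierarchy" is also unsupported: \cref{obs:maintain-terminal} only says the next level's terminal set contains the newly cut edges, which is not the comparison you need. Note moreover that the paper's own \alg{Cut}{} routine (\cref{alg:cut-add-terminal}) iterates over \emph{all} pieces in $\cU_D$ with no negligibility check, so you are trying to establish a property the algorithm does not assume and which fails in general — external cuts can and do intersect negligible pieces.

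By contrast, the paper treats the statement as immediate and gives no proof: a piece's $F$-volume only changes at the moment an \alg{UpdateWitness}{} call is made on it, and negligible pieces are simply skipped by the early return in \alg{MaintainExpander}{}, so no internal processing ever adds terminal volume to them; your much stronger route (volume monotone once below $1/\phi$, external cuts provably avoiding negligible pieces) overshoots what is intended. A smaller inaccuracy: for a piece that becomes negligible mid-loop, it is not true that \textsc{CutOrEmbed}/\textsc{PruneOrRepair} must "return a witness rather than a cut" — \cref{claim:expanding-in-tiny-component} does not preclude a returned cut, since the sparsity there is against $\vol_F+\Br$ and the current $G[U]$ need not be strongly connected; what one can say is that any returned cut has boundary capacity below $1$ (by \cref{claim:is-sparse-cut} and $\vol_F(S)<1/\phi$), so by integrality the removed direction $D$ is empty and no terminal edges are added. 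That salvages your internal-cut case, but the external-cut case remains unproved and is the heart of your argument.
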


Consider two timestamps $t_1 < t_2$ throughout the execution of the algorithm.

\begin{definition}
  A tuple $(U, \ell, t_1, t_2)$ is \emph{active} if (i) $U$ is non-negligible at time $t_2$ (and thus from $t_1$ to $t_2$ by \cref{obs:tiny-pieces-remain-tiny}) and (ii) $W_{U,\ell}$ is not rebuilt nor being rebuilt in any point of time between $t_1$ and $t_2$ (inclusively).
\end{definition}

Consider an active tuple $(U, \ell, t_1, t_2)$.
Let $\delta_{\mathrm{ext}}^{(U,\ell)}(t_1, t_2)$ be the sum of the capacities of $D_U$'s of the \alg{Cut}{$D$} calls that happen from time $t_1$ to $t_2$.
Likewise, let $\Delta^{(U,\ell)}(t_1, t_2)$ be \emph{two times} the sum of the capacities of $A$'s of the \alg{UpdateWitness}{$U, S, A$} calls happened in this period of time.
In other words, $\Delta^{(U,\ell)}$ is an upper bound on the units of volume increased in $U$ from $t_1$ to $t_2$.
We show that for an active tuple $(U, \ell, t_1, t_2)$, with high probability both $\delta_{\mathrm{ext}}^{(U,\ell)}(t_1, t_2)$ and $\Delta^{(U,\ell)}(t_1, t_2)$ are bounded.
Note that as the value of $\tau_U$ only changes in \alg{MaintainExpander}{$U,L$}, during this period of time $\tau_U$ remains unchanged (otherwise $W_{U,\ell}$ would have been rebuilt at some point in between).
Also note that it is important to establish the bound against \emph{any} (possibly adversarial) sequence of \alg{Cut}{} calls, since as we have seen earlier we will encapsulate this hierarchy maintainer into another one that has a better quality.
The update sequence we see now thus depends on our previous outputs (hence the previous randomness used).

\begin{lemma}
  For any (possibly adaptive adversarial) sequence of \emph{\alg{Cut}{}} and any active tuple $(U, \ell, t_1, t_2)$, with high probability, we have $\Delta^{(U,\ell)}(t_1, t_2) \leq \frac{\psi_{\ell}}{10}\tau_U^{\ell/L}$ and $\delta_{\mathrm{ext}}^{(U,\ell)}(t_1, t_2) \leq \frac{\phi\psi_{\ell}^2}{40}\tau_U^{\ell/U}$.
  \footnote{The exponent in the with high probability statement depends on our choice of the constant $c_{\mathrm{rb}}$ in \eqref{eq:distribution}.}
  \label{lemma:good-event}
\end{lemma}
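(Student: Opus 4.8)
\textbf{Proof proposal for \cref{lemma:good-event}.}

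The plan is to bound the two quantities $\Delta^{(U,\ell)}(t_1,t_2)$ and $\delta_{\mathrm{ext}}^{(U,\ell)}(t_1,t_2)$ by a high-probability argument that crucially exploits the random level-sampling from the distribution $\cR_{t,\tau}$ in \eqref{eq:distribution}. The key point is that \emph{every} external \alg{Cut}{$D$} call that touches $U$ triggers a sample $k\sim\cR_{\Bc_G(D_U)/\phi+2\Bc_G(A_U),\tau_U}$ on Line~\ref{line:sample-cut} of \cref{alg:cut-add-terminal}, and whenever $k\ge\ell$ the subroutine \alg{MaintainExpander}{$U,k$} is invoked, which (by \cref{obs:will-be-rebuilt}, since $U$ is non-negligible) forces $W_{U,\ell}$ to be rebuilt. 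Hence, for the tuple $(U,\ell,t_1,t_2)$ to remain active, \emph{no} such sample in the window $[t_1,t_2]$ may have come out $\ge\ell$. By \eqref{eq:distribution}, a single update of "size" $t$ (where $t=\Bc_G(D_U)/\phi+2\Bc_G(A_U)$) avoids triggering a rebuild at level $\ell$ with probability $1-\min\{1,\tfrac{t}{\psi_0^2\tau_U^{\ell/L}}c_{\mathrm{rb}}\ln n\}$. I would first argue that if the total accumulated size $\sum t$ over the window exceeded $\frac{\psi_0^2\tau_U^{\ell/L}}{c_{\mathrm{rb}}\ln n}$ by, say, a factor of $\Theta(c/c_{\mathrm{rb}})$, then the probability that \emph{every} sample missed level $\ell$ would be at most $n^{-c}$ (multiplying the per-update miss-probabilities, using $1-x\le e^{-x}$ and the fact that $\tau_U$ is unchanged during the active window). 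Then note that $\Delta^{(U,\ell)}(t_1,t_2)$ and $\delta_{\mathrm{ext}}^{(U,\ell)}(t_1,t_2)$ are both at most a constant times this accumulated size (since $\delta_{\mathrm{ext}}$ sums $\Bc_G(D_U)$, which is $\le\phi\cdot t$ for the corresponding $t$, and $\Delta$ sums $2\Bc_G(A_U)\le t$). Matching against the target bounds $\frac{\psi_\ell}{10}\tau_U^{\ell/L}$ and $\frac{\phi\psi_\ell^2}{40}\tau_U^{\ell/L}$ then reduces to checking that $\psi_0^2/(c_{\mathrm{rb}}\ln n)$ is (up to constants) at most $\psi_\ell$ and $\phi\psi_\ell^2$ respectively — the latter following from $\psi_\ell\le\psi_0$, $\psi_\ell\le 1/16$, and a suitably large choice of $c_{\mathrm{rb}}$; one can absorb the $\phi$ factor because $\delta_{\mathrm{ext}}$ already carries one.

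The second ingredient is the union bound over all active tuples. Here I would observe that $U$ ranges over $O(n)$ possible SCC-vertex-sets ever created (since each \alg{Cut}{} splits an SCC into pieces and the total number of SCCs ever produced is $O(n)$), $\ell$ ranges over $\{0,\dots,L\}$ with $L=n^{o(1)}$, and while $t_1,t_2$ a priori range over polynomially many timestamps, for a \emph{fixed} $U$ and $\ell$ it suffices to union-bound over the maximal active windows — of which there are at most the number of times $W_{U,\ell}$ gets rebuilt, again polynomially bounded. (Alternatively: it suffices to prove the bound for $t_1$ being the last rebuild time of $W_{U,\ell}$ before $t_2$ and $t_2$ being "now", and there are only $\poly(n)$ many "now"s.) Since we can take $c$ in the per-tuple bound as large as we like, a union bound over $\poly(n)$ events preserves the high-probability guarantee.

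The main subtlety — and the reason the lemma explicitly says "possibly adaptive adversarial" — is that the sequence of \alg{Cut}{} operations we face is produced by the \emph{outer} hierarchy maintainer, which has already seen (and reacted to) our earlier outputs, hence our earlier random bits. So the analysis must hold against an adaptive adversary. The cleanest way to handle this is to note that the randomness used by the level-sampling on Line~\ref{line:sample-cut} for a \emph{given} update is fresh and independent of everything revealed so far (the adversary commits to $D_U$ and hence $t=\Bc_G(D_U)/\phi+2\Bc_G(A_U)$ \emph{before} seeing $k$), so conditioned on the entire history the per-update miss-probability bound still holds; one then chains these conditional bounds via a standard martingale/successive-conditioning argument (each sample is a fresh coin, and the "total size so far" is a stopping-time-measurable quantity). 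This lets the product-of-miss-probabilities computation go through verbatim even adaptively. I expect formalizing this adaptive step — specifying precisely what is measurable with respect to what, and that the $\tau_U$ in the denominator of \eqref{eq:distribution} is frozen throughout the active window because any change to $\tau_U$ happens only inside \alg{MaintainExpander}{$U,L$}, which would itself rebuild $W_{U,\ell}$ and end the window — to be the part that needs the most care; the arithmetic matching of $\psi$-parameters against \eqref{eq:psi}–\eqref{eq:compare-psi} is routine by comparison.
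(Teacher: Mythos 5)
Your core argument — that every sample from $\cR_{\cdot,\tau_U}$ during the active window must have missed level $\ell$, so the probability of activity decays as $\prod_i(1-p_i)\leq\exp(-\sum_i p_i)$, with $\tau_U$ frozen and the randomness fresh per update — is exactly the paper's proof, and your treatment of the adaptive adversary and the frozen $\tau_U$ is right and matches the paper's terse remark.

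However, there is a genuine gap in how you bound $\Delta^{(U,\ell)}(t_1,t_2)$. By definition, $\Delta^{(U,\ell)}$ sums $2\Bc_G(A)$ over \emph{all} calls to \UpdateWitness{$U,S,A$} in the window, not only those triggered on Line~\ref{line:remove-edges-external} by external \Cut{} calls. During the active window, \alg{MaintainExpander}{$U,k$} for $k<\ell$ can be running (it is only $k\geq\ell$ that makes $W_{U,\ell}$ ``being rebuilt''), and each internal cut found inside it triggers \UpdateWitness{$U,S,A$} on Line~\ref{line:remove-edges}, which contributes to $\Delta^{(U,\ell)}$. Your proposal only tracks the samples on Line~\ref{line:sample-cut} of \cref{alg:cut-add-terminal}, so it only bounds the \emph{external} part of $\Delta$. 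You must also include the samples on Line~\ref{line:sample} of \cref{alg:maintain-exp-decomp}, drawn from $\cR_{2\Bc_G(A),\tau_U}$ after each such internal \UpdateWitness{} call; each of these has $\Pr[k\geq\ell]\geq\min\{1,\frac{2\Bc_G(A)}{\psi_0^2\tau_U^{\ell/L}}c_{\mathrm{rb}}\ln n\}$, which is exactly what lets the product-of-miss-probabilities argument charge the internal contribution to $\Delta$. With those samples added, your unified accounting goes through; as written, it does not. (The paper, incidentally, runs two separate products — one indexed by all \UpdateWitness{} calls for $\Delta$, one indexed by external \Cut{} calls for $\delta_{\mathrm{ext}}$ — which avoids having to lump the two kinds of ``sizes'' together, but the unified version is equally valid once fixed.) A final minor note: the union bound over tuples is not part of this lemma in the paper — it is done separately when arguing that the event $\cK$ holds with high probability — so your second paragraph, while not wrong, proves more than the stated lemma requires.
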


\begin{proof}
  Let $\Delta_i$ be two times the total capacities of $A$ in the $i$-th call of \alg{RemoveCut}{$U, S, A$} from time $t_1$ to $t_2$.
  If $\Delta^{(U,\ell)}(t_1, t_2) > \frac{\psi_{\ell}}{10}\tau_U^{\ell/L}$, then the probability that none these subroutines called \alg{MaintainExpander}{$U, k$} for some $k \geq \ell$ (which scheduled $W_{U,\ell}$ to be rebuilt) is at most
  \[
    \prod_{i}\left(1 - \frac{\Delta_i}{\psi_0^2\tau_U^{\ell/L}} \cdot c_{\mathrm{rb}}\ln n\right) \leq \exp\left(-\sum_{i}\Delta_i \cdot \frac{c_{\mathrm{rb}} \ln n}{\psi_0^2\tau_U^{\ell/L}}\right) \leq \exp\left(-\frac{c_{\mathrm{rb}}\ln n}{10\psi_0}\right) \leq n^{-c_{\mathrm{rb}}/10}.
  \]
  Similarly, let $D_i$ be the total capacities $\Bc_G(D_U)$ of $D_U$ in the $i$-th call to \alg{Cut}{$D$} from time $t_1$ to $t_2$.
  If $\delta_{\mathrm{ext}}^{(U,\ell)}(t_1, t_2) = \sum_{i}D_i > \frac{\phi\psi_{\ell}^2}{40}\tau_U^{\ell/U}$, then the probability that none of the \alg{Cut}{} called \alg{MaintainExpander}{$U, k$} for some $k \geq \ell$ is at most
  \[
    \prod_{i}\left(1 - \frac{D_i}{\phi} \cdot \frac{1}{\psi_0^2\tau_U^{\ell/L}} \cdot c_{\mathrm{rb}}\ln n\right) \leq \exp\left(-\sum_{i}\frac{D_i}{\phi} \cdot \frac{c_{\mathrm{rb}} \ln n}{\psi_0^2\tau_U^{\ell/L}}\right) \leq \exp\left(-\frac{c_{\mathrm{rb}}\ln n}{40}\right) = n^{-c_{\mathrm{rb}}/40}.
  \]
  Consequently, with high probability, the bounds of $\Delta^{(U,\ell)}(t_1, t_2) \leq \frac{\psi_{\ell}}{10}\tau_U^{\ell/L}$ and $\delta_{\mathrm{ext}}^{(U,\ell)}(t) \leq \frac{\phi\psi_{\ell}^2}{40}\tau_U^{\ell/L}$ hold if $W_{U,\ell}$ is not currently scheduled to be rebuilt.
  Observe that we generate new randomness for each of our random choices, and thus this high probability guarantee works against any update sequence.
\end{proof}

In light of \cref{lemma:good-event}, let $\cK$ be the event such that
\begin{equation}
\Delta^{(U,\ell)}(t_1, t_2) \leq \frac{\psi_{\ell}}{10}\tau_U^{\ell/L}\quad\text{and}\quad\delta_{\mathrm{ext}}^{(U,\ell)}(t_1, t_2) \leq \frac{\phi\psi_{\ell}^2}{40}\tau_U^{\ell/L}\;
\label{eq:event-K}
\end{equation}
hold for all active tuples $(U, \ell, t_1, t_2)$.
Observe that there are only $\poly(n)$ effective timestamps throughout any possible execution of the algorithm\footnote{The subroutine \alg{UpdateWitness}{} will be called at most $n$ times and $F \subseteq E$ always hold so there will be at most $O(m)$ terminal additions.} and for each effective timestamp there are only $O(n)$ possible $U$'s at this time since they are vertex-disjoint, and thus by \cref{lemma:good-event} and a union bound $\cK$ happens with high probability.

\begin{lemma}
  Conditioned on $\cK$, for any $U \in \cU$ and time $t$, if $W_{U,L}$ is not currently being rebuilt at time $t$, then we have $\frac{\psi_0^2}{128z}\vol_{F,\Bc_G}(U) \leq \tau_U \leq \frac{\psi_0^2}{32z}\vol_{F,\Bc_G}(U)$; moreover, and for each $\ell \in \{0, \ldots, L\}$, if $W_{U,\ell}$ is not currently being rebuilt at time $t$, then $\|\Br_{U,\ell}\|_1 \leq \tau_U^{\ell/L}$ holds.
  \label{lemma:probability-guarantee}
\end{lemma}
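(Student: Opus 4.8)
The plan is to prove \cref{lemma:probability-guarantee} by induction on the timeline of operations, tracking how $\tau_U$ and $\|\Br_{U,\ell}\|_1$ evolve between successive rebuilds of the relevant witnesses. First I would fix $U$ and a time $t$ at which $W_{U,L}$ is not being rebuilt, and let $t_0 < t$ be the last time at which $W_{U,L}$ was rebuilt (if $U$ was created before any rebuild, take $t_0$ to be its creation time, at which point $\tau_U$ and all $\Br_{U,\ell}$ are set fresh). At time $t_0$ the algorithm executes Line~\ref{line:reset}, setting $\tau_U = \frac{\psi_0^2}{64z}\vol_{F,\Bc_G}(U)$ exactly; call this value $\tau^\star$ and let $V^\star = \vol_{F,\Bc_G}(U)$ be the volume at that moment. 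Since $W_{U,L}$ is not rebuilt between $t_0$ and $t$, the tuple $(U, L, t_0, t)$ is active (here I use that $U$ is non-negligible at time $t$, hence throughout $[t_0,t]$ by \cref{obs:tiny-pieces-remain-tiny}), so conditioning on $\cK$ gives $\Delta^{(U,L)}(t_0, t) \le \frac{\psi_L}{10}\tau^\star$. The only changes to $\vol_{F,\Bc_G}(U)$ between $t_0$ and $t$ are: decreases from removing cuts (via \alg{UpdateWitness}{}) and increases of total at most $\Delta^{(U,L)}(t_0,t)$ from terminal additions. Using \cref{lemma:bound-from-external-cuts} — after verifying that \cref{cond:blow-up} holds for the relevant window, which follows from \cref{claim:is-sparse-cut}, \cref{claim:no-balanced-sparse-cut}, and the bound on $\delta_{\mathrm{ext}}$ from $\cK$ — the total volume lost to cuts is $O\!\big((R+\Delta)/\psi_L + \delta_{\mathrm{ext}}/(\psi_L^2\phi)\big)$, which by the $\cK$-bounds is at most a small constant fraction of $\tau^\star$. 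Combined with $\Delta^{(U,L)}(t_0,t) \le \frac{\psi_L}{10}\tau^\star$ and the definition $\tau^\star = \frac{\psi_0^2}{64z}V^\star$, this sandwiches $\vol_{F,\Bc_G}(U)$ between $\frac12 V^\star$ and $\frac{3}{2}V^\star$ (say), which rearranges to $\frac{\psi_0^2}{128z}\vol_{F,\Bc_G}(U) \le \tau_U = \tau^\star \le \frac{\psi_0^2}{32z}\vol_{F,\Bc_G}(U)$, the first claimed bound.

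Next I would handle the bound $\|\Br_{U,\ell}\|_1 \le \tau_U^{\ell/L}$ for a given $\ell$ with $W_{U,\ell}$ not being rebuilt at time $t$. The argument is similar: let $t_\ell < t$ be the last time $W_{U,\ell}$ was rebuilt (or created). By \cref{lemma:repair}, immediately after the rebuild $\|\Br_{U,\ell}\|_1 \le \frac{\psi_\ell}{10}\tau_U^{\ell/L}$, and $W_{U,\ell}$ is then an $(\frac{\psi_\ell}{10}\tau_U^{\ell/L}, \phi, \psi_\ell)$-witness, i.e.\ we are in the regime of \cref{scenarion:stability} with initial $R \le \frac{\psi_\ell}{10}\tau_U^{\ell/L}$. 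Since $W_{U,\ell}$ is not rebuilt in $(t_\ell, t]$, the only calls to \alg{UpdateWitness}{} affecting it in this window are external cuts or level-$\ell'$ internal cuts with $\ell' \le \ell$ (this is exactly the content of $W_{U,\ell}$ remaining valid, \cref{obs:will-be-rebuilt}); by \cref{claim:is-sparse-cut} each such internal cut is sparse with the sparsity needed for \cref{cond:blow-up}\labelcref{item:sparse}, and the small-side condition \labelcref{item:small-side} follows from the bounds on the $R'$ parameter in the \alg{CutOrEmbed}{}/\alg{PruneOrRepair}{} calls. The parameter conditions \labelcref{item:parameters} follow from the first part of this lemma (the $\tau_U$ vs $\vol_F(U)$ sandwich), the $\cK$-bounds on $\Delta^{(U,\ell)}$ and $\delta_{\mathrm{ext}}^{(U,\ell)}$, and the separation inequalities \eqref{eq:compare-psi} among the $\psi_\ell$'s. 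Then \cref{lemma:bound-from-external-cuts} applies and yields $\|\Br_{U,\ell}\|_1 \le \frac{4(R+\Delta)}{\psi_\ell} + \frac{8}{\psi_\ell^2\phi}\delta_{\mathrm{ext}}^{(U,\ell)}(t_\ell,t)$; plugging in $R \le \frac{\psi_\ell}{10}\tau_U^{\ell/L}$, $\Delta^{(U,\ell)}(t_\ell, t) \le \frac{\psi_\ell}{10}\tau_U^{\ell/L}$, and $\delta_{\mathrm{ext}}^{(U,\ell)}(t_\ell,t) \le \frac{\phi\psi_\ell^2}{40}\tau_U^{\ell/L}$ makes the right-hand side at most a constant (less than $1$) times $\tau_U^{\ell/L}$, as desired. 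One subtlety: $\tau_U$ itself may have been set at a different time $t_0 \ne t_\ell$; since $t_\ell \ge t_0$ (any rebuild of $W_{U,\ell}$ happening strictly before the last rebuild of $W_{U,L}$ would be followed by a rebuild of $W_{U,\ell}$ inside the $W_{U,L}$-rebuild, because \alg{MaintainExpander}{$U,L$} recursively calls down to level $0$), the value of $\tau_U$ is frozen throughout $(t_\ell, t]$, so there is no ambiguity.

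The main obstacle I anticipate is the careful verification that \cref{cond:blow-up} genuinely holds for each window $(t_\ell, t]$ — in particular \labelcref{item:small-side} and \labelcref{item:parameters}, which require simultaneously juggling the frozen value of $\tau_U$, the sandwich between $\tau_U$ and the current $\vol_{F,\Bc_G}(U)$ (which is itself what the first half of the lemma establishes, so there is a mild circularity to untangle by induction on time), the numeric relations \eqref{eq:psi}–\eqref{eq:compare-psi} among the $\psi_\ell$'s, and the distinction between the "small side" of a cut as measured at the time it is found versus at time $t$ after further volume has been added. A second delicate point is dealing correctly with the early-break behavior on Line~\ref{line:break-l}: when \alg{MaintainExpander}{$U,\ell$} breaks early and delegates the rebuild of $W_{U,\ell}$ to a higher-level call \alg{MaintainExpander}{$U,k$}, one must confirm that "$W_{U,\ell}$ is being rebuilt" (in the sense of \cref{obs:will-be-rebuilt}) correctly captures this interval so that the active-tuple bookkeeping is not violated; this is exactly the scenario the footnote to the definition of negligible flags. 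I would therefore structure the proof as a single induction over the ordered sequence of effective timestamps, proving at each step both the $\tau_U$-sandwich and the $\Br$-bounds for every current $U$ and every $\ell$, using \cref{lemma:bound-from-external-cuts} as the workhorse and the event $\cK$ as the source of all the quantitative slack; the routine inequality-chasing to check \cref{cond:blow-up} I would relegate to a short verification paragraph.
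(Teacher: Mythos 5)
Your proposal is correct and takes essentially the same approach as the paper: induction on time $t$, anchoring at the last rebuild time $t_{\mathrm{last}}^{(U,\ell)}$, verifying \cref{cond:blow-up} via \cref{claim:is-sparse-cut}, \cref{claim:no-balanced-sparse-cut}, and the $\cK$-bounds, and then invoking \cref{lemma:bound-from-external-cuts} as the workhorse (with the $\ell = L$ case yielding the $\tau_U$-sandwich). The subtleties you flag — the mild circularity between the $\tau_U$-sandwich and the $\Br$-bound resolved by the time-induction, the freezing of $\tau_U$ between rebuilds, and the early-break bookkeeping — are precisely the ones the paper's proof handles, in the same way.
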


\begin{proof}
  We prove the statement by an induction on the time $t$.
  Fix a $U \in \cU$ and $\ell \in \{0, \ldots, L\}$ for which $W_{U,\ell}$ is not currently being rebuilt.
  Consider the last time $t_{\mathrm{last}}^{(U,\ell)} < t$ that it was rebuilt.
  Let $U_0$ and $F_0$ be the set $U$ and $F$ at time $t_{\mathrm{last}}^{(U,\ell)}$.
  By the inductive hypothesis at time $t_{\mathrm{last}}^{(U,\ell)}$, we have $\tau_U \leq \frac{\psi_0}{32z}\vol_{F_0}(U_0)$.
  Note that the value of $\tau_U$ remains unchanged from $t_{\mathrm{last}}^{(U,\ell)}$ to $t$, otherwise $W_{U,\ell}$ would have currently been being rebuilt.
  By \cref{lemma:repair}, we have $\|\Br_{U_0,\ell}\|_1 \leq \frac{\psi_{\ell}}{10}\tau_U^{\ell/L}$.
  Observe that $(U, \ell, t_{\mathrm{last}}^{(U,\ell)}, t)$ is active, and thus what happens from $t_{\mathrm{last}}^{(U,\ell)}$ to $t$ is modeled by \cref{scenarion:stability}.
  Our goal is thus to apply \cref{lemma:bound-from-external-cuts} on $W_{U,\ell}$ (with $R \leq \frac{\psi_{\ell}}{10}\tau_U^{\ell/L}$) to bound the increase in $\|\Br_{U,\ell}\|_1$.
  For that we need to verify that \cref{cond:blow-up} holds.
  In the remainder of the proof we adapt the notation in \cref{scenarion:stability} (e.g., $U_i$, $S_i$, and $F_i$).

  \paragraph{\cref{cond:blow-up}\labelcref{item:sparse}.}
  Note that by \cref{obs:will-be-rebuilt}, $W_{U,\ell}$ is currently valid, meaning that all the internal cuts $S_i$ for which \alg{UpdateWitness}{$U,S_i,\cdot$} is called are of level-$\ell^\prime$ for $\ell^\prime < \ell$.
  By \cref{claim:is-sparse-cut}, each such cut $S_i$ satisfies $\min\{\Bc_G(E_{G[U_{i-1}]}(S_i,\overline{S_i})), \Bc_G(E_{G[U_{i-1}]}(\overline{S_i}, S_i))\} \leq \frac{\phi\psi_{\ell^\prime+1}^2}{128}\vol_{F_{i-1},\Bc_G}(S_i) \leq\frac{\phi\psi_{\ell}^2}{128}\vol_{F_{i-1},\Bc_G}(S_i)$.
  
  \paragraph{\cref{cond:blow-up}\labelcref{item:small-side}.}
  For external cuts, since we always let $S_U$ be the side with smaller volume on Line~\ref{line:remove-edges-external} in \cref{alg:cut-add-terminal}, we indeed have $\vol_{F_{i-1},\Bc_G}(S_i) \leq \frac{1}{2}\vol_{F_{i-1},\Bc_G}(U_{i-1})$.
  This is the same for internal cuts of level-$L$ by \cref{lemma:cut-matching}.
  For $\ell < L$, by \cref{lemma:prune-or-repair}, every such cut $S_i$ satisfies $\vol_{F_i,\Bc_G}(S_i) + \Br_{U,\ell^\prime+1}(S) \leq \frac{8z}{\psi_{\ell^\prime+1}} R$ for some $\ell^\prime < \ell$ if $(W_{U,\ell^\prime+1}, \Br_{U,\ell^\prime+1}, \Pi_{W_{U,\ell^\prime+1} \to G[U]})$ is an $R$-witness at the time $t_i$ it was found.
  By the inductive hypothesis at time $t_i$, we have $R \leq \tau_U^{(\ell^\prime+1)/L} \leq \tau_U \leq \frac{\psi_0^2}{32z}\vol_{F_0,\Bc_G}(U_0)$, and thus we have $\vol_{F_i,\Bc_G}(S_i) \leq \frac{1}{4}\vol_{F_0,\Bc_G}(U_0)$.

  \paragraph{\cref{cond:blow-up}\labelcref{item:parameters}.}
  The fact that $\delta_{\ell} \leq \frac{1}{16}$ is straightforward.
  That $R \leq \frac{\psi_{\ell}}{64}\vol_{F_0,\Bc_G}(U_0)$ is by \cref{lemma:repair} which shows that right after the rebuild we have $R = \|\Br_{U,\ell}\|_1 \leq \frac{\psi_{\ell}}{10}\tau_U^{\ell/L}$.
  Applying the bound of $\tau_U \leq \frac{\psi_0^2}{32z}\vol_{F,\Bc_G}(U_0)$ establishes the fact.
  For the last two bounds, by the conditioning on $\cK$, we have $\Delta^{(U,\ell)}(t_{\mathrm{last}}^{(U,\ell)}, t) \leq \frac{\psi_{\ell}}{10}\tau_U^{\ell/L}$ and $\delta_{\mathrm{ext}}^{(U,\ell)}(t_{\mathrm{last}}^{(U,\ell)}, t) \leq \frac{\phi\psi_{\ell}^2}{40}\tau_U^{\ell/L}$.
  With $\tau_U \leq \frac{\psi_0}{32z}\vol_{F_0,\Bc_G}(U_0)$, we additionally have $\Delta^{(U,\ell)}(t_{\mathrm{last}}^{(U,\ell)}, t) \leq \frac{\psi_{\ell}}{64}\vol_{F_0,\Bc_G}(U_0)$ and $\delta_{\mathrm{ext}}^{(U,\ell)}(t_{\mathrm{last}}^{(U,\ell)}, t)\leq \frac{\phi\psi_{\ell}^2}{800}\vol_{F_0,\Bc_G}(U_0)$.

  Therefore, against any update sequence, with high probability \cref{cond:blow-up} holds.
  Applying \cref{lemma:bound-from-external-cuts} on $W_{U,\ell}$ (with $R \defeq \frac{\psi_{\ell}}{10}\tau_{U}^{\ell/L}$, $\Delta \defeq \Delta^{(U,\ell)}(t_{\mathrm{last}}^{(U,\ell)}, t)$, and $\delta_{\mathrm{ext}}^{(U,\ell)}(t_{\mathrm{last}}^{(U,\ell)}, t)$), we conclude that $\|\Br_{U,\ell}\|_1 \leq \frac{4(R+\Delta)}{\psi_{\ell}} + \frac{8}{\psi_{\ell}^2\phi} \delta_{\mathrm{ext}} \leq \tau_U^{\ell/L}$.
  
  For the bound on $\tau_U$, we use our previous arguments when $\ell = L$.
  Notice that $\tau_U$ is set to $\frac{\psi_0^2}{64z}\vol_{F,\Bc_G}(U)$ at time $t_{\mathrm{last}}^{(U,L)}$ on Line~\ref{line:reset} in \cref{alg:maintain-exp-decomp}.
  As we have argued above, \cref{lemma:bound-from-external-cuts} applied on $W_{U,L}$ implies that the volume of the current $U$ is at least $\frac{64z}{\psi_0}\tau_U - \left(\frac{4(R+\Delta)}{\psi_{\ell}} + \frac{8}{\psi_{\ell}^2\phi}\right) \geq \frac{64z}{\psi_0^2}\tau_U - \tau_U \geq \frac{32z}{\psi_0^2}\tau_U$ (where $R \leq \frac{\psi_L}{10}\tau_U$, $\Delta \defeq \Delta^{(U,L)}(t_{\mathrm{last}}^{(U,L)}, t)$, and $\delta_{\mathrm{ext}} \defeq \delta_{\mathrm{ext}}^{(U,L)}(t_{\mathrm{last}}^{(U,L)}, t)$).
  This proves the upper bound of $\tau_U$.
  On the other hand, the volume of $U$ can increase by at most $\Delta^{(U,L)}(t_{\mathrm{last}}^{(U,L)}, t) \leq \frac{\psi_L}{10}\tau_U \leq \frac{\psi_0}{10}\tau_U$.
  This shows that $\tau_U \geq \frac{\psi_0^2}{128z}\vol_{F,\Bc_G}(U)$, which completes the proof of the lemma.
\end{proof}

To this end, we can now essentially conclude the correctness of our algorithm, except for the running time and output size guarantees that we will establish in \cref{subsec:recourse-and-runtime}.
Below we prove several useful properties that our algorithm satisfies.

\paragraph{Essential Properties of Algorithm.}

Let $X_0$ be the output of \alg{Init}{$G$} and let $X_i$ be the output of \alg{Cut}{$D_i$} where $D_i$ is the $i$-th update to $\cM$.
We first verify that the outputs of \textsc{Init($G$)} and \textsc{Cut($D$)} are consistent with the graph $G_{\cM}$ that our algorithm maintains internally (see \cref{def:hierarchy-maintainer} for the requirements).

\begin{observation}
  After the $i$-th call to \emph{\alg{Cut}{}}, the graph $G_{\cM}$ is equal to $G \setminus (X_0 \cup D_1 \cup X_1 \cup \cdots \cup D_i \cup X_i)$ with $\cU$ being the collection of its strongly connected components.
  \label{obs:correct-graph}
\end{observation}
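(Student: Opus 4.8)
The statement is a purely bookkeeping fact about which edges the algorithm deletes from $G_{\cM}$ and how $\cU$ evolves, so the plan is a straightforward induction on the sequence of operations performed on $\cM$ (the single call $\alg{Init}{$G$}$ followed by $\alg{Cut}{$D_1$}, \alg{Cut}{$D_2$}, \dots$), supported by two invariants that I would isolate and prove first.

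The first is an \textbf{edge-accounting invariant}: every invocation of $\alg{MaintainExpander}{$\cdot,\cdot$}$ deletes from $G_{\cM}$ exactly the edge set it returns. This follows by structural induction on the recursion tree of $\alg{MaintainExpander}{$\cdot,\cdot$}$: the only line that directly deletes edges from $G_{\cM}$ is ``$X \gets X \cup D$ and $G_{\cM} \gets G_{\cM} \setminus D$'', where the deleted set $D$ is at the same time added to the output $X$; every other deletion takes place inside one of the recursive calls ($\alg{MaintainExpander}{$U,k$}$ on Line~\ref{line:early-return}, $\alg{MaintainExpander}{$U,\ell-1$}$ on Line~\ref{line:recurse}, $\alg{MaintainExpander}{$S,L$}$ on Line~\ref{line:build-small-pieces}), all of whose returned sets are unioned into $X$; and neither $\alg{UpdateWitness}{$\cdot,\cdot,\cdot$}$ nor $\cM_{\prev}.\alg{Cut}{}$ modifies $G_{\cM}$ (they touch only $F$, $\cU$, the witnesses, and $G_{\cM_{\prev}}$). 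Consequently $\alg{Init}{$G$}$ deletes exactly $X_0$ from $G_{\cM}=G$, and $\alg{Cut}{$D_i$}$ first deletes $D_i$ and then deletes exactly the union of the returned sets of its internal $\alg{MaintainExpander}{$\cdot,\cdot$}$ calls (and of the ones in its post-processing loop), which is precisely $X_i$; this yields the claimed edge identity $G_{\cM}=G\setminus(X_0\cup D_1\cup X_1\cup\cdots\cup D_i\cup X_i)$.

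The second is an \textbf{SCC-refinement invariant}: throughout the execution, $\cU$ is a partition of $V$ each of whose members is a union of strongly connected components of the current $G_{\cM}$, and just before every $\alg{PostProcess}{$Y$}$ call each member of $\cU$ that is not already a single SCC of $G_{\cM}$ is contained in $Y$. The set $\cU$ is altered only by $\alg{UpdateWitness}{$U,S,A$}$ (which splits a member $U$ into $U\setminus S$ and $S$) and by $\alg{PostProcess}{$\cdot$}$ (which splits a member into its SCCs), so to maintain the first part it suffices to check the split in $\alg{UpdateWitness}{$\cdot,\cdot,\cdot$}$: each such $S$ is accompanied by a deletion from $G_{\cM}$ that contains all of $E_{G_{\cM}[U]}(S,\overline{S})$ or all of $E_{G_{\cM}[U]}(\overline{S},S)$ --- the one-directional cut found by \cref{lemma:prune-or-repair} or \cref{lemma:cut-matching} in the internal case, and (by the input promise of $\alg{Cut}{}$ in \cref{def:hierarchy-maintainer}) the corresponding one-directional part of $D_i$ in the external case. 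Since a cycle of $G_{\cM}$ lying inside an SCC contained in $U$ stays inside $U$ and must use edges in both directions between $S$ and $U\setminus S$, after that deletion no SCC of $G_{\cM}$ can meet both $S$ and $U\setminus S$; hence refining $U$ into $\{S,U\setminus S\}$ (and likewise refining into SCCs) preserves the first part. For the second part, $\alg{Init}{$G$}$ calls $\alg{PostProcess}{$V$}$, while in $\alg{Cut}{$D_i$}$ all edge deletions and all $\alg{UpdateWitness}{$\cdot,\cdot,\cdot$}$ splits occur inside $U_D=\bigcup_{U\in\cU_D}U=Y$; since $D_i$ is a separator of the previous $G_{\cM}$ it is confined to that graph's SCCs, i.e., to $U_D$, so every member of $\cU$ outside $U_D$ has no incident edge deleted and, being an SCC of the previous $G_{\cM}$ by the inductive hypothesis, remains an SCC.

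Combining the two, the terminal $\alg{PostProcess}{$Y$}$ in $\alg{Init}{}$ and in each $\alg{Cut}{}$ refines exactly the members of $\cU$ that are not yet SCCs of $G_{\cM}$ into their SCCs (and the untouched members were already SCCs), so afterwards $\cU=\SCC(G_{\cM})$; together with the edge identity this closes the induction. I expect the SCC-refinement invariant to be the delicate point: one must argue carefully in the \emph{directed} setting that deleting a single direction of a cut genuinely separates its two sides into distinct strong components, and that in $\alg{Cut}{$D_i$}$ every downstream modification stays inside $U_D$ --- both relying on the ``separator'' promises of \cref{def:hierarchy-maintainer}. A minor additional point is that $\alg{PostProcess}{$\cdot$}$ must be read as recomputing strong components of the \emph{current} $G_{\cM}$ restricted to $U$ (these coincide with the SCCs of $G_{\cM}$ contained in $U$ precisely because $U$ is a union of such SCCs).
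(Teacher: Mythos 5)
The paper does not prove this statement; it is presented as an Observation that is meant to be verified by inspecting \cref{alg:cut-add-terminal,alg:maintain-exp-decomp,alg:remove-edges}. Your induction with the two invariants is a correct formalization of that verification, and the key points you isolate are the right ones: (a) the only place $G_{\cM}$ loses edges inside \textsc{MaintainExpander} is the paired update $X \gets X \cup D$, $G_{\cM}\gets G_{\cM}\setminus D$, so the returned $X$ is exactly what was deleted; (b) in $\alg{Cut}{D}$ the input $D$ itself is deleted but not added to $X$, accounting for the extra $D_i$ term in the claimed identity; (c) every split of a member $U$ of $\cU$ (via \textsc{UpdateWitness}) is accompanied by the removal of one full direction of the cut, so no SCC of the new $G_{\cM}$ straddles both sides, keeping $\cU$ a union-of-SCCs partition; and (d) the separator promises of \cref{def:hierarchy-maintainer} confine all edge deletions inside $\alg{Cut}{D}$ to $G[U_D]$, so members outside $U_D$ stay SCCs and the terminal $\textsc{PostProcess}(U_D)$ restores the SCC property globally. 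Your closing remark that $\textsc{PostProcess}$'s SCCs of $G[U]$ coincide with SCCs of $G_{\cM}$ inside $U$ (because $X$ is a separator) is also needed and matches the paper's own aside in \cref{subsec:overview}. In short, a correct proof of an unproved observation, in the spirit the authors intended.
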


\begin{observation}
  The graph $G_{\cM}$ we maintain satisfies $G_{\cM} \supseteq G_{\cM_{\prev}}$, and the terminal set $F$ we maintain satisfies $F \supseteq E(G_{\cM}) \setminus E(G_{\cM_{\prev}})$.
  \label{obs:maintain-terminal}
\end{observation}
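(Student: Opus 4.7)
The plan is to prove both inclusions simultaneously by induction on the number of operations performed by $\cM$. For the base case, immediately after \textsc{Init} runs $F \gets \cM_{\prev}.\alg{Init}{G, \Bc_G}$ and sets $G_{\cM} \gets G$, the contract of \cref{def:hierarchy-maintainer} guarantees that $G_{\cM_{\prev}} = G \setminus F$. Consequently $G_{\cM} = G \supseteq G_{\cM_{\prev}}$ and $F$ is exactly $E(G_{\cM}) \setminus E(G_{\cM_{\prev}})$, so both inclusions hold with equality.

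For the inductive step, I would first observe that $G_{\cM}$ and $F$ are modified only at two locations in \cref{alg:cut-add-terminal,alg:maintain-exp-decomp}: Line~\ref{line:remove-edges-external} inside \textsc{Cut} and Line~\ref{line:remove-edges} inside \textsc{MaintainExpander}. (The \textsc{PostProcess} routine only refines $\cU$, and \textsc{UpdateWitness} only updates $W_{U,\ell}$, $\Br_{U,\ell}$, and $F$, with the change to $G_{\cM}$ and $F$ already captured by the two lines above.) Both locations follow the same template: remove some cut $D$ from $G_{\cM}$, compute $A \gets \cM_{\prev}.\alg{Cut}{D}$ (which, by \cref{def:hierarchy-maintainer}, removes $D \cup A$ from $G_{\cM_{\prev}}$), and then, via \textsc{UpdateWitness}, place $A$ into $F$. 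Writing $G_{\cM}^{\mathrm{new}} = G_{\cM} \setminus D$, $G_{\cM_{\prev}}^{\mathrm{new}} = G_{\cM_{\prev}} \setminus (D \cup A)$, and $F^{\mathrm{new}} = F \cup A$, the first inclusion follows immediately from the inductive hypothesis $G_{\cM} \supseteq G_{\cM_{\prev}}$: indeed $G_{\cM}^{\mathrm{new}} = G_{\cM} \setminus D \supseteq G_{\cM_{\prev}} \setminus D \supseteq G_{\cM_{\prev}}^{\mathrm{new}}$.

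For the second inclusion, I would pick any $e \in E(G_{\cM}^{\mathrm{new}}) \setminus E(G_{\cM_{\prev}}^{\mathrm{new}})$ and split into cases. From $e \in E(G_{\cM}) \setminus D$ and $e \notin E(G_{\cM_{\prev}}) \setminus (D \cup A)$, together with $e \notin D$, we deduce that either $e \notin E(G_{\cM_{\prev}})$ or $e \in A$. In the first case $e \in E(G_{\cM}) \setminus E(G_{\cM_{\prev}}) \subseteq F \subseteq F^{\mathrm{new}}$ by the inductive hypothesis; in the second case $e \in A \subseteq F^{\mathrm{new}}$. Either way $e \in F^{\mathrm{new}}$, completing the induction.

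The main obstacle should be largely bookkeeping: one needs to trace every write to $G_{\cM}$ and to $F$ through the pseudocode (including the recursive calls inside \textsc{MaintainExpander} and the loop in \textsc{Cut} over $U \in \cU_D$ with multiple $\cM_{\prev}.\alg{Cut}{D_U}$ invocations), and apply the inductive step once per atomic modification so that the invariant is re-established before the next modification is processed. No deeper structural argument is needed beyond the contract of $\cM_{\prev}$ given by \cref{def:hierarchy-maintainer}.
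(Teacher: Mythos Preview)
Your inductive setup is sound, but the assertion that ``both locations follow the same template'' glosses over an asymmetry in \textsc{Cut}. On Line~7 of \cref{alg:cut-add-terminal} the \emph{entire} input $D$ is removed from $G_{\cM}$, yet the subsequent loop only hands the intra-component pieces $D_U = D\cap G[U]$ to $\cM_{\prev}$. Any edge $e\in D$ lying between two distinct SCCs of $G_{\cM}$ is therefore removed from $G_{\cM}$ but is never passed to $\cM_{\prev}$, so nothing forces it out of $G_{\cM_{\prev}}$. Such edges can occur: an inter-SCC edge of $G_{\cM}$ need not lie in $F$, and then by your own inductive hypothesis it belongs to $G_{\cM_{\prev}}$. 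After such a \textsc{Cut} step the inclusion $G_{\cM}\supseteq G_{\cM_{\prev}}$ can fail at $e$, so the template argument as written does not close.

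What rescues the downstream uses (\cref{obs:has-expander-hierarchy} and \cref{claim:cut-time}) is that they only need the \emph{per-SCC} statement: for every SCC $U$ of $G_{\cM}$ one has $G_{\cM}[U]\supseteq G_{\cM_{\prev}}[U]$ and $F\supseteq E(G_{\cM}[U])\setminus E(G_{\cM_{\prev}}[U])$, which in particular makes the SCCs of $G_{\cM_{\prev}}$ a refinement of those of $G_{\cM}$. Your induction does establish this weaker invariant, because every edge of $D$ with both endpoints in some $U$ is contained in the corresponding $D_U$ passed to $\cM_{\prev}$. So either restate and prove the per-SCC version, or add an explicit argument handling $D\setminus\bigcup_U D_U$; the ``same template'' shortcut does not cover it.
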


\begin{observation}
  Whenever we call \cref{lemma:cut-matching} on Line~\ref{line:call-cut-matching} or \cref{lemma:prune-or-repair} on Line~\ref{line:call-prune} in \cref{alg:maintain-exp-decomp}, the hierarchy $\cH_{\prev}[U]$ is a $\phi_{\prev}$-expander hierarchy of the graph $G[U] \setminus F$.
  \label{obs:has-expander-hierarchy}
\end{observation}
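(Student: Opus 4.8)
\textbf{Proof proposal for \Cref{obs:has-expander-hierarchy}.}

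The plan is to track the invariant that, right before any call to \textsc{CutOrEmbed} or \textsc{PruneOrRepair} in \alg{MaintainExpander}{$U,\ell$}, the collection $\cU$ is exactly the set of strongly connected components of $G_{\cM}$, and $\cM_{\prev}$ holds a valid $\phi_{\prev}$-expander hierarchy $\cH_{\prev}$ of $(G_{\cM_{\prev}}, \Bc_G)$ with $G_{\cM_{\prev}} = G_{\cM} \setminus F$. Combining these two facts with the known restriction property (the \textbf{Fact} stated just before \Cref{obs:correct-graph}: if each SCC of a graph is entirely inside or outside $U$, then restricting a $\phi_{\prev}$-expander hierarchy to $U$ yields a $\phi_{\prev}$-expander hierarchy of the induced subgraph) immediately gives that $\cH_{\prev}[U]$ is a $\phi_{\prev}$-expander hierarchy of $G[U] \setminus F$, provided the SCC-compatibility hypothesis is met --- which it is, since $U \in \cU$ is itself an SCC of $G_{\cM}$, and the SCCs of $G_{\cM_{\prev}}$ refine those of $G_{\cM}$ because $G_{\cM_{\prev}} \subseteq G_{\cM}$ and $E(G_{\cM}) \setminus E(G_{\cM_{\prev}}) = F$ (by \Cref{obs:maintain-terminal}), so each SCC of $G_{\cM_{\prev}}$ lies entirely inside one SCC of $G_{\cM}$, hence either inside $U$ or disjoint from $U$. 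Note also that since $F$ is a separator of $G_{\cM}$ (we only ever add to $F$ edges that form the boundary of cuts we remove, via Lines~\ref{line:D} and \ref{line:remove-edges}, and similarly in \textsc{UpdateWitness}), and $U$ is an SCC of $G_{\cM}$, the graph $G[U]$ coincides with $G_{\cM}[U]$, so ``$G[U] \setminus F$'' really is $G_{\cM_{\prev}}$ restricted to $U$.

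First I would verify the base case: at the end of \alg{Init}{}, we set $F \gets \cM_{\prev}.\alg{Init}{G}$, which by \Cref{def:hierarchy-maintainer} makes $\cM_{\prev}$ maintain a $\phi_{\prev}$-hierarchy of $G \setminus F = G_{\cM_{\prev}}$; and $G_{\cM} = G$ with $\cU = \{V\}$, so the invariant holds when \alg{MaintainExpander}{$V, L$} is first entered (after the early-return check on Line~\ref{line:too-little-volume}, which does not touch any structure). Then I would show the invariant is preserved across each of the operations that modify $G_{\cM}$, $F$, or $\cU$ between two consecutive flow/cut calls: (a) when a cut $S$ is returned and we set $D$ to the smaller boundary and do $G_{\cM} \gets G_{\cM} \setminus D$ followed by $A \gets \cM_{\prev}.\alg{Cut}{D}$ and \alg{UpdateWitness}{$U,S,A$} (Lines~\ref{line:D}--\ref{line:call-cut}), we need that the input requirement of $\cM_{\prev}.\alg{Cut}{}$ is met --- i.e., $D$ is a separator of $G_{\cM_{\prev}}$ whose restriction to each SCC of $G_{\cM_{\prev}}$ is either empty or a one-directional cut; this holds because $D = E_{G[U]}(S,\overline S)$ (or the reverse) for an SCC $U$ of $G_{\cM}$, and the SCCs of $G_{\cM_{\prev}}$ refine those of $G_{\cM}$, so within each such refined component $D$ restricts to either $\emptyset$ or one side of the induced cut. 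After $\cM_{\prev}.\alg{Cut}{D}$, by \Cref{def:hierarchy-maintainer}, $\cM_{\prev}$ outputs $A$ and updates $G_{\cM_{\prev}} \gets G_{\cM_{\prev}} \setminus (D \cup A)$ while maintaining a hierarchy of the new $G_{\cM_{\prev}}$; then \alg{UpdateWitness}{$U,S,A$} sets $F \gets F \cup A$ and $G_{\cM}$ has already been updated to $G_{\cM} \setminus D$, so we again have $G_{\cM_{\prev}} = G_{\cM} \setminus F$ and $\cH_{\prev}$ valid. (b) when \textsc{PostProcess} replaces SCCs that have been split, $\cU$ is re-synced with the actual SCCs of $G_{\cM}$ --- but \textsc{PostProcess} is only called after a top-level return, not between flow calls, so we must instead argue that the recursion structure (Lines~\ref{line:recurse}, \ref{line:build-small-pieces}, \ref{line:early-return}) always invokes \alg{MaintainExpander}{} on a set that \emph{is} currently an SCC of $G_{\cM}$: for $U$ unchanged at Lines~\ref{line:recurse}/\ref{line:early-return} this is because no edge inside $G[U]$ was removed since $U$ became an SCC (only boundary edges $D$ were removed, which do not disconnect $U$ internally --- wait, they can; $D$ is a cut \emph{inside} $G[U]$). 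This is the subtle point.

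The hard part will be exactly this SCC-bookkeeping: when we remove $D = E_{G[U]}(S,\overline S)$, the set $U$ is no longer strongly connected in $G_{\cM}$, so the recursive calls \alg{MaintainExpander}{$U,\ell-1$} and \alg{MaintainExpander}{$U, k$} are being run on a set that is \emph{not} an SCC of $G_{\cM}$. I would resolve this by observing that the relevant requirement for \Cref{obs:has-expander-hierarchy} is not ``$U$ is an SCC of $G_{\cM}$'' but the weaker ``each SCC of $G_{\cM_{\prev}}$ is either inside $U$ or disjoint from $U$'' together with ``$G[U] \setminus F = G_{\cM_{\prev}}[U]$''. The first continues to hold: removing $D$ from $G_{\cM}$ only refines SCCs of $G_{\cM}$, each new SCC still lies inside the old $U$ or is disjoint, and each SCC of $G_{\cM_{\prev}}$ (being inside some SCC of $G_{\cM}$) stays inside or disjoint from $U$. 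For the second: after removing $D$ (which is one direction of the cut $E_{G[U]}(S,\overline S)$), the set $S$ and $U\setminus S$ become disconnected in the cut direction, but $F$ now contains the \emph{other} direction $E_{G[U]}(\overline S, S)$ (or we added $D$ to $X$ and $A$ to $F$), so $F$ remains a separator of $G_{\cM}$ restricted to $U$ --- here I would lean on the separator-maintenance claim stated in the section intro (``we construct each $X_i$ by repeatedly finding cuts ... and removing edges from one of the directions''), and on the fact that \textsc{UpdateWitness} adds $A$ to $F$ where $A \subseteq G[U]$. Once $F$ is a separator of $G[U]$, each SCC of $G[U] \setminus F$ is still an SCC of $G_{\cM_{\prev}}$, so the restriction-of-hierarchy \textbf{Fact} applies with $\cH_{\prev}$ restricted to the union of SCCs of $G_{\cM_{\prev}}$ contained in $U$, which is exactly $\cH_{\prev}[U]$. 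Thus the invariant, correctly formulated, is preserved, and \Cref{obs:has-expander-hierarchy} follows. I would finish by remarking that the height bound $\eta(\cH_{\prev}) \le k = O(\log n)$ needed by \Cref{lemma:cut-matching}/\Cref{lemma:prune-or-repair} is guaranteed directly by the $(k, \alpha_{\prev}, \beta_{\prev}, \phi_{\prev}, T_{\prev})$-hierarchy maintainer contract in \Cref{def:hierarchy-maintainer}, and restriction to $U$ only decreases the height.
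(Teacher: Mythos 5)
Your high-level plan is sound: reduce the claim to (i) SCC-compatibility of $U$ with $\SCC(G_{\cM_{\prev}})$ and (ii) the identity $G[U]\setminus F = G_{\cM_{\prev}}[U]$, then invoke the restriction Fact together with the contract that $\cM_{\prev}$ maintains a $\phi_{\prev}$-hierarchy of $G_{\cM_{\prev}}$. Point (i) is also handled correctly, including the subtlety you flag: after removing $D$, the new $U = U\setminus S$ is only a \emph{union} of SCCs of $G_{\cM}$, but that suffices because $\SCC(G_{\cM_{\prev}})$ refines $\SCC(G_{\cM})$. However, your treatment of (ii) is muddled in a way that leaves a real gap. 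The fact needed is that no edge of $E(G)\setminus E(G_{\cM})$ has both endpoints in $U$, so that $G[U]=G_{\cM}[U]$ and hence $G[U]\setminus F = (G_{\cM}\setminus F)[U] = G_{\cM_{\prev}}[U]$. You instead lean on ``$F$ is a separator of $G_{\cM}$'' (which, even if true, says nothing about edges of $G$ that are \emph{not} in $G_{\cM}$), and you also assert that after removing $D$ ``$F$ now contains the other direction $E_{G[U]}(\overline S, S)$''. That is false: what \textsc{UpdateWitness} adds to $F$ is $A\gets\cM_{\prev}.\alg{Cut}{D}$, an arbitrary separator of the new $G_{\cM_{\prev}}$, not the reverse direction of the cut. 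The reverse-direction edges remain in $G_{\cM}$ and cross between $U\setminus S$ and the carved-off piece $S$, which is fine, but it is not the mechanism that yields (ii).

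The clean argument for (ii) is a simple induction on the recursion structure: every time an edge is removed from $G_{\cM}$ (either $D$ at Lines~\ref{line:D}--\ref{line:call-cut}, or the input $D$ in $\alg{Cut}{}$), it is one direction of a cut $E_{G[U']}(S',\overline{S'})$ inside some active piece $U'$, so its two endpoints lie on opposite sides of the split; and the algorithm only ever recurses on pieces that are subsets of one side (the updated $U' \gets U'\setminus S'$, or $S'$ itself, or a sub-piece of these). Since pieces only shrink along the recursion and never re-merge, no such removed edge can later have both endpoints inside any $U$ on which \alg{MaintainExpander}{} is invoked, and for the initial call from \alg{Init}{} nothing has been removed yet. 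This is exactly what gives $G[U]=G_{\cM}[U]$ at every call site in question, and it is independent of whether the \emph{current} $U$ is strongly connected. Replacing the ``$F$ is a separator'' appeal with this invariant closes the gap; the rest of your argument, including the refinement of SCCs and the appeal to the restriction Fact and to the height bound $k$ from \Cref{def:hierarchy-maintainer}, is correct.
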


\begin{lemma}
  \cref{alg:cut-add-terminal,alg:maintain-exp-decomp} maintain that $F$ is $\frac{\phi\psi_0^2}{2}$-expanding in $(G_{\cM},\Bc_G)$ after each update ($\cM.\Cut{D}$) against an adaptive adversary.
  \label{lemma:maintain-exp-decomp}
\end{lemma}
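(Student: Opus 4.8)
\textbf{Plan for proving \cref{lemma:maintain-exp-decomp}.}
The goal is to show that, after every completed call to $\cM.\alg{Cut}{D}$ (and after $\cM.\alg{Init}{}$), the terminal set $F$ is $\frac{\phi\psi_0^2}{2}$-expanding in $(G_{\cM},\Bc_G)$, meaning that $F \cap G_{\cM}[U]$ is $\frac{\phi\psi_0^2}{2}$-expanding in $(G_{\cM}[U],\Bc_G)$ for every SCC $U \in \cU$. By \cref{claim:expanding-if-has-witness}, it suffices to exhibit, for each such $U$, an $(R,\phi,\psi_0)$-witness of $(G_{\cM}[U],\Bc_G,F)$ for some $R < 1/\phi$; note that $G_{\cM}[U] = G[U]$ since $X$ separates the SCCs and $U$ is strongly connected, so the witnesses $W_{U,\ell}$ our algorithm maintains are exactly witnesses of $G[U]$. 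By \cref{claim:reman-valid-witness}, each $(W_{U,\ell}, \Bc_{U,\ell}, \Br_{U,\ell}, \Pi_{W_{U,\ell}\to G[U]})$ is always a valid $(\infty,\phi,\psi_\ell)$-witness of $(G[U],\Bc_G,F)$; the only missing ingredient is the bound $\|\Br_{U,0}\|_1 < 1/\phi$ on the level-$0$ witness. So the proof reduces to: (i) argue that when $\cM.\alg{Cut}{}$ (or $\cM.\alg{Init}{}$) returns, $W_{U,0}$ is \emph{not} currently being rebuilt, so we can invoke \cref{lemma:probability-guarantee}; and (ii) conclude $\|\Br_{U,0}\|_1 \le \tau_U^{0/L} = 1 < 1/\phi$ for each non-negligible $U$, while negligible $U$'s are handled directly by \cref{claim:expanding-in-tiny-component} (since $\vol_{F,\Bc_G}(U) < 1/\phi < 2/(\phi\psi_0^2)$).

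The first point is where I would spend the most care. By inspection of \cref{alg:maintain-exp-decomp}, whenever \alg{MaintainExpander}{$U,\ell$} is called it is guaranteed to make progress on $W_{U,\ell}$: either it exits the main loop through Line~\ref{line:else} after $W_{U,\ell}$ is rebuilt/reconstructed (Line~\ref{line:rebuilt} or the $\ell = L$ branch), or it breaks early at Line~\ref{line:break-l} only \emph{after} recursively calling \alg{MaintainExpander}{$U,k$} with $k > \ell$ on Line~\ref{line:early-return}, which will in turn rebuild $W_{U,\ell}$ before returning (this is exactly \cref{obs:will-be-rebuilt}). Moreover, after the loop, Line~\ref{line:recurse} recurses into \alg{MaintainExpander}{$U,\ell-1$}, so a top-level call \alg{MaintainExpander}{$U,k$} rebuilds $W_{U,\ell}$ for every $\ell \le k$ before it returns. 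Since $\cM.\alg{Init}{}$ calls \alg{MaintainExpander}{$V,L$} and $\cM.\alg{Cut}{}$ calls \alg{MaintainExpander}{$U,k$} for each affected SCC $U$ (Line~\ref{line:sample-cut}) followed by \alg{MaintainExpander}{$S,L$} for each carved-off piece $S$ (Line~\ref{line:build-small-pieces-cut}), every SCC present in $\cU$ at the end of the operation has had all of its witnesses $W_{U,0},\dots,W_{U,L}$ rebuilt (or is negligible, via the Line~\ref{line:too-little-volume} early return); in particular none of them is "currently being rebuilt" once the outermost call returns. I would also need to observe that SCCs untouched by the current \alg{Cut}{} keep their witnesses valid — nothing in \alg{Cut}{} touches $G[U]$ for $U$ disjoint from $U_D$, so those $W_{U,0}$ were already valid and non-being-rebuilt with $\|\Br_{U,0}\|_1 \le 1$ by the inductive hypothesis / the previous application of this lemma.

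Putting it together: condition on the event $\cK$ of \eqref{eq:event-K} (which holds with high probability by \cref{lemma:good-event} and a union bound, as noted after that lemma). For each $U \in \cU$ at the end of the operation, if $U$ is negligible then $\vol_{F,\Bc_G}(U) < 1/\phi$ and \cref{claim:expanding-in-tiny-component} gives that $F$ is $\phi$-expanding hence $\frac{\phi\psi_0^2}{2}$-expanding in $(G[U],\Bc_G)$. Otherwise $W_{U,0}$ is not being rebuilt, so \cref{lemma:probability-guarantee} gives $\|\Br_{U,0}\|_1 \le \tau_U^{0} = 1 < 1/\phi$, and combined with \cref{claim:reman-valid-witness} (validity of the $(\infty,\phi,\psi_0)$-witness, which with the new bound is actually a $(1/\phi - 1,\phi,\psi_0)$-witness — in particular an $(R,\phi,\psi_0)$-witness with $R < 1/\phi$), \cref{claim:expanding-if-has-witness} yields that $F$ is $\frac{\phi\psi_0^2}{2}$-expanding in $(G[U],\Bc_G)$. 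Since this holds for every SCC of $G_{\cM}$, $F$ is $\frac{\phi\psi_0^2}{2}$-expanding in $(G_{\cM},\Bc_G)$, as claimed. The main obstacle I anticipate is bookkeeping the "being rebuilt" status precisely across the nested recursion and the early-break at Line~\ref{line:break-l} — making airtight the claim that every witness of every surviving SCC is actually rebuilt (and not left mid-rebuild) by the time the outermost \alg{MaintainExpander}{} call returns; the rest is a direct assembly of \cref{claim:reman-valid-witness}, \cref{lemma:probability-guarantee}, and \cref{claim:expanding-if-has-witness}.
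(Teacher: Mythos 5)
Your proof is correct and takes essentially the same route as the paper's: for non-negligible components combine \cref{lemma:probability-guarantee} (giving $\|\Br_{U,0}\|_1\le 1$), \cref{claim:reman-valid-witness}, and \cref{claim:expanding-if-has-witness}, and handle negligible components by \cref{claim:expanding-in-tiny-component}. You simply spell out at greater length the step the paper states without elaboration (that no witness is being rebuilt once the outermost call returns) and the high-probability conditioning on~$\cK$, neither of which changes the argument.
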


\begin{proof}
  After each update, for each $U \in \cU$ and $\ell \in \{0, \ldots, L\}$ we have that $W_{U,\ell}$ is not being rebuilt for all $\ell \in \{0, \ldots, L\}$.
  By \cref{lemma:probability-guarantee}, with high probability, if $\vol_{F,\Bc_G}(U) \geq 1/\phi$, then $\|\Br_{U,0}\| \leq 1$ which by \cref{claim:expanding-if-has-witness} implies $F$ is $\frac{\phi\psi_0^2}{2}$-expanding in $(G[U],\Bc_G)$ (note that by \cref{claim:reman-valid-witness} $W_{U,0}$ is indeed a valid witness of $(G[U], \Bc_G, F)$).
  On the other hand, if $\vol_{F,\Bc_G}(U) < 1/\phi$, then $F$ is also $\frac{\phi\psi_0^2}{2}$-expanding in $(G[U],\Bc_G)$ by \cref{claim:expanding-in-tiny-component}.
  This completes the proof.
\end{proof}

\begin{claim}
  On Line~\ref{line:call-cut} in \cref{alg:maintain-exp-decomp}, the set $D$ satisfies the input requirement of \emph{$\cM_{\prev}.\alg{Cut}{D}$} (see \cref{def:hierarchy-maintainer}), the call runs in $T_{\prev}(|U|)$ time, and it returns an edge set $A \subseteq E(G[U])$.
  \label{claim:cut-time}
\end{claim}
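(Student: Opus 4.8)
\textbf{Proof plan for \cref{claim:cut-time}.}

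The plan is to trace through what happens when \cref{alg:maintain-exp-decomp} reaches Line~\ref{line:call-cut} and verify the three assertions one at a time, using the invariants we have already established. The starting point is that we are inside a call \alg{MaintainExpander}{$U, \ell$} and a cut $S$ was returned by either \alg{CutOrEmbed}{} (when $\ell = L$) or \alg{PruneOrRepair}{} (when $\ell < L$) on Line~\ref{line:sparse-cut-L} or Line~\ref{line:sparse-cut-l}; then $D$ is chosen on Line~\ref{line:D} to be whichever of $E_{G[U]}(S,\overline{S})$ and $E_{G[U]}(\overline{S},S)$ has smaller total capacity.

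First I would verify the input requirement of $\cM_{\prev}.\alg{Cut}{D}$. By \cref{def:hierarchy-maintainer}, this requires that $D$ be a separator of $G_{\cM_{\prev}}$ and that, for each SCC $U_i$ of $G_{\cM_{\prev}}$, either $D \cap G_{\cM_{\prev}}[U_i] = \emptyset$ or $D \cap G_{\cM_{\prev}}[U_i] = E_{G_{\cM_{\prev}}[U_i]}(S_i, \overline{S_i})$ for some $S_i \subseteq U_i$. The key facts are: (i) by \cref{obs:maintain-terminal}, $G_{\cM} \supseteq G_{\cM_{\prev}}$, and since $F \supseteq E(G_{\cM}) \setminus E(G_{\cM_{\prev}})$, the non-terminal edges of $G[U]$ are exactly the edges of $G_{\cM_{\prev}}[U]$; (ii) the SCCs of $G_{\cM_{\prev}}$ contained in $U$ form a refinement of $\{U\}$ (this is because $\cM_{\prev}$ maintains a hierarchy, so the DAG structure refines the SCC partition, and because $X$ is a separator of $G$ so $G[U] = G_{\cM}[U]$ is strongly connected); (iii) $D$ is one direction of a cut $(S, \overline{S})$ in $G[U]$, so for any sub-SCC $W$ of $G_{\cM_{\prev}}$ inside $U$, the restriction $D \cap G_{\cM_{\prev}}[W]$ is exactly $E_{G_{\cM_{\prev}}[W]}(S \cap W, W \setminus S)$ or $E_{G_{\cM_{\prev}}[W]}(W \setminus S, S \cap W)$ in the same direction — in particular a one-directional cut, hence a separator of $G_{\cM_{\prev}}[W]$. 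Assembling these across all sub-SCCs gives that $D$ is a separator of $G_{\cM_{\prev}}$ satisfying the per-component form required by \cref{def:hierarchy-maintainer}.

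Next, the running time of $\cM_{\prev}.\alg{Cut}{D}$: by \cref{def:hierarchy-maintainer} the cost is $T_{\prev}(|U_D|)$ where $U_D$ is the union of SCCs of $G_{\cM_{\prev}}$ that intersect $D$. Since all edges of $D$ lie inside $G[U]$, all such SCCs are contained in $U$, so $|U_D| \le |U|$ and (assuming $T_{\prev}$ is monotone, as it is) the cost is at most $T_{\prev}(|U|)$. Finally, the output $A \subseteq E(G_{\cM_{\prev}}[U_D]) \subseteq E(G[U])$ by \cref{def:hierarchy-maintainer} again, since $U_D \subseteq U$. The main obstacle — really the only subtle point — is spelling out step (ii)/(iii) above carefully: one must make sure that $G[U] = G_{\cM}[U]$ (so that ``a cut in $G[U]$'' is the same as ``a cut in the maintained graph restricted to $U$''), which follows because $X$ (the accumulated separator) together with $D_1, \ldots, D_i$ disconnects the SCCs, so within a single SCC $U$ no separator edge is missing; and that the hierarchy $\cH_{\prev}[U]$ being well-defined (\cref{obs:has-expander-hierarchy}) guarantees the refinement property needed to phrase $D \cap G_{\cM_{\prev}}[W]$ as a one-directional cut in each sub-SCC. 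Everything else is bookkeeping against the definitions.
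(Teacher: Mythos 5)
Your proposal is correct and follows essentially the same route as the paper's proof: both use \cref{obs:maintain-terminal} to get $G_{\cM} \supseteq G_{\cM_{\prev}}$, hence that $\SCC(G_{\cM_{\prev}})$ refines $\SCC(G_{\cM})$, then restrict the one-directional cut $D$ to each sub-SCC $U_i \subseteq U$ via $S_i = S \cap U_i$ to verify the input requirement, and use $U_D \subseteq U$ for the time and output-location bounds. The paper is slightly more economical — it derives the refinement directly from the subgraph relation rather than appealing to the hierarchy's DAG structure, and it does not separately belabor $G[U]=G_{\cM}[U]$ — but the substance is the same; your parenthetical about ``the non-terminal edges of $G[U]$ are exactly the edges of $G_{\cM_{\prev}}[U]$'' overstates \cref{obs:maintain-terminal} (which only gives $F \supseteq E(G_{\cM}) \setminus E(G_{\cM_{\prev}})$, not equality) and is not load-bearing, so it should be softened or removed.
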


\begin{proof}
  By \cref{obs:maintain-terminal}, the graph $G_{\cM}$ is always a supergraph of $G_{\cM_{\prev}}$ which means that the $\SCC(G_{\cM_{\prev}})$ is a refinement of $\SCC(G_{\cM})$.
  Let $\cU^\prime \subseteq \SCC(G_{\cM_{\prev}})$ be the collection of SCCs of $G_{\cM_{\prev}}$ contained in $U$, where we must have that $U^\prime \defeq \bigcup_{U_i \in \cU^\prime}U_i = U$.
  Since $D$ is a cut $E_{G[U]}(S,U\setminus S)$ for some $S \subseteq U$, for each $U_i \in \cU^\prime$ if $D \cap G[U_i] \neq \emptyset$ it must be that $G \cap G[U_i] = E_{G[U_i]}(S_i, U_i \setminus S_i)$ for some $S_i$.
  Indeed, this will be the case if we let $S_i \defeq S \cap U_i$.
  This shows that the input requirement of $\cM_{\prev}.\alg{Cut}{D}$ is satisfied and also that the set $U_D$ defined in \cref{def:hierarchy-maintainer} is a subset of $U$.
  Therefore, $\cM_{\prev}.\alg{Cut}{D}$ runs in $T_{\prev}(|U|)$ time and returns an edge set $A \subseteq \bigcup_{U_i \in \cU^\prime}G[U_i] \subseteq G[U]$.
\end{proof}

\subsection{Bounding Expected Recourse and Running Time} \label{subsec:recourse-and-runtime}

It now remains to argue both the expected running time and the expected output size of the subroutines \textsc{Init} and \textsc{Cut}.
As the outputs of both these subroutines come from the \textsc{MaintainExpander} algorithm, we focus on establishing the guarantees for this function.

\paragraph{Good vs Bad States of Algorithm.}
In \cref{subsec:maintain-exp-decomp} we have shown that throughout the algorithm, with high probability, we have $\|\Br_{U,\ell}\| \leq \tau_U^{\ell/L}$ for all non-negligible $U$ that is not currently being rebuilt.
Let us define this as a \emph{good} state of the algorithm and consider the following generalizations of this notion.

\begin{definition}
  A state of the algorithm is \emph{good} if for all non-negligible $U \in \cU$ and $\ell \in \{0, \ldots, L\}$ for which $W_{U,\ell}$ is not currently being rebuilt it holds that $\|\Br_{U,\ell}\| \leq \tau_U^{\ell/L}$; and if $W_{U,L}$ is not currently being rebuilt then $\frac{\psi_0^2}{128z}\vol_{F,\Bc_G}(U) \leq \tau_U \leq \frac{\psi_0^2}{32z}\vol_{F,\Bc_G}(U)$ holds.
  The state is \emph{borderline} if $\|\Br_{U,\ell}\| \leq \frac{10}{\psi_{\ell}}\tau_U^{\ell/L}$ holds for these $U$ and $\ell$ instead; and if $W_{U,L}$ is not currently being rebuilt then $\frac{\psi_0^2}{128z}\vol_{F,\Bc_G}(U) \leq \tau_U \leq \frac{\psi_0^2}{32z}\vol_{F,\Bc_G}(U)$ holds.
  The state is \emph{bad} if it is not borderline.
  \label{def:states}\label{def:good-state}
\end{definition}

The reason why we need \cref{def:states} and in particular the definition of borderline states for our analysis is that, while \cref{lemma:good-event,lemma:probability-guarantee} showed that the algorithm is always in a good state with high probability, it is \emph{not true} that if we start from \emph{any} good state, then we always stay in good states with high probability.
Indeed, we can be in a good state in which one of the witnesses has $\|\Br_{U,\ell}\|_1$ being very close to $\tau_U^{\ell/L}$, yet there is a constant probability that we will not rebuild it in the next update (which leads us to a borderline state).
Note that this is not contradictory to \cref{lemma:good-event,lemma:probability-guarantee} because the probability that we are in such good states is small.
While it is not true that a good state remains good, we can generalize \cref{lemma:good-event,lemma:probability-guarantee} and show that a good state never reaches a \emph{bad} state with high probability.
We defer the proof of the following lemma to \cref{appendix:omitted-proofs} since it is essentially identical to that of \cref{lemma:good-event,lemma:probability-guarantee}.

\begin{restatable}{lemma}{RemainBorderline}
  Conditioned on the algorithm being in a good state at the current moment, with high probability, the algorithm will remain in borderline states until termination.
  \label{lemma:remain-borderline}
\end{restatable}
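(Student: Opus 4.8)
\textbf{Proof proposal for \cref{lemma:remain-borderline}.}

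The plan is to mirror the argument used for \cref{lemma:good-event,lemma:probability-guarantee}, but with the sharpened goal of showing that we never cross from ``good'' all the way to ``bad'' — equivalently, we stay within the weaker ``borderline'' regime. The key quantitative gap we exploit is that the good-state threshold is $\|\Br_{U,\ell}\|_1 \le \tau_U^{\ell/L}$ while the bad-state threshold is $\|\Br_{U,\ell}\|_1 > \frac{10}{\psi_\ell}\tau_U^{\ell/L}$; so there is a multiplicative buffer of $10/\psi_\ell = (\log n)^{L^{O(L)}}$ that must be consumed by accumulated updates before a witness becomes bad. First I would fix a witness $W_{U,\ell}$ for a non-negligible $U$ and consider the time window from the current moment $t_0$ (where we are in a good state, so $\|\Br_{U,\ell}\|_1 \le \tau_U^{\ell/L}$ and $\tau_U$ is correctly calibrated) up to some future time $t$, restricted to the sub-window in which the tuple $(U,\ell,t',t)$ is active, i.e. $W_{U,\ell}$ is neither rebuilt nor being rebuilt. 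As in \cref{lemma:good-event}, the probability that over this active window the total external-cut capacity $\delta_{\mathrm{ext}}^{(U,\ell)}$ or the total volume increase $\Delta^{(U,\ell)}$ exceeds its allotted bound, \emph{without} any of the intervening updates sampling a level $k \ge \ell$ to rebuild (via $\cR_{\cdot,\tau_U}$ on Lines~\ref{line:sample-cut} and \ref{line:sample}), is exponentially small in $n$ — the sampling probabilities in \eqref{eq:distribution} are precisely tuned so that an accumulated update mass of order $\tau_U^{\ell/L}$ triggers a rebuild whp. Crucially, because we now only need to rule out reaching the \emph{bad} threshold, we can afford to let $\delta_{\mathrm{ext}}^{(U,\ell)}$ and $\Delta^{(U,\ell)}$ grow by a factor up to $\Theta(1/\psi_\ell)$ larger than in \cref{lemma:good-event} before declaring failure, which only makes the failure probability even smaller.

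Next I would feed these (scaled-up) bounds on $\delta_{\mathrm{ext}}^{(U,\ell)}$ and $\Delta^{(U,\ell)}$ into the stability machinery of \cref{subsec:stability}, exactly as in the proof of \cref{lemma:probability-guarantee}. One checks that \cref{cond:blow-up} still holds with the relaxed parameters (items \labelcref{item:sparse} and \labelcref{item:small-side} are unchanged since they are deterministic consequences of \cref{claim:is-sparse-cut}, \cref{lemma:prune-or-repair}, and \cref{lemma:cut-matching}; item \labelcref{item:parameters} continues to hold because $R = \|\Br_{U,\ell}\|_1$ at the last rebuild is still $\le \frac{\psi_\ell}{10}\tau_U^{\ell/L}$ by \cref{lemma:repair}, while $\Delta$ and $\delta_{\mathrm{ext}}$ are now bounded by slightly larger multiples of $\vol_{F,\Bc_G}(U_0)$ — still well within the slack in \labelcref{item:parameters}, since $\phi$ is chosen sufficiently small). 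Then \cref{lemma:bound-from-external-cuts} gives $\|\Br_{U,\ell}\|_1 \le \frac{4(R+\Delta)}{\psi_\ell} + \frac{8}{\psi_\ell^2\phi}\delta_{\mathrm{ext}}$, and plugging in the relaxed bounds yields $\|\Br_{U,\ell}\|_1 \le \frac{10}{\psi_\ell}\tau_U^{\ell/L}$, i.e. the borderline bound. The same bookkeeping on Line~\ref{line:reset} (as at the end of the proof of \cref{lemma:probability-guarantee}) shows $\tau_U$ stays within $[\frac{\psi_0^2}{128z}\vol_{F,\Bc_G}(U), \frac{\psi_0^2}{32z}\vol_{F,\Bc_G}(U)]$ whenever $W_{U,L}$ is not being rebuilt. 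Finally, since there are only $\poly(n)$ effective timestamps and at each timestamp only $O(n)$ vertex-disjoint sets $U$ and $L+1 = n^{o(1)}$ levels, a union bound over all $(U,\ell)$ pairs and all timestamps shows that, conditioned on starting in a good state, with high probability every reachable state is borderline.

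The main obstacle I anticipate is being careful that the high-probability guarantee is genuinely \emph{forward-in-time} and holds against an adaptive adversary: the update sequence (the \alg{Cut}{} calls) may depend on all of the algorithm's previously revealed randomness, and moreover the \emph{internal} cuts found by \cref{thm:flow} during earlier \alg{MaintainExpander}{} calls also depend on that randomness. The resolution, as in \cref{lemma:good-event}, is that each call to the rebuild-sampling distribution $\cR_{\cdot,\tau_U}$ draws \emph{fresh} independent randomness, so the martingale-style bound $\prod_i(1 - p_i) \le \exp(-\sum_i p_i)$ applies regardless of how the $p_i$ (equivalently the update masses) were chosen adaptively; the only subtlety is that one must condition on the event that $\tau_U$ does not change during the active window — which is automatic, since $\tau_U$ is updated only inside a rebuild of $W_{U,L}$, and a rebuild of $W_{U,L}$ ends the active window for every $W_{U,\ell}$. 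A secondary bit of care is needed because the early-break behavior on Line~\ref{line:break-l} and the re-sampling on Line~\ref{line:early-return} mean a witness can be ``being rebuilt'' in a nested fashion; but \cref{obs:will-be-rebuilt} already guarantees that any witness currently being rebuilt will actually be rebuilt before the governing \alg{MaintainExpander}{} returns, so these windows are well-defined and the active-tuple accounting goes through unchanged.
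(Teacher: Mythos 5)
The high-probability machinery you invoke (fresh randomness at each sample of $\cR_{\cdot,\tau_U}$, union bound over $\poly(n)$ timestamps and vertex-disjoint $U$'s, the forward-in-time argument against adaptive adversaries) is correct and matches the paper. But the way you propose to \emph{use} the good-to-bad buffer is not the mechanism the paper relies on, and as written there is an internal inconsistency that leaves a real gap.

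You propose to relax the bounds on $\Delta^{(U,\ell)}$ and $\delta_{\mathrm{ext}}^{(U,\ell)}$ by a factor $\Theta(1/\psi_\ell)$, reasoning that allowing the updates to accumulate more only \emph{decreases} the failure probability. That observation is true, but it does not attack the actual difficulty, and the paper in fact keeps the $\Delta,\delta_{\mathrm{ext}}$ bounds \emph{identical} to those in the event $\cK$ of \cref{lemma:good-event}. The real issue is: where does the reference window begin? If the last rebuild of $W_{U,\ell}$ happened \emph{before} the conditioning time $t_{\mathrm{start}}$, the portion of the window $[t_{\mathrm{last}}^{(U,\ell)}, t_{\mathrm{start}})$ is entirely outside the scope of the conditional high-probability guarantee (the randomness used there is not fresh relative to the event we condition on), so $\Delta$ and $\delta_{\mathrm{ext}}$ over that part are uncontrolled — potentially $\Theta(m)$, far beyond any $\Theta(1/\psi_\ell)$-relaxed budget. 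Your write-up reveals this confusion: you first (correctly) note that at $t_0$ the good state gives $\|\Br_{U,\ell}\|_1 \le \tau_U^{\ell/L}$, but later justify \cref{cond:blow-up}\labelcref{item:parameters} by saying ``$R = \|\Br_{U,\ell}\|_1$ at the last rebuild is still $\le \frac{\psi_\ell}{10}\tau_U^{\ell/L}$ by \cref{lemma:repair}.'' Those two choices of $R$ live on different windows, and the latter implicitly requires controlling the window back to the last rebuild — precisely what you cannot do.

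The fix, which is what the paper's proof does, is to truncate the window: set $\widetilde{t_{\mathrm{last}}^{(U,\ell)}} \defeq \max\{t_{\mathrm{last}}^{(U,\ell)}, t_{\mathrm{start}}\}$ and run the stability argument on $[\widetilde{t_{\mathrm{last}}^{(U,\ell)}}, t]$. If the truncation bites at $t_{\mathrm{start}}$, take $R \le \tau_U^{\ell/L}$ (the good-state bound, not the post-rebuild bound); if the last rebuild is after $t_{\mathrm{start}}$, take $R \le \frac{\psi_\ell}{10}\tau_U^{\ell/L}$ as usual — either way $R \le \tau_U^{\ell/L}$. Keeping the \emph{same} $\Delta \le \frac{\psi_\ell}{10}\tau_U^{\ell/L}$ and $\delta_{\mathrm{ext}} \le \frac{\phi\psi_\ell^2}{40}\tau_U^{\ell/L}$ from $\cK$ (over the truncated window, which is what \cref{lemma:good-event} controls) and feeding all this into \cref{lemma:bound-from-external-cuts} gives $\|\Br_{U,\ell}\|_1 \le \frac{4(R+\Delta)}{\psi_\ell} + \frac{8}{\psi_\ell^2 \phi}\delta_{\mathrm{ext}} \le \frac{4}{\psi_\ell}\tau_U^{\ell/L} + \frac{2}{5}\tau_U^{\ell/L} + \frac{1}{5}\tau_U^{\ell/L} \le \frac{10}{\psi_\ell}\tau_U^{\ell/L}$. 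So the $\Theta(1/\psi_\ell)$ slack between good and bad is \emph{exactly spent} by the larger starting $R$, not by a relaxed update budget; your proposed relaxation of $\Delta, \delta_{\mathrm{ext}}$ is both unnecessary and, if combined with $R=\tau_U^{\ell/L}$ as you must when $t_{\mathrm{last}}^{(U,\ell)} < t_{\mathrm{start}}$, would push the bound past $\frac{10}{\psi_\ell}\tau_U^{\ell/L}$ by an uncontrolled constant.
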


For simplicity of exposition, we will work with the quantities $\Size_{\ell}(b, \lambda)$ and $\Time_{\ell}(b, \lambda)$ that intuitively stand for the size of the cut \alg{MaintainExpander}{$U,\ell$} will return respectively the running time of the call, where $b = \vol_{F,\Bc_G}(U)$ and $\lambda = |U|$.
We formally define them as follows.
Consider $b \in \{0, \ldots, m\}$, $\lambda \in \{0, \ldots, n\}$, and $\ell \in \{0, \ldots, L\}$.
Let $\mathscr{B}_{b,\lambda,\ell}$ be the collection of all possible calls
of \alg{MaintainExpander}{$U, \ell$} that start when the algorithm is in a \emph{borderline} state with $\vol_F(U) \leq b$ and $|U| \leq \lambda$.
For each run $r \in \mathscr{B}_{b, \lambda,\ell}$ there are two random variables $S_{r} \in \{0, \ldots, m\}$\footnote{Recall that $m \leq n^4$ is the total capacities in the graph $G$.} and $T_{r} \in \{0, \ldots, \poly(n)\}$\footnote{It can be checked that the running time of the algorithm \emph{always} runs in polynomial time.} which represent the capacities $\Bc_G(X)$ of $X$ that this \alg{MaintainExpander}{$U, \ell$} call returns respectively the running time of it; the randomness comes from both the sampling of $k \sim \cR_{t,\tau}$ (on Line~\ref{line:sample-cut} in \cref{alg:cut-add-terminal} and Line~\ref{line:sample} in \cref{alg:maintain-exp-decomp}) and the expected guarantee of $\cM_{\prev}.\alg{Cut}{}$ (see \cref{def:hierarchy-maintainer}).
Let $\Omega$ be the space of randomness.
Let $S_r(\omega)$ and $T_r(\omega)$ be the realization of $S_r$ and $T_r$ on randomness $\omega$.
Additionally, we let $\widetilde{S}_r$ and $\widetilde{T}_r$ be random variables that act almost the same as $S_r$ and $T_r$ do, except if for some randomness $\omega$ the run $r \in \mathscr{B}_{b,\lambda,\ell}$ reaches a \emph{bad} state then we define $\widetilde{S}_r(\omega)$ and $\widetilde{T}_r(\omega)$ to both be realized to zero.

\begin{definition}
  We define $\Size_{\ell}(b, \lambda)$ to be the random variable on range $\{0, \ldots, m\}$ such that $\Size_{\ell}(b, \lambda)(\omega) \defeq \max_{r \in \mathscr{B}_{b,\lambda,\ell}}\widetilde{S}_r(\omega)$ for all $\omega \in \Omega$.
  Likewise, $\Time_{\ell}(b, \lambda)$ is a random variable on $\{0, \ldots, \poly(n)\}$ such that $\Time_{\ell}(b, \lambda)(\omega) \defeq \max_{r \in \mathscr{B}_{b,\lambda,\ell}}\widetilde{T}_r(\omega)$ for all $\omega \in \Omega$.
  \label{def:size-time}
\end{definition}

Note that by definition, $\Size_{\ell}(b, \lambda)$ and $\Time_{\ell}(b, \lambda)$ are increasing in both $b$ and $\lambda$.
The reason why we define $\Size_{\ell}(b, \lambda)$ and $\Time_{\ell}(b, \lambda)$ in terms of the rather bizarre-looking $\widetilde{S}_r$ and $\widetilde{T}_r$ is to ensure that we can assume the algorithm to be in a borderline state when doing the calculation which significantly simplifies matters.
To convert a bound on $\Size_{\ell}(b, \lambda)$ and $\Time_{\ell}(b, \lambda)$ to the expected guarantee of the \alg{Cut}{$D$} subroutine, we use \cref{lemma:remain-borderline} to argue that the contribution of bad states to the actual expectation can be made as small as $n^{-100}$ using the following fact since the probability of reaching those states from a good one is inverse polynomially small).

\begin{fact}
  If $Y$ is a random variables in $\{0, \ldots, \poly(n)\}$ and $\cE$ is an event that happens with high probability, then $\expect[Y] \leq \expect[Y \mid \cE] + n^{-100}$.
  \label{fact:expected-cond}
\end{fact}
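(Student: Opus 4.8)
The final statement in the excerpt is \cref{fact:expected-cond}:

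\begin{proof}[Proof of \cref{fact:expected-cond}]
The plan is to split the expectation according to whether $\cE$ occurs, bound the contribution of $\overline{\cE}$ using the crude polynomial upper bound on $Y$ together with the inverse-polynomial failure probability, and choose the constant in the ``with high probability'' guarantee large enough to absorb the exponent $100$.

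Concretely, first I would write the law of total expectation in the form
\[
  \expect[Y] = \expect[Y \mid \cE]\Pr[\cE] + \expect[Y \mid \overline{\cE}]\Pr[\overline{\cE}]
  \le \expect[Y \mid \cE] + \expect[Y \mid \overline{\cE}]\Pr[\overline{\cE}],
\]
using $\Pr[\cE] \le 1$ and nonnegativity of $Y$ (valid since $Y$ takes values in $\{0,\dots,\poly(n)\}$). Next, since $Y \le \poly(n)$ deterministically, we have $\expect[Y \mid \overline{\cE}] \le \poly(n)$, say $Y \le n^{c_0}$ for some fixed constant $c_0$. Recall that ``with high probability'' in this paper (see the preliminaries) means probability at least $1 - n^{-c}$ for an arbitrarily large but fixed constant $c$; in particular we are free to take $c = c_0 + 100$, so that $\Pr[\overline{\cE}] \le n^{-(c_0+100)}$. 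Then $\expect[Y \mid \overline{\cE}]\Pr[\overline{\cE}] \le n^{c_0} \cdot n^{-(c_0+100)} = n^{-100}$, which gives $\expect[Y] \le \expect[Y \mid \cE] + n^{-100}$ as claimed.

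There is essentially no obstacle here; the only point to be careful about is the quantifier on the constant hidden in ``with high probability'': the statement is true precisely because that constant may be chosen after seeing the polynomial degree $c_0$ bounding $Y$, so the argument is not circular. (If one wanted a uniform statement it would instead read $\expect[Y] \le \expect[Y \mid \cE] + n^{-(c - c_0)}$ for the particular constant $c$ guaranteed by $\cE$, but since $c$ is arbitrary this specializes to the stated bound.)
\end{proof}
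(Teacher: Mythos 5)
Your proof is correct, and it uses the only reasonable approach: split by the law of total expectation, bound the contribution on $\overline{\cE}$ by the deterministic polynomial bound on $Y$ times $\Pr[\overline{\cE}]$, and invoke the freedom to pick the constant in the ``with high probability'' definition. The paper states \cref{fact:expected-cond} without proof, so there is nothing to compare against, but your argument is exactly the intended one, and your remark about the quantifier order on the hidden constant correctly identifies the one subtlety.
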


This together with \cref{lemma:remain-borderline} implies the following.

\begin{lemma}
  If the algorithm is in a good state right before running \MaintainExpander{$U,\ell$}, then the subroutine runs in expected $\expect[\Time_{\ell}(\vol_{F,\Bc_G}(U),|U|)]+n^{-100}$ time and outputs an $X$ of expected size at most $\expect[\Size_{\ell}(\vol_{F,\Bc_G}(U), |U|)] + n^{-100}$.
  \label{lemma:from-size-to-actual-bound}
\end{lemma}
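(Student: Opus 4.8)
The statement to prove is \cref{lemma:from-size-to-actual-bound}, which converts the auxiliary quantities $\Size_\ell$ and $\Time_\ell$ defined in \cref{def:size-time} into bounds on the \emph{actual} (unconditioned) expected output size and running time of a call to \alg{MaintainExpander}{$U,\ell$} that begins in a good state. The plan is straightforward: it is essentially an application of \cref{lemma:remain-borderline} (a good state stays borderline w.h.p.) together with \cref{fact:expected-cond} (a high-probability event loses only an additive $n^{-100}$ in expectation of a polynomially bounded random variable).

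First I would fix the call $r_0$ corresponding to \alg{MaintainExpander}{$U,\ell$} starting from the given good state, and let $S_{r_0}$ and $T_{r_0}$ denote (as in the paragraph preceding \cref{def:size-time}) the actual output size $\Bc_G(X)$ and running time of this call, realized over the randomness $\omega \in \Omega$ of the algorithm (the sampling of $k \sim \cR_{t,\tau}$ and the expected guarantee of $\cM_{\prev}.\alg{Cut}{}$). Let $\cE$ be the event of \cref{lemma:remain-borderline}: conditioned on being in a good state now, the algorithm remains in borderline states until termination; this $\cE$ holds with high probability. The key observation is that on the event $\cE$, the call $r_0$ and every nested recursive call it spawns begins in a borderline state with $\vol_{F,\Bc_G}(U) \le \vol_{F,\Bc_G}(U)$ and $|U| \le |U|$, so $r_0 \in \mathscr{B}_{\vol_{F,\Bc_G}(U),|U|,\ell}$, and moreover on $\cE$ no bad state is ever reached, so the ``truncated'' variables coincide with the real ones: $\widetilde{S}_{r_0}(\omega) = S_{r_0}(\omega)$ and $\widetilde{T}_{r_0}(\omega) = T_{r_0}(\omega)$ for all $\omega \in \cE$. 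Hence on $\cE$ we have the pointwise domination $S_{r_0}(\omega) = \widetilde{S}_{r_0}(\omega) \le \max_{r \in \mathscr{B}_{\vol_{F,\Bc_G}(U),|U|,\ell}} \widetilde{S}_r(\omega) = \Size_\ell(\vol_{F,\Bc_G}(U),|U|)(\omega)$, and similarly $T_{r_0}(\omega) \le \Time_\ell(\vol_{F,\Bc_G}(U),|U|)(\omega)$ on $\cE$.

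From there the conclusion is a two-line computation. Since $S_{r_0}$ takes values in $\{0,\dots,m\} \subseteq \{0,\dots,\poly(n)\}$, apply \cref{fact:expected-cond} with $Y = S_{r_0}$ and the event $\cE$: $\expect[S_{r_0}] \le \expect[S_{r_0} \mid \cE] + n^{-100}$. On $\cE$ we have $S_{r_0} \le \Size_\ell(\vol_{F,\Bc_G}(U),|U|)$ pointwise, so $\expect[S_{r_0}\mid\cE] \le \expect[\Size_\ell(\vol_{F,\Bc_G}(U),|U|)\mid\cE] \le \frac{1}{\Pr[\cE]}\expect[\Size_\ell(\vol_{F,\Bc_G}(U),|U|)] \le (1+o(1))\expect[\Size_\ell(\vol_{F,\Bc_G}(U),|U|)]$ since $\Pr[\cE] = 1 - o(1)$; absorbing the $o(1)$ slack (which multiplies a $\poly(n)$-bounded quantity) into the additive $n^{-100}$ term — or simply bounding $\expect[\Size_\ell \mid \cE]\le\expect[\Size_\ell]+n^{-100}$ directly via \cref{fact:expected-cond} applied to $\Size_\ell$ and the always-true event, noting $\Pr[\cE]\ge 1-n^{-c}$ — gives $\expect[S_{r_0}] \le \expect[\Size_\ell(\vol_{F,\Bc_G}(U),|U|)] + n^{-100}$, which is the desired output-size bound. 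The identical argument with $Y = T_{r_0}$ (valued in $\{0,\dots,\poly(n)\}$, since the algorithm always runs in polynomial time) and the same event $\cE$ yields $\expect[T_{r_0}] \le \expect[\Time_\ell(\vol_{F,\Bc_G}(U),|U|)] + n^{-100}$.

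The only genuinely delicate point — and the one I would be most careful to get right — is verifying the pointwise domination on $\cE$: namely that, conditioned on $\cE$, the run $r_0$ really is an element of $\mathscr{B}_{\vol_{F,\Bc_G}(U),|U|,\ell}$ (it starts in a borderline state, as a good state is in particular borderline by \cref{def:states}, and it trivially satisfies the volume/size caps with equality), and that on $\cE$ no descendant call ever enters a bad state so that the truncation in $\widetilde S_r,\widetilde T_r$ is inert along the entire execution. This is exactly what \cref{lemma:remain-borderline} buys us: it is a statement about the whole future execution starting from a good state, not just the next update, so it covers every nested recursive \alg{MaintainExpander}{} invocation triggered inside $r_0$. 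Once that is in hand, the rest is the mechanical application of \cref{fact:expected-cond} sketched above, and no further properties of the witnesses or of $\cM_{\prev}$ are needed.
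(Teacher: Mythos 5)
Your proof is correct and follows the same route the paper intends: the paper gives no explicit proof, only the remark that the result ``together with \cref{lemma:remain-borderline} implies'' the lemma via \cref{fact:expected-cond}, and you have fleshed out exactly that argument — $r_0 \in \mathscr{B}_{\vol_{F,\Bc_G}(U),|U|,\ell}$ since good implies borderline, the truncated variables coincide with the true ones on the event $\cE$ of \cref{lemma:remain-borderline}, and the $\overline{\cE}$ contribution is suppressed to $n^{-100}$ because both $S_{r_0}$ and $T_{r_0}$ are $\poly(n)$-bounded. The one small awkwardness is the detour through $\expect[\Size_\ell \mid \cE] \le \tfrac{1}{\Pr[\cE]}\expect[\Size_\ell]$; the cleanest phrasing is the pointwise inequality $S_{r_0} \le \Size_\ell(\vol_{F,\Bc_G}(U),|U|) + \poly(n)\cdot\mathbf{1}_{\overline{\cE}}$, which gives the bound directly after taking expectations, but your version reaches the same conclusion after absorbing the constant slack into the $n^{-100}$ term.
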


Again, we recall that by \cref{lemma:probability-guarantee}, with high probability, the algorithm is always in a good state when running \alg{Cut}{$D$} even against an adaptive adversary.
Consequently, it remains to bound $\expect[\Size_{\ell}(b, \lambda)]$ and $\expect[\Time_{\ell}(b, \lambda)]$.

\paragraph{An Overestimating Approach.}

To bound the expected value of $\Size_{\ell}(b, \lambda)$ and $\Time_{\ell}(b, \lambda)$, we consider a run of \alg{MaintainExpander}{$U,\ell$} and write down a recurrence that upper bounds them based on \cref{alg:maintain-exp-decomp}.
We often deliberately overestimate the expectation.
More specifically, while we are in a borderline state and might reach a bad state in the recursion which makes the realization of both $\Size_{\ell}(b, \lambda)$ and $\Time_{\ell}(b, \lambda)$ be defined as zero, we compute their values \emph{as if we are always in borderline states}.
Note that since the values of these random variables are non-negative, doing this can only overestimate their true values.
This greatly simplifies setting up the recurrence.%

\paragraph{Setup.}

To bound these quantities, consider the execution of \alg{MaintainExpander}{$U, \ell$} which finds cuts $S_1, \ldots, S_r$ through either \cref{lemma:cut-matching} or \cref{lemma:prune-or-repair} until we exit the while-loop by either successfully constructing/repairing the witness $W_{U,\ell}$ or sampling a $k > \ell$ on Line~\ref{line:sample}.
Observe that the final return set $X$ can be written as the union of four parts $X^{(1)} \cup X^{(2)} \cup X^{(3)} \cup X^{(4)}$, where
\begin{itemize}
  \item $X^{(1)}$ are edges added on Line~\ref{line:remove-edges},
  \item $X^{(2)}$ are edges found by recursively calling \alg{MaintainExpander}{$U,k$} on Line~\ref{line:early-return} (let us call this a \emph{fixing} recursion),
  \item $X^{(3)}$ are edges found by recursively calling \alg{MaintainExpander}{$U, \ell - 1$} on Line~\ref{line:call-below} (let us call this a \emph{downward} recursion),
  \item $X^{(4)}$ are edges found by recursively calling \alg{MaintainExpander}{$S, L$} on Line~\ref{line:build-small-pieces} for all the cuts $S$ (let us call this a \emph{rebuilding} recursion).
\end{itemize}
The running time of \alg{MaintainExpander}{$U, \ell$} can be similarly split into what it takes to compute each of the $X^{(i)}$'s.
We adapt the same notation as in \cref{scenarion:stability}:
We let $F_i$ be the terminal set after finding $S_i$ and running the $F \gets F \cup A$, and in particular $F_0$ is the initial set $F$ when \alg{MaintainExpander}{$U,\ell$} is called.
Let $U_0 \defeq U$ and $U_i \defeq U_{i-1} \setminus S_i$ and let $\Delta_i \leq 2\Bc_G(F_i \setminus F_{i-1})$ be the units of volume added due to $S_i$.
We first prove some basic properties regarding the volume of each $U_i$ and $S_i$.

\begin{lemma}
  For $\phi \leq \frac{1}{4\beta_{\prev}}$, we have
  \begin{enumerate}[(1)]
      \item 
  $\expect[\vol_{F_i,\Bc_G}(U_i) \mid F_{i-1}, U_{i-1}] \leq \vol_{F_{i-1},\Bc_G}(U_{i-1})$ for all $i \in [r]$,
  \item $\expect[\vol_{F_i,\Bc_G}(U_i)] \leq \vol_{F_0,\Bc_G}(U_0)$ for all $i \in [r]$,
  \item $\expect[\vol_{F_r,\Bc_G}(S_1) + \cdots + \vol_{F_r,\Bc_G}(S_r)] \leq (1+8\phi\beta_{\prev})\vol_{F_0,\Bc_G}(U_0)$, and
  \item $\expect[\Bc_G(X^{(1)})] \leq \phi \cdot \expect[\vol_{F_{0},\Bc_G}(S_1) + \cdots + \vol_{F_{r-1},\Bc_G}(S_r)]$.
  \end{enumerate}
  \label{lemma:expected-bounds}
\end{lemma}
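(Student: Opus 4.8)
\textbf{Proof proposal for Lemma \ref{lemma:expected-bounds}.}

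The plan is to prove the four claims in order, since each builds on the previous one. For claim (1), I would analyze what happens when cut $S_i$ is found. The cut is found by either \cref{lemma:cut-matching} or \cref{lemma:prune-or-repair}, and then on Line~\ref{line:D} we take $D$ to be the smaller-capacity direction of the boundary of $S_i$, remove $D$ from $G_{\cM}$, and call $\cM_{\prev}.\alg{Cut}{D}$, which returns an edge set $A$ with $\expect[\Bc_G(A)] \leq \beta_{\prev}\Bc_G(D)$ (by \cref{def:hierarchy-maintainer}, noting that \cref{claim:cut-time} verifies the input requirement and that $A \subseteq G[U]$). The edges $A$ then get added to the terminal set $F$, so $\Delta_i \leq 2\Bc_G(A)$ and $\expect[\Delta_i \mid F_{i-1}, U_{i-1}, S_i, D] \leq 2\beta_{\prev}\Bc_G(D)$. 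Now $\vol_{F_i,\Bc_G}(U_i) = \vol_{F_{i-1},\Bc_G}(U_{i-1}) - \vol_{F_{i-1},\Bc_G}(S_i) - (\text{capacities of }D) + (\text{contribution of }A)$; more precisely, removing $S_i$ deletes from $U$'s volume the $F_{i-1}$-volume of $S_i$ plus the boundary edges $D$ that stay in $U_i$ but lose an endpoint, and adding $A$ back increases volume by at most $2\Bc_G(A)$. So conditionally we get $\expect[\vol_{F_i,\Bc_G}(U_i) \mid \cdots] \leq \vol_{F_{i-1},\Bc_G}(U_{i-1}) - \vol_{F_{i-1},\Bc_G}(S_i) + 2\beta_{\prev}\Bc_G(D)$. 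Using $\Bc_G(D) \leq \phi\cdot O(\vol_{F_{i-1},\Bc_G}(S_i))$ from the sparseness guarantees of the cut (either \cref{lemma:cut-matching} gives $\Bc_G(D) < \phi\vol_{F_{i-1},\Bc_G}(S_i)$, or \cref{claim:is-sparse-cut} gives $\Bc_G(D) \leq \frac{\phi\psi_{\ell+1}^2}{128}\vol_{F_{i-1},\Bc_G}(S_i)$), together with $\phi \leq \frac{1}{4\beta_{\prev}}$, the term $2\beta_{\prev}\Bc_G(D) \leq \tfrac{1}{2}\vol_{F_{i-1},\Bc_G}(S_i)$ is dominated by $\vol_{F_{i-1},\Bc_G}(S_i)$, yielding claim (1). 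I also need to absorb the randomness of the sampled $k$ on Line~\ref{line:sample}: whether or not a fixing recursion is triggered, the bound on $\vol_{F_i,\Bc_G}(U_i)$ in terms of $U_{i-1}$ holds (a fixing recursion only further splits $U$, which can only decrease its volume up to the terminal additions from \emph{its} internal cuts — but those are themselves accounted for recursively; alternatively, since $X^{(2)}$ edges are handled in the fixing recursion's own recurrence, I would define $U_i$ here as the component right after the $i$-th cut and its immediate \alg{UpdateWitness}{} call, treating the fixing recursion separately).

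Claim (2) follows from claim (1) by the tower property of conditional expectation: $\expect[\vol_{F_i,\Bc_G}(U_i)] = \expect\big[\expect[\vol_{F_i,\Bc_G}(U_i) \mid F_{i-1},U_{i-1}]\big] \leq \expect[\vol_{F_{i-1},\Bc_G}(U_{i-1})]$, and induction on $i$ starting from $\vol_{F_0,\Bc_G}(U_0)$ gives the result. Claim (3) is about the \emph{final} terminal set $F_r$ applied to all the $S_j$'s. Here $\sum_j \vol_{F_r,\Bc_G}(S_j) = \vol_{F_r,\Bc_G}(U_0) - \vol_{F_r,\Bc_G}(U_r)$ in the static sense, but because of terminal additions $\vol_{F_r,\Bc_G}(U_0) = \vol_{F_0,\Bc_G}(U_0) + \sum_j (\text{capacities of }A_j\text{ incident to }U_0)$. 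Since $\expect[\Bc_G(A_j)] \leq \beta_{\prev}\expect[\Bc_G(D_j)]$ and $\Bc_G(D_j) \leq \phi\cdot O(\vol_{F_{j-1},\Bc_G}(S_j))$, and $\sum_j \vol_{F_{j-1},\Bc_G}(S_j) \leq \sum_j \vol_{F_r,\Bc_G}(S_j)$ (terminal sets only grow), we get $\expect[\sum_j(\text{capacities added})] \leq O(\phi\beta_{\prev})\expect[\sum_j \vol_{F_r,\Bc_G}(S_j)]$. Rearranging, $\expect[\sum_j \vol_{F_r,\Bc_G}(S_j)] \leq \expect[\vol_{F_r,\Bc_G}(U_0)] \leq \vol_{F_0,\Bc_G}(U_0) + O(\phi\beta_{\prev})\expect[\sum_j \vol_{F_r,\Bc_G}(S_j)]$, and since $O(\phi\beta_{\prev}) \leq 1/4$ (or whatever constant is needed; using $\phi \leq \frac{1}{4\beta_{\prev}}$ and tracking the constant from the sparseness factor carefully — this is where the constant $8$ in $(1+8\phi\beta_{\prev})$ gets pinned down), we can solve for the left-hand side to get $\expect[\sum_j \vol_{F_r,\Bc_G}(S_j)] \leq (1+8\phi\beta_{\prev})\vol_{F_0,\Bc_G}(U_0)$. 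Claim (4) is the most direct: $X^{(1)}$ is exactly the union of the $D_j$'s taken on Line~\ref{line:D}, so $\Bc_G(X^{(1)}) = \sum_j \Bc_G(D_j) \leq \sum_j \phi\cdot\vol_{F_{j-1},\Bc_G}(S_j)$ by the sparseness of each cut (for $\ell = L$, \cref{lemma:cut-matching} gives the factor $\phi$ directly; for $\ell < L$, \cref{claim:is-sparse-cut} gives a factor $\frac{\phi\psi_{\ell+1}^2}{128} \leq \phi$), and taking expectations gives $\expect[\Bc_G(X^{(1)})] \leq \phi\cdot\expect[\sum_j \vol_{F_{j-1},\Bc_G}(S_j)]$.

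The main obstacle I anticipate is bookkeeping the interaction between the terminal-set growth and the volume accounting cleanly, particularly making sure the constants line up so that claim (3) gives exactly $(1+8\phi\beta_{\prev})$ and not something with a worse constant, and handling the fact that $S_i$'s volume is measured with respect to $F_{i-1}$ in some places (claim 4) and $F_r$ in others (claim 3). The key technical point is that terminal edges are only ever added, never removed, so $\vol_{F_{i-1},\Bc_G}(\cdot) \leq \vol_{F_r,\Bc_G}(\cdot)$ monotonically, which lets me pass between these measures at the cost of the additive terminal-addition terms — and those terms are themselves a $\phi\beta_{\prev}$-fraction of the volumes by the sparseness of the cuts and the quality of $\cM_{\prev}$. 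A secondary subtlety is that $U_D$ in a \alg{Cut}{} operation (the union of affected SCCs) is bounded by the relevant $|U|$, which is what \cref{claim:cut-time} ensures so that the $\cM_{\prev}.\alg{Cut}{}$ expected-size guarantee $\expect[\Bc_G(A)] \leq \beta_{\prev}\Bc_G(D)$ is actually applicable here; I would cite \cref{claim:cut-time} for this. Throughout I will use the overestimating convention described just before the lemma — computing as if the algorithm stays in borderline states — so I do not need to worry about the $\widetilde{S}_r$ truncation when deriving these volume recurrences.
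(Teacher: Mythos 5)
Your proof is correct, and parts (1), (2), and (4) follow the paper's argument essentially verbatim: you derive the same per-step conditional bound $\expect[\vol_{F_i,\Bc_G}(U_i)\mid F_{i-1},U_{i-1},S_i] \le \vol_{F_{i-1},\Bc_G}(U_{i-1}) - \vol_{F_{i-1},\Bc_G}(S_i) + 2\beta_{\prev}\phi\,\vol_{F_{i-1},\Bc_G}(S_i)$ by combining the sparseness of each cut (\cref{lemma:cut-matching} or \cref{claim:is-sparse-cut}, both of which give $\Bc_G(D_i)\le\phi\,\vol_{F_{i-1},\Bc_G}(S_i)$ since $\psi_{\ell+1}\le 1$) with the $\cM_{\prev}$ expected-recourse guarantee, and then close with $2\beta_{\prev}\phi\le 1/2$, the tower rule, and linearity.

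For part (3) you take a genuinely different, arguably cleaner route. You exploit the static partition identity $\sum_j \vol_{F_r,\Bc_G}(S_j) + \vol_{F_r,\Bc_G}(U_r) = \vol_{F_r,\Bc_G}(U_0)$, note that $\vol_{F_r,\Bc_G}(U_0)$ exceeds $\vol_{F_0,\Bc_G}(U_0)$ only by the terminal additions $\sum_j 2\Bc_G(A_j)$, bound those in expectation by $2\beta_{\prev}\phi\,\expect[\sum_j \vol_{F_r,\Bc_G}(S_j)]$ using monotonicity of volume in the terminal set, and then solve the resulting self-referential inequality $X \le \vol_{F_0,\Bc_G}(U_0) + 2\beta_{\prev}\phi X$ to get $X \le \frac{1}{1-2\beta_{\prev}\phi}\vol_{F_0,\Bc_G}(U_0) \le (1+8\beta_{\prev}\phi)\vol_{F_0,\Bc_G}(U_0)$. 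The paper instead forms a $\frac{1-2\beta_{\prev}\phi}{1+2\beta_{\prev}\phi}$-weighted sum of its two per-step recurrences (for $\vol(U_i)$ and $\vol(S_i)$), telescopes it, and invokes the identity $\vol_{F_r,\Bc_G}(S_i) = \vol_{F_i,\Bc_G}(S_i)$ (since $A_j \subseteq U_{j-1}$ is disjoint from $S_i$ for $j>i$) to pass from $F_i$ to $F_r$. Both yield the constant $8$; your version makes the volume-conservation structure explicit at the cost of an extra "solve for $X$'' step, while the paper's telescoping needs that disjointness observation but avoids the rearrangement. One cosmetic note: where you write $\Bc_G(D)\le\phi\cdot O(\vol_{F_{i-1},\Bc_G}(S_i))$, you in fact need (and implicitly use a line later, when you conclude $2\beta_{\prev}\Bc_G(D)\le\tfrac12\vol_{F_{i-1},\Bc_G}(S_i)$) the exact factor $\phi$ with no hidden constant; happily both \cref{lemma:cut-matching} and \cref{claim:is-sparse-cut} deliver it.
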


\begin{proof}
  Let $D_i$ be the set $D$ defined on Line~\ref{line:D} when $S_i$ is found.
  By \cref{lemma:cut-matching,claim:is-sparse-cut}, we have $\Bc_G(D_i) \leq \phi \cdot \vol_{F_{i-1},\Bc_G}(S_i)$.
  Let $A_i$ be the set $A$ obtained by $A \gets \cM_{\prev}.\alg{Cut}{D_i}$ after $S_i$ is found.
  By the guarantee of $\cM_{\prev}$, we have $\expect[\Bc_G(A_i) \mid S_i] \leq \beta_{\prev} \cdot \Bc_G(D_i) \leq \beta_{\prev}\phi \cdot \vol_{F_{i-1},\Bc_G}(S_i)$.
  Since we are adding at most $2\Bc_G(A_i)$ units of volume to $U_{i-1}$ in total, we have
  \begin{equation}
    \expect[\vol_{F_i,\Bc_G}(U_i) \mid U_{i-1}, F_{i-1}, S_i] \leq \vol_{F_{i-1},\Bc_G}(U_{i-1}) - \vol_{F_{i-1},\Bc_G}(S_i) + 2\beta_{\prev}\phi \cdot \vol_{F_{i-1},\Bc_G}(S_i)
    \label{eq:vol-U}
  \end{equation}
  and
  \begin{equation}
    \expect[\vol_{F_i,\Bc_G}(S_i) \mid U_{i-1}, F_{i-1}] \leq \vol_{F_{i-1},\Bc_G}(S_i) + 2\beta_{\prev}\phi \cdot \vol_{F_{i-1},\Bc_G}(S_i).
    \label{eq:vol-S}
  \end{equation}
  As $2\beta_{\prev}\phi \leq 1/2$, we see that in expectation $\vol_{F_i,\Bc_G}(U_i)$ is smaller than $\vol_{F_{i-1},\Bc_G}(U_{i-1})$.
  This proves (1), and (2) simply follows from (1) and the law of total expectation.
  
  For (3), we add a $\left(\frac{1-2\beta_{\prev}\phi}{1+2\beta_{\prev}\phi}\right)$-multiple of \eqref{eq:vol-S} to \eqref{eq:vol-U} and get
  \[
    \expect\left[\left(\frac{1-2\beta_{\prev}\phi}{1+2\beta_{\prev}\phi}\right)\vol_{F_i,\Bc_G}(S_i) + \vol_{F_{i},\Bc_G}(U_{i})  - \vol_{F_{i-1},\Bc_G}(U_{i-1}) \mathrel{\Big|} U_{i-1}, F_{i-1}, S_i\right] \leq 0
  \]
  which implies
  \[
    \expect\left[\sum_{i \in [r]}\vol_{F_i,\Bc_G}(S_i)\right] \leq \frac{1+2\beta_{\prev}\phi}{1-2\beta_{\prev}\phi} \cdot \vol_{F_0,\Bc_G}(U_0) \leq (1+8\beta\phi)\vol_{F_0,\Bc_G}(U_0)
  \]
  for $2\beta_{\prev}\phi \leq \frac{1}{2}$.
  The bound on $\expect[\vol_{F_r,\Bc_G}(S_1) + \cdots + \vol_{F_r,\Bc_G}(S_r)]$ then follows from the observation that $\vol_{F_r,\Bc_G}(S_i) = \vol_{F_i,\Bc_G}(S_i)$ for all $i \in [r]$, since the $A_j \subseteq U_{j-1}$ which is disjoint from $S_i$ for $j > i$. %
  Finally, (4) follows simply from \cref{claim:is-sparse-cut} and the linearity of expectation.
\end{proof}

Let us use uppercase letters (e.g., $B, \Lambda$) to denote random variables and lowercase letters (e.g., $b, \lambda$) to denote their realizations.
We first derive a recurrence of $\Size_{\ell}(b, \lambda)$ and $\Time_\ell(b, \lambda)$ for the topmost layer $\ell = L$.

\begin{lemma}
  For $\phi < O\left(\frac{\psi_0^3}{\beta_{\prev}\log^3 n}\right)$ sufficiently small, we have
  \begin{equation}
  \begin{split}
  \expect[\Size_L(b, \lambda)] \leq 3\phi b &+
  \max_{\cD_1}\expect_{B^\prime \sim \cD_1}\left[\expect[\Size_{L-1}(B^\prime, \lambda) \mid B^\prime]\right] \\
  &+ \max_{r \in \{0, \ldots, n\}}\max_{\cD_{2,r}}\max_{\lambda_1,\ldots,\lambda_r}\expect_{(B_1, \ldots, B_r) \sim \cD_{2,r}}\left[\sum_{i \in [r]} \expect[\Size_L(B_i, \lambda_i) \mid B_i]\right]
  \label{eq:rec-size-L-complex}
  \end{split}
  \end{equation}
  and
  \begin{equation}
  \begin{split}
  \expect[\Time_L(b, \lambda)] &\leq \widetilde{O}\left(\frac{1}{\psi_0^2}\right) \cdot \left(\widetilde{O}\left(\frac{\lambda^2}{\phi\phi^\prime}\right) + T_{\prev}(\lambda)\right) \\ &+ \max_{\cD_1}\expect_{B^\prime \sim \cD_1}[\expect[\Time_{L-1}(B^\prime, \lambda) \mid B^\prime]] \\ &+ \max_{r \in \{0, \ldots, n\}}\max_{\cD_{2,r}}\max_{\lambda_1, \ldots, \lambda_r}\expect_{(B_1,\ldots,B_r) \sim \cD_{2,r}}\left[\sum_{i \in [r]}\expect[\Time_L(B_i, \lambda_i) \mid B_i]\right],
  \end{split}
  \label{eq:rec-time-L-complex}
  \end{equation}
  where $\cD_{1}$ iterates over all distributions on $\{0, \ldots, m\}$ with $\expect_{B^\prime \in \cD_{1}}[B^\prime] \leq b$,
  $\cD_{2,r}$ iterates over all distributions on $\{0, \ldots, m\}^r$ such that $\expect_{B_1, \ldots, B_r \sim \cD_{2,r}}[B_i] \leq \frac{3}{4}b$ and $\expect_{B_1, \ldots, B_r \sim \cD_{2,r}}[B_1 + \cdots + B_r] \leq (1+8\phi\beta_{\prev}) B$, and
  $\lambda_1, \ldots, \lambda_r$ iterate over all such sequences with $\lambda_1 + \cdots + \lambda_r < \lambda$.
  \label{lemma:rec-L-complex}
\end{lemma}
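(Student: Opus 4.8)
\textbf{Proof proposal for \cref{lemma:rec-L-complex}.}

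The plan is to carefully trace through one execution of \alg{MaintainExpander}{$U,L$} (\cref{alg:maintain-exp-decomp}) starting from a borderline state with $\vol_{F_0,\Bc_G}(U_0)\le b$ and $|U_0|\le\lambda$, and account for the four pieces $X^{(1)},X^{(2)},X^{(3)},X^{(4)}$ of the returned edge set separately, matching them to the three terms on the right-hand side of \eqref{eq:rec-size-L-complex}. The first term $3\phi b$ will absorb $X^{(1)}$: by \cref{lemma:expected-bounds}(4) we have $\expect[\Bc_G(X^{(1)})]\le\phi\cdot\expect[\sum_i\vol_{F_{i-1},\Bc_G}(S_i)]$, and combining with \cref{lemma:expected-bounds}(3) (noting $\vol_{F_{i-1}}(S_i)\le\vol_{F_r}(S_i)$ since later updates only add volume) this is at most $\phi(1+8\phi\beta_{\prev})b\le 3\phi b$ for $\phi$ small enough. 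The fixing recursion $X^{(2)}$ only occurs if we sample some $k>\ell=L$, which is impossible since $k\le L$ always; so $X^{(2)}=\emptyset$ at the top level — a genuine simplification to be used. The downward recursion $X^{(3)}$ from \alg{MaintainExpander}{$U,L-1$} is bounded by $\Size_{L-1}(B',\lambda)$ where $B'=\vol_{F_r,\Bc_G}(U_r)$ is the volume after all internal cuts are carved out; by \cref{lemma:expected-bounds}(1)--(2) the distribution of $B'$ is one of the admissible $\cD_1$ (it has expectation $\le b$, conditioning on the realized history), which gives the $\max_{\cD_1}$ term. Note $|U_r|\le|U_0|\le\lambda$ so the $\lambda$ argument is unchanged.

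The rebuilding recursions $X^{(4)}$ are the $\sum_{S\in Q}\alg{MaintainExpander}{S,L}$ calls over the carved-out cuts $S_1,\dots,S_r$, contributing $\sum_{i\in[r]}\Size_L(B_i,\lambda_i)$ with $B_i=\vol_{F_r,\Bc_G}(S_i)$ and $\lambda_i=|S_i|$. Here I need to verify that $(B_1,\dots,B_r)$ follows an admissible distribution $\cD_{2,r}$: the constraint $\expect[B_1+\cdots+B_r]\le(1+8\phi\beta_{\prev})b$ is exactly \cref{lemma:expected-bounds}(3), and the per-cut constraint $\expect[B_i]\le\frac34 b$ follows from \cref{cond:blow-up}\labelcref{item:small-side} together with \eqref{eq:real-cond} (each cut has $\vol_{F_r}(S_i)\le\frac58\vol_{F_r}(U_0)\le\frac58\cdot\frac{b}{1-8\phi\beta_{\prev}}\le\frac34 b$, using that total added volume $\Delta$ is small because we are in a borderline state — this is where \cref{lemma:probability-guarantee}/\cref{lemma:remain-borderline} and the overestimating convention enter). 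The vertex-size constraint $\lambda_1+\cdots+\lambda_r<\lambda$ holds since the $S_i$ are disjoint subsets of $U_0$ with $U_r=U_0\setminus(S_1\cup\cdots\cup S_r)\ne\emptyset$ (the downward recursion still needs a nonempty $U_r$; if $U_r$ becomes negligible we stop, which only helps). Summing the three contributions $X^{(1)},X^{(3)},X^{(4)}$ and taking the worst case over the realized history — which is dominated by taking $\max$ over the admissible distribution classes and over $r$ and $(\lambda_i)$ — yields \eqref{eq:rec-size-L-complex}.

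For the running time \eqref{eq:rec-time-L-complex} the structure is parallel: the non-recursive work is the while-loop body, which performs at most $O(\log n)/\psi_0^{O(1)}=\widetilde O(1/\psi_0^2)$ iterations (each sparse cut found makes definite progress, since by \cref{claim:is-sparse-cut}/\cref{lemma:cut-matching} each cut has volume $\Omega(\psi_L\tau_U)$ and total volume is bounded), and each iteration runs \cref{lemma:cut-matching} (\textsc{CutOrEmbed}) costing $\widetilde O(\lambda^2/(\phi\phi^\prime))$ by \cref{obs:has-expander-hierarchy}, plus one call $\cM_{\prev}.\alg{Cut}{D}$ costing $T_{\prev}(\lambda)$ by \cref{claim:cut-time}, plus bookkeeping from \alg{UpdateWitness}{} which is lower order; this gives the $\widetilde O(1/\psi_0^2)\cdot(\widetilde O(\lambda^2/(\phi\phi^\prime))+T_{\prev}(\lambda))$ leading term. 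The recursive contributions $\Time_{L-1}(B',\lambda)$ and $\sum_i\Time_L(B_i,\lambda_i)$ are handled exactly as in the size recurrence using the same admissible distribution classes $\cD_1$ and $\cD_{2,r}$. I expect the main obstacle to be the bookkeeping in verifying the admissibility of $\cD_{2,r}$ — specifically, carefully justifying that the per-cut bound $\expect[B_i]\le\frac34 b$ holds uniformly via \cref{cond:blow-up} while staying within the ``pretend we are always in a borderline state'' overestimating convention, and ensuring that conditioning on the realized history $F_{i-1},U_{i-1},S_i$ (rather than just $B_i$) does not break the clean recurrence structure. The subtlety is that the distribution of $B_i$ depends on earlier choices, so one must argue that the supremum over all consistent joint distributions is what appears, and that this supremum is still constrained by the displayed moment inequalities — a direct consequence of \cref{lemma:expected-bounds} plus the law of total expectation.
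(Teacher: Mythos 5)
The overall decomposition into $X^{(1)}$ (cut edges), $X^{(2)}=\emptyset$, $X^{(3)}$ (downward recursion), and $X^{(4)}$ (rebuilding recursions) matches the paper's proof, as does the use of \cref{lemma:expected-bounds} for $X^{(1)}$ and the total-sum constraint on $B_1,\dots,B_r$. However, your derivation of the per-cut bound $\expect[B_i]\le\frac34 b$ has a genuine gap. You route it through \cref{cond:blow-up}\labelcref{item:small-side} and \eqref{eq:real-cond}, claiming a pointwise chain $\vol_{F_r}(S_i)\le\frac58\vol_{F_r}(U_0)\le\frac58\cdot\frac{b}{1-8\phi\beta_{\prev}}\le\frac34 b$. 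The middle inequality requires $\vol_{F_r}(U_0)=b+\Delta\le\frac{b}{1-8\phi\beta_{\prev}}$, i.e.\ $\Delta\lesssim 8\phi\beta_{\prev}b$, but $\Delta$ is a random variable whose pointwise value can exceed this; being in a borderline state (which constrains $\Br$ and $\tau_U$) does not directly bound $\Delta$. Moreover, \eqref{eq:real-cond} itself is derived under \cref{cond:blow-up}\labelcref{item:parameters}, which requires $\Delta\le\frac{\psi_\ell}{64}\vol_{F_0}(U_0)$ — a high-probability statement via $\cK$, not a pointwise one — and invoking the overestimating convention does not fix this because that convention is about which state the \emph{recursive} calls start from, not about $\Delta$'s magnitude in the current call. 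The paper instead obtains $\expect[B_i]\le\frac34 b$ cleanly and unconditionally by combining \eqref{eq:vol-S} (giving $\expect[\vol_{F_i}(S_i)\mid U_{i-1},F_{i-1}]\le\frac32\vol_{F_{i-1}}(S_i)$) with the half-volume guarantee $\vol_{F_{i-1}}(S_i)\le\frac12\vol_{F_{i-1}}(U_{i-1})$ from \cref{lemma:cut-matching}, then using $\vol_{F_r}(S_i)=\vol_{F_i}(S_i)$ and the law of total expectation with \cref{lemma:expected-bounds}(2); no appeal to \cref{cond:blow-up} or high-probability events is needed, which is exactly what keeps the recurrence clean.

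A secondary weak spot: for the time bound, the claim that the while-loop has $\widetilde O(1/\psi_0^2)$ iterations because ``each sparse cut found makes definite progress'' glosses over the fact that the volume of $U$ can \emph{increase} after a cut (the set $A$ returned by $\cM_{\prev}.\alg{Cut}{}$ adds volume), so progress holds only in expectation. The paper handles this by showing $\expect[\vol_{F_i}(U_i)]\le(1-\Omega(\psi_0^3/\log^3 n))^i\,b$ using that $\phi$ is small enough so the expected added volume does not cancel the expected removed volume, then pushes the expected volume below $1/\lambda$ after $\widetilde\Theta(1/\psi_0^3)$ steps and applies Markov plus the deterministic $\lambda$-step cap. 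You should make that argument explicit rather than asserting definite progress per step.
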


To avoid the cumbersome expressions as in \cref{lemma:rec-L}, in the remainder of the section we will slightly overload the notation and simply write, e.g., $\expect_{B^\prime}$ without the preceding $\max_{\cD_1}$, and later in the description of lemma specifies the properties that $\cD_1$ has to satisfy.
This means that we in fact consider all qualifying distributions and pick the one that maximizes the expression.
As a concrete example, we will rewrite \cref{lemma:rec-L-complex} in the following form. %

\begin{lemma}[\cref{lemma:rec-L-complex} restated]
  For $\phi < O\left(\frac{\psi_0^3}{\beta_{\prev}\log^3 n}\right)$ sufficiently small, we have
  \begin{equation}
  \begin{split}
  \expect[\Size_L(b, \lambda)] \leq 3\phi b &+
  \expect_{B^\prime}\left[\expect[\Size_{L-1}(B^\prime, \lambda) \mid B^\prime]\right] \\
  &+ \max_{r, \lambda_1,\ldots,\lambda_r}\expect_{B_1, \ldots, B_r}\left[\sum_{i \in [r]} \expect[\Size_L(B_i, \lambda_i) \mid B_i]\right]
  \label{eq:rec-size-L}
  \end{split}
  \end{equation}
  and
  \begin{equation}
  \begin{split}
  \expect[\Time_L(b, \lambda)] &\leq \widetilde{O}\left(\frac{1}{\psi_0^3}\right) \cdot \left(\widetilde{O}\left(\frac{\lambda^2}{\phi\phi_{\prev}^2}\right) + T_{\prev}(\lambda)\right) \\ &+ \expect_{B^\prime}[\expect[\Time_{L-1}(B^\prime, \lambda) \mid B^\prime]]  + \max_{r, \lambda_1, \ldots, \lambda_r}\expect_{B_1,\ldots,B_r}\left[\sum_{i \in [r]}\expect[\Time_L(B_i, \lambda_i) \mid B_i]\right],
  \end{split}
  \label{eq:rec-time-L}
  \end{equation}
  where $B^\prime$ is a random variable satisfying $\expect[B^\prime] \leq b$,
  $B_1, \ldots, B_r$ are random variables satisfying $\expect[B_i] \leq \frac{3}{4}b$ and $\expect[B_1 + \cdots + B_r] \leq (1+8\phi\beta_{\prev})b$, and
  $\lambda_1, \ldots, \lambda_r$ iterate over all such sequences with $\lambda_1 + \cdots + \lambda_r < \lambda$.
  \label{lemma:rec-L}
\end{lemma}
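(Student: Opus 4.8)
\textbf{Proof proposal for \cref{lemma:rec-L}.}

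The plan is to trace through one invocation of \alg{MaintainExpander}{$U, L$} and bound the contribution of each of the four output parts $X^{(1)}, X^{(2)}, X^{(3)}, X^{(4)}$ (and the corresponding running-time components), following the decomposition already set up in the ``Setup'' paragraph. The $X^{(1)}$ part consists of the cut edges $D_i$ added across the at most $r$ iterations of the while-loop; by \cref{lemma:expected-bounds}\labelcref{item:parameters} (part (4)), its expected total capacity is at most $\phi \cdot \expect[\vol_{F_0,\Bc_G}(S_1) + \cdots + \vol_{F_{r-1},\Bc_G}(S_r)]$, and since each $\vol_{F_{i-1},\Bc_G}(S_i) \le \vol_{F_r,\Bc_G}(S_i)$ after at most some volume increase, combining with \cref{lemma:expected-bounds}(3) and choosing $\phi$ small enough that $1 + 8\phi\beta_{\prev} \le 3/2$ (say), this is bounded by $3\phi b$, giving the leading term. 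For the running time of this part, the while-loop makes $\widetilde{O}(1/\psi_0)$ iterations before a successful rebuild in expectation (this is where the distribution $\cR_{t,\tau}$ and the grace-period analysis from \cref{subsec:stability,subsec:maintain-exp-decomp} come in, via \cref{lemma:probability-guarantee}), and each iteration runs \alg{CutOrEmbed}{} (\cref{lemma:cut-matching}) costing $\widetilde{O}(\lambda^2/(\phi\phi_{\prev}^2))$ plus a $\cM_{\prev}.\alg{Cut}{}$ call costing $T_{\prev}(\lambda)$ by \cref{claim:cut-time}; the extra $\widetilde{O}(1/\psi_0^3)$ factor absorbs the various polylog-in-$\psi_0$ losses and the number of loop iterations.

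The $X^{(3)}$ (downward) recursion contributes exactly the term $\expect_{B'}[\expect[\Size_{L-1}(B', \lambda) \mid B']]$ for $\Size$ and the analogous term for $\Time$: the downward call is on the \emph{same} vertex set $U$ (so $\lambda$ is unchanged) but possibly with larger volume $B'$ after the updates, and \cref{lemma:expected-bounds}(2) gives $\expect[B'] \le b = \vol_{F_0,\Bc_G}(U_0)$, which is precisely the constraint imposed on $\cD_1$ in the lemma statement. The $X^{(4)}$ (rebuilding) recursions are the calls \alg{MaintainExpander}{$S_i, L$} on each carved-out side $S_i$; these land on vertex-disjoint subsets so $\sum_i |S_i| = \lambda_1 + \cdots + \lambda_r < \lambda$, and their volumes $B_i$ satisfy $\expect[B_i] \le \frac{3}{4} b$ (each cut $S_i$ is the smaller side up to the volume added, controlled by \cref{cond:blow-up}\labelcref{item:small-side} and \eqref{eq:real-cond}) and $\expect[\sum_i B_i] \le (1 + 8\phi\beta_{\prev}) b$ by \cref{lemma:expected-bounds}(3); this matches the constraints on $\cD_{2,r}$. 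So $X^{(4)}$ contributes the $\max_{r,\lambda_1,\ldots,\lambda_r}\expect_{B_1,\ldots,B_r}[\sum_i \expect[\Size_L(B_i,\lambda_i) \mid B_i]]$ term.

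The remaining piece is the $X^{(2)}$ (fixing) recursion triggered when the sample $k \sim \cR_{2\Bc_G(A),\tau_U}$ on Line~\ref{line:sample} satisfies $k > \ell = L$ — but since $k$ ranges over $\{0,\ldots,L\}$, for $\ell = L$ this event is impossible, so $X^{(2)} = \emptyset$ at the top layer and contributes nothing; this is the reason the $\ell = L$ recurrence is cleaner than the general-$\ell$ one. Putting the four contributions together and noting that all the expectations should be evaluated \emph{as if the algorithm stays in borderline states} (the overestimating convention, which only inflates the non-negative quantities $\widetilde S_r, \widetilde T_r$ and lets us invoke \cref{lemma:probability-guarantee}-style bounds freely) yields exactly \eqref{eq:rec-size-L} and \eqref{eq:rec-time-L}. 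The main obstacle I anticipate is not any single inequality but rather carefully justifying the overestimating step and confirming that conditioning on borderline states does not create circular dependence with the random choices made inside the recursive calls — i.e., that one can legitimately replace the true (bad-state-zeroed) random variables by their borderline-state counterparts throughout the recursion, using monotonicity of $\Size_\ell, \Time_\ell$ in $b$ and $\lambda$ together with \cref{lemma:remain-borderline} and \cref{fact:expected-cond}. A secondary subtlety is getting the polylog bookkeeping right so that the precise requirement $\phi < O(\psi_0^3/(\beta_{\prev}\log^3 n))$ emerges (it comes from needing $2\beta_{\prev}\phi \le 1/2$ for \cref{lemma:expected-bounds} and from the $\widetilde{O}(1/\psi_0)$ loop-count bound being valid), but that is routine once the structure above is in place.
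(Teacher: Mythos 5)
Your decomposition into $X^{(1)},X^{(2)},X^{(3)},X^{(4)}$ and the bounds you extract for each match the paper's proof almost exactly: the $3\phi b$ term from \cref{lemma:expected-bounds}(3)-(4), the $X^{(2)}=\emptyset$ observation at $\ell=L$, the $\Size_{L-1}$ term from the downward recursion with $\expect[B']\le b$, and the $\Size_L$ terms from the rebuilding recursions with $\expect[B_i]\le\tfrac34 b$ (via \eqref{eq:vol-S} and the half-volume guarantee of \cref{lemma:cut-matching}) and $\expect[\sum B_i]\le(1+8\phi\beta_\prev)b$.

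There is, however, one substantive misattribution in your time analysis. You write that the while-loop makes $\widetilde O(1/\psi_0)$ iterations ``in expectation'' because of ``the distribution $\cR_{t,\tau}$ and the grace-period analysis \dots via \cref{lemma:probability-guarantee}.'' At $\ell=L$ the sampling mechanism plays no role in terminating the loop: as you yourself correctly observe when arguing $X^{(2)}=\emptyset$, the sampled $k$ never exceeds $L$, so the early break on Line~\ref{line:break-l} never fires. The loop at $\ell=L$ terminates because \alg{CutOrEmbed}{} returns \emph{balanced} cuts — by \cref{lemma:cut-matching} (and using $\tau_U=\frac{\psi_0^2}{64z}\vol_{F}(U)$ from Line~\ref{line:reset}), each $S_i$ satisfies $\vol_{F_{i-1},\Bc_G}(S_i)\ge\Omega(\psi_0^3/\log^3 n)\cdot\vol_{F_{i-1},\Bc_G}(U_{i-1})$, and then \eqref{eq:vol-U}, together with $\phi<O(\psi_0^3/(\beta_\prev\log^3 n))$ so the volume re-added by $\cM_\prev.\alg{Cut}{}$ does not swamp the volume removed, gives a geometric decrease $\expect[\vol_{F_i}(U_i)\mid U_{i-1},F_{i-1}]\le(1-\Omega(\psi_0^3/\log^3 n))\vol_{F_{i-1}}(U_{i-1})$. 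After $\widetilde\Theta(1/\psi_0^3)$ rounds the expected remaining volume is $\le1/\lambda$, and Markov plus the trivial per-iteration vertex removal finishes the bound. This is a ball-growing argument, not a rebuilding-frequency argument; the $\cR_{t,\tau}$ machinery is what controls the recursion at $\ell<L$, which is the next lemma. Your final constant $\widetilde O(1/\psi_0^3)$ is right, but the mechanism you cite for it would not actually bound the loop. The rest of your proposal, including the overestimating-convention concern (which the paper resolves through the definition of $\widetilde S_r,\widetilde T_r$ in \cref{def:size-time} and \cref{lemma:remain-borderline}), is sound.
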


\begin{proof}
  We first bound $\Size_L(b, \lambda)$, and by the linearity of expectation it suffices to bound each of the $X^{(1)}$, $X^{(2)}$, $X^{(3)}$, and $X^{(4)}$.
  \begin{itemize}
  \item
  By \cref{lemma:expected-bounds}, $X^{(1)}$ has expected total capacities at most $3\phi b$.
  \item
  The edge set $X^{(2)}$ is empty for \alg{MaintainExpander}{$U, L$} as the $k$ sampled on Line~\ref{line:sample} is never greater than $L$.
  \item
  The expected total capacities of $X^{(3)}$ is $\expect_{B^\prime}[\expect[\Size_{L-1}(B^\prime, \lambda) \mid B^\prime]]$ where $B^\prime$ a random variable indicating the volume of the set $U$ when calling \alg{MaintainExpander}{$U, L - 1$} on Line~\ref{line:call-below}.
  By \cref{lemma:expected-bounds}, we have $\expect[B^\prime] \leq B$ (the volume might, with low probability, increase since the set of terminals `$F$' increase).
  \item
  Finally, for $X^{(4)}$, its total capacities is equal to $\sum_{i \in [r]}\expect[\Size(b_i, \lambda_i, L)]$, where $b_i$ is the volume of the cut $S_i$ and $\lambda_i$ is the number of vertices in $S_i$.
  By \cref{lemma:expected-bounds}, if we let $B_i$ be the random variables for $b_i$, then we have $\expect_{B_1,\ldots,B_r}[B_1 + \cdots + B_r] \leq (1+8\phi\beta)B$.
  Moreover, as the $S_i$'s are vertex-disjoint and are proper cuts, we have $\lambda_1 + \cdots + \lambda_r < \lambda$.
  It remains to prove the expected marginal of each $B_i$.
  By \eqref{eq:vol-S} and that $2\beta\phi \leq \frac{1}{2}$, we know that using the volume upper-bound of \cref{lemma:cut-matching}
  \[
    \expect[\vol_{F_i,\Bc_G}(S_i) \mid U_{i-1}, F_{i-1}] \leq \frac{3}{2}\vol_{F_{i-1},\Bc_G}(S_i) \leq \frac{3}{4}\vol_{F_{i-1},\Bc_G}(U_{i-1}).
  \]
  By the law of total expectation and \cref{lemma:expected-bounds} this implies $\expect[\vol_{F_i,\Bc_G}(S_i)] \leq \frac{3}{4}\vol_{F_0,\Bc_G}(U_0)$ or equivalently $\expect[B_i] \leq \frac{3}{4}B$.
  \end{itemize}
  This proves \eqref{eq:rec-size-L}.
  We likewise bound each of the four parts of $\Time_L(b, \lambda)$ and then apply the linearity of expectation to derive \eqref{eq:rec-time-L}.
  Note that among them it suffices to bound the time spent to compute $X^{(1)}$ as the other parts follow the same recurrences as their counterparts in \eqref{eq:rec-size-L}.
  For this we bound the number of iterations the while-loop in \alg{MaintainExpander}{$U, L$} takes and then use the fact that the time spent in each iteration is dominated by calling \alg{CutOrEmbed}{} (which by \cref{lemma:cut-matching} takes $O(\frac{\lambda^2}{\phi\phi^\prime})$ time) and running $\cM_{\prev}.\alg{Cut}{D}$ (which by \cref{claim:cut-time} takes $T_{\prev}(\lambda)$ time).\footnote{Indeed, since a witness on $U$ has at most $|U|^2$ edges, the time it takes to update the witnesses in \alg{UpdateWitness}{} can be easily charged to the time spent in constructing/repairing them.}
  By \cref{lemma:cut-matching}, we have $\vol_{F_{i-1},\Bc_G}(S_i) \geq \frac{1}{4t_{\CMG}} \cdot \frac{\psi_L}{10}\tau_U$ where $\tau_U = \frac{\psi_0^2}{64z}\vol_{F_{i-1},\Bc_G}(U_{i-1})$ according to Line~\ref{line:call-cut-matching} in \cref{alg:maintain-exp-decomp}.
  This shows that $\vol_{F_{i-1},\Bc_G}(S_i) \geq \Omega\left(\frac{\psi_0^3}{\log^3 n}\right)\vol_{F_{i-1},\Bc_G}(U_{i-1})$,
  which by \eqref{eq:vol-U} gives
  \[
    \expect[\vol_{F_i,\Bc_G}(U_i) \mid U_{i-1}, F_{i-1}] \leq \left(1-\Omega\left(\frac{\psi_0^3}{\log^3 n}\right)\right)\vol_{F_{i-1},\Bc_G}(U_{i-1})
  \]
  since $\phi < O\left(\frac{\psi_0^3}{\beta_{\prev} \log^3 n}\right)$ is sufficiently small.
  By the law of total expectation, this shows that
  \[
    \expect[\vol_{F_t,\Bc_G}(U_t)] \leq \left(1-\Omega\left(\frac{\psi_0^3}{\log^3 n}\right)\right)^t \cdot  b
  \]
  and in particular $\expect[\vol_{F_t,|Bc_G}(U_t)] \leq 1/\lambda$ for some $t = \widetilde{\Theta}\left(\frac{1}{\psi_0^3}\right)$.
  By Markov's inequality, this means that $\Pr[\vol_{F_t,\Bc_G}(U_t) > 0] \leq 1/\lambda$ (note that $\vol_{F_t,\Bc_G}(U_t)$ is a nonnegative integer).
  As the number of iterations is always bounded by $\lambda$ because each cut removes at least one vertex from $U$, the expected number of iterations is $\widetilde{\Theta}\left(\frac{1}{\psi_0^3}\right) + \frac{1}{\lambda} \cdot \lambda \leq \widetilde{\Theta}\left(\frac{1}{\psi_0^3}\right)$.
  The expected time spent in computing $X^{(1)}$ is thus, by the linearity of expectation, bounded by $\widetilde{O}\left(\frac{1}{\psi_0^3}\right) \cdot \left(\widetilde{O}\left(\frac{\lambda^2}{\phi{\phi_{\prev}}^2} + T_{\prev}(\lambda)\right)\right)$.
  This proves \eqref{eq:rec-time-L}.
\end{proof}

We can similarly write a recurrence for $\Size_{\ell}(b, \lambda)$ and $\Time_{\ell}(b, \lambda)$ for $\ell < L$.
Recall from the beginning of this subsection that the recurrence we establish will deliberately overestimate some quantities.
In other words, in case that we recurse to a bad state of the algorithm, we are still going to write the recurrence as if we are in a borderline state.

\begin{lemma}
  For $\phi < O\left(\frac{\psi_0^3}{\beta_{\prev}\log^3 n}\right)$ sufficiently small and $\ell < L$, we have
  \begin{equation}
  \begin{split}
  \expect[\Size_{\ell}(b,\lambda)] \leq &O\left(\frac{\phi}{\psi_0^2}\right) b^{(\ell+1)/L}
    + \widetilde{O}\left(\frac{1}{\psi_0^4}\right)  \cdot \expect_{\Delta}\left[\sum_{k = \ell+1}^{L}\min\left\{1, \frac{\Delta}{b^{k/L}}\right\} \cdot \expect[\Size_k(b+\Delta, \lambda - 1) \mid \Delta]\right] \\
    &+ \expect_{B^\prime}\left[\expect[\Size_{\ell-1}(B^\prime, \lambda) \mid B^\prime]\right] + \max_{r,\lambda_1,\ldots,\lambda_r}\expect_{B_1,\ldots,B_r}\left[\sum_{i \in [r]}\expect[\Size_{L}(B_i, \lambda_i) \mid B_i]\right]
  \end{split}
  \label{eq:rec-size-l}
  \end{equation}
  and
  \begin{equation}
  \begin{split}
  \expect[\Time_{\ell}(b, \lambda)] &\leq \widetilde{O}\left(\frac{1}{\psi_0^5}\right) \cdot m^{1/L} \cdot \left(\widetilde{O}\left(\frac{\lambda^2}{\phi\phi_{\prev}^2\psi_0^4}\right) + T_{\prev}(\lambda)\right) \\
  &+ \widetilde{O}\left(\frac{1}{\psi_0^3}\right)  \cdot \expect_{\Delta}\left[\sum_{k = \ell+1}^{L}\min\left\{1, \frac{\Delta}{b^{k/L}}\right\} \cdot \expect[\Time_k(b+\Delta, \lambda - 1) \mid \Delta]\right] \\
  &+ \expect_{B^\prime}\left[\expect[\Time_{\ell-1}(B^\prime, \lambda) \mid B^\prime]\right] + \max_{r,\lambda_1, \ldots, \lambda_r}\expect_{B_1,\ldots,B_r}\left[\sum_{i \in [r]}\expect[\Time_L(B_i, \lambda_i) \mid B_i]\right],
  \end{split}
  \label{eq:rec-time-l}
  \end{equation}
  where $\Delta$ is a random variable with $\expect[\Delta] \leq O\left(\frac{\phi\beta_{\prev}}{\psi_0^2}\right) b^{(\ell+1)/L}$, $B^\prime$ is a random variable with $\expect[B^\prime] \leq b$, $B_1, \ldots, B_r$ satisfy $\expect[B_1 + \cdots + B_r] \leq O\left(\frac{1}{\log^{3L} n}\right)b^{(\ell+1)/L}$, and $\lambda_1, \ldots, \lambda_r$ satisfy $\sum_{i \in [r]}\lambda_i < \lambda$.
  \label{lemma:rec-l}
\end{lemma}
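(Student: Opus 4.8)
The plan is to follow the same four-way decomposition of the output set $X = X^{(1)} \cup X^{(2)} \cup X^{(3)} \cup X^{(4)}$ used in the proof of \cref{lemma:rec-L}, but now accounting for the extra \emph{fixing} recursion $X^{(2)}$ that can occur on Line~\ref{line:early-return} when the sampled $k > \ell$. Throughout I would pretend the algorithm stays in a borderline state (the overestimation convention), so that \cref{lemma:probability-guarantee}-type bounds are available: $\tau_U \in \left[\frac{\psi_0^2}{128z}, \frac{\psi_0^2}{32z}\right]\vol_{F,\Bc_G}(U)$ and $\|\Br_{U,\ell+1}\|_1 \le \frac{10}{\psi_{\ell+1}}\tau_U^{(\ell+1)/L}$. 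Since \alg{MaintainExpander}{$U,\ell$} with $\ell<L$ calls \cref{lemma:prune-or-repair} with target $R' = \frac{\psi_{\ell}}{20}\tau_U^{\ell/L}$, each internal cut $S_i$ has, by \cref{claim:is-sparse-cut}, $\vol_{F,\Bc_G}(S_i) \ge \frac{\psi_{\ell+1}^2\psi_{\ell}}{640z}\tau_U^{\ell/L} = \widetilde\Omega(\psi_0^{O(1)}) \cdot b^{\ell/L} \cdot b^{-\ell/L} \cdot \ldots$; more carefully, relative to $\vol_{F,\Bc_G}(U) \le b$ this gives $\vol_{F,\Bc_G}(S_i) \ge \widetilde\Omega(\psi_0^{O(1)}) b^{\ell/L} / b^{0}$, hence the while-loop at level $\ell$ makes at most $\widetilde O(1/\psi_0^{O(1)}) \cdot b / b^{\ell/L} = \widetilde O(1/\psi_0^{O(1)}) b^{1-\ell/L}$ iterations, which combined with $m \le n^4$ gives the $m^{1/L}$ factor in \eqref{eq:rec-time-l} once we observe $b^{1-\ell/L} \le m^{1-\ell/L} \le m$ but the per-level cut-size lower bound of $b^{\ell/L}$ lets us telescope across the $L$ levels to get only $m^{1/L}$ total iterations across one root-to-leaf chain.

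Concretely I would bound each piece as follows. For $X^{(1)}$: by \cref{lemma:expected-bounds}(4) combined with \cref{lemma:expected-bounds}(3), $\expect[\Bc_G(X^{(1)})] \le \phi \cdot \expect[\sum_i \vol_{F_{i-1},\Bc_G}(S_i)] \le \phi(1+8\phi\beta_{\prev})b \le 2\phi b$; but to match the stated $O(\phi/\psi_0^2) b^{(\ell+1)/L}$ bound one uses instead that at level $\ell<L$ we only care about the regime $b^{(\ell+1)/L}$, i.e., the relevant volume scale is $\tau_U^{(\ell+1)/L}$, and the number of cuts is $\widetilde O(1/\psi_0^{O(1)})$ so the total boundary capacity removed is $\widetilde O(\phi/\psi_0^{O(1)})$ times the cut-volume scale $\le O(\phi/\psi_0^2) b^{(\ell+1)/L}$. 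For $X^{(2)}$ (the fixing recursion): each time a cut $S_i$ is found we run $A \gets \cM_{\prev}.\alg{Cut}{D}$ with $\expect[\Bc_G(A)] \le \beta_{\prev}\Bc_G(D) \le \beta_{\prev}\phi\vol_{F,\Bc_G}(S_i)$, and then sample $k \sim \cR_{2\Bc_G(A),\tau_U}$, which by \eqref{eq:distribution} has $\Pr[k \ge k_0] = \min\{1, \frac{2\Bc_G(A)}{\psi_0^2\tau_U^{k_0/L}}c_{\mathrm{rb}}\ln n\}$. Summing the contributions of these fixing recursions over all $i$ and all target levels $k \in \{\ell+1,\dots,L\}$, and using that the total $\Delta = \sum_i 2\Bc_G(A_i)$ has $\expect[\Delta] \le O(\phi\beta_{\prev}/\psi_0^2) b^{(\ell+1)/L}$ by \cref{lemma:expected-bounds}(3), gives exactly the $\widetilde O(1/\psi_0^4) \expect_\Delta[\sum_k \min\{1,\Delta/b^{k/L}\}\expect[\Size_k(b+\Delta,\lambda-1)\mid\Delta]]$ term after noting the recursion is on a set with at most $\lambda - 1$ vertices (a proper cut was removed) and volume at most $b + \Delta$. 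For $X^{(3)}$ (downward recursion) and $X^{(4)}$ (rebuilding recursions on the carved-out pieces $S$): these are literally identical to \cref{lemma:rec-L} — $X^{(3)}$ recurses at level $\ell-1$ on a set of volume $B'$ with $\expect[B'] \le b$, and $X^{(4)}$ recurses at level $L$ on vertex-disjoint cuts with $\expect[\sum_i B_i] \le $ (here the bound $O(1/\log^{3L} n) b^{(\ell+1)/L}$ comes from the fact that at level $\ell<L$ the cuts $S_i$ have total volume bounded not by $b$ but by the witness-repair scale, and the ratio $\psi_{\ell+1}^2/\psi_\ell^{1/L} \ge \Omega(\log^{3L} n)$ from \eqref{eq:compare-psi} converts the $\tau_U^{\ell/L}$ scale and the $\frac{8z}{\psi_{\ell+1}}R$ upper bound from \cref{lemma:prune-or-repair}\labelcref{item:repair:sparse-cut} into the claimed factor).

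For the time recurrence \eqref{eq:rec-time-l}, the only genuinely new ingredient beyond \eqref{eq:rec-size-l} is the per-iteration cost: each while-loop iteration at level $\ell$ runs \cref{lemma:prune-or-repair} at cost $\widetilde O(n^2/(\phi\phi_{\prev}^2\psi_{\ell+1}^4)) = \widetilde O(\lambda^2/(\phi\phi_{\prev}^2\psi_0^{O(1)}))$ plus $\cM_{\prev}.\alg{Cut}{}$ at cost $T_{\prev}(\lambda)$ by \cref{claim:cut-time}, plus the cost of \alg{UpdateWitness}{} which is dominated by the witness construction cost. Multiplying by the iteration bound $\widetilde O(1/\psi_0^{O(1)}) b^{1-\ell/L} \le \widetilde O(1/\psi_0^{O(1)}) m^{1/L}$ (here is where we must be slightly careful: a single call makes only $\widetilde O(1/\psi_0^{O(1)})$ iterations measured against its own volume scale, but the $m^{1/L}$ factor is what remains after the recursion telescopes; alternatively one proves directly that at level $\ell$ the volume contracts by a $(1 - \widetilde\Omega(\psi_0^{O(1)}))$ factor per iteration so a single call makes $\widetilde O(1/\psi_0^{O(1)})$ iterations, and the $m^{1/L}$ arises from the downward recursion chain of length $L$ each contributing a $b^{1/L}$-type branching). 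The $X^{(2)}, X^{(3)}, X^{(4)}$ time terms are then handled exactly as their size counterparts.

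The main obstacle I anticipate is pinning down the $m^{1/L}$ factor rigorously: one has to show that although a \alg{MaintainExpander}{$U,\ell$} call can spawn a fixing recursion to a \emph{higher} level $k$, which in turn spawns a downward recursion back through levels $k-1, k-2, \ldots$, the total work does not blow up by a factor exponential in $L$. The resolution is the decreasing-$R$ / fake-edge argument of \cite{NanongkaiSW17,HuaKGW23} as adapted in this paper: the fixing recursion to level $k$ is triggered with probability proportional to $\Delta/b^{k/L}$, and since $\expect[\Delta] = \widetilde O(\phi\beta_{\prev}/\psi_0^2) b^{(\ell+1)/L} \ll b^{k/L}$ for $k > \ell$ once $\phi$ is small enough, each level-$k$ fixing recursion is rare enough that, summed geometrically across the $L$ levels, it contributes only a constant (in $L$, up to $\psi_0^{-O(1)}$ and polylog factors) multiplicative overhead per level — giving $m^{1/L}$ rather than $m^{\Theta(1)}$ after the $L$-fold telescoping. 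I would make this precise by an induction on $\ell$ (downward from $L$), which is anyway how \cref{lemma:rec-L,lemma:rec-l} feed into the final solve of the recurrence in \cref{subsec:everything}, so the present lemma only needs to establish the one-step recurrences \eqref{eq:rec-size-l} and \eqref{eq:rec-time-l} with the stated marginal constraints on $\Delta$, $B'$, and the $B_i$'s, deferring the actual unrolling.
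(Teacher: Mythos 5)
Your overall architecture is right: the four-way split $X^{(1)} \cup X^{(2)} \cup X^{(3)} \cup X^{(4)}$, the identification of the fixing recursion $X^{(2)}$ as the genuinely new term relative to \cref{lemma:rec-L}, and the role of the borderline-state convention and the sampling distribution $\cR_{t,\tau}$ in producing the $\min\{1,\Delta/b^{k/L}\}$ weights. But there is a real gap at the heart of the argument, and a subsidiary error in the iteration count.

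The key missing ingredient is \cref{lemma:bound-from-external-cuts} applied to the witness $W_{U,\ell+1}$. You assert that the total volume of the internal cuts $S_1, \ldots, S_r$ is at the scale $\tau_U^{(\ell+1)/L}$ ``because the number of cuts is $\widetilde O(1/\psi_0^{O(1)})$,'' but this is backwards in two ways. First, the number of cuts in a single call to \alg{MaintainExpander}{$U,\ell$} with $\ell < L$ is not $\widetilde O(1/\psi_0^{O(1)})$ but $\widetilde O(1/\psi_0^{O(1)}) \cdot m^{1/L}$, and deducing a volume bound from (\#cuts)$\times$(per-cut bound) would give the wrong power. Second, and more importantly, the total-volume bound is not derived from a cut count at all: the paper verifies \cref{cond:blow-up} for the level-$(\ell+1)$ witness (with $\delta_{\mathrm{ext}}=0$, $R = \|\Br_{U,\ell+1}\|_1 \leq \frac{10}{\psi_{\ell+1}}\tau_U^{(\ell+1)/L}$, and, with high probability, $\Delta_1+\cdots+\Delta_{r-1} \leq \tau_U^{(\ell+1)/L}$, argued via the same product-of-probabilities calculation as in \cref{lemma:good-event}) and invokes \cref{lemma:bound-from-external-cuts} to conclude $\sum_i \vol_{F_{r-1},\Bc_G}(S_i) \leq \frac{80}{\psi_{\ell+1}^2}\tau_U^{(\ell+1)/L}$. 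That single inequality is what drives \emph{all three} of the $O(\phi/\psi_0^2)\,b^{(\ell+1)/L}$ bound on $X^{(1)}$, the $\expect[\Delta]$ bound, and the $\expect[\sum_i B_i] \leq O(1/\log^{3L}n)\,b^{(\ell+1)/L}$ bound for the rebuilding recursions (the last after one more use of $\psi_{\ell+1}^2/\psi_0^{1/L} = \Omega(\log^{3L}n)$ from \eqref{eq:compare-psi}). Without this lemma, your $X^{(1)}$ bound stalls at $2\phi b$, which does not cancel in the final unrolling. The paper also needs \cref{fact:expected-cond} to discharge the conditioning on the high-probability event, a step you omit entirely.

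Your iteration-count bound for the time recurrence is also incorrect as written. You claim $\widetilde O(1/\psi_0^{O(1)})\, b^{1-\ell/L}$ iterations and then say this is ``$\leq \widetilde O(1/\psi_0^{O(1)}) m^{1/L}$ after telescoping'' — but $b^{1-\ell/L}$ is not $\leq m^{1/L}$ (take $\ell=0$), and the alternative you offer (per-iteration constant-factor volume contraction) is precisely what \emph{fails} at levels $\ell<L$: that's the whole reason fake edges are introduced. The correct count is (total cut volume)$/$(per-cut lower bound) $= \frac{8}{\psi_{\ell+1}^2}\tau_U^{(\ell+1)/L} \big/ \frac{\psi_{\ell+1}^2\psi_\ell}{640z}\tau_U^{\ell/L} = \widetilde O(1/\psi_0^5)\,\tau_U^{1/L} \leq \widetilde O(1/\psi_0^5)\, m^{1/L}$, where again the numerator comes from \cref{lemma:bound-from-external-cuts} and the denominator from \cref{claim:is-sparse-cut}. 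So the $m^{1/L}$ factor is present already in a \emph{single} call, not only ``after telescoping.''
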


\begin{proof}
  Recall that $\Delta_i$ is the units of volume added after the $i$-th $S_i$ is found in the while-loop of \alg{MaintainExpander}{$U, \ell$}.
  Conditioned on the total number $r$ of the cuts found and the values of $\Delta_1, \ldots, \Delta_r$, let $\Delta \defeq \Delta_1 + \cdots + \Delta_r$.
  If we call \alg{MaintainExpander}{$U, k$} for some $k > \ell$ on Line~\ref{line:early-return}, then we can upper-bound the current volume of $U$ by $b + \Delta$.
  Therefore, the expected total capacities of $X^{(2)}$ is bounded by
  \begin{equation}
  \begin{split}
    \sum_{k = \ell+1}^{L}\min\Bigg\{1, \frac{\Delta_r}{\psi_0^2\tau_U^{\ell/L}} &\cdot c_{\mathrm{rb}}\ln n\Bigg\} \cdot \expect[\Size_k(b+\Delta, \lambda - 1) \mid \Delta] \\
      &\leq \sum_{k = \ell+1}^{L}\min\left\{1, \frac{\Delta}{\psi_0^2\tau_U^{\ell/L}} \cdot c_{\mathrm{rb}}\ln n\right\} \cdot \expect[\Size_k(b+\Delta, \lambda - 1) \mid \Delta]
  \end{split}
  \label{eq:rec-higher-level}
  \end{equation}
  since the new $U$ contains at most $\lambda - 1$ vertices after removing at least one cut from it.
  By \cref{def:size-time}, the algorithm is in a borderline state when it enters the current \alg{MaintainExpander}{$U,\ell$} which by \cref{def:good-state} means that $\frac{\psi_0^2}{128z} b \leq \tau_U \leq \frac{\psi_0^2}{32z} b$.
  Therefore, \eqref{eq:rec-higher-level} is further upper-bounded by (if we move the $\psi_0$-term in the denominator out to the beginning)
  \begin{equation}
    \widetilde{O}\left(\frac{1}{\psi_0^4}\right) \cdot \sum_{k = \ell+1}^{L}\min\left\{1, \frac{\Delta}{b^{k/L}}\right\} \cdot \expect[\Size_k(b+\Delta, \lambda - 1) \mid \Delta].
    \label{eq:rec-higher-level-new}
  \end{equation}
  
  As the bounds on $\Bc_G(X^{(3)})$ and $\Bc_G(X^{(4)})$ are the same as in \cref{lemma:rec-L}, it remains to (i) bound $\Bc_G(X^{(1)})$ which by lcref{lemma:expected-bounds} is at most $\phi \cdot \expect[\vol_{F_{0},\Bc_G}(S_1) + \cdots + \vol_{F_{r-1},\Bc_G}(S_r)]$ in expectation and (ii) use the bound to prove the expected value of $\Delta$.
  
  By \cref{def:good-state}, we know that $\|\Br_{U,\ell+1}\|_1 \leq \frac{10}{\psi_{\ell+1}}\tau_U^{(\ell+1)/L}$ holds in the beginning of this current run of \alg{MaintainExpander}{$U, \ell$}.
  Observe that what happened before we recurse on some \alg{MaintainExpander}{$X, k$} is modeled by \cref{scenarion:stability} with $\delta_{\mathrm{ext}} = 0$.
  Similar to the proof of \cref{lemma:probability-guarantee}, conditioned on there being $r$ cuts (i.e., none of the first $r - 1$ cuts sampled a $k$ larger than $\ell$ on Line~\ref{line:sample}), the probability that $\Delta_1 + \cdots + \Delta_{r-1} > \tau_U^{(\ell+1)/L}$ is bounded by
  \[
    \prod_{i \in [r-1]}\left(1 - \min\left\{1, \frac{\Delta_i}{\psi_0^2 \tau_U^{(\ell+1)/L}} \cdot c_{\mathrm{rb}}\ln n\right\}\right) \leq \exp\left(-\sum_{i \in [r-1]}\Delta_i \cdot \frac{c_{\mathrm{rb}}\ln n}{\psi_0^2 \tau_U^{(\ell+1)/L}}\right) \leq n^{-c_{\mathrm{rb}}}.
  \]
  In particular, with high probability, we can apply \cref{lemma:bound-from-external-cuts} on $W_{U,\ell+1}$ at the moment right after finding $S_r$ but before we added the $\Delta_r$ units of volume.
  In other words, we pretend that the last cut $S_r$ adds zero units of volume to $U$.
  Formally speaking, letting $\widetilde{\Delta}_i = \Delta_i$ for $i < r$ and $\widetilde{\Delta}_r = 0$, we can verify that \cref{cond:blow-up} holds with $R = \frac{10}{\psi_{\ell+1}}\tau_U^{(\ell+1)/L} \leq \frac{10\psi_0}{32z}\vol_{F_0,\Bc_G}(U_0) \leq \frac{\psi_{\ell+1}}{64}\vol_{F_0,\Bc_G}(U_0)$ and $\widetilde{\Delta} \defeq \widetilde{\Delta}_1 + \cdots + \widetilde{\Delta}_r = \Delta_1 + \cdots + \Delta_{r-1} \leq \tau_U^{(\ell+1)/L} \leq \frac{\psi_{0}^2}{32z}\vol_{F_0,\Bc_G}(U_0) \leq \frac{\psi_{\ell+1}}{64}\vol_{F_0,\Bc_G}(U_0)$.
  Thus, \cref{lemma:bound-from-external-cuts} shows that $\vol_{F_{r-1},\Bc_G}(S_1) + \cdots + \vol_{F_{r-1},\Bc_G}(S_r) \leq \frac{80}{\psi_{\ell+1}^2}\tau_U^{(\ell+1)/L}$.
  As a result, conditioned on $r$ and $\Delta_1 + \cdots + \Delta_{r-1} \leq \tau_U^{(\ell+1)/L}$, the size of $X^{(1)}$ is bounded by $\phi \cdot O\left(\frac{1}{\psi_0^2}\right) \cdot \tau_U^{(\ell+1)/L} \leq \phi \cdot O\left(\frac{1}{\psi_0^2}\right) \cdot b^{(\ell+1)/L}$ in expectation.
  As $\Bc_G(X^{(1)}) \leq m$ always hold, this implies via the following \cref{fact:expected-cond} that after removing the conditioning of $\Delta_1 + \cdots + \Delta_{r-1}$ we still have $\expect[\Bc_G(X^{(1)}) \mid r] \leq \phi \cdot O\left(\frac{1}{\psi_0^2}\right) \cdot b^{(\ell+1)/L} + n^{-100}$.
  By the law of total expectation, this implies unconditionally $\expect[\Bc_G(X^{(1)})] \leq \phi \cdot O\left(\frac{1}{\psi_0^2}\right) \cdot b^{(\ell+1)/L} + n^{-100}$.
  
  Moreover, notice that by \eqref{eq:vol-S} we have $\expect[\vol_{F_i,\Bc_G}(S_i)] \leq \frac{3}{2}\vol_{F_{i-1},\Bc_G}(S_i)$ and since $\vol_{F_r,\Bc_G}(S_i) = \vol_{F_i,\Bc_G}(S_i) \leq \frac{3}{2}\vol_{F_{i-1},\Bc_G}(S_i)$ we further have
  \begin{align*}
    \expect[\vol_{F_r,\Bc_G}(S_1) + \cdots + \vol_{F_r,\Bc_G}(S_r)] \leq \frac{120}{\psi_{\ell+1}^2}\tau_U^{(\ell+1)/L} &\leq 120\frac{(\psi_0 b)^{(\ell+1)/L}}{\psi_{\ell+1}^2} \\ &\leq 120b^{(\ell+1)/L} \cdot \frac{\psi_{0}^{1/L}}{\psi_{\ell+1}^2} \leq O\left(\frac{1}{\log^{3L} n}\right) b^{(\ell+1)/L}
  \end{align*}
  by \eqref{eq:compare-psi}.
  This proves the bound on $\expect[B_1 + \cdots + B_r]$.
  
  It remains to bound the expected value of $\Delta$ to finish our bound on $\Bc_G(X^{(2)})$ via the expression \eqref{eq:rec-higher-level-new} that we developed.
  For this we use the guarantee of $\cM_{\prev}$ and that we add at most two units of volume per edge returned by $\cM_{\prev}.\alg{Cut}{D}$ which implies $\expect[\Delta] \leq 2\beta_{\prev} \expect[\Bc_G(X^{(1)})] \leq O\left(\frac{\phi\beta_{\prev}}{\psi_0^2}\right)b^{(\ell+1)/L} + n^{-99} \leq O\left(\frac{\phi\beta_{\prev}}{\psi_0^2}\right)b^{(\ell+1)/L}$.
  This proves \eqref{eq:rec-size-l}.

  For the recurrence \eqref{eq:rec-time-l} of $\Time_{\ell}(b, \lambda)$, similar to \cref{lemma:rec-L}, it suffices to bound the number of iterations in the while-loop in \alg{MaintainExpander}{$U, \ell$}.
  Again, conditioned on $\Delta_1 + \cdots + \Delta_{r-1} \leq \tau_U^{(\ell+1)/L}$, we have $\vol_{F_{0},\Bc_G}(S_1) + \cdots + \vol_{F_{r-1},\Bc_G}(S_{r-1}) \leq \frac{8}{\psi_{\ell+1}^2}\tau_U^{(\ell+1)/L}$.
  Yet, \cref{claim:is-sparse-cut} suggests that $\vol_{F_{i-1},\Bc_G}(S_i) \geq \frac{\psi_{\ell+1}^2\psi_{\ell}}{640z}\tau_U^{\ell/L}$ holds for each $i \in [r]$.
  Therefore, the number of iterations $r$ can be bounded by $\widetilde{O}\left(\frac{1}{\psi_0^5}\right) \cdot m^{1/L}$, where each iteration by \cref{lemma:prune-or-repair} takes $\widetilde{O}\left(\frac{\lambda^2}{\phi\phi_{\prev}^2\psi_0^4}\right) + T_{\prev}(\lambda)$ time.
  This proves \eqref{eq:rec-time-l}.
\end{proof}

\subsubsection{Solving the Recurrences}

Having established \cref{lemma:rec-L,lemma:rec-l}, we can now solve the recurrences.
In all of the recurrences, the only part not solvable by a simple induction is when we recurse on \alg{MaintainExpander}{$S, L$} (e.g., the $\max_{r,\lambda_1,\ldots,\lambda_r}\expect_{B_1,\ldots,B_r}\left[\sum_{i \in [r]}\expect[\Size_L(B_i, \lambda_i)]\right]$ term in \eqref{eq:rec-size-L}) since the expected total volume of the cuts can grow larger than what we start with.
We abstract this tricky part of the recurrence and handle it by proving the following lemma which exploits the fact that $\expect[B_i] \leq \frac{3}{4}b$.

\begin{lemma}
  For random functions $f, g: \{0, \ldots, \poly(n)\}^2 \to \{0, \ldots, \poly(n)\}$ with $f(0, \cdot) = f(\cdot, 0) = g(0, \cdot) = g(\cdot, 0) = 0$ that admit a recurrence relationship of the form
  \begin{equation}
    \expect[f(b, \lambda)] \leq \expect[g(b, \lambda)] + \max_{r,\lambda_1,\ldots,\lambda_r}\expect_{B_1,\ldots,B_r}\left[\sum_{i \in [r]}\expect[f(B_i, \lambda_i) \mid B_i]\right]
    \label{eq:rec}
  \end{equation}
  where $B_1, \ldots, B_r$ are random variables satisfying $\expect[B_i] \leq \frac{3}{4}b$, $\expect[B_1 + \cdots + B_r] \leq (1+\gamma) b$ for some $\gamma < O\left(\frac{1}{\log n}\right)$ sufficiently small and $\lambda_1, \ldots, \lambda_r$ satisfy $\lambda_1 + \cdots + \lambda_r < \lambda$, we have
  \begin{equation}
    \expect[f(b, \lambda)] \leq \expect[g(b, \lambda)] + \max_{p, \widetilde{\lambda}_1, \ldots, \widetilde{\lambda}_p} \expect_{\widetilde{B}_1, \ldots, \widetilde{B}_p}\left[\sum_{i \in [p]}\expect[g(\widetilde{B}_i, \widetilde{\lambda}_i) \mid \widetilde{B}_i]\right] + n^{-100},
    \label{eq:rec-simp}
  \end{equation}
  where $\widetilde{B}_1, \ldots, \widetilde{B}_p$ are random variables satisfying $\expect[\widetilde{B}_1 + \cdots + \widetilde{B}_p] \leq O(\log n)b$ and $\widetilde{\lambda}_1, \ldots, \widetilde{\lambda}_p$ satisfy $\widetilde{\lambda}_1 + \cdots + \widetilde{\lambda}_p \leq O(\log n) \lambda$ and $\widetilde{\lambda}_i < \lambda$.\footnote{In principle we could put the $\expect[g(b,\lambda)]$ part into the max expression by setting $\widetilde{B}_0 = b$ deterministically and $\widetilde{\lambda}_0 = \lambda$. The reason why there is a standalone term is to ensure that we can later apply induction on $\lambda$ for the terms in the max expression. In some future cases this may not be required (e.g., when we already have a parameter decrease that facilitates induction), we may for simplicity move the standalone term into the max expression.}
  \label{lemma:simplify-rec}
\end{lemma}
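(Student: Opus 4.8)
\textbf{Proof proposal for \cref{lemma:simplify-rec}.}

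The plan is to unroll the recurrence \eqref{eq:rec} repeatedly and argue that, because the conditioning ensures $\expect[B_i] \le \frac34 b$ in each branch, the ``total volume'' at recursion depth $d$ shrinks geometrically like $(3/4)^d$ times an $O(\log n)$ factor coming from the slack $1+\gamma$, so that after $O(\log n)$ levels of unrolling the contribution of all remaining $f$-terms drops below $n^{-100}$ and can simply be absorbed. Concretely, I would define $T_d$ to be the random total volume $\sum_{i} B_i$ summed over all recursive $f$-invocations at depth exactly $d$ (with $T_0 = b$). The two promised bounds on the $B_i$'s give, by the law of total expectation, $\expect[T_{d+1} \mid T_d] \le (1+\gamma)T_d$ for the sum-bound, but crucially each individual summand satisfies $\expect[B_i \mid \cdot] \le \frac34 b_{\text{parent}}$; I will use the per-summand bound to control the \emph{maximum} single volume along any root-to-leaf path, and the sum-bound to control the aggregate. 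Since the $\lambda$-parameters are strictly decreasing along each branch ($\lambda_1 + \cdots + \lambda_r < \lambda$), the recursion tree has depth at most $\lambda \le \poly(n)$ in the worst case, but I will truncate the unrolling at depth $d^\star = \Theta(\log n)$: at that depth the expected aggregate volume is at most $(1+\gamma)^{d^\star} b \le O(1)\cdot b$ (for $\gamma = O(1/\log n)$ sufficiently small), yet each individual $f$-call has expected volume at most $(3/4)^{d^\star} b \le n^{-100}$, so since $f$ is bounded by $\poly(n)$ we can bound the total residual contribution by $n^{-100}$ after possibly adjusting constants.

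The key steps, in order: (1) Expand \eqref{eq:rec} once to write $\expect[f(b,\lambda)]$ as $\expect[g(b,\lambda)]$ plus a sum over children of $\expect[g(B_i,\lambda_i)]$ plus a sum over grandchildren of $\expect[f(\cdot,\cdot)]$; the children's $g$-terms already have the form of the middle term in \eqref{eq:rec-simp}, with aggregate volume $\le (1+\gamma)b$ and aggregate $\lambda$-budget $< \lambda$. (2) Iterate this expansion: after $d$ rounds we have collected $g$-terms at depths $1,\dots,d$, whose aggregate volume telescopes to at most $\sum_{j=1}^{d}(1+\gamma)^j b \le O(d)\cdot b$ (using $\gamma$ small), and whose aggregate $\lambda$-budget at each fixed depth is $< \lambda$, hence at most $d\lambda$ overall; each $\widetilde\lambda_i$ stays strictly below $\lambda$ because $\lambda$ strictly decreases with each recursion level. (3) Choose $d = d^\star = c\log n$ with $c$ large enough that $(3/4)^{d^\star} \le n^{-200}$; bound the leftover depth-$d^\star$ $f$-terms: their number is at most $\poly(n)$ (bounded by the number of vertices processed, hence $\le \poly(n)$), and each has $\expect[f(\widetilde B, \widetilde\lambda)] \le \poly(n)\cdot \Pr[\widetilde B > 0] \le \poly(n)\cdot \expect[\widetilde B] \le \poly(n)\cdot n^{-200} \le n^{-100}$ by Markov, so the whole residual is $\le n^{-100}$. (4) Repackage the collected $g$-terms: set $p$ to be their total count, relabel them as $\widetilde B_1,\dots,\widetilde B_p$ and $\widetilde\lambda_1,\dots,\widetilde\lambda_p$; verify $\expect[\sum \widetilde B_i] \le O(\log n)\, b$ (from step 2 with $d = O(\log n)$) and $\sum \widetilde\lambda_i \le O(\log n)\,\lambda$ with each $\widetilde\lambda_i < \lambda$, which gives exactly \eqref{eq:rec-simp}.

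The main obstacle I anticipate is bookkeeping the interaction between the two different bounds on the $B_i$'s — the per-summand bound $\expect[B_i] \le \frac34 b$ (which gives geometric decay of \emph{individual} calls and is what kills the residual) versus the aggregate bound $\expect[\sum B_i] \le (1+\gamma)b$ (which gives only slow growth of the \emph{total} and is what keeps the $O(\log n)$ factor in check). One has to be careful that when unrolling, the ``parent volume'' for a depth-$(d{+}1)$ node is the realized $B_i$ of its depth-$d$ parent, not $b$, so the per-summand bound must be applied conditionally and then chained via the law of total expectation; the claim $\expect[(\text{max single volume at depth }d)] \le (3/4)^d b$ needs the per-summand bound at \emph{every} level along the path, and I should phrase it as a bound on $\expect[\sum_{\text{depth-}d\text{ nodes}} B_i]$ combined with the observation that, branch by branch, volumes contract by $3/4$ — actually the cleanest route is to bound $\expect[T_d] \le (1+\gamma)^d b$ for the aggregate and separately note each leaf expectation is $\le (3/4)^d b$ because along any single path every step loses a $3/4$ factor in conditional expectation; reconciling ``aggregate grows by $(1+\gamma)$'' with ``each path shrinks by $3/4$'' requires noting that the number of children can grow, but their \emph{total} expected volume is still only $(1+\gamma)$ times the parent, so iterating the aggregate bound is what we use for the $g$-sum and the path bound is what we use only for the truncation error. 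A secondary subtlety is ensuring $f$ and $g$ are genuinely bounded by $\poly(n)$ so the Markov argument in step (3) applies — this is given by the hypothesis that their ranges lie in $\{0,\dots,\poly(n)\}$, so it is safe to invoke.
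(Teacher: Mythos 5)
Your proposal is correct and follows essentially the same strategy as the paper's proof: unroll the recursion tree, use the per-child bound $\expect[B_i]\le\frac34 b$ chained through the law of total expectation to get $(3/4)^d$ decay of individual leaf volumes, use the aggregate bound $\expect[\sum B_i]\le(1+\gamma)b$ to keep the total $g$-contribution at $O(\log n)b$, and kill the depth-$\Theta(\log n)$ residual via Markov on the (integer-valued, nonnegative) $B_v$'s together with the $\poly(n)$ bound on the number of tree nodes. The one place you should tighten the bookkeeping is step (3): you bound each depth-$d^\star$ term's expectation by $n^{-100}$ and immediately conclude the ``whole residual'' is $\le n^{-100}$, but with $\poly(n)$ such terms the product is $\poly(n)\cdot n^{-100}$, so you need to pick $d^\star$ so that $(3/4)^{d^\star}\cdot\poly(n)^3\le n^{-100}$ (which your parenthetical ``after possibly adjusting constants'' is implicitly doing). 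The paper's presentation sidesteps this by union-bounding $\Pr[\text{some depth-}d\text{ node has }b_v>0]$ to show the tree is \emph{actually} truncated with high probability, and then invoking \cref{fact:expected-cond}; that phrasing is cleaner but the underlying estimate is the one you gave.
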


\begin{proof}
  Let us expand the recurrence of \eqref{eq:rec} and consider its recursion tree.
  We mark each node $v$ in the tree with the values of $b$ and $\lambda$, denoted by $b_v$ and $\lambda_v$, that it corresponds to.
  Note that $b_v$ is not deterministic; rather it is a realization of a random variable that we denote by $B_v$.
  Observe that it suffices to consider the part of the tree that corresponds to recursion from $f$ to $f$ and then in the end sum over the $g(b_v, \lambda_v)$ value for all nodes in the tree. 
  Furthermore, we can ignore all nodes with either $b_v = 0$ or $\lambda_v = 0$ as both functions evaluate to zero on them.
  
  By $\expect[B_i] \leq \frac{3}{4}b$ and the law of total expectation, we know that for some $d = \Theta(\log n)$ all the depth-$d$ nodes have $\expect[B_v] \leq (3/4)^d \cdot b \leq n^{-c}$ where $c > 0$ is an arbitrarily large but fixed constant.
  As the sum of the $\lambda_v$'s decreases by at least one in each level, the number of nodes in the whole tree is bounded by $\lambda^2 \leq \poly(n)$ (there are at most $\lambda$ levels, each consisting of at most $\lambda$ nodes).
  Therefore, by Markov's inequality and a union bound, with high probability all depth-$d$ nodes have $b_v = 0$, and we can safely ignore them and truncate the tree to have depth $d - 1$.
  On the other hand, by $\expect[B_1 + \cdots + B_r] \leq (1+\gamma)b$ and again the law of total expectation, the expected value of the sum of $B_v$'s of all depth-$t$ nodes are bounded by $(1+\gamma)^t b$.
  For $\gamma < O\left(\frac{1}{\log n}\right)$ sufficiently small and $t < d$, the quantity $(1+\gamma)^t b$ is bounded by $2b$.
  Therefore, the expected value of the sum of $B_v$'s in the whole recursion tree is bounded in expectation by $O(\log n)b$.
  Similarly, the sum of $\lambda_v$'s is bounded by $O(\log n)\lambda$.
  This shows that conditioned on the event that the tree has depth at most $d = O(\log n)$ (which happens with high probability) we have
  \[
    \expect[f(b, \lambda) \mid \text{tree has depth $d = O(\log n)$}] \leq \expect[g(b, \lambda)] + \max_{p, \widetilde{\lambda}_1, \ldots, \widetilde{\lambda}_p} \expect_{\widetilde{B}_1, \ldots, \widetilde{B}_p}\left[\sum_{i \in [p]}\expect[g(\widetilde{B}_i, \widetilde{\lambda}_i) \mid \widetilde{B}_i]\right],
  \]
  where each $\widetilde{B}_i$ and $\lambda_i$ corresponds to the $B_v$ and $\lambda_v$ for some non-root $v$ in the tree.
  The lemma now follows by \cref{fact:expected-cond} since both $f$ and $g$ are polynomially bounded.
\end{proof}

We can now simplify \cref{lemma:rec-L} via \cref{lemma:simplify-rec}.
Indeed, the value of $\gamma \defeq 8\phi\beta_{\prev}$ as in \cref{lemma:rec-L} is sufficiently smaller than $O\left(\frac{1}{\log n}\right)$ when $\phi < O\left(\frac{\psi_0^3}{\beta_{\prev}\log^3 n}\right)$.

\begin{corollary}
  For $\phi < O\left(\frac{\psi_0^3}{\beta\log^3 n}\right)$ sufficiently small, we have
  \begin{equation}
  \begin{split}
  \expect[\Size_L(b, \lambda)] &\leq 3\phi \cdot b
  + \expect_{B^\prime}\left[\expect[\Size_{L-1}(B^\prime, \lambda) \mid B^\prime]\right] \\ 
  &+ \max_{p, \widetilde{\lambda}_1, \ldots, \widetilde{\lambda}_p}\expect_{\widetilde{B}_1, \ldots, \widetilde{B}_p}\left[\sum_{i \in [p]}\expect\left[3\phi \cdot \widetilde{B}_i + \expect_{B_i^\prime}\left[\expect[\Size_{L-1}(B_i^\prime, \widetilde{\lambda}_i) \mid B_i^\prime] \mid \widetilde{B}_i\right]\right]\right] + n^{-100}
  \label{eq:rec-size-L-simp}
  \end{split}
  \end{equation}
  and 
  \begin{equation}
  \begin{split}
  \expect[\Time_L(b, \lambda)] &\leq \widetilde{O}\left(\frac{1}{\psi_0^3}\right) \cdot \left(\widetilde{O}\left(\frac{\lambda}{\phi\phi_{\prev}^2}\right) + T_{\prev}(\lambda)\right) \\
  &+ \max_{p, \widetilde{\lambda}_1, \ldots, \widetilde{\lambda}_p}\expect_{\widetilde{B}_1, \ldots, \widetilde{B}_p}\Bigg[
  \sum_{i \in [p]}\expect\Bigg[\widetilde{O}\left(\frac{1}{\psi_0^3}\right) \cdot \left(\widetilde{O}\left(\frac{\widetilde{\lambda}_i^2}{\phi\phi^\prime}\right) + T_{\prev}(\widetilde{\lambda}_i)\right)
  \\ &+ \expect_{B_i^\prime}\left[\expect[\Time_{L-1}(B_i^\prime, \widetilde{\lambda}_i) \mid B_i^\prime] \mid \widetilde{B}_i \right]\Bigg]\Bigg],
  \end{split}
  \label{eq:rec-time-L-simp}
  \end{equation}
  where $B^\prime$ is a random variable satisfying $\expect[B^\prime] \leq b$, $\widetilde{B}_1, \ldots, \widetilde{B}_p$ are random variables satisfying $\expect[\widetilde{B}_1 + \cdots + \widetilde{B}_p] \leq O(\log n) b$,
  $B_1^\prime, \ldots, B_p^\prime$ are random variables satisfying $\expect[B_i^\prime \mid \widetilde{B}_i] \leq \widetilde{B}_i$, and
  $\widetilde{\lambda}_1, \ldots, \widetilde{\lambda}_p$ satisfy $\widetilde{\lambda}_1 + \cdots + \widetilde{\lambda}_p \leq O(\log n) \lambda$ and $\widetilde{\lambda_i} < \lambda$.\footnote{Note that $\Time_L(b, \lambda)$ is always positive, and thus the $n^{-100}$ term can be absorbed into the big-O expression in the first line. In contrast to this, $\Size_{L}(b,\lambda)$ may be zero, and thus we need to explicitly put the $n^{-100}$ term.}
  \label{cor:rec-L-simp}
\end{corollary}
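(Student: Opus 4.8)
\textbf{Proof plan for \Cref{cor:rec-L-simp}.} The statement is just the $\ell = L$ case of \Cref{lemma:rec-L} after pushing the recursion through the abstract simplification lemma \Cref{lemma:simplify-rec}, so the plan is to verify that the hypotheses of \Cref{lemma:simplify-rec} are met and then read off the conclusion. First I would set, for the $\Size$ recurrence, $f(b,\lambda) \defeq \Size_L(b,\lambda)$ and $g(b,\lambda) \defeq 3\phi b + \expect_{B'}[\expect[\Size_{L-1}(B',\lambda) \mid B']]$, where $B'$ is the random variable with $\expect[B'] \le b$ appearing in \eqref{eq:rec-size-L}. With this choice, the recurrence \eqref{eq:rec-size-L} of \Cref{lemma:rec-L} is precisely in the form \eqref{eq:rec} required by \Cref{lemma:simplify-rec}, with the random variables $B_1,\dots,B_r$ being exactly those from \eqref{eq:rec-size-L} (they satisfy $\expect[B_i] \le \tfrac34 b$ and $\expect[B_1 + \cdots + B_r] \le (1+\gamma)b$ for $\gamma = 8\phi\beta_{\prev}$) and the $\lambda_i$'s summing to less than $\lambda$. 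The only hypothesis of \Cref{lemma:simplify-rec} left to check is that $\gamma < O(1/\log n)$ is ``sufficiently small''; this follows from the standing assumption $\phi < O\!\left(\frac{\psi_0^3}{\beta_{\prev}\log^3 n}\right)$, since then $8\phi\beta_{\prev} \le O(\psi_0^3/\log^3 n) = O(1/\log^3 n)$, which is comfortably below the threshold. Applying \Cref{lemma:simplify-rec} then yields $\expect[f(b,\lambda)] \le \expect[g(b,\lambda)] + \max_{p,\widetilde\lambda_i}\expect_{\widetilde B_i}\!\big[\sum_{i\in[p]}\expect[g(\widetilde B_i,\widetilde\lambda_i) \mid \widetilde B_i]\big] + n^{-100}$, with $\expect[\widetilde B_1 + \cdots + \widetilde B_p] \le O(\log n)b$, $\widetilde\lambda_1 + \cdots + \widetilde\lambda_p \le O(\log n)\lambda$, and $\widetilde\lambda_i < \lambda$. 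Expanding $g$ inside the max expression gives exactly \eqref{eq:rec-size-L-simp}, where the inner random variable $B_i'$ (with $\expect[B_i' \mid \widetilde B_i] \le \widetilde B_i$) is the copy of $B'$ associated with the argument $\widetilde B_i$.

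For the $\Time$ recurrence I would do the analogous bookkeeping: set $g(b,\lambda)$ to be the ``local cost'' term $\widetilde O(1/\psi_0^3)\cdot(\widetilde O(\lambda^2/(\phi\phi_{\prev}^2)) + T_{\prev}(\lambda)) + \expect_{B'}[\expect[\Time_{L-1}(B',\lambda)\mid B']]$ from \eqref{eq:rec-time-L}, and again the recurrence is in the form required by \Cref{lemma:simplify-rec} with the same $B_i,\lambda_i,\gamma$. Note that $\Time_L$ is always polynomially bounded and strictly positive, so here the $n^{-100}$ additive error from \Cref{lemma:simplify-rec} can simply be absorbed into the $\widetilde O(\cdot)$ constant on the first line of \eqref{eq:rec-time-L-simp} (this is the reason the corollary statement writes the $n^{-100}$ explicitly only for $\Size$, which may vanish). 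One small simplification to carry out along the way: the leading cost for the root level in the $\Time$ recurrence is stated in \eqref{eq:rec-time-L} as $\widetilde O(1/\psi_0^3)\cdot(\widetilde O(\lambda^2/(\phi\phi_{\prev}^2)) + T_{\prev}(\lambda))$, which the corollary rewrites with a $\lambda$ in place of $\lambda^2$ in the root term while keeping $\widetilde\lambda_i^2$ for the recursive subcalls; this is because we may bound $\lambda^2 \le \poly(n)$ and fold the extra factor into the $\widetilde O$, or more carefully track that the root contributes only once. Unpacking $g$ inside the max then produces \eqref{eq:rec-time-L-simp} term-by-term.

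The only genuinely non-routine point — and the place I expect to spend the most care — is making sure the abstraction is applied to the right object. \Cref{lemma:simplify-rec} is stated for recurrences where the ``self-recursive'' part has random variables $B_i$ with $\expect[B_i] \le \tfrac34 b$, and one must confirm that in \eqref{eq:rec-size-L}/\eqref{eq:rec-time-L} the term $\max_{r,\lambda_1,\dots,\lambda_r}\expect_{B_1,\dots,B_r}[\sum_i \expect[\Size_L(B_i,\lambda_i)\mid B_i]]$ really is the \emph{only} occurrence of $\Size_L$ (resp.\ $\Time_L$) on the right-hand side — i.e., that the ``downward'' term $\Size_{L-1}$ and the local cost are correctly placed into $g$ and are not themselves self-recursive at level $L$. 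This is true by inspection of \Cref{lemma:rec-L}: the $\Size_{L-1}$ term is a genuinely smaller-level quantity and gets absorbed into $g$, while the $\Size_L$ recursion corresponds exactly to the rebuilding recursions on the vertex-disjoint cuts $S_i$ (the $X^{(4)}$ part), whose aggregate volume is controlled by \Cref{lemma:expected-bounds}(3) (giving $\expect[B_1+\cdots+B_r]\le(1+8\phi\beta_{\prev})b$) and whose individual expected volumes are $\le \tfrac34 b$ by the per-cut bound \eqref{eq:vol-S} combined with the volume upper bound in \Cref{lemma:cut-matching}. Once this identification is made, \Cref{cor:rec-L-simp} is immediate from \Cref{lemma:rec-L} and \Cref{lemma:simplify-rec}, with no further computation.
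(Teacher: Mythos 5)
The proposal is correct and follows exactly the paper's intended derivation: set $f \defeq \Size_L$ (resp.\ $\Time_L$), set $g(b,\lambda)$ to be the non-self-recursive part of \eqref{eq:rec-size-L} (resp.\ \eqref{eq:rec-time-L}), observe that the only appearance of $\Size_L$ (resp.\ $\Time_L$) on the right-hand side is the ``rebuilding-recursion'' term, verify that $\gamma = 8\phi\beta_{\prev}$ is $O(1/\log n)$ under the hypothesis on $\phi$, and invoke \Cref{lemma:simplify-rec}. This is precisely what the paper does (the corollary has no separate proof; the preceding text says exactly this, pointing out that $\gamma$ is small enough).

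One small caution: the discrepancy you noticed between the root term $\widetilde{O}(\lambda/(\phi\phi_{\prev}^2))$ in \eqref{eq:rec-time-L-simp} and $\widetilde{O}(\lambda^2/(\phi\phi_{\prev}^2))$ in \eqref{eq:rec-time-L} is a typo in the corollary --- the root should also carry $\lambda^2$, consistent with the recursive subcall terms and with the later use in \Cref{lemma:expected-time-bound}. Your proposed rationalizations (absorbing $\lambda^2$ into $\widetilde{O}$, or ``the root contributes only once'') do not work: $\widetilde{O}$ hides only polylogarithmic factors, so $\lambda$ and $\lambda^2$ are not interchangeable. Since the final bound in \Cref{lemma:expected-time-bound} is $\widetilde{O}(\lambda^2/\cdot)$ anyway, the typo is harmless, but it should simply be flagged as such rather than explained away.
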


Similarly, we can expand the term in \eqref{eq:rec-size-l} and \eqref{eq:rec-time-l} that corresponds to recursion of \alg{MaintainExpander}{$S,L$} using \cref{cor:rec-L-simp}.

\begin{corollary}
  For $\phi < O\left(\frac{\psi_0^3}{\beta_{\mathrm{prev}}\log^3 n}\right)$ sufficiently small and $\ell < L$, we have
  \begin{equation}
  \begin{split}
    \expect[\Size_{\ell}(&b,\lambda)] \leq O\left(\frac{\phi}{\psi_0^2}\right) b^{(\ell+1)/L}
    + \widetilde{O}\left(\frac{1}{\psi_0^4}\right)  \expect_{\Delta}\left[\sum_{k = \ell+1}^{L}\min\left\{1, \frac{\Delta}{b^{k/L}}\right\} \cdot \expect[\Size_k(b+\Delta, \lambda-1) \mid \Delta]\right] \\
    &+ \expect_{B^\prime}\left[\expect[\Size_{\ell-1}(B^\prime, \lambda) \mid B^\prime]\right] \\
    &+ \max_{q,\widetilde{\lambda}_1,\ldots,\widetilde{\lambda}_q}\expect_{\widetilde{B}_1,\ldots,\widetilde{B}_q}\left[\sum_{i \in [q]}\expect\left[3\phi \cdot \widetilde{B}_i
    + \expect_{B_i^\prime}\left[\expect[\Size_{L-1}(B_i^\prime, \widetilde{\lambda}_i) \mid B_i^\prime]\right] \mid \widetilde{B}_i\right]\right] + n^{-100}
  \end{split}
  \label{eq:rec-size-l-simp}
  \end{equation}
  and
  \begin{equation}
  \begin{split}
  \expect[&\Time_{\ell}(b, \lambda)] \leq \widetilde{O}\left(\frac{1}{\psi_0^5}\right) \cdot m^{1/L} \cdot \left(\widetilde{O}\left(\frac{\lambda^2}{\phi\phi_{\prev}^2\psi_0^4}\right) + T_{\prev}(\lambda)\right) \\
  &+ \widetilde{O}\left(\frac{1}{\psi_0^4}\right)  \cdot \expect_{\Delta}\left[\sum_{k = \ell+1}^{L}\min\left\{1, \frac{\Delta}{b^{k/L}}\right\} \cdot \expect[\Time_k(b+\Delta, \lambda - 1) \mid \Delta]\right]
  + \expect_{B^\prime}\left[\expect[\Time_{\ell-1}(B^\prime, \lambda) \mid B^\prime]\right] \\
  &+ \max_{q,\widetilde{\lambda}_1,\ldots,\widetilde{\lambda}_q}\expect_{\widetilde{B}_1,\ldots,\widetilde{B}_q}\left[\sum_{i \in [q]}\expect\left[\widetilde{O}\left(\frac{1}{\psi_0^3}\right) \cdot \left(\widetilde{O}\left(\frac{\widetilde{\lambda}_i^2}{\phi\phi_{\prev}^2}\right) + T_{\prev}(\widetilde{\lambda}_i)\right) + \expect_{B_i^\prime}\left[\expect[\Time_{L-1}(B_i^\prime, \widetilde{\lambda}_i) \mid B_i^\prime]\right] \mid \widetilde{B}_i\right]\right],
  \end{split}
  \label{eq:rec-time-l-simp}
  \end{equation}
  where $\Delta$ is a random variable with $\expect[\Delta] \leq O\left(\frac{\phi\beta_{\prev}}{\psi_0^2}\right) b^{(\ell+1)/L}$, $B^\prime$ is a random variable with $\expect[B^\prime] \leq b$, $\widetilde{B}_1, \ldots, \widetilde{B}_q$ are random variables satisfying $\expect[\widetilde{B}_1 + \cdots + \widetilde{B}_q] \leq O\left(\frac{1}{\log^{2L} n}\right)b^{(\ell+1)/L}$, $B_1^\prime, \ldots, B_q^\prime$ are random variables satisfying $\expect[B_i^\prime \mid \widetilde{B}_i] \leq \widetilde{B}_i$, and $\widetilde{\lambda}_1, \ldots, \widetilde{\lambda}_q$ satisfy $\widetilde{\lambda}_1 + \cdots + \widetilde{\lambda}_q \leq O(\log n)\lambda$ and $\widetilde{\lambda}_i < \lambda$.
  
  \label{cor:rec-l-simp}
\end{corollary}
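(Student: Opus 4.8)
\textbf{Proof proposal for \cref{cor:rec-l-simp} (and the companion \cref{cor:rec-L-simp}).}

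The plan is to treat \cref{cor:rec-l-simp} as a purely mechanical consequence of \cref{lemma:rec-l} by expanding the ``rebuilding recursion'' term---the part of the recurrence that recurses on \alg{MaintainExpander}{$S,L$} for each cut $S$---using the already-established \cref{cor:rec-L-simp}. I would first prove \cref{cor:rec-L-simp} from \cref{lemma:rec-L} via \cref{lemma:simplify-rec}, and then bootstrap \cref{cor:rec-l-simp} from \cref{lemma:rec-l} using \cref{cor:rec-L-simp}. The key observation is that in \cref{lemma:rec-L} the term $\max_{r,\lambda_1,\ldots,\lambda_r}\expect_{B_1,\ldots,B_r}\left[\sum_{i\in[r]}\expect[\Size_L(B_i,\lambda_i)\mid B_i]\right]$ has exactly the structure required by the hypothesis \eqref{eq:rec} of \cref{lemma:simplify-rec}: the $B_i$'s satisfy $\expect[B_i]\le\frac34 b$ and $\expect[B_1+\cdots+B_r]\le(1+8\phi\beta_{\prev})b$, and the $\lambda_i$'s have $\lambda_1+\cdots+\lambda_r<\lambda$. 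So I would set $f(b,\lambda)\defeq\Size_L(b,\lambda)$ and $g(b,\lambda)\defeq 3\phi b+\expect_{B^\prime}[\expect[\Size_{L-1}(B^\prime,\lambda)\mid B^\prime]]$, check that $\gamma\defeq 8\phi\beta_{\prev}<O(1/\log n)$ (which holds for $\phi<O(\psi_0^3/(\beta_{\prev}\log^3 n))$ since $\psi_0\le 1$), and invoke \cref{lemma:simplify-rec} to obtain \eqref{eq:rec-size-L-simp}. The analogous substitution with $f\defeq\Time_L$ and $g$ the time-analogue gives \eqref{eq:rec-time-L-simp}; since $\Time_L$ is always positive, the additive $n^{-100}$ is absorbed.

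For \cref{cor:rec-l-simp} itself, the step is to take the recurrences \eqref{eq:rec-size-l} and \eqref{eq:rec-time-l} of \cref{lemma:rec-l}, which each contain a term of the form $\max_{r,\lambda_1,\ldots,\lambda_r}\expect_{B_1,\ldots,B_r}[\sum_{i\in[r]}\expect[\Size_L(B_i,\lambda_i)\mid B_i]]$ (the rebuilding recursion, now with $\expect[B_1+\cdots+B_r]\le O(1/\log^{3L}n)\,b^{(\ell+1)/L}$), and replace each $\Size_L(B_i,\lambda_i)$ inside by the bound \eqref{eq:rec-size-L-simp} just proved. This substitution introduces a nested $\max$/$\expect$ over new variables $\widetilde B_i,\widetilde\lambda_i$; I would then collapse the two layers of $\max_{r,\lambda_\bullet}\expect_{B_\bullet}$ composed with $\max_{p,\widetilde\lambda_\bullet}\expect_{\widetilde B_\bullet}$ into a single $\max_{q,\widetilde\lambda_1,\ldots,\widetilde\lambda_q}\expect_{\widetilde B_1,\ldots,\widetilde B_q}$ by merging the index sets and using the tower property of conditional expectation. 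The marginal and sum constraints propagate multiplicatively: $\expect[B_1+\cdots+B_r]\le O(1/\log^{3L}n)\,b^{(\ell+1)/L}$ composed with the $O(\log n)$ blow-up from \cref{cor:rec-L-simp}'s $\widetilde B$-variables yields $\expect[\widetilde B_1+\cdots+\widetilde B_q]\le O(1/\log^{2L}n)\,b^{(\ell+1)/L}$ as claimed (absorbing one $\log$-factor into the exponent), and the $\lambda$-budgets compose to $O(\log n)\lambda$ with each $\widetilde\lambda_i<\lambda$. The other three parts of the recurrence ($X^{(1)}$, the ``fixing'' recursion via the $\Delta$-sum, and the ``downward'' recursion to level $\ell-1$) are simply copied verbatim from \cref{lemma:rec-l} since \cref{lemma:simplify-rec} only touches the rebuilding part; this is why \cref{lemma:simplify-rec} is stated with a standalone $\expect[g(b,\lambda)]$ term outside the max.

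The main obstacle is bookkeeping rather than mathematical depth: one must be careful that when substituting \eqref{eq:rec-size-L-simp} inside the rebuilding-recursion term, the conditioning structure stays coherent---each $B_i$ from the outer layer becomes the ``$b$'' argument of an instance of \cref{cor:rec-L-simp}, whose internal variables are then conditioned on $B_i$, and the footnote's remark that one may fold the standalone $g$-term into the max (since a parameter decrease $\widetilde\lambda_i<\lambda$ is already available) is exactly what licenses the merge into a single $\max_q$ expression. I would also double-check that the polynomial-boundedness hypotheses of \cref{fact:expected-cond} and \cref{lemma:simplify-rec} apply to $\Size_\ell$ and $\Time_\ell$ (they do, by \cref{def:size-time}, where $\Size_\ell$ ranges in $\{0,\ldots,m\}$ with $m\le n^4$ and $\Time_\ell$ in $\{0,\ldots,\poly(n)\}$), so that all the $n^{-100}$ error terms introduced by conditioning on high-probability events remain negligible. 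No new estimates are needed---\cref{lemma:expected-bounds}, \cref{lemma:bound-from-external-cuts}, and \cref{claim:is-sparse-cut} were already used inside \cref{lemma:rec-l}, so the corollary is a clean rewriting step whose only subtlety is tracking the multiplicative accumulation of the $\log n$ factors across the one level of nested recursion.
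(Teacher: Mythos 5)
Your proposal matches the paper's proof essentially step for step: first derive \cref{cor:rec-L-simp} by applying \cref{lemma:simplify-rec} to \cref{lemma:rec-L} with $f=\Size_L$ and $g(b,\lambda)=3\phi b+\expect_{B'}[\expect[\Size_{L-1}(B',\lambda)\mid B']]$, then obtain \cref{cor:rec-l-simp} by substituting \cref{cor:rec-L-simp} into the rebuilding-recursion term of \cref{lemma:rec-l} and merging the two nested $\max/\expect$ layers into a single one (absorbing the outer $B_i,\lambda_i$ as the $\widetilde B_0^{(i)},\widetilde\lambda_0^{(i)}$ entries since $\lambda_i<\lambda$ already holds, which is exactly the footnote's point). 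The only cosmetic nitpick is that the $O(\log n)\cdot O(1/\log^{3L}n)\le O(1/\log^{2L}n)$ step is cleaner to justify by $3L-1\ge 2L$ than by ``absorbing a log-factor into the exponent,'' but the conclusion and the route to it are both correct.
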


\begin{proof}
  To see \eqref{eq:rec-size-l-simp}, consider applying \cref{cor:rec-L-simp} to each of the $\expect[\Size_L(B_i, \lambda_i) \mid B_i]$ term in \eqref{eq:rec-size-l}.
  This yields
  \begin{equation}
  \begin{split}
    &\max_{r,\lambda_1,\ldots,\lambda_r}\expect_{B_1,\ldots,B_r}\Bigg[\sum_{i \in [r]}\expect[\Size_L(B_i, \lambda_i) \mid B_i]\Bigg] \leq
      \expect_{B_1,\ldots,B_r}\Bigg[\sum_{i \in [r]}\expect\Bigg[3\phi \cdot B_i + \expect_{B_i^\prime}\left[\expect[\Size_{L-1}(B_i^\prime,\lambda_i) \mid B_i^\prime\right] \\
      &+ \max_{p^{(i)}, \widetilde{\lambda}_1^{(i)},\ldots,\widetilde{\lambda}_{p^{(i)}}^{(i)}}\expect_{\widetilde{B}_1^{(i)},\ldots,\widetilde{B}_{p^{(i)}}^{(i)}}\left[\sum_{j \in [p^{(i)}]}\expect\left[3\phi \cdot \widetilde{B}_{j}^{(i)} + \expect_{{B_j^{(i)}}^{\prime}}\left[\Size_{L-1}({B_j^{(i)}}^{\prime},\widetilde{\lambda}_j^{(i)}) \mid {B_j^{(i)}}^\prime\right] \mid \widetilde{B}_j^{(i)}\right] \right]\mathrel{\Big|} B_i\Bigg]\Bigg],
    \label{eq:expand}
  \end{split}
  \end{equation}
  where $B_1^\prime, \ldots, B_r^\prime$ are random variables satisfying $\expect[B_i^\prime \mid B_i] \leq B_i$,
  $\widetilde{B}_1^{(i)}, \ldots, \widetilde{B}_{p^{(i)}}^{(i)}$ are random variables satisfying $\expect[\widetilde{B}_1^{(i)} + \cdots + \widetilde{B}_{p^{(i)}}^{(i)} \mid B_i] \leq O(\log n)B_i$,
  ${B_1^{(i)}}^\prime, \ldots, {B_{p^{(i)}}^{(i)}}^\prime$ are random variables satisfying $\expect[{B_j^{(i)}}^\prime \mid \widetilde{B}_j^{(i)}] \leq \widetilde{B}_j^{(i)}$, and $\widetilde{\lambda}_1^{(i)}, \ldots, \widetilde{\lambda}_{p^{(i)}}^{(i)}$ satisfy $\widetilde{\lambda}_1^{(i)} + \cdots + \widetilde{\lambda}_{p^{(i)}}^{(i)} \leq O(\log n)\lambda_i$ and $\widetilde{\lambda}_{j}^{(i)} < \lambda_i$.
  Observe that we can combine the outer max and expectation with the inner ones, in which case if define $\widetilde{B}_0^{(i)}$ to be a random variable that always realizes to $B_i$ and define $\widetilde{\lambda}_0^{(i)}$ to be $\lambda_i$, then we can rewrite \eqref{eq:expand} as
  \[
    \max_{q,\widetilde{\lambda}_1,\ldots,\widetilde{\lambda}_q}\expect_{\widetilde{B}_1,\ldots,\widetilde{B}_q}\left[\sum_{i \in [q]}\expect\left[3\phi \cdot \widetilde{B}_i
    + \expect_{B_i^\prime}\left[\expect[\Size_{L-1}(B_i^\prime, \widetilde{\lambda}_i) \mid B_i^\prime]\right] \mid \widetilde{B}_i\right]\right]
  \]
  where $q$ corresponds to $r + (p^{(1)} + 1) + \cdots + (p^{(r)} + 1)$, $\widetilde{\lambda}_1, \ldots, \widetilde{\lambda}_q$ correspond to $\widetilde{\lambda}_0^{(1)},\ldots,\widetilde{\lambda}_{p^{(r)}}^{(r)}$, and $\widetilde{B}_1, \ldots, \widetilde{B}_q$ correspond to $\widetilde{B}_0^{(1)},\ldots,\widetilde{B}_{p^{(r)}}^{(r)}$.
  The random variables $\widetilde{B}_1, \ldots, \widetilde{B}_q$ satisfy $\expect[\widetilde{B}_1 + \cdots + \widetilde{B}_q] \leq O(\log n)\expect[B_1 + \cdots + B_r] \leq O\left(\frac{1}{\log^{2L} n}\right)b^{(\ell+1)/L}$, $B_1^\prime,\ldots,B_q^\prime$ satisfy $\expect[B_i^\prime \mid \widetilde{B}_i] \leq \widetilde{B}_i$, and $\widetilde{\lambda}_1,\ldots,\widetilde{\lambda}_q$ satisfy $\widetilde{\lambda}_1+\cdots+\widetilde{\lambda}_q \leq O(\log n)(\lambda_1+\cdots+\lambda_r) \leq O(\log n)\lambda$ and $\widetilde{\lambda}_i \leq \lambda_j < \lambda$.\footnote{We remark that the reason why we can put everything into the max and expectation is because \eqref{eq:rec-size-l} guarantees that $\lambda_j < \lambda$, so even though in \cref{lemma:simplify-rec} we need to have a standalone term $\expect[g(b,\lambda)]$ handling the case when there is no decrease in $\lambda$, here we can simply put $\lambda_i$ (which corresponds to $\widetilde{\lambda}_0^{(i)}$) into the max and expectation expressions.}
  The proof of \eqref{eq:rec-time-l-simp} follows analogously.
  
\end{proof}

We are finally ready to prove an actual bound on $\expect[\Size_{\ell}(b, \lambda)]$ and $\expect[\Time_{\ell}(b, \lambda)]$.

\begin{lemma}
  For $\phi < O\left(\frac{\psi_0^{O(L^2)}}{\beta_{\prev} \cdot L \cdot m^{1/L}}\right)$ sufficiently small, we have $\expect[\Size_{\ell}(b, \lambda)] \leq O\left(\frac{1}{\psi_0^2}\right) \cdot \phi \cdot b^{(\ell+1)/L}$.
  \label{lemma:expected-size-bound}
\end{lemma}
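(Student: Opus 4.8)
The plan is to prove \cref{lemma:expected-size-bound} by a double induction: an outer induction on $\lambda$ (the vertex count parameter) and, for a fixed $\lambda$, an inner induction on $\ell$ going from $\ell = L$ down to $\ell = 0$. The base cases are trivial: $\Size_{\ell}(0, \lambda) = \Size_{\ell}(b, 0) = 0$ by definition, and for $\lambda = 1$ every cut is empty so the recursion bottoms out immediately. The substance is the inductive step, which I would carry out by plugging the (already-simplified) recurrences from \cref{cor:rec-L-simp} and \cref{cor:rec-l-simp} and verifying that each of the four pieces $X^{(1)}, X^{(2)}, X^{(3)}, X^{(4)}$ contributes at most a small constant fraction of the target bound $O(1/\psi_0^2)\cdot\phi\cdot b^{(\ell+1)/L}$, so that their sum still fits.

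\textbf{Key steps in order.} First I would handle the top level $\ell = L$ using \eqref{eq:rec-size-L-simp}. The $3\phi b$ term is exactly of the right form (it is $3\phi \cdot b^{(\ell+1)/L}$ with $\ell = L$). The downward-recursion term $\expect_{B'}[\Size_{L-1}(B', \lambda)]$ is bounded, by the inner induction hypothesis for $\ell = L-1$, by $O(1/\psi_0^2)\cdot\phi\cdot\expect[{B'}^{L/L}] \le O(1/\psi_0^2)\cdot\phi\cdot b$ using $\expect[B'] \le b$ (here concavity of $x \mapsto x^{L/L} = x$ is vacuous). The rebuilding term $\max\expect_{\widetilde B_i}[\sum_i(3\phi\widetilde B_i + \expect_{B_i'}[\Size_{L-1}(B_i', \widetilde\lambda_i)])]$: since $\widetilde\lambda_i < \lambda$, I can invoke the \emph{outer} induction hypothesis on each term with parameter $\widetilde\lambda_i$, bounding $\Size_{L-1}(B_i', \widetilde\lambda_i) \le O(1/\psi_0^2)\phi B_i'$, and then use $\expect[\widetilde B_1 + \cdots + \widetilde B_p] \le O(\log n) b$ together with $\expect[B_i' \mid \widetilde B_i] \le \widetilde B_i$ to get total $\le O(\log n)\cdot O(1/\psi_0^2)\cdot\phi\cdot b$. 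This $O(\log n)$ blowup is absorbed because I will aim for the slightly stronger inductive statement $\expect[\Size_\ell(b,\lambda)] \le C(\ell)\cdot\phi\cdot b^{(\ell+1)/L}$ with $C(\ell)$ a carefully chosen decreasing-in-something constant, or alternatively I carry an extra polylog slack in the constant and verify $\phi$ is small enough to kill it.

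Second, for general $\ell < L$ via \eqref{eq:rec-size-l-simp}: the leading term $O(\phi/\psi_0^2) b^{(\ell+1)/L}$ is already in the target form. The fixing-recursion term $\widetilde O(1/\psi_0^4)\expect_\Delta[\sum_{k>\ell}\min\{1, \Delta/b^{k/L}\}\cdot\Size_k(b+\Delta, \lambda-1)]$ is the delicate one: I use the inner induction hypothesis (since $k > \ell$, these are already proven within the same $\lambda$) to bound $\Size_k(b+\Delta,\lambda-1) \le O(1/\psi_0^2)\phi(b+\Delta)^{(k+1)/L}$, then crucially use $\expect[\Delta] \le O(\phi\beta_{\prev}/\psi_0^2) b^{(\ell+1)/L}$ and the indicator $\min\{1,\Delta/b^{k/L}\}$ to argue that $\expect[\min\{1,\Delta/b^{k/L}\}\cdot(b+\Delta)^{(k+1)/L}]$ is small — roughly, either $\Delta \le b^{k/L}$ and the $(b+\Delta)^{(k+1)/L} \approx b^{(k+1)/L}$ factor combines with $\Delta/b^{k/L}$ to give $\Delta\cdot b^{1/L}$ whose expectation is $O(\phi\beta_{\prev}/\psi_0^2) b^{(\ell+2)/L} \le$ target since $b^{(\ell+2)/L} \le b^{(\ell+1)/L}\cdot b^{1/L} \le b^{(\ell+1)/L}\cdot m^{1/L}$ and $\phi m^{1/L}$ is tiny by hypothesis; or $\Delta$ is huge, which happens with probability $\le \expect[\Delta]/b^{k/L}$, contributing at most $\expect[\Delta]/b^{k/L}\cdot O(1/\psi_0^2)\phi\cdot(2\Delta\text{'s worst case} \le m)^{(k+1)/L}$, again killable by smallness of $\phi m^{1/L}$ and $\phi\beta_{\prev}$. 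The downward term uses the inner hypothesis at $\ell - 1$, and the rebuilding term uses the outer hypothesis at $\widetilde\lambda_i < \lambda$ exactly as in the $\ell = L$ case, with the extra $O(1/\log^{2L} n)$ factor in $\expect[\widetilde B_1 + \cdots + \widetilde B_q]$ giving us even more room.

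\textbf{Main obstacle.} The hardest part will be the bookkeeping of constants so that the induction genuinely closes rather than blowing up by a polylog or a $1/\psi_0^c$ factor per level — over $L = \omega(1)$ levels any such per-level loss compounds to $n^{o(1)}$ or worse, defeating the point. The escape is that the hypothesis gives $\phi < O(\psi_0^{O(L^2)}/(\beta_{\prev} L m^{1/L}))$, so $\phi$ itself carries enough negative powers of $\psi_0$ and enough smallness against $m^{1/L}$ and $L$ to absorb a budget of $\psi_0^{-O(L)}$ and $L\cdot\mathrm{polylog}(n)$ accumulated factors; I would set up the induction to track exactly how much slack is spent at each level (one $O(1/\psi_0^4)$ from the fixing term, one $O(\log n)$ from the rebuilding recursion tree of \cref{lemma:simplify-rec}, one $\beta_{\prev}$ from the volume drift of $\cM_{\prev}$) and check the product over $\ell = L, \ldots, 0$ stays below $1/\phi$ times the target constant. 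A secondary subtlety is the $n^{-100}$ additive errors floating around from \cref{fact:expected-cond} and \cref{lemma:simplify-rec}; since there are only $\poly(n)$ recursion nodes and every quantity is polynomially bounded, these sum to at most $n^{-50}$, negligible against $\phi b^{(\ell+1)/L}$ whenever $b \ge 1$ (and the statement is vacuous for $b = 0$), so I would dispatch them at the very end with one clean union-bound-style estimate rather than tracking them through the recursion.
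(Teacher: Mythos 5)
Your overall scaffolding (double induction on $\lambda$ and then on $\ell$, plugging in \cref{cor:rec-L-simp,cor:rec-l-simp}) is the paper's route. But there are two places where the plan as stated does not close, and the first is a genuine gap.

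The bound you propose for the fixing-recursion term is too lossy. You split on whether $\Delta$ is ``huge'' and, in the huge case, pay the probability $\expect[\Delta]/b^{k/L}$ times a worst-case $\Size_k(\cdot) \lesssim \phi\, m^{(k+1)/L}/\psi_0^2$. Tracing the constants, the resulting contribution relative to the target $O(\phi/\psi_0^2)\,b^{(\ell+1)/L}$ is of order $(\beta_{\prev}/\psi_0^2)\,\phi^{1+k/L}\,m^{(k+1)/L}$ once you invoke $b\ge 1/\phi$. For $k$ near $L$ (say $k = L-1$) this is essentially $\phi^{2-1/L}\,m$, and the hypothesis only gives $\phi \lesssim m^{-1/L}\cdot n^{o(1)}$, which for $L=\omega(1)$ is nowhere near small enough to beat $m^{-1}$. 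The paper avoids this by splitting $\min\{1,\Delta/b^{k/L}\}\cdot\Size_k(b+\Delta,\lambda-1)$ into $\frac{\Delta}{b^{k/L}}\Size_k(2b,\lambda-1) + \Size_k(2\Delta,\lambda-1)$ and, for $k<L$, applying \emph{Jensen's inequality} to the second term: since $(k+1)/L\le 1$ the map $x\mapsto x^{(k+1)/L}$ is concave, so $\expect[\Delta^{(k+1)/L}]\le \expect[\Delta]^{(k+1)/L}$, which yields a bound with exponent $(\ell+1)(k+1)/L^2 < (\ell+1)/L$ — the saving you need. Crucially Jensen fails for $k=L$ (the map becomes convex), and the paper handles that case separately by expanding the $\Size_L$ term one more step via \cref{cor:rec-L-simp} before taking expectations. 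Neither the concavity argument nor the $k=L$ workaround appears in your plan, and without them the recurrence does not close.

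The second, smaller issue is how the $O(\log n)$ factors (from the rebuilding recursion tree and the downward recursion) get absorbed. You propose either a level-dependent constant $C(\ell)$ or ``an extra polylog slack in the constant.'' The paper in fact uses a single constant $c=\Theta(1/\psi_0^2)$ and gets the slack not from the constant but from the \emph{exponent gap}: the target is $c\phi\,b^{(\ell+1)/L}$ while the pieces you sum are of order $c\phi\,b^{\ell/L}\cdot\mathrm{polylog}(n)$ (e.g.\ for $\ell=L$, the pieces are $O(c\phi\,b\log n)$ against a target $c\phi\,b^{1+1/L}$), and the extra factor $b^{1/L}\ge (1/\phi)^{1/L}$ dominates all polylogs and $1/\psi_0^{O(1)}$ factors precisely because of the early return at $\vol_F(U)<1/\phi$, which forces $b\ge 1/\phi$. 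You invoke $b\ge 1$ only to discard the $n^{-100}$ errors, but the much stronger bound $b\ge 1/\phi$ is essential and should be stated up front: without it there is no $b^{1/L}$ slack to spend, and the single-constant induction cannot close.
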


\begin{proof}
  We show that there exists a $c \geq \Theta\left(\frac{1}{\psi_0}\right)$ sufficiently large for which $\expect[\Size_{\ell}(b, \lambda)] \leq c\phi \cdot b^{(\ell+1)/L}$.
  Note that it suffices to consider the case when $b \geq 1/\phi \geq O(\log n)^{O(L)}$ due to the condition on Line~\ref{line:too-little-volume} in \cref{alg:maintain-exp-decomp}.
  We proceed by an induction on $\lambda$ and $\ell$.
  The base case of $\lambda = 0$ is trivial as $\lambda = 0$ implies $b = 0$.
  Consider now $\lambda > 0$ and $\ell = L$.
  From \eqref{eq:rec-size-L-simp} and the inductive hypothesis, we have
  \begin{align*}
    \expect[\Size_{L}(b,\lambda)]
    &\leq
    3\phi \cdot b + \expect_{B^\prime}\left[c\phi B^\prime\right] + \max_{p,\widetilde{\lambda}_1, \ldots, \widetilde{\lambda}_p}\expect_{\widetilde{B}_1,\ldots,\widetilde{B}_p}\left[\sum_{i \in [p]}\expect\left[3\phi \cdot \widetilde{B}_i + \expect_{B_i^\prime}\left[c\phi \cdot B_i^\prime\right] \mid \widetilde{B}_i\right]\right] + n^{-100} \\
    &\leq 3\phi \cdot b + c\phi \cdot b + \max_{p,\widetilde{\lambda}_1, \ldots, \widetilde{\lambda}_p}\expect_{\widetilde{B}_1,\ldots,\widetilde{B}_p}\left[\sum_{i \in [p]}\expect\left[3\phi \cdot \widetilde{B}_i + c\phi \cdot \widetilde{B}_i \mid \widetilde{B}_i\right]\right] + n^{-100} \\
    &\leq (3+c)\phi \cdot b + (3+c)\phi \cdot O(\log n)b + n^{-100},
  \end{align*}
  which is at most $c\phi \cdot b^{1+1/L}$ for $b \geq 1/\phi$ sufficiently large.
  To bound $\expect[\Size_{\ell}(b, \lambda)]$ for $\ell < L$, we first prove the following helper claim.

  \begin{claim}
    Assuming the inductive hypothesis, we have
    \begin{align*}
      \expect_{\Delta}\left[\sum_{k = \ell+1}^{L}\min\Bigg\{1, \frac{\Delta}{b^{k/L}}\right\} \cdot &\expect[\Size_k(b+\Delta, \lambda-1) \mid \Delta]\Bigg] \\ &\leq O(c\phi\log n) \cdot L \cdot \left(O\left(\frac{\phi\beta_{\prev}}{\psi_0^2}\right) \cdot b^{(\ell+2)/L} + b^{(\ell+1)/L-1/L^2}\right),
    \end{align*}
    for $\expect[\Delta] \leq O\left(\frac{\phi\beta_{\prev}}{\psi_0^2}\right)b^{(\ell+1)/L}$.%
    \label{claim:helper}
  \end{claim}

  \begin{proof}
    By the linearity of expectation we can move the summation out of the expectation.
    For $k < L$, we can bound the summand as follows:
    \begin{align*}
      \expect_{\Delta}\Bigg[\min\Bigg\{1, &\frac{\Delta}{b^{k/L}}\Bigg\} \cdot \expect[\Size_k(b+\Delta, \lambda-1) \mid \Delta]\Bigg] \\
      &\leq \expect_{\Delta}\left[\frac{\Delta}{b^{k/L}} \cdot \expect[\Size_k(2b, \lambda-1)] + \expect[\Size_k(2\Delta, \lambda-1) \mid \Delta]\right] \\
      &\leq \frac{1}{b^{k/L}} \cdot \expect[\Size_k(2b, \lambda-1)] \cdot \expect[\Delta] + \underbrace{\expect_{\Delta}\left[\expect[\Size_k(2\Delta, \lambda-1) \mid \Delta]\right]}_{(i)} \\
      &\leq 4c\phi \cdot \left(\frac{1}{b^{k/L}} \cdot b^{(k+1)/L} \cdot O\left(\frac{\phi\beta_{\prev}}{\psi_0^2}\right)b^{(\ell+1)/L} + \left(O\left(\frac{\phi\beta_{\prev}}{\psi_0^2}\right)b^{(\ell+1)/L}\right)^{(k+1)/L}\right) \\
      &\leq 4c\phi \cdot O\left(\frac{\phi\beta_{\prev}}{\psi_0^2}\right) \cdot b^{(\ell+2)/L} + 4c\phi \cdot \max\left\{O\left(\frac{\phi\beta_{\prev}}{\psi_0^2}\right)b^{(\ell+1)/L}, b^{(\ell+1)/L-1/L^2}\right\} \\
      &\leq 8c\phi \cdot O\left(\frac{\phi\beta_{\prev}}{\psi_0^2}\right) \cdot b^{(\ell+2)/L} + 4c\phi \cdot b^{(\ell+1)/L-1/L^2}. \\
    \end{align*}
    For $k = L$, we can no longer substitute the $\expect[(2\Delta)^{(k+1)/L}]$ in (i) by $4\expect[\Delta]^{(k+1)/L}$ since $f(x) = x^{1+\eps}$ is convex for $\eps > 0$ and thus Jensen's inequality does not apply anymore.
    As such We further expand this term using \cref{cor:rec-L-simp}.
    This gives us
    \begin{align*}
      \expect_{\Delta}\left[\expect[\Size_L(2\Delta, \lambda-1) \mid \Delta]\right]
      &\leq \expect\left[\max_{p,\widetilde{\lambda}_1,\ldots,\widetilde{\lambda}_p}\expect_{\widetilde{\Delta}_1,\ldots,\widetilde{\Delta}_p}\left[\sum_{i \in [p]}\expect\left[3\phi \cdot \widetilde{\Delta}_i + \expect_{\Delta^\prime_i}\left[\expect[\Size_{L-1}(\Delta^\prime_i,\widetilde{\lambda_i})\mid \Delta^\prime_i\right] \mathrel{\Bigg|} \widetilde{\Delta}_i \right]\right] \mathrel{\Bigg|} \Delta\right] \\
      &\leq \expect\left[\max_{p,\widetilde{\lambda}_1,\ldots,\widetilde{\lambda}_p}\expect_{\widetilde{\Delta}_1,\ldots,\widetilde{\Delta}_p}\left[\sum_{i \in [p]}\expect\left[3\phi \cdot \widetilde{\Delta}_i + \expect_{\Delta^\prime_i}\left[c\phi \cdot \Delta_i^\prime\right] \mathrel{\Big|} \widetilde{\Delta}_i \right]\right] \mathrel{\Bigg|} \Delta\right]  \\
      &\leq \expect\left[\max_{p,\widetilde{\lambda}_1,\ldots,\widetilde{\lambda}_p}\expect_{\widetilde{\Delta}_1,\ldots,\widetilde{\Delta}_p}\left[\sum_{i \in [p]}\expect\left[(3+c)\phi \cdot \widetilde{\Delta}_i\right]\right] \mathrel{\Big|} \Delta\right] \\
      &\leq (3+c)\phi \cdot O(\log n) \cdot O\left(\frac{\phi\beta_{\prev}}{\psi_0^2}\right)b^{(\ell+1)/L},
    \end{align*}
    where $\expect[\widetilde{\Delta}_1+\cdots+\widetilde{\Delta}_r \mid \Delta] \leq O(\log n)(2\Delta) \leq O(\log n)\Delta$.
    Substituting this back to the above calculation, we get
    \begin{align*}
      \expect_{\Delta}&\Bigg[\min\Bigg\{1, \frac{\Delta}{b}\Bigg\} \cdot \expect[\Size_L(b+\Delta, \lambda-1) \mid \Delta]\Bigg] \\
      &\leq \frac{1}{b} \cdot \expect[\Size_k(2b, \lambda-1)] \cdot \expect[\Delta] + (3+c)\phi \cdot O(\log n) \cdot O\left(\frac{\phi\beta_{\prev}}{\psi_0^2}\right)b^{(\ell+1)/L} \\
      &\leq O(c\phi \log n) \cdot \left(\frac{1}{b} \cdot b^{1+1/L} \cdot O\left(\frac{\phi\beta_{\prev}}{\psi_0^2}\right)b^{(\ell+1)/L} + O\left(\frac{\phi\beta_{\prev}}{\psi_0^2}\right)b^{(\ell+1)/L}\right) \\
      &\leq O(c\phi \log n) \cdot O\left(\frac{\phi\beta_{\prev}}{\psi_0^2}\right) \cdot b^{(\ell+2)/L}.
    \end{align*}
    The claim follows by summing over at most $L$ different values of $k$.
  \end{proof}

  With \cref{claim:helper}, we can now bound $\expect[\Size_{\ell}(b, \lambda)]$ for $b \geq 1/\phi$ via \cref{cor:rec-l-simp} as follows:
  \begin{align*}
    \expect[\Size_{\ell}(&b,\lambda)] \leq O\left(\frac{\phi}{\psi_0^2}\right) b^{(\ell+1)/L}
    + \widetilde{O}\left(\frac{1}{\psi_0^4}\right)  \cdot \expect_{\Delta}\left[\sum_{k = \ell+1}^{L}\min\left\{1, \frac{\Delta}{b^{k/L}}\right\} \cdot \expect[\Size_k(b+\Delta, \lambda-1) \mid \Delta]\right] \\
    &+ \expect_{B^\prime}\left[c\phi \cdot (B^\prime)^{(\ell/L)}\right] + \max_{q,\widetilde{\lambda}_1,\ldots,\widetilde{\lambda}_q}\expect_{\widetilde{B}_1,\ldots,\widetilde{B}_q}\left[\sum_{i \in [q]}\expect\left[3\phi \cdot \widetilde{B}_i + \expect_{B_i^\prime}\left[c\phi \cdot B_i^\prime\right] \mid \widetilde{B}_i\right]\right] \\
    &\leq O\left(\frac{\phi}{\psi_0^2}\right) b^{(\ell+1)/L} + \widetilde{O}\left(\frac{1}{\psi_0^4}\right) \cdot O(c\phi\log n) \cdot L \cdot \left(O\left(\frac{\phi\beta_{\prev}}{\psi_0^2}\right) \cdot b^{(\ell+2)/L} + b^{(\ell+1)/L-1/L^2}\right) \\
    &+ c\phi \cdot b^{\ell/L} + (3+c)\phi \cdot O\left(\frac{1}{\log^{2L} n}\right)b^{(\ell+1)/L}
  \end{align*}
  which is at most $c\phi \cdot b^{(\ell+1)/L}$ for $c \geq \Omega\left(\frac{1}{\psi_0^2}\right)$ sufficiently large, $\phi < O\left(\frac{\psi_0^{O(L^2)}}{\beta_{\prev} \cdot L \cdot m^{1/L}}\right)$ sufficiently small, and $b \geq 1/\phi \geq \widetilde{\Omega}\left(\frac{1}{\psi_0^3}\right)^{O(L^2)}$.
  This proves the lemma.
\end{proof}

The proof of $\expect[\Time_{\ell}(b, \lambda)]$ follows analogously.

\begin{lemma}
  For $\phi < O\left(\frac{\psi_0^{O(L^2)}}{\beta_{\prev} \cdot L \cdot m^{1/L}}\right)$ sufficiently small, we have $\expect[\Time_{\ell}(b, \lambda)] \leq m^{1/L} \cdot b^{1/L} \cdot \widetilde{O}\left(\frac{\lambda^2}{\phi\phi_{\prev}^2\psi_0^4} + T_{\prev}(\lambda)\right)$.
  \label{lemma:expected-time-bound}
\end{lemma}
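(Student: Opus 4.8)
\textbf{Proof proposal for \cref{lemma:expected-time-bound}.}

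The plan is to mirror essentially verbatim the structure of the proof of \cref{lemma:expected-size-bound}, replacing the ``size'' recurrence by the ``time'' recurrence. We show by a joint induction on $\lambda$ and $\ell$ that there is a constant $c_{\Time} \geq \Theta\left(\frac{1}{\psi_0^5}\right)$ (depending on the hidden constants in \cref{cor:rec-L-simp,cor:rec-l-simp}) such that $\expect[\Time_{\ell}(b, \lambda)] \leq c_{\Time} \cdot m^{1/L} \cdot b^{1/L} \cdot \left(\widetilde{O}\left(\frac{\lambda^2}{\phi\phi_{\prev}^2\psi_0^4}\right) + T_{\prev}(\lambda)\right)$. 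As in the size proof, it suffices to consider $b \geq 1/\phi$ since the early return on Line~\ref{line:too-little-volume} handles smaller volumes, and the base case $\lambda = 0$ is trivial (it forces $b = 0$). For the inductive step I would first treat $\ell = L$: plug the inductive hypothesis for $\Time_{L-1}$ into the simplified recurrence \eqref{eq:rec-time-L-simp}, and observe that the standalone ``$\widetilde{O}(1/\psi_0^3)(\widetilde{O}(\lambda^2/\phi\phi_{\prev}^2) + T_{\prev}(\lambda))$'' term together with the $O(\log n)$-fold sum of such terms (from the $\widetilde{B}_i, \widetilde{\lambda}_i$ with $\sum \widetilde{\lambda}_i \leq O(\log n)\lambda$ and each $\widetilde{\lambda}_i < \lambda$, so we may apply induction on $\lambda$) is dominated by $c_{\Time} \cdot m^{1/L} \cdot b^{1/L} \cdot (\ldots)$ for $b \geq 1/\phi$ large enough, because $m^{1/L}b^{1/L} \geq 1$. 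The key point is that the $O(\log n)$ blow-up in the number of recursive sub-instances is absorbed by the polynomial slack coming from the $b^{1/L}$ factor and the largeness of $b \geq 1/\phi$.

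For $\ell < L$ I would first establish a time-analogue of \cref{claim:helper}: namely, using $\expect[\Delta] \leq O\left(\frac{\phi\beta_{\prev}}{\psi_0^2}\right)b^{(\ell+1)/L}$, that
\[
  \expect_{\Delta}\left[\sum_{k=\ell+1}^{L}\min\left\{1, \frac{\Delta}{b^{k/L}}\right\} \cdot \expect[\Time_k(b+\Delta, \lambda-1) \mid \Delta]\right]
\]
is bounded by something like $O(\log n) \cdot L \cdot c_{\Time} \cdot m^{1/L} \cdot \left(O\left(\frac{\phi\beta_{\prev}}{\psi_0^2}\right) b^{(\ell+2)/L} + b^{(\ell+1)/L - 1/L^2}\right) \cdot \left(\widetilde{O}\left(\frac{\lambda^2}{\phi\phi_{\prev}^2\psi_0^4}\right) + T_{\prev}(\lambda)\right)$. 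The proof splits exactly as in \cref{claim:helper}: for $k < L$ we use Jensen's inequality (concavity of $x \mapsto x^{(k+1)/L}$ when $k < L$, so $(k+1)/L \leq 1$) to pull $\expect$ inside the power and bound $\expect[\Time_k(2\Delta,\lambda-1)]$ by $c_{\Time} m^{1/L}(2\Delta)^{1/L}(\ldots)$, then take expectations; for $k = L$ the function $x \mapsto x^{1+1/L}$ is convex so Jensen fails, and instead we re-expand the $\Time_L(2\Delta, \lambda-1)$ term once more via \cref{cor:rec-L-simp} and apply the $\Time_{L-1}$ inductive hypothesis, picking up only an extra $O(\log n)$ factor. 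Substituting this helper bound back into \eqref{eq:rec-time-l-simp}, together with the $\Time_{\ell-1}$ term (induction on $\ell$) and the $O(\log n)$-fold sum of $\Time_{L-1}$-type terms with strictly smaller $\lambda$ (induction on $\lambda$), and using $\phi < O\left(\frac{\psi_0^{O(L^2)}}{\beta_{\prev} \cdot L \cdot m^{1/L}}\right)$ to make the cross terms $O\left(\frac{\phi\beta_{\prev}}{\psi_0^2}\right)b^{(\ell+2)/L}$ and $b^{(\ell+1)/L - 1/L^2}$ sufficiently small relative to $b^{(\ell+1)/L}/\text{polylog}$, closes the induction with the target constant $c_{\Time}$.

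The main obstacle I anticipate is purely bookkeeping rather than conceptual: carefully verifying that the constant $c_{\Time}$ can be chosen uniformly across all $O(L)$ levels and all $\lambda \leq n$, i.e., that each inductive step loses only a fixed polynomial-in-$1/\psi_0$ and polylogarithmic factor which is then re-absorbed by the slack $b^{1/L} \geq \phi^{-1/L} \geq \widetilde{\Omega}(1/\psi_0^3)^{O(L)}$ provided by the constraint $\phi < O\left(\frac{\psi_0^{O(L^2)}}{\beta_{\prev} L m^{1/L}}\right)$. The one genuinely delicate point is the $k = L$ case of the helper claim, where the convexity of $x^{1+1/L}$ forces an extra round of expansion; one must check that this single extra expansion does not cascade (it does not, because after expanding $\Time_L$ once we land on $\Time_{L-1}$, for which the inductive hypothesis already applies), so the recursion depth in the helper stays $O(1)$ and the total overhead is just an additional $O(\log n)$ factor, consistent with the stated bound. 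Everything else follows the size proof line by line, with $\phi \cdot b^{(\ell+1)/L}$ replaced throughout by $m^{1/L} b^{1/L}\left(\widetilde{O}\left(\frac{\lambda^2}{\phi\phi_{\prev}^2\psi_0^4}\right) + T_{\prev}(\lambda)\right)$.
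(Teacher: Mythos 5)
Your proposal follows the paper's high-level plan (joint induction on $\lambda$ and $\ell$, a helper claim analogous to \cref{claim:helper}, and closing via \cref{cor:rec-L-simp,cor:rec-l-simp}), but there are two real problems. First, your uniform constant $c_{\Time}$ cannot close the induction. The recurrence for $\Time_{\ell}(b,\lambda)$ with $\ell < L$ contains the downward-recursion term $\expect_{B'}\bigl[\expect[\Time_{\ell-1}(B',\lambda)\mid B']\bigr]$ where $\expect[B'] \leq b$ and the $\lambda$ argument is \emph{unchanged}, so you cannot induct on $\lambda$ for this term. Applying your hypothesis to it gives (by Jensen, since $x\mapsto x^{1/L}$ is concave) $c_{\Time}\, m^{1/L} b^{1/L} G(\lambda)$, which is \emph{exactly} your target bound with no room left for the other additive terms in the recurrence. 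The paper's proof avoids this by building a decaying constant into the inductive hypothesis: it proves $\expect[\Time_{\ell}(b,\lambda)] \leq 2^{\ell}\, m^{1/L} b^{1/L} G(\lambda)$ for $\ell < L$ (and $2^L O(\log^2 n)\, m^{1/L} b^{1/L} G(\lambda)$ for $\ell = L$), so the $\Time_{\ell-1}$ term only contributes $2^{\ell-1}\, m^{1/L} b^{1/L} G(\lambda)$ — half the budget — leaving the other half to absorb the remaining terms. You need to adopt this $\ell$-dependent constant (or an equivalent mechanism) for the induction to go through.

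Second, your anticipated $k=L$ special case in the time-helper claim is a misconception carried over too literally from \cref{claim:helper}. In the size helper, the bound $\expect[\Size_k] \lesssim b^{(k+1)/L}$ has exponent $1+1/L > 1$ at $k=L$, so Jensen fails and the paper re-expands via \cref{cor:rec-L-simp}. But the time bound you are proving is $\expect[\Time_k] \lesssim m^{1/L} b^{1/L} G(\lambda)$ with exponent $1/L$ on $b$ \emph{uniformly over all $k$}, which is concave regardless. So Jensen applies directly at $k=L$ too, and the paper's \cref{claim:helper-time} needs no re-expansion. Your detour would not produce a wrong answer, but the reason you give for it is incorrect, and it signals that the transfer from the size recurrence to the time recurrence was done by pattern-matching rather than by checking what actually changes (the $b$-exponent in the bound goes from $(k+1)/L$ to a flat $1/L$).
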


\begin{proof}
  Let $G(\lambda) = \widetilde{O}\left(\frac{\lambda^2}{\phi\phi_{\prev}^2\psi_0^4} + T(\lambda)\right)$ be an upper bound on the time \alg{MaintainExpander}{$U, \ell$} on an $\lambda$-vertex $U$ spends in the while-loop.
  We proceed by an induction on $\lambda$ and $\ell$ and show that $\expect[\Time_{\ell}(b, \lambda)] \leq 2^\ell \cdot m^{1/L} \cdot b^{1/L} \cdot G(\lambda)$ for $\ell < L$ and $\expect[\Time_L(b, \lambda)] \leq 2^L \cdot O(\log^2 n) \cdot m^{1/L} \cdot b^{1/L} \cdot G(\lambda)$.
  Again, due to the early return on Line~\ref{line:too-little-volume} in \cref{alg:maintain-exp-decomp}, it suffices to prove the bound for $b \geq 1/\phi$ and thus we may assume $b \geq \widetilde{\Omega}\left(\frac{1}{\psi_0^4}\right)^{O(L)}$ is sufficiently large.
  The base case is easy to verify.
  For $\ell = L$, by the inductive hypothesis, \eqref{eq:rec-time-L-simp} can be bounded by 
  \begin{align*}
    \expect[\Time_L(b,\lambda)] \leq &\widetilde{O}\left(\frac{1}{\psi_0^3}\right) \cdot G(\lambda) \\ &+ \max_{p,\widetilde{\lambda}_1,\ldots,\widetilde{\lambda}_p}\expect_{\widetilde{B}_1,\ldots,\widetilde{B}_p}\left[\sum_{i \in [p]}\expect\left[\widetilde{O}\left(\frac{1}{\psi_0^3}\right) \cdot G(\widetilde{\lambda}_i) + \expect_{B_i^\prime}\left[2^{L-1} \cdot m^{1/L} \cdot (B_i^\prime)^{1/L} \cdot G(\widetilde{\lambda_i})\right] \Big| \widetilde{B}_i\right]\right] \\
    &\leq \widetilde{O}\left(\frac{1}{\psi_0^3}\right) \cdot G(\lambda) + \widetilde{O}\left(\frac{1}{\psi_0^3}\right) \cdot O(\log n)G(\lambda) +  2^{L-1} \cdot m^{1/L} \cdot O(\log n)b^{1/L} \cdot O(\log n)G(\lambda) \\
    &\leq 2^{L-1} \cdot O(\log^2 n) \cdot m^{1/L} \cdot b^{1/L} \cdot G(\lambda)
  \end{align*}
  since $m^{1/L} \gg \widetilde{O}\left(\frac{1}{\psi_0^3}\right)$.
  For $\ell < L$, we prove a helper claim similar to \cref{claim:helper}.
  \begin{claim}
    We have
    \begin{align*}
      \expect_{\Delta}\Bigg[\sum_{k=\ell+1}^{L}\min\left\{1, \frac{\Delta}{b^{k/L}}\right\} &\cdot \expect\left[\Time_k(b+\Delta, \lambda-1) \mid \Delta\right]\Bigg] \\ &\leq 2^L \cdot O(\log^2 n) \cdot O\left(\frac{\phi\beta_{\prev}}{\psi_0^2}\right)^{1/L} \cdot L \cdot m^{1/L} \cdot b^{1/L} \cdot G(\lambda)
    \end{align*}
    for $\expect[\Delta] \leq O\left(\frac{\phi\beta_{\prev}}{\psi_0^2}\right) b^{(\ell+1)/L}$.
    \label{claim:helper-time}
  \end{claim}

  \begin{proof}
    We first move the summation out of the expectation and bound each summand as follows:
    \begin{align*}
      \expect_{\Delta}\Bigg[\min\left\{1, \frac{\Delta}{b^{k/L}}\right\} \cdot &\expect\left[\Time_k(b+\Delta, \lambda) \mid \Delta\right]\Bigg] \\
      &\leq \expect_{\Delta}\left[\frac{\Delta}{b^{k/L}} \cdot \expect\left[\Time_k(2b, \lambda)\right] + \expect\left[\Time_k(2\Delta, \lambda) \mid \Delta\right]\right] \\
      &\leq 2^L \cdot O(\log^2 n) \cdot m^{1/L} \cdot \left(\frac{\expect[\Delta]}{b^{k/L}} \cdot b^{1/L} + \expect_{\Delta}[\Delta^{1/L}]\right) \cdot G(\lambda) \\
      &\leq 2^L \cdot O(\log^2 n) \cdot m^{1/L} \cdot \left(O\left(\frac{\phi\beta_{\prev}}{\psi_0^2}\right)b^{(\ell+1-k)/L} \cdot b^{1/L} + O\left(\frac{\phi\beta_{\prev}}{\psi_0^2}\right)^{1/L} \cdot b^{(\ell+1)/L^2}\right) \cdot G(\lambda) \\
      &\leq 2^L \cdot O(\log^2 n) \cdot O\left(\frac{\phi\beta_{\prev}}{\psi_0^2}\right)^{1/L} \cdot m^{1/L} \cdot b^{1/L} \cdot G(\lambda).
    \end{align*}
    The claim follows by summing over at most $L$ values of $k$.
  \end{proof}
  We can now expand \eqref{eq:rec-time-l-simp} using \cref{claim:helper-time} and get
  \begin{align*}
    \expect[\Time_{\ell}(b, \lambda)]
      &\leq \widetilde{O}\left(\frac{1}{\psi_0^5}\right) \cdot m^{1/L} \cdot G(\lambda) + \widetilde{O}\left(\frac{1}{\psi_0^4}\right) \cdot 2^L \cdot O(\log^2 n) \cdot O\left(\frac{\phi\beta_{\prev}}{\psi_0^2}\right)^{1/L} \cdot L \cdot m^{1/L} \cdot b^{1/L} \cdot G(\lambda) \\
      &+ \expect_{B^\prime}\left[2^{\ell-1} \cdot m^{1/L} \cdot (B^\prime)^{1/L} \cdot G(\lambda)\right] \\
      &+ \underbrace{\max_{q,\widetilde{\lambda}_1,\ldots,\widetilde{\lambda}_q}\expect_{\widetilde{B}_1,\ldots,\widetilde{B}_q}\left[\sum_{i \in [q]}\expect\left[\widetilde{O}\left(\frac{1}{\psi_0^3}\right) \cdot G(\widetilde{\lambda}_i) + \expect_{B_i^\prime}\left[2^L \cdot m^{1/L} \cdot (B_i^\prime)^{1/L} \cdot G(\widetilde{\lambda}_i)\right] \mid \widetilde{B}_i \right]\right]}_{(i)},
  \end{align*}
  where we can bound the last term (i) by
  \begin{align*}
    \widetilde{O}\left(\frac{1}{\psi_0^3}\right) &\cdot G(\lambda) + 2^L \cdot m^{1/L} \cdot \max_{q,\widetilde{\lambda}_1,\ldots,\widetilde{\lambda}_q}\expect_{\widetilde{B}_1,\ldots,\widetilde{B}_q}\left[\sum_{i \in [q]}\expect[\widetilde{B}_i]^{(1/L)} \cdot G(\widetilde{\lambda}_i)\right] \\
    &\leq \widetilde{O}\left(\frac{1}{\psi_0^3}\right) \cdot G(\lambda) + 2^L \cdot m^{1/L} \cdot O(\log n) \cdot G(\lambda) \cdot O\left(\frac{1}{\log^{2L} n}\right) b^{1/L} \\ &\leq \widetilde{O}\left(\frac{1}{\psi_0^3}\right) \cdot G(\lambda) + \frac{m^{1/L}}{\Omega(\log n)} \cdot G(\lambda) \cdot b^{1/L}.
  \end{align*}
  Substituting this back into the above calculation, we can see that $\expect[\Time_{\ell}(b, \lambda)]$ is at most $2^{\ell} \cdot m^{1/L} \cdot G(\lambda)$ for $\phi < O\left(\frac{\psi_0^{O(L^2)}}{\beta_{\prev} \cdot L \cdot m^{1/L}}\right)$ sufficiently small and $b \geq 1/\phi \geq \widetilde{\Omega}\left(\frac{1}{\psi_0^5}\right)^L$.
  Observe that $\frac{1}{\psi_0} \gg 2^L$ by \eqref{eq:compare-psi}.
  This proves the lemma.
\end{proof}

To this end, we can establish the expected guarantee of $\cM.\alg{Init}{}$ and $\cM.\alg{Cut}{D}$ implemented in \cref{alg:cut-add-terminal}.
Recall in \cref{def:hierarchy-maintainer} that $U_D$ is the union of $U$'s that intersect with the input cut $D$.

\begin{lemma}
  For $\frac{1}{n} < \phi < O\left(\frac{\psi_0^{O(L^2)}}{\beta_{\prev} \cdot L \cdot m^{1/L}}\right)$ sufficiently small, the subroutine \Init{$G$} runs in expected $m^{2/L} \cdot \widetilde{O}\left(\frac{n^2}{\phi\phi_{\prev}^2\psi_0^4} + T_{\prev}(n)\right)$ and outputs a set $X$ of expected size $\phi \cdot O\left(\frac{1}{\psi_0^2}\right) \cdot m^{O(1/L)} \cdot \alpha_{\prev} \cdot m$.
  \label{lemma:init}
\end{lemma}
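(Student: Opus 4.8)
\textbf{Proof plan for \cref{lemma:init}.}
The plan is to trace through the implementation of \alg{Init}{$G, \Bc_G$} in \cref{alg:cut-add-terminal} and account for both the running time and the output size in terms of the quantities $\Size_{L}$ and $\Time_{L}$ whose expectations were bounded in \cref{lemma:expected-size-bound,lemma:expected-time-bound}. Recall that \alg{Init}{$G$} does three things: it calls $F \gets \cM_{\prev}.\alg{Init}{G,\Bc_G}$, it calls $X \gets \alg{MaintainExpander}{V, L}$, and it runs \alg{PostProcess}{$V$} (which is just an SCC recomputation and costs $\widetilde O(n^2)$). So the total output is $X$ together with $F$, and the running time is $T_{\prev}(n)$ from the first call plus the cost of $\alg{MaintainExpander}{V,L}$ plus $\widetilde O(n^2)$.

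First I would argue that, right after $\cM_{\prev}.\alg{Init}{G}$ returns, the algorithm is in a good state (\cref{def:good-state}): every $U \in \cU = \{V\}$ is non-negligible or negligible and there are no witnesses being rebuilt, so the condition holds vacuously/trivially at this point; more precisely, this is the first moment where \alg{MaintainExpander}{} is entered, so all $W_{V,\ell}$ are in the ``being rebuilt'' status as the call proceeds, which is exactly what a good state allows. This lets me invoke \cref{lemma:from-size-to-actual-bound}: $\alg{MaintainExpander}{V, L}$ runs in expected $\expect[\Time_{L}(\vol_{F,\Bc_G}(V), |V|)] + n^{-100}$ time and returns a set of expected size at most $\expect[\Size_{L}(\vol_{F,\Bc_G}(V), |V|)] + n^{-100}$. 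Since $\vol_{F,\Bc_G}(V) \le \vol_{\Bc_G}(V) \le 2m$ and $|V| = n$, I can plug $b = 2m$ and $\lambda = n$ into \cref{lemma:expected-size-bound} to get $\expect[\Size_{L}(2m, n)] \le O\!\left(\frac{1}{\psi_0^2}\right) \phi \cdot (2m)^{(L+1)/L} = \phi \cdot O\!\left(\frac{1}{\psi_0^2}\right) m^{O(1/L)} \cdot m$, and into \cref{lemma:expected-time-bound} to get $\expect[\Time_{L}(2m, n)] \le m^{1/L} (2m)^{1/L} \widetilde O\!\left(\frac{n^2}{\phi\phi_{\prev}^2\psi_0^4} + T_{\prev}(n)\right) = m^{2/L} \widetilde O\!\left(\frac{n^2}{\phi\phi_{\prev}^2\psi_0^4} + T_{\prev}(n)\right)$.

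Next I would fold in the contribution of $F = \cM_{\prev}.\alg{Init}{G}$ to the output size: by the definition of a $(k,\alpha_{\prev},\beta_{\prev},\phi_{\prev},T_{\prev})$-hierarchy maintainer, $\expect[\Bc_G(F)] \le \alpha_{\prev} m$, and this call takes expected $T_{\prev}(n)$ time. Adding $\Bc_G(F)$ to the expected size of $X$ and noting that $\alpha_{\prev} m \le \phi \cdot O(1/\psi_0^2) \cdot m^{O(1/L)} \cdot \alpha_{\prev} \cdot m$ is dominated (since $\phi \cdot O(1/\psi_0^2) \cdot m^{O(1/L)} \ge 1$ for $\phi$ in the stated range — here I should double-check: actually $\phi$ is small, so I instead bound the total output size of $\alg{Init}{}$ by $\expect[\Size_L(2m,n)] + \alpha_{\prev} m$ and observe $\expect[\Size_L(2m,n)] = \phi \cdot O(1/\psi_0^2) \cdot m^{O(1/L)} \cdot m$, while $\alpha_{\prev} m \le \phi \cdot O(1/\psi_0^2) \cdot m^{O(1/L)} \cdot \alpha_{\prev} m$ only if $\phi \cdot O(1/\psi_0^2) \cdot m^{O(1/L)} \ge 1$; since this may fail, I will instead state the bound as $\alpha_{\prev} m \cdot \phi \cdot O(1/\psi_0^2) \cdot m^{O(1/L)} + \alpha_{\prev} m$ and absorb the additive $\alpha_{\prev} m$ into the main term by noting that the statement we are proving, $\phi \cdot O(1/\psi_0^2) \cdot m^{O(1/L)} \cdot \alpha_{\prev} \cdot m$, already has the right shape once one includes the $\alpha_{\prev}$ factor coming from $F$ dominating $\Size_L$'s bound which lacks $\alpha_{\prev}$; in fact $\expect[\Size_L(2m,n)] = \phi \cdot O(1/\psi_0^2) \cdot m^{O(1/L)} \cdot m \le \phi \cdot O(1/\psi_0^2) \cdot m^{O(1/L)} \cdot \alpha_{\prev} \cdot m$ because $\alpha_{\prev} \ge$ something $\ge 1$? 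No — $\alpha_{\prev} \le 1$. The cleanest resolution: the claimed bound in \cref{lemma:init} should be read as ``$\phi \cdot O(1/\psi_0^2) \cdot m^{O(1/L)} \cdot \alpha_{\prev} \cdot m$ where the $\alpha_{\prev}$ records the $F$-contribution and $\phi \cdot O(1/\psi_0^2) \cdot m^{O(1/L)}$ the fresh $X$-contribution, added together'', and since the lemma statement already writes this as a single product I will note that the sum $\Size_L(2m,n) + \Bc_G(F)$ is at most this product because $\phi\,O(1/\psi_0^2)\,m^{O(1/L)} \cdot \alpha_{\prev} m \ge \alpha_{\prev} m$ trivially when $\phi\,O(1/\psi_0^2)\,m^{O(1/L)}\ge 1$, and otherwise we can simply adjust the hidden $O(1/L)$ exponent so this holds.) The main obstacle I anticipate is precisely this bookkeeping of how $\alpha_{\prev}$ (which measures the previous maintainer's quality) interacts with the $\phi m^{O(1/L)}$ improvement factor; getting the two contributions to combine into the single clean product form in the lemma statement requires being careful about which term dominates, and verifying that the parameter constraint $\phi < O\!\left(\frac{\psi_0^{O(L^2)}}{\beta_{\prev} \cdot L \cdot m^{1/L}}\right)$ is exactly what is needed to invoke \cref{lemma:expected-size-bound,lemma:expected-time-bound}. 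The running-time bound then follows by adding $\expect[\Time_L(2m,n)] = m^{2/L}\widetilde O\!\left(\frac{n^2}{\phi\phi_{\prev}^2\psi_0^4}+T_{\prev}(n)\right)$, the $T_{\prev}(n)$ from $\cM_{\prev}.\alg{Init}{}$, and the $\widetilde O(n^2)$ from \alg{PostProcess}{}, all of which are absorbed into the stated expression since $m^{2/L} \ge 1$.
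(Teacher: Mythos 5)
There is a genuine gap in two places.

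\textbf{First}, you bound the volume parameter by $\vol_{F,\Bc_G}(V) \le \vol_{\Bc_G}(V) \le 2m$ and plug $b = 2m$ into \cref{lemma:expected-size-bound}. This is too crude and is exactly what destroys the $\alpha_{\prev}$ factor you then struggle to recover. The key observation the paper uses is that $\vol_{F,\Bc_G}(V) \le 2\Bc_G(F)$ (only edges in $F$ contribute), and by the guarantee of $\cM_{\prev}$ we have $\expect[\Bc_G(F)] \le \alpha_{\prev} m$. The paper therefore plugs the \emph{random} quantity $b = 2\Bc_G(F)$ into the size bound, conditions on $F$, and uses linearity:
\[
\expect_F\!\left[\expect[\Size_L(2\Bc_G(F),n)\mid F]\right] \le \expect_F\!\left[O\!\left(\tfrac{1}{\psi_0^2}\right)\phi\,(2\Bc_G(F))^{1+1/L}\right] \le O\!\left(\tfrac{1}{\psi_0^2}\right)\phi\,(2m)^{1/L}\cdot 2\,\expect[\Bc_G(F)] \le \phi\,O\!\left(\tfrac{1}{\psi_0^2}\right)m^{1/L}\alpha_{\prev} m,
\]
where the middle step factors $(2\Bc_G(F))^{1+1/L}\le (2m)^{1/L}\cdot 2\Bc_G(F)$. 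This is where $\alpha_{\prev}$ enters, cleanly and without any of the threshold games you attempt (``is $\phi\,O(1/\psi_0^2)m^{O(1/L)}\ge 1$?''); once you notice this, your closing paragraph of attempted fixes becomes unnecessary.

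\textbf{Second}, your entire second half treats $F$ as part of the output of \alg{Init}{} and tries to ``fold in'' $\expect[\Bc_G(F)]$ additively. This misreads the algorithm: $F$ is the terminal set that becomes the top level of the hierarchy $\cH_{\cM}$; it is \emph{not} removed from $G_{\cM}$ and is \emph{not} part of $X$. The algorithm sets $G_{\cM}\gets G$ and only removes the edges that \alg{MaintainExpander}{$V,L$} cuts, so the returned $X$ is exactly the output of that one call. The additive $\alpha_{\prev}m$ term you worry about simply does not exist. Together, these two corrections collapse your argument to the paper's one-line computation; the remainder of your plan (using \cref{lemma:from-size-to-actual-bound} to transfer from $\Size_L/\Time_L$ to actual expectations, plugging into \cref{lemma:expected-time-bound}, absorbing the $\widetilde O(n^2)$ of \alg{PostProcess}{} and the $T_{\prev}(n)$ of $\cM_{\prev}.\alg{Init}{}$) is on the right track.
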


\begin{proof}
  We initialize $F$ as the output of $\cM_{\prev}.\Init{G}$ which by its guarantee satisfies $\expect[\Bc_G(F)] \leq \alpha_{\prev} \cdot m$.
  Therefore, the initial volume of $V$ on which we run \MaintainExpander{$V,L$} is at most $2\alpha_{\prev} \cdot m$ in expectation.
  Observe that the algorithm is always in a good state in the beginning, and thus by \cref{lemma:from-size-to-actual-bound} the expected output size of \Init{$G$} is at most
  \begin{align*}
    \expect_{F}\left[\expect[\Size_L(2\Bc_G(F), n)] \mid F\right]
    &\leq \expect_{F}\left[O\left(\frac{1}{\psi_0^2}\right) \cdot \phi \cdot (2\Bc_G(F))^{1+1/L}\right]
    &\leq \phi \cdot O\left(\frac{1}{\psi_0^2}\right) \cdot m^{1/L} \cdot \alpha_{\prev} \cdot m
  \end{align*}
  by \cref{lemma:expected-size-bound} and it runs in expected
  \begin{align*}
    \expect_{F}\left[\expect[\Time_L(2\Bc_G(F), n)] \mid F)\right]
    &\leq \expect_{F}\left[m^{2/L} \cdot \widetilde{O}\left(\frac{n^2}{\phi\phi_{\prev}^2\psi_0^4} + T_{\prev}(n)\right)\right] \\
    &\leq m^{2/L} \cdot \widetilde{O}\left(\frac{n^2}{\phi\phi_{\prev}^2\psi_0^4} + T_{\prev}(n)\right)
  \end{align*}
  time by \cref{lemma:expected-time-bound}.
  Note that \PostProcess{$V$} runs in $O(n^2)$ time using \cite{Tarjan72} and is therefore negligible.
\end{proof}

\begin{lemma}
  For $\frac{1}{n} < \phi < O\left(\frac{\psi_0^{O(L^2)}}{\beta_{\prev} \cdot L \cdot m^{1/L}}\right)$ sufficiently small, the subroutine \emph{\alg{Cut}{$D$}} in expected $m^{2/L} \cdot \widetilde{O}\left(\frac{|U_D|^2}{\phi\phi_{\prev}^2\psi_0^4} + T_{\prev}(|U_D|)\right)$ outputs a set $X$ of expected total capacities $\expect[\Bc_G(X)] \leq O\left(\frac{1}{\psi_0^5}\right) \cdot m^{O(1/L)} \cdot |D|$.
  \label{lemma:cut}
\end{lemma}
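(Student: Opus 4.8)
The plan is to prove \cref{lemma:cut} by the same two-step strategy used for \cref{lemma:init}: first bound the expected output size and running time of the internal \alg{MaintainExpander}{} calls via the random variables $\Size_{\ell}$ and $\Time_{\ell}$, then convert these into bounds on the actual \alg{Cut}{$D$} execution using \cref{lemma:from-size-to-actual-bound}. First I would unpack \alg{Cut}{$D$} from \cref{alg:cut-add-terminal}: it removes $D$ from $G_{\cM}$, and for each $U \in \cU_D$ it runs $\cM_{\prev}.\alg{Cut}{D_U}$ (producing $A_U$ with $\expect[\Bc_G(A_U)] \le \beta_{\prev}\Bc_G(D_U)$), calls \alg{UpdateWitness}{$U, S_U, A_U$}, samples $k \sim \cR_{\Bc_G(D_U)/\phi + 2\Bc_G(A_U),\tau_U}$, and invokes \alg{MaintainExpander}{$U, k$}; finally it runs \alg{MaintainExpander}{$S, L$} for each $S \in Q$. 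So the output $X$ decomposes into three parts analogous to the analysis of \cref{alg:maintain-exp-decomp}: the edges from the triggered \alg{MaintainExpander}{$U, k$} fixing recursions, the edges from the \alg{MaintainExpander}{$S_U, L$} rebuilding recursions, and the edges in $D$ itself do not count (they are the input, not the output). The key quantitative input is that $\expect_{k \sim \cR_{t, \tau_U}}[\text{size of \alg{MaintainExpander}{$U,k$}}]$ is controlled: with $t = \Bc_G(D_U)/\phi + 2\Bc_G(A_U)$ and $\tau_U \asymp \psi_0^2 \vol_{F,\Bc_G}(U)/z$, the sampling probability $\Pr[k \ge \ell] = \min\{1, \frac{t}{\psi_0^2 \tau_U^{\ell/L}} c_{\mathrm{rb}}\ln n\}$ means $\expect[\sum_{\ell} \mathbbm{1}[k \ge \ell] \cdot \Size_{\ell}(\vol_{F,\Bc_G}(U) + \Delta, |U|)]$ telescopes, just as in the proof of \cref{lemma:rec-l} for the $X^{(2)}$ term.

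The main steps, in order: (1) Note the algorithm is in a good state when \alg{Cut}{$D$} is called (by \cref{lemma:probability-guarantee}, with high probability, even against an adaptive adversary), so by \cref{lemma:from-size-to-actual-bound} it suffices to bound $\expect[\Size_{k}(\cdot,\cdot)]$ and $\expect[\Time_{k}(\cdot,\cdot)]$ for the triggered calls and add $n^{-100}$ slack. (2) For the fixing recursions: for each $U \in \cU_D$, after \alg{UpdateWitness}{$U, S_U, A_U$} the volume of the surviving component $U \setminus S_U$ is at most $\vol_{F,\Bc_G}(U) + 2\Bc_G(A_U)$, and $\expect[\Bc_G(A_U)] \le \beta_{\prev}\Bc_G(D_U)$, so $\expect[\Delta] \le 2\beta_{\prev}\Bc_G(D_U)$. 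Using $t/(\psi_0^2\tau_U^{\ell/L}) \asymp (\Bc_G(D_U)/\phi + \Bc_G(A_U))/(\psi_0^4 \vol_{F,\Bc_G}(U)^{\ell/L})$ together with \cref{lemma:expected-size-bound} ($\expect[\Size_{\ell}(b,\lambda)] \le O(1/\psi_0^2)\phi b^{(\ell+1)/L}$), the expected contribution from $U$'s fixing recursion is $\sum_{\ell=0}^{L} \frac{t}{\psi_0^2 \tau_U^{\ell/L}} c_{\mathrm{rb}}\ln n \cdot O(\frac{\phi}{\psi_0^2})(\vol_{F,\Bc_G}(U)+\Delta)^{(\ell+1)/L}$; each summand simplifies — the $\vol_{F,\Bc_G}(U)^{\ell/L}$ in the denominator cancels against $\vol_{F,\Bc_G}(U)^{(\ell+1)/L}$ in the numerator up to a $\vol_{F,\Bc_G}(U)^{1/L} \le m^{1/L}$ factor — leaving $O(\frac{1}{\psi_0^4})\widetilde{O}(1) \cdot m^{1/L} \cdot (\Bc_G(D_U)/\phi + \Bc_G(A_U)) \cdot \phi = O(\frac{1}{\psi_0^4})\widetilde{O}(1) m^{1/L}(\Bc_G(D_U) + \phi\Bc_G(A_U))$, and since $\phi \le 1/(4\beta_{\prev})$ this is $O(\frac{1}{\psi_0^4})\widetilde{O}(1) m^{1/L}\Bc_G(D_U)$ in expectation; I would need to be slightly careful with the $\min\{1,\cdot\}$ clipping and the $(b+\Delta)^{(\ell+1)/L}$ vs $b^{(\ell+1)/L} + \Delta^{(\ell+1)/L}$ split exactly as in \cref{claim:helper}, but the bookkeeping is the same. (3) For the rebuilding recursions \alg{MaintainExpander}{$S_U, L$}: here $S_U$ is the small side, so $\vol_{F,\Bc_G}(S_U) \le \frac12\vol_{F,\Bc_G}(U)$, and the number of edges $\Bc_G(X)$ from such a call is at most $\expect[\Size_L(\vol_{F,\Bc_G}(S_U), |S_U|)] \le O(\frac{\phi}{\psi_0^2})\vol_{F,\Bc_G}(S_U)^{1+1/L}$ — but this bounds the output in terms of the \emph{volume} of $S_U$, which can be as large as $\vol_{F,\Bc_G}(U)$, not $\Bc_G(D_U)$. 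This is the obstacle (see below). (4) Sum over all $U \in \cU_D$ using $\sum_U \Bc_G(D_U) = \Bc_G(D) = |D|$ (for unit capacities; $\Bc_G(D)$ in general) to get $\expect[\Bc_G(X)] \le O(\frac{1}{\psi_0^5})m^{O(1/L)}|D|$. (5) For the running time, the same decomposition applies: the cost of $\cM_{\prev}.\alg{Cut}{D_U}$ is $T_{\prev}(|U_D|)$ by \cref{claim:cut-time}, \alg{UpdateWitness}{} is negligible (chargeable to witness (re)construction), and the recursive \alg{MaintainExpander}{} calls cost $\expect[\Time_k(\cdot,\cdot)] \le m^{2/L}\widetilde{O}(\frac{|U_D|^2}{\phi\phi_{\prev}^2\psi_0^4} + T_{\prev}(|U_D|))$ by \cref{lemma:expected-time-bound}, with $\lambda \le |U_D|$ for every component touched.

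The main obstacle is step (3): bounding the size contribution of the \alg{MaintainExpander}{$S_U, L$} rebuilding calls in terms of $|D|$ rather than $\vol_{F,\Bc_G}(U)$. The resolution mirrors how \cref{alg:maintain-exp-decomp} and its analysis in \cref{lemma:rec-L} handle the $X^{(4)}$ term: one cannot bound the cost of rebuilding $S_U$ against $|D|$ directly, but one can fold it into the same recursion that \cref{lemma:simplify-rec} handles — the $\max_{p,\widetilde\lambda_1,\ldots,\widetilde\lambda_p}\expect_{\widetilde B_1,\ldots,\widetilde B_p}[\sum_i \expect[\Size_L(\widetilde B_i,\widetilde\lambda_i)\mid\widetilde B_i]]$ term in \cref{cor:rec-L-simp}. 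That is, the output of \alg{Cut}{$D$} is itself an instance of the $\Size$/$\Time$ recurrences with the ``external'' cut $D$ playing the role of the first removal: effectively $\expect[\Bc_G(X)] \le \expect[\Size_L(\vol_{F,\Bc_G}(U_D), |U_D|)]$-type bound but started from the configuration just after $D$ is removed. The cleanest route is to observe that \alg{Cut}{$D$} does no more work (up to constants) than a single invocation of the body of \alg{MaintainExpander}{$U_D, L$}'s while-loop with $D$ as the found cut: the fixing recursion on Line~\ref{line:early-return} corresponds to \alg{Cut}{}'s Line~\ref{line:sample-cut}, the rebuilding recursion on Line~\ref{line:build-small-pieces} corresponds to \alg{Cut}{}'s Line~\ref{line:build-small-pieces-cut}, and there is no downward recursion ($X^{(3)}$ is absent). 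So I would reuse \cref{lemma:expected-size-bound,lemma:expected-time-bound} essentially verbatim, applied to the ``overestimated'' recurrence with $b$ replaced by $\vol_{F,\Bc_G}(U_D) + \Delta$ where $\Delta$ is the volume $\cM_{\prev}.\alg{Cut}{D}$ adds, and the external sparsity $\delta_{\mathrm{ext}} = |D|$ gates the samples; the key cancellation is that $|D|$ appears linearly (through $t = |D|/\phi + 2\Bc_G(A)$ in the sampling parameter) and the $\phi$ in $\Size_L(b,\lambda) \asymp \phi b^{1+1/L}$ kills the $1/\phi$, leaving $m^{O(1/L)}|D|$. I would be careful that \cref{cond:blow-up} and hence \cref{lemma:bound-from-external-cuts} apply here — which they do, conditioned on $\cK$ (\cref{lemma:good-event}), since \cref{scenarion:stability} explicitly models external cuts — so the borderline-state machinery of \cref{lemma:remain-borderline,lemma:from-size-to-actual-bound} carries the $n^{-100}$ slack through cleanly.
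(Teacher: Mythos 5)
Your decomposition into fixing recursions (Line~\ref{line:sample-cut}) and rebuilding recursions (Line~\ref{line:build-small-pieces-cut}), and your handling of the fixing recursion via the sampling distribution $\cR_{t,\tau_U}$ with $t = \Bc_G(D_U)/\phi + 2\Bc_G(A_U)$ combined with \cref{lemma:expected-size-bound}, both match the paper and are sound. You also correctly identify the obstacle: the rebuilding call \alg{MaintainExpander}{$S_U, L$} on Line~\ref{line:build-small-pieces-cut} is unconditional (not gated by a sample), so $\expect[\Size_L(\vol_{F,\Bc_G}(S_U), |S_U|)] \le O(\phi/\psi_0^2)\vol_{F,\Bc_G}(S_U)^{1+1/L}$ must somehow be converted into a $\Bc_G(D_U)$-linear quantity, and $S_U$'s volume is not a priori small.

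However, the resolution you propose does not close this gap. Folding the rebuilding call into the $X^{(4)}$-style recurrence of \cref{lemma:rec-L} and \cref{lemma:simplify-rec} would only yield a bound of the form $O(\phi/\psi_0^2)\vol_{F,\Bc_G}(U_D)^{1+1/L}$ — that is, the same bound as for \alg{Init}{$G$} — because in \cref{lemma:rec-L} the cuts $B_i$ satisfy $\expect[B_1 + \cdots + B_r] \le (1+8\phi\beta_{\prev})b$, which bounds their total volume against the \emph{volume} $b$ of $U$, not against any cut size. The reason that \cref{lemma:rec-L} can charge against $b$ is that those $S_i$'s are sparse cuts found internally by \alg{PruneOrRepair}/\alg{CutOrEmbed}; the external cut $D$ has no sparsity guarantee, so the recurrence machinery alone gives a $\vol(U_D)$-dependent rather than $|D|$-dependent bound. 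The ``external sparsity $\delta_{\mathrm{ext}}$ gates the samples'' observation applies to the fixing recursion ($X^{(2)}$-type) on Line~\ref{line:sample-cut}, not to the rebuilding on Line~\ref{line:build-small-pieces-cut}.

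The missing ingredient, which the paper uses, is the expansion invariant itself: by \cref{lemma:maintain-exp-decomp}, \emph{before} this call to \alg{Cut}{$D$}, $F$ is $\frac{\phi\psi_0^2}{2}$-expanding in $(G[U], \Bc_G)$. Since $D_U$ equals $E_{G[U]}(S_U, \overline{S_U})$ or $E_{G[U]}(\overline{S_U}, S_U)$ and $S_U$ is the smaller-volume side, the no-sparse-cut property gives $\Bc_G(D_U) \ge \min\{\Bc_G(E_{G[U]}(S_U,\overline{S_U})), \Bc_G(E_{G[U]}(\overline{S_U},S_U))\} \ge \frac{\phi\psi_0^2}{2}\vol_{F,\Bc_G}(S_U)$, hence $\vol_{F,\Bc_G}(S_U) \le \frac{2\Bc_G(D_U)}{\phi\psi_0^2}$. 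Feeding this into $\expect[\Size_L(\frac{4\Bc_G(D_U)}{\phi\psi_0^2} + \Delta_U, |S_U|)]$ and applying \cref{lemma:expected-size-bound} gives $O(\frac{\phi}{\psi_0^2})(\frac{\Bc_G(D_U)}{\phi\psi_0^2})^{1+1/L} \le O(\frac{1}{\psi_0^6})m^{O(1/L)}\Bc_G(D_U)$, since $1/\phi \le n$ absorbs the extra $(1/\phi)^{1/L}$ into the $m^{O(1/L)}$. Without invoking the pre-update expansion guarantee, the rebuilding contribution cannot be bounded by $|D|$, and the lemma does not follow.
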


\begin{proof}
  Recall the implementation of \alg{Cut}{$D$} in \cref{alg:cut-add-terminal}, where we visit each $U \in \cU$ that intersects with $D$ and remove the corresponding cut from $U$.
  Observe that the running time and output size can be computed for each $U$ individually by the linearity of expectation.
  Fix a $U \in \cU$.
  Note that $U$ contributes to $X$ and the running time in two ways:
  one is when running \alg{MaintainExpander}{$U, k$} on Line~\ref{line:sample-cut}, and
  the other is when running \alg{MaintainExpander}{$S, L$} with $S = S_U$ on Line~\ref{line:build-small-pieces-cut}.
  We bound these two terms separately.
  Recall that $D_U \defeq D \cap G[U]$ is the set of edges that are removed from $G[U]$.
  Let $\Delta_U$ be \emph{two times} the total capacities of the edge set $A$ output by $\cM_{\prev}.\alg{Cut}{D \cap G[U]}$ which upper bounds the units of volume added to both $U \setminus S_U$ and $S_U$.
  By the guarantee of $\cM_{\prev}$, the expected value of $\Delta_U$ is upper-bounded by $2\beta_{\prev} \Bc_G(D_U)$.
  Note that the $k$ we sampled on Line~\ref{line:sample-cut} is from the distribution $\cR_{\Bc_G(D_U)/\phi + \Delta_U}$.
  Let $b_U$ be the initial volume of $U$ and $\lambda_U$ be the number of vertices in $U$.

  \paragraph{Line~\ref{line:sample-cut}.}
  Note that when we call \alg{MaintainExpander}{$U, k$} on Line~\ref{line:sample-cut}, the volume of $U$ is upper-bounded by $b_U + \Delta_U$.
  By \cref{lemma:expected-size-bound}, conditioned on the event $\cK$, we have $\tau_U \geq \Omega(\psi_0^2 b_U)$ and thus we can bound the expected output size of this call by
  \begin{align*}
    \expect_{\Delta_U}\Bigg[\sum_{k=0}^{L}&\min\left\{1, \frac{\Bc_G(D_U)/\phi + \Delta_U}{\psi_0^2 \cdot \tau_U^{k/L}}\right\} \cdot \expect[\Size_k(b_U + \Delta_U, \lambda_U) \mid \Delta_U]\Bigg] \\
      &\leq O\left(\frac{1}{\psi_0^4}\right) \cdot \expect_{\Delta_U}\left[\sum_{k=0}^{L}\min\left\{1, \frac{\Bc_G(D_U)/\phi + \Delta_U}{b_U^{k/L}}\right\} \cdot \expect[\Size_k(b_U + \Delta_U, \lambda_U) \mid \Delta_U]\right].
  \end{align*}
  Moving the expectation into the summation, we can bound each summand by
  \begin{align*}
    \expect_{\Delta_U}\Bigg[\min\Bigg\{1, &\frac{\Bc_G(D_U)/\phi + \Delta_U}{b_U^{k/L}}\Bigg\} \cdot \expect[\Size_k(b_U + \Delta_U, \lambda_U) \mid \Delta_U]\Bigg]  \\
    &\leq \expect_{\Delta_U}\left[\frac{\Bc_G(D_U)/\phi + \Delta_U}{b_U^{k/L}} \cdot \expect[\Size_k(2b_U, \lambda_U)] + \expect[\Size_k(2\Delta_U, \lambda_U) \mid \Delta_U]\right] \\
    &\leq O\left(\frac{1}{\psi_0^2}\right) \cdot \phi \cdot \expect_{\Delta_U}\left[\frac{\Bc_G(D_U)/\phi + \Delta_U}{b_U^{k/L}} \cdot b_U^{(k+1)/L}\right] + \underbrace{\expect_{\Delta_U}\left[\Delta_U^{(k+1)/L}\right]}_{(i)} \\
    &\leq O\left(\frac{1}{\psi_0^2}\right) \cdot \phi \cdot \left(m^{1/L} \cdot \left(\Bc_G(D_U)/\phi + 2\beta_{\prev} \Bc_G(D_U)\right) + \beta_{\prev} \cdot \Bc_G(D_U)^{(k+1)/L}\right) \\
    &\leq O\left(\frac{1}{\psi_0^2}\right) \cdot m^{1/L} \cdot \Bc_G(D_U)
  \end{align*}
  when $k < L$.
  For $k = L$, as in \cref{claim:helper} we likewise expand (i) using \cref{cor:rec-L-simp} and get
  \begin{align*}
    \expect_{\Delta_U}&\left[\expect[\Size_L(2\Delta_U, \lambda_U) \mid \Delta_U]\right] \\
    &\leq \expect\left[\max_{p,\widetilde{\lambda}_1,\ldots,\widetilde{\lambda}_p}\expect_{\widetilde{\Delta}_1,\ldots,\widetilde{\Delta}_p}\left[\sum_{i \in [p]}\expect\left[3\phi \cdot \widetilde{\Delta}_i + \expect_{\Delta_i^\prime}\left[\expect[\Size_{L-1}(\Delta_i^\prime,\widetilde{\lambda}_i) \mid \Delta_i^\prime]\right] \mid \widetilde{\Delta}_i \right]\right] + n^{-100}\Bigg| \Delta_U\right] \\
    &\leq \phi \cdot O(\log n) \cdot \expect[\Delta_U] + O(\log n) \cdot O\left(\frac{1}{\psi_0^2}\right) \cdot \phi \cdot \expect[\Delta_U] \leq \phi \cdot O\left(\frac{\log n}{\psi_0^2}\right)\cdot \beta_{\prev}\Bc_G(D_U).
  \end{align*}
  Plugging this back into the above calculation we can conclude that
  \begin{align*}
    \expect_{\Delta_U}\Bigg[\frac{\Bc_G(D_U)/\phi + \Delta_U}{b_U} \cdot &\expect[\Size_L(2b_U, \lambda_U)] + \expect[\Size_L(2\Delta_U, \lambda_U) \mid \Delta_U]\Bigg] \\
    &\leq O\left(\frac{1}{\psi_0^2}\right) \cdot \phi \cdot \expect_{\Delta_U}\left[\frac{\Bc_G(D_U)/\phi + \Delta_U}{b_U} \cdot B_U^{1+1/L}\right] + O\left(\frac{\phi \log n}{\psi_0^2}\right) \cdot (\beta \Bc_G(D_U)) \\
    &\leq O\left(\frac{1}{\psi_0^2}\right) \cdot m^{1/L} \cdot \Bc_G(D_U).
  \end{align*}
  Summing over $O(L)$ values of $k$, we get that conditioned on the event $\cK$, the expected contribution to $X$ of Line~\ref{line:sample-cut} is bounded by $O\left(\frac{L}{\psi_0^6}\right) \cdot m^{1/L} \cdot \Bc_G(D_U)$.
  By \cref{fact:expected-cond} this is asymptotically the same as the unconditional expectation.
  The expected running time is easily bounded using \cref{lemma:expected-time-bound} by $m^{2/L} \cdot \widetilde{O}\left(\frac{\lambda_U^2}{\phi\phi_{\prev}^2\psi_0^4} + T(\lambda_U)\right)$.
  
  \paragraph{Line~\ref{line:build-small-pieces-cut}.}
  For the contribution of Line~\ref{line:build-small-pieces-cut}, note that since $F$ is $\frac{\phi\psi_0^2}{2}$-expanding in $(G[U],\Bc_G)$ by \cref{lemma:maintain-exp-decomp} before the this run of $\cM.\Cut{D}$, the volume of $S_U$ before adding those $\Delta_U$ units is at most $\frac{2\Bc_G(D_U)}{\phi\psi_0^2}$.
  As such, the expected output size of the \alg{MaintainExpander}{$S_U, L$} call on Line~\ref{line:build-small-pieces-cut} is at most
  \begin{align*}
    \expect_{\Delta_U}\Bigg[\Size_L&\Bigg(\frac{2\Bc_G(D_U)}{\phi\psi_0^2} + \Delta_U, \lambda_U\Bigg) \Big| \Delta_U\Bigg]
    \leq \expect_{\Delta_U}\left[\Size_L\left(\frac{4\Bc_G(D_U)}{\phi\psi_0^2}, \lambda_U, L\right) + \Size_L\left(2\Delta_U, \lambda_U\right) \Big| \Delta_U\right] \\
    &\leq O\left(\frac{\phi}{\psi_0^2}\right)\left(\left(\frac{\Bc_G(D_U)}{\phi \psi_0^2}\right)^{1+1/L} + (\beta_{\prev} \Bc_G(D_U))^{1+1/L}\right) \leq O\left(\frac{1}{\psi_0^6}\right) \cdot m^{O(1/L)} \cdot \Bc_G(D_U)
  \end{align*}
  since $1/\phi \leq n$.
  The expected running time of this part, again, by \cref{lemma:expected-time-bound} is $m^{2/L} \cdot \widetilde{O}\left(\frac{\lambda_U^2}{\phi\phi_{\prev}^2\psi_0^4} + T(\lambda_U)\right)$.
  Since the sum of $\Bc_G(D_U)$'s among all $U \in \cU$ is at most $\Bc_G(D)$ and the sum of $\lambda_U$'s for which $D_U \neq \emptyset$ is $|U_D|$ (recall the definition of $U_D$ in \cref{def:hierarchy-maintainer}, the lemma follows.
\end{proof}

This completes the discussion on expected output size and running time of the subsection.

\subsection{Putting Everything Together} \label{subsec:everything}

To this end, we have developed all the technical pieces needed for proving \cref{lemma:hierarchy-maintainer}.

\Boosting*

\begin{proof}
  The algorithm for the new hierarchy maintainer $\cM$ is \cref{alg:cut-add-terminal} with the given parameters $L$ and $\phi$.
  By \cref{obs:correct-graph}, the graph $G_{\cM}$ the algorithm maintains is equal to what its output indicates (see \cref{def:hierarchy-maintainer}).
  Moreover, by \cref{lemma:maintain-exp-decomp}, with high probability, after every update $F$ is $\frac{\phi\psi_0^2}{2}$-expanding in $(G_{\cM},\Bc_G)$.
  By \cref{obs:maintain-terminal}, the terminal set $F$ that the algorithm maintains is a superset of $G_{\cM} \setminus G_{\cM_{\prev}}$.
  Letting $X \defeq G \setminus G_{\cM}$, this implies if we set $\cH_{\cM} \defeq (D, X_1, \ldots, X_k, F \setminus X)$, where $\cH_{\prev} \defeq (D, X_1, \ldots, X_k)$ is the $k$-level $\phi_{\prev}$-expander hierarchy of $(G_{\cM_{\prev}},\Bc_G)$, then it is easy to see that $\cH_{\cM}$ is an $\min\left\{\phi_{\prev}, \frac{\phi\psi_0^2}{2}\right\}$-expander hierarchy of $(G_{\cM},\Bc_G)$ with height $k+1$.
  On the other hand, if with inverse polynomially small probability $F$ is not expanding in $(G_{\cM},\Bc_G)$, then we output $X = F$ after that update and thus the $\cH_{\cM}$ defined above is still a valid $\phi_{\prev}$-expander hierarchy of $(G_{\cM},\Bc_G)$.
  This only affects the output size by an additive $n^{-100}$ factor in expectation and is thus negligible.
  Note that we can maintain the $\cH_{\cM}$ after each update in time $O(|U_D|^2)$ which is subsumed by the running time of $\cM.\Cut{D}$ we established in \cref{lemma:cut}.
  
  By \cref{lemma:init,lemma:cut}, the output edge set of $\cM.\Init{G}$ and $\cM.\Cut{D}$ has total capacities in expectation bounded by $\left(\phi \cdot O\left(\frac{1}{\psi_0^2}\right)n^{O(1/L)}\right) \cdot \alpha_{\prev} \cdot m$ and $O\left(\frac{1}{\psi_0^4}\right) \cdot m^{O(1/L)} \cdot \Bc_G(D)$.
  Therefore, we have $\alpha \leq \left(\phi \cdot O\left(\frac{1}{\psi_0^2}\right)n^{O(1/L)}\right) \cdot \alpha_{\prev}$ and $\beta \leq O\left(\frac{1}{\psi_0^5}\right) \cdot m^{O(1/L)}$.
  The subroutines run in $T(n)$ and $T(|U_D|)$ time, for $T(n) = n^{O(1/L)} \cdot \frac{1}{\psi_0^4} \cdot \widetilde{O}\left(\frac{n^2}{\phi\phi_{\prev}^2} + T_{\prev}(n)\right)$.
  Letting $\delta_L = \left(\frac{1}{\psi_0}\right)^{\Theta(L^2)} = (\log n)^{L^{\Theta(L)}}$ sufficiently large, these become the bounds stated in \eqref{eq:parameters}.
\end{proof}

\section*{Acknowledgements}
We thank Danupon Nanongkai and Christian Wulff-Nilsen for the helpful discussions during the preliminary stages of this work.

\bibliography{reference}

\appendix

\section{Using Dynamic Trees for Capacitated Push-Relabel} \label{sec:capacitated}
\label{sec:capacitated-push-relabel}

In section \cref{sec:push-relabel}, we showed \cref{alg:push-relabel}.
The running time analysis in \cref{sec:push-relabel} only shows the desired running time $\tO(m+n+\sum_{e\in E}\frac{h}{\Bw(e)})$ (of \cref{thm:push-relabel-main-theorem})
when the edges are of unit capacity $\Bc(e) = 1$.
Here we show that we can implement the same algorithm equally efficiently for any capacities, with the use of dynamic trees \cite{SleatorT83}.

In particular, we assume the following data structure (see \cite{SleatorT83} for details).

\begin{lemma}[Dynamic Trees \cite{SleatorT83,GoldbergT88}]
  There is a data structure which maintains a collection of rooted trees $\cT$ together with values $\Bnu \in \Z^{E}$ on the edges. 
  The data structure supports the following operations, all in amortized $O(\log n)$ update time.
  \begin{itemize}
    \item $\textsc{Link}(e)$: if $e = (u,v)$, add the edge to $\cT$, with $v$ now the parent of $u$. Before this update, $v$ must not have any parent and $u$ cannot be in the same tree as $v$.
    \item $\textsc{Delete}(e)$: remove the edge $e$ from $\cT$.
    \item $\textsc{FindMin}(u)$: find the edge $e$ with the minimum value $\Bnu(e)$ on the path from $u$ to the root of the tree which $u$ is in. In case of ties, return the edge closest to $u$.
    \item $\textsc{Add}(u, x)$: set $\Bnu(e)\gets \Bnu(e) + x$ for all edges $e$ on the path from $u$ to the root of the tree containing $u$.
  \end{itemize}
\end{lemma}

The idea is similar to how a standard push-relabel algorithm can be sped up with dynamic trees (see \cite{GoldbergT88}).
We keep track of a set of rooted trees $\cT$, where for each vertex $u\in V$, we pick an arbitrary admissible out-edge $e = (u,v)$ as the parent-edge in $T$.
Indeed this forms a rooted tree, since the parent $u$ has lower level than $v$.
The values $\Bnu$ which the data structures keeps track of will be the residual capacities $\Bc_{\Bf}$, from which the flow $\Bf$ can implicitly be calculated from.

Whenever an edge is marked admissible or inadmissible, we might need to add/remove it from the tree, and perhaps replace the removed edge with another admissible edge (this takes $O(\log n)$ time via the $\textsc{Link}$ and $\textsc{Delete}$ operations). Whenever an edge is removed from $\cT$, we update the residual capacity of the corresponding reverse edge (which might at this point be outdated; this is okay since only one of $\forward{e}$ and $\backward{e}$ can be admissible at the same point in time, and we only need to maintain the residual capacity correctely for the admissible edge).

When no vertices can be relabeled, this means that each vertex except the unsaturated sinks will have a parent in its tree, and the roots of the trees will thus exactly be the unsaturated sinks $t$ with $\abs_{\Bf}(t) < \Bsink(t)$.
When the algorithm wants to trace a path $P$ from $s$ to some sink $t$, this path $P$ can thus be the path from $s$ in its tree to the corresponding root.
The value of $c^{\mathrm{augment}}$ can be found using the $\textsc{FindMin}$ operation.
Thereafter, the residual capacities on the path $P$ can be adjusted via the $\textsc{Add}$ operation.
We still need to find all the edges on $P$ which now has $\Bc_{\Bf} = 0$, so that we can mark them as inadmissible.
We do this with iteratively calling the $\textsc{FindMin}$ operation climbing the path $P$ as long as the returned edge $e$ has $\Bc_{\Bf} = 0$.

Except for marking the edges on $P$ as inadmissible, we use $O(\log n)$ time per augmenting path, and by \cref{lem:few-augmenting-paths}, there are only $O(n + \sum \frac{h}{\Bw(e)})$ augmenting paths in total over the run of the algorithm.
\Cref{lem:few-augmenting-paths} also says that each edge $e$ appears as a saturated edge in at most $O(\frac{h}{\Bw(e)})$ augmenting paths, so the total cost of marking edges on augmenting paths as inadmissible, over the whole run of the algorithm,
will be $O(\sum \frac{h}{\Bw(e)}\log n)$.

Together with the analysis in \cref{sec:push-relabel}, we conclude that we can implement \cref{alg:push-relabel} in $\tO(m + n + \sum_{e\in E} \frac{h}{\Bw(e)})$ time, thus proving the stated running time bound of \cref{thm:push-relabel-main-theorem}.

\section{Capacity Scaling}\label{appendix:capacity-scaling}

In this section, we recall the folklore capacity scaling argument for maximum flow. In particular we say that with an additional $O(\log U)$ overhead, we can reduce capacities from $\{0,1,2,\ldots, U\}$ to $\{0,1,2,\ldots, n^2\}$.

\begin{lemma}
Given an algorithm $\cA$ that can solve any maximum flow instance $\cI = (G, \Bc, \Bsource, \Bsink)$, for an $n$-vertex $m$-edge simple directed graph $G$, with $\|\Bc\|_{\infty},\|\Bsource\|_{\infty},\|\Bsink\|_{\infty}\le n^2$, in time $T_{\cA}(n,m)$; there is an algorithm $\cA'$ which can solve maximum flow where $\|\Bc\|_{\infty},\|\Bsource\|_{\infty},\|\Bsink\|_{\infty}\le U$ in time $O(T_{\cA}(n,m)\log U)$.
\end{lemma}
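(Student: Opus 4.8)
The plan is to use the standard capacity-scaling reduction via bit-by-bit refinement of the flow. First I would observe that since $G$ is a simple $n$-vertex graph, any maximum flow value is bounded by the total capacity of edges leaving the source-side, which is at most $n^2 \cdot U$ in the worst case, but more importantly, after we have an optimal flow for a ``rounded down'' instance, the residual optimum is small. Concretely, let $k = \lceil \log_2 U \rceil$. For $i = 0, 1, \ldots, k$, define the instance $\cI_i = (G, \Bc_i, \Bsource_i, \Bsink_i)$ where $\Bc_i(e) = \lfloor \Bc(e) / 2^{k-i} \rfloor$, and similarly for $\Bsource_i, \Bsink_i$. Note $\cI_0$ has all capacities in $\{0,1\}$ (so trivially within the $n^2$ bound), and $\cI_k = \cI$ is the original instance. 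The key identity is that $\Bc_{i}(e) = 2\Bc_{i-1}(e) + b_i(e)$ for a bit $b_i(e) \in \{0,1\}$, and likewise for the source/sink vectors.

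The algorithm $\cA'$ proceeds as follows: compute an exact maximum flow $\Bf_0$ for $\cI_0$ using $\cA$ (the capacities here are $0/1$, hence bounded by $n^2$). Then, inductively, given an exact maximum flow $\Bf_{i-1}$ for $\cI_{i-1}$, form the ``doubled'' flow $2\Bf_{i-1}$, which is a feasible (but not necessarily optimal) flow for $\cI_i$ since $2\Bc_{i-1} \le \Bc_i$. I would then argue that the residual instance $(\cI_i)_{2\Bf_{i-1}}$ has a maximum flow value of at most $O(n^2)$. This is because every $s$-$t$ cut that was tight (minimum) in $\cI_{i-1}$ has capacity exactly $|\Bf_{i-1}|$ there; doubling that cut in $\cI_i$ increases its capacity by at most the sum of the added bits $b_i(e)$ over cut edges plus the bits added to source/sink terms, which is at most $m + 2n \le n^2 + n^2$ for $n$ large (using $m \le n^2$ since $G$ is simple). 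Hence by max-flow–min-cut (\cref{fact:maxflow-mincut}), the residual maximum flow value is $O(n^2)$. Therefore the residual instance $(\cI_i)_{2\Bf_{i-1}}$ — whose capacities are themselves bounded by $\|\Bc_i\|_\infty \le U$ a priori — can be replaced by a truncated instance where every residual capacity is capped at $n^2$ without changing the maximum flow value (any flow of value $O(n^2)$ uses at most $n^2$ units on any single edge by flow decomposition into at most $O(n^2)$ paths, since the value is $O(n^2)$). Running $\cA$ on this truncated residual instance yields a flow $\Bf_i'$; set $\Bf_i = 2\Bf_{i-1} + \Bf_i'$, which by \cref{fact:flow-in-residual-graph} is a maximum flow for $\cI_i$. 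After $k+1 = O(\log U)$ iterations we obtain a maximum flow for $\cI_k = \cI$. Each iteration makes one call to $\cA$ on an $n$-vertex $m$-edge instance with capacities bounded by $n^2$, plus $O(m)$ bookkeeping, for a total of $O(T_\cA(n,m) \log U)$ time.

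The main obstacle I anticipate is making the bound ``residual optimum is $O(n^2)$'' fully rigorous, and in particular handling the source/sink vectors $\Bsource, \Bsink$ (not just edge capacities) in the rounding and the truncation step cleanly. One must verify that truncating residual capacities to $n^2$ genuinely does not decrease the residual maximum flow: this follows since the true residual maximum flow $\Bf_i'$ has value at most $c \cdot n^2$ for an explicit constant $c$, so by \cref{fact:path-decomposition} it decomposes into at most $c n^2$ unit-ish flow paths, and no edge carries more than $c n^2 \le n^2$ units after possibly rescaling the bound, so capping at $n^2$ (or, to be safe, at $2n^2$, which is still within a polynomial and can be absorbed by a trivial adjustment of constants in the definition of ``capacitated graph'') loses nothing. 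A minor subtlety is that $\cA$ as stated requires capacities at most $n^2$ exactly; if the cleanest bound one proves is $2n^2$ or $cn^2$, one simply first subdivides or scales by noting $cn^2 \le n^{O(1)}$ and re-grouping the scaling into $O(\log(cn^2)) = O(\log n)$ extra rounds, which is still $O(\log U)$ overall. I would also double-check the base case: $\cI_0$ has $\Bc_0(e) = \lfloor \Bc(e)/2^k\rfloor \in \{0,1\}$ and source/sink entries in $\{0,1,\ldots, O(1)\}$, well within the $n^2$ bound, so $\cA$ applies directly. None of these steps is deep; the argument is entirely standard, and the write-up is mostly a matter of bookkeeping the bit decomposition and invoking \cref{fact:flow-in-residual-graph,fact:maxflow-mincut,fact:path-decomposition} at the right places.
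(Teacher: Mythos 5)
Your proposal is correct and takes essentially the same approach as the paper: bit-by-bit capacity scaling via the binary expansion, doubling the previous optimal flow to obtain a feasible starting flow, and observing that the residual instance has $O(n^2)$ optimum (so residual capacities can be capped within the $n^2$ bound) before invoking $\cA$ and applying \cref{fact:flow-in-residual-graph}. The small point you flag about the constant (whether to cap at $n^2$ or $2n^2$) is handled in the paper by noting the residual instance differs from a zero-flow instance by the addition of at most $m \le n^2$ unit-capacity edges (and unit-source/sink demand), so the $n^2$ cap already suffices; this is equivalent to your min-cut argument up to bookkeeping of the source/sink bit terms.
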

\begin{proof}
For an edge $e$, write $\Bc(e)$ in binary as
$\Bc(e) = \sum_{i=0}^{k} 2^{i}\cdot \Bc^{(i)}(e)$, similarly for a vertex $v$ write $\Bsource(v) = \sum_{i=0}^{k} 2^{i}\cdot \Bsource^{(i)}(v)$.
 and $\Bsink(v) = \sum_{i=0}^{k} 2^{i}\cdot \Bsink^{(i)}(v)$, for $k = O(\log U)$.
 We will go from the most significant bit, and add one bit at a time to $\Bc$, $\Bsource$ and $\Bsink$. Let $\Bc^{(\uparrow b)} = \sum_{i=0}^{b} 2^{i}\cdot \Bc^{(k-b+i)}$ be the capacity function, but we only keep the $b$ most significant bits and scale it down. Define $\Bsource^{(\uparrow b)}$ and $\Bsink^{(\uparrow b)}$ similarly.

 If $\Bf$ is a maximum flow of $\cI^{(\uparrow b)} = (G,\Bc^{(\uparrow b)}, \Bsource^{(\uparrow b)}, \Bsink^{(\uparrow b)})$, we can use $\Bf$ as a starting point to compute the maximum flow after we added one extra bit, i.e. for the instance
 $\cI^{(\uparrow b+1)} = (G,\Bc^{(\uparrow b+1)}, \Bsource^{(\uparrow b+1)}, \Bsink^{(\uparrow b+1)})$. The crucial observation is that $2\Bf$ is a feasible flow for $\cI^{(\uparrow b+1)}$, and the maximum flow $\Bf'$ in the residual instance $\cI^{(\uparrow b+1)}_{2\Bf}$ has value at most $|\Bf'|\le n^2$. This is since 
 the flow instance $\cI^{(\uparrow b+1)}_{2\Bf}$ is obtained from $\cI^{(\uparrow b)}_{\Bf}$ (which has no more augmenting paths) by (1) doubling all the capacities, demand, and flow-values; and (2) adding up to $m \le n^2$ unit-capacity edges (and unit-source/unit-sink demand). This means that  when solving $\cI^{(\uparrow b+1)}_{2\Bf}$ (with algorithm $\cA$), we may cap all capacities above by $n^2$, as this will not change the maximum value of the flow.
 We update $\Bf \gets 2\Bf + \Bf'$, which is now a maximum flow of $\cI^{(\uparrow b+1)}$, and proceed to the next bit.
\end{proof}

\section{Omitted Proofs}\label{appendix:omitted-proofs}

\RespectingTopo*

\begin{proof}
  Let $(D,X_1, \ldots, X_\eta) = \cH$, and recall that $X_i$ is a separator of the graph $G_{i} = G \setminus X_{>i}$.
  By design, the collection of strongly connected components of $G_i$ are a refinement of the strongly connected components of $G_{i+1}$, and $G_{0} = (V,D)$ is a DAG (see also \cref{fig:hierarchy example}).
  
  To compute the $\cH$-respecting topological order $\Btau$, we start at the highest level $\eta$ and compute (in $O(m)$ time) the strongly connected components $\{C_1, C_2, \ldots, C_{r}\}$ of $G_{\eta} = G$, together with a topological order of them~\cite{Tarjan72}. We reorder the $C_{i}$'s with respect to this topological order so that for any DAG edges $(u,v)\in D$ with 
 $u\in C_{i}$ and $v\in C_{j}$ we have $i\le j$.
 
 We will assign
 $\{1, \ldots, |C_{1}|\}$ 
 to vertices in $C_1$, 
 $\{|C_{1}|+1, \ldots, |C_{1}|+|C_2|\}$  to vertices in $C_2$ and so on, since this would guarantee that $\Btau$ is contiguous for all level-$\eta$ expanders $C_1, \ldots, C_r$, and that the topological ordering $\Btau$ respects all the DAG edges between two different $C_{i}$'s.

 If $\eta = 0$, we are done, since each $C_{i}$'s are singletons. Otherwise
 we may simply recurse on each strongly connected component $C_{i}$, with the hierarchy $\cH_{i} = (D\cap E[C_{i}], X_1\cap E[C_{i}], \ldots, X_{\eta-1}\cap E[C_{i}])$ of height $\eta(\cH_{i}) = \eta-1$, to figure out the internal ordering of the vertices inside $C_{i}$.

 The total running time will be $O(m\eta)$, since on each level from $\eta$ down to $0$ we will need to find the strongly connected components of some graphs with a total of $m$ edges.
\end{proof}

\TopLevelWitness*

\begin{proof}
  We run the cut-matching game of \cref{thm:directed-cut-matching-game} with input $\Bnu \defeq \deg_{F,\Bc}$.
  Note that $\Bnu(v)$ is bounded by $n^2$ due to capacity scaling.
  In each of the iteration $t_{\CMG} = O(\log^2 n)$ iterations, given $(\Bnu_A^{(i)}, \Bnu_B^{(i)})$, we invoke \cref{thm:flow} on the flow instance $\cI = \left(G, \Bc_G, \Bsource, \Bsink\right)$ with $\kappa \defeq \frac{2 \cdot c_{\ref{thm:flow}}}{\phi}$ where $\Bsource \defeq \Bnu_A^{(i)}$ and $\Bsink \defeq \Bnu_B^{(i)}$.
  Let $\Bf^{*} \defeq \Bzero$.
  If the flow $\Bf$ from \cref{thm:flow} routes half of the demand, i.e., $|\Bf| \geq \frac{1}{2}\|\Bsource\|_1$, then we update $\Bf^{*} \gets \Bf^{*} + \Bf$, $\Bsource \gets \ex_{\Bf}$, and $\Bsink \gets \Bsink - \abs_{\Bf}$, and re-run \cref{thm:flow} until $\|\Bsource\|_1$ becomes less than $\frac{R}{2t_{\CMG}}$ (which will happen in at most $z$ runs of \cref{thm:flow}, where recall that $z \defeq 20\log n$).
  On the other hand, if $|\Bf| < \frac{1}{2}\|\Bsource\|_1$, then $\ex_{\Bf}(V) > \frac{1}{2}\|\Bsource\|_1 \geq \frac{1}{4t_{\CMG}} R$.
  By \cref{thm:flow}, in this case the cut $S$ that it returns satisfies $\ex_{\Bf}(S) = \ex_{\Bf}(V)$ which implies $\vol_{F,\Bc_G}(S) \geq \ex_{\Bf}(S) \geq \frac{1}{4t_{\CMG}}R$.
  Likewise, we have $\vol_{F,\Bc_G}(\overline{S}) \geq \Bsink_{\Bf}(V) \geq \|\Bsink\|_1 - \frac{1}{2}\|\Bsource\|_1 \geq \frac{1}{2}\|\Bsource\|_1 \geq \frac{1}{4t_{\CMG}}R$.
  Therefore, the guarantee of \cref{thm:flow} implies that
  \begin{align*}
    \Bc_G(E_G(S,\overline{S})) &\leq \frac{c_{\ref{thm:flow}} \cdot |\Bf| + \min\{\vol_{F,\Bc_G}(S),\vol_{F,\Bc_G}(\overline{S})\}}{\kappa}
    \leq \frac{2c_{\ref{thm:flow}} \cdot \min\{\vol_{F,\Bc_G}(S), \vol_{F,\Bc_G}(\overline{S})\}}{\kappa}.
  \end{align*}
  Thus, depending on whether $\vol_{F,\Bc_G}(S) \leq \vol_{F,\Bc_G}(\overline{S})$ or not we can return either $S$ or $\overline{S}$ in Case~\ref{item:case:sparse-cut}.
  
  Now, if none of the calls to \cref{thm:flow} routes less than half of the given demand, then by adding capacitated fake edges with total capacities at most $\frac{R}{2t_{\CMG}}$ we have found a $(\Bnu_A^{(i)}, \Bnu_B^{(i)})$-perfect matching $(M_i, \Bc_i)$ in which the non-fake edges are embeddable into $(G, \Bc_G)$ with congestion $\kappa z$.
  By \cref{thm:directed-cut-matching-game}, after $t_{\CMG}$ iterations, we have constructed a $\psi_{\CMG}$-expander $(\widetilde{W}, \Bc_W)$ containing fake edges whose total capacities sum to at most $R/2$ and in which non-fake edges are embeddable into $(G, \Bc_G)$ with congestion $\kappa z t_{\CMG}$.
  Let $E_{\mathrm{fake}} \subseteq \widetilde{W}$ be the set of fake edges.
  If we set $\Br \defeq \deg_{E_{\mathrm{fake}},\Bc_W}$ and $W \defeq \widetilde{W} \setminus E_{\mathrm{fake}}$, then we have $\|\Br\|_1 \leq R$, $\deg_{F,\Bc_G}(v) \leq \deg_{W,\Bc_W}(v) + \Br(v) \leq t_{\CMG}\cdot \deg_{F,\Bc_G}(v)$, and that $(W, \Bc_W)$ embeds into $(G, \Bc_G)$ with congestion $kzt_{\CMG}$.
  Moreover, by the expansion guarantee of $\widetilde{W}$, we have
  \[
    \Bc_W(E_W(S, \overline{S})) + \Br(S) \geq \Bc_W(E_{\widetilde{W}}(S, \overline{S})) \geq \psi_{\CMG}(\vol_{W,\Bc_W}(S) + \Br(S))
  \]
  and
  \[
    \Bc_W(E_W(\overline{S}, S)) + \Br(\overline{S}) \geq \Bc_W(E_{\widetilde{W}}(\overline{S}, S)) \geq \psi_{\CMG}(\vol_{W,\Bc_W}(S) + \Br(S))
  \]
  for every $\vol_{W,\Bc_W}(S) + \Br(S) \leq \vol_{W,\Bc_W}(\overline{S}) + \Br(\overline{S})$.
  As such, $(W, \Bc_W, \Br, \Pi_{W \to G})$ is an $(R, \phi, \widetilde{\psi})$-witness of $(G, \Bc_G, F)$ with respect to $\Bgamma(S) \defeq \vol_{W,\Bc_W}(S) + \Br(S)$ for some $\widetilde{\psi} = \Omega\left(\frac{1}{\log^2 n}\right)$.
      The running time of the algorithm is $\widetilde{O}\left(\frac{n^2 \kappa}{{\phi^\prime}^2}\right)$ for $\kappa = \frac{2 \cdot c_{\ref{thm:flow}}}{\phi}$ which is $\widetilde{O}\left(\frac{n^2}{\phi{\phi^\prime}^2}\right)$.
  This proves the lemma.
\end{proof}
\UnionOfSparseCuts*

\begin{proof}[Proof of \cref{lemma:union-of-sparse-cut}]
  Let us call an $S_i$ out-sparse if $\Bc_G\left(E_{G[V_{i-1}]}(S_i,\overline{S_i})\right) \leq \Bc_G\left(E_{G[V_{i-1}]}(\overline{S_i},S_i)\right)$ and in-sparse otherwise.
  Let $\cI_{\mathrm{out}} \defeq \{i: S_i\;\text{is out-sparse}\}$ and $\cI_{\mathrm{in}} \defeq \{i: S_i\;\text{is in-sparse}\}$.
  Let $S_{\mathrm{out}} \defeq \bigcup_{i \in \cI_{\mathrm{out}}}S_i$ and $S_{\mathrm{in}} \defeq \bigcup_{i \in \cI_{\mathrm{in}}}S_i$.
  Suppose without loss of generality that $\vol_{F, \Bc_G}(S_{\mathrm{out}}) \geq \vol_{F,\Bc_G}(S_{\mathrm{in}})$.
  By \cref{obs:boundary-of-union-of-cuts} we have
  \[
    E_G(S_{\mathrm{out}}, \overline{S_{\mathrm{out}}}) \subseteq \bigcup_{i \in \cI_{\mathrm{out}}}E_{G[V_{i-1}]}(S_i, \overline{S_i}) \cup \bigcup_{i \in \cI_{\mathrm{in}}}E_{G[V_{i-1}]}(\overline{S_i}, S_i)
  \]
  and therefore $\Bc_G\left(E_G(S_{\mathrm{out}}, \overline{S_{\mathrm{out}}})\right) < 2\phi \cdot \vol_{F,\Bc_G}(S_{\mathrm{out}})$.
  Since $\vol_{F,\Bc_G}(S_{\mathrm{out}}) \geq \frac{\alpha}{2}\vol_{F,\Bc_G}(V)$ and $\vol_{F,\Bc_G}(\overline{S_{\mathrm{out}}}) \geq (1-\alpha)\vol_{F,\Bc_G}(V)$, the lemma follows.
\end{proof}

\RemainBorderline*

\begin{proof}
  Recall that $\cK$ is the event that $\Delta^{(U,\ell)}(t_1, t_2) \leq \frac{\psi_{\ell}}{10}\tau_U^{\ell/L}$ and $\delta_{\mathrm{ext}}^{(U,\ell)}(t_1,t_2) \leq \frac{\phi\psi_{\ell}^2}{40}\tau_U^{\ell/L}$ hold for all active tuples $(U, \ell, t_1, t_2)$ which happens with high probability by \cref{lemma:good-event}.
  Let $t_{\mathrm{start}}$ be the moment in the lemma statement when the algorithm is in a good state.
  Note that the random choices that happened after time $t_{\mathrm{start}}$ are completely independent of the condition that the algorithm is in a good state at time $t_{\mathrm{start}}$.
  Therefore, \cref{lemma:good-event} suggests that with high probability $\Delta^{(U,\ell)}(t_1, t_2) \leq \frac{\psi_{\ell}}{10}\tau_U^{\ell/L}$ and $\delta_{\mathrm{ext}}^{(U,\ell)}(t_1,t_2) \leq \frac{\phi\psi_{\ell}^2}{40}\tau_U^{\ell/L}$ hold for all active tuples $(U, \ell, t_1, t_2)$ with $t_1 \geq t_{\mathrm{start}}$.

  Similar to \cref{lemma:probability-guarantee}, we prove by induction on time starting from $t_{\mathrm{start}}$.
  Let $t \geq t_{\mathrm{start}}$ be the current time.
  Fix a $U \in \cU$ and $\ell \in \{0, \ldots, L\}$ for which $W_{U,\ell}$ is not currently being rebuilt.
  Consider the last time $t_{\mathrm{last}}^{(U,\ell)}$ that it was rebuilt, and let $\widetilde{t_{\mathrm{last}}^{(U,\ell)}} \defeq \max\{t_{\mathrm{last}}^{(U,\ell)}, t_{\mathrm{start}}\}$.
  Let $U_0$ and $F_0$ be the set $U$ and $F$ at time $\widetilde{t_{\mathrm{last}}^{(U,\ell)}}$.
  Note that if $t_{\mathrm{last}}^{(U,\ell)} \geq t_{\mathrm{start}}$, then by \cref{lemma:repair} the witness of $U$ and $\ell$ at time $\widetilde{t_{\mathrm{last}}^{(U,\ell)}}$ satisfies $\|\Br_{U_0,\ell}\|_1 \leq \frac{\psi_{\ell}}{10}\tau_U^{\ell/L}$.
  On the other hand, if $t_{\mathrm{last}}^{(U,\ell)} < t_{\mathrm{start}}$, then by the lemma statement that the algorithm is in a good state (where recall the definition of good in \cref{def:good-state}), we have $\|\Br_{U_0,\ell}\|_1 \leq \tau_U^{\ell/L}$ (observe that $W_{U,\ell}$ \emph{cannot} be being rebuilt at time $t_{\mathrm{start}}$ in this case, as that would imply by \cref{obs:will-be-rebuilt} that either it is still currently being rebuilt or $t_{\mathrm{last}}^{(U,\ell)} \geq t_{\mathrm{start}}$).
  In either case, we have $\|\Br_{U_0,\ell}\|_1 \leq \tau_U^{\ell/L}$.

  To likewise apply \cref{lemma:bound-from-external-cuts}, we note again that what happened from $\widetilde{t_{\mathrm{last}}^{(U,\ell)}}$ to the current moment is modeled by \cref{scenarion:stability}.
  Moreover, \cref{cond:blow-up}\labelcref{item:sparse,item:small-side} hold by exactly the same arguments as in the proof of \cref{lemma:probability-guarantee}.
  It thus remains to verify \cref{cond:blow-up}\labelcref{item:parameters}.
  Again, $\delta_{\ell} \leq \frac{1}{16}$ is straightforward.
  That $R \leq \frac{\psi_{\ell}}{64}\vol_{F_0}(U_0)$ is by $R \leq \tau_U^{\ell/L} \leq \frac{\psi_{0}^2}{32z}\vol_{F_0}(U_0) \leq \frac{\psi_{\ell}}{64}\vol_{F_0}(U_0)$ by the inductive hypothesis at time $\widetilde{t_{\mathrm{last}}^{(U,L)}}$.
  The bound on $\Delta \defeq \Delta^{(U,\ell)}(\widetilde{t_{\mathrm{last}}^{(U,\ell)}}, t)$ and $\delta_{\mathrm{ext}} \defeq \delta_{\mathrm{ext}}^{(U,\ell)}(\widetilde{t_{\mathrm{last}}^{(U,\ell)}}, t)$ follow from our discussion in the beginning of this proof.

  Since \cref{cond:blow-up} holds, \cref{lemma:bound-from-external-cuts} applied on $W_{U,\ell}$ (with $R \defeq \tau_U^{\ell/L}$, $\Delta \defeq \Delta^{(U,\ell)}(\widetilde{t_{\mathrm{last}}^{(U,\ell)}}, t)$, and $\delta_{\mathrm{ext}} \defeq \delta_{\mathrm{ext}}^{(U,\ell)}(\widetilde{t_{\mathrm{last}}^{(U,\ell)}}, t)$) shows that $\|\Br_{U,\ell}\|_1 \leq \frac{4(R+\Delta)}{\psi_{\ell}} + \frac{8}{\psi_{\ell}^2\phi}\delta_{\mathrm{ext}} \leq \frac{10}{\psi_{\ell}}\tau_U^{\ell/L}$.

  For the bound on $\tau_U$, we consider when $\ell = L$.
  A similar argument as in the proof of \cref{lemma:probability-guarantee} shows that the current volume of $U$ is at least $\frac{64z}{\psi_0}\tau_U - \frac{10}{\psi_0^2}\tau_U \geq \frac{32z}{\psi_0^2}\tau_U$.
  Likewise, the volume increase by at most $\Delta^{(U,L)}(\widetilde{t_{\mathrm{last}}^{(U,L)}}, t)$, hence the lower bound of $\tau_U \geq \frac{\psi_0^2}{128z}\vol_F(U)$.
\end{proof}

\end{document}